\documentclass[ twoside,
                openright,
                titlepage,numbers=noenddot,
                headinclude,footinclude,cleardoublepage=empty,abstract=on,
                BCOR=5mm,
                paper=a4,fontsize=11pt
                ]{scrreprt}

\PassOptionsToPackage{utf8}{inputenc}
  \usepackage{inputenc}

\PassOptionsToPackage{T1}{fontenc} 
  \usepackage{fontenc}

\PassOptionsToPackage{
  drafting=false,    
  tocaligned=false, 
  dottedtoc=false,  
  eulerchapternumbers=true, 
  linedheaders=false,       
  floatperchapter=true,     
  eulermath=false,  
  beramono=true,    
  palatino=true,    
  style=classicthesis 
}{classicthesis}

\newcommand{\myTitle}{Lattice Theory in Multi-Agent Systems\xspace}
\newcommand{\mySubtitle}{The Tarski Laplacian \& Applications\xspace}
\newcommand{\myDegree}{Doctor of Philosophy\xspace}
\newcommand{\myName}{Hans Riess\xspace}

\newcommand{\myFaculty}{Troy Olsson, Associate Professor of Electrical and Systems Engineering\xspace}
\newcommand{\myDepartment}{Electrical and Systems Engineering\xspace}
\newcommand{\myUni}{University of Pennsylvania\xspace}

\newcommand{\myTime}{2022\xspace}

\providecommand{\mLyX}{L\kern-.1667em\lower.25em\hbox{Y}\kern-.125emX\@}


\PassOptionsToPackage{ngerman,american}{babel} 
    \usepackage{babel}

\usepackage{csquotes}
\PassOptionsToPackage{%
  backend=bibtex8,bibencoding=ascii,%
  language=auto,%
  style=numeric-comp,%
  sorting=nyt, 
  maxbibnames=10, 
  natbib=true 
}{biblatex}
    \usepackage{biblatex}

\PassOptionsToPackage{fleqn}{amsmath}       
  \usepackage{amsmath}

\usepackage{graphicx} %
\usepackage{scrhack} 
\usepackage{xspace} 
\PassOptionsToPackage{printonlyused,smaller}{acronym}
  \usepackage{acronym} 


\usepackage{tabularx} 
  \setlength{\extrarowheight}{3pt} 

\usepackage{subfig}

\usepackage{listings}
\lstset{language=[LaTeX]Tex,
  morekeywords={PassOptionsToPackage,selectlanguage},
  keywordstyle=\color{RoyalBlue},
  basicstyle=\small\ttfamily,
  commentstyle=\color{Green}\ttfamily,
  stringstyle=\rmfamily,
  numbers=none,
  numberstyle=\scriptsize,
  stepnumber=5,
  numbersep=8pt,
  showstringspaces=false,
  breaklines=true,
  belowcaptionskip=.75\baselineskip
}

\usepackage{classicthesis}

\hypersetup{%
  colorlinks=true, linktocpage=true, pdfstartpage=3, pdfstartview=FitV,%
  breaklinks=true, pageanchor=true,%
  pdfpagemode=UseNone, %
  plainpages=false, bookmarksnumbered, bookmarksopen=true, bookmarksopenlevel=1,%
  hypertexnames=true, pdfhighlight=/O,
  urlcolor=CTurl, linkcolor=CTlink, citecolor=CTcitation, 
  pdftitle={\myTitle},%
  pdfauthor={\textcopyright\ \myName, \myUni, \myFaculty},%
  pdfsubject={},%
  pdfkeywords={},%
  pdfcreator={pdfLaTeX},%
  pdfproducer={LaTeX with hyperref and classicthesis}%
}

\makeatletter
\@ifpackageloaded{babel}%
  {%
    \addto\extrasamerican{%
    }%
    \addto\extrasngerman{%
    }%
      %
    }{\relax}
\makeatother

\listfiles

\usepackage{babel}
\usepackage{amssymb,amsmath,amsthm,mathtools,amsfonts}
\usepackage{stmaryrd}
\usepackage{algorithm2e}
\RestyleAlgo{ruled}
\usepackage{lscape}
\usepackage{quiver}
\usepackage{xstring}



\usepackage{xcolor}

\usepackage{tikz}
\usepackage{tikz-cd}
\usetikzlibrary{babel}
\usepackage{graphicx}
\newcommand{\define}[1]{\textit{#1}}
\newcommand{\id}{\text{id}}
\newcommand{\field}{{\Bbbk}}
\renewcommand{\epsilon}{\varepsilon}
\renewcommand{\phi}{\varphi}
\renewcommand{\leq}{\leqslant}
\renewcommand{\geq}{\geqslant}
\newcommand{\Z}{\mathbb{Z}}
\newcommand{\N}{\mathbb{N}}
\newcommand{\R}{\mathbb{R}}
\newcommand{\C}{\mathbb{C}}

\newcommand{\Rext}{\R_{\max}}
\newcommand{\primes}{\mathfrak{p}}

\newcommand{\powerset}[1]{\wp (#1)}
\newcommand{\multiset}[1]{\N \left[ #1\right]}
\DeclareMathOperator*{\subspaces}{Sub}
\DeclareMathOperator*{\Inv}{Inv} 


\newcommand{\monoid}[1]{#1}


\newcommand{\lattice}[1]{\mathbf{#1}}
\newcommand{\poset}[1]{\mathrm{#1}}
\newcommand{\upset}[1]{\uparrow\!{#1}}
\newcommand{\downset}[1]{\downarrow #1}
\newcommand{\Space}[1]{{\mathbb{#1}}} 
\renewcommand{\preceq}{\preccurlyeq}
\renewcommand{\succeq}{\succcurlyeq}

\newcommand{\meet}{\wedge}
\newcommand{\join}{\vee}
\newcommand{\bigmeet}{\bigwedge}
\newcommand{\bigjoin}{\bigvee}
\newcommand{\covers}{\lessdot}
\newcommand{\op}[1]{{#1}^{\mathrm{op}}}
\newcommand{\Int}[1]{\mathbb{I}{#1}}
\newcommand{\antichain}{\parallel}
\newcommand{\irred}[1]{\mathcal{J}\left(#1\right)}
\newcommand{\downsets}[1]{\mathcal{D} \left(#1 \right)}
\DeclareMathOperator*{\Closed}{Cl}
\newcommand{\resid}[1]{\mathbb{#1}}
\DeclareMathOperator*{\Alex}{Alex}


\newcommand{\sbt}{\,\begin{picture}(-1,1)(-1,-3)\circle*{3}\end{picture}\ }
\newcommand{\ladj}[1]{{#1}_{\sbt}}
\newcommand{\radj}[1]{{#1}^{\sbt}}

\newcommand{\galup}[1]{{#1}^{\uparrow}}
\newcommand{\galdown}[1]{{#1}^{\downarrow}}
\DeclareMathOperator*{\Gal}{Gal}


\newcommand{\nbhd}[1]{\mathcal{N}_{#1}}
\newcommand{\graph}[1]{#1}
\newcommand{\nodes}[1]{{\mathcal{V}_{#1}}}
\newcommand{\edges}[1]{{\mathcal{E}_{#1}}}
\newcommand{\hyperedges}[1]{\mathcal{H}_{#1}}
\newcommand{\weight}{\omega}


\newcommand{\bool}{\mathbf{B}}
\newcommand{\cat}[1]{\mathcal{\StrChar{#1}{1}} \mathit{\StrMid{#1}{2}{3}}}

\newcommand{\Ker}{\mathrm{Ker\;}}

\renewcommand{\Im}{\mathrm{Im\;}}
\DeclareMathOperator{\Hom}{\mathrm{Hom}}
\DeclareMathOperator{\End}{\mathrm{End}}

\DeclareMathOperator*{\Open}{Open}
\DeclareMathOperator*{\Mor}{Mor}
\DeclareMathOperator*{\Ob}{Ob}


\newcommand{\fc}{\trianglelefteqslant}
\newcommand{\cofc}{\trianglerighteqslant}
\DeclareMathOperator*{\Ind}{Ind}
\newcommand{\boundary}{\partial}
\newcommand{\coboundary}{\delta}
\newcommand{\sheaf}[1]{\underline{\mathcal{#1}}}
\newcommand{\cosheaf}[1]{\overline{\mathcal{#1}}}
\newcommand{\bisheaf}[1]{\mathcal{#1}}
\newcommand{\sections}[1]{\Gamma \left(#1\right)}

\DeclareMathOperator{\Face}{Face}

\newcommand{\Parallel}[2]{P^{\mathcal{#1}}_{#2}} 
\DeclareMathOperator{\Hol}{Hol}
\DeclareMathOperator{\Path}{Path}
\DeclareMathOperator{\Free}{Free}


\newcommand{\Laplacian}{L}
\newcommand{\graphLaplacian}{L}
\newcommand{\Helmholtz}{\mathfrak{L}}
\newcommand{\Closure}{E}
\DeclareMathOperator*{\fixed}{Fix}
\DeclareMathOperator*{\prefix}{Pre}
\DeclareMathOperator*{\suffix}{Post}

\newcommand{\conv}{\varoast}
\newcommand{\domain}{\Omega}
\newcommand{\channels}{\mathcal{C}}
\newcommand{\signals}{\mathcal{X}}
\newcommand{\aggregate}{\bigbox}

\newcommand{\rel}[1]{\mathcal{#1}}
\DeclareMathOperator*{\Rel}{Rel}

\newcommand{\prob}[1]{\mathbf{Pr}\left[#1\right]}
\newcommand{\Part}[1]{\Pi\left(#1\right)}
\newcommand{\Subpart}[1]{\Pi_{\ast}\left(#1\right)}


\newcommand{\Hodge}{\Delta}

\DeclareMathOperator*{\Grass}{Grass}


\newcommand{\model}{M}
\newcommand{\lang}{\mathcal{L}}
\newcommand{\AP}{\Phi}
\newcommand{\eval}{\pi}
\newcommand{\always}{\square}
\newcommand{\eventually}{\lozenge}
\newcommand{\until}{\mathcal{U}}
\newcommand{\nexttime}{\bigcirc}
\newcommand{\TS}{\mathbf{TS}}
\newcommand{\intension}[1]{\llbracket {#1} \rrbracket}
\newcommand{\Buchi}{\mathbf{B}}
\newcommand{\Models}{{\mathcal{M}}}


\newcommand{\textbuchi}{B\"{u}chi~}
\newcommand{\textpuschel}{P\"{u}schel~}
\newcommand{\pusmar}{P\"{u}schel-Maragos~}

\theoremstyle{definition}
\newtheorem{example}{Example}[chapter]
\newtheorem{assumption}{Assumption}[chapter]
\newtheorem{problem}{Problem}[chapter]

\newtheorem{examples}{Examples}[chapter]
\newtheorem{definition}{Definition}[chapter]
\newtheorem{proposition}{Proposition}[chapter]
\newtheorem{theorem}{Theorem}[chapter]
\newtheorem{lemma}{Lemma}[chapter]
\newtheorem{corollary}{Corollary}[chapter]
\newtheorem{warning}{Warning}[chapter]
\newtheorem{remark}{Remark}[chapter]

\newtheorem*{fact}{Fact}
\newtheorem*{claim}{Claim}

\newtheorem{conjecture}{Conjecture}[chapter]

\addbibresource{bibliography.bib}


\begin{document}
\frenchspacing
\raggedbottom
\selectlanguage{american} 
\pagenumbering{roman}
\pagestyle{plain}
\thispagestyle{empty}
\begin{center}
    \spacedlowsmallcaps{\myName} \\ \medskip

    \begingroup
        \color{CTtitle}\spacedallcaps{\myTitle}
    \endgroup
\end{center}

\begin{titlepage}
    \begin{addmargin}[-1cm]{-3cm}
    \begin{center}
        \large

        \hfill

        \vfill

        \begingroup
            \color{CTtitle}\spacedallcaps{\myTitle} \\ \bigskip
        \endgroup

        \spacedlowsmallcaps{\myName}

        \vfill

        \includegraphics[width=6cm]{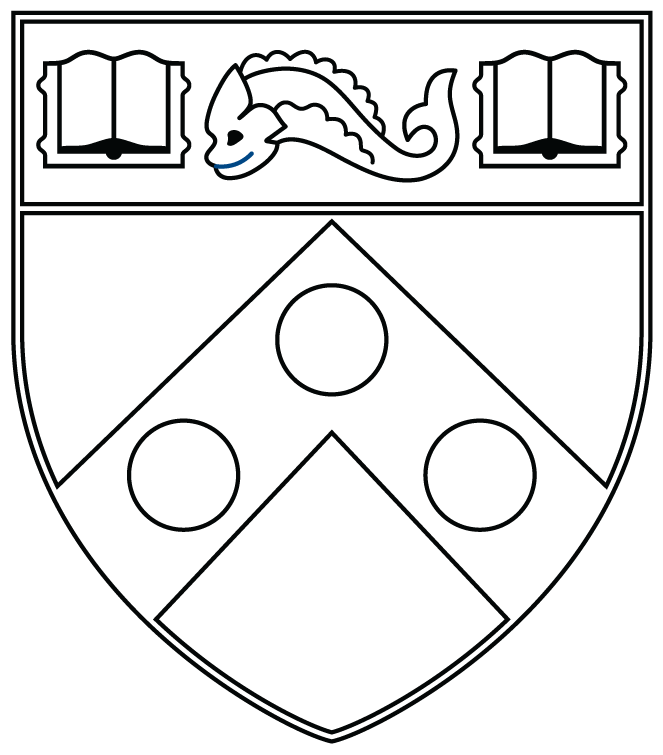} \\ \medskip

        \myDegree \\
        \myDepartment \\
        \myUni \\ \bigskip

        \myTime\

        \vfill

    \end{center}
  \end{addmargin}
\end{titlepage}

\thispagestyle{empty}

\hfill

\vfill

\noindent\myName: \textit{\myTitle,} \mySubtitle, 
\textcopyright\ \myTime

%
%
%
%
%

\cleardoublepage
\thispagestyle{empty}
\phantomsection
\pdfbookmark[1]{Dedication}{Dedication}

\vspace*{10cm}

\begin{center}
    for my dad who would \\
    on napkins and on airplanes \\
    teach me algebra
\end{center}
\cleardoublepage
\pdfbookmark[1]{Abstract}{Abstract}
\begingroup
\let\clearpage\relax
\let\cleardoublepage\relax
\let\cleardoublepage\relax

\chapter*{Abstract}
SIn this thesis, we argue that (order-) lattice-based multi-agent information systems constitute a broad class of networked multi-agent systems in which relational data is passed between nodes. Mathematically modeled as lattice-valued sheaves, we initiate a discrete Hodge theory with a Laplace operator, analogous to the graph Laplacian and the graph connection Laplacian, acting on assignments of data to the nodes of a Tarski sheaf. The Hodge-Tarski theorem (the main theorem) relates the fixed point theory of this operator, called the Tarski Laplacian in deference to the Tarski Fixed Point Theorem, to the global sections (consistent global states) of the sheaf. We present novel applications to signal processing and multi-agent semantics and supply a plethora of examples throughout.
\vfill

\endgroup

\vfill

\cleardoublepage
\pdfbookmark[1]{Acknowledgments}{acknowledgments}

\begingroup
\let\clearpage\relax
\let\cleardoublepage\relax
\let\cleardoublepage\relax
\chapter*{Acknowledgments}
As I listen to ``Electric Sunrise'' by Plini on repeat, I would like to first thank my advisor, Rob Ghrist, for your selfless stoicism, diligent guidance, and edgy-yet-cerebral musical taste.

Secondly, I would like to thank the funding agencies that made my time at Penn possible: the Defense Advanced Research Projects Agency (DARPA), the Basic Research Office (BRO), the Simons Foundation, the National Science Foundations (NSF).
I would also like to thank John Leggett III for endowing the fellowship I received during my first two years at Penn. I am humbled that some of your extreme generosity was bestowed upon me.

I offer thanks to all of Rob's students, past and present, including Iris Yoon, Huy Mai, Zo\"{e} Cooperband, Julian Gould, Darrick Lee, and Miguel Lopez. I especially want to thank Gregok Roerk for helping me appreciate lattice theory and Justin Curry for your mentorship in my undergraduate years Duke. I owe much to Jakob Hansen who introduced me to sheaf Laplacians. Thanks also to other students, especially Mikhail Hayhoe, Alp Aydinoglu, and Michael Sobrepera.

I would like to thank the several faculty members I have worked with at Penn. As I embark on my own postdoc, I only begin to fully appreciate all that you do. I would like to thank postdocs Paige North, Yiannis Kantaros, Alejandro Parada-Mayorga, as well as Dan Guaralnik. I would like to thank Alejandro Ribeiro for instilling upon me your unique perspectives on machine learning and data science. George Pappas, for teaching me Linear Systems and recommending a book.

I would like to thank some of the fellow researchers at other universities which whom I have crossed paths over the years. I owe gratitude to Sanjeevi Krishnan who introduced me to the Tarski Fixed Point Theorem, Vidit Nanda and Mike Lesnik for prolonging the life of the MacPherson Seminar at the Institute, Mike Munger for our continued collaboration on some of the economic implications of this work, Kelly Spendlove and Juan-Pablo Vigneaux for our stimulating mathematical discussion,  Gunnar Carlsson for your advice, and Michael Zavlanos for new beginnings.

I would like to thank Erich Prince, for your friendship to me in a city where I once knew no one else. Joe Wolf, for tennis matches at a moment's notice. I would like to thank my parents, for always believing in me. Most of all, I would like to thank my fiance\'{e}, Rebecca. You have been incredibly patient with me, and I love you.

\endgroup

\cleardoublepage
\pagestyle{scrheadings}
\pdfbookmark[1]{\contentsname}{tableofcontents}
\setcounter{tocdepth}{2} 
\setcounter{secnumdepth}{3} 
\manualmark
\markboth{\spacedlowsmallcaps{\contentsname}}{\spacedlowsmallcaps{\contentsname}}
\tableofcontents
\automark[section]{chapter}
\renewcommand{\chaptermark}[1]{\markboth{\spacedlowsmallcaps{#1}}{\spacedlowsmallcaps{#1}}}
\renewcommand{\sectionmark}[1]{\markright{\textsc{\thesection}\enspace\spacedlowsmallcaps{#1}}}
\clearpage
\begingroup
    \let\clearpage\relax
    \let\cleardoublepage\relax
    \pdfbookmark[1]{\listfigurename}{lof}
    \listoffigures

    \vspace{8ex}






\endgroup

\cleardoublepage
\pagestyle{scrheadings}
\pagenumbering{arabic}
\cleardoublepage
\part{Mathematical Background}\label{pt:background}
\chapter{Introduction}\label{ch:introduction}

In this thesis we offer a novel approach to understanding multi-agent systems from a lattice-theoretic point of view, a stark contrast to modern approaches relying on spectral graph theory \cite{chung1997spectral}, dynamical systems \cite{ghrist2022applied}, and control theory \cite{chen1984linear}.

\section{Multi-Agent Systems}

A system is something so fundamental it is difficult to define. While we will even not attempt to define systems here, we will define multi-agent systems as a special class of systems consisting of many interacting components called \define{agents}. Agents are differentiated  from ordinary components of systems such as the environment because agents are typically assumed to possess some degree of autonomy, intelligence, or communication capabilities, likely all three. Consider just a few examples of multi-agent systems.

\begin{enumerate}
    \item \textit{Wireless Communication}. Communication systems are naturally multi-agent systems. Agents send/receive \define{messages}. More elaborate multi-agent systems usually rely on a communication subsystem in order to gain information from other agents in the system.
    \item \textit{Swarm robotics}. Multiple (usually simple) robots (e.g.~quadrotor UAVs) collaborate to perform tasks such as surveillance, payload transportation, and more \cite{abdelkader2021aerial}. Swarms, it is widely held, outperforms smaller groups of more sophisticated robots due to advantages of scalability and robustness. 
    \item \textit{Sensor networks}. Multiple agents, equipped with one or more sensors, gather information about the environment or other agents inside or outside of the system. Applications include wearable devices, threat detection, industrial monitoring, and environmental conditions.
    \item \textit{Firms}.  Firms compete with directly with other firms, exchanging information via prices as well as through other methods such as collusion, acquisition, or mergers.
\end{enumerate}

In multi-agent systems, heterogeneous streams of information, possibly encoded in various types of data structures, are collected, processed, and exchanged among the various agents in the system. Consider the following examples of information processing tasks performed by multi-agent systems.

\begin{enumerate}
\item \textit{Consensus.} Consensus is a broad category of information processing that aims for every agent to eventually come to an agreement on a particular state or quantity of interest that depends on the state of all the agents. Some examples include flocking \cite{tanner2007flocking} and distributed formation control \cite{fax2004information} in swarm robotic systems, synchronization, and rendezvous in space \cite{cortes2006robust} and time \cite{nejad2009max}. A consensus algorithm, also called a consensus protocol, is a rule of interaction between agents specifying the exchange of information between a given agent and all of its neighbors in the network \cite{olfati2007consensus}.


\item \textit{Collaborative Filtering}. Collaborative filtering is a general technique of filtering information from multiple sources facilitated by ``interactions'' between agents. The most popular use-case of collaborative filtering is recommendation systems which take into account both the preferences of an individual agent as well as similar agents, neighbors, in a graph. To this end, graph signal processing \cite{ortega2018graph} has shown to be an effective tool for rating prediction \cite{huang2018rating}.

\item \textit{Information Fusion}. Information fusion, also known as \define{data fusion}, is the task of integrating data from multiple sources in heterogeneous formats. In one view \cite{khaleghi2013multisensor}, data fusion is the study of the transformation of various sources of information into a representation that can be interpreted by a machine or human.
\end{enumerate}

In each of the above information processing tasks, local information is aggregated in order to obtain global information. Sheaf theory \cite{bredon2012sheaf,curry2014sheaves}, it is our view, is an appropriate context to study such data relationships. Sheaves are data structures supplying mathematical rigor to the soft questions such as \emph{is local data consistent?}, \emph{do globally consistent assignments of data even exist?} In the agriculturally inspired nomenclature, a sheaf is a fixation of stalks (e.g.~vector spaces, sets, lattices, etc.) to a base (e.g.  graph, hypergraph, simplicial complex, manifold). A network sheaf (Definition \ref{def:network-sheaves}) is a sheaf $\sheaf{F}$ based over an undirected graph $\graph{G} = (\nodes{G}, \edges{G})$ specified by the data of objects (e.g.~vectors paces) over nodes and edges called \define{stalks}, and maps preserving the structure of objects (e.g.~linear transformations) called \define{restriction maps}. If $i$ and $j$ are nodes and $ij$ is an edge, stalks are denoted $\sheaf{F}(i)$, $\sheaf{F}(j)$, and $\sheaf{F}(ij)$, and restriction maps denoted $\sheaf{F}(i \fc ij): \sheaf{F}(i) \to \sheaf{F}(ij)$.

Sheaves facilitate the identification of consistent assignments of data called \define{global sections} (Definition \ref{def:sections}). We argue that global sections
\begin{align*}
    \Gamma(\graph{G}; \sheaf{F}) &=& \Bigl\{ \underset{\textit{assignments of data}}{\mathbf{x} \in \prod_{i \in \nodes{G}}}~\vert~ \underset{\textit{agreement along restriction maps}}{\sheaf{F}(i \fc ij)(x_i) = \sheaf{F}(j \fc ij)(x_j)} \quad \forall ij \in \edges{G}\Bigl\}
\end{align*}
 are a natural extension or generalization of consensus.\footnote{If every restriction map is the identity, then we recover consensus exactly.} In Chapter \ref{ch:semantics}, we discover a notion of consensus in multi-agent epistemic logic which we could call \define{semantic consensus}.
\section{Lattice Thoery}

Lattices are a class of ordered sets with two ``merging'' operations called \define{meet} and \define{join}. As ordered sets, finite lattices can be visualized with \define{Hasse diagrams}, directed acyclic graphs (DAGs). The nodes of a Hasse diagram are the elements of the ordered set. If $x \preceq y$, and there is no other element between $x$ and $y$, we draw a directed edge from $x$ to $y$. Orders are elementary, but lattices are less known. There may be a historical explanation.

Lattice theory \cite{birkhoff1940lattice,davey2002introduction,gratzer2002general,roman2008lattices} is a multi-faceted field of mathematics, with a rich history \cite{bilova2001lattice}. However, as a field, lattice theory never fully matured, perhaps unjustly so \cite{rota1997many}:
\begin{displayquote}
Never in the history of mathematics has a mathematical theory been the object of such vociferous vituperations as lattice theory. Dedekind, Jonsson, Kurosh, Malcev, Ore, von Neumann, Tarski, and most prominently Garrett Birkhoff have contributed a new vision of mathematics, a vision that has
been cursed by a conjunction of misunderstanding, resentment, and raw prejudice.
\end{displayquote}

Lattices, as a mathematical concept, is a relaxation of a boolean algebra \cite{boole1847mathematical}. To describe these structures, it was Klein who originally coined a term, \define{verband}, which loosely translates from German as ``association.'' However, Garret Birkhoff, popularized the English nomenclature \cite{bilova2001lattice}, ``lattice,'' due to the fact (with some speculation) that the Hasse diagrams of lattices often resemble lattices of the garden variety (pardon the pun!).

We recall this obscure history to highlight the fact that ``lattice'' has an unfortunate ambiguous meaning. To a physicist, a lattice is (likely) a discreet subset of $\R^d$ (Figure \ref{fig:integer-lattice}). To a network-scientist \cite{watts2004six}, a lattice is (likely) an undirected graph whose nodes uniformly follow a grid pattern and whose edges are drawn according to (unique!) nearest neighbors. We call these objects \define{integer lattice} or \define{lattice graphs} to differentiate them from an \define{order lattices}.

\begin{figure}[h]
\centering
\includegraphics[width=0.4\textwidth]{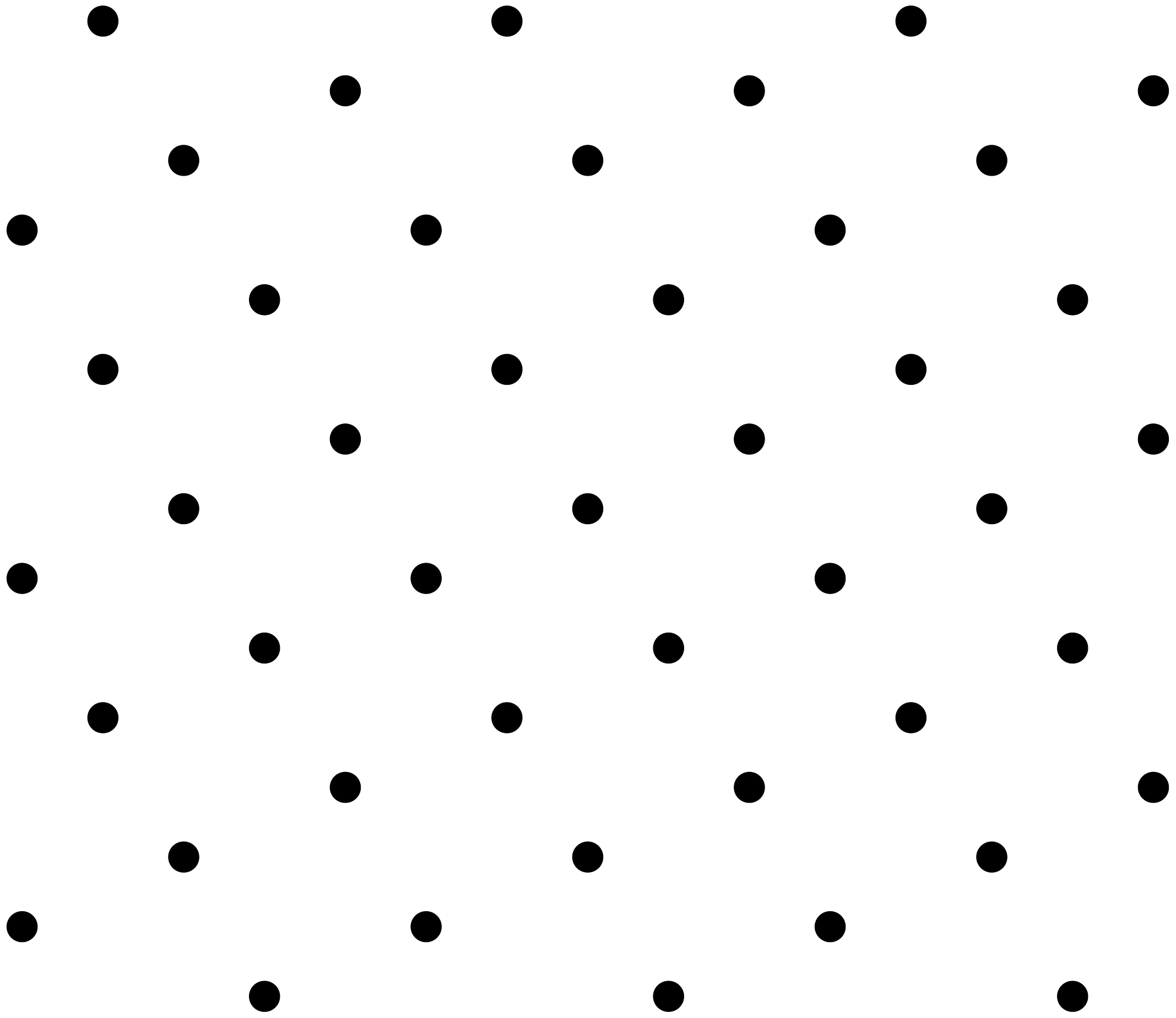} \quad \includegraphics[width=0.4\textwidth]{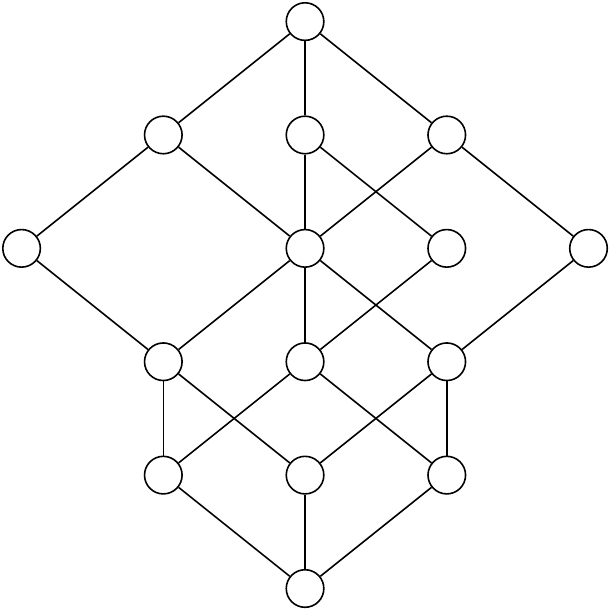}
\caption{(Left) integer lattice; (right) order lattice.} \label{fig:integer-lattice}
\end{figure}

Thus, lattice theory, as a subject, is certainly conflicted in its identity. Brought about as a weakening of the axioms boolean logic (e.g.~no longer requiring negation and distributivity), on one hand, lattice theory belongs to logic. An argument could also be made that lattice theory belongs to combinatorics \cite{rota1964foundations}. In our view, lattice theory, while establishing connections between the areas mentioned and more \cite{rota1997many}, belongs to algebra, the general study of symbol manipulation. For one, a lattice is a universal algebra \cite{sankappanavar1981course}, roughly, a set equipped with a collection of $n$-ary operations satisfying equational identities. Another compelling reason to highlight the algebraic side of lattice theory is that lattices of subgroups of a group, a collection of symmetries, are perhaps the first example studied in lattice theory. Permutations on $n$ elements form a group. You can reverse permutations by sending $j$ to $i$, as opposed to sending $i$ to $j$. Furthermore, you can apply permutations repeatedly and it doesn't matter where you place parentheses. The group of permutations on $4$ elements, a seemingly benign structure, has a rather intricate lattice of subgroups (Figure \ref{fig:symmetric}). Substructures share non-trivial relationships, even in such a simple example.

\begin{figure}[h]
\centering
\includegraphics[width=0.5\textwidth] {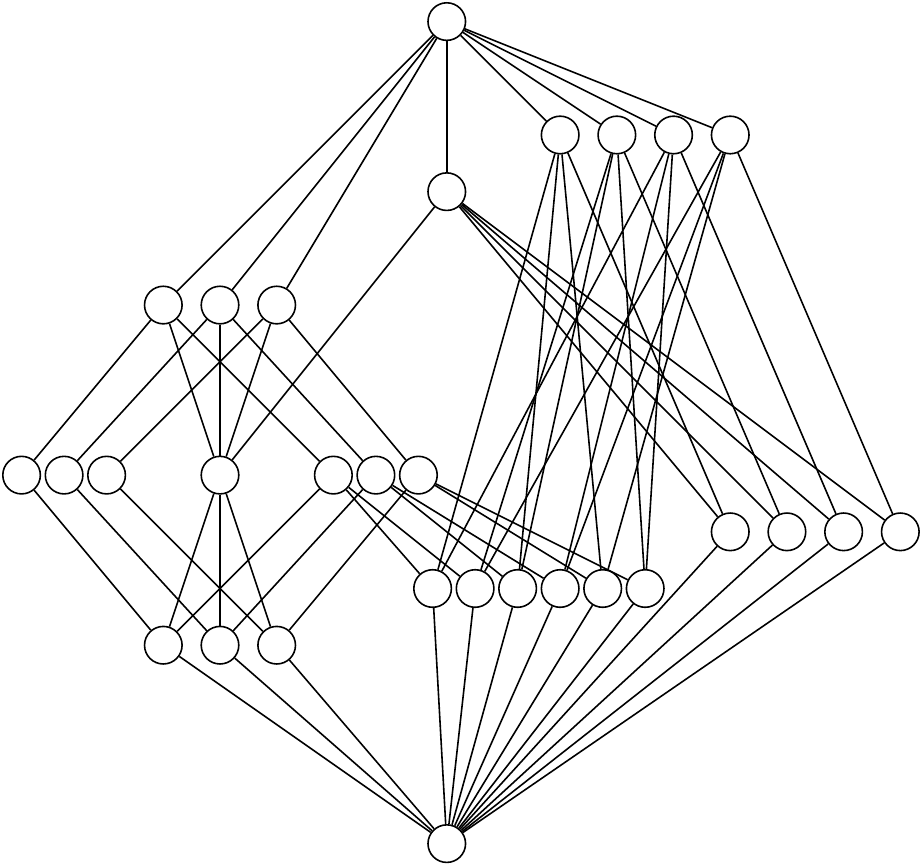}
\caption{Hasse diagram of the lattice of subgroups of $S_4$.}
\label{fig:symmetric}
\end{figure}x

Lattices function both as ordered sets and algebraic structures. Suppose $x, y \in \lattice{L}$, an ordered set with $\preceq$, then $x \join y$ (join) is the least upper bound (if it exists) and $x \meet y$ (meet) is the greatest lower bound (again, it exists). On one hand, a lattice is an ordered set such that the least upper bounds and greatest lower bounds are guaranteed to exist (Definition \ref{def:lattice-comb}). On the other hand, a lattice is, a set with binary operations $\join$ and $\meet$ satisfying axioms (Definition \ref{def:lattice-alg}). Elementary examples of lattices include the following.

\begin{examples}[Lattices]
\leavevmode
\begin{enumerate}
    \item Subsets of an arbitrary set with the operations union and intersection. Alternatively, subsets with an inclusion relation.
    \item Subgroups (subspaces) of a group (vector space) with product (sum) and intersection (Figure \ref{fig:symmetric}). Alternatively, subgroups (subspaces) with an inclusion relation.
    \item Partitions of an arbitrary set with coarsest common refinement under refinement. Alternatively, partitions under the refinement relation.
    \item Truth values $\bool = \{0,1\}$ with $\mathtt{OR}$/ $\mathtt{AND}$. Alternatively, $\bool$ with the order $0 < 1$.
\end{enumerate}
\end{examples}

\noindent Other quite general examples are embedded in other lattices such as powersets and partitions, including subpartitions (Example \ref{eg:partitions}), and lattices characterizing formal concepts \cite{wille1982restructuring} and fuzzy logic \cite{belohlavek1999fuzzy}. Less known is a ``lattice theory of information'' in which the information content (relative entropy) of a random variable establishes a lattice order. Joins and meets have interesting interpretations (Example \ref{eg:information-lattice}).

This reoccurring duality between order and algebraicity inspires a new outlook on multi-agent systems. Suppose a system consists of $\{1,2, \dots, n\}$ agents whose communication patterns are modeled by a graph $\graph{G} = (\nodes{G}, \edges{G})$ with $\nodes{G} = \{1,2, \dots, n\}$. Data collected in a lattice by individual agents models relational information about the system. On the other hand, because of the duality, this data is amalgamated with data stored by neighboring agents via the binary operations, meet and join.

As a point of clarification, it is not our goal to develop a general theory of multi-agent systems, but to make a case that certain mathematical tools are fundamental in modeling multi-agent information systems. To this end, it is our hope that the theory of lattice-valued sheaves (Chapter \ref{ch:lattice-valued}) and sheaf Laplacians (Chapter \ref{ch:tarski}) posited in this manuscript inspire novel information and control systems.

\section{Survey of Contributions}

The \define{Tarski Laplacian} is the key construction and contribution. The Tarski Laplacian was first introduced \cite{ghrist2022cellular} in an attempt to establish a Hodge theory and thus define cohomology of cellular sheaves valued in lattices.  The difficulty in defining cohomolgoy is that the construction of the cellular sheaf cochain complex initiated by Shepard \cite{shepard1985cellular} breaks apart. To make a long story short, meets and joins lack inverses. Nonetheless, some key computational and theoretical construction are still possible in \define{homological algebra} \cite{grandis2013homological}. Cohomology of mathematical objects such as simplicial complexes, topological spaces, or groups is often helpful in classifying properties of a structure. We would expect cohomology of lattice-valued sheaves would have ties to the behavior of the information systems they model. In the theory of cellular sheaves \cite{curry2014sheaves}, sheaf cohomology $H^0$ classifies the global sections. Global sections are identified as a lattice of fixed points via the Tarski Laplacian (Hodge-Tarski Theorem, Theorem \ref{thm:main}).

Homology of chain complexes in semiexact categories (the category of lattices and Galois connections is semiexact) as well as exact sequences and the snake lemma were established Marco Grandis \cite{grandis2013homological}. We suspect a proper theory of sheaf cohomology of $\cat{Sup}$-valued sheaves would require either a modified notion of projective or injective resolutions of sheaves \cite{bredon2012sheaf}, and, thus, a notion of projective or injective (complete) lattices. With the program to develop a cohomology theory for lattice-valued sheaves supplanted by novel applications of the Tarski Laplacian, our efforts are contained in the following manuscript \cite{ghrist2022cellular}.

It has been suggested  that  sheaf theory simultaneously straddles both the axes of algebra/geometry and structure/obstruction \cite{ghrist2021laplacians}. Lattice-valued sheaf theory, we place in the second quadrant of this helpful philosophical model (Figure \ref{fig:sheaf-axes}). While lattices are pure algebraic structures, in Chapter \ref{ch:tarski}, we introduce geometric notations such as Laplace operators and even parallel transport.

\begin{figure}[b]
	\centering
	\includegraphics[width=\textwidth ]{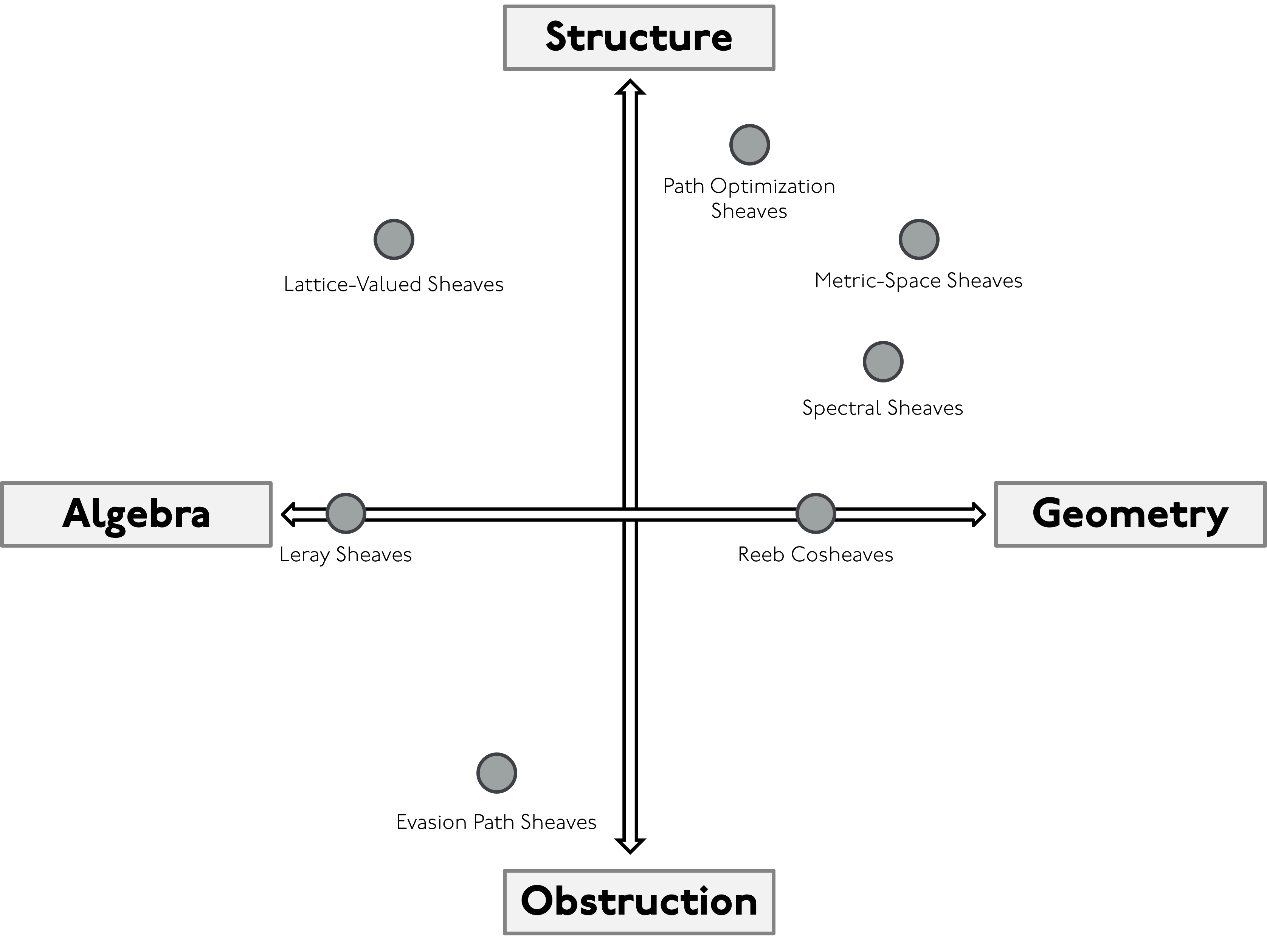}
	\caption{A philosophical model of applied sheaf theory: (from left to right) Leray sheaves \cite{curry2018dualities}, lattice-valued sheaves \cite{ghrist2022cellular}, evasion path sheaves \cite{ghrist2017positive}, path optimization sheaves \cite{moy2020path}, Reeb cosheaves \cite{de2016categorified}, spectral sheaves \cite{hansen2019toward}, and (pseudo)-metric space sheaves \cite{robinson2020assignments}.}
	\label{fig:sheaf-axes}
\end{figure}

From the standpoint of category theory, lattice-valued sheaves over (incidence) posets
are a categorification \cite{baez1998categorification} of sheaves valued in an abelian category. Abelian groups, for instance, are replaced with lattices of subgroups (categories) and homomorphisms are replaced with Galois connections (adjunctions). This fortuitous lifting of classical sheaf theory to lattice theory is not without a price. When a functor from a poset into the category of complete lattices and Galois connections factors through the functor sending an object to its lattice of (normal) subobjects, we cannot, in general, recover the homology of the abelian sheaf, from the lattice-valued sheaf \cite[Section 4.3]{ghrist2022cellular}.

The Tarski Laplacian is a local operator that acts on assignments of data to a sheaf of lattices. The Tarski Laplacian is called a ``Laplacian'' for a number of reasons. Primarily, the Hodge-Tarski Theorem identifies the global sections of a lattice-valued sheaf with fixed points. We offer a deep connection to the graph connection Laplacian \cite{singer2012vector} based on a notion of parallel transport between local sections of a network sheaf recently introduced \cite{bodnar2022neural}. We hope the Tarski Laplacian can establish connections to information geometry \cite{li2021transport} and representation theory \cite{krishnan2020invertibility}.

Dynamical systems on network sheaves with coefficients in Hilbert spaces (e.g.~$\R^d$ or $L_2(\R^d)$) were initiated by Hansen \cite{hansen2019distributed,hansen2021opinion,hansenconnections}. We introduce a dynamical system
\begin{align*}
    \mathbf{x}[t+1] &=& \Laplacian \mathbf{x}[t] \meet \mathbf{x}[t]
\end{align*}
called the \define{heat equation} with a time-varying version called the \define{gossip equation} \eqref{alg:gossip}. Beyond diffusion dynamics, saddle-point dynamics, as well as other local dynamics based on simple aggregation rules are of interest. In the narrow case of approximation of the constant sheaf (Definition \ref{def:approx-constant}), consensus algorithms are a byproduct of diffusion dynamics driven by the Tarski Laplacian.

We spend the better part of Chapter \ref{ch:signals} comparing approaches to lattice signal processing, as well as providing our own. We identify the convolution defined by \textpuschel \cite{puschel2021discrete} as an algebraic signal model and describe the relationship between two algebraic signal models corresponding to signals on lattices. In the final chapter, after discussing a class of networked model checking problems, we present applications of the Tarski Laplacian to semantics, offering novel models of knowledge diffusion/consensus motivated by epistemic logic \cite{fagin2004reasoning}.

\section{Related Work}
\label{sec:related}

We make no attempt at a comprehensive literature review, focusing on a few relevant topics, instead.

\subsection{Quantum logic}

In physics, Birkhoff and von Neumann proposed lattice theory as a logical model for quantum mechanics \cite{birkhoff1936logic}. Their work is based on the following observation. Suppose $H$ is a Hilbert space, then the set of closed subspaces of $H$ forms an (orthocomplemented) lattice under the closure of subspace sums (join) and intersection (meet). These subspaces are in one-to-one correspondence with projections which von Neumann reasoned could be viewed as quantum observables \cite{von2018mathematical}. However, their approach did not stand the test of time, due to Bell's Theorem being experimentally disproved \cite{aspect1981experimental}. Consequently, monoidal categories/string diagrams \cite{coecke2018picturing} and linear logic \cite{girard1987linear}, the logic of quantum information theory, are more fashionable these days. 

\subsection{Supermodular games}

Economists have applied lattice theory to a particular class of $n$-person games (multi-agent system). A real-valued function on a lattice is \define{supermodular} if
\begin{align*}
f(x \join y) + f(x \meet y) \geq f(x) + f(y) \quad \forall x, y \in \lattice{L}.
\end{align*}
Then, a \define{supermodular game} consists of players $i \in \{1,2,\dots,n\}$ each with a lattice of possible strategies $\lattice{L}_i$ and a utility function
\begin{align*}
    u_i(x_i, \mathbf{x}_{-i}): \prod_{i=1}^n \lattice{L}_i \to \R,\\
    x_i \in \lattice{L}_i, \quad \mathbf{x}_{-i} \in \prod_{j \neq i} \lattice{L}_i \quad i \in \{1,2,\dots, n\}
\end{align*}
such that $u_i(-,\mathbf{x}_{-i})$ is supermodular and $u_i$ satisfies an additional property of \define{increasing differences}. Then, every $n$-person supermodular game has a Nash equilibrium \cite{topkis1978minimizing} (see also \cite[Chapter 7]{vohra2004advanced}). It was later shown by Zhou that if each strategy lattice is complete, the set of Nash equilibrium forms a complete lattice \cite{zhou1994set}. While pertinent to problems in economics such as Bertrand competition \cite{edgeworth1925pure}, supermodular games have found practical use in wireless communication networks, with applications to energy-efficient power allocation \cite{liu2018supermodular}, power control \cite{altman2003supermodular}, and interference compensation \cite{huang2006distributed}. Submodular functions on a lattice, real-valued functions satisfying 
\begin{align*}
    f(x \join y) + f(x \meet y) &\leq& f(x) + f(y) \quad \forall x, y \in \lattice{L},
\end{align*}
and algorithms to minimize them were introduced by Topkis \cite{topkis1978minimizing}. Unsupervised learning problems such as principal component analysis (PCA) and generalized PCA, have been reformulated as constrained submodular maximization problems on the lattice of spaces of $\R^d$ \cite{nakashima2019subspace}, as well as meta-learning \cite{adibi2020submodular}.

\subsection{Discrete event systems}

In discrete event systems \cite{cassandras2008introduction}, an area of control theory, Galois connections model the triggering of events. A discrete event system is a transition system (Chapter \ref{ch:semantics}) with discrete states and transitions triggered by events. Linear systems \cite{chen1984linear} and their properties (e.g.~controlability, observability), are well-understood. In general, discrete event systems are less tractable, however, in the past two decades, much headway has been made by representing discrete-event systems with square matrices with elements in the following lattice.

Let $\Rext$ denote the extended reals $\R \cup \{-\infty, \infty\}$. $\Rext$ is a (complete) lattice under the operations $\max$ and $\min$ which we denote with $\join$ and $\meet$. $\Rext \setminus \{-\infty, \infty\}$ has the structure of a group under addition (Definition \ref{def:residuated}). Replacing the ``times'' with ``plus'' and ``times'' in an ordinary euclidean space with ``max'' or ``min'' one obtains a \define{max-plus vector space} $\Rext^m$, an example of a \define{weighted lattice} \cite{maragos2017dynamical}. As in ordinary linear algebra, matrix multiplication defines a transformation. If $A \in \Rext^{m \times m}$ is a matrix and $\mathbf{x} \in \Rext^m$ is a vector, then matrix multiplication is defined in two dual ways 
\begin{align*}
    (A \ovee \mathbf{x})_i &=& \bigjoin_{i =1}^{m} a_{ij} + x_j, \\
    (A \owedge \mathbf{x})_i &=& \bigmeet_{i=1}^m a_{ij} + x_j.
\end{align*}

For illustration, consider the following discrete event system.  Suppose a set of $\{1,2,\dots, m\}$ events begin at time $x_i[t], i \in \{1,2,\dots, m\}$ after $t \in \N$ rounds. Suppose, an event $j$ can only begin after another even $j$ has terminated (e.g.~ironing a shirt after drying). Then, we say $i$ transitions from $j$. With considerations of system design, waiting times depend not only on the ongoing event $i$ but also on the target event $j$. Let $a_{ij}$ denote the duration of time between the start of $i$ and the (subsequent) start of $j$. If $j$ does not depend on process $i$, then set $a_{ij} = -\infty$.  Then,
\begin{align*}
    x_i[t+1] &=& \bigjoin_{j=1}^m x_i[t] + a_{ij} \quad i = 1,2,\dots, m
\end{align*}
is the new start time of event $i$.
Let $A \in \Rext^{m}$ be the matrix with entries $a_{ij}$, then the global dynamics are written in the following form
\begin{align*}
   \mathbf{x}[t+1] &=& A \ovee \mathbf{x}[t].
\end{align*}

Suppose in two interacting discrete event systems, events are collected from the same set $\{1,2,\dots, n\}$. If $\mathbf{x}[t], \mathbf{y}[t] \in \Rext^m$ are vectors of event times after $t$ rounds of System 1 and System 2, respectively, and $A, B \in \R^{m \times m}$ are matrices of transition times, then the \define{synchronization problem} asks whether $(\mathbf{x}[0], \mathbf{y}[0])$ can be chosen so that there exists a $t_0$ such that $\mathbf{x}[t] = \mathbf{y}[t]$ for all $t \geq t_0$. Clearly, this is equivalent to solving the system
\begin{align*}
    A \ovee \mathbf{x} &=& B \ovee \mathbf{y}
\end{align*}
for $(\mathbf{x}, \mathbf{y}) \in \Rext^m \times \Rext^m$.

Cunninghame-Green \& Butkovic proposed a solution to the synchronization problem called the \define{alternating method} \cite{cuninghame2003equation}. Suppose $A^{\dagger} \in \Rext^{m \times m}$ with entries $[a^\dagger_{ij}] = -a_{ji}$. Then, the maps
\begin{align*}
    \mathbf{x} &\mapsto& A \ovee \mathbf{x}, \\
                A^{\ast} \owedge \mathbf{y} &\mapsfrom& \mathbf{y} 
\end{align*}
form a Galois connection (Example \ref{eg:max-plus-galois}). Each iteration of the alternating method is equivalent to applying the Tarski Laplacian (Chapter \ref{ch:tarski})
\begin{align}
    \Laplacian(\mathbf{x},\mathbf{y}) &=& \left( A^{\dagger} \owedge \left( B \ovee \mathbf{y} \right), B^{\dagger} \owedge \left( A \ovee \mathbf{x} \right) \right)
\end{align}
on a particular ``sheaf'' over the graph $\bullet-\bullet$ (Figure 

\begin{figure}
    \centering
\[\begin{tikzcd}
	{A^\dagger \owedge \mathbf{z}} && {\mathbf{z}} && {B^\dagger \owedge \mathbf{z}} \\
	{\Rext^n} && {\Rext^n} && {\Rext^n} \\
	\bullet &&&& \bullet \\
	{\mathbf{x}} & {A \ovee \mathbf{x}} && {B \ovee \mathbf{y}} & {\mathbf{y}}
	\arrow["A", curve={height=-6pt}, from=2-1, to=2-3]
	\arrow["B"', curve={height=6pt}, from=2-5, to=2-3]
	\arrow[maps to, from=4-1, to=4-2]
	\arrow[maps to, from=4-5, to=4-4]
	\arrow["{A^\dagger}", curve={height=-6pt}, from=2-3, to=2-1]
	\arrow["{B^\dagger}"', curve={height=6pt}, from=2-3, to=2-5]
	\arrow[maps to, from=1-3, to=1-1]
	\arrow[maps to, from=1-3, to=1-5]
	\arrow[no head, from=3-1, to=3-5]
\end{tikzcd}\]
    \caption{A max-plus sheaf modeling synchronization of discrete-event systems \cite{cuninghame2003equation}.}
    \label{fig:cunninghame}
\end{figure}

\noindent Synchronization is equivalent to being a section of the sheaf. Therefore, by the Hodge-Tarski Theorem (Theorem \ref{thm:main}), synchronization is also 
equivalent to the condition
\begin{align*}
    A^{\dagger} \owedge \left( B \ovee \mathbf{y} \right) & \succeq \mathbf{x} \\
    B^{\dagger} \owedge \left( A \ovee \mathbf{x} \right) &\succeq \mathbf{y}
\end{align*}
for all $(\mathbf{x}, \mathbf{y}) \in \Rext^m \times \Rext^m$.


\subsection{Consensus}

Lattice consensus algorithms in the networked multi-agent setting (Algorithm \ref{alg:meet-consensus}, Algorithm \ref{alg:heat-flow}, Algorithm \ref{alg:gossip}) are novel, but consensus on lattices, while obscure, broadly construed, is not. While we have discussed several notions of consensus in multi-agent systems, consensus has other connotations in both computer science as well as economics.

Reaching agreement among remote processes (agents) is a fundamental problem in distributed computing. In a consensus protocol, each process $i \in \{1,2,\dots, n\}$ receives or seeds an input register $x_i$, sends and receives messages to other processes, and, eventually, must determine an output register $y_i$ coinciding with the output registers of all other processes. Unfortunately, it has been shown that, roughly, no asynchronous consensus protocol can tolerate (i.e.~not affect the output) a single fault. For instance, if a process is unexpectedly removed from the system \cite{fischer1985impossibility}, this would be considered a \define{fault}.

Now, suppose input and output registers are collected in a lattice (Definition \ref{def:lattice-alg}). Then, lattice agreement is defined to be a decision $\{y_i\}_{i = 1}^n$ such that $\{y_i\}_{i=1}^n$ is a chain and the following inequality holds for all $i \in \{1,2,\dots, n\}$
\begin{align*}
    \bigmeet_{i=1}^n x_i &\preceq& y_i &\preceq& \bigjoin_{i=1}^n x_i.
\end{align*}
Lattice agreement, centralized branch-and-bound algorithms, have been shown to satisfy fault-tolerance in a variety of settings \cite{zheng2018lattice,zheng2021byzantine,zheng2020byzantine}, unlike lattice consensus.

In the theory of social choice \cite{arrow2012social}, consensus is roughly equivalent to ``choosing in groups'' \cite{munger2015choosing}. At the highest level of generality, if $X$ is a finite set of alternatives (choices), a consensus rule is a map
\[\chi: \bigcup_{k>1} X^k \to \powerset{X} \setminus \emptyset\] sending a tuple $\boldsymbol{\pi}$ of arbitrary individual choices called a profile to a nonempty set of group choices $chi(\boldsymbol{\pi})$ \cite{barthelemy1991formal}. Suppose we fix the number of voters (agents) to be $n$ and replace $X$ with a lattice $\lattice{L}$, then a \define{(latticial) consensus function}\footnote{If such a function is monotone and sends the top (bottom) profile  to the top (bottom) element of $\lattice{L}$, then it has been called a \define{aggregation function} \cite{botur2018generating}, although ``aggregation function'' is sometimes synonymous with a general consensus function.} \cite{barnett1995social} or \define{aggregation function} \cite{janowitz2016aggregation,leclerc2013aggregation} is a map \[\chi: \lattice{L}^n \to \lattice{L}.\]  Lattices frequently arise in consensus settings in social choice. For instance, $\lattice{L}$ could be the lattice of (transitive, reflexive) preference relations on $X$ or a lattice of choice functions \cite{monjardet2004lattices}, maps $c: \powerset{X} \to \powerset{X}$ such that $c(A) \subseteq A$ for all $A \subseteq X$. In the nomenclature, a consensus function $\chi$ is a \define{meet-projection} if there is a subset $N \subseteq \{1,2,\dots, n\}$ such that $\chi(\boldsymbol{\pi}) = \bigmeet_{i \in N} \pi_i$ and, in particular, a \define{Pareto consensus function} if $N= \{1,2,\dots, n\}$.

A third interpretation of lattice consensus comes about in the general science of classification/taxonomy. In evolutionary biology, phylogenetic trees can be inferred from DNA sequences \cite{nascimento2017biologist}. Having several candidates for the evolutionary tree of a particular organism, consensus is one approach to determining an aggregate phylogenetic tree \cite{bryant2003classification}. In the same spirit, when an ensemble of hierarchical clustering algorithms produces various candidate dendrograms, consensus on the outputs selects a single representative tree \cite{neumann1986lattice}.

Last, but not least, max/min consensus aims to compute the aggregate max or min of a signal on the nodes of a network. In some settings, maximum and minimum are approximated \cite{tahbaz2006one}, or algorithms for reaching consensus on general functions on the inputs were proposed \cite{cortes2008distributed}. Max consensus was also framed as a max-plus-linear system \cite{nejad2009max}. Useful applications of min/max consensus and, arguably, meet/join consensus, include decentralized leader election \cite{borsche2010leader} and minimum-time rendezvous \cite{nejad2009max}.

\subsection{Applied sheaf theory}

Applied sheaf theory likely originated in the 1970s. Sheaves were argued to be an intuitive way of thinking about certain image segmentation problems \cite{bajcsy1973computer}. In the 1990s, the following audacious claim was made by Goguen, who applied sheaf theory to electrical circuitry \cite{goguen1992sheaf}: 
\begin{displayquote}
The sheaf condition appears to be satisfied by the behaviours of all naturally arising systems from computing science. This ``Sheaf Hypothesis'' is similar to the Church-Turing thesis, that all intuitively computable functions are computable in the precise sense of Turning machines.
\end{displayquote}

Sheaf theory, as a field of applied mathematics, is not yet mature. Some successes include sheaves as a tool for integrating sensors \cite{robinson2017sheaves}, filter design \cite{robinson2014topological}, path optimzation and routing \cite{cormen2022introduction,moy2020path}. The Greatest triumph of applied sheaf may be an ongoing program in the foundations of topological data analysis \cite{ghrist2008barcodes}. The theory of both persistent homology \cite{kashiwara2018persistent,curry2014sheaves,macpherson2021persistent} and Reeb graphs \cite{de2016categorified} has much benefited from a sheaf-theoretic viewpoint. In most settings, the data category of the sheaf (i.e.~what lives ``upstairs'') is the category of finite-dimensional vector spaces and the space is a graph. In this manuscript, we study sheaves over graphs valued in $\cat{Sup}$, the data category of complete lattices (Chapter \ref{ch:lattice-valued}). We briefly review adjacent work on sheaves valued in nonabelian categories.

In quantum mechanics, sheaves have been used to model non-locality: mysterious interactions between fundamental particles, presumably occurring faster than the speed of light \cite{abramsky2011sheaf}. Contextuality, a version of non-locality, is formulated as a sheaf of events, a sheaf in the data category of sets. Cohomology is undefinable here, hence, obstructions to locality, computed as cohomology in degree one, cannot be explicitly calculated from the sheaf of events. One approach is to extend the sheaf of events to a sheaf of abelian groups factoring through the free functor sending a set to its free abelian group. one disadvantage of this approach is that sections may appear that weren't there before. Alternatively, \v{C}ech cohomology was recently defined for sheaves of semimodules \cite{montanhano2021characterization}.

In our opinion, the most significant development in applied sheaf theory is the introduction of sheaf Laplacians \cite{hansen2019toward}. Thus far, sheaf Laplacians have inspired applications in opinion dynamics \cite{hansen2021opinion}, distributed optimization \cite{hansen2019distributed}, sheaf learning \cite{hansen2019learning}, and graph neural networks \cite{bodnar2022neural,barbero2022sheaf}.


\chapter{Lattices}\label{ch:lattices}

\section{From Relations to Posets}

Given sets $X$ and $Y$, a binary relation is simply a subset $\rel{R} \subseteq X \times Y$. Sometimes we write $x \rel{R} y$ whenever $(x,y) \in \rel{R}$. Suppose $\rel{R} \subseteq X \times Y$ and $\rel{S} \subseteq Y \times Z$. We can compose relations as follows:
\[ \rel{S} \circ \rel{R} = \{ (x,z)~\vert~\exists~y \in Y~\text{such that}~x \rel{R} y,~y \rel{S} z \}.\]
We may also take unions $\rel{R} \cup \rel{R}'$ and intersections $\rel{R} \cap \rel{R}'$ of relations. Relations have duals. Suppose $\rel{R} \in X \times Y$ is a relation. Then $\rel{R}^{\dagger}$ is the relation $\{ (y,x)~\vert~x \rel{R} y\}$.

\begin{example}[Databases] \label{eg:databases}
	Suppose $X$ is a set of instances and $Y$ is a set of attributes. We set $x \rel{R} y$ if object $x$ has attribute $y$. Most of the time, attributes are not binary labels. One way to accommodate this situation to convert  these labels to integers. Then, a multi-valued relation is a map $\rel{R}: X \times Y \to \N$. Then, we define a filtration of relations
	\begin{equation}
	 	\rel{R}_{\leq m} = \{ (x,y)~\vert~\rel{R}(x,y) \leq m\}.
	 \end{equation}
	This sequence of relations is a filtration in the sense that $\rel{R}_{\leq m} \subseteq \rel{R}_{\leq m'}$ whenever $m \leq m'$.
\end{example}

Motivated by the preceding example, it is convenient to have notation for the set of attributes that is related to a given object. Let $\rel{R} \subseteq X \times Y$ be a relation and $\sigma \subseteq X$. Then, the \define{intent} of $\sigma$, denoted $\galup{\sigma}$, is the set $\{y~\vert~x \rel{R} y~\forall~x \in \sigma\}$. Conversely, the \define{extent} of a $\tau \subseteq Y$, denoted $\galdown{\tau}$, is the set $\{x~\vert~x \rel{R} y~\forall~y \in \tau\}$.

Whenever $X = Y$, we say $\rel{R}$ is an \define{endorelation} (or simply a relation from context). We denote the set of endorelations on $X$ with $\Rel{X}$. Endorelations are the same as directed graphs. For an $x \in X$, the intent $\galup{x}$ is sometimes called the \define{reachability set} or \define{children} of $x$. The extent $\galdown{x}$ is called the \define{predecessor set} or \define{parents} of $x$. Several properties of endorelations are of note.

\begin{definition}
Suppose $\rel{R} \in \Rel{X}$. Then, $\rel{R}$ is 
\leavevmode
\begin{enumerate}
	\item \define{Transitive} if $x \rel{R} y$, $y \rel{R} z$ implies $x \rel{R} z$,
	\item \define{Reflexive}  if $x \rel{R} x$,
	\item \define{Anti-symmetric} if $x \rel{R} y$ and $y \rel{R} x$ implies $x = y$,
	\item \define{Symmetric} if $x \rel{R} y$ and $y \rel{R} x$,
    \item \define{Serial} if for every $x \in X$, there exist $y \in X$ such that $x \rel{R} y$,
    \item \define{Euclidean} if for every $x, y, z \in X$, $x \rel{R} y$ and $x \rel{R} z$ implies $y \rel{R} z$,
    \item \define{Connex} if for every $x,y \in X$, $x \rel{R} y$ or $y \rel{R} x$.
\end{enumerate}
\end{definition}

A \define{equivalence relation} is a relation $\sim~\in~\Rel{X}$ satisfying transitivity, reflexivity and symmetry. It is well known that equivalence relations are in one-to-one correspondence between equivalence relations and partitions. A partition of $X$ is a collection of subsets $\pi = \{X_i\}_{i \in I}$ such that $\bigcup \pi  = X$ and $X_i \cap X_j = \emptyset$ for all $i \neq j$. Let $\Part{X}$ denote the set of partitions of $X$. It is often convenient to represent a partition e.g.~$\pi \in \Part{\{1,2,3,4\}}$ as $ \pi = \{1 2 \vert 3 \vert 4\}$. 

\begin{example}[Semantics]
	This example serves as a prelude to Chapter \ref{ch:semantics}. Suppose $S$ is set of states of a system. For instance, $S$ could be units of time, or, in a multi-agent system, each agent could have local states $S_i$, for instance, detection states $\{pos, neg\}$ in a sensor network, or message statuses $\{sent, received\}$. In the local paradigm, global states are a prodcut of local states $S = \prod_{i} S_i$. A \define{Kripke relation} is an endorelation $\rel{K} \in \Rel(S)$. A Kripke relation defines a ``possible worlds model'' upon which various multi-agent logics \cite{fagin2004reasoning} are built. For instance, given a state $s$, the reachability set $s^{\uparrow} = \{t~\vert~s \rel{K} t\}$ constitutes set of states for which a given proposition must be true in order that the proposition be ``especially'' true at state $s$.
\end{example}

\begin{definition}
	A \define{poset} (\underline{p}artially \underline{o}rdered \underline{set}) $(\poset{P}, \preceq)$ is a set $X$ with a transitive, reflexive and anti-symmetric binary relation $\preceq$.
\end{definition}

We often omit the order $\preceq$ in  $(\poset{P}, \preceq)$ when it is clear from context. If we remove the anti-symmetry requirement, we say $\poset{P}$ is a preordered set or \define{proset}. We write $x \prec y$ if $x \preceq y$ and $x \neq y$. The ordering $\prec$ is called a \define{strict order}.


Before getting to far along, we describe a standard way to visualize posets which will require a definition. We say that an element $y$ of $\poset{P}$ \define{covers} $x$ written $x \covers y$ if there does not exist $z \in \poset{P}$ with $x \prec z \prec y$. We form a covering (endo)relation $\covers$ depicting the elements that cover each other. We can draw the covering relation as a directed graph called a \define{Hasse diagram}. For historical reasons, we do not draw arrows because arrows always point from the bottom to the top of the page. Paths in the hasse diagram correspond to chains. Transitivity and reflexivity is implicit. We do not draw self-loops or arrows that realize transitivity, but they are there (Figure \ref{fig:hasse-dmn}.)

\begin{figure}[h]
	\centering
	\includegraphics[width=0.25\textwidth]{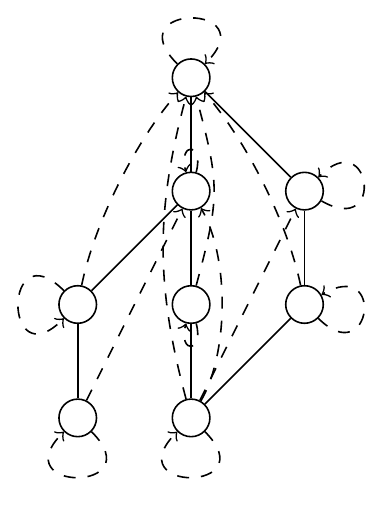}
	\caption{Covering relation (Hasse diagram) v.s.~entire partial order.}
	\label{fig:hasse-dmn}
\end{figure}

\subsection{Subsets of posets}

Several types of subsets of $\poset{P}$ are of interest. An \define{upset} is a subset $U \subseteq \poset{P}$ with the property
\[ x \in U,~y \succeq x~\Rightarrow~y \in U.\]
 Similarly, a \define{downset} $D$ is a subset $D \subseteq \poset{P}$ with the property
\[ x \in D,~y \preceq x,~\Rightarrow~y \in D. \]
A \define{principal upset} is generated by a single $x \in \poset{P}$: $\upset{x} = \{y \in \poset{P}~\vert~y \succeq x \}$. A \define{principal downset} is generated by a single $x \in \poset{P}$: $\downset{x} = \{y \in \poset{P}~\vert~y \preceq x \}$.

An \define{interval} in a poset $\poset{P}$ is a subset $I \subseteq \poset{P}$ with the convexity property that for every $x, y \in I$, if $x \preceq z \preceq y$ in $\poset{P}$, then $z \in \poset{P}$ also. An interval is \define{closed} if it is the form $\left[x, y \right] = \{z~\vert~x \preceq z \preceq y\}$. We denote the set of closed intervals in a poset $\Int{\poset{P}}$. A \define{chain} is a subset $C \subseteq \poset{P}$ such that $C$ is totally ordered: for every $x, y \in C$, either $x \preceq y$, or $y \preceq x$ (or both). Similarly, an \define{antichain} is a subset $A \subseteq \poset{P}$ such that for every $x, y \in A$, neither $x \preceq y$ nor $y \preceq x$ (written $x \antichain y$). The \define{height} of $\poset{P}$ is the (path) length of the maximal chain. The \define{width} of $\poset{P}$ is the cardinality of the maximal antichain.

A subset $V \subseteq \poset{P}$ is \define{directed} if for every $x, y \in \poset{P}$, there is a $z$ such that $z \succeq x, z \succeq y$. A subset $F \subseteq \poset{P}$ is \define{filtered} if for every $x, y \in \poset{P}$, there is a $z$ such that $z \preceq x, y$.

\subsection{Convergence}

When computing with lattices, it is useful to have a notion when chains ``converge'' in both a finite and trans-finite sense.

\begin{definition}\label{def:dcc}
$(\poset{P}, \preceq)$ satisfies the descending chain condition (DCC) if every every chain $C \subseteq P$ has a minimum element. Similarly, $\poset{P}$ satisfies the ascending chain condition (ACC) if every chain $C$ has a maximum element. 
\end{definition}

A fortuitous sufficient condition for satisfying DCC (dualy, ACC) is a grading. A poset $\poset{P}$ is \define{graded} if there is a map
\begin{equation}
	r: \poset{P} \to \N
\end{equation}
such that $r(x) < r(y)$ whenever $x \preceq y$ (strictly monotone) and $r(y) = r(x) + 1$ whenever $x \covers y$. A grading on a poset is a particularly restrictive topological sorting of the Hasse diagram. In general, a \define{topological sorting} \cite{cormen2022introduction}assigns a natural number to every node of a directed graph such that the number assigned to the source of an arrow is less than the number assigned to the target.

\begin{examples}[Grading] Suppose $V$ is a vector space. The lattice of finite-dimensional subspaces is a lattice graded by subspace dimension. Similarly, if $X$ is a set, the lattice of finite subsets is a lattice graded by cardinality. The lattice of (finite) partitions of $X$ is graded by the number of parts.
\end{examples}

\begin{proposition}
 	Suppose $\poset{P}$ is graded with grading $r: \poset{P} \to \N$. Then, $\poset{P}$ satisfies DCC.
 \end{proposition}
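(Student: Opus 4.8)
The plan is to reduce the descending chain condition on $\poset{P}$ to the well-ordering of $\N$ through the grading map $r$. First I would fix an arbitrary nonempty chain $C \subseteq \poset{P}$ and consider its image $r(C) \subseteq \N$. Since $r(C)$ is a nonempty subset of the natural numbers, the well-ordering principle furnishes a least element $m \in r(C)$; choose any $x_0 \in C$ with $r(x_0) = m$.

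Next I would verify that this $x_0$ is in fact the minimum of $C$. Let $y \in C$ be arbitrary. Because $C$ is totally ordered, either $x_0 \preceq y$ or $y \preceq x_0$. In the second case, if moreover $y \neq x_0$, then strict monotonicity of the grading gives $r(y) < r(x_0) = m$, contradicting the minimality of $m$ in $r(C)$. Hence $x_0 \preceq y$ for every $y \in C$, so $x_0$ is the minimum of $C$. Since $C$ was an arbitrary chain, $\poset{P}$ satisfies DCC.

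There is no genuinely hard step; the only point requiring a little care is the slightly loose statement of the grading axiom. The condition ``$r(x) < r(y)$ whenever $x \preceq y$'' must be read as $x \prec y \Rightarrow r(x) < r(y)$ (equivalently: $r$ is order-preserving and injective on chains), since $r(x) < r(x)$ is false; with that reading the contradiction above goes through verbatim. It is also worth noting that the proposition is genuinely one-sided: a grading valued in $\N$ yields DCC but not ACC in general, as $\N$ has no largest element (for instance the lattice of finite subsets of $\N$, graded by cardinality, fails ACC).
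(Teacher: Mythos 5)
Your proof is correct and follows essentially the same route as the paper's: push the chain into $\N$ via the grading and invoke the well-ordering of the naturals. You are in fact more careful than the paper, which stops at ``$r(C)$ has a minimum'' without verifying that the element of $C$ realizing that minimum is itself the minimum of $C$ --- your totality-plus-strict-monotonicity argument supplies exactly that missing step, and your reading of the grading axiom as $x \prec y \Rightarrow r(x) < r(y)$ is the intended one.
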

 \begin{proof}
 	Consider a chain $C \subseteq \poset{P}$. Then, $r(C)$ is a decreasing sequence
	of natural numbers bounded below by $0$. Hence, $r(C)$ converges i.e.~has a minimum element.
 \end{proof}

 Some posets, for instance, the unit interval $[0,1] \subseteq \R$, do not satisfy the ascending/descending chain condition. For instance, the sequence $\{a\}_{n \in \N} = 1- 1/n$ is an ascending chain in $[0,1]$, yet $a_{n+1}$ is strictly greater than $a_n$ for every $n \geq 0$. However, the least upper bound of $\{a_n\}$ provides an answer to the question: \emph{what does $a_n$ converge to?} In general, least upper bounds and greatest lower bounds equip posets with notions of convergence as well as algebraic properties.

\subsection{Completeness}
\label{sec:complete-lattice}

Suppose $S \subseteq \poset{P}$ is an arbitrary subset. If they exists, we may compute least upper bounds and greatest lower bounds which we hereafter call \define{joins} and \define{meets} respectively:

\begin{align*}
\bigjoin S &=& \min \{y~\vert~y \succeq x~\forall~x \in S\}; \\
\bigmeet S &=& \max \{y~\vert~y \preceq x~\forall~x \in S\}.
\end{align*}

We say that $\poset{P}$ is a \define{complete lattice} if for every subset $S \subseteq \poset{P}$, $\bigjoin S$ exists. $\poset{P}$ is \define{directed complete} if $\bigsqcup V$ exists for every directed subset $V$.\footnote{Of course, there is a dual notion of filtered complete, but, for reasons of tradition, directed complete posets are studied. Also for historical reasons, joins of directed sets are written with $\bigsqcup$.} In fact, it is enough that a poset contains either arbitrary meets or arbitrary joins in order to be a complete lattice, explaining the alternative nomenclature \define{suplattices}.

\begin{proposition}\label{thm:sup-inf}
	Suppose $(\poset{P}, \preceq)$ is a poset and $S \subseteq \poset{P}$ an arbitrary subset. Then, $\bigjoin S$ exists if and only if $\bigmeet S$ exists.
\end{proposition}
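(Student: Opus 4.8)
The plan is to realize each bound of $S$ as an extremum of the opposite family of bounds, the genuinely per-subset identity being $\bigmeet S = \bigjoin L$, where $L = \{y \in \poset{P} \mid y \preceq x \text{ for all } x \in S\}$ is the set of lower bounds of $S$ (dually $\bigjoin S = \bigmeet U$ for the upper-bound set $U$). I read the biconditional against the suplattice hypothesis established in the preceding paragraph — that $\poset{P}$ carries arbitrary joins, equivalently arbitrary meets — so that the relevant auxiliary extremum is guaranteed to exist for every $S$. The whole argument then reduces to checking that this auxiliary join is \emph{forced} to be the greatest lower bound of $S$.

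Concretely, fix $S$ and set $m = \bigjoin L$. I verify $m = \bigmeet S$ in two one-line steps. First, $m$ is a lower bound of $S$: each $s \in S$ dominates every element of $L$ by definition of $L$, hence $s$ is an upper bound of $L$, hence $s$ dominates the least upper bound, giving $m \preceq s$ for all $s \in S$. Second, $m$ is the greatest lower bound: any lower bound $\ell$ of $S$ belongs to $L$, so $\ell \preceq \bigjoin L = m$. Thus $m = \max L = \bigmeet S$. The companion identity $\bigjoin S = \bigmeet U$ is the verbatim order-dual — interchange $\preceq$ with $\succeq$ and $L$ with $U$ — and needs no separate proof, since the poset axioms are self-dual.

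The crux is not either inequality, each of which is immediate from the definition of least upper bound, but the provenance of the auxiliary join: $\bigmeet S$ is extracted from the join of the \emph{different} set $L$, so what actually powers the construction is the standing assumption that $\poset{P}$ admits arbitrary joins, not any property of $S$ in isolation. Concretely, if $\poset{P}$ has all joins, then $\bigjoin S$ trivially exists and the construction above simultaneously supplies $\bigmeet S$, so two-sided existence holds for every $S$; dually if $\poset{P}$ has all meets. This is the sense in which the per-subset biconditional is to be read here, and it is exactly the content promised by the surrounding discussion of suplattices.
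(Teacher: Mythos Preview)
Your proposal is correct and follows essentially the same approach as the paper: the paper's proof consists of the single identity $\bigjoin S = \bigmeet \{y \mid y \succeq x~\forall x \in S\}$, which is precisely your companion identity $\bigjoin S = \bigmeet U$, stated without further verification. Your version is more careful in two respects: you actually verify that $\bigjoin L$ is the greatest lower bound of $S$ rather than just asserting the identity, and you correctly flag that the per-subset biconditional only makes sense under the ambient hypothesis that $\poset{P}$ has all joins (or all meets), which the paper leaves implicit from the surrounding suplattice discussion.
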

\begin{proof}
	We write
	\begin{align*}
	\bigjoin S &=& \bigmeet \{y \in \poset{P}~\vert~y \succeq x,~\forall~x \in S\} \\
	&=& \bigmeet \bigcap_{x \in S} \upset{x}.
	\end{align*}
\end{proof}

\begin{warning}
	Going forward, we make the following abuse of notation. If $\poset{P}$ and $\poset{Q}$ are partial orders, we do not always distinguish between the order relation on $\poset{P}$ and $\poset{Q}$, writing $\preceq$ for both of them. We also do not distinguish between joins/meets in $\poset{P}$ and joins/meets in $\poset{Q}$. Both we write both with the symbols $\bigjoin$ and $\bigmeet$.
\end{warning}

\subsection{Maps between posets}

A map $f$ between posets $f: \poset{P} \to \poset{Q}$ is \define{order-preserving} or \define{monotone} if $x \preceq y$ implies $f(x) \preceq f(y)$. The set of order-preserving maps between $\poset{P}$ and $\poset{Q}$ is a poset with $f \preceq G$ if $f(x) \preceq g(x)$ for all $x \in \poset{P}$. $f$ is an \define{order embedding} if $f(x) \preceq f(y)$ if and only if $x \preceq y$; an \define{order isomorphism} if $f$ is a surjective order embedding.

The following example is relevant in explainable AI (artificial intelligence).

\begin{example}[Binary Classification]
	Suppose $f_{\theta}: \R^d \to [0,1]$ is a statistical model with parameter $\theta$ that inputs financial data about an applicant (e.g.~income, negative debt, credit score) and outputs the likelihood that an application will not default on a load. In a supervised training, $f_{\theta}(\mathbf{x}_i) = 1$ if $i$ made loan payments and or $f_{\theta}(\mathbf{x}_i) = 0$ if $i$ defaulted. In this scenario, one would desire for $f_{\theta}$ to be monotone to account for the fact that an applicant with a higher income, lower debt, and higher credit score will always be more likely to pay a loan back. This means that $f_\theta(\mathbf{x}) \leq f_\theta(\mathbf{x})$ whenever $\mathbf{x} \preceq \mathbf{y}$.
\end{example}

Posets, and all relations, have a duality baked into their definitions. Given $(\poset{P},\preceq)$, let $(\op{\poset{P}}, \sqsubseteq)$ be the poset called the \define{opposite poset} with the same base set $P$ and order $x \sqsubseteq y$ if and only if $x \succeq y$. Often in order-theoretic arguments, we invoke the \define{Duality Prinicple} which loosely says \emph{if a statement is true for $\poset{P}$, then the dual statement obtained by dualizing all the order-theoretic definitions in the statment is true for $\op{\poset{P}}$}. As an example, a map $f$ is \define{order-reversing} or \define{antitone} if $x \preceq y$ implies $f(y) \preceq f(x)$.  Alternatively, by the Duality Principle, we could say an antitone map between $\poset{P}$ and $\poset{Q}$ is a monotone map $f: \poset{P} \to \op{\poset{Q}}$ or, equivalently, a monotone map $f: \op{\poset{P}} \to \poset{Q}$.

A map $f: \poset{P} \to \poset{Q}$ is join-preserving if for every subset $S \subseteq \poset{P}$
\begin{align*}
	f( \bigjoin S) &=& \bigjoin f(S).
\end{align*}
A map $f: \poset{P} \to \poset{Q}$ is meet-preserving if for every subset $S \subseteq \poset{P}$
\begin{align*}
f( \bigmeet S) = \bigmeet f(S).
\end{align*}
\noindent If $\poset{P}$ and $\poset{Q}$ are directed complete partial orders, then we say $f: \poset{P} \to \poset{Q}$ is \define{continuous} if for every directed subset $V \subseteq \poset{P}$, $f(\bigsqcup V) = \bigsqcup f(V)$.

 \begin{example}[Computer Vision]
 	Let $\Rext = \R \cup \{-\infty, +\infty \}$ denote the extended real numbers. Suppose $f: \domain \to \Rext$ is a grayscale image where $\domain$ is the image domain (e.g. $\Z^2$). In image filtering, it is popular to define filters called erosions and dilations \cite{haralick1987image}. If $\graph$ is a second image called the \define{kernel}, define the \define{dilation} of $f$ by
 	\begin{align*}
 		(f \oplus g)(x) &=& \bigjoin_{y \in \domain} f(x-y) + g(y).
 	\end{align*}
 	Define the \define{erosion} of $f$ by
 	\begin{align*}
 		(f \ominus g)(x) &=& \bigjoin_{y \in \domain} f(x+y) - g(y).
 	\end{align*}
 	In the lattice of images \[\Rext^{\domain} = \{f: \domain \to \Rext\},\] the operators
 	\begin{align*}
 		\delta_g &=& f \oplus g \\
 		\epsilon_g &=& f \ominus g \\
 		\delta_g,\epsilon_g &:& \Rext^{\domain} \rightarrow \Rext^{\domain}
 	\end{align*}
 	satisfy
 	\begin{align*}
 		\delta_g( \bigjoin_{i} x_i) &=& \bigjoin_{i} \delta_g(x_i) \\
 		\epsilon_g( \bigmeet_i x_i) &=& \bigmeet_{i} \epsilon_g(x_i)
 	\end{align*}
 	for any $g \in \lattice{L}$. Hence, dilations, as the name suggests, are join-preserving, and erosions are meet-preserving. Other filters called \define{openings} and \define{closing} are defined by composing an erosions and dilations and are useful in denoising, contrast enhancement, reconstruction and edge detection \cite{maragos2009morphological}.
\end{example}

\section{Fixed Points}

Frequently, we study maps $f: \poset{P} \to \poset{P}$. Such a map is \define{inflationary} if $f(x) \succeq x$ for all $x \in \poset{P}$ and \define{deflationary} if $f(x) \preceq x$ for all $x \in \poset{P}$. We write $\prefix(f)$ and $\suffix(f)$ for the subsets of deflationary and inflationary points, respectively.
Let $\fixed(f)$ denote the set of fixed points $\prefix(f) \cap \suffix(f)$ (anti-symmetry axiom).

\subsection{(Co)closure Operators}

 A map $f: \poset{P} \to \poset{P}$ is \define{idempotent} if $f^2 = f$. Suppose $\poset{P}$ is a poset and $f: \poset{P} \to \poset{P}$ is an order-preserving map. There are various ways to characterize $\fixed(f)$ depending on the properties of $\poset{P}$ and $f$.

 \begin{definition}[Closure Operators]
  Suppose $f: \poset{P} \to \poset{P}$ is a map and $\poset{P}$ is a poset. If $f$ is order-preserving, inflationary, and idempotent, we say $f$ is a \define{closure operator}. Conversely, if $g: \poset{P} \to \poset{P}$ is order-preserving, deflationary, and idempotent we say $g$ is a \define{coclosure operator}. The fixed points of a (co)closure operator are called \define{(co)closed}.
 \end{definition}

 The theory of (co)closure operators \cite{roman2008lattices} allows us to explicitly compute joins and meets of the fixed point lattice.

\begin{theorem}[(Co)closure] \label{thm:closure-fixed}
	Suppose $\lattice{L}$ is a complete lattice and $f: \lattice{L} \to \lattice{L}$ is a closure operator on a complete lattice $(\lattice{L}$. Then, $\fixed(f)$ is a complete lattice with the following meets and joins
\begin{align*}
	\bigsqcap_{i \in I} x_i &=& \bigmeet_{i \in I} x_i \\
	\bigsqcup_{i \in I} x_i &=& f \left( \bigjoin_{i \in I} x_i \right).
\end{align*}
    Dually, suppose $g$ is a coclosure operator on $\lattice{L}$. Then, $\fixed(f)$ is a complete lattice with the following meets and joins
    \begin{align*}
        \bigsqcap_{i \in I} x_i &=&  f \left( \bigmeet_{i \in I} x_i \right ) \\
        \bigsqcup_{i \in I} x_i &=& \bigjoin_{i \in I} x_i.
    \end{align*}
\end{theorem}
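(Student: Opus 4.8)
The plan is to reduce everything to two observations: (i) the closed elements of $f$ are exactly the image $f(\lattice{L})$, and (ii) $f(\lattice{L})$ is closed under arbitrary ambient meets. Completeness of $\fixed(f)$ then follows formally from Proposition~\ref{thm:sup-inf}, and the two displayed formulas are read off directly.

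First I would record that idempotence gives $f(x)\in\fixed(f)$ for every $x$, so $\fixed(f)=f(\lattice{L})$ and ``$x$ is closed'' means precisely $f(x)=x$. Next comes the key step: for $S\subseteq\fixed(f)$ put $m=\bigmeet S$ (the meet taken in $\lattice{L}$, which exists since $\lattice{L}$ is complete). Inflationarity gives $m\preceq f(m)$; conversely, $m\preceq x$ for each $x\in S$, so monotonicity and closedness give $f(m)\preceq f(x)=x$, and taking the meet over $x\in S$ yields $f(m)\preceq m$. By anti-symmetry $m$ is closed, and since it is already the greatest lower bound of $S$ in $\lattice{L}$ it is a fortiori the greatest lower bound in the smaller poset $\fixed(f)$ --- this is the meet formula $\bigsqcap_{i} x_i=\bigmeet_i x_i$. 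Having all meets, $\fixed(f)$ is a complete lattice by Proposition~\ref{thm:sup-inf}.

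For the join formula I would verify directly that $f(\bigjoin S)$ --- which is closed by idempotence --- is the least upper bound of $S$ in $\fixed(f)$: it dominates each $x\in S$ because $x\preceq\bigjoin S\preceq f(\bigjoin S)$ by inflationarity, and if $y\in\fixed(f)$ is any upper bound of $S$ then $\bigjoin S\preceq y$, whence $f(\bigjoin S)\preceq f(y)=y$ by monotonicity and closedness. The empty case $S=\emptyset$ recovers $\top_{\lattice{L}}$ (closed since $f$ is inflationary) and $f(\bot_{\lattice{L}})$ as the top and bottom of $\fixed(f)$. Finally, the coclosure statement is the dual: a coclosure operator $g$ on $\lattice{L}$ is exactly a closure operator on the opposite lattice $\op{\lattice{L}}$, which is again complete, so applying the above there and translating back by the Duality Principle interchanges $\bigmeet$ and $\bigjoin$ and yields $\bigsqcap_i x_i=g(\bigmeet_i x_i)$ and $\bigsqcup_i x_i=\bigjoin_i x_i$ in $\fixed(g)$.

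The only subtle point is the inequality $f(m)\preceq m$ in the meet step, where one must combine monotonicity of $f$ with the defining property of the meet to push $f(m)$ below every element of $S$; the rest is bookkeeping with the three axioms of a closure operator together with Proposition~\ref{thm:sup-inf}.
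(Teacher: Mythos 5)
Your proof is correct and complete; the paper itself does not prove this theorem but simply cites \cite[Theorem 3.8]{roman2008lattices}, and your argument is precisely the standard one found there: identify $\fixed(f)$ with the image $f(\lattice{L})$, show ambient meets of closed elements are closed (the $f(m)\preceq m$ step you flag is indeed the only place where monotonicity, closedness, and the universal property of the meet must all be combined), and obtain joins by applying $f$ to the ambient join, with the coclosure case by duality. You also silently correct the typos in the statement of the dual half (which writes $\fixed(f)$ and $f(\bigmeet_i x_i)$ where $\fixed(g)$ and $g(\bigmeet_i x_i)$ are meant), which is the right reading.
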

\begin{proof}
    See \cite[Theorem 3.8]{roman2008lattices}.
\end{proof}
 
\noindent It is not surprising that closure operators come about naturally in topology.

 \begin{example}[Closure Spaces]
 	A \define{closure space} is a pair $(X, c)$ where $X$ is an (arbitrary) set and $c$ is a closure operator $c: \powerset{X} \to \powerset{X}$. Closed subsets in $X$ are precisely subsets $A$ with $c(A) = A$. A closure space is \define{topological} if
 	\begin{align*}
 		c(A \cup B) &=& c(A) \cup c(B), \\
 		c(\emptyset) &=& \emptyset.
 	\end{align*}
 	Recent work extends persistent homology to new settings (e.g. directed graphs) using closure spaces \cite{bubenik2021homotopy}.
 \end{example}

While (co)closure operators appear frequently throughout various branches of mathematics, the requirement of idempotency and inflation are often too stringent in practical situations.
The following fixed point theorem is integral to our study of operators on lattice-valued sheaves.

\begin{theorem}[Tarski Fixed Point Theorem] \label{thm:tfpt}
	Suppose $\lattice{L}$ is a complete lattice and $f: \lattice{L} \to \lattice{L}$ is an order-preserving map. Then, $\fixed(f)$ is a complete lattice. 
\end{theorem}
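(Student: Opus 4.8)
The plan is to run the classical Knaster--Tarski argument in three moves: construct a least fixed point, bootstrap it to produce arbitrary joins inside $\fixed(f)$, and then invoke Proposition \ref{thm:sup-inf} to upgrade ``has all joins'' to ``is a complete lattice.''

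First I would produce the least fixed point. Let $\prefix(f) = \{x \in \lattice{L} : f(x) \preceq x\}$ be the set of deflationary points; it contains the greatest element $\bigjoin \lattice{L}$, so it is nonempty, and $a := \bigmeet \prefix(f)$ exists by completeness. For every $x \in \prefix(f)$ we have $a \preceq x$, hence $f(a) \preceq f(x) \preceq x$ by monotonicity, so $f(a)$ is a lower bound for $\prefix(f)$ and therefore $f(a) \preceq a$. Applying $f$ again gives $f(f(a)) \preceq f(a)$, i.e.~$f(a) \in \prefix(f)$, so $a \preceq f(a)$; anti-symmetry then forces $f(a) = a$. Since every fixed point lies in $\prefix(f)$, $a$ is the least element of $\fixed(f)$.

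Next I would show that an arbitrary $S \subseteq \fixed(f)$ has a join in $\fixed(f)$. Set $s = \bigjoin S$, the join computed in $\lattice{L}$, and pass to the principal upset $\upset{s}$. This is again a complete lattice: any subset of $\upset{s}$ has $s$ as a lower bound, so its meet in $\lattice{L}$ again lies in $\upset{s}$, whence $\upset{s}$ is closed under arbitrary meets and is therefore complete. Moreover $f$ restricts to a self-map of $\upset{s}$: if $x \succeq s$ then for each $y \in S$ we have $y = f(y) \preceq f(x)$ because $y \preceq s \preceq x$, and taking the join over $y \in S$ gives $s \preceq f(x)$. Applying the first move to the order-preserving map $f|_{\upset{s}} : \upset{s} \to \upset{s}$ yields a least fixed point $\hat s \succeq s$ of $f$. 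This $\hat s$ lies above every element of $S$, and any $z \in \fixed(f)$ above all of $S$ satisfies $z \succeq \bigjoin S = s$, so $z \in \upset{s}$ is a fixed point of $f|_{\upset{s}}$ and hence $\hat s \preceq z$ by minimality. Therefore $\hat s$ is the join of $S$ in the poset $\fixed(f)$.

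Finally, a poset in which every subset has a join has, by Proposition \ref{thm:sup-inf}, every meet as well, so $\fixed(f)$ is a complete lattice. The one genuinely delicate point is that $\fixed(f)$ is \emph{not} in general closed under the meets and joins of $\lattice{L}$ --- the join of $S$ inside $\fixed(f)$ is typically strictly above $\bigjoin S$ --- which is exactly why the middle step relocates the problem to the sub-complete-lattice $\upset{\bigjoin S}$ and reuses the least-fixed-point construction there; everything else, including the observation that principal upsets of complete lattices are complete, is routine.
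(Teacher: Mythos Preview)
Your argument is correct and is the other standard route to the Knaster--Tarski theorem. The paper's proof is organized differently: rather than building the least fixed point and then localizing to principal upsets $\upset{\bigjoin S}$, it first shows that $\prefix(f)$ is $f$-invariant and closed under meets (hence complete, via Proposition~\ref{thm:sup-inf} and Lemma~\ref{lem:inside-meets}), and then dualizes by passing to $\op{f_{\vert \prefix(f)}}$ on $\op{\prefix(f)}$, invoking Lemma~\ref{lem:opposite-fixed} to identify $\fixed(f) = \suffix(f_{\vert \prefix(f)})$ and repeating the prefix argument once more. Your version has the virtue of being constructive in a stronger sense: it exhibits the join of $S$ in $\fixed(f)$ explicitly as the least fixed point of $f$ on $\upset{\bigjoin S}$, which is useful when one actually wants to compute sections. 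The paper's version, by contrast, isolates the completeness of $\prefix(f)$ as an intermediate step, which it later repackages as Corollary~\ref{cor:prefix-suffix}; your proof does not pass through that fact, so if you wanted the corollary you would need a separate (short) argument.
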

\noindent We will need a few lemmas.
\begin{lemma}\label{lem:inside-meets}
	Suppose $f: \lattice{L} \to \lattice{L}$ is order-preserving and $\lattice{L}$ is complete. Suppose $\{x_i\}_{i \in I}$ is a subset of $\lattice{L}$ indexed by an arbitrary set $I$. Then,
	\begin{align*}
		f \left( \bigmeet_{i \in I} x_i \right) &\preceq& \bigmeet_{i \in I} f(x_i).
	\end{align*}
	Similarly,
	\begin{align*}
		f \left( \bigjoin_{i \in I} x_i \right) &\succeq& \bigjoin_{i \in I} f(x_i).
	\end{align*}
\end{lemma}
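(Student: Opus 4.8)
The plan is to prove both inequalities directly from the universal property of meets and joins together with monotonicity of $f$; no completeness-heavy machinery (and certainly no appeal to the Tarski theorem itself) is needed, since all of $\bigmeet_{i\in I}x_i$, $\bigjoin_{i\in I}x_i$ exist by completeness of $\lattice{L}$ and we only use that they are the greatest lower bound and least upper bound respectively.

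First I would treat the meet inequality. Fix an arbitrary index $j \in I$. Since $\bigmeet_{i \in I} x_i$ is a lower bound of $\{x_i\}_{i\in I}$, we have $\bigmeet_{i \in I} x_i \preceq x_j$. Applying $f$ and using that $f$ is order-preserving gives $f\!\left( \bigmeet_{i \in I} x_i \right) \preceq f(x_j)$. As $j$ was arbitrary, $f\!\left( \bigmeet_{i \in I} x_i \right)$ is a lower bound of the family $\{f(x_i)\}_{i\in I}$, and since $\bigmeet_{i \in I} f(x_i)$ is by definition the greatest such lower bound, we conclude $f\!\left( \bigmeet_{i \in I} x_i \right) \preceq \bigmeet_{i \in I} f(x_i)$.

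The join inequality follows by the Duality Principle applied to $\op{\lattice{L}}$ (meets in $\lattice{L}$ become joins in $\op{\lattice{L}}$, and an order-preserving map stays order-preserving), but I would also spell it out symmetrically for clarity: for each $j \in I$, $x_j \preceq \bigjoin_{i \in I} x_i$, so $f(x_j) \preceq f\!\left( \bigjoin_{i \in I} x_i \right)$ by monotonicity, hence $f\!\left( \bigjoin_{i \in I} x_i \right)$ is an upper bound of $\{f(x_i)\}_{i\in I}$ and therefore dominates the least upper bound $\bigjoin_{i \in I} f(x_i)$.

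There is essentially no obstacle here; the only thing to be careful about is not accidentally assuming $f$ preserves meets or joins (it need not), so the argument must go strictly through "lower/upper bound $\Rightarrow$ comparison with the extremal bound" rather than through any equality. The lemma is a one-sided statement precisely because that is all monotonicity buys us.
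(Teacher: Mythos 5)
Your proof is correct and follows essentially the same route as the paper's: show $f\bigl(\bigmeet_i x_i\bigr)$ is a lower bound of $\{f(x_i)\}$ via monotonicity, compare against the greatest lower bound, and dispatch the join case by duality. (If anything, your phrasing of the final comparison is cleaner than the paper's, which momentarily inverts the direction in which the greatest lower bound relates to other lower bounds.)
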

\begin{proof}[Proof of Lemma \ref{lem:inside-meets}]
	$f \left( \bigmeet_{i \in I} x_i \right)$ is \emph{a} lower bound of each $f(x_i)$ because $f$ is order-preserving and $\bigmeet_i x_i \preceq x_i$. Moreover, $\bigmeet_{i \in I} f(x_i)$ is \emph{the} greatest lower bound of the subset $\{f(x_i) \}_{i \in I}$, implying $\bigmeet_{i \in I} f(x_i)$ precedes all other lower bounds. Hence, 
	\begin{align*}
		f \left( \bigmeet_{i \in I} x_i \right) &\preceq& \bigmeet_{i \in I} f(x_i).
	\end{align*}
	The second statement follows from the duality principle.
\end{proof}
\begin{lemma}\label{lem:opposite-fixed}
	Suppose $f: \poset{P} \to \poset{P}$ is order-preserving. Then, $\op{f}: \op{\poset{P}} \to \op{\poset{P}}$ defined to be the same map on sets, but under the opposite order is order-preserving as well. If $f$ is join-preserving, then $\op{f}$ is meet preserving and vice versa. Furthermore, $\prefix(\op{f}) = \suffix(f)$ and $\suffix(\op{f}) = \prefix(f)$.
\end{lemma}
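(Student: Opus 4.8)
The plan is to verify the three assertions by directly unwinding the definition of the opposite order $\sqsubseteq$, under which $x \sqsubseteq y$ means exactly $y \preceq x$ in $\poset{P}$. Every claim is an instance of the Duality Principle, so the only thing requiring care --- and it is pure bookkeeping rather than a genuine obstacle --- is keeping track of which direction each inequality points.

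For monotonicity: if $x \sqsubseteq y$ in $\op{\poset{P}}$, that is $y \preceq x$ in $\poset{P}$, then order-preservation of $f$ gives $f(y) \preceq f(x)$, which is exactly $\op{f}(x) \sqsubseteq \op{f}(y)$; hence $\op{f}$ is order-preserving. For the (co)continuity claim I would first record the elementary fact that the join of a subset $S \subseteq \poset{P}$ taken in $\op{\poset{P}}$ agrees with the meet of $S$ taken in $\poset{P}$, and dually --- this is immediate from the descriptions of $\bigjoin$ and $\bigmeet$ as least upper and greatest lower bounds once the order is reversed. Granting this, suppose $f$ is join-preserving and fix $S \subseteq \op{\poset{P}}$. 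Then the meet of $\op{f}(S)$ computed in $\op{\poset{P}}$ equals the join of $f(S)$ computed in $\poset{P}$, which equals $f$ applied to the join of $S$ in $\poset{P}$, which in turn equals $\op{f}$ applied to the meet of $S$ in $\op{\poset{P}}$; so $\op{f}$ is meet-preserving. The symmetric computation --- or, more slickly, applying the preceding to $\op{f}$ together with the observation $\op{(\op{f})} = f$ --- shows that $f$ meet-preserving implies $\op{f}$ join-preserving.

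Finally, for the fixed-point sets: by definition $\prefix(\op{f})$ consists of the points $x$ with $\op{f}(x) \sqsubseteq x$, and this inequality unwinds to $x \preceq f(x)$, which is precisely the condition defining $\suffix(f)$; hence $\prefix(\op{f}) = \suffix(f)$. Running the same argument with $\op{f}$ in place of $f$, using $\op{(\op{f})} = f$, yields $\suffix(\op{f}) = \prefix(f)$, and intersecting the two identities gives $\fixed(\op{f}) = \fixed(f)$ as a bonus. As advertised, no step presents a real difficulty; the payoff of the lemma is that it will later let us deduce the ``join half'' of Tarski-type statements from the ``meet half'' at no extra cost.
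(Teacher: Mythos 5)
Your proof is correct and follows essentially the same route as the paper's: unwind the definition of the opposite order, observe that joins in $\op{\poset{P}}$ are meets in $\poset{P}$, and translate the inflationary/deflationary conditions accordingly. If anything you are more complete than the paper, whose proof is a two-line sketch that leaves the $\prefix$/$\suffix$ identities implicit; your explicit unwinding of $\op{f}(x) \sqsubseteq x$ into $f(x) \succeq x$ is exactly the missing bookkeeping.
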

\begin{proof}[Proof of Lemma \ref{lem:opposite-fixed}]
	$x \preceq y$ in $\poset{P}$ if and only if $x \succeq y$ in $\op{\poset{P}}$. $f$ is order-preserving implies $f(x) \preceq f(y)$ in $\poset{P}$, or $f(x) \succeq f(y)$ in $\op{\poset{P}}$. Hence, $\op{f}$ is order-preserving. Meets in $\op{\poset{P}}$ are equivalent to joins in $\poset{P}$.
\end{proof}

\begin{proof}[Proof of Theorem \ref{thm:tfpt}]
	The classical argument \cite{tarski1955lattice} that $\fixed(f)$ is complete has three parts. First, we show $\prefix{f} \subseteq \lattice{L}$ is an invariant subset under $f$. Suppose $x \in \prefix(f)$. Then, 
	\begin{align*}
		f(f(x)) \preceq f(x) \preceq x
	\end{align*}
	which implies $f(x) \in \prefix(f)$ by monotonicity of $f$ and transitivity. Next, we show that $\prefix(f)$ is complete. Let $\{x_i\}_{i \in I} \subseteq \prefix(f)$. Then, Lemma \ref{lem:inside-meets} implies that $\prefix(f)$ is $\bigmeet$-complete:
	\begin{align*}
		f \left( \bigmeet_{i \in I} f(x_i)	\right) \preceq \bigmeet_{i \in I} f(x_i)
	\end{align*}
	implies $\bigmeet_{i \in I} f(x_i) \in \prefix(f)$. By Proposition \ref{thm:sup-inf}, $\prefix(f)$ is also $\bigjoin$-complete.
	Finally, define
	\begin{align*}
	\op{f_{\vert \prefix(f)}}: \op{\prefix(f)} \to \op{\prefix(f)}.
	\end{align*}
	Invariance and Lemma \ref{lem:opposite-fixed} imply $\op{f_{\vert \prefix(f)}}$ is well-defined and order-preserving on $\op{\prefix(f)}$ and that $\prefix(\op{f_{\vert \prefix(f)}}) = \suffix\left(f_{\vert \prefix(f)}\right)$. Now, the argument in the second part shows $\suffix\left(f_{\vert \prefix(f)}\right)$ is complete. Furthermore,
	\begin{align*}
		\fixed(f) &=& \suffix\left(f_{\vert \prefix(f)}\right)
	\end{align*}
	and we are done.
\end{proof}

\noindent The proof of Theorem \ref{thm:tfpt} implies the following corollaries.
\begin{corollary}\label{cor:prefix-suffix}
	The subsets $\prefix(f)$ and $\suffix(f)$ are complete.
\end{corollary}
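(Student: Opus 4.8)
The plan is to observe that both claims are already implicit in the proof of Theorem \ref{thm:tfpt}, one of them directly and the other by duality, so the corollary is really a matter of isolating and dualizing the relevant passage. First I would recall the ``second part'' of that proof: given $\{x_i\}_{i \in I} \subseteq \prefix(f)$, Lemma \ref{lem:inside-meets} together with monotonicity gives $f(\bigmeet_{i} x_i) \preceq \bigmeet_{i} f(x_i) \preceq \bigmeet_{i} x_i$, so $\bigmeet_{i} x_i \in \prefix(f)$; thus $\prefix(f)$ is closed under arbitrary meets taken in $\lattice{L}$. Applying Proposition \ref{thm:sup-inf} \emph{inside} $\prefix(f)$ — a subposet that has all meets has all joins, the join of a family being the meet of its upper bounds within the subposet — shows $\prefix(f)$ is a complete lattice. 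Nothing beyond citing this argument is needed for the first half of the corollary.

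For $\suffix(f)$ I would invoke Lemma \ref{lem:opposite-fixed}. The map $\op{f}\colon \op{\lattice{L}} \to \op{\lattice{L}}$ is again order-preserving, $\op{\lattice{L}}$ is again a complete lattice (completeness of a poset being a self-dual property: $\poset{P}$ is a complete lattice iff $\op{\poset{P}}$ is), and Lemma \ref{lem:opposite-fixed} identifies $\prefix(\op{f}) = \suffix(f)$. Hence the conclusion just established for prefixed points, applied to $\op{f}$ on $\op{\lattice{L}}$, yields that $\prefix(\op{f})$ is complete, and therefore so is $\suffix(f)$.

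I do not anticipate a genuine obstacle here. The only point that deserves care — and it is the same subtlety that appears in Theorem \ref{thm:closure-fixed} — is that ``$\prefix(f)$ is complete'' means complete as a poset in its own right: its meets coincide with those of $\lattice{L}$, but its joins must in general be recomputed within $\prefix(f)$ rather than inherited from $\lattice{L}$. This is handled cleanly by the application of Proposition \ref{thm:sup-inf} to the subposet $\prefix(f)$, and the dual statement for $\suffix(f)$ swaps the roles of meet and join accordingly.
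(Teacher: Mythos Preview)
Your argument is correct, but it takes a different route from the paper's. You unpack the proof of Theorem~\ref{thm:tfpt}: the second part of that proof already establishes that $\prefix(f)$ is closed under meets in $\lattice{L}$, and then Proposition~\ref{thm:sup-inf} promotes this to completeness of $\prefix(f)$ as a poset; duality via Lemma~\ref{lem:opposite-fixed} handles $\suffix(f)$. This is a perfectly sound way to proceed, and your remark about joins in $\prefix(f)$ not necessarily agreeing with joins in $\lattice{L}$ is the right caveat to flag.

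The paper instead treats Theorem~\ref{thm:tfpt} as a black box. It observes that $f \meet \id$ and $f \join \id$ are themselves order-preserving endomaps of the complete lattice $\lattice{L}$, and that $\fixed(f \meet \id) = \suffix(f)$ (since $f(x) \meet x = x$ iff $x \preceq f(x)$) and $\fixed(f \join \id) = \prefix(f)$. Tarski's theorem applied to these auxiliary maps immediately gives completeness of both sets. This is slicker in that it avoids re-entering the proof of Theorem~\ref{thm:tfpt} and avoids the explicit dualization step; on the other hand, your approach makes the mechanism more transparent and explains \emph{why} meets in $\prefix(f)$ are inherited from $\lattice{L}$ while joins need not be.
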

\begin{proof}
	Let $f \meet \id$ be the map defined $f(x) \meet x$ on $x \in \lattice{L}$. Then, $\fixed(f \meet \id) = \suffix(f)$ because $f(x) \succeq x$ if and only if $f(x) \meet x = x$. Similarly, $\fixed(f \join \id) = \prefix(f)$.
\end{proof}

In particular application domains, it is only necessary to show there is a least fixed point. Hence, the Tarski Fixed Point Theorem is sometimes presented as the following corollary.
\begin{corollary}
	Suppose $\lattice{L}$ is a complete lattice and $f: \lattice{L} \to \lattice{L}$ is order-preserving. Then, $f$ has a least fixed point $x^{\ast}$.
\end{corollary}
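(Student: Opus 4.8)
The quickest route is to observe that the statement is an immediate consequence of Theorem \ref{thm:tfpt}: since $\fixed(f)$ is a complete lattice, it is in particular nonempty and possesses a least element, namely $\bigmeet \fixed(f)$; this is the desired $x^{\ast}$. Since, however, only the \emph{existence} of a least fixed point is needed, I would also record the standard self-contained argument, which isolates exactly the portion of the proof of Theorem \ref{thm:tfpt} that is actually required here.

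The plan for the direct proof is to set $x^{\ast} = \bigmeet \prefix(f)$, the meet of all deflationary points. First I would note that $\prefix(f)$ is nonempty, since the greatest element $\bigjoin \lattice{L}$ trivially satisfies $f(\bigjoin \lattice{L}) \preceq \bigjoin \lattice{L}$; hence, by completeness, $x^{\ast}$ exists. Next, for any $x \in \prefix(f)$ we have $x^{\ast} \preceq x$, so monotonicity of $f$ gives $f(x^{\ast}) \preceq f(x) \preceq x$. Thus $f(x^{\ast})$ is a lower bound of $\prefix(f)$, and therefore $f(x^{\ast}) \preceq x^{\ast}$, i.e. $x^{\ast} \in \prefix(f)$. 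Applying $f$ once more and using monotonicity yields $f(f(x^{\ast})) \preceq f(x^{\ast})$, so $f(x^{\ast}) \in \prefix(f)$, whence $x^{\ast} \preceq f(x^{\ast})$. By the anti-symmetry axiom, $f(x^{\ast}) = x^{\ast}$.

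Minimality is then automatic: every fixed point $y$ of $f$ satisfies $f(y) = y \preceq y$, so $y \in \prefix(f)$, and therefore $x^{\ast} = \bigmeet \prefix(f) \preceq y$. I do not anticipate any real obstacle here; the one point to be careful about — already handled in the proof of Theorem \ref{thm:tfpt} through Lemma \ref{lem:inside-meets} — is that $f$ need not commute with meets, so $f(x^{\ast})$ need not equal $\bigmeet_{x \in \prefix(f)} f(x)$. The argument above sidesteps this by using only that $f(x^{\ast})$ is \emph{some} lower bound of $\prefix(f)$, which suffices to conclude $f(x^{\ast}) \preceq x^{\ast}$.
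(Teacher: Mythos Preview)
Your proposal is correct, and your first sentence is exactly the paper's proof: the paper simply writes $x_{\ast} = \bigmeet \fixed(f)$, relying on Theorem \ref{thm:tfpt}. The additional self-contained argument via $x^{\ast} = \bigmeet \prefix(f)$ is sound and standard (it is essentially the relevant fragment of the proof of Theorem \ref{thm:tfpt}), though the paper does not include it.
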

\begin{proof}
	$x_{\ast} = \bigmeet \fixed(f)$.
\end{proof}



\section{From Posets to Lattices}

Thus far, we have seen that complete lattices are posets with arbitrary meets and joints. We now present an alternative algebraic view of lattices in which we only require binary or nullary (in the case of bounded latttices) meets and joins. 

\begin{definition}[Algebraic Definition]\label{def:lattice-alg}
	A \define{lattice} $\lattice{L}$ is a set $L$ with a pair of binary operations
	\[\join, \meet: \lattice{L} \times \lattice{L} \to \lattice{L}\]
	called \define{join} and \define{meet}
	such that for every $x, y, z \in \lattice{L}$
	\begin{enumerate}
		\item $x \join (y \join z) = (x \join y) \join z$; $x \meet (y \meet z) = (x \meet y) \meet z$ (Associativity),
		\item $x \join y = y \join z$; $x \meet y = y \meet z$ (Commutativity),
		\item $x \join x = x$; $x \meet x = x$ (Idempotence),
		\item $x \meet (x \join y) = x$; $x \join (x \meet y) = x$ (Absorption).
	\end{enumerate}
	A \define{bounded lattice} is a lattice with elements $0, 1 \in \lattice{L}$ with the axiom
	\begin{enumerate}
		\item[5.] $x \meet 1 = x$; $x \join 0 = x$ 
	\end{enumerate}
	A \define{semilattice} $\lattice{S}$ is a set $S$ with a binary operation
	\begin{align*}
	 	\meet: \lattice{S} \times \lattice{S} \to \lattice{S}
	 \end{align*}
	 such that for all $x, y \in \lattice{S}$ 
	\begin{enumerate}
		\item $x \meet (y \meet z) = (x \meet y) \meet z$ (Associativity),
		\item $x \meet y = y \meet z$ (Commutativity),
		\item $x \meet x = x$ (Idempotence).
	\end{enumerate}
	A \define{bounded semilattice} is a semilattice with an element $1 \in \lattice{S}$ with the axiom
	\begin{enumerate}
			\item[4.] $x \meet 1 = x$.
	\end{enumerate}
\end{definition}

\begin{example}[Nesting Poset]
    Suppose $\Space{M}$ is a compact smooth $2$-manifold (e.g.~a closed and bounded surface in $\R^3$) diffeomorphic to the $2$-sphere $\Space{S}^2$. Suppose $f: \Space{M} \to \R$ is a Morse function \cite{matsumoto2002introduction} factoring through an embedding and a projection onto the z-axis
    \[i: \Space{M} \xhookrightarrow{i} \R^3 \xrightarrow{\pi} \R\] By a version of the Implicit Function Theorem \cite[Theorem 2.3]{matsumoto2002introduction}, if $t \in \R$ is a regular value, then $f^{-1}(t)$ is a compact manifold of dimension $1$. By the well-known classification of such manifolds, it follows there is a diffeomorphism
    \begin{align*}
        f^{-1}(t) \simeq \bigsqcup_{i=1}^m \Space{S}^1,
    \end{align*}
    and, furthermore, $f^{-1}$ is embedded in $\R^2$. Hence, $f^{-1}$ is a disjoint union of Jordan curve \cite{hatcher2002algebraic}, say, $\Space{Y} = \gamma_1, \gamma_2, \gamma_m\}$. The complement $\R^2 \setminus \Space{Y}$ consists of exactly $m+1$ connected components, with $1$ unbounded component and $m$ bounded components. It follows that the set of (path) connected components $\pi_0(\Space{M} \setminus \Space{Y})$ have the structure of a join-semilattice under the following order \cite{catanzaro2020moduli} equivalent to the circle containment order \cite{scheinerman1988circle}.
    
    We say two components in $\pi_0(\Space{M} \setminus \Space{Y})$ are adjacent if they they share a boundary. Then, for any pair of adjacent components $c, c' \in \pi_0(\Space{M} \setminus \Space{Y})$, then $\alpha_i \preceq \alpha_j$ if and only if the for the corresponding (unshared) boundaries $\gamma_i and \gamma_j$, we have an inclusion of interiors $\mathrm{int}(\gamma_i) \subseteq \mathrm{int}(\gamma_j)$ (existence, by the Jordan Curve Thoerem \cite{hatcher2002algebraic}) or if $\mathrm{int}(\gamma_j)$ is undbounded.
\end{example}

On the other hand, lattices are ordered sets with binary and empty meets and joins.
\begin{definition}[Combinatorial]\label{def:lattice-comb}
	A \define{lattice} is a partially ordered set $\lattice{L}$ such that for every $x, y \in \lattice{L}$
	\begin{align*}
	x \join y &=& \min \left\{ z~\vert~z \succeq x,~z \succeq y\right\}, \\
	x \meet y &=& \max \left\{ z~\vert~z \preceq x,~z \preceq y \right\}.
	\end{align*}
	A lattice is \define{bounded} if additionally there are elements $0, 1 \in \lattice{L}$ such that $0 \preceq x \preceq 1$ for all $x \in \lattice{L}$.
\end{definition}

The following proposition allows you to move freely between the order-theoretic and algebraic definitions of lattices.

\begin{proposition}\label{prop:alg-combin-equiv}
	The algebraic (Definition \ref{def:lattice-alg}) and combinatorial (Definition \ref{def:lattice-comb}) definitions of lattices are equivalent.
\end{proposition}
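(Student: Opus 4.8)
The plan is to set up mutually inverse translations between the two kinds of structure and check that each axiom package forces the other.

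\emph{Algebraic $\Rightarrow$ combinatorial.} Given $(\lattice{L},\join,\meet)$ satisfying associativity, commutativity, idempotence, and absorption, define $x \preceq y$ to mean $x \meet y = x$. The first step is the \emph{connecting lemma}: $x \meet y = x \iff x \join y = y$. Indeed, if $x\meet y = x$ then $x \join y = (x\meet y)\join y = y$ by absorption (using commutativity), and the converse is symmetric; so the relation is unambiguous whether read through $\meet$ or $\join$. Next I would verify $\preceq$ is a partial order: reflexivity is idempotence $x\meet x = x$; antisymmetry is $x = x\meet y = y\meet x = y$ by commutativity; transitivity is $x\meet z = (x\meet y)\meet z = x\meet(y\meet z) = x\meet y = x$ by associativity. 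Finally I would show $x\join y = \min\{z~\vert~z\succeq x,\ z\succeq y\}$: absorption gives $x\meet(x\join y) = x$, i.e. $x\preceq x\join y$, and likewise $y \preceq x\join y$, so it is an upper bound; if $z\succeq x$ and $z\succeq y$ then $z = x\join z$ and $z = y\join z$, hence $(x\join y)\join z = x\join(y\join z) = x\join z = z$, so $x\join y\preceq z$. The statement for $\meet$ is the order dual.

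\emph{Combinatorial $\Rightarrow$ algebraic.} Given a poset $\lattice{L}$ in which every pair $\{x,y\}$ has a least upper bound and greatest lower bound, set $x\join y := \bigjoin\{x,y\}$ and $x\meet y := \bigmeet\{x,y\}$. Commutativity and idempotence are immediate from $\{x,y\} = \{y,x\}$ and $\{x,x\} = \{x\}$. For associativity one checks that $(x\join y)\join z$ and $x\join(y\join z)$ are both the least upper bound of $\{x,y,z\}$, using that $w$ is an upper bound of $\{x\join y, z\}$ exactly when $w$ is an upper bound of each of $x,y,z$. For absorption $x\meet(x\join y) = x$: since $x\preceq x\join y$, $x$ is a lower bound of $\{x,x\join y\}$ and plainly the greatest one, so the meet is $x$; the law $x\join(x\meet y) = x$ is dual. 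I would then remark the two translations are inverse: the order recovered from algebraic data via $x\meet y = x$ satisfies $\bigjoin\{x,y\} = x\join y$ by the paragraph above, and for a poset the relation $x\meet y = x$ holds iff $x$ is the greatest lower bound of $\{x,y\}$ iff $x\preceq y$, recovering the original order.

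The main obstacle is the bookkeeping around the connecting lemma and making sure the order extracted from $\meet$ literally coincides with the one extracted from $\join$ --- this is where the absorption axioms do the real work and where a loose argument risks circularity. The bounded case is a small add-on: axiom~5 ($x\meet 1 = x$, $x\join 0 = x$) says precisely that $1$ and $0$ are top and bottom for the induced order, and conversely a greatest element $1$ and least element $0$ of a poset satisfy axiom~5 directly from the definitions of binary meet and join.
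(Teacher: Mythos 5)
Your proof is correct and complete. The paper does not supply its own argument here — it simply cites \cite[Theorem 3.17]{roman2008lattices} — and what you have written is exactly the standard translation that the cited reference carries out: the connecting lemma $x \meet y = x \iff x \join y = y$ (which, as you rightly emphasize, is where absorption earns its keep and which guarantees the order read off from $\meet$ agrees with the one read off from $\join$), the verification that $\preceq$ is a partial order and that $\join$/$\meet$ compute binary suprema/infima, the converse direction from a poset with binary bounds, and the check that the two translations are mutually inverse; the bounded case is handled correctly as well.
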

\begin{proof}
See \cite[Theorem 3.17]{roman2008lattices}.
\end{proof}
\noindent Consequently, a lattice (bounded) $\lattice{L}$ has the structure of both a (bounded) \define{meet-semilattice} $(\lattice{L}, \meet)$ and a (bounded) \define{join-semilattice} $(\lattice{L}, \join)$.

There are a few specializations of lattices that we take note. A \define{distributive lattice} is a lattice with the property that for all $x, y, z \in \lattice{L}$
\begin{align*}
	x \meet (y \join z) &=& (x \meet y) \join (x \meet z), \\
	x \join (y \meet z) &=& (x \join y) \meet (x \join z).
\end{align*}
A bounded lattice is \define{complemented} if for every $x \in \lattice{L}$ there exists an element $y \in \lattice{L}$ such that $x \join y = 1$ and $x \meet y = 0$. Powersets (Example \ref{eg:powersets}) are an example of a complemented distributive lattice, also know as a \define{Boolean lattice}.


A \define{sublattice} $\lattice{K} \subseteq \lattice{L}$ is a subset $K \subseteq \lattice{L}$ closed under the $\join$ and $\meet$ operation in $\lattice{L}$. A \define{quasisublattice} of $\lattice{L}$ is a lattice $\lattice{K}$ with an order embedding $i: \lattice{K} \hookrightarrow \lattice{L}$ . Clearly, a sublattice is a quasi-sublattice, but not the other way around. If $\lattice{K}$ and $\lattice{L}$ are complete, then we say $\lattice{K}$ is a \define{complete quasisublattice}.

\begin{example}[Lattice of Subgroups]
Suppose $(G, \star, e)$ is a group \cite{dummitfoote}. The (normal, arbitrary) subgroups of $G$ form a lattice $\subspaces{G}$ with
\begin{align*}
	M \join N &=& \{ m \cdot n~\vert~m \in M,~n \in N \}, \\
	M \meet N &=& M \cap N.
\end{align*}
The lattice $\subspaces{G}$ order embeds into the powerset $\powerset{G}$. However, $\subspaces{G}$ is not a sublattice because the join of $M$ and $N$ in $\subspaces{G}$ differs from the join in $\powerset{G}$, $M \cup N$.
\end{example}

A lattice \define{homomorphism} is a map $f: \lattice{K} \to \lattice{L}$ such that for all $x, y \in \lattice{K}$
\begin{align*}
	f(x \join y) &=& f(x) \join f(x), \\
	f(x \meet y) &=& f(x) \meet f(y).
\end{align*}
A \define{bounded lattice homomorphism} is a lattice homomorphism with
\begin{align*}
	f(0) &=& 0, \\
	f(1) &=& 1.
\end{align*}
A homomorphism $f$ is an isomoprhimsm if there is another homomorphism $g: \lattice{L} \to \lattice{K}$ with $g \circ f = \id$ and $f \circ g = \id$. The reader may check that an order isomoprhism between (bounded) lattices is an isomorphism of (bounded) lattices.

\begin{example}[Multidimensional Persistence]
	This example was adapted from recent work on persistence modules over lattice \cite{mccleary2022edit}. Given a lattice $\lattice{L}$ (e.g $\N \times \N$ representing a bifiltration), a \define{persistence module} is an assignment $F: \lattice{L} \to \cat{Vect}$ of lattice elements to finite-dimensional $\field$-vector spaces such that $x \preceq y$ in $\lattice{L}$ induces a linear transformation $F(x) \xrightarrow{F(x \preceq y)} F(y)$ (formally, a functor into the category of vector spaces). A ``change in basis'' of a persistence modules is a lattice homomorpism $f: \lattice{K} \to \lattice{L}$. This induces a lattice homomorphism $\hat{f}: \Int{\lattice{K}} \to \Int{\lattice{L}}$. Certain interesting functions describing births and deaths of $i$-dimensional topological features, e.g. in a metric space or filtration of simplicial complexes \cite{ghrist2008barcodes}, are written as maps $f_i: \Int{\lattice{L}} \to \Z$ which map an interval $[a,b] \mapsto \dim \left(Z_i F(a) \cap B_i F(b) \right)$ where $Z_i F(a)$ and $B_i F(a)$ are the cycles and boundaries of homology \cite{hatcher2002algebraic}.
\end{example}

\subsection{Lattices with multiplicaiton}

Quite often, lattices are equipped with an algebraic structure other than binary meets and joins. For instance, the real numbers $\R$ is an (unbounded) lattice that also has the structure of a field. We briefly recall a few definitions. 

\begin{definition}[Monoid]
	A \define{monoid} is a tuple $(\monoid{M}, \star, e)$ such that $\monoid{M}$ is an arbitrary set, $\cdot: \monoid{M} \times \monoid{M} \to \monoid{M}$ is an associative binary operation, and $e \in \monoid{M}$ is an element such that $m \star e = e \cdot m = m$ for all $m \in \monoid{M}$. A monoid is a \define{group} if for every $m \in \monoid{M}$, there exist a unique element $m^{-1} \in \monoid{M}$ such that $m^{-1} \star m = m \star m^{-1} = e$. A monoid is \define{commutative} if $m \star n = n \star m$ for all $m, n \in \monoid{M}$. A monoid is \define{idempotent} if $m \star m = m$ for all $m \in \monoid{M}$.
\end{definition}
\begin{examples}[Monoids]
\leavevmode
\begin{enumerate}
	\item The natural numbers with addition $(\N, +, 0)$ is a monoid, but not a group. On the other hand, $(\Z, + , 0)$ is a group. 
	\item Suppose $X$ is a set, with possible additional structure. Then, the set of maps $f: X \to X$ is a monoid denoted $\End(X)$. The identity $e$ is the idenity map $\id: X \to X$. Multiplication is merely composition of endomorphisms.
	\item A bounded semilattice is a monid with $(\lattice{S}, \meet, 1)$.
\end{enumerate}
\end{examples}

\begin{proposition}
	Suppose $(\monoid{M}, \star, e)$ is a commutative idempotent monoid. Then, $\monoid{M}$ is a semilattice.
\end{proposition}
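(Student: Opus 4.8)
The plan is to show that the monoid operation \emph{is} a meet, so that $\monoid{M}$ becomes a (bounded) semilattice in the sense of Definition \ref{def:lattice-alg} and, equivalently, a meet-semilattice under an explicitly constructed order. I would present both the algebraic and the order-theoretic verification, since the second also exhibits the partial order. For the algebraic statement, I would simply set $x \meet y := x \star y$: associativity, commutativity, and idempotence of $\meet$ are then exactly the monoid associativity together with the hypotheses that $\star$ is commutative and idempotent, while $x \meet e = x \star e = x$ exhibits $e$ as the unit $1$. Hence $(\monoid{M}, \star, e)$ satisfies the axioms of a bounded semilattice more or less by unwinding definitions.

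For the combinatorial picture I would define $x \preceq y$ to mean $x \star y = x$, and check in turn: reflexivity from $x \star x = x$; antisymmetry from commutativity, since $x \star y = x$ and $y \star x = y$ force $x = x \star y = y \star x = y$; and transitivity from associativity, since $x \star y = x$ and $y \star z = y$ give $x \star z = (x \star y) \star z = x \star (y \star z) = x \star y = x$. Thus $(\monoid{M}, \preceq)$ is a poset. It then remains to identify $x \star y$ with the greatest lower bound of $\{x,y\}$: commutativity and idempotence give $(x \star y) \star x = x \star y = (x \star y) \star y$, so $x \star y \preceq x$ and $x \star y \preceq y$; and if $z \preceq x$ and $z \preceq y$, then $z \star (x \star y) = (z \star x) \star y = z \star y = z$, so $z \preceq x \star y$. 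Hence every pair has a meet (namely $x \star y$), $e$ is the top element, and $\monoid{M}$ is a bounded meet-semilattice.

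The only point requiring care — really bookkeeping rather than a genuine obstacle — is the orientation of the order: one must take $x \preceq y \iff x \star y = x$, since the choice $x \star y = y$ yields the opposite order and presents $\star$ as a join; and one must read ``semilattice'' consistently with Definition \ref{def:lattice-alg} (and invoke Proposition \ref{prop:alg-combin-equiv} to pass between the algebraic and combinatorial descriptions if desired). There is no real difficulty here: the proposition is essentially a transcription of the monoid axioms, which is presumably why it is stated without further comment.
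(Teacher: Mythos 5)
Your proposal is correct and follows the same route the paper sketches: define $x \preceq y$ iff $x \star y = x$, verify this is a partial order with top element $e$, and identify $x \star y$ as the meet. The paper leaves these verifications as an exercise; you have simply filled them in (and added the equally valid direct check against the algebraic axioms of Definition \ref{def:lattice-alg}).
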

\begin{proof}
	Define a relation $m \preceq n$ if and only if $m \cdot n = m$. First check $(\monoid{M}, \preceq)$ is a partial order with top element $e$. Then, check $m \meet n$ is in fact the element $m \cdot n$.
\end{proof}

\noindent Residuated lattices are lattices equipped with a compatible monoid structure.

\begin{definition}[Residuated Lattice] \label{def:residuated}
	A \define{residuated lattice} $(\lattice{L}, \meet, \join, 0, 1, \star, e, [-,-])$ is a bounded lattice $(\lattice{M}, \meet, \join, 0, 1)$ such that $(\lattice{L}, \cdot, e)$ is a commutative monoid with a second (non-commutative) binary operation $[-,-]$ on $\lattice{M}$ with the property that
	\begin{align}
		x \star y \preceq z &\quad \text{if and only if} \quad& x \preceq [y, z] \label{eq:heyting}
	\end{align}
	for all $x, y, z \in \lattice{L}$. A residuated lattice is \define{complete} if $\lattice{L}$ is a complete lattice.
\end{definition}

\begin{example}[Fuzzy Logic]
	Consider three residuated lattice structures on the unit interval with the lattice structure $([0,1], \min, \max, 0,1)$, sometimes called \emph{t-norms}. Let $r, s, t \in [0,1]$.
	\leavevmode
	\begin{enumerate}
		\item Suppose $s \star t = s \cdot t$ is ordinary multiplication. Then, $[s, t]$ is calculated
		\[[s, t] = \begin{cases}
		t/s 		& s > t \\
		1 			& s \leq t
		\end{cases}\]
		from
		\begin{align*}
			r \cdot s \leq t &\quad \text{if and only if} \quad& r \leq [s, t].
		\end{align*}
		\item Define $s \star t = \min(s+t-1,0)$ and check
		\[ [s, t] = \begin{cases}
			1-s+t & s>t \\
			1 	  & s \leq t
		\end{cases}. \]
		\item Define $s \star t = \min(s,t)$ and check
			 \[[s, t] = \begin{cases}
			t & s>t \\
			1 	  & s \leq t
		\end{cases}. \]
	\end{enumerate}
	Given a residuated lattice structure on $[0,1]$ and a set $X$, a \define{fuzzy set} is a map $A: X \to [0,1]$. Viewing fuzzy sets as propositions, the logical operations of conjunction ($\meet$), disjunction ($\join$) and implication ($\to$) are defined pointwise according to the residuated lattice structure. For instance, if $A$ and $B$ are propositions, $[A, B]$ is a fuzzy set with $([A, B])(s) = A(s) \to B(s)$. 
\end{example}

\begin{example}[Heyting Algebra]
	Every bounded lattice is a monoid under the meet operation. However, this monoid structure does not \emph{ipso facto} satisfy \eqref{eq:heyting} which now reads
	\begin{align*}
		x \meet y \preceq z &\quad \text{if and only if} \quad& x \preceq [y, z]
	\end{align*}
	for all $x, y, z \in \lattice{H}$, a \define{Heyting algebra}.
	It follows that
	\begin{align*}
		[y, z] &=& \bigjoin \{ x \in \lattice{H}~\vert~x \meet y \preceq z \}
	\end{align*}
	leading to the definition of a \define{pseudocomplement} $\neg x = [x, 0]$. Examples of Heyting algebras are plentiful and run deep in foundation of mathematics \cite{maclane2012sheaves}. 
\end{example}

\begin{example}[Max-Plus Lattice]
	Max-plus algebra has found plentiful applications in locomotion \cite{lopes2014modeling}, control \cite{maragos2009morphological}, scheduling \cite{cassandras2008introduction}, and machine learning \cite{maragos2021tropical}. All of max-plus algebra is based on the idea of ``doing linear algebra'' over a residuated lattice, instead of a field. Consider the extended reals $\Rext = \R \cup \{-\infty, \infty\}$ which forms a complete lattice under $\leq$ with the additional requirement that $-\infty \leq t \leq \infty$ for all $t \in \R$. Then, $(\Rext, \meet, \join, -\infty, \infty, +, 0, [-,-])$ has the structure of a residuated lattice with $[s,t] = \max(t-s,0)$. A \define{max-plus} or \define{tropical} vector space, is then the product lattice $\Rext^d$. In some setting, this product structure is called a \define{weighted lattice} \cite{maragos2017dynamical} because you can define ``scalar multiplication'' component-wise.
\end{example}

\section{Additional Examples}

Now that most of the background material on lattices is established, we dive into some more examples.

\begin{example}[Subsets] \label{eg:powersets}
	Powersets are fundamental examples of lattices to the degree that the meet and join symbols $\meet$/$\join$ are derivates of the symbols $\cap$/$\cup$ which normally represent intersection and union. Given a set $S$, the \define{powerset} $\powerset{S}$ is a lattice under intersection and union.
	By definition, a union of set of subsets sets is the smallest set containing every subset. Similarly, an intersection of a set of subsets is the largest set contained in each subset. If we let $\bool = \{0 M 1\}$, then powersets are equivalent to the lattice of maps $S \to \bool$, with the inherited lattice structure
	\begin{align*}
	A \meet B (x) &=& \mathtt{AND}\left( A(x), B(x) \right), \\
	A \join B (x) &=& \mathtt{OR} \left( A(x), B(x) \right),
	\end{align*}
	$A, B \in \bool^S$. Similarly, a \define{multiset} is a map $S \to \N$, often denoted $\N \left[S\right]$. The lattice structure on multisets is the following
	\begin{align*}
		A \meet B (x) &=& \min\left( A(x),B(x), \right) \\
		A \join B (x) &=& \max \left(A(x), B(x), \right)
	\end{align*}
	$A, B \in \N \left[ S \right]$.	

	An interesting use-case of powerset lattices is modeling preferences. If $X$ is a set of alternatives, for instance, restaurants), then $\Rel{X} := \powerset{X \times X}$ is the set of all possible preference relations on $X$. Preference relations can be combined with the meet and join axioms, in this case, intersection and union. More interesting lattice structures are obtained by restricting $\Rel{X}$ to transitive-reflexive relations, as it is often argued that preferences are transitive.

	Tangential to the theory of preference relations, is the theory of choice functions. A \define{choice function} is a map $c: \powerset{X} \to \powerset{X}$ such that $c(A) \subseteq A$ for all $A \in \powerset{X}$. Intuitively, a choice function narrows down a subset of choices. It is established that the set of choice functions forms a lattice with pointwise union and intersection; interesting subclasses of choice functions form lattices as well \cite{monjardet2004lattices}.
	\end{example}

\begin{example}[Subspaces]
	Let $V$ be a $\field$-vector space. Let $\subspaces{V}$ consist of the set of all subspaces of $V$. If we write, $\Grass(V,d)$, for the \define{Grassmanian}, the $d$-dimensional subspaces, then we can write
	\begin{align*}
	\subspaces{V} &=& \coprod_{d\geq 0} \mathrm{Gr}(V,d).
	\end{align*}
	$\subspaces{V}$ is a lattice under the subspace ordering $K \leq W$ if $K$ is a subspace of $W$. The lattice operations are the following: $W \join W' = \{w + w'~\vert~w \in W,~w' \in W'\}$ often denoted $W + W'$; $W \meet W' = W \cap W'$ which is trivially a subspace of $V$. Principal Component Analysis (PCA) and its variants can be thought of as a projection of datapoints onto an optimal subspace $W$ of some feature space $V$ \cite{wright2022high}.

	Suppose $T: V \to V$ is a linear transformation. We say a subspace $W \subseteq V$ is \define{invariant} if $T(W) \subseteq W$. For instance, $1$-dimensional invariant subspaces are spanned by vectors $\mathbf{x}$ satisfying $T(\mathbf{x}) = \lambda \mathbf{x}$ for some scalar $\lambda$---these invariant subpaces are eigenspaces. It follows that the invariant subspaces of $T$ denoted $\Inv{T}$ form a lattice, a sublattice of $\subspaces{V}$. Conditions are known for invariant subspaces to be isomorphic \cite{brickman1967invariant}.
\end{example}

\begin{example}[Partitions] \label{eg:partitions}
	The set of partitions $\Part{S}$ of a set $S$ forms a lattice. Suppose $\pi$ and $\pi'$ are partitions. Then, the relation $\pi \preceq \pi'$ means that \emph{$\pi$ is finer than $\pi$'} or \emph{$\pi$ is a refinement of $\pi'$}. Put another way, for every $A \in \pi$, there is an $A' \in \pi'$ with $A \subseteq A'$. The join $\pi \join \pi'$ is the finest common coarsening of $\pi$ and $\pi'$ while $\pi \meet \pi'$ is the coarsest common refinement. Suppose $\pi = \{1  2 \vert 3 4 \vert 5\}$ and $\pi' = \{1 2 3 \vert 4 5\}$, then $\pi \meet \pi' = \{1 2 \vert 3 \vert 4 \vert5 \}$ and $\pi \join \pi' = \{1 2 3 4 5\}$. Partitions are ubiquitous in engineering, from teams of autonomous agents \cite{kantaros2016distributed} to hierarchial clustering algorithms \cite{carlsson2010characterization}. A \define{subpartition} is a partition of some subset $S'$ of the ground set $S$. By similar reasoning, subpartitions form a lattice denoted $\Subpart{S}$. Maps from the time domain to the lattice of subpartitions have been used to model dynamic graph with changing node sets \cite{kim2018formigrams}.
\end{example}

\begin{figure}[h]
	\centering
	\includegraphics[width = 0.5\textwidth]{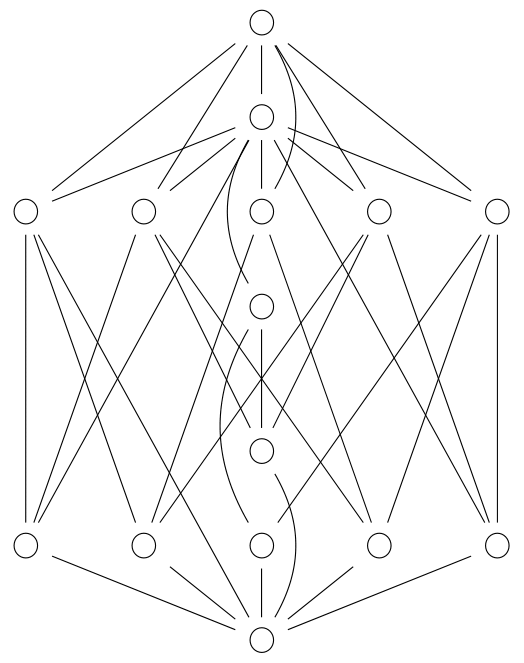}
	\caption{Lattice of partitions $\Part{X}$ on $X = \{1,2,3,4\}$}
	\label{fig:partition-lattice}
\end{figure}

\begin{example}[Information \cite{shannon1953lattice}] \label{eg:information-lattice}
	The following example is due to 
	Recall, a random variable is a measurable map $X: \Omega \to A$ to some set $A$ which, for practical reasons, we call the \define{alphabet}. Depending on the context, we can think of $A$ as a set of symbols or as a set of words. The \define{entropy}
	\[
		H(X) = -\sum_{w \in A} \prob{X^{-1}\{w\}} \log \left( \prob{X^{-1}\{w\}}\right).
	\]
	Entropy rougly represents the amount of information (bits) required to encode the outcome of $X$. A variation of entropy, relative entropy, denoted $H(Y \vert X)$, is the amount of information to describe the outcome of another random variable $Y$ given that the value $X$ generated is already known. Omitting some technical details \cite{li2011connection}, random variables over a fixed probability distribution form a poset with $X \succeq Y$ if and only if $H(Y \vert X) = 0$.


	It is know the set of $A$-valued random variables under the information order forms a lattice isomoprhic to $\Part{A}$ \cite{li2011connection}. The \define{information lattice} is a sublattice $\lattice{I}$ of $\Part{S}$ subordinate to the $\sigma$-algebra defining the underlying probability distribution. The join of random variables $X \join Y$ is the \define{joint information content} between then, exactly the joint random variable $(X,Y)$. The meet $X \meet Y$, on the other hand, is the \define{common information} between the two variables.
\end{example}

\subsection{Decomposition}

A recurring theme in algebra is to decompose structures into their smallest possible components. This leads to the following definition.
\begin{definition}
	Let $\lattice{L}$ be a lattice. An element $0 \neq z \in \lattice{L}$ is join-irreducible if $z = x \join y$ implies $x =z$ or $y = z$. Similarly, an element $1 \neq z \in \lattice{L}$ is meet-irreducible if $z = x \meet y$ implies $z = x$ or $z = y$. We denote the set of join-irreducible elements $\mathcal{J}(\lattice{L})$. 
\end{definition}

\noindent Clearly, $(\mathcal{J}(\lattice{L}), \preceq)$ is a poset with $x \preceq y$ if and only if $x \meet y = x$. Less clearly, downsets form distributive lattices.

\begin{theorem}[Birkhoff Duality] \label{thm:birkhoff}
	Suppose $\lattice{L}$ is a finite distributive lattice and $\poset{P}$ is a finite poset. Then,
	\[ \mathcal{D} \left( \mathcal{J}(\lattice{L}) ) \right) \cong \lattice{L}\]
	and
	\[ \mathcal{J} \left( \mathcal{D}(\poset{P}) ) \right) \cong \poset{P}. \]
\end{theorem}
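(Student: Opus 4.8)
The plan is to establish each of the two claimed order isomorphisms by writing down an explicit candidate map and checking it is a bijective order embedding; the conclusion then follows because (as noted above) an order isomorphism between lattices is an isomorphism of lattices, and for the second statement we only need an order isomorphism of posets. Throughout, finiteness is used to guarantee that every element of $\lattice{L}$ is the join of finitely many join-irreducibles, and distributivity enters at exactly one place.

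For the first isomorphism, define $\eta \colon \lattice{L} \to \downsets{\irred{\lattice{L}}}$ by $\eta(a) = \{\, z \in \irred{\lattice{L}} : z \preceq a \,\}$; this is a downset of $\irred{\lattice{L}}$ by transitivity, and $\eta$ is evidently monotone. The key lemma is that in a finite lattice $a = \bigjoin \eta(a)$ for every $a$, i.e., $a$ is the join of the join-irreducibles beneath it. I would prove this by a minimal-counterexample argument: a minimal bad $a$ exists by finiteness; it cannot be $0$ (then $\eta(a)=\emptyset$ and $\bigjoin\emptyset = 0$) nor join-irreducible (then $a\in\eta(a)$), so $a = b\join c$ with $b,c \prec a$, minimality gives $b=\bigjoin\eta(b)$ and $c=\bigjoin\eta(c)$, and $\eta(b),\eta(c)\subseteq\eta(a)$ forces $a \preceq \bigjoin\eta(a) \preceq a$, a contradiction. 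This lemma instantly upgrades $\eta$ to an order embedding, since $\eta(a)\subseteq\eta(b)$ gives $a=\bigjoin\eta(a)\preceq\bigjoin\eta(b)=b$.

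Surjectivity of $\eta$ is the crux of the theorem. Given a downset $D\subseteq\irred{\lattice{L}}$, put $a := \bigjoin D$; one inclusion $D\subseteq\eta(a)$ is immediate, and for the reverse take $z\in\irred{\lattice{L}}$ with $z\preceq\bigjoin D$ and write $z = z\meet\bigjoin D = \bigjoin_{d\in D}(z\meet d)$ using finite distributivity; join-irreducibility of $z$ yields $z = z\meet d$, hence $z\preceq d$, for some $d\in D$, and then $z\in D$ because $D$ is a downset. I expect this conversion of ``lies below a join'' into ``lies below some joinand'' to be the main obstacle: it is precisely the step that fails in a non-distributive lattice (e.g. the diamond $M_3$), so it is where the distributivity hypothesis is genuinely consumed.

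For the second isomorphism, I would identify the join-irreducibles of $\downsets{\poset{P}}$ (ordered by inclusion, join $=$ union) as exactly the principal downsets $\downset{x}$, $x\in\poset{P}$: $\downset{x}\neq\emptyset$, and if $\downset{x}=D_1\cup D_2$ then $x$ lies in one summand, which must then equal $\downset{x}$; conversely every nonempty downset is the finite union $D=\bigcup_{x\in D}\downset{x}$, so a join-irreducible downset equals some $\downset{x}$. Hence $x\mapsto\downset{x}$ is a bijection $\poset{P}\to\irred{\downsets{\poset{P}}}$, and it is an order embedding since $x\preceq y \iff \downset{x}\subseteq\downset{y}$ (the nontrivial direction using $x\in\downset{x}$), completing the proof. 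This half is essentially bookkeeping and should present no real difficulty.
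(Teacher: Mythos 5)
Your proof is correct: the paper itself gives no argument for Theorem \ref{thm:birkhoff}, deferring entirely to the cited reference, and what you have written is precisely the standard argument that reference supplies — the representation map $a \mapsto \{z \in \irred{\lattice{L}} : z \preceq a\}$ with the minimal-counterexample lemma that every element is the join of the join-irreducibles below it, distributivity consumed exactly once in the surjectivity step, and the identification of the join-irreducibles of $\downsets{\poset{P}}$ with principal downsets. No gaps; your use of $\bigjoin \emptyset = 0$ is consistent with the paper's convention that $0$ is excluded from $\irred{\lattice{L}}$.
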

\begin{proof}
 See \cite[Theorem 10.1]{roman2008lattices}.
\end{proof}

Theorem \ref{thm:birkhoff} allows us to move back and forth between these, albeit restrictive, classes of posets and lattices with ease. Because $\mathcal{D}(\poset{Q})$ is a lattice of subsets (i.e.~sublattice of the powerset $\powerset{\poset{Q}}$), Theorem \ref{thm:birkhoff} can also be interpreted as \textit{every finite distributive lattice can be represented as a lattice of subsets}. In general, not restricted to finite posets and distributed lattices, the story is more complicated as illustrated by the following example.

\begin{example}[Natural Numbers]
	Consider the natural numbers $\N$ with the order $m \preceq n$ if and only if $n \vert m$, i.e. there is an $a \in \N$ with $n = a \cdot m$. $(\N, \preceq)$ is a lattice with $m \meet n = \mathrm{lcm}(m,n)$ and $m \join n = \mathrm{gcd}(m,n)$. First, notice that $\N$ is distributive by the classic distributive properties of $\mathrm{gcd}$ and $\mathrm{lcm}$. Hence, $(\N, \preceq)$ is an (infinite) distributive lattice. The join-irreducibles are the prime number $\mathcal{J}(\N)$ with the partial order inherited by $(\N, \preceq)$. It follows that $\mathcal{D}(\mathcal{J}(\N)) = \powerset{\primes}$. However, by the Fundamental Theorem of Arithmetic, $(\N, \preceq) \cong \multiset{\primes}$---every natural number has a unique prime factorization. Hence, $\mathcal{D} \left( \mathcal{J}(\lattice{L}) ) \right) \ncong \lattice{L}$. Because $\N$ is not finite, we could not apply Birkhoff Duality.
\end{example}
\chapter{Galois Connections}\label{ch:galois}

Suppose $\poset{P}$ and $\poset{Q}$ are posets. A Galois connection is a pair of compatible order-preserving maps, one from $\poset{P}$ to $\poset{Q}$, another $\poset{Q}$ to $\poset{P}$, such that arbitrary joins and arbitrary meets are preserved in $\poset{P}$ and $\poset{Q}$ respectively. On one hand, Galois connections are share a resemblance to linear adjoints (Definition \ref{def:galois-connection}), but on the other hand, being order approximations of an inverse (Proposition \ref{prop:left-right-left}), share a resemblance to Moore-Penrose pseudoinverses.

\section{Defining Properties}

\begin{definition}\label{def:galois-connection}
	Let $(\poset{P}, \preceq)$ and $(\poset{Q}, \preceq)$ be posets. A \define{Galois connection} is a pair $(\ladj{f}, \radj{f})$ of order-preserving maps $\ladj{f}: \poset{P} \to \poset{Q}$ and $\radj{f}: \poset{Q} \to \poset{P}$ such that for all $x \in \poset{P}$, $y \in \poset{Q}$,
	\begin{align}
		\ladj{f}(x) \preceq y &\Leftrightarrow& x \preceq \radj{f}(y).
	\end{align}
\end{definition}

\noindent We call $\ladj{f}$ a \define{lower-adjoint} and $\radj{f}$ an \define{upper-adjoint}.

\begin{remark}
The definition of a Galois connection is reminiscent of the definition of the adjoint of a linear transformation. If $T: V \to W$ is a map of inner-product spaces $(V, \langle~,~\rangle_{V})$, $(W, \langle~,~\rangle_{W})$, then $T^{\ast}: W \to V$ is the map defined by the following property
\begin{align}
\langle T(\mathbf{x}), \mathbf{y} \rangle_{W} &=& \langle \mathbf{x}, T^{\ast}(\mathbf{y}) \rangle_{V}.
\end{align}
\end{remark}

Given fixed complete lattices $\lattice{K}$ and $\lattice{L}$, we denote the set of Galois connections between these lattices $\cat{Ltc}(\lattice{K}, \lattice{L})$ ( notation adapted from \cite{grandis2013homological}). The set $\cat{Ltc}(\lattice{K},\lattice{L})$ is a poset with the order $f \leq g$ if and only if $\ladj{f}(x) \preceq \ladj{g}(x)$ and $\radj{g}(y) \preceq \radj{f}(y)$ for all $x, y \in \lattice{L}$ endowing it with the structure of a complete lattice. Galois connections have an equivalent definition highlighting the property that they are the ``best approximation'' of an order inverse.

\begin{proposition} \label{prop:monad-comonad}
	Suppose $\poset{P}$ and $\poset{Q}$ are posets. The following are equivalent.
	\leavevmode
	\begin{enumerate}
		\item $(\ladj{f}, \radj{f}): \poset{P} \to \poset{Q}$ is a Galois connection.
		\item $\ladj{f}: \poset{P} \to \poset{Q}$ and $\radj{f}: \poset{Q} \to \poset{P}$ are order-preserving, and $\radj{f} \ladj{f} \succeq \id_{\poset{P}}$ and $\ladj{f}\radj{f} \preceq \id_{\poset{Q}}$.
	\end{enumerate}
\end{proposition}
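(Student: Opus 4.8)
The plan is to prove Proposition~\ref{prop:monad-comonad} by establishing the two implications $(1) \Rightarrow (2)$ and $(2) \Rightarrow (1)$ directly from the defining biconditional $\ladj{f}(x) \preceq y \Leftrightarrow x \preceq \radj{f}(y)$ and the order-preservation hypotheses.

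For $(1) \Rightarrow (2)$: assume $(\ladj{f}, \radj{f})$ is a Galois connection. Order-preservation of $\ladj{f}$ and $\radj{f}$ is already part of Definition~\ref{def:galois-connection}, so nothing is needed there. To get $\radj{f}\ladj{f} \succeq \id_{\poset{P}}$, fix $x \in \poset{P}$ and apply the biconditional with $y = \ladj{f}(x)$: since $\ladj{f}(x) \preceq \ladj{f}(x)$ holds by reflexivity, the biconditional yields $x \preceq \radj{f}(\ladj{f}(x))$, which is exactly the claim. Dually, to get $\ladj{f}\radj{f} \preceq \id_{\poset{Q}}$, fix $y \in \poset{Q}$ and apply the biconditional with $x = \radj{f}(y)$: since $\radj{f}(y) \preceq \radj{f}(y)$ by reflexivity, the biconditional gives $\ladj{f}(\radj{f}(y)) \preceq y$.

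For $(2) \Rightarrow (1)$: assume $\ladj{f}, \radj{f}$ are order-preserving with $\radj{f}\ladj{f} \succeq \id_{\poset{P}}$ and $\ladj{f}\radj{f} \preceq \id_{\poset{Q}}$. I must show $\ladj{f}(x) \preceq y \Leftrightarrow x \preceq \radj{f}(y)$. For the forward direction, suppose $\ladj{f}(x) \preceq y$; applying the order-preserving map $\radj{f}$ gives $\radj{f}(\ladj{f}(x)) \preceq \radj{f}(y)$, and combining with $x \preceq \radj{f}(\ladj{f}(x))$ and transitivity yields $x \preceq \radj{f}(y)$. For the converse, suppose $x \preceq \radj{f}(y)$; applying $\ladj{f}$ gives $\ladj{f}(x) \preceq \ladj{f}(\radj{f}(y))$, and combining with $\ladj{f}(\radj{f}(y)) \preceq y$ and transitivity gives $\ladj{f}(x) \preceq y$.

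There is no real obstacle here: the proof is a short symmetric chase through reflexivity, transitivity, and monotonicity, and the two conditions $\radj{f}\ladj{f} \succeq \id$, $\ladj{f}\radj{f} \preceq \id$ are the familiar unit/counit inequalities of an adjunction between posets. The only thing worth stating carefully is that each direction uses exactly one of the two composite inequalities together with monotonicity of exactly one of the two maps, so the argument is manifestly self-dual under the Duality Principle; I would note this to avoid writing essentially the same computation twice.
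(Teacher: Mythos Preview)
Your proof is correct and follows essentially the same approach as the paper: substitute $y=\ladj{f}(x)$ and $x=\radj{f}(y)$ into the biconditional to obtain the unit/counit inequalities, and conversely apply the appropriate monotone map together with the relevant composite inequality and transitivity to recover the biconditional. The only minor addition you make is explicitly noting that order-preservation is already part of Definition~\ref{def:galois-connection}, which the paper leaves implicit.
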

\begin{proof}
	Suppose $\ladj{f}(x) \preceq y$ if and only if $x \preceq \radj{f}(y)$ for all $x \in \poset{P},~y \in \poset{Q}$. Substitute $\ladj{f}(x)$ for $y$. Then, $x \preceq \radj{f}\ladj{f}(x)$. Alternatively, if we substitute $\radj{f}(y)$ for $x$, then we obtain $\ladj{f}\radj{f}(y) \preceq y$. 
	Conversely, suppose $\radj{f} \circ \ladj{f} \succeq \id_{\poset{P}}$. If $\ladj{f}(x) \preceq y$, then, because $\radj{f}$ is order-preserving, we have $\radj{f}(\ladj{f}(x)) \preceq \radj{f}(y)$. Because $\radj{f}(\ladj{f}(x)) \succeq x$, it follows by transitivity $\radj{f}(y) \succeq x$. Similarly, if $x \preceq \radj{f}(y)$, then $\ladj{f}(\radj{y}) \preceq y$ implies $\ladj{f}(x) \preceq y$.
\end{proof}



\noindent Composing lower and upper adjoints stabilizes.

\begin{proposition}\label{prop:left-right-left}
	Suppose $(\ladj{f}, \radj{f})$ is a Galois connection as above. Then,
	\begin{align*}
		\ladj{f}\radj{f}\ladj{f} &=& \ladj{f}; \\
		\radj{f} \ladj{f} \radj{f} &=& \radj{f}.
	\end{align*}
\end{proposition}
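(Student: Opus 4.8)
The plan is to deduce both identities directly from the monad/comonad characterization of Proposition~\ref{prop:monad-comonad}, namely $\radj{f}\ladj{f} \succeq \id_{\poset{P}}$ and $\ladj{f}\radj{f} \preceq \id_{\poset{Q}}$, together with order-preservation of $\ladj{f}$ and $\radj{f}$ and the anti-symmetry axiom of the poset order. The strategy is a standard ``sandwiching'' argument: produce the inequality $\preceq$ from one of the two unit/counit inequalities and the inequality $\succeq$ from the other, then invoke anti-symmetry to upgrade to equality.

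Concretely, for the first identity I would argue as follows. Fix $x \in \poset{P}$. Since $\radj{f}\ladj{f} \succeq \id_{\poset{P}}$, we have $x \preceq \radj{f}\ladj{f}(x)$, and applying the order-preserving map $\ladj{f}$ gives $\ladj{f}(x) \preceq \ladj{f}\radj{f}\ladj{f}(x)$. Conversely, since $\ladj{f}\radj{f} \preceq \id_{\poset{Q}}$, instantiating this at the element $\ladj{f}(x) \in \poset{Q}$ yields $\ladj{f}\radj{f}\ladj{f}(x) \preceq \ladj{f}(x)$. Anti-symmetry then forces $\ladj{f}\radj{f}\ladj{f}(x) = \ladj{f}(x)$ for every $x$, i.e.\ $\ladj{f}\radj{f}\ladj{f} = \ladj{f}$.

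For the second identity, $\radj{f}\ladj{f}\radj{f} = \radj{f}$, I would either run the mirror-image argument (apply the order-preserving $\radj{f}$ to $\ladj{f}\radj{f} \preceq \id_{\poset{Q}}$ to get one inequality, and instantiate $\radj{f}\ladj{f} \succeq \id_{\poset{P}}$ at $\radj{f}(y)$ to get the other, then apply anti-symmetry), or simply note that $(\radj{f}, \ladj{f})$ is a Galois connection between $\op{\poset{Q}}$ and $\op{\poset{P}}$ and quote the first identity via the Duality Principle.

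I do not anticipate a genuine obstacle here; the only point requiring care is bookkeeping the direction of each inequality and the order in which $\ladj{f}$ and $\radj{f}$ are composed, since the two units point in opposite directions. Everything else is a direct appeal to Proposition~\ref{prop:monad-comonad} and anti-symmetry.
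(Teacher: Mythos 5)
Your argument is correct and is essentially the paper's own proof: both directions come from the unit/counit inequalities of Proposition~\ref{prop:monad-comonad} (one by applying the order-preserving $\ladj{f}$ to $x \preceq \radj{f}\ladj{f}(x)$, the other by instantiating $\ladj{f}\radj{f} \preceq \id$ at $\ladj{f}(x)$), with the second identity obtained by duality. No gaps.
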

\begin{proof}
	For the first, it suffices to show that $\ladj{f}\radj{f}\ladj{f}(x) \succeq \ladj{f}(x)$ for all $x \in \poset{P}$ because $\ladj{f}\radj{f}\ladj{f} \preceq \ladj{f}$ follows from Proposition \ref{prop:monad-comonad}. Also, by Propositon \ref{prop:monad-comonad}, $\radj{f} \ladj{f}(x) \succeq x$. $\ladj{f}$ is order-preserving implies $\ladj{f} \radj{f} \ladj{f} (x) \succeq \ladj{f}(x)$. The second is dual to the first.
\end{proof}

\noindent One consequence of the Proposition \ref{prop:left-right-left} is that $\radj{f}\ladj{f}$ and $\ladj{f} \radj{f}$ are idempotent.

\begin{corollary}\label{cor:idempotence}
	Suppose $(\ladj{f}, \radj{f})$ is a Galois connection as above. Then,
	\begin{align*}
		\radj{f} \ladj{f} \radj{f} \ladj{f} &=& \radj{f} \ladj{f} \\
		\ladj{f} \radj{f} \ladj{f} \radj{f} &=& \ladj{f} \radj{f}
	\end{align*}
\end{corollary}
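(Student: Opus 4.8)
The plan is to derive this directly from Proposition \ref{prop:left-right-left}, which is the substantive lemma here; the corollary is merely a matter of inserting parentheses in the right places.

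First I would tackle the identity $\radj{f}\ladj{f}\radj{f}\ladj{f} = \radj{f}\ladj{f}$. I would group the first three factors and apply the second identity of Proposition \ref{prop:left-right-left}, namely $\radj{f}\ladj{f}\radj{f} = \radj{f}$, to obtain
\begin{align*}
\radj{f}\ladj{f}\radj{f}\ladj{f} = \left(\radj{f}\ladj{f}\radj{f}\right)\ladj{f} = \radj{f}\ladj{f}.
\end{align*}
Dually, for the identity $\ladj{f}\radj{f}\ladj{f}\radj{f} = \ladj{f}\radj{f}$, I would group the first three factors and apply the first identity of Proposition \ref{prop:left-right-left}, namely $\ladj{f}\radj{f}\ladj{f} = \ladj{f}$, giving
\begin{align*}
\ladj{f}\radj{f}\ladj{f}\radj{f} = \left(\ladj{f}\radj{f}\ladj{f}\right)\radj{f} = \ladj{f}\radj{f}.
\end{align*}

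There is no real obstacle: the only subtlety worth noting is that associativity of composition is what licenses the regrouping, and that one could equally well group the \emph{last} three factors and invoke the other identity of Proposition \ref{prop:left-right-left}, so the corollary is symmetric in how it is proved. Since $\radj{f}\ladj{f}$ and $\ladj{f}\radj{f}$ are precisely the monad and comonad operators discussed after Proposition \ref{prop:left-right-left}, this corollary records that both are idempotent, which dovetails with the fact (used implicitly via Proposition \ref{prop:monad-comonad}) that $\radj{f}\ladj{f}$ is inflationary and order-preserving while $\ladj{f}\radj{f}$ is deflationary and order-preserving — so in the language of Section on (co)closure operators, $\radj{f}\ladj{f}$ is a closure operator and $\ladj{f}\radj{f}$ a coclosure operator.
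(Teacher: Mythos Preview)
Your proof is correct and is exactly the argument the paper has in mind: the corollary is stated without proof in the paper, as an immediate consequence of Proposition~\ref{prop:left-right-left}, and your derivation by grouping the first three factors is precisely that consequence. Your closing remark about closure/coclosure operators also anticipates the very next corollary in the paper.
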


\noindent Another consequence of Proposition \ref{prop:left-right-left} characterizes Galois connections in which $\ladj{f}$ is an order-embedding.

\begin{proposition} \label{prop:mono-epi}
	Suppose $(\ladj{f}, \radj{f}): \poset{P} \to \poset{Q}$ is a Galois connection and $\ladj{f}$ is an order-embedding. Then, $\radj{f}\ladj{f} = \id_{\poset{P}}$.
\end{proposition}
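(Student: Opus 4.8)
The plan is to combine the ``triangle identity'' $\ladj{f}\radj{f}\ladj{f} = \ladj{f}$ from Proposition \ref{prop:left-right-left} with the hypothesis that $\ladj{f}$ is an order-embedding. First I would record the easy inequality: by Proposition \ref{prop:monad-comonad}, every Galois connection satisfies $\radj{f}\ladj{f} \succeq \id_{\poset{P}}$, so $\radj{f}\ladj{f}(x) \succeq x$ for all $x \in \poset{P}$. This is one of the two containments we need.

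For the reverse containment, apply Proposition \ref{prop:left-right-left}: for every $x \in \poset{P}$ we have $\ladj{f}\bigl(\radj{f}\ladj{f}(x)\bigr) = \ladj{f}(x)$. Reading this equality as the inequality $\ladj{f}\bigl(\radj{f}\ladj{f}(x)\bigr) \preceq \ladj{f}(x)$ and using that $\ladj{f}$ is an order-embedding (so $\ladj{f}(a) \preceq \ladj{f}(b) \Rightarrow a \preceq b$) yields $\radj{f}\ladj{f}(x) \preceq x$. Combining the two inequalities via anti-symmetry of $\preceq$ on $\poset{P}$ gives $\radj{f}\ladj{f}(x) = x$ for all $x$, i.e.\ $\radj{f}\ladj{f} = \id_{\poset{P}}$.

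There is essentially no obstacle here; the statement is a one-line consequence of Proposition \ref{prop:left-right-left}. The only point requiring a little care is invoking the order-embedding hypothesis in the correct direction: it is the ``reflects the order'' property that we use to pass from $\ladj{f}(\radj{f}\ladj{f}(x)) \preceq \ladj{f}(x)$ to $\radj{f}\ladj{f}(x) \preceq x$ (injectivity of $\ladj{f}$ alone would already suffice once the equality $\ladj{f}(\radj{f}\ladj{f}(x)) = \ladj{f}(x)$ is in hand). One could also phrase the whole argument symmetrically in the opposite poset to obtain the dual statement, that $\ladj{f}\radj{f} = \id_{\poset{Q}}$ whenever $\radj{f}$ is an order-embedding, but that is not needed here.
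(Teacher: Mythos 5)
Your proof is correct, and it takes a more direct route than the paper's. The paper first establishes Lemma \ref{lem:mono-epi} (that $\ladj{f}$ is an order-embedding if and only if $\radj{f}\ladj{f}$ is), and then deduces $\radj{f}\ladj{f}(x) \preceq x$ by applying the order-embedding property of the \emph{composite} $\radj{f}\ladj{f}$ to the idempotence relation $\radj{f}\ladj{f}\bigl(\radj{f}\ladj{f}(x)\bigr) \preceq \radj{f}\ladj{f}(x)$ from Corollary \ref{cor:idempotence}. You instead apply Proposition \ref{prop:left-right-left} in the form $\ladj{f}\bigl(\radj{f}\ladj{f}(x)\bigr) = \ladj{f}(x)$ and use the order-reflecting property of $\ladj{f}$ itself to conclude $\radj{f}\ladj{f}(x) \preceq x$, pairing this with the unit inequality $\radj{f}\ladj{f} \succeq \id_{\poset{P}}$ and anti-symmetry. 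Both arguments ultimately rest on the same triangle identity, but yours bypasses the auxiliary lemma and the idempotence corollary entirely, which makes it shorter and more self-contained; the paper's detour has the side benefit of recording the (independently useful) fact that $\radj{f}\ladj{f}$ inherits the order-embedding property from $\ladj{f}$. Your closing remark is also right: since an order-embedding is in particular injective, the equality $\ladj{f}\bigl(\radj{f}\ladj{f}(x)\bigr) = \ladj{f}(x)$ already forces $\radj{f}\ladj{f}(x) = x$ without invoking the unit inequality at all.
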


\noindent We need a lemma.
\begin{lemma}\label{lem:mono-epi}
	Suppose $(\ladj{f}, \radj{f})$ is a Galois connection. Then, $\ladj{f}$ is an order-embedding if and only if $\radj{f} \ladj{f}$ is an order-embedding.
\end{lemma}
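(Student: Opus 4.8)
The plan is to prove both implications essentially by hand, using only that both adjoints are order-preserving, together with the triangle identity $\ladj{f}\radj{f}\ladj{f} = \ladj{f}$ of Proposition \ref{prop:left-right-left} (and, for the phrasing below, the unit inequality $\radj{f}\ladj{f} \succeq \id_{\poset{P}}$ of Proposition \ref{prop:monad-comonad}). The one preliminary observation I would record is that every order-embedding is injective: if $g$ is an order-embedding and $g(x) = g(y)$, then $g(x) \preceq g(y)$ and $g(y) \preceq g(x)$ force $x \preceq y$ and $y \preceq x$, hence $x = y$ by antisymmetry.

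For the forward implication, assume $\ladj{f}$ is an order-embedding, hence injective. For any $x \in \poset{P}$ the triangle identity gives $\ladj{f}\bigl(\radj{f}\ladj{f}(x)\bigr) = \ladj{f}(x)$, so injectivity of $\ladj{f}$ yields $\radj{f}\ladj{f}(x) = x$. Thus $\radj{f}\ladj{f} = \id_{\poset{P}}$, which is trivially an order-embedding; as a bonus this is exactly the statement of Proposition \ref{prop:mono-epi}.

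For the reverse implication, assume $\radj{f}\ladj{f}$ is an order-embedding. Since $\ladj{f}$ is order-preserving, only the reflecting half of the order-embedding condition needs checking: suppose $\ladj{f}(x) \preceq \ladj{f}(y)$. Applying the order-preserving map $\radj{f}$ gives $\radj{f}\ladj{f}(x) \preceq \radj{f}\ladj{f}(y)$, and since $\radj{f}\ladj{f}$ is an order-embedding this forces $x \preceq y$. Hence $\ladj{f}$ is an order-embedding.

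I do not expect a genuine obstacle here; the whole argument is a few lines. The only point requiring care is conceptual rather than technical: one must use the \emph{reflecting} direction of ``order-embedding'' (the implication $\ladj{f}(x)\preceq\ladj{f}(y)\Rightarrow x\preceq y$), since the monotone direction is automatic, and one must surface the injectivity of order-embeddings explicitly so that the triangle identity can be leveraged in the forward direction.
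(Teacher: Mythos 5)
Your proof is correct, and it differs from the paper's in two ways worth noting. For the forward direction, the paper stays entirely at the level of order inequalities: it uses $\ladj{f}\radj{f}\ladj{f}=\ladj{f}$ (Proposition \ref{prop:left-right-left}) together with the order-embedding hypothesis to get $x \preceq y$ iff $\ladj{f}\radj{f}\ladj{f}(x) \preceq \ladj{f}\radj{f}\ladj{f}(y)$, and then idempotence of $\radj{f}\ladj{f}$ to convert this into the order-embedding property for $\radj{f}\ladj{f}$. You instead extract injectivity of $\ladj{f}$ from the order-embedding hypothesis and apply it to the triangle identity to conclude $\radj{f}\ladj{f} = \id_{\poset{P}}$ outright — a strictly stronger conclusion than "order-embedding," and one that, as you observe, immediately yields Proposition \ref{prop:mono-epi} without the detour the paper takes through this lemma. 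The second difference is that the paper's proof as written only treats the forward implication; you supply the reverse implication as well (monotonicity of $\radj{f}$ plus the reflecting half of the order-embedding property of $\radj{f}\ladj{f}$), which is needed for the stated "if and only if." Both of your arguments are sound, and your version is the more complete of the two.
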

\begin{proof}[Proof of Lemma \ref{lem:mono-epi}]
	Suppose $\ladj{f}$ is an order-embedding. Then, by Proposition \ref{prop:left-right-left}, $\ladj{f} \radj{f} \ladj{f}(x) \preceq \ladj{f} \radj{f} \ladj{f}(y)$ if and only if $x \preceq y$. This implies $x \preceq y$ if and only if
	\begin{align*}
		\radj{f}\ladj{f}(x) &\preceq& \radj{f}\ladj{f}\radj{f}\ladj{f}(y) \\
				&\preceq& \radj{f} \ladj{f}(y)
	\end{align*}
	by Corollary \ref{cor:idempotence}.
\end{proof}

\begin{proof}[Proof of Proposition \ref{prop:mono-epi}]
	By Proposition \ref{prop:monad-comonad}, it suffices to show $\radj{f} \ladj{f}(x) \preceq x$ for all $x \in \poset{P}$. By Lemma \ref{lem:mono-epi}, $\radj{f} \ladj{f}(x) \preceq x$ if and only if
	\begin{align*}
		\radj{f} \ladj{f}\left(\radj{f} \ladj{f}(x) \right) &\preceq& \radj{f} \ladj{f}(x).
	\end{align*}
	By Corollary \ref{cor:idempotence}, $\radj{f} \ladj{f}(x) \preceq x$ holds for all $x \in \poset{X}$.
\end{proof}

\noindent A Galois connection gives rise to a closure and coclosure operator.

\begin{corollary}
	$\radj{f}\ladj{f}: \poset{P} \to \poset{P}$ is a closure operator. $\ladj{f}\radj{f}: \poset{Q} \to \poset{Q}$ is a co-closure operator.
\end{corollary}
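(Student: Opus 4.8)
The plan is simply to verify, in turn, the three defining properties of a closure operator for the composite $\radj{f}\ladj{f}$, each of which has already been recorded in the preceding results; the co-closure case follows by the Duality Principle (or by the symmetric argument).

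First I would note that $\radj{f}\ladj{f}$ is order-preserving, being a composite of the order-preserving maps $\ladj{f}$ and $\radj{f}$. Second, inflation: Proposition \ref{prop:monad-comonad} gives $\radj{f}\ladj{f} \succeq \id_{\poset{P}}$, so $\radj{f}\ladj{f}(x) \succeq x$ for all $x \in \poset{P}$. Third, idempotence: Corollary \ref{cor:idempotence} states exactly that $\radj{f}\ladj{f}\radj{f}\ladj{f} = \radj{f}\ladj{f}$, i.e. $(\radj{f}\ladj{f})^2 = \radj{f}\ladj{f}$. Having order-preservation, inflation, and idempotence, $\radj{f}\ladj{f}$ meets the definition of a closure operator on $\poset{P}$.

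For $\ladj{f}\radj{f}$, I would argue dually: it is order-preserving as a composite of order-preserving maps; Proposition \ref{prop:monad-comonad} gives $\ladj{f}\radj{f} \preceq \id_{\poset{Q}}$, hence $\ladj{f}\radj{f}(y) \preceq y$ for all $y \in \poset{Q}$ (deflation); and Corollary \ref{cor:idempotence} gives $\ladj{f}\radj{f}\ladj{f}\radj{f} = \ladj{f}\radj{f}$ (idempotence). Thus $\ladj{f}\radj{f}$ is a co-closure operator on $\poset{Q}$.

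There is no real obstacle here: every ingredient has been proved already, so the only "work" is assembling Proposition \ref{prop:monad-comonad} and Corollary \ref{cor:idempotence} and observing that composites of monotone maps are monotone. If one wished to avoid invoking Corollary \ref{cor:idempotence}, idempotence could instead be derived on the spot from Proposition \ref{prop:left-right-left} ($\ladj{f}\radj{f}\ladj{f} = \ladj{f}$ and $\radj{f}\ladj{f}\radj{f} = \radj{f}$) by inserting the identity into $(\radj{f}\ladj{f})(\radj{f}\ladj{f})$ and collapsing the middle $\ladj{f}\radj{f}\ladj{f}$ to $\ladj{f}$; but quoting the corollary is cleaner.
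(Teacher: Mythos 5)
Your proposal is correct and takes essentially the same route as the paper: the paper's entire proof consists of citing Proposition \ref{prop:monad-comonad} and Corollary \ref{cor:idempotence}, which is precisely the assembly you spell out (monotonicity of composites, inflation/deflation from the unit/counit inequalities, idempotence from the corollary). Nothing further is needed.
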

\begin{proof}
	Proposition \ref{prop:monad-comonad} and Corollary \ref{cor:idempotence}
\end{proof}

\noindent We denote $\fixed(\radj{f}\ladj{f})$ by $\mathbf{Cl}(\lattice{K})$ and $\fixed(\ladj{f}\radj{f})$ by $\mathbf{Cl}(\lattice{L})$.

\begin{definition}\label{def:galois-lattice}
Given a connection $(\ladj{f}, \radj{f}): \lattice{K} \to \lattice{L}$, define a poset
\begin{align*}
	\Gal(f) &=& \{ (x, y)~\vert~ \ladj{f}(x) = y,~x = \radj{f}(y) \} \subseteq \lattice{K} \times \lattice{L}
\end{align*}
called the \define{Galois lattice} of $f$.
\end{definition}

\noindent $\Gal(f)$ is in fact a complete lattice.

\begin{proposition}
Suppose $f: \lattice{K} \to \lattice{L}$ is a Galois connection between complete lattices $\lattice{K}$ and $\lattice{L}$. Then, the following are isomorphic complete lattices: $\mathbf{Cl}(\lattice{K})$, $\mathbf{Cl}(\lattice{L})$, $\Gal(f)$.
\end{proposition}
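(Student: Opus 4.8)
The plan is to realize all three posets as images of suitable restrictions of $\ladj{f}$ and $\radj{f}$, and then invoke the (Co)closure Theorem for completeness. First I would record that $\mathbf{Cl}(\lattice{K}) = \fixed(\radj{f}\ladj{f})$ and $\mathbf{Cl}(\lattice{L}) = \fixed(\ladj{f}\radj{f})$ are complete lattices: it was already observed that $\radj{f}\ladj{f}$ is a closure operator on the complete lattice $\lattice{K}$ and $\ladj{f}\radj{f}$ a coclosure operator on the complete lattice $\lattice{L}$, so Theorem \ref{thm:closure-fixed} applies verbatim and even furnishes the explicit meets and joins.

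Next I would show that $\ladj{f}$ and $\radj{f}$ restrict to mutually inverse order isomorphisms between $\mathbf{Cl}(\lattice{K})$ and $\mathbf{Cl}(\lattice{L})$. If $\radj{f}\ladj{f}(x) = x$, then $\ladj{f}\radj{f}\bigl(\ladj{f}(x)\bigr) = \ladj{f}(x)$ by the stabilization identity $\ladj{f}\radj{f}\ladj{f} = \ladj{f}$ of Proposition \ref{prop:left-right-left}, so $\ladj{f}(x) \in \mathbf{Cl}(\lattice{L})$; dually, $\radj{f}$ carries $\mathbf{Cl}(\lattice{L})$ into $\mathbf{Cl}(\lattice{K})$. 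On these subsets the composites $\radj{f}\ladj{f}$ and $\ladj{f}\radj{f}$ are the identity by the very definition of the fixed-point sets, so the two restrictions are inverse bijections; each is order-preserving as a restriction of an order-preserving map, hence an order isomorphism, and an order isomorphism between complete lattices automatically preserves arbitrary joins and meets.

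Finally I would define $\Phi\colon \mathbf{Cl}(\lattice{K}) \to \Gal(f)$ by $x \mapsto (x, \ladj{f}(x))$, with $\Gal(f)$ carrying the order induced from the product $\lattice{K} \times \lattice{L}$. It lands in $\Gal(f)$ since $x = \radj{f}\ladj{f}(x)$; it is injective because it is the identity on the first coordinate; and it is surjective because any $(x,y) \in \Gal(f)$ satisfies $x = \radj{f}(y) = \radj{f}\ladj{f}(x)$, whence $x \in \mathbf{Cl}(\lattice{K})$ and $\Phi(x) = (x,y)$. Both $\Phi$ and $\Phi^{-1}$ are order-preserving, since the first coordinate already detects the order on $\Gal(f)$ and $\ladj{f}$ is monotone, so $\Phi$ is an order isomorphism; transporting the complete-lattice structure of $\mathbf{Cl}(\lattice{K})$ along $\Phi$ then shows $\Gal(f)$ is a complete lattice isomorphic to $\mathbf{Cl}(\lattice{K})$ and, by the previous paragraph, to $\mathbf{Cl}(\lattice{L})$.

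The one step that deserves genuine care, and which I would flag as the crux, is the well-definedness of the restrictions in the second paragraph — i.e.\ that $\ladj{f}$ sends closed elements of $\lattice{K}$ to closed elements of $\lattice{L}$ and $\radj{f}$ does the reverse. This is exactly where the stabilization identities $\ladj{f}\radj{f}\ladj{f} = \ladj{f}$ and $\radj{f}\ladj{f}\radj{f} = \radj{f}$ (Proposition \ref{prop:left-right-left}) are indispensable; the remainder is routine bookkeeping with monotonicity and the definitions of the fixed-point sets.
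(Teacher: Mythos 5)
Your proof is correct and follows exactly the route the paper takes: completeness of the fixed-point sets via Theorem~\ref{thm:closure-fixed}, and the isomorphisms via the stabilization identities of Proposition~\ref{prop:left-right-left}. The paper's own proof is just a two-line citation of those results; your write-up supplies the details it omits, including the correct observation that $\Gal(f)$'s product order is detected by the first coordinate.
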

\begin{proof}
	For the lattice structure, Theorem \ref{thm:closure-fixed}. The isomorphism follows from Proposition \ref{prop:left-right-left}.
\end{proof}

If $f = (\ladj{f}, \radj{f})$ is a Galois connection, $\ladj{f}$ is join-preserving and $\radj{j}$ is meet-preserving. Moreover, if $\ladj{f}$ is a join-preserving map, $\radj{f}$ is uniquely defined. Conversely, if $\radj{f}$ is a join-preserving map, then $\ladj{f}$ is uniquely defined.

\begin{theorem}[Adjoint Functor Theorem\footnote{This result is a special case of the Adjoint Functor Theorem due to Freyd \cite{freyd1964abelian}.}] \label{thm:adjoint-functor-theorem}

Suppose $\poset{P}$ and $\poset{Q}$ are posets and  $(\ladj{f}, \radj{f}): \poset{P} \to \poset{Q}$ is a Galois connection. Then, $\ladj{f}$ preserves joins and $\radj{f}$ preserves preserves meets (whenever they exist). Conversely, if $f: \poset{P} \to \poset{Q}$ is a join-preserving, there there exists a map $f^{\ast}: \poset{Q} \to \poset{P}$ given by
\begin{align}
	f^{\ast}(y) &=& \bigjoin \{x~\vert~f(x) \preceq y \} \label{eq:left-to-right}
\end{align}
such that $(f, f^{\ast})$ is a Galois connection.
\end{theorem}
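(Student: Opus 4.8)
The plan is to prove the two directions separately, using the Galois adjunction law $\ladj{f}(x)\preceq y \Leftrightarrow x\preceq\radj{f}(y)$ as essentially the only tool: every step will be a transposition of an inequality across this bi-implication, together with bookkeeping about which joins and meets are being asserted to exist.

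For the forward direction, suppose $(\ladj{f},\radj{f})$ is a Galois connection and $S\subseteq\poset{P}$ is a subset for which $\bigjoin S$ exists. First I would note that $\ladj{f}(\bigjoin S)$ dominates each $\ladj{f}(x)$ for $x\in S$ since $\ladj{f}$ is order-preserving (the easy half of Lemma \ref{lem:inside-meets}, which holds in any poset wherever the relevant join exists), so $\ladj{f}(\bigjoin S)$ is \emph{an} upper bound of $\ladj{f}(S)$. For the essential half, take an arbitrary upper bound $y$ of $\ladj{f}(S)$: then $\ladj{f}(x)\preceq y$ for all $x\in S$, hence $x\preceq\radj{f}(y)$ for all $x\in S$ by the adjunction, so $\radj{f}(y)$ is an upper bound of $S$, whence $\bigjoin S\preceq\radj{f}(y)$, and transposing once more gives $\ladj{f}(\bigjoin S)\preceq y$. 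Thus $\ladj{f}(\bigjoin S)$ is below every upper bound of $\ladj{f}(S)$; being itself such an upper bound, it is the least one, which simultaneously shows $\bigjoin\ladj{f}(S)$ exists and equals $\ladj{f}(\bigjoin S)$. That $\radj{f}$ preserves meets (whenever they exist) I would obtain either by the mirror-image argument (for $T\subseteq\poset{Q}$ with $\bigmeet T$ existing, chase lower bounds of $\radj{f}(T)$ across the adjunction in the other direction) or, more cheaply, by invoking the Duality Principle: $(\radj{f},\ladj{f})\colon\op{\poset{Q}}\to\op{\poset{P}}$ is again a Galois connection and meets in $\poset{Q}$ are joins in $\op{\poset{Q}}$, so the first half applies verbatim.

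For the converse, assume $\poset{P}$ is a complete lattice (which is what makes the displayed formula meaningful) and that $f$ preserves joins; set $f^{\ast}(y)=\bigjoin\{x\in\poset{P}\mid f(x)\preceq y\}$. I would verify three things. First, $f^{\ast}$ is order-preserving: $y\preceq y'$ enlarges the defining set and hence its join. Second, the ``only if'' half of the adjunction: if $f(x)\preceq y$ then $x$ belongs to the set defining $f^{\ast}(y)$, so $x\preceq f^{\ast}(y)$. Third, the ``if'' half: if $x\preceq f^{\ast}(y)$, then monotonicity of $f$ (which follows from join-preservation via $x\preceq y\Leftrightarrow x\join y=y$) together with join-preservation gives $f(x)\preceq f\!\left(f^{\ast}(y)\right)=\bigjoin\{f(x')\mid f(x')\preceq y\}$, and since $y$ is by construction an upper bound of that family, $f(x)\preceq y$. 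These three facts are exactly the statement that $(f,f^{\ast})$ is a Galois connection.

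The main obstacle is not any hard inequality but the existence bookkeeping: in the forward direction the argument must be phrased so that the least-upper-bound computation itself \emph{witnesses} that $\bigjoin\ladj{f}(S)$ exists rather than presupposing it, and in the converse one must be explicit that the formula for $f^{\ast}$ requires $\poset{P}$ to be complete (or at least to contain the joins $\bigjoin\{x\mid f(x)\preceq y\}$). Once that is pinned down, each remaining step is a single transposition across the adjunction.
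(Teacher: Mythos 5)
Your proposal is correct and follows essentially the same route as the paper: transpose inequalities across the adjunction law for the forward direction, and verify the adjunction for the explicit formula $f^{\ast}(y)=\bigjoin\{x\mid f(x)\preceq y\}$ in the converse (the paper does this via the unit/counit inequalities of Proposition \ref{prop:monad-comonad} rather than the bi-implication directly, but the computations are the same). Your forward direction is in fact slightly more careful than the paper's, which substitutes $y=\bigjoin_i f(x_i)$ and thereby tacitly presupposes that this join exists, whereas your least-upper-bound argument witnesses its existence.
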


\noindent It is sometimes convenient to rewrite the formula for $f^{\ast}$ as
\begin{align*}
	f^{\ast}(y) &=& \bigjoin f^{-1}(\downset{y})
\end{align*}
\begin{proof}[Proof of Theorem \ref{thm:adjoint-functor-theorem}]
In the first part, suppose $\ladj{f}$ is  lower adjoint. We show $\ladj{f}$ preserves joins. Suppose $\{x_i\}_i \subseteq \poset{P}$ is an arbitrary subset of $\poset{P}$. Then,
\begin{align*}
	\ladj{f}\left( \bigjoin_i x_i \right) \preceq y \quad &\Leftrightarrow& 	\bigjoin_i x_i \preceq \radj{f}(y)
\end{align*}
{}Set \[y = \bigjoin_i f(x_i).\]
Then, 
\begin{align*}
	\ladj{f}\left( \bigjoin_i x_i \right) \preceq \bigjoin_i f(x_i) \quad &\Leftrightarrow& 	\bigjoin_i x_i \preceq \radj{f}(\bigjoin_i f(x_i))
\end{align*}
By Proposition \ref{lem:inside-meets}, $\bigjoin_i x_i \preceq \radj{f}(\bigjoin_i f(x_i))$ holds tautologically. Hence, $\ladj{f}\left( \bigjoin_i x_i \right) \preceq \bigjoin_i f(x_i)$ holds. By Proposition \ref{lem:inside-meets} again, \[\ladj{f}(\bigjoin_i f(x_i)) \succeq \bigjoin_i f(x_i)\], leading to the equality desired.

In the second part, we show \eqref{eq:left-to-right} is the upper adjoint of $f$. By Proposition \ref{prop:monad-comonad}, it suffices to show $f^{\ast} f (x) \succeq x$ and $f f^{\ast} (y) \preceq y$.
We have
\[
  f^{\ast} f (x) = \bigvee f^{-1}\left(\downarrow f(x) \right).
\]
But $\{x\} \subseteq f^{-1} f (x) \subseteq  f^{-1}\left(\downarrow f(x) \right)$ implies
\[
  f^{\ast} f (x) = \bigvee {f}^{-1}\left(\downarrow f(x) \right) \succeq \bigvee {f}^{-1} f (x)  \succeq \bigvee \{ x\} = x.
\]
For the other inequality, we have
\[
  f f^{\ast} (y) 
  = f \left( \bigvee f^{-1}(\downarrow y) \right) \\
  = \bigvee f f^{-1} (\downarrow y)
\]
since, by the first part, $f$ preserves joins. Notice, $ f f^{-1} (\downarrow y) \subseteq~\downarrow y$. Hence,
\[
  f f^{\ast} (y) = \bigvee f f^{-1} (\downarrow y) \preceq \bigvee \downarrow y = y.
\]
\end{proof}

\noindent The dual result is the following.

\begin{corollary}
	If $f: \poset{Q} \to \poset{P}$ is meet-preserving, there exists a join-preserving map $f_{\ast}: \poset{P} \to \poset{Q}$ such that $(f_{\ast}, f)$ is a Galois connection, defined
	\begin{align*}
		f_{\ast}(x) &=& \bigmeet \{y~\vert~f(y) \succeq x \}.
	\end{align*}
\end{corollary}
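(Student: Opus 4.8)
The plan is to deduce this from Theorem~\ref{thm:adjoint-functor-theorem} via the Duality Principle, by applying that theorem to the opposite posets. Concretely, I would start from a meet-preserving map $f\colon\poset{Q}\to\poset{P}$ and observe that, regarded as a map $\op{f}\colon\op{\poset{Q}}\to\op{\poset{P}}$ (same underlying function, orders reversed on both sides), it is \emph{join}-preserving, since by the standing conventions a meet in $\poset{Q}$ is a join in $\op{\poset{Q}}$. Theorem~\ref{thm:adjoint-functor-theorem}, applied with $\poset{P}$ replaced by $\op{\poset{Q}}$ and $\poset{Q}$ replaced by $\op{\poset{P}}$, then produces an upper adjoint $g\colon\op{\poset{P}}\to\op{\poset{Q}}$ with $g(x)=\bigjoin\{\,y\mid\op{f}(y)\preceq x\,\}$, the join taken in $\op{\poset{Q}}$ and the order being that of $\op{\poset{P}}$, together with the fact that $(\op{f},g)$ is a Galois connection $\op{\poset{Q}}\to\op{\poset{P}}$.

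Next I would translate this statement back into the original orders. A join in $\op{\poset{Q}}$ is a meet in $\poset{Q}$, and $a\preceq b$ in $\op{\poset{P}}$ means $a\succeq b$ in $\poset{P}$, so $g(x)=\bigmeet\{\,y\mid f(y)\succeq x\,\}$, which is exactly the proposed formula for $f_{\ast}$. The adjunction equivalence $\op{f}(y)\preceq x\Leftrightarrow y\preceq g(x)$ in the opposite orders reads $f(y)\succeq x\Leftrightarrow y\succeq g(x)$, i.e. $f_{\ast}(x)\preceq y\Leftrightarrow x\preceq f(y)$, which says precisely that $(f_{\ast},f)$ is a Galois connection $\poset{P}\to\poset{Q}$ with $f_{\ast}$ the lower adjoint. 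Finally, $f_{\ast}$ preserves joins because it is a lower adjoint (first half of Theorem~\ref{thm:adjoint-functor-theorem}); equivalently, $g$ is an upper adjoint for $\op{f}$, hence preserves meets in $\op{\poset{P}}$, i.e. joins in $\poset{P}$.

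For readers who prefer a direct argument, the same facts can be verified by hand: $f_{\ast}$ is order-preserving because enlarging $x$ shrinks the set $\{y\mid f(y)\succeq x\}$ and can therefore only increase its meet; $f_{\ast}f(y)\preceq y$ because $y$ itself lies in $\{y'\mid f(y')\succeq f(y)\}$; and $ff_{\ast}(x)=\bigmeet\{f(y)\mid f(y)\succeq x\}\succeq x$ since $f$ preserves meets and $x$ is a lower bound of that set, so $(f_{\ast},f)$ is a Galois connection by Proposition~\ref{prop:monad-comonad}. The only point needing mild care is the existence of the meet defining $f_{\ast}(x)$, which is automatic when $\poset{Q}$ is a complete lattice and otherwise to be read, exactly as in Theorem~\ref{thm:adjoint-functor-theorem}, under the proviso ``whenever it exists.'' There is no genuine obstacle here: the work is entirely bookkeeping of the dualization, and the one thing to get right is the consistent interchange of $\preceq/\succeq$, $\bigjoin/\bigmeet$, and the lower/upper roles of the two adjoints.
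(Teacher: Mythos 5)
Your argument is correct and matches the paper's (implicit) proof: the paper states this corollary as ``the dual result'' of Theorem~\ref{thm:adjoint-functor-theorem} with no further argument, and your careful dualization through the opposite posets is exactly that reasoning made explicit. The supplementary direct verification via Proposition~\ref{prop:monad-comonad} is also correct and handles the existence caveat appropriately.
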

\section{Galois Connections from Relations}
Some authors define Galois connections to be adjoint order-reversing maps, while others---as do we---call adjoint pairs of order-preserving maps Galois connections. We define a \define{contravariant Galois connection} as a pair $(\galup{f}, \galdown{g})$ of order-reversing maps
\begin{align*}
\galup{f}: \lattice{K} \longleftrightarrow \lattice{L}: \galdown{f}
\end{align*}
such that
\begin{align*}
	\galdown{f} \galup{f}(x) \succeq x &:& \galup{f} \galdown{f}(y) \succeq y.
\end{align*}
or, equivalently,
\begin{align*}
	\galup{f}(x) \succeq y &\Leftrightarrow& \galdown{f}(y) \succeq x
\end{align*}
Whenever there is any ambiguity, we call an order-preserving Galois connection \define{covariant}.
There is an isomorphism
\begin{align*}
	\cat{Ltc} \left( \op{\lattice{K}}, \lattice{L} \right) &\cong& \op{\cat{Ltc}\left(\op{\lattice{K}, \lattice{L}}\right)},	
\end{align*}
and we may view $(\galup{f}, \galdown{f})$ as an ordinary (order-preserving) Galois connection in two ways.

\textit{Caveat lector}. Although there is a one-to-one correspondence between the \textit{sets} of join-reversing maps and meet-reversing maps, the correspondence is not an isomorphism of lattices.

There are canonical ways to construct Galois connections of each type (covariant, contravariant) from a binary relation $\rel{R} \subset X \times Y$. The covariant \define{Galois correspondence} is originally due to  Schmidt \cite{schmidt1953beitrage}, while the contravariant Galois correspondence is due to Ore \cite{ore1944galois}.

\noindent In Chapter \ref{ch:semantics}, we encounter a Galois connection constructed in the above manner.

\begin{theorem}\label{thm:cov-galois}
	Suppose $\rel{R} \in \Rel(X,Y)$ is a binary relation. Then, $\rel{R}$ induces a covariant Galois connection
	\[\begin{tikzcd}
	{\wp(X)} && {\wp(Y)}
	\arrow["\bot", draw=none,from=1-1, to=1-3]
	\arrow["\rel{R}_{\exists}", shift left=1, curve={height=-18pt}, from=1-1, to=1-3]
	\arrow["\rel{R}_{\forall}", shift left=1, curve={height=-18pt}, from=1-3, to=1-1]
\end{tikzcd}\]
with
\begin{align}
	\rel{R}_{\exists}(U) &=& \{y \in Y~\vert~(\exists x) \quad x \in U~\meet~x~\rel{R}~y \}, \\
	\rel{R}_{\forall}(V) &=& \{ x \in X~\vert~(\forall y) \quad x\rel{R}y \Rightarrow y \in V \},
\end{align}
$U \subseteq X, V \subseteq Y$.
\end{theorem}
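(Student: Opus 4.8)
The plan is to verify Definition \ref{def:galois-connection} directly: show that $\rel{R}_{\exists}$ and $\rel{R}_{\forall}$ are order-preserving, and that the adjunction biconditional $\rel{R}_{\exists}(U) \subseteq V \Leftrightarrow U \subseteq \rel{R}_{\forall}(V)$ holds for all $U \subseteq X$ and $V \subseteq Y$. Monotonicity is the easy part. If $U \subseteq U'$, then any $y$ witnessed by some $x \in U$ with $x \rel{R} y$ is \emph{a fortiori} witnessed by that same $x$ viewed in $U'$, so $\rel{R}_{\exists}(U) \subseteq \rel{R}_{\exists}(U')$. Dually, if $V \subseteq V'$, then the implication ``$x \rel{R} y \Rightarrow y \in V$'' entails ``$x \rel{R} y \Rightarrow y \in V'$'', so $\rel{R}_{\forall}(V) \subseteq \rel{R}_{\forall}(V')$; in particular $\rel{R}_{\forall}$ is genuinely order-preserving despite its universally-quantified flavor.

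The substance of the theorem is the adjunction identity, which amounts to a quantifier interchange. Unwinding the definitions, $\rel{R}_{\exists}(U) \subseteq V$ says precisely that for every $y \in Y$, if there exists $x \in U$ with $x \rel{R} y$ then $y \in V$; this is logically equivalent to saying that for every $x \in U$ and every $y \in Y$, $x \rel{R} y$ implies $y \in V$; and that in turn says exactly that every $x \in U$ lies in $\rel{R}_{\forall}(V)$, i.e.\ $U \subseteq \rel{R}_{\forall}(V)$. Chaining these equivalences establishes that $(\rel{R}_{\exists}, \rel{R}_{\forall})$ is a Galois connection.

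An alternative route, more in the spirit of the preceding section, is to invoke the Adjoint Functor Theorem (Theorem \ref{thm:adjoint-functor-theorem}): one checks that $\rel{R}_{\exists}$ preserves arbitrary unions, $\rel{R}_{\exists}\bigl(\bigcup_i U_i\bigr) = \bigcup_i \rel{R}_{\exists}(U_i)$, since an existential quantifier distributes over a union over the index set. The theorem then hands us the upper adjoint $V \mapsto \bigcup\{U \subseteq X \mid \rel{R}_{\exists}(U) \subseteq V\}$, and since joins in $\powerset{X}$ are unions, $x$ belongs to this set if and only if $\rel{R}_{\exists}(\{x\}) = \{y \mid x \rel{R} y\} \subseteq V$, which is exactly the condition defining $\rel{R}_{\forall}(V)$. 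I would present the direct argument and mention this second route as a remark. Honestly there is no serious obstacle here; the only point requiring care is the bookkeeping of quantifiers — the ``there exists $x \in U$'' sitting on the left of the inclusion turns into the ``for all $y$ with $x \rel{R} y$'' defining the upper adjoint — and confirming that the union formula from the Adjoint Functor Theorem collapses to the pointwise universal condition because arbitrary joins in a powerset are computed elementwise.
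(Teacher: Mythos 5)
Your proposal is correct, but your primary argument takes a different route from the paper's. You verify the defining biconditional of Definition \ref{def:galois-connection} directly — unwinding $\rel{R}_{\exists}(U) \subseteq V$ and $U \subseteq \rel{R}_{\forall}(V)$ to the same first-order statement by a quantifier interchange, after checking monotonicity of both maps. The paper instead runs exactly the argument you relegate to a remark: it first shows $\rel{R}_{\exists}$ preserves arbitrary unions (the existential quantifier distributing over the indexed union), then invokes the Adjoint Functor Theorem (Theorem \ref{thm:adjoint-functor-theorem}) to manufacture the upper adjoint as $V \mapsto \bigcup\{U \mid \rel{R}_{\exists}(U) \subseteq V\}$, and finally collapses that union to the pointwise formula defining $\rel{R}_{\forall}$. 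Your direct route is more elementary and self-contained — it needs nothing beyond the definition of a Galois connection and is arguably the cleaner proof of this particular statement. What the paper's route buys is uniformity: the pattern ``show the lower map is join-preserving, then read off the upper adjoint from the AFT formula'' is the same template used for Theorem \ref{thm:kernels} and elsewhere in the thesis, so the reader sees one mechanism reused rather than a bespoke quantifier argument. One small point of care in the AFT route, which you flag correctly: passing from ``$x$ lies in some $U$ with $\rel{R}_{\exists}(U) \subseteq V$'' to ``$\rel{R}_{\exists}(\{x\}) \subseteq V$'' uses monotonicity (or union-preservation) of $\rel{R}_{\exists}$, since joins in a powerset are computed elementwise. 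No gap in either direction.
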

\begin{proof}
    First, we show that  $\rel{R}_{\exists}$ is join-preserving so we can apply Proposition \ref{thm:adjoint-functor-theorem} in order to compute $\rel{R}_\forall$. Let $\{U_i\}_{i \in I}$ be a family of subsets of $X$. Then,
    \begin{align*}
        \rel{R}_\exists \left( \bigcup_{i \in I} U_i \right) &=& \\
        &=& \{ y \in Y~\vert~(\exists x)x \in \bigcup_{i \in I} U_i~\meet~x \rel{R} y \} \\
        &=& \{ y \in Y~\vert~(\exists x~\exists i \in I) \quad x \in U_i~\meet~x \rel{R} y \} \\
        &=& \bigcup_{i \in I} \{ y \in Y~\vert~(\exists x) \quad x \in U_i~\meet~x \rel{R} y \} \\
        &=& \bigcup_{i \in I} \rel{R}_\exists(U_i).                
    \end{align*}
    Now, we can apply the second part of Proposition \ref{thm:adjoint-functor-theorem} to compute $\rel{R}_{\forall}$. Let $V \subseteq Y$. Then,
    \begin{align*}
        \rel{R}_{\forall}(V) &=& \bigcup \{ U \subseteq X~\vert~ \rel{R}_{\exists}(U) \subseteq V\} \\
        &=& \bigcup \{ U \subseteq X~\vert~ \{y \in Y~\vert~(\exists x) \quad x \in U~\meet~x~\rel{R}~y \} \subseteq V\} \\
        &=& \{x \in X~\vert~(\forall y) \quad x~\rel{R}~y \Rightarrow y \in V  \}
    \end{align*}
    as desired.
\end{proof}

\noindent The following construction has lead to the field of formal concept analysis (FCA) \cite{wille1982restructuring}.

\begin{theorem}\label{thm:contra-galois}
	Suppose $\rel{R} \in \Rel(X,Y)$ is a binary relation. Then, $\rel{R}$ induces a contravariant Galois connection
	\[\begin{tikzcd}
	{{\wp(X)}} && {\op{\wp(Y)}}{}
	\arrow["\bot", draw=none,from=1-1, to=1-3]
	\arrow["\galup{\rel{R}}", shift left=1, curve={height=-18pt}, from=1-1, to=1-3]
	\arrow["\galdown{\rel{R}}", shift left=1, curve={height=-18pt}, from=1-3, to=1-1]
\end{tikzcd}\]
with
\begin{align}
	\galup{\rel{R}}(U) &=& \{y \in Y~\vert~(\forall x) \left(x \in U \Rightarrow x~\rel{R}~y\right) \} \\
	\galdown{\rel{R}}(V) &=& \{ x \in X~\vert~(\forall y) \left(y \in V \Rightarrow x~\rel{R}~y \right) \},
\end{align}
$U \subseteq X, V \subseteq Y$.
\end{theorem}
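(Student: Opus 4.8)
The plan is to reduce everything to a single biconditional together with an interchange-of-quantifiers observation. Using the equivalent characterization of a contravariant Galois connection recorded just above (namely $\galup{f}(x)\succeq y \Leftrightarrow \galdown{f}(y)\succeq x$, read with $\powerset{Y}$ ordered by reverse inclusion, so that $\succeq$ there means $\subseteq$), it suffices to establish, for all $U\subseteq X$ and $V\subseteq Y$,
\[
  V\subseteq\galup{\rel{R}}(U)\qquad\Longleftrightarrow\qquad U\subseteq\galdown{\rel{R}}(V).
\]
Order-reversal of the two maps then comes for free, by the same formal manipulation that derived order-preservation of the adjoints from the biconditional in the proof of Proposition \ref{prop:monad-comonad}.

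To prove that biconditional I would simply unwind both sides. The left-hand side unpacks to ``every $y\in V$ lies in $\galup{\rel{R}}(U)$,'' i.e. $(\forall y\in V)(\forall x\in U)\,x\rel{R}y$; the right-hand side unpacks to ``every $x\in U$ lies in $\galdown{\rel{R}}(V)$,'' i.e. $(\forall x\in U)(\forall y\in V)\,x\rel{R}y$. Both are the same condition $U\times V\subseteq\rel{R}$, since the two universal quantifiers commute, and hence they are equivalent to each other.

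As an alternative I would run the argument in parallel with the proof of Theorem \ref{thm:cov-galois}: regard $\galup{\rel{R}}$ as a map $\powerset{X}\to\op{\powerset{Y}}$, check directly that it carries unions to intersections, i.e. $\galup{\rel{R}}\bigl(\bigcup_{i\in I}U_i\bigr)=\bigcap_{i\in I}\galup{\rel{R}}(U_i)$, hence is join-preserving into $\op{\powerset{Y}}$, and then invoke Theorem \ref{thm:adjoint-functor-theorem} to produce its (unique) upper adjoint $V\mapsto\bigcup\{U\subseteq X:\galup{\rel{R}}(U)\supseteq V\}$; expanding the condition $\galup{\rel{R}}(U)\supseteq V$ as before identifies this set with $\galdown{\rel{R}}(V)$.

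I do not expect a genuine obstacle: the mathematical content is just the symmetry of the relation $U\times V\subseteq\rel{R}$ in its two arguments. The only place demanding attention is bookkeeping the opposite order on $\powerset{Y}$ — keeping straight that ``best approximation from above'' in $\op{\powerset{Y}}$ translates to ``largest subset,'' and that joins there are intersections — which is precisely the subtlety the $\op{(-)}$ in the statement is flagging.
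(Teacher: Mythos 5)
Your proof is correct and follows essentially the same route as the paper: both reduce the claim to the single biconditional $\galup{\rel{R}}(U)\supseteq V \Leftrightarrow \galdown{\rel{R}}(V)\supseteq U$ and settle it by the symmetry of the underlying condition (the paper via $\galdown{\rel{R}}=\galup{\rel{R}^{\dagger}}$ and swapping the roles of $X$ and $Y$, you by unwinding both sides to $U\times V\subseteq\rel{R}$ and commuting the universal quantifiers). Your version is in fact slightly more explicit than the paper's, and your alternative route via join-preservation and Theorem \ref{thm:adjoint-functor-theorem} is also sound, mirroring the paper's proof of Theorem \ref{thm:cov-galois}.
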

\begin{proof}
    We show $\galup{\rel{R}}(U) \supseteq V$ if and only if $\galdown{\rel{R}}(V)\supseteq U$ for all $U \subseteq X, V \subseteq Y$ since in $\op{\powerset{X}}$, $V \preceq V'$ if and only if $V \supseteq U'$. Observe,
    \begin{align*}
        \galdown{\rel{R}^{~}} = \galup{\rel{R}^{\dagger}}.
    \end{align*}
    Hence, by simply replacing the role of $X$ and $Y$, we are done.
\end{proof}




\section{Integral Transforms}

Recall the residual operation $[-.-]: \resid{L} \times \resid{L} \to \resid{L}$ of a residuated lattice has the defining property
\begin{align*}
	x \star y \preceq z \quad &\Leftrightarrow& \quad x \preceq [y, z].
\end{align*}
Another way of stating this property is there is a Galois connection
\begin{align*}
	(- \star y, [y, -]): \resid{L} \to \resid{L}.
\end{align*}
We have the following lemma as consequence.
\begin{lemma}\label{lem:resid}
	Suppose $(\resid{L},\meet, \join, 0, 1, \star, e, [-,-])$ is a complete residuated lattice. Then,
	\begin{enumerate}
		\item $\left( \bigjoin_{i} x_i \right) \star y = \bigjoin_i (x_i \star y)$,
		\item $\left[x, \bigmeet_{i} y_i \right] = \bigmeet_{i} \left[x, y_i\right]$.
	\end{enumerate}
\end{lemma}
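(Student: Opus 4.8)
The plan is to read off both identities from a single observation already recorded just before the statement: for each fixed $y \in \resid{L}$, the pair $(-\star y,\ [y,-])$ is a Galois connection $\resid{L} \to \resid{L}$. Since $\resid{L}$ is assumed complete, all the joins and meets appearing in the lemma exist, so the only work is to invoke the Adjoint Functor Theorem (Theorem \ref{thm:adjoint-functor-theorem}) in the two adjoint directions.

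First I would fix $y$ and note that the residuation biconditional $x \star y \preceq z \Leftrightarrow x \preceq [y,z]$ is, verbatim, the defining property of the Galois connection with lower adjoint $\ladj{f} = (-\star y)$ and upper adjoint $\radj{f} = [y,-]$. By Theorem \ref{thm:adjoint-functor-theorem}, the lower adjoint preserves arbitrary joins, so applied to a family $\{x_i\}_{i}$ it gives $\left(\bigjoin_i x_i\right)\star y = \bigjoin_i (x_i \star y)$, which is the first claim. Dually, the same theorem says the upper adjoint $[y,-]$ preserves arbitrary meets, i.e.\ $\left[y, \bigmeet_i z_i\right] = \bigmeet_i [y, z_i]$; relabelling the fixed slot as $x$ and the family as $\{y_i\}_i$ yields $\left[x, \bigmeet_i y_i\right] = \bigmeet_i [x, y_i]$, the second claim.

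There is essentially no obstacle here; the only point requiring a moment's attention is bookkeeping of which argument is held fixed — in the first identity the factor $y$ is frozen while the join ranges over the other slot, whereas in the second the first slot $x$ is frozen and the meet ranges over the second slot — but both are subsumed by the fact that $[y,-]$, as a function of its second argument, is exactly the upper adjoint of $-\star y$. If one wished to avoid citing Theorem \ref{thm:adjoint-functor-theorem}, an equally short self-contained route is available: for the first identity one checks directly from the biconditional that $\bigjoin_i (x_i\star y)$ has the universal property characterizing $\left(\bigjoin_i x_i\right)\star y$, and the second follows dually; but routing through the Adjoint Functor Theorem is cleaner and is the approach I would take.
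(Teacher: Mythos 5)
Your proposal is correct and matches the paper's proof exactly: the paper's entire argument is ``Apply Theorem \ref{thm:adjoint-functor-theorem},'' relying on the Galois connection $(-\star y,\ [y,-])$ noted just before the lemma, and you have simply spelled out the same invocation (lower adjoint preserves joins, upper adjoint preserves meets) with the correct bookkeeping of which argument is fixed.
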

\begin{proof}
	Apply Theorem \ref{thm:adjoint-functor-theorem}.
\end{proof}

Suppose $\resid{L}$ is a residuated lattice, and suppose $X$ is a set, possibly with additional structure. A \define{kernel} is a map
\begin{align*}
	H: X \times X \to \resid{L}
\end{align*}
Integration over kernels is readily defined as follows.\footnote{We provide a definition slightly more general than the one already introduced by Maragos \cite{maragos2009morphological}, but less general than an analogous categorical one \cite{gutierrez2010fuzzy}.} Suppose $f$ is a map $f: X \to \resid{L}$ and $H: X \times X \to \resid{L}$ is a kernel. Then, the \define{integral transform} of $f$ is the map $\hat{f}: X \to \resid{L}$ defined
\begin{align}
	\hat{f}(x) &=& \bigjoin_{y \in X} H(x,y) \star f(y).
\end{align}

\begin{theorem} \label{thm:kernels}
	Suppose $\resid{L}$ is a residuated lattice and $H: X \times X \to \resid{L}$ is a kernel. Then, the following forms a Galois connection
	\[\begin{tikzcd}
	{\resid{L}^X} && {\resid{L}^X}{}
	\arrow["\bot", draw=none,from=1-1, to=1-3]
	\arrow["\hat{(-)}", shift left=1, curve={height=-18pt}, from=1-1, to=1-3]
	\arrow["\check{(-)}", shift left=1, curve={height=-18pt}, from=1-3, to=1-1]
\end{tikzcd}\]
\begin{align*}
	\left( f: X \to \resid{L} \right) &\mapsto& \hat{f}(x) = \bigjoin_{y \in X} H(x,y) \star f(y) \\
	\check{g}(y) = \bigmeet_{x \in X} \left[ H(y,x), g(y)	\right]	&\mapsfrom& \left( g: X \to \resid{L} \right)
\end{align*}
\end{theorem}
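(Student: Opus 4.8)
The plan is to verify the Galois connection condition of Definition \ref{def:galois-connection} directly, by unwinding the pointwise order on $\resid{L}^X$ and invoking the residuation adjunction $x \star y \preceq z \iff x \preceq [y,z]$ at the crucial step. First I would record the standing facts: since $\resid{L}$ is complete, the function lattice $\resid{L}^X$ is a complete lattice under the pointwise order, with joins and meets computed coordinatewise; and both $\hat{(-)}$ and $\check{(-)}$ are order-preserving, the former because $H(x,y)\star(-)$ is monotone (it has a right adjoint by Lemma \ref{lem:resid}) and arbitrary joins are monotone, the latter because $[-,-]$ is monotone in its second argument and arbitrary meets are monotone. These observations reduce the theorem to a single biconditional.

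The core step is the following chain of equivalences, for fixed $f, g \in \resid{L}^X$:
\begin{align*}
    \hat f \preceq g
    &\iff \forall x \in X:\ \bigjoin_{y \in X} H(x,y) \star f(y) \preceq g(x) \\
    &\iff \forall x, y \in X:\ H(x,y) \star f(y) \preceq g(x) \\
    &\iff \forall x, y \in X:\ f(y) \preceq [H(x,y),\, g(x)] \\
    &\iff \forall y \in X:\ f(y) \preceq \bigmeet_{x \in X} [H(x,y),\, g(x)] \\
    &\iff f \preceq \check g.
\end{align*}
The first and last equivalences are just the definition of the pointwise order; the second and fourth are the universal properties of $\bigjoin$ (least upper bound) and $\bigmeet$ (greatest lower bound), which let a join$\,\preceq\,$element, respectively element$\,\preceq\,$meet, be replaced by a family of inequalities; and the third is the residuation adjunction applied inequality by inequality. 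This is exactly the condition $\hat f \preceq g \iff f \preceq \check g$, so $(\hat{(-)}, \check{(-)})$ is a Galois connection.

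As an alternative route, one could instead show only that $\hat{(-)}$ preserves arbitrary joins — joins in $\resid{L}^X$ are pointwise and $(\bigjoin_i x_i)\star y = \bigjoin_i(x_i \star y)$ by Lemma \ref{lem:resid} — and then appeal to the Adjoint Functor Theorem (Theorem \ref{thm:adjoint-functor-theorem}) to obtain the upper adjoint as $g \mapsto \bigjoin\{\,f : \hat f \preceq g\,\}$, afterward simplifying this supremum to closed form via the biconditional above. Either way the content is essentially bookkeeping; the only point demanding care is the two quantifier-interchange steps, which is precisely where completeness of $\resid{L}$ (and nothing about $X$) is used, and where one must keep the variable being met over attached to the second slot of the residual — so the upper adjoint is $\check g(y) = \bigmeet_{x \in X}[H(x,y),\, g(x)]$, with $g$ evaluated at the summation index $x$.
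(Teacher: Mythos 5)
Your proof is correct, and your primary argument takes a genuinely different (and arguably cleaner) route than the paper. The paper follows what you describe as the ``alternative route'': it first checks that $\hat{(-)}$ preserves arbitrary joins via Lemma \ref{lem:resid}, invokes the Adjoint Functor Theorem (Theorem \ref{thm:adjoint-functor-theorem}) to guarantee an upper adjoint of the form $\check g = \bigjoin\{f : \hat f \preceq g\}$, and then massages that supremum into the closed-form meet of residuals. Your main argument instead verifies the defining biconditional $\hat f \preceq g \iff f \preceq \check g$ directly by a chain of equivalences, using only the universal properties of $\bigjoin$ and $\bigmeet$ and the residuation adjunction pointwise. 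What the paper's route buys is reusability --- it exhibits the theorem as an instance of a general existence principle and never has to guess the adjoint in advance; what your route buys is a self-contained, fully explicit proof in which the closed form of $\check g$ falls out rather than having to be verified after the fact. Both are valid.

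One further point in your favor: your closing remark about keeping $g$ evaluated at the meet index is not mere pedantry. The formula printed in the theorem statement, $\check g(y) = \bigmeet_{x \in X}\left[H(y,x), g(y)\right]$, evaluates $g$ at the \emph{free} variable $y$, which does not yield an adjoint to $\hat{(-)}$ in general; your version $\check g(y) = \bigmeet_{x \in X}\left[H(x,y), g(x)\right]$ (equivalently, after renaming bound variables, $\check g(x) = \bigmeet_{y \in X}\left[H(y,x), g(y)\right]$) is the correct one, and it is the formula the paper's own proof actually derives before an index slip in its final line. So your derivation corrects, rather than contradicts, the statement as printed.
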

\begin{proof}
	Our strategy is to show $\hat{(-)}$ is join-preserving and apply Theorem \ref{thm:adjoint-functor-theorem} to calculate $\check{(-)}$. Suppose $f = \bigjoin_{i \in I} f_i$ where $f_i \in \resid{L}^X$ are $\resid{L}$-valued functions on $\resid{L}$ with the order $f_i \preceq g_i$ if and only if $f_i(x) \preceq g_i(x)$ for all $x \in X$. Then,
	\begin{align*}
		\hat{f}(x) &=& \bigjoin_{y \in X} H(x,y) \star \left( \bigjoin f_i(y) \right) \\
		&=& \bigjoin_{y \in X} \bigjoin_{i \in I} H(x,y) \star f_i(y) \\
		&=& \bigjoin_{i \in I} \hat{f}_i(x)
	\end{align*}
	by Lemma \ref{lem:resid}.

	To calculate the right adjoint, by Theorem \ref{thm:adjoint-functor-theorem},
	\begin{align*}
		\check{g} &=& \bigjoin \{ f~\vert~\hat{f} \preceq g \},
	\end{align*}
		which implies $f$ is the greatest function whose integral transform is less than $g$
	\begin{align*}
		\check{g}(y) &=& \bigjoin_{f \in \resid{L}^X} \{ f(y)~\vert~ \hat{f}(y) \preceq g(y) \quad \forall x \}.
	\end{align*}
Hence, for all $y \in X$, $f$ satisfies
	\begin{align*}
		\bigjoin_{x \in X} H(y,x) \ast f(x) &\preceq& g(y).
	\end{align*}
	Applying Lemma \ref{lem:resid},
	\begin{align*}
		H(y,x) \ast f(x) \preceq g(y) \quad &\Leftrightarrow& \quad f(x) \preceq \left[ H(y,x), g(y) \right].
	\end{align*}
	Therefore, tightening the bound,
	\begin{align*}
		\check{g}(y) &=& \bigmeet_{x \in X} \left[ H(y,x), g(y) \right].
	\end{align*}
	\end{proof}

\begin{example}[Max-Plus Connection] \label{eg:max-plus-galois}
	Suppose $\resid{L}$ is the residuated lattice $(\Rext, \min, \max, -\infty, \infty, 0, +)$ with
	$[x,y] = y-x$ if $x,y \in \Rext \setminus \{-\infty, \infty\}$. Then, Theorem \ref{thm:kernels} implies
	\begin{align*}
		(A \ovee \mathbf{x})_i &=& \bigjoin_{i = 1}^n a_{ij} + x_j
	\end{align*}
	has an upper adjoint defined by $A^\dagger \owedge \mathbf{x}$ where $A^{\dagger}$ is the matrix with $a^\dagger_{ij} = -a_{ji}$. Hence,
	\begin{align*}
	(A^{\dagger} \owedge \mathbf{x})_j &=& \bigmeet_{i=1}^n a_{ji} - x_j.
	\end{align*}
	Consequently, by Proposition \ref{prop:monad-comonad},
	\begin{align*}
		A^\dagger \owedge \left( A \ovee \mathbf{x} \right) &\geq& \mathbf{x}, \\
		A \ovee \left( A^\dagger \owedge \mathbf{x} \right) &\leq& \mathbf{x},
			\end{align*}
	and
	\begin{align*}
		A \ovee \mathbf{x} \leq \mathbf{y} \quad &\text{if and only if}& \quad \mathbf{x} \leq A^{\dagger} \owedge \mathbf{y}
	\end{align*}
	for all $\mathbf{x}, \mathbf{y} \in \Rext^n$.
\end{example}

\section{Category Theory}
\label{sec:cats}

We do no assume prior knowledge of category theory. However, as is discussed elsewhere \cite{fong2019invitation}, there are irrefutable parallels between order theory and category theory. Reoccurring constructions pertaining to sheaves and cosheaves are further illuminated if we view them as categorical limits and colimits \cite{riehl2017category} which, here, we present in a more friendly form- as equalizers and coequalizers.

A category $\cat{C}$ consists of a collection of \define{objects} $\Ob{\cat{C}}$ and a collection of \define{morphisms} also called \define{arrows} $\Mor{\cat{C}}$ 
such that for any two objects $x, y \in \Ob{\cat{C}}$, there is a collection of morphisms $\Hom_{\cat{C}}(x,y) \subseteq \Mor{\cat{C}}$ between them with a map
\[ \circ_{x,y,z}: \Hom_{\cat{C}}(x, y) \times \Hom_{\cat{C}}(y,z ) \to \Hom_{\cat{C}} (x, z)\]
satisfying\footnote{Usually, the subscripts in $\circ_{x,y,z}$ are omitted.}
\begin{itemize}
	\item $(h \circ g) \circ f = h \circ (g \circ f)$ (Associativity).
	\item For every $x \in \Ob{C}$, there is a $\id_x \in \Hom_{\cat{C}}(x,x)$ with $f \circ \id = f$ and $\id \circ f = f$ for every composable $f \in \Mor{\cat{C}}$ (Identity).
\end{itemize}

\noindent Colloquially, \textit{if you tell me the objects and arrows, I can tell you the category.}

\begin{examples}
 	\leavevmode
 	\begin{itemize}
 		\item Suppose $(\poset{P}, \preceq)$ is a preorder. Then, we may define a category $\cat{P}$ with $\Ob{\cat{P}} = \poset{P}$ and $\Hom_{\cat{P}}(x, y) = \ast$ (a one-point set) whenever $x \preceq y$ in $\poset{P}$ and $\cat{P}(x,y) = \emptyset$ otherwise.
 		\item Suppose $(\monoid{M}, \star, e)$ is a \define{monoid}. Then, $\monoid{M}$ is a category $\cat{M}$ with $\Ob{\cat{M}} = \ast$, a single object, and $\Hom_{\cat{M}}(\ast, \ast) = \monoid{M}$. Composition of arrows is defined according to the binary operation $\star$.
 	\end{itemize}  
\end{examples}

\subsection{Functors}

Categories are useful often because we can compare them, a perspective that has lead to connections between different branches of mathematics. Comparing two categories amounts to demonstrating a mapping between them called a functor. A \define{functor} $F$ between categories $\cat{C}$ and $\cat{D}$ consists of a maping $F: \Ob{\cat{C}} \to \Ob{\cat{D}}$ and a map $F_{x,y}: \Hom_{\cat{C}}(x,y) \to \Hom_{\cat{D}}(Fx, Fy)$ such that for every $f \in \Hom_{\cat{C}}(x,y)$ and $g \in \Hom_{\cat{C}}(y, z)$,
\begin{align*}
	F_{x,z}(g \circ f) &=& F_{y,z}(g) \circ F_{x,y}(f), \\
	F(\id_x) &=& \id_{F x}.
\end{align*}

We have already seen functors. For instance, a functor between poset categories $F: \cat{P} \to \cat{Q}$ is precisely an order-preserving map; a functor between monoid categories $F: \cat{M} \to \cat{N}$ is precisely a \define{monoid homomorphism}. Functors also form a category in their own right. If $\cat{D}$ and $\cat{D}$ are categories, then the $\cat{D}^{\cat{C}}$ is a category whose objects are functors $F: \cat{C} \to \cat{D}$ whose morphisms are maps between functors called natural transformations. If $F, G: \cat{C} \to \cat{D}$, then a \define{natural transformation} $\eta: F \Rightarrow G$ is, roughtly, a set of morphisms $\{ \eta_x: Fx \to Gx \}_{x \in \Ob{\cat{C}}}$ such that the following diagram commutes for every $f \in \Hom_{\cat{C}}(x,y)$ 
\[\begin{tikzcd}
	Fx & Fy \\
	Gx & Gy
	\arrow["Ff", from=1-1, to=1-2]
	\arrow["{\eta_x}"', from=1-1, to=2-1]
	\arrow["{\eta_y}"', from=1-2, to=2-2]
	\arrow["Gf"', from=2-1, to=2-2]
\end{tikzcd}\]
As expected, the the morphisms $\Hom_{\cat{D}^{\cat{C}}}(F, G)$ are, roughtly, the set of natural transformations between $F$ and $G$.

\subsection{(Co)equalizers}

We make use of exactly one technical construction from category theory called a \define{(co)equalizer}. Suppose $\cat{C}$ is a data category and $f,g: A \to B$ are parallel morphisms from between objects $A, B \in \Ob{\cat{C}}$ (e.g.~linear transformations, join-preserving maps). By ``a diagram commutes,'' we mean that if you compose a sequence of arrows in the diagram to form a map from a source object to a target object, you obtain the same map as if you had composed any other sequence of arrows to go from the same source object to the same target object. 
\begin{definition}[Equalizer]
	The \define{equalizer} of a parallel pair $f,g: A \to B$ is an object $E$ and a map $i: E \to A$ such that for any map $j: D \to A$ there is a unique map $j': D \to E$ such that the following diagram commutes
	\[
	\begin{tikzcd}
		E \arrow[r, "i"] & A \arrow[r, "f", shift left] \arrow[r, "g", shift right, swap] & B \\
						 & D \arrow[u, "j"] \arrow[lu, "j'", dashed]
	\end{tikzcd}.
	\]
\end{definition}

\begin{definition}[Coequalizer]
	The \define{coequalizer} of a parallel pair $f,g: A \to B$ is an object $Q$ and a map $p: B \to Q$ such that for any map $q: B \to P$ there is a unique map $q': Q \to P$ such that the following diagram commutes
	\[\begin{tikzcd}
																   & P 								 &					\\
	A \arrow[r, "f", shift left] \arrow[r, "g", shift right, swap] & B \arrow[r, "p"] \arrow[u, "q"] & Q \arrow[ul, "q'", dashed, swap]
	\end{tikzcd}\]
\end{definition}

\subsection{(Co)products}

Everyone is familiar with products of sets (cartesian products) or products of vector spaces (direct sum). These constructions often generalize in order categories.

\begin{definition}[Product]
	Suppose $A, B \in \Ob{\cat{C}}$. The \define{product} of $A$ and $B$ is an object denoted $A \prod B$ with maps $p: A \prod B \to A$ and $q: A \prod B \to B$ called \define{projection maps} such that for an object $D$ and maps $p': D \to A$ and $q': D \to B$ there is a unique map $i: D \to A \prod B$ such that the diagram commutes
	\[
	\begin{tikzcd}
	A 		&																							&			B \\
			&	A \prod B \arrow[ul,"p",swap] \arrow[ur,"q"]											&			  \\
			&	D 	\arrow[uul,"p'", bend left] \arrow[uur,"q'", swap, bend right] \arrow[u,"i",dashed]	&
	\end{tikzcd}.
	\]
\end{definition}

\begin{definition}[Coproduct]
Suppose $A, B \in \Ob{\cat{C}}$. The \define{coproduct} of $A$ and $B$ is an object denoted $A \coprod B$ with maps $i: A \to A \coprod B$ and $j: B \to A \coprod B$ called \define{injection maps} such that for an object $D$ and maps $i': A \to D$ and $j': B \to D$ there is a unique map $p: A \coprod B \to D$ such that the diagram commutes
\[
\begin{tikzcd}
												&	D \\
												&	A \coprod B \arrow[u, "p", dashed] & \\
	A \arrow[ur, "i", swap] \arrow[uur, "i'", bend left]	&										& B \arrow[ul, "j"] \arrow[uul, "j'", swap, bend right] 
\end{tikzcd}.
\]
\end{definition}
\begin{example}[Meets \& Joins]
Recall that a poset $\poset{P}$ is a category with an arrow $x \to y$ between two elements (objects) if and only if $x \preceq y$. The product of two elements (if it exists) is the element $x \meet y$ defined by the universal property
	\[
	\begin{tikzcd}
	x 		&																							&			y \\
			&	x \meet y \arrow[ul,"\preceq",swap] \arrow[ur,"\preceq"]											&			  \\
			&	z 	\arrow[uul,"\preceq", bend left] \arrow[uur,"\preceq", swap, bend right] \arrow[u,"\preceq",dashed]	&
	\end{tikzcd}.
	\]
	Similarly, the coproduct of $x$ and $y$ is the element $x \join y$ defined by the universal property
	\[
	\begin{tikzcd}
												&	z \\
												&	x \join y \arrow[u, "\preceq", dashed] & \\
	x \arrow[ur, "\preceq", swap] \arrow[uur, "\preceq", bend left]	&										& y \arrow[ul, "\preceq"] \arrow[uul, "\preceq", swap, bend right] 
	\end{tikzcd}.
	\]
	Products and coproducts of arbitrary arity, if they exist, are similarly defined. For the case of meets, we prove the following universal property from first principles. 
 \begin{lemma} \label{lem:convenience}
	Let $\lattice{L}$ be a complete lattice, $I$ an arbirary indexing set, $\{x_i\}_{i \in I} \subseteq \lattice{L}$, and $y \in \lattice{L}$. Then,
	\begin{align*}
		y &\preceq& \bigmeet_{i \in I} x_i
	\end{align*}
	if and only if
	\begin{align*}
		y &\preceq& x_i
	\end{align*}
 for all $i \in I$.
\end{lemma}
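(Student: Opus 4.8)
The plan is to prove Lemma~\ref{lem:convenience} by directly unwinding the definition of $\bigmeet$ as a greatest lower bound, as set out in Section~\ref{sec:complete-lattice}; since $\lattice{L}$ is complete, $\bigmeet_{i \in I} x_i$ exists, so the expression is well-defined. Recall that, by that definition, $\bigmeet_{i \in I} x_i$ is the maximum of the set $B = \{ z \in \lattice{L}~\vert~z \preceq x_i~\forall i \in I \}$ of lower bounds of the family $\{x_i\}_{i \in I}$; in particular $\bigmeet_{i \in I} x_i \in B$ and $z \preceq \bigmeet_{i \in I} x_i$ for every $z \in B$.

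First I would handle the forward direction. Assume $y \preceq \bigmeet_{i \in I} x_i$. Since $\bigmeet_{i \in I} x_i \in B$, we have $\bigmeet_{i \in I} x_i \preceq x_i$ for every $i \in I$. Composing with the hypothesis using transitivity of $\preceq$ yields $y \preceq x_i$ for all $i \in I$.

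For the reverse direction, assume $y \preceq x_i$ for all $i \in I$. This is precisely the statement that $y \in B$. As $\bigmeet_{i \in I} x_i = \max B$, it follows that $y \preceq \bigmeet_{i \in I} x_i$, which completes the proof.

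There is no real obstacle here: the statement is the universal property of the meet and the argument is purely definitional. The only point requiring a moment's care is the appeal to completeness of $\lattice{L}$ (or, equivalently, Proposition~\ref{thm:sup-inf}) to guarantee that $\bigmeet_{i \in I} x_i$ exists for the arbitrary index set $I$, so that both sides of the asserted equivalence are meaningful.
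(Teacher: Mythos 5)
Your proof is correct and is essentially the same argument as the paper's: the forward direction uses that the meet is itself a lower bound plus transitivity, and the reverse direction uses that the meet is the \emph{greatest} lower bound. The only difference is presentational (you spell out the set of lower bounds $B$ explicitly), which adds nothing substantive.
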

\begin{proof}
	Suppose $y$ is a lower bound of the subset $\{x_i\}_{i \in I}$. By definition, $\bigmeet_{i \in I} x_i$ is the greatest lower bound, hence, $\bigmeet_{i \in I} x_i \succeq y$. Conversely, if $y \preceq \bigmeet_{i \in I} x_i$, then $y$ precedes the greatest lower bound of $\{x_i\}_{i \in I}$ implying $y$ is also a lower bound by transitivity.
\end{proof}
\end{example}

We encourage the reader, as we develop the theory of lattice-valued sheaves and sheaf Laplacians, to imagine how one could extend results from posets, meets, joins and Galois connections to categories, products, coproducts, and adjunctions, respectively.

\subsection{Adjunctions}

A Galois connection is a specialization of a more general construction in category theory called an adjunction.

\begin{definition}
	Suppose $\cat{C}$ and $\cat{D}$ are categories and $L: \cat{C} \to \cat{D}$ and $R: \cat{D} \to \cat{C}$ are functors. Then, $L$ is a left adjoint (equivalently, $R$ is a right adjoint) written $L \dashv R$ if and only if there is a bijection
	\begin{align*}
	\Hom_{\cat{D}}(Fx, y) \leftrightarrow \Hom_{\cat{D}}(x, Gy) \\
	\end{align*}
	for all $x \in \Ob{\cat{C}}$, $y \in \Ob{\cat{D}}$.
\end{definition}

We have already seen that if $(\ladj{f}, \radj{f}): \lattice{K} \to \lattice{L}$ are Galois connections, the $\ladj{f}$ and $\radj{f}$ are functors. It remains to see
\begin{align}
	\Hom_{\lattice{L}}(\ladj{f}(x), y) \leftrightarrow \Hom_{\lattice{K}}(x, \radj{f}(y)). \label{eq:galois-adjunction}
\end{align}
Since $\lattice{K}$ and $\lattice{L}$ are posets, hom-sets $\Hom(x,y)$ is nonempty if and only if $x \preceq y$. It follows that \eqref{eq:galois-adjunction} is equivalent to Definition \ref{def:galois-connection}.

While we explictly showed some key properties of Galois connections hold, we, could have derived from more far more general properties of adjunctions. For instance, it is widely known that left adjoints preserve colimits, including coproducts, and right adjoints preserve limits, including products (Theorem \ref{thm:adjoint-functor-theorem}), or that adjoints satisfy the triangle identities (Proposition \ref{prop:left-right-left}).

\subsection{Relevant examples}

The following are an exhaustive list of categories employed in this manuscript.

\begin{definition}[Relevant Categories] \label{def:relevant-categories}
\leavevmode
\begin{enumerate}
	\item Sets and functions: $\cat{Set}$.
	\item Vector spaces over a field $\field$ and linear transformations: $\cat{Vec}_\field$
	\item Real or complex Hilbert spaces spaces and linear transformations: $\cat{Hil}_\R$ or $\cat{Hil}_\C$.
	\item Posets and order-preserving maps: $\cat{Pos}$.
	\item Monoids and monoid homomorphisms: $\cat{Mon}$.
	\item Complete lattices and join-preserving maps: $\cat{Sup}$
	\item Complete lattices and meet-meet preserving maps: $\cat{Inf}$.
	\item Complete lattices and Galois connections: $\cat{Ltc}$. The objects are complete lattices and the morphisms are pairs $(\ladj{f}, \radj{f})$ of lower/upper adjoints. Composition is by
	\begin{align*}
		\circ &:& \Hom_{\cat{Ltc}}(\lattice{L}, \lattice{M}) \times \Hom_{\cat{Ltc}}(\lattice{K}, \lattice{L}) \to \Hom_{\cat{Ltc}}(\lattice{L}, \lattice{M}) \\
		(\ladj{g}, \radj{g}) \circ (\ladj{f}, \radj{f}) &=& (\ladj{g}\ladj{f}, \radj{f}\radj{g})
	\end{align*}
	
\end{enumerate}
\end{definition}


\chapter{Networks \& Signals}\label{ch:networks} %

Networks are ubiquitous, even when we do not realize it. Some examples of networks include social networks, neuronal networks, electrical networks, sensor networks, coauthorship networks, communication networks, transportion networks, and correlation networks. Often the structure of a network determines its function function \cite{christakis2009connected}, while other times it is the other way around \cite{watts2004six}. Dynamical systems supported on networks, \define{networked dynamical systems}, exhibit behavior that is more complex than than dynamical systems on their own \cite{watts2004new}. \define{Signal processing} is one view of networked dynamical systems, and it is the view we adopt.

\section{Networks}

We model networks with graphs. Recall, a \define{multiset}, denoted $\N[S]$ is a set $S$ with a map $S \to \N$ encoding the multiplicity of each element of $S$. A \define{directed (multi)graph} is a tuple $\graph{G} = (\nodes{G}, \edges{G})$ with $\nodes{G}$ a set of \define{nodes} and $\edges{G} \in \N[\nodes{G} \times \nodes{G}]$ is a (multi-)set of \define{edges}.\footnote{Traditionally, a directed multigraph is called a \define{quiver} and, equivalently, presented as the data of $\graph{Q} = (\nodes{Q}, \edges{Q}, h, t)$. \define{Head} and \define{tail} are maps $h,t: \edges{Q} \to \nodes{Q}$ sending an edge (arrow) to its head and tail, respectively.} An \define{undirected (multi)graph} is consists of the data $\graph{G} = (\nodes{G}, \edges{G})$ with $\edges{G} \in \N[\nodes{G} \times \nodes{G}/\sim]$ where we identify the $(i,j) \in \nodes{G} \times \nodes{G}$ with $(j,i)$ by $(i,j)\sim (j,i)$. A graph is \define{simple} if $\edges{G}$ is a set and $\edges{G}$ has an empty intersection with the diagonal $\Delta(\nodes{V}) = \{(i,j) \in \nodes{V} \times \nodes{V}~\vert~ i = j\}$. \emph{We hereafter assume all graphs are simple undirected graphs, unless otherwise noted. We denote nodes with lowercase letters $i, j, i', j'$ etc. and denote edges with pairs of lowercase letters $ij, i'j'$ etc.}

In a graph, a \define{path}  $\gamma$ from $i \in \nodes{G}$ to $j \in \nodes{G}$ is a sequence of nodes $\{i= i_0, i_1, \dots, i_\ell = j\}$ such that $(i_k, i_{k+1}) \in \edges{G}$ for all $0 \leq k \leq \ell$; we write the legnth of $\gamma$ as $|\gamma| = \ell$. The opposite path $\gamma^{-1} = (j = i_{\ell}, i_{\ell-1}, \dots, i_0 = i)$ is a path from $j$ to $i$. A path is a \define{loop} if it has the same source and target nodes. The \define{diameter} of a graph is the length of the longest non-intersecting path $\mathrm{diam}(\graph{G})$. Paths form a category as follows.

\begin{definition}[Free Category]
	Suppose $\graph{G} = (\nodes{G}, \edges{G})$ is a category. The \define{free category} of $\graph{G}$ is a category $\Free(\graph{G})$ with objects $\nodes{G}$ and morphisms between $i, j \in \nodes{G}$ given by paths $\gamma: i \to j$. Composition of paths is concatenation written from left to right $\gamma_1 \cdot \gamma_2$ and the identity is the trivial path denoted $\epsilon_i$ for each node $i \in \nodes{G}$.
\end{definition}

Suppose $i \in \nodes{G}$. The \define{neighbors} of $i$ are the set $\nbhd{i} = \left\{ j ~\vert~ (i,j) \in \edges{G} \right\}$. The cardinality of $\nbhd{i}$ is called the \define{degree} of $i$, denoted $d_i$. Given an edge $(i,j) \in \edges{G}$, the \define{boundary} is the set $\boundary{(i,j)} = \{i, j\}$. Given a node $i \in \nodes{G}$, the \define{coboundary} is the (multi)set $\boundary{i} = \{e \in \edges{G} ~\vert~ i \in \boundary{i}\}$. For an edge $ij \in \edges{G}$, the \define{edge neighbors} consist of $\nbhd{i} = \coboundary{i} \cup \coboundary{j} - ij$.  We say $i$ is \define{incident} to $e \in \edges{G}$ if $i \in \boundary{e}$. $\graph{G}$ is \define{d-regular} if $d_i = d$ for every $i \in \nodes{G}$. $\graph{G}$ is complete \define{complete} if $(i,j) \in \edges{G}$ for all $i, j \in \edges{G}$, $i \neq j$. A graph is \define{connected} if for every $i, j \in \nodes{G}$, there is a path from $i$ to $j$.

If $\graph{G}$ has $n$ nodes, then square $n \times n$ matrices encode graph structure with their sparsity patterns. A weighted graph is a graph $\graph{G}$ with weight function $\weight: \edges{G} \to \R_{+}$. A weight function is frequently encoded as a $n \times n$ matrix $[A]_{ij} = a_{ij}$ called the \define{adjacency matrix} of the weighted graph
\begin{align*} [A]_{ij} &=&
	\begin{cases}
		\weight(i,j), &\text{ if } ij \in \edges{G}\\
		0 			, &\text{otherwise}
	\end{cases}.
\end{align*}
Hence, a weighted graph is given as the data $\graph{G} = (\nodes{G}, \edges{G}, A)$. The \define{degree matrix} $[D]_{ij} = d_{ij}$ of $\graph{G}$ is the $n \times n$
\begin{align*} [D]_{ij} &=&
		\begin{cases}
		\sum_{j=1}^n a_{ij}, &\text{ if } i = j\\
		0 			, &\text{otherwise}
	\end{cases}.
\end{align*}
\begin{remark}[Unweighted Graphs]
	If a $\graph{G}$ is not weighted, then we set
	\[\weight(i,j) = \begin{cases} 1 & ij \in \edges{G} \\ 0 & \text{otherwise} \end{cases}\]
	Then, $d_{ii}$ is the actual degree $d_i$ of $i \in \nodes{G}$.
\end{remark}
\section{Graph Signals}

Graph matrices are highly useful for answering questions about signals on graph, such as, is a signal smooth are there  Graph signal processing \cite{ortega2018graph} is concerned with $\R$-valued signals on graph which are functions $\mathbf{x}: \nodes{G} \to \R$. Graph \define{shift operators} are matrices $S$ that encode the graph topology via their sparsity structure and act on graph signals. Here are a few shift operators.
\begin{itemize}
	\item The \define{adjacency matrix} $A$
	\begin{align*}
		\left(A \mathbf{x} \right)_i &=& \sum_{j \in \nbhd{i}} a_{ij} x_j &&;
	\end{align*}
	\item The \define{Laplacian matrix} $L = D - A$
	\begin{align}
		\left(L \mathbf{x} \right)_i &=& \sum_{j \in \nbhd{i}} a_{ij} (x_i - x_j) \\
		&=& d_{ii} x_i - \sum_{j \in \nbhd{i}} a_{ij} x_j; \label{eq:graph-laplacian}
	\end{align}
	\item The \define{random walk Laplacian matrix} $P = D^{-1} A$
		\begin{align}
			\left(P \mathbf{x} \right)_i &=& \sum_{j \in \nbhd{i}} \frac{1}{d_{jj}} a_{ij} x_j &&; \label{eq:random-walk}
		\end{align}
	\item Matrix normalizations of the above.
\end{itemize}

In all cases, applying the graph shift operator to a graph signal results in a diffused signal. Components of the signal are integrated with the components of neighboring nodes. Iterating the shift operator repeatedly with initial condition $\mathbf{x}[0]$ yields various difference equations. For instance,
\begin{align}\label{eq:adj-dif-eq}
	\mathbf{x}[t+1] &=& \epsilon S \mathbf{x}[t]; \\
	\mathbf{x}[0] &\in& \R^n \nonumber,
\end{align}
or,
\begin{align}\label{eq:lap-dif-eq}
	\mathbf{x}[t+1] - \mathbf{x}[t] &=& -\epsilon S \mathbf{x}[t]; \\
	\mathbf{x}[0] &\in& \R^n \nonumber
\end{align}
for step-size $\epsilon > 0$.

Spectral properties of the graph shift operator reflect on the nature of the diffusion process. Intricate descriptions of the diffusion based on the eigenvalues and eigenvectors of $S$ exist \cite{chung1997spectral}, but coarse results are based on the following facts, for the specific case $S = L$, the graph Laplacian.
\begin{fact}
	The time-independent solution of equation \eqref{eq:lap-dif-eq} are $\Ker L$ whose dimension is equal to the number of connected components of $\graph{G}$.
\end{fact}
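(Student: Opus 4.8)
The plan is to first dispose of the ``time-independent'' clause and then reduce the dimension count to a statement about connected components. A solution $\mathbf{x}[t]$ of \eqref{eq:lap-dif-eq} (with $S = L$) is time-independent exactly when $\mathbf{x}[t+1] - \mathbf{x}[t] = 0$ for all $t$, i.e.\ when $-\epsilon L\mathbf{x}[t] = 0$; since $\epsilon > 0$ this is the condition $\mathbf{x}[t] \in \Ker L$. So it remains to show that $\dim \Ker L$ equals the number of connected components of $\graph{G}$.

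Next I would pass to the quadratic form of $L$. A symmetric rearrangement of \eqref{eq:graph-laplacian} gives $\mathbf{x}^{\top} L \mathbf{x} = \sum_{ij \in \edges{G}} a_{ij}(x_i - x_j)^2$, so $L$ is positive semidefinite and $L\mathbf{x} = 0$ if and only if $\mathbf{x}^{\top} L \mathbf{x} = 0$. Because $a_{ij} > 0$ on every edge, this sum of squares vanishes if and only if $x_i = x_j$ for all $ij \in \edges{G}$. Chaining this equality along a path --- which exists between any two nodes in the same connected component --- shows that $\Ker L$ is precisely the set of signals constant on each connected component; conversely, any such signal lies in $\Ker L$, since $(L\mathbf{x})_i = \sum_{j \in \nbhd{i}} a_{ij}(x_i - x_j) = 0$ as every neighbor of $i$ lies in $i$'s component.

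Finally, if $C_1, \dots, C_c$ are the connected components of $\graph{G}$, the indicator signals $\mathbf{1}_{C_1}, \dots, \mathbf{1}_{C_c} \in \R^n$ have pairwise disjoint supports, hence are linearly independent, and every signal constant on components is a linear combination of them; thus they form a basis of $\Ker L$ and $\dim \Ker L = c$. I do not expect a genuine obstacle: the only points needing care are the strict positivity of the edge weights (which is what lets ``$x_i = x_j$ on edges'' follow from the vanishing sum of squares) and the path argument upgrading ``constant on edges'' to ``constant on each component'' --- both routine. An alternative to the quadratic-form step would be a discrete maximum-principle argument applied on each component, but the symmetric-form computation seems cleanest.
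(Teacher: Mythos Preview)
Your argument is correct. The first step---identifying the time-independent solutions with $\Ker L$---matches the paper's proof exactly. For the second claim, that $\dim \Ker L$ equals the number of connected components, the paper simply treats this as a standard fact from spectral graph theory and does not supply a proof; you go further and give the full quadratic-form argument, which is the standard one and is carried out cleanly. So your proposal is strictly more complete than what appears in the paper, while agreeing with it where they overlap.
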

\begin{proof}
	Solutions of \eqref{eq:lap-dif-eq} must satisfy
	\begin{align*}
		0 &=& \mathbf{x}[t+1] - \mathbf{x}[t] &=& -L \cdot \mathbf{x}[t]
	\end{align*}
	implying $\mathbf{x}[t]$ is in the kernel of $\lattice{L}$.
\end{proof}
\begin{fact}
	Suppose $\epsilon < 2d_{\max}$, the maximum degree of $\graph{G}$. Then, the discrete-time linear time-invariant system \eqref{eq:lap-dif-eq} is globally exponentially stable.
\end{fact}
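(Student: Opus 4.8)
The plan is to diagonalize the one-step update map. Rewriting \eqref{eq:lap-dif-eq} as $\mathbf{x}[t+1] = M\mathbf{x}[t]$ with $M := I - \epsilon L$, the whole question reduces to locating the spectrum of $M$. First I would record that $L = D - A$ is symmetric (since $a_{ij} = a_{ji}$) and positive semidefinite: by \eqref{eq:graph-laplacian}, $\mathbf{x}^{\top} L \mathbf{x} = \sum_{ij \in \edges{G}} a_{ij}(x_i - x_j)^2 \geq 0$. Hence $L$ has an orthonormal eigenbasis $u_1, \dots, u_n$ with eigenvalues $0 = \lambda_1 \leq \lambda_2 \leq \cdots \leq \lambda_n$, where $u_1, \dots, u_k$ span $\Ker L$ and $k$ is the number of connected components of $\graph{G}$ (the previous Fact).

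Next I would bound the largest eigenvalue. From the Rayleigh quotient and $2x_i x_j \leq x_i^2 + x_j^2$ one gets $\lambda_n = \max_{\|\mathbf{x}\| = 1} \sum_{ij \in \edges{G}} a_{ij}(x_i - x_j)^2 \leq 2\max_{\|\mathbf{x}\| = 1} \sum_{i} d_{ii} x_i^2 \leq 2 d_{\max}$; equivalently, Gershgorin's circle theorem applied to $L$ places every eigenvalue in $\bigcup_i [0, 2 d_{ii}] \subseteq [0, 2 d_{\max}]$. The matrix $M = I - \epsilon L$ shares the eigenvectors of $L$, with eigenvalues $\mu_i = 1 - \epsilon \lambda_i$. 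Thus $\mu_1 = \cdots = \mu_k = 1$ on $\Ker L$, while for $i > k$ we have $\lambda_i > 0$, so $\mu_i < 1$; and provided the step size satisfies $\epsilon \lambda_n < 2$ --- which, by the bound just established, holds as soon as $\epsilon < 1/d_{\max}$ --- we also have $\mu_i > -1$. Hence $\gamma := \max_{i > k} |\mu_i| < 1$.

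To finish I would split the initial condition orthogonally, $\mathbf{x}[0] = \mathbf{x}_{\parallel} + \mathbf{x}_{\perp}$ with $\mathbf{x}_{\parallel} \in \Ker L$ and $\mathbf{x}_{\perp} \in (\Ker L)^{\perp} = \mathrm{span}\{u_{k+1}, \dots, u_n\}$. Since $M$ fixes $\mathbf{x}_{\parallel}$ and contracts $(\Ker L)^{\perp}$ by the factor $\gamma$, a trivial induction gives $\mathbf{x}[t] = \mathbf{x}_{\parallel} + M^t \mathbf{x}_{\perp}$ with $\|M^t \mathbf{x}_{\perp}\| \leq \gamma^t \|\mathbf{x}_{\perp}\|$, so $\mathbf{x}[t] \to \mathbf{x}_{\parallel}$ geometrically with rate $\gamma$. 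I expect the only genuine subtlety to be the phrase ``globally exponentially stable'': the map $M$ always has spectral radius exactly $1$, since the kernel directions are fixed, so the statement must be read as exponential convergence of the disagreement component $\mathbf{x}[t] - \mathbf{x}_{\parallel}$ --- equivalently, of $\mathbf{x}[t]$ minus its orthogonal projection onto $\Ker L$, i.e.\ onto the time-independent solution identified in the previous Fact --- rather than as $\|\mathbf{x}[t]\| \to 0$. Everything else (the semidefiniteness, the $2 d_{\max}$ spectral bound, the geometric estimate) is routine linear algebra.
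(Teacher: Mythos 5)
Your proposal follows the same route as the paper's proof --- positive semidefiniteness of $L$, a Gershgorin (or Rayleigh-quotient) bound placing the spectrum of $L$ in $[0, 2d_{\max}]$, and then reading off the eigenvalues $\mu_i = 1 - \epsilon\lambda_i$ of the one-step map --- so in spirit it is the same argument. But your version is more careful on exactly the point where the paper's proof stumbles: the paper concludes that ``the eigenvalues of $P$ have magnitude less than one,'' which is false, since $\lambda = 0$ is always an eigenvalue of $L$ and hence $\mu = 1$ is always an eigenvalue of $I - \epsilon L$. Your orthogonal splitting $\mathbf{x}[0] = \mathbf{x}_{\parallel} + \mathbf{x}_{\perp}$ along $\Ker L$ and its complement, together with the observation that only the disagreement component contracts geometrically, is the correct repair: the system is stable (not asymptotically stable to the origin), and what decays exponentially is $\mathbf{x}[t]$ minus its projection onto $\Ker L$. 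You also silently correct the step-size condition: the statement's $\epsilon < 2d_{\max}$ is surely a typo, and your derivation $\epsilon\lambda_n < 2 \Leftarrow \epsilon < 1/d_{\max}$ (the paper's own computation uses $\epsilon < 1/(2d_{\max})$, which also works) is the condition that actually keeps $\mu_i > -1$. So: same approach, but your write-up is the one that would survive scrutiny.
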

\begin{proof}
	Let $P = 1 - \epsilon L$. The graph Laplacian is easily seen to be positive semi-definite. Hence, every eigenvalue is non-negative and has no complex part. By Gershgorin's Theorem \cite{horn2012matrix}, the eigenvalues of $L$ lie in the union of closed disks
	\begin{align*}
		\bigcup_{i = 1}^n \{z~\vert~|z - [L]_{ii}| \leq \sum_{i \neq j} |[L]_{ij}| \} \subseteq \C.
	\end{align*}
	Since the eigenvalues are real non-negative, we can conclude that the eigenvalues of $L$ are between $0 \leq \lambda \leq 2 d_{\max}$. If $\lambda_i$ is the $i$\textsuperscript{th} eigenvalue of $L$, then $\mu_i = 1- \epsilon \lambda_i$ is the $i$\textsuperscript{th} eigenvalue of $P$. Hence, if $\epsilon < 1/2d_{\max}$, then
	\begin{align*}
		\mu_i &=& 1- 1/2 d_{\max} \cdot \lambda_i \\
			&<& 1 - 1/2 d_{\max} \cdot 2 d_{\max} \\
			&=& 1
	\end{align*}
	Since the eignenvalues of $P$ have magnitude less than one, we conclude from elementary linear systems theory \cite{chen1984linear} the system \ref{eq:lap-dif-eq} is globally exponentially stable.
\end{proof}

\noindent Together, these imply the graph Laplacian diffusion process stabilizes at \define{locally constant signals}, signals constant on connected components of the graph.

Graph signal processing \cite{ortega2018graph} harnesses graph diffusion in order to filter graph signals at different scales. (Linear) \define{graph filters} are linear transformations represented by matrices
\begin{align*}
	H &=& \sum_{k=0}^{\infty} h_k S^k
\end{align*}
Filters act on signals by way of $\mathbf{z} = H \mathbf{x}$. Symmetric graph shift operators admit an orthonormal decomposition $S = V {\Lambda} V^{\top}$ where $\Lambda$ is a diagonal matrix of eigenvalues and $V$ is a column matrix of eigenvectors. Thus, maybe the best way to see what filters ``do'' is via the \define{graph Fourier transform}: $\hat{\mathbf{x}} = V^{\top} \mathbf{x}$. The upshot is that \define{filters} are component-wise determined by polynomials $\sum_{k=0}^{\infty} h_k \lambda_i^k$ in the Fourier domain for each eigenvalue $\lambda_i$ of $S$.


\subsection{Higher-order networks}
Although this work is focused on ordinary networks, we can't help but mention higher-order networks because the sheaf Laplacian presented in Chapter \ref{ch:tarski} is defined on these objects as well, as even the unimaginative reader may ascertain.

A \define{simplicial complex} $\Space{X}$ is a collection of subsets of a node set $\nodes{G}$ such that if $\sigma \in \Space{X}$ and $\tau \subseteq \sigma$, then $\tau  \in \Space{X}$. As a consequence, individual nodes $i \in \nodes{G}$ are elements of $\Space{X}$. Simply put, a simplicial complex is a \emph{set closed under taking subsets}. An a single subset is called \define{simplex} (plural: \define{simplicies}). Simplices are graded by dimension $\dim \sigma = |\sigma| + 1$. Let $\Space{X}_k$ denote the set of $\sigma \in \Space{X}$ such that $|\sigma| = k+1$. These subsets are called \define{k-simplices}. Simplices, as do graph, have boundaries and coboundaries consisting of $\boundary(\sigma) = \{ \tau \in \Space{X}_{\dim \sigma - 1}~\vert~ \tau \subsetneq \sigma\}$ and $\coboundary(\sigma) = \{\beta \in \Space{X}_{\dim \sigma + 1}~\vert~ \beta \supsetneq \sigma \}$.

Simplicial complexes, like graphs, are purely combinatorial objects, but they can be realized as subsets of Euclidean space, much as graphs can be embedded. The \define{standard} $n$-simplex is the space $\Delta [k] = \left\{ (t_0, \dots, t_k) \in \R^{k+1} ~\vert~ \sum_i t_i = 1,~t_i \geq 0~\forall i \right\}$. The \define{geometric realizaton} of $\Space{X}$ is the topological space $|\Space{X}|$ formed by gluing together standard simplices. In this way, we can interpret a simplicial complex as a collection of points, line segments, filled-in triangles, filled-in tetrehedra, and so forth. It is natural to consider face relations here instead of simple incidence relations. We write $\sigma \fc \tau$ if $\sigma \subseteq \tau$ and $|\tau| = |\sigma|+1$. The face relation poset $(\Face(\Space{X}), \fc)$ is the transitive reflexive closure of the face relation. A simplicial complex $\Space{X}$ is \define{orientated} if every $k$-simplex is endowed with an ordering $\tau = [i_0 i_1 \dots i_k]$. If $\sigma \fc \tau$, then, neccesarily, $\sigma$ consists of the data of a cyclic permutation of $\{i_0, i_1, \dots, i_k\}$. We say the \define{sign} of $\sigma$ is the parity of this permuation.

Hypergraphs generalize simplicial complexes, are far more practical, yet less understood. Hypergraphs model higher-order networks where relationships between agents are not required to be pair-wise (e.g.~group chats), and a relationship between a number of agents does not necessarily mean that each subset of agents also share a relationship on their own. A \define{hypergraph} $\graph{H}$ is a tuple $(\nodes{H}, \hyperedges{H})$ where $\hyperedges{H} \subseteq \powerset{\nodes{G}}$. As such, a hypergraph inherits the partial order on $\powerset{\nodes{G}}$ bestowing a hypergraph with the structure of a poset $(\hyperedges{H}, \subseteq)$. A simplicial complex, then, is synonymous with a down-closed hypergraph.

Finally, (regular) cell complexes are another generalization of simplicial complexes of a more topological flavor. Let $\mathbb{D}^k = \{ x \in \R^{k+1} ~\vert~ \| x\| \leq 1 \}$ be the unit disk (ball) of dimension $k$. A \define{regular cell complex} is constructed from sequence of topological spaces
\begin{align*}
	\emptyset = \Space{X}_{-1} \subseteq \Space{X}_0 \subseteq \Space{X}_1 \subseteq \Space{X}_2 \subseteq \cdots
\end{align*}
such that $\Space{X}_{k}$ is obtained from $X_{k-1}$ by ``gluing''copies of $\mathbb{D}^k$ to $\Space{X}_k$. For more facts about simplicial complexes and cell complexes consult a textbook on algebraic toplology \cite{hatcher2002algebraic}. Parity of faces as well as face relations port from simplicial complexes to this slightly more general setting.

\section{Representations of (Hyper)networks}

We discuss various order-theoretic and other combinatorial representation of graphs, hypergraphs and simplicial complexes

\begin{enumerate}
	\item  \textit{Incidence poset}. Suppose $\graph{G}$ is a graph. Incidence forms a relation on the nodes $\graph{G}$ denoted with the symbol $\fc$ according to the rule: $i \fc e$ if $i \in \boundary e$. For each edge $ij$, ther eare exactly two incidence relations $i \fc ij \cofc j$. The transitive-reflexive closure  of $\fc$ completes the incidence relation to a partial order on the union $\nodes{G} \cup \edges{G}$ which is denoted, in abuse of notation, $\poset{P}_{\graph{G}}$. We call this poset the \define{incidence poset} of $\graph{G}$ denoted $\poset{P}_{\graph{G}}$
	\item \textit{Face relation poset}. More generally, the simplices (cells) of a simplicial complex $\Space{X}$ (regular cell complex) form a relation with $\sigma \fc \tau$ if and only if $\sigma \in \boundary \tau$. The transitive reflexive closure of this relation is a poset denoted $\Face(\Space{X})$.
\end{enumerate}

Hypergraphs are notoriously unwieldy due to their lack of scalability and uniformity. Hence, it is popular to \emph{represent} hypergraphs with other structures, such as (undirected) graphs.
\begin{enumerate}
	\item[3.] \textit{Clique expansion}. The clique expansion of $\graph{H}$ is the graph $\graph{G_c}$ with $\nodes{G_c} = \nodes{H}$ and $ij \in \edges{G_c}$ if and only if there exists a hyperedges contianing both $i$ and $j$.
	\item[4.] \textit{Line graph}. The line graph of $\graph{H}$ is the graph $\graph{G_\ell}$ with $\nodes{G_\ell} = \hyperedges{H}$ and $ee' \in \edges{G_\ell}$ if and only if $e \cap e' \neq \emptyset$.
	\item[5.] \textit{Star espansion}. The star expansion is the bipartite graph $\graph{G_s}$ with $\nodes{G_s} = \nodes{H} \cup \hyperedges{H}$ and $(i, e) \in \edges{G_s}$ if and only if $i \in \nodes{H}$, $e \in \hyperedges{H}$ and $i \in e$.
\end{enumerate}
A \define{hypergraph signal} is simply a map $\mathbf{x}: \nodes{H} \to \R$. One benefit of these graph representations is that they unleash a variety of ways to process hypergraph signals via graph signal processing.


We turn our attention to various order-theoretic approachs to representing hypergraphs. The lattice of down-closed sets $\mathcal{D}(\hyperedges{H})$ is a finite distributive lattice. If a hypergraph is a simplicial complex, then every hyperedge is down-closed, hence included in $\mathcal{D}(\hyperedges{H})$. 
The data of the star expansion is equivalent to the data of a binary \define{membership relation} $\rel{I} \subseteq \nodes{H} \times \hyperedges{H}$ with $i \rel{I} e$ if and only if node $i \in e$. The Galois lattice $\Gal(\rel{I})$, then, offers another lattice representaiton of a hypergraph. Of note, the Galois lattice $\mathrm{Gal}(\rel{I})$ is isomorphic to the lattice of subsets $\sigma$ of nodes such that the intersection of the set of hyperedges containing every node in $\sigma$ is again $\sigma$. Alternatively $\mathrm{Gal}(\rel{I})$ is isomorphic to the lattice of subsets $\tau$ of hyperedges such that the set of hyperedges containing every node in the joint intersection of nodes belonging to $\tau$ is again $\tau$.


\chapter{Sheaves}\label{ch:sheaves} %

Sheaf theory is the mathematical study of the ``local to global.'' This is why it is not surprising that sheaf theory has been applied in settings where local data (behavior, information) extends to global data \cite{ghrist2017positive,moy2020path}. In the general setting, sheaves are indexed by a poset of open sets. Recall, a $\Space{X}$ be a topological space. Then, the open subsets of $\Space{X}$, denoted $\Open(\Space{X})$, is a poset with inclusion $(\Open(\Space{X}), \subseteq)$. Then, a sheaf valued in a data category $\cat{D}$ over a topological space $\Space{X}$ is a contravariant functor \[\sheaf{F}: \op{\Open(\Space{X})} \to \cat{D}\] In this chapter, we discuss a combinatorial version of sheaf theory called \define{network sheaf theory}. Instead of data being stored over open sets, it is stored over nodes and edges of an undirected graph. More details on general sheaf theory are found in Appendix \ref{ch:appendix-1}.

\section{Network Sheaves \& Cosheaves}

Sheaves are a powerful data structure for interpreting multi-agent systems with near-arbitrary flexibility as far as the type of information/data each agent retrieves and exchanges with her its neighbors. Sheaves encode assignments of data to a geometric or topological structure. Examples of structures that one can build sheaves over are many but some common ones are graphs, simplicial complexes, cell complexes, posets, topological spaces, sites and varieties. While some of these ``sheaf theories'' are subtle, \define{network sheaf theory} is not.

\begin{definition}[Network Sheaves \& Cosheaves] \label{def:network-sheaves}
	Suppose $\graph{G} = (\nodes{G}, \edges{G})$ is a graph, $\poset{P}_{\graph{G}}$ its incidence poset, and $\cat{D}$ a data category. A \define{network sheaf} is a functor $\sheaf{F}: \poset{P}_{\graph{G}} \to \cat{D}$. A \define{network cosheaf} is a functor $\cosheaf{F}: \op{\poset{P}_{\graph{G}}} \to \cat{D}$.
\end{definition}

A sheaf $\sheaf{F}$ assigns a $\sheaf{F}(i)$ in $\cat{D}$ called a \define{stalk} to every $i \in \nodes{G}$, a $\sheaf{F}(ij)$ to every edge $ij \in \edges{G}$, and \define{restriction maps}
\[\begin{tikzcd}
	{} & \sheaf{F}(i) \arrow[r, "\sheaf{F}(i \fc ij)"] & \sheaf{F}(ij) & \sheaf{F}(j )\arrow[l, "\sheaf{F}(j \fc ij)", swap] & {} \\
	{} \arrow[r, no head] & \bullet_i \arrow[rr, no head]  & & \bullet_j \arrow[ru, no head] \arrow[r, no head] & {} \\
	{} \arrow[ur, no head]	&	&	&	& {}
\end{tikzcd}\]
A cosheaf $\cosheaf{F}$ assigns a $\cosheaf{F}(i)$ in $\cat{D}$ called a \define{stalk} to every node $i \in \nodes{G}$, a $\cosheaf{F}(ij)$ to every $ij \in \edges{G}$, and \define{corestriction maps}
\[\begin{tikzcd}
	{} & \cosheaf{F}(i)  & \cosheaf{F}(ij) \arrow[l,"\cosheaf{F}(ij \cofc i)", swap] \arrow[r, "\cosheaf{F}(ij \cofc j)"] & \cosheaf{F}(j ) & {} \\
	{} \arrow[r, no head] & \bullet_i \arrow[rr, no head]  & & \bullet_j \arrow[ru, no head] \arrow[r, no head] & {} \\
	{} \arrow[ur, no head]	&	&	&	& {}
\end{tikzcd}\]
A \define{local (co)section} is a partiular element $x_i \in \sheaf{F}$ or $\cosheaf{F}(i)$.

If a (co)sheaf is a class (data structure), than a \define{(co)chain} in an instantiation of a (co)sheaf.
\begin{definition}[Cochains]
	Suppose $\sheaf{F}$ is a sheaf over $\graph{G}$ valued in a data category with products. The \define{cohains} of $\sheaf{F}$ are the product
	\begin{align*}
		C^\bullet(\sheaf{F}; \graph{G}) &=& \prod_{\sigma \in \nodes{G} \cup \edges{G}} \sheaf{F}(\sigma)
	\end{align*}
	We call the projection of $C^{\bullet}(\graph{G}; \sheaf{F})$ onto nodes the \define{0-cochains}
	\begin{align*}
		C^0(\sheaf{F}; \graph{G}) &=& \prod_{i \in \nodes{G}} \sheaf{F}(i)
	\end{align*}
	and the projection onto edges the \define{1-cochains}
	\begin{align*}
		C^1(\sheaf{F}; \graph{G} ) &=& \prod_{ij \in \edges{G}} \sheaf{F}(ij).
	\end{align*}
\end{definition}

Informally, (co)chains are sometimes called assignments to (co)sheaves \cite{robinson2020assignments}. A (co)chain is like a rough vector field \cite{lee2013smooth} on a manifold in that tangent vectors in a neighborhood need not vary smoothly or continuously.
.
One of the successes of sheaf theory is the formalization of local-to-global phenomena. Historically, sheaf theory \cite{bredon2012sheaf} has been motivated by the problem of extending (continuous, smooth, holomorphic, rational) functions defined on open sets in a (topological, manifold, complex manifold, variety) space to larger open sets. In the specialized case of network sheaves, the \define{local-to-global} problem asks \emph{if a $0$-cochain extend to a $1$-cochain?} In the affirmative, a $0$-cochian is said to be consistent.

\begin{definition}[Sections]\label{def:sections}
	Suppose $\sheaf{F}$ is a sheaf and $\cosheaf{F}$ is a cosheaf over $\graph{G}$. A \define{section} of $\sheaf{F}$ is an assignment $\mathbf{x} \in C^{\bullet}(\sheaf{F}; \graph{G})$
	\begin{align}
		\sheaf{F}_{i \fc ij}(x_i) &=& x_{ij} &=& \sheaf{F}(j \fc ij)(x_j) \label{eq:sections}
	\end{align}
	agreeing over every edge $ij$. The collection of sections of $\sheaf{F}$ are denoted $\sections{\graph{G}; \sheaf{F}}$.
\end{definition}

\noindent The following example illustrates, even in a very simple example, that restriction maps as well as the topology of the graph determine whether or not consistent assignments exist.

\begin{example}[Constant Sheaf]
	Let $\graph{G}$ be an arbitrary graph. Let $V$ be a vector space. The (linear) \define{constnat sheaf} denoted $\underline{V}$ is a sheaf valued in $\cat{Vec}$ with the following data
	\begin{align*}
		\underline{V}(i) &=& V \\
		\underline{V}(ij) &=& V \\
		\underline{V}_{i \fc ij} &=& \id_{V}.
	\end{align*}
	Of course the constnat sheaf makes sense in other data categories. It is the network sheaf with a particular object over every stalk and identity restriction maps.
\end{example}
\begin{example}[A Twisted Sheaf]
	Let $\sheaf{F}$ be the sheaf over a 3-clique (Figure \ref{fig:twisted}). In order for a $0$-cochain to extend to a $1$-cochain,
	\begin{align*}
		x_i &=& -x_j
	\end{align*}
	for every edge $ij$. However, if agreement is reached over two edges, then, agreement is impossible over the remaining edge because the assignemnts to its endpoints are the same. Hence, the only global section is the zero vector $\mathbf{x} = \mathbf{0}$.
\begin{figure}[h]
	\begin{center}
		\includegraphics[width=0.75 \textwidth]{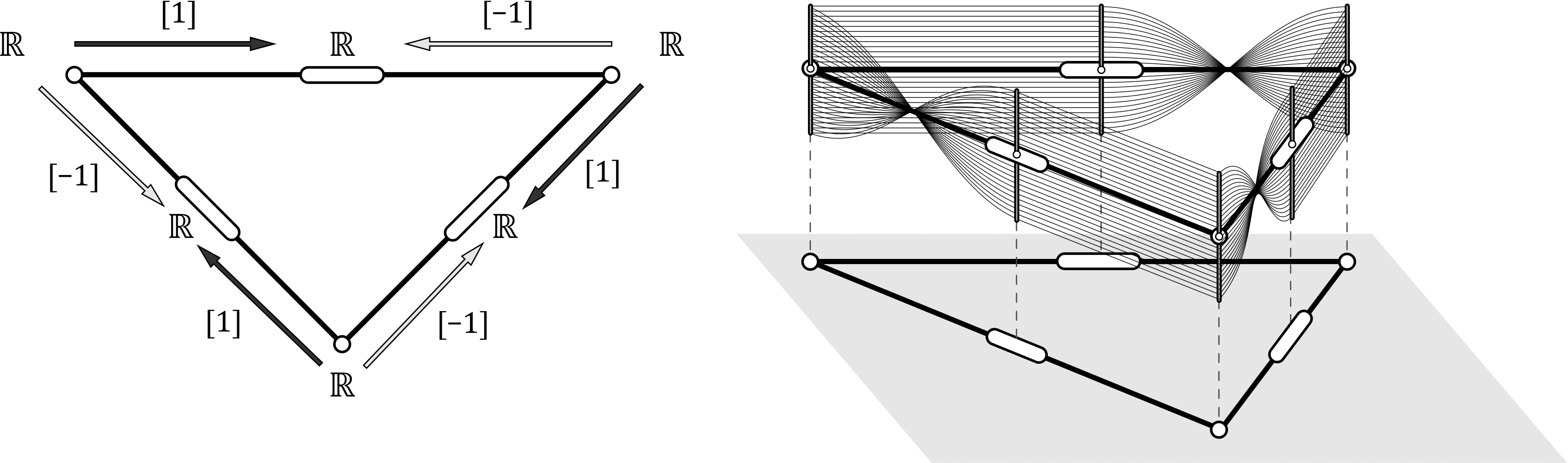}
		\caption{A twisted sheaf.} \label{fig:twisted}
	\end{center}
\end{figure}
\end{example}

How do you calculate sections of a sheaf? An orientation on $\graph{G}$ is a pair of maps
\begin{align*}
	(-)_{+,-}: \edges{G} \to \nodes{G},
\end{align*}
sending an edge $ij$ to its positive and negative endpoints ${ij}_{+} \cup {ij}_{-} = \{ i, j \}$. Given a choice of orientation, sections are equivalent to the following equalizer
\begin{equation}
\begin{tikzcd}
H^0(\graph{G}; \sheaf{F}) \arrow[r, dashed] & C^0(\graph{G};\sheaf{F}) \arrow[r,"\coboundary_{-}",shift left] \arrow[r,"\coboundary_{+}", shift right, swap] & C^1(\graph{G};\sheaf{F})
\end{tikzcd} \label{eq:section-equalizer}
\end{equation}
where $\coboundary_{-}, \coboundary_{+}$ are defined component-wise
\begin{align*}
(\coboundary_{-}\mathbf{x})_{ij} &=& \sheaf{F}({ij}_{-} \fc ij), \\
(\coboundary_{+}\mathbf{x})_{ij} &=& \sheaf{F}({ij}_{+} \fc ij).
\end{align*}
The maps $\coboundary_{-,+}$ are called \define{coboundary maps}. The equalizer condition specifies that $H^0(\graph{G}; \sheaf{F})$ is the largest subobject of $C^0(\graph{G};\sheaf{F})$ such that
\begin{align*}
	(\coboundary_{-}\mathbf{x})_{ij} &=& (\coboundary_{+}\mathbf{x})_{ij}
\end{align*}
for every $\mathbf{x} \in H^0(\graph{G}; \sheaf{F})$. In summary,

\begin{proposition}\label{thm:section-equalizer}
	$\sections{\graph{G}; \sheaf{F}} \cong H^0(\graph{G}; \sheaf{F})$. 
\end{proposition}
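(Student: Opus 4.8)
The plan is to unwind the two definitions and check that they describe the same subobject of $C^0(\graph{G};\sheaf{F})$. First I would fix an orientation $(-)_{+,-}$ on $\graph{G}$ as in the construction of \eqref{eq:section-equalizer}, and observe at once that the statement is independent of this choice: swapping ${ij}_+$ and ${ij}_-$ for a single edge merely interchanges the two parallel maps $\coboundary_-$ and $\coboundary_+$, and an equalizer of a pair $(f,g)$ is canonically an equalizer of $(g,f)$, so $H^0(\graph{G};\sheaf{F})$ is well defined independently of the orientation.

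Next I would compute the equalizer concretely. In each of the data categories in use (Definition \ref{def:relevant-categories}) equalizers exist and are computed ``on underlying sets'': the equalizer of $\coboundary_-,\coboundary_+ \colon C^0(\graph{G};\sheaf{F}) \to C^1(\graph{G};\sheaf{F})$ is the subobject
\[
  E \;=\; \bigl\{\, \mathbf{x} \in C^0(\graph{G};\sheaf{F}) ~\bigm|~ (\coboundary_-\mathbf{x})_{ij} = (\coboundary_+\mathbf{x})_{ij}\ \ \forall\, ij \in \edges{G} \,\bigr\}
\]
equipped with its inclusion $\iota\colon E \hookrightarrow C^0(\graph{G};\sheaf{F})$; this subobject indeed satisfies the universal property in the definition of equalizer, since any $j\colon D \to C^0(\graph{G};\sheaf{F})$ with $\coboundary_- j = \coboundary_+ j$ lands in $E$ and therefore factors uniquely through $\iota$. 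By the component-wise definitions of $\coboundary_{\pm}$, for a fixed edge $ij$ the equation $(\coboundary_-\mathbf{x})_{ij} = (\coboundary_+\mathbf{x})_{ij}$ reads $\sheaf{F}({ij}_- \fc ij)(x_{{ij}_-}) = \sheaf{F}({ij}_+ \fc ij)(x_{{ij}_+})$, and since $\{{ij}_+,{ij}_-\} = \{i,j\}$ this is exactly the agreement condition $\sheaf{F}(i \fc ij)(x_i) = \sheaf{F}(j \fc ij)(x_j)$ of \eqref{eq:sections}.

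It then remains to reconcile the two ambient objects: a section in Definition \ref{def:sections} is an element of the full cochain object $C^\bullet(\sheaf{F};\graph{G})$, whereas $H^0$ is carved out of $C^0(\graph{G};\sheaf{F})$. But for a section the edge-component $x_{ij}$ is forced to be the common value $\sheaf{F}(i \fc ij)(x_i)$, so the projection $C^\bullet(\sheaf{F};\graph{G}) \to C^0(\graph{G};\sheaf{F})$ restricts to a monomorphism on $\sections{\graph{G};\sheaf{F}}$ whose image is precisely $E$; composing with the equalizer isomorphism $E \cong H^0(\graph{G};\sheaf{F})$ yields $\sections{\graph{G};\sheaf{F}} \cong H^0(\graph{G};\sheaf{F})$. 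I expect the only genuine subtlety to be this last bookkeeping step — verifying that ``record the node-data and reconstruct the edge-data'' is an \emph{isomorphism in $\cat{D}$}, not merely a bijection of underlying sets — together with confirming that the chosen data category really has the equalizers implicit in \eqref{eq:section-equalizer} (for $\cat{Sup}$ and $\cat{Inf}$ this is immediate because $\{\mathbf{x} : \coboundary_-\mathbf{x} = \coboundary_+\mathbf{x}\}$ is closed under the relevant joins/meets, the more delicate case being $\cat{Ltc}$).
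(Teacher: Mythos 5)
Your proposal is correct and follows essentially the same route as the paper, which treats the proposition as immediate from unwinding the equalizer condition $(\coboundary_{-}\mathbf{x})_{ij} = (\coboundary_{+}\mathbf{x})_{ij}$ into the agreement condition of Definition \ref{def:sections}. Your added care about orientation-independence and about reconciling $C^\bullet(\sheaf{F};\graph{G})$ with $C^0(\graph{G};\sheaf{F})$ (the edge data of a section being determined by the node data) is sound and merely makes explicit what the paper leaves implicit.
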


An \define{additive category} is a data category in which maps have the structure of an abelian group; maps can be added, subtracted, and there is a zero map between any two objects.

\begin{example}[Additive Category]
	Suppose data category $\cat{D}$ is an additive category. Then, the equalizer \eqref{eq:section-equalizer} can be rewritten as the equalizer
		\begin{equation}
		\begin{tikzcd}
			H^0(\graph{G}; \sheaf{F}) \arrow[r, dashed] & C^0(\graph{G}; \sheaf{F}) \arrow[r,"\coboundary_{+} - \coboundary_{-}",shift left] \arrow[r,"0", shift right, swap] & C^1(\graph{G}; \sheaf{F})
		\end{tikzcd} \label{eq:abelian-section-equalizer}
		\end{equation}
		Then, $H^0(\graph{G}; \sheaf{F}) \cong \Ker(\coboundary_{+} - \coboundary_{-})$.
\end{example}

The consequence of the above example is that calculating cohomology, at least in the category of vector spaces, reduces to calculating the kernel of linear transformation. Of course, many information systems exhibit vector-valued data, but others exhibit other data types. Returning to the equalizer \eqref{eq:section-equalizer}, if the data category is not additive, there is no clear path forward. We cannot write the equalizers as a kernel becuase each restriction map in a $\cat{D}$-valued network sheaf has no additive inverse. Hence, standard practices of linear and homological algebra are of no avail. We have already encountered categories that fail to be additive: the category $\cat{Sup}$ of complete lattices and join-preserving maps, as well as the category $\cat{Inf}$ of complete lattices and meet-preserving maps.


\section{Homology \& Cohomology}

In the past 20 years, applied algebraic topology \cite{ghrist2014elementary} has flourished into a mature research community. Key to this development is \define{persistent homology} and software computing it \cite{otter2017roadmap}. Suppose $\Space{X}$ is a simplicial complex. Then, the \define{homology of $\Space{X}$ in degree $k$}, denoted $H_k(\Space{X})$, is a vector space whose basis elements correspond to independent connected components ($k=0$), cycles ($k=1$), voids ($k=2$), and higher dimensional versions of these ($k>2$). The difficulty with using homology directly to understand data is that it is not robust to perturbations. This leads to the notion of persistent homology introduced in the early 2000s \cite{edelsbrunner2000topological}. A filtration of simplicial complexes (spaces) is a collection of subcomplexes (subspaces) $\{\Space{X}\}_{t \in \poset{P}}$ indexed by some poset $\poset{P}$ such that
\begin{align*}
	\Space{X}_s \subseteq \Space{X}_t &~\text{whenever}~& s \preceq t
\end{align*}
in $\poset{P}$. Homology is functorial in the sense that continuous function between space $f: \Space{X} \to \Space{Y}$ induces a linear transformation between vector spaces $f: H_k(\Space{X}) \to H_k(\Space{X})$ \cite{hatcher2002algebraic}. Hence, the inclusion maps of a filtration induce a series of maps $H_k(\Space{X}_s) \to H_k(\Space{X}_t)$ whenever $s \preceq t$. Filtrations indexed by a chain $\poset{C}$ have point-wise decomposition into subobjects called a barcode \cite{oudot2017persistence} which has been highly successful in summarizing the topology of data \cite{ghrist2008barcodes}.

We have not said anything about how homology or persistent homology is actually computed. This is because network sheaves have cohomology, a dualization of homology. Cohomology is defined in a general algebraic setting, but is equivalent to homology in field coeficients by the usual isomoprhism between a vector space and its dual.

 \begin{definition}[Cochain Complex]
 	Suppose $\cat{D}$ is a category with a zero map. A \define{(bounded) cochain complex} is a sequence of objects $C^{\bullet} = \{ C^k \}_{k \geq 0}$ with maps between them called \define{coboundary maps}
 	\[\begin{tikzcd}
 		C^0 \arrow[r, "\coboundary^0"] & C^1 \arrow[r, "\coboundary^1"] &  C^2  \arrow[r,"\coboundary^{2}"] & C^3 \arrow[r,"\coboundary^3"] & \cdots
 	\end{tikzcd}\]
	such that $\coboundary^{k+1} \coboundary^{k} = 0$ for every $k \geq 0$.
	\end{definition}

If $\cat{D}$ has kernels, (normal) images, and quotients (See \cite{grandis2013homological}), the \define{cohomology} of $C^\bullet$ is 
\begin{align*}
	H^k(C^\bullet) &=& \Ker(\coboundary^k)/\Im(\coboundary^{k-1}).
\end{align*} 

\section{Hodge Laplacians}

Motivated by certain problems in multi-agent systems \cite{olfati2007consensus}, decentralized algorithms for computing homology of simplicial complexes have been proposed \cite{muhammad2006control}. In principle, in order to compute $H_k(\Space{X})$, a $k$-simplex (e.g. a node) can pass messages with neighboring $k$-simplices and eventually converge to a homology vector. These algorithms, while promising, have not been widely adopted. The concept, more broadly, of distributed computation of cohomology is pertinent to the study of sheaves, objects more conducive to distributive algorithms than unstructured simplicial complexes.

The following definition goes at least as far back as Eckmann \cite{eckmann1944harmonische}. Suppose $C^{\bullet}$ is a complex in the category of real (or complex) Hilbert spaces. Then, the \define{combinatorial Hodge Laplacian} is the linear transformation
\begin{align}
	\Hodge_\bullet &:& C^{\bullet} \to C^{\bullet} \nonumber \\
	\Hodge_\bullet &=& {\coboundary}^{\ast} \circ \coboundary + \coboundary \circ \coboundary^\ast \label{eq:comb-hodge-lap}
\end{align}
The Hodge Laplacian integrates cochains in $C^k$ with cochains in $C^{k-1}$ and $C^{k+1}$, suggesting the following dynamical system, as a natural generalization of the \emph{heat equation}
\begin{align}
	\dot{\mathbf{x}} &=& - \Hodge_k \mathbf{x}, \label{eq:hodge-heat-flow}
\end{align}
with initial condition $\mathbf{x}(0) \in C^k$.

Moreover, the interpretation of the stable points of the \define{heat equation} is justified by the following theorem.
\begin{theorem}[Hodge Theorem] \label{thm:hodge}
	Suppose $(C^\bullet, \coboundary^\bullet)$ is a cochain complex of inner-product spaces. Then,
	\begin{align*}
		H^k(C^\bullet) &\cong& \Ker(\Hodge^k) \label{eq:hodge-theorem}.
	\end{align*}
\end{theorem}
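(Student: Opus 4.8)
The plan is to run the classical Hodge argument degree by degree. Fix $k$ and assume the $C^k$ are finite-dimensional (as they are for the cochain complex of a finite graph or complex), so that orthogonal complements split every subspace and $\Ker(\rast{A}) = (\Im A)^{\perp}$ for a map $A$ of inner-product spaces; in the general Hilbert-space case one additionally needs the coboundary images to be closed. The first step is to identify $\Ker \Hodge_k$ with the harmonic cochains: for $\mathbf{x} \in C^k$,
\begin{align*}
	\langle \Hodge_k \mathbf{x}, \mathbf{x}\rangle &=& \langle \rast{(\coboundary^k)}\coboundary^k\mathbf{x},\mathbf{x}\rangle + \langle \coboundary^{k-1}\rast{(\coboundary^{k-1})}\mathbf{x},\mathbf{x}\rangle \\
	&=& \|\coboundary^k\mathbf{x}\|^2 + \|\rast{(\coboundary^{k-1})}\mathbf{x}\|^2,
\end{align*}
so positive-definiteness of the inner product forces $\Hodge_k\mathbf{x} = 0$ exactly when $\coboundary^k\mathbf{x} = 0$ and $\rast{(\coboundary^{k-1})}\mathbf{x} = 0$. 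Hence $\Ker\Hodge_k = \Ker\coboundary^k \cap \Ker\rast{(\coboundary^{k-1})} =: \Harmonic^k$.

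Next I would establish the orthogonal decomposition $C^k = \Im\coboundary^{k-1} \oplus \Harmonic^k \oplus \Im\rast{(\coboundary^k)}$. The ingredients are: $\Im\coboundary^{k-1} \perp \Im\rast{(\coboundary^k)}$, since $\coboundary^k\coboundary^{k-1} = 0$ gives $\langle \coboundary^{k-1}\mathbf{y}, \rast{(\coboundary^k)}\mathbf{z}\rangle = \langle \coboundary^k\coboundary^{k-1}\mathbf{y}, \mathbf{z}\rangle = 0$; the identities $(\Im\rast{(\coboundary^k)})^{\perp} = \Ker\coboundary^k$ and $(\Im\coboundary^{k-1})^{\perp} = \Ker\rast{(\coboundary^{k-1})}$; and the cochain relation $\Im\coboundary^{k-1} \subseteq \Ker\coboundary^k$. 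Combining these, $\Ker\coboundary^k = \Im\coboundary^{k-1} \oplus \bigl(\Ker\coboundary^k \cap (\Im\coboundary^{k-1})^{\perp}\bigr) = \Im\coboundary^{k-1} \oplus \Harmonic^k$ as an orthogonal sum, and taking orthogonal complements inside $C^k$ produces the remaining summand $\Im\rast{(\coboundary^k)}$.

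Finally I would show the quotient map $\Ker\coboundary^k \to H^k(C^\bullet) = \Ker\coboundary^k/\Im\coboundary^{k-1}$ restricts to an isomorphism on $\Harmonic^k$. Surjectivity is immediate from the splitting $\Ker\coboundary^k = \Im\coboundary^{k-1} \oplus \Harmonic^k$ just obtained, so every class has a harmonic representative. For injectivity, if $\mathbf{h} \in \Harmonic^k$ lies in $\Im\coboundary^{k-1}$, say $\mathbf{h} = \coboundary^{k-1}\mathbf{y}$, then $\|\mathbf{h}\|^2 = \langle\mathbf{h}, \coboundary^{k-1}\mathbf{y}\rangle = \langle\rast{(\coboundary^{k-1})}\mathbf{h}, \mathbf{y}\rangle = 0$ because $\mathbf{h} \in \Ker\rast{(\coboundary^{k-1})}$, so $\mathbf{h} = 0$. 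Together with the first step, $H^k(C^\bullet) \cong \Harmonic^k = \Ker\Hodge_k$.

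The only genuine obstacle is the decomposition step: it is the single place where the inner-product structure is essential, and it is pure finite-dimensional linear algebra (rank--nullity together with $\Ker\rast{A} = (\Im A)^{\perp}$). In an infinite-dimensional Hilbert space it would additionally require the coboundary images to be closed, but for the finite cochain complexes relevant here this is automatic, and the proof reduces to the three bookkeeping steps above.
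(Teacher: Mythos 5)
Your proof is correct and takes essentially the same approach as the paper's: both identify $\Ker(\Hodge_k)$ with $\Ker(\coboundary^k) \cap \Ker\bigl(\rast{(\coboundary^{k-1})}\bigr)$ via positive-semidefiniteness of the two summands of the Laplacian, and both rest on the identity $\Ker\bigl(\rast{(\coboundary^{k-1})}\bigr) = \bigl(\Im \coboundary^{k-1}\bigr)^{\bot}$. The only difference is that you explicitly justify the isomorphism $\Ker(\coboundary^k)/\Im(\coboundary^{k-1}) \cong \Ker(\coboundary^k) \cap \bigl(\Im\coboundary^{k-1}\bigr)^{\bot}$ through the orthogonal splitting and the harmonic-representative argument, a step the paper asserts by appeal to the definition of quotients of finite-dimensional inner-product spaces.
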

\noindent We provide a proof for the convenience of the reader.
\begin{proof}[Proof of the Hodge Theorem]
For ease of notation, let $Z_k = \Ker(\coboundary^k)$ and $B_k = \Im(\coboundary^{k-1})$, called the \define{cocycles} and \define{coboundaries}, respectively. The Hodge Laplacian decomposes into a sum of symmetric, and hence positive semi-definite, linear transformations
\begin{align*}
\Hodge_k &=& \Hodge_k^{+} + \Hodge_k^{-} \\
		 &=& {\coboundary^{k}}^{\ast}\coboundary^k + \coboundary^{k-1}{\coboundary^{k-1}}^\ast
\end{align*}
called the \define{upper-} and \define{lower-} \define{Hodge Laplacians}.
Recall, by definition of quotient vector spaces, in the category of finite-dimensional inner-product spaces
\begin{align*}
H^k(C^\bullet) 	&\cong& Z_k \cap B_k^{\bot}.
\end{align*}
We have $x \in \Ker(\Hodge_k^{+})$ if and only if $\langle x,\Hodge_k^{+} x \rangle_{C^k} = 0$ for all $x \in C^k$. Similarly, $x \in \Ker(\Hodge_k^{-})$ if and only if $\langle x, \Hodge_k^{-} x \rangle_{C^k}= 0$. Hence, it suffices to show
\begin{align*}
	\Ker(\Hodge_k^{+} + \Hodge_k^{-}) &=& \\
	\Ker(\Hodge_k^{+}) \cap \Ker(\Hodge_k^{-}) &=& Z_k \cap B_k^{\bot}.
\end{align*}
Thus, $x \in \Ker(\Hodge_k)$ if and only if
\begin{align*}
	\langle \coboundary^{k} x, \coboundary^{k} x \rangle_{C^k} &=& 0 \\
	\langle {\coboundary^{k-1}}^\ast  x, {\coboundary^{k-1}}^\ast x \rangle_{C^k} &=& 0
\end{align*}
which holds if and only if $x \in \Ker(\coboundary^k) \cap \Ker({\coboundary^{k-1}}^{\ast})$.

We claim $\Ker({\coboundary^{k-1}}^{\ast}) = B_k^{\bot}$. First, $x \in \Ker({\coboundary^{k-1}}^{\ast})$ if and only if
\begin{align*}
	\langle {\coboundary^{k-1}}^{\ast} x,  y \rangle_{C^{k-1}} &=& 0.
\end{align*}
for every $y \in C^{k-1}$. Then,
\begin{align*}
	 \langle x, \coboundary^{k-1} y \rangle_{C^k} &=& \langle {\coboundary^{k-1}}^{\ast} x,  y \rangle_{C^{k-1}} &=& 0
\end{align*}
showing that $x$ is orthogonal to every coboundary if and only ${\coboundary^{k-1}}^{\ast} x = 0$.
\end{proof}
By basic theory of ordinary differential equations, the solution to the heat equation \eqref{eq:hodge-heat-flow} is written in closed form
\begin{align*}
	\textbf{x}(t) &=&  e^{-\Delta_k t} \textbf{x}(0).
\end{align*}
Because $\Hodge_k$ is a positive semidefinite linear operator, the Hodge Theorem implies $x(t)$ converges to a projection of $\mathbf{x}(0)$ onto $H^k(C^\bullet)$ as $t \to \infty$. Thus, a n\"{a}ive algorithm for computing $H^k(C^\bullet)$ consists of the following steps.
\begin{enumerate}
	\item Select a $\mathbf{x}(0) \in C^k$. 
	\item Compute the trajectory of the heat equation \eqref{eq:hodge-heat-flow} with initial condition $\mathbf{x}(0)$.
	\item Add $\mathbf{x}(\infty)$ to a growing basis of $H^k(C^\bullet)$.
\end{enumerate}
A strategy to compute cohomology, such as the one above, in prinicple could be parallelized because because the Hodge Laplacian is a local operator. However, it would be our guess that practical difficulties lie in numerical stability. On the other hand, calculating the entire cohomology subspace is not always practical or necessary. Applications often call for calculating a single nearest or optimal cohomology class.



\subsection{Sheaf Laplacians}

We are primarily interested in dynamical systems on sheaves, because, after all, the final destination for this line or research is designing, learning, and controlling multi-agent systems.
For sheaves valued in the data category of real or complex Hilbert spaces, the fundamental ingredients for a Hodge theory are present due to the following fact. For the following, suppose $\Space{X}$ is a simplicial complex or regular cell complex. Then, a \define{cellular sheaf} valued in a category $\cat{D}$ is a functor $\sheaf{F}: \Face(\Space{X}) \to \cat{D}$. As a special case, if $\Space{X}$ is a graph, then a cellular sheaf over $\Space{X}$ is exactly the same data as a network sheaf.

\begin{proposition}
	Suppose $\sheaf{F}: \Face(\Space{X}) \to \cat{Hil}$ is a sheaf of real or complex Hilbert spaces over a simplicial complex. Suppose each simplex $\sigma \in \Space{X}_k$ is encoded as an ordered tuple. Then, the following is a cochain complex
	\[ \label{eq:sheaf-cochain-complex}
 		\begin{tikzcd}
 			C^0(\Space{X}; \sheaf{F}) \arrow[r, "\coboundary^0"] & C^1(\Space{X}; \sheaf{F}) \arrow[r, "\coboundary^1"] & \cdots \arrow[r,"\coboundary^{k-1}"] & C^k(\Space{X};\sheaf{F}) \arrow[r,"\coboundary^k"] & \cdots
 		\end{tikzcd}
 	\]
 	with
 	\begin{align*}
 		C^k(\Space{X}; \sheaf{F}) &=& \bigoplus_{\sigma \in \Space{X}_k} \sheaf{F}(\sigma), \\
 		\left( \coboundary^k \mathbf{x} \right)_{\tau} &=& \sum_{\sigma \fc \tau} {\mathrm{sgn}(\sigma)}\cdot \sheaf{F}(\sigma \fc \tau )(x_{\sigma}).
 	\end{align*}
\end{proposition}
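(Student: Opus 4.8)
The plan is to check the single defining identity $\coboundary^{k+1}\circ\coboundary^{k}=0$ for every $k\geq 0$; the rest is immediate. The objects $C^k(\Space{X};\sheaf{F})=\bigoplus_{\sigma\in\Space{X}_k}\sheaf{F}(\sigma)$ are Hilbert spaces, and each $\coboundary^k$ is linear since componentwise it is a finite signed sum of the linear restriction maps $\sheaf{F}(\sigma\fc\tau)$. So the whole content is the nilpotency of $\coboundary^\bullet$.

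First I would fix $k\geq 0$, a cochain $\mathbf{x}\in C^k(\Space{X};\sheaf{F})$, and a simplex $\rho\in\Space{X}_{k+2}$, and expand the $\rho$-component directly from the defining formula:
\begin{align*}
	\left(\coboundary^{k+1}\coboundary^{k}\mathbf{x}\right)_{\rho} &= \sum_{\tau\,\fc\,\rho}\mathrm{sgn}(\tau)\,\sheaf{F}(\tau\fc\rho)\!\left(\left(\coboundary^{k}\mathbf{x}\right)_{\tau}\right) \\
	&= \sum_{\tau\,\fc\,\rho}\ \sum_{\sigma\,\fc\,\tau}\mathrm{sgn}(\tau)\,\mathrm{sgn}(\sigma)\,\sheaf{F}(\tau\fc\rho)\sheaf{F}(\sigma\fc\tau)(x_{\sigma}),
\end{align*}
using linearity of $\sheaf{F}(\tau\fc\rho)$ to pull it inside the inner sum. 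By functoriality of $\sheaf{F}$ on the poset $\Face(\Space{X})$, the composite $\sheaf{F}(\tau\fc\rho)\sheaf{F}(\sigma\fc\tau)$ equals a single map $R_{\sigma\rho}\colon\sheaf{F}(\sigma)\to\sheaf{F}(\rho)$ depending only on the pair $\sigma\preceq\rho$, not on the intermediate face $\tau$. Interchanging the order of summation, the $\rho$-component becomes $\sum_{\sigma}\Big(\sum_{\tau:\ \sigma\,\fc\,\tau\,\fc\,\rho}\mathrm{sgn}(\sigma)\,\mathrm{sgn}(\tau)\Big)R_{\sigma\rho}(x_{\sigma})$, the outer sum ranging over the codimension-$2$ faces $\sigma$ of $\rho$.

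The remaining work is the classical sign cancellation. For a fixed $\sigma$ with $\sigma\preceq\rho$ and $|\sigma|=|\rho|-2$, there are exactly two faces $\tau$ with $\sigma\fc\tau\fc\rho$: if $\rho=[v_0\cdots v_{k+2}]$ and $\sigma$ omits the vertices in positions $i<j$, these are $\tau'=\rho\setminus\{v_j\}$ and $\tau''=\rho\setminus\{v_i\}$. With the incidence-sign convention inherited from the orientations (deleting the $\ell$-th vertex carries sign $(-1)^{\ell}$), the path through $\tau'$ contributes $(-1)^{i}(-1)^{j-1}$ while the path through $\tau''$ contributes $(-1)^{j}(-1)^{i}$; these are negatives of each other, so the bracketed coefficient of each $R_{\sigma\rho}(x_{\sigma})$ vanishes. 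Hence $\left(\coboundary^{k+1}\coboundary^{k}\mathbf{x}\right)_{\rho}=0$ for all $\rho\in\Space{X}_{k+2}$, and since $k$ and $\mathbf{x}$ were arbitrary, $\coboundary^{k+1}\coboundary^{k}=0$.

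I expect the only genuine obstacle to be pinning down the sign convention for $\mathrm{sgn}$ so that it is compatible with the parity/orientation conventions set up earlier for cell complexes, and then carrying the two-intermediate-faces cancellation through inside that convention; everything else (linearity, functoriality, reindexing the double sum) is formal. It is worth noting that the argument uses nothing about Hilbert spaces beyond the additive structure on hom-sets, so verbatim it shows $C^\bullet(\Space{X};\sheaf{F})$ is a cochain complex for any $\sheaf{F}$ valued in an additive category --- precisely the structure that fails for $\cat{Sup}$-valued sheaves and that motivates the Tarski Laplacian.
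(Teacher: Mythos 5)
Your argument is correct and is the standard one; the paper itself offers no proof here, deferring entirely to the citation of Curry's thesis, so there is no in-paper argument to compare against --- what you have written is essentially the proof that reference supplies. The three ingredients you isolate (linearity, functoriality of $\sheaf{F}$ on $\Face(\Space{X})$ collapsing the composite to a map $R_{\sigma\rho}$ independent of the intermediate face, and the two-path sign cancellation for a codimension-two face) are exactly what is needed, and your closing observation that only the additive structure on hom-sets is used --- so the same computation works in any additive data category and is precisely what breaks for $\cat{Sup}$ --- is the right way to connect this proposition to the paper's motivation for the Tarski Laplacian.

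One point worth making explicit: as literally written in the statement, the sign ${\mathrm{sgn}(\sigma)}$ appears to depend on $\sigma$ alone, and with that reading the coefficient of $R_{\sigma\rho}(x_\sigma)$ would be $\mathrm{sgn}(\sigma)\bigl(\mathrm{sgn}(\tau')+\mathrm{sgn}(\tau'')\bigr)$, which has no reason to vanish. Your proof silently (and correctly) reinterprets the sign as the relative incidence sign $[\sigma:\tau]=(-1)^{\ell}$ attached to the \emph{pair}, i.e.\ to the position of the deleted vertex; only under that convention do the two contributions $(-1)^{i+j}$ and $(-1)^{i+j-1}$ cancel. You flag this yourself as ``pinning down the sign convention,'' but it is not merely a convention to be pinned down --- it is the one reading under which the proposition is true, so it deserves to be stated as a correction to the notation rather than left as an expected obstacle. (The swap of which path carries $(-1)^{j-1}$ versus $(-1)^{j}$ in your bookkeeping is immaterial, since only the fact that the two products are negatives of each other is used.)
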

\begin{proof}
	See \cite{curry2014sheaves}.
\end{proof}
\noindent Consequently, the \define{sheaf cohomoloy} $H^k(\Space{X};\sheaf{F})$ is nothing but the cohomology of the complex $ C^\bullet(\Space{X}; \sheaf{F})$, and the \define{linear sheaf Laplacian}
\begin{align*}
	\Laplacian_k: C^k(\Space{X}; \sheaf{F}) \to C^k(\Space{X}; \sheaf{F})
\end{align*}
is nothing but the Hodge Laplacian of $C^\bullet$. Restricting, now, to the case of linear sheaves over graphs, network sheaves, the linear sheaf Laplacian ($k=0$) has has the following form
\begin{align}
	(\Laplacian \mathbf{x})_i &=& \sum_{j \in \nbhd{i}} \sheaf{F}_{i \fc ij}^{\ast}\left( \sheaf{F}_{i \fc ij}(x_i) - \sheaf{F}(j \fc ij)(x_j) \right) \label{eq:sheaf-laplacian}
\end{align}

\section{Examples of Sheaf Laplacians}

The examples to follow show that, indeed, network sheaf Laplacians generalize both graph Laplacians and graph connection Laplacians, each of which, in turn, approximate smooth Laplacians on manifolds. Both the graph Laplacian and the graph connection Laplacian serve as inspiration for Laplacians of lattice-valued sheaves (Chapter \ref{ch:tarski}).

\subsection{Graph Laplacians}

We now revisit the graph Laplacian with a ``sheafy'' outlook. Suppose $\Space{M} \subseteq \R^N$ is a $k$-dimensional smooth (Riemann) manifold. Recall, a smooth manifold consists of the data of a space $\Space{M}$ and a homeomorphism
\begin{align*}
	\phi_U: U \to \R^k
\end{align*}
for every open set $U$ called a \define{coordinate chart}
such that
\begin{align*}
\phi_V\phi_U^{-1}: \phi_U(U \cap V) \to \phi_V(U \cap V)
\end{align*}
is a differentiable for open sets $U \cap V \neq \emptyset$. A real-valued function $f: U \to \R$ on an open $U$ is said to be \define{smooth} if $f \circ \phi_U^{-1}: \phi_U(U) \to \R$ is differentiable. Real-valued functions on open sets define sheaf\footnote{See \cite{bredon2012sheaf}, for example} $\mathcal{O}_{\Space{M}}$ defined by the following data
\begin{align*}
	\mathcal{O}_{\Space{M}} &:& \op{\Open(\Space{M})} \to \cat{Vec} \\
	\mathcal{O}_{\Space{M}}(U) &=& \{f: U \to \R \} \\ 
	\mathcal{O}_{\Space{M}}(V \subseteq U)(f) &=& f_{\vert V}.
\end{align*}
If $\graph{G}$ is a weighted complete graph constructed from $\Space{M}$ by sampling points $\{x_1, x_2, \dots, x_n\}$ uniformly from $\Space{M}$ and weighting edges inversely proportional to the distances of their boundary in $\R^N$, then the constant sheaf $\underline{\R}$ over $\graph{G}$ approximates the sheaf of smooth functions on $\Space{M}$. This weighted graph is directly weighted to the following network sheaf
\begin{align*}
	\mathcal{F}(i) &=& \R \\
	\mathcal{F}(ij) &=& \R \\
	\mathcal{F}_{i \fc ij}(x_i) &=& e^{-\frac{\| x_i - x_j\|^2}{4t}} \cdot x_i 
\end{align*}
where $t>0$.
The sheaf Laplacian of $\sheaf{F}$ is the calculated
\begin{align*}
	(\Laplacian \mathbf{x})_i &=& \sum_{j \in \nbhd{i}} \left(e^{-\frac{\| x_i - x_j\|^2}{2t}}\right)^{\ast} \left( e^{-\frac{\| x_i - x_j\|^2}{2t}} x_i - e^{-\frac{\| x_i - x_j\|^2}{2t}} x_j \right) \\
								&=& \left( \sum_{j \in \nbhd{i}} e^{-\frac{\| x_i - x_j\|^2}{t}} \right) \cdot x_i - \sum_{j \in \nbhd{i}} e^{-\frac{\| x_i - x_j\|^2}{t}} \cdot x_j, \\
								&=& d_i x_i - \sum_{j \in \nbhd{i}} a_{ij} x_j 
\end{align*}


We can identify graph Laplacians of arbitrary graphs as sheaf Laplacians in the same manner by setting stalks $\R$ and defining both restriction maps over an edge $ij \in \edges{G}$ to be multiplication by $\sqrt{a_{ij}}$. In the above case, as the parameter $t$ approaches zero, the graph Laplacain, and thus the sheaf Laplacian is shown to approximate the Laplace-Beltrami operator, a classical operator \cite{lee2018introduction} acting on real-valued functions on $\Space{M}$ (i.e.~the global sections of $\mathcal{O}$, i.e.~the scalar fields on $\Space{M}$). For the following, suppose $\Space{M} \subseteq \R^N$ is a $k$-dimensional maniold with Laplace-Beltrami operator $\Delta$, and suppose $\{x_1, \dots, x_n\} \subseteq \Space{M}$ is collection of points sampled uniformly from $\Space{M}$. Set $t_n = n^{-\frac{1}{k+2+\alpha}}$ where $\alpha > 0$, and let $\Laplacian_n^{t_n}$ be the graph Laplacian of the complete weighted graph on $\nodes{G} = \{x_1, \dots, x_n\}$ with weights
	\begin{align*}
		a_{ij} &=& e^{-\frac{{\| x_i - x_j \|}^2}{t_n}}.
	\end{align*}
\begin{theorem}[\cite{belkin2003laplacian}] \label{thm:graph-lap-approximation}
	Suppose $f \in C^\infty(\Space{M})$. Then,
	\begin{align*}
	\lim_{n \to \infty} \frac{1}{t_n (5 \pi t_n)^{\frac{k}{2}}} \Laplacian_n^{t_n} f(x) &=& \frac{1}{\mathrm{vol}(\Space{M})} \Delta f(x)
	\end{align*}
	in probability.
\end{theorem}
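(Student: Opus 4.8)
The plan is to split the claimed limit into a \emph{variance} contribution---replacing the empirical sum over the sampled points $x_1,\dots,x_n$ by an integral against the normalized volume measure on $\Space{M}$---and a \emph{bias} contribution---letting the bandwidth $t_n\to 0$ in the resulting integral operator---and then to check that the rate $t_n = n^{-1/(k+2+\alpha)}$ drives both to zero, the first in probability and the second deterministically. Since, as computed just above, $\Laplacian_n^{t_n}$ is exactly the network-sheaf Laplacian of the constant sheaf $\underline{\R}$ with Gaussian restriction maps, this will also pin down the small-$t$ behaviour of that sheaf Laplacian.

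First I would write
\[
\Laplacian_n^{t_n} f(x) = \sum_{j=1}^n e^{-\|x-x_j\|^2/t_n}\bigl(f(x)-f(x_j)\bigr)
\]
and treat the normalized quantity as an average $\frac1n\sum_{j=1}^n Y_j$ of i.i.d.\ random variables $Y_j = c\,t_n^{-1}(\pi t_n)^{-k/2}e^{-\|x-x_j\|^2/t_n}\bigl(f(x)-f(x_j)\bigr)$, with $c$ the explicit constant produced below. A Chebyshev (or Bernstein) estimate, using $|f(x)-f(x_j)| \leq \mathrm{Lip}(f)\,\|x-x_j\|$ together with the fact that the Gaussian weight integrates against the volume measure to something of order $t_n^{k/2}$, shows that $\frac1n\sum_j Y_j$ concentrates around $\mathbb{E}[Y_1]$ with fluctuations of order $\bigl(n\,t_n^{k/2+1}\bigr)^{-1/2}$ times a constant depending only on $f$ and the geometry of $\Space{M}$; since $t_n = n^{-1/(k+2+\alpha)}$ forces $n\,t_n^{k/2+1}\to\infty$, this term tends to $0$ in probability.

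Next I would analyze the deterministic expectation
\[
I_{t}(x) = \frac{c}{t(\pi t)^{k/2}}\int_{\Space{M}} e^{-\|x-y\|^2/t}\bigl(f(x)-f(y)\bigr)\,\frac{d\mathrm{vol}(y)}{\mathrm{vol}(\Space{M})}
\]
as $t\to 0$. The Gaussian factor localizes the integral to a geodesic ball of radius $O\bigl(\sqrt{t\log(1/t)}\bigr)$ about $x$, the complementary region contributing at most $e^{-\kappa/t}$ by compactness of $\Space{M}$ and positivity of the injectivity radius. On that ball I would pass to geodesic normal coordinates $u\in\R^k$ centered at $x$, in which $d\mathrm{vol}(y) = \bigl(1+O(|u|^2)\bigr)\,du$, the chordal distance in $\R^N$ satisfies $\|x-y\|^2 = |u|^2 + O(|u|^4)$ with the quartic correction controlled by the second fundamental form of the embedding, and $f(y) = f(x) + \langle\nabla f(x),u\rangle + \tfrac12\mathrm{Hess}\,f(x)(u,u) + O(|u|^3)$. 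Substituting and carrying out the Gaussian moment integrals---odd moments vanish, $\int_{\R^k} u_iu_j\,e^{-|u|^2/t}\,du = \tfrac{t}{2}\delta_{ij}(\pi t)^{k/2}$, and the curvature and volume corrections contribute only at relative order $O(t)$---would give $I_t(x) = \frac{1}{\mathrm{vol}(\Space{M})}\bigl(\Delta f(x) + O(t)\bigr)$, uniformly in $x$ for $f\in C^\infty(\Space{M})$, with the constant $c$ (hence the $(5\pi t_n)^{k/2}$ normalization in the statement) fixed by this second-moment computation.

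Combining the two steps, the triangle inequality would yield
\[
\Bigl|\tfrac{1}{t_n(5\pi t_n)^{k/2}}\Laplacian_n^{t_n}f(x) - \tfrac{1}{\mathrm{vol}(\Space{M})}\Delta f(x)\Bigr| \;\leq\; \underbrace{O(t_n)}_{\text{bias}} \;+\; \underbrace{O_{\mathbb{P}}\!\bigl((n\,t_n^{k/2+1})^{-1/2}\bigr)}_{\text{variance}},
\]
and $t_n = n^{-1/(k+2+\alpha)}$ makes both terms vanish, which is the asserted convergence in probability. I expect the hard part to be the bias estimate: one has to argue carefully that (i) the Gaussian tail outside the injectivity-radius ball is genuinely negligible, and (ii) the $O(|u|^4)$ gap between chordal and geodesic distance, once expanded inside the exponential, perturbs the leading term only at order $t$---and this is exactly where smoothness of the embedding (bounded second fundamental form) enters. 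The remaining bookkeeping---uniformity in $x$, the precise numerical constant, and the Bernstein bound for the variance---is routine.
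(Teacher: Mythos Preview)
The paper does not supply its own proof of this theorem: the statement is attributed directly to \cite{belkin2003laplacian} in the theorem header, and the text moves on immediately to the connection Laplacian without further argument. There is therefore no in-paper proof to compare your proposal against.

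That said, your bias--variance decomposition is exactly the strategy of the cited reference. Belkin and Niyogi first show the deterministic integral operator converges to $\Delta$ via a Taylor expansion in geodesic normal coordinates (your $I_t(x)$ analysis), then control the deviation of the empirical sum from its expectation by a concentration inequality, and finally choose $t_n$ so that both error terms vanish. Your sketch tracks this faithfully, including the key geometric inputs (comparison of chordal and geodesic distance via the second fundamental form, volume distortion in normal coordinates, Gaussian tail outside the injectivity radius). The only point where you are slightly loose is the identification of the normalizing constant---the ``$5\pi$'' in the statement is almost certainly a typo for $4\pi$ in the original, and your second-moment computation would produce the latter---but this does not affect the structure of the argument.
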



\subsection{Connection Laplacians}
\label{sec:connection-laplacians}

With the graph Laplacian being the first key example of a sheaf Laplacain, we introduce a second example example, the graph connection Laplacian.

Recall, some other elements from differential geometry \cite{jost2008riemannian}. Suppose $\Space{M}$ is a $k$-dimensional Riemanian manifold. To each element $x \in \Space{M}$, the \define{tangent space} $T_x$ is a vector space whose elements are equivalence classes of curves $\gamma: (-1,1) \to \Space{M}$ with $\gamma(0) = x$ whose derivatives coincide. In practice, you can visualize tangent spaces as tangent lines, tangent planes etc. The \define{tangent bundle} is the disjoint union $T \Space{M} = \bigsqcup_{x \in \Space{M}}$ together with the projection map $T \Space{M} \xrightarrow{p} \Space{M}$. A \define{vector field} is a section of the tangent bundle,
\[s: \Space{M} \to T \Space{M} \quad p \circ s = \id_{\Space{M}};\] the map $s$ assigns every point $x \in \Space{M}$ a tangent vector $v \in T_x \Space{M}$. Sections of the tangent bundle define a sheaf\footnote{See, for instance, \cite{bredon2012sheaf}.}
\begin{align*}
	\Gamma &:& \Open(\Space{M}) \to \cat{Vec} \\
	\Gamma(U) &=& \{s: U \to T \Space{M}~\vert~p \circ s = \id_{U}\} \\
	\Gamma(V \subseteq U)(s) &=& s_{\vert V} 
\end{align*}
Indeed, $\Gamma(U)$ is a vector space called \define{local sections}. $\Gamma(\Space{M})$ is the space of vector fields defined on the entire manifold. 


Approximations of the tangent bundle lead to complications. Even if points sampled from a manifold are close, their tangent spaces will differ, in general. High-dimensional data sets are often assumed to have been sampled from low-dimensional manifold. This presents a problem because we cannot relate data in two separate local coordinates. Fortunately, this is a topic widely studied in modern differential geometry, called \define{parallel transport} \cite{knebelman1951spaces}. The following is a definition or a theorem depending on how you look at it. We leave technical details to the curious reader.

\begin{definition}[\cite{jost2008riemannian}] \label{prop:manifold-parallel}
Suppose $\Space{M}$ is a $k$-dimensional manifold and $\gamma$ is a smooth path $\gamma: [0,1] \to \Space{M}$ with $\gamma(0) = x$ and $\gamma(1) = y$.
Then, there is a unique invertible linear transformation
\begin{align*}
	\mathcal{P}_{\gamma}: T_{x} \Space{M} \to T_{y} \Space{M} 
\end{align*}
called \define{parallel transport} sending a tangent vector $v \in T_{x} \Space{M}$ to a tangent vector $\mathcal{P}_{\gamma}(v) \in T_{y} \Space{M}$.
\end{definition} 

\noindent Parallel transport leads to vector diffusion and its approximated by orthogonal linear maps. The following approximates the connection Laplacian \cite{atiyah1973heat}, a close cousin of the Laplace-Beltrami operator and a Hodge Laplacian.

\begin{definition}[Graph Connection Laplacian \cite{bandeira2013cheeger}]
	Fix a dimenison $k$, and suppose $\graph{G} = (\nodes{G}, \edges{G})$ is a weighted graph with the data of an orthogonal linear transformation
		\begin{align*}
			O_{ij}: \R^k \to \R^k
		\end{align*}
	Then, the \define{graph connection Laplacian} is the matrix $\nabla^2$
	\begin{align*}	
		\nabla^2 &=& (\R^k)^n \to (\R^k)^n \\
		(\nabla^2 \mathbf{v})_i &=& d_i x_i - \sum_{j \i \nbhd{i}} a_{ij} O_{ij} v_j
	\end{align*}
	$i \in \nodes{G}$.
\end{definition}

As a consequence of an approximation result \cite{singer2012vector} analogous to Proposition \ref{thm:graph-lap-approximation}, parallel transport between tangent spaces of the manifold are approximated in the graph connection Laplacian by orthogonal linear transformations $O_{ij}$ between the tangent spaces $O_{ij}: T_{x_i} \Space{M} \to T_{x_j} \Space{M}$. The graph connection Laplacian is a sheaf Laplacian. Suppose $\{O_i \in \R^{k \times K}\}_{i \in \nodes{G}}$ is a family of orthogonal matrices indexed by the nodes of a weighted graph $\graph{G}$. Construct the sheaf as follows
\begin{align*}
	\sheaf{F}(i) &=& \R^k, \\
	\sheaf{F}(ij) &=& \R^k, \\
	\sheaf{F}_{i \fc ij}(x_i) &=& \sqrt(a_{ij})O_i x_i
\end{align*}
Then,
\begin{align*}
	(L \mathbf{x})_i &=& \sum_{j \in \nbhd{i}} \sqrt{a_{ij}} O_i^\ast \left( \sqrt{a_{ij}} O_i x_i - \sqrt{a_{ij}} O_j x_j \right) \\
	&=& d_i x_i - \sum_{j \in \nbhd{i}} a_{ij} O_i^{\ast} O_j x_j
\end{align*}
Set $O_{ij} = O_i^{\ast} O_j$ which is orthogonal.
\cleardoublepage
\part{The Tarski Laplacian}\label{pt:tarski}
\chapter{Lattice-Valued Sheaves}\label{ch:lattice-valued} %

We introduce a novel class of network sheaves, study global sections of these sheaves, as well as mimick the classical sheaf operations \cite{bredon2012sheaf}.

\section{Tarski Sheaves}

We begin with a definition of a sheaf over a graph whose stalks are complete and whose restriction maps are Galois connections.

\begin{definition}[Tarski Sheaf]
	Suppose $\graph{G} = (\nodes{G}, \edges{G})$ is a graph. A \define{Tarski sheaf} is a functor
	\begin{align*}
		\bisheaf{F}: \poset{P}_{\graph{G}} \to \cat{Ltc}
	\end{align*}
\end{definition}

\noindent A Tarski sheaf assigns a
\begin{enumerate}
\item Complete lattice $\bisheaf{F}(i)$ to every node $i \in \nodes{G}$,
\item Complete lattice $\bisheaf{F}(ij)$ to every edge $ij \in \edges{G}$,
\item Galois connection
\[\begin{tikzcd}
	{\bisheaf{F}(i)} & {\bisheaf{F}(ij)}
	\arrow["{\ladj{\bisheaf{F}(i \fc ij)}}", curve={height=-12pt}, from=1-1, to=1-2]
	\arrow["{\radj{\bisheaf{F}(i \fc ij)}}", curve={height=-12pt}, from=1-2, to=1-1]
\end{tikzcd}\]
		for every $i \in \nodes{G}$ and $j \in \nbhd{i}$.
\end{enumerate}
\begin{remark}[Notation]
Slightly simplifying cumbersome notation, we will write $(\sheaf{F}(i \fc ij), \cosheaf{F}(i \fc ij)$ for the Galois connection $(\ladj{\bisheaf{F}(i \fc ij)}, \radj{\bisheaf{F}(i \fc ij)})$.
\end{remark}

\noindent A Tarski bisheaf contains the data of both a sheaf $\sheaf{F}$ and a cosheaf $\cosheaf{F}$ by
\begin{align*}
	\sheaf{F} &:& \poset{P}_{\graph{G}} \to \cat{Sup}, \\
	\cosheaf{F} &:& \op{\poset{P}_{\graph{G}}} \to \cat{Inf}.
\end{align*}
by selecting restriction maps to be lower adjoints and corestriction maps to be the upper adjoints.
\begin{theorem}\label{thm:duality}
    Suppose $\poset{P}$ is a poset. Then, the following categories are equivalences of categories
    \begin{align*}
        \cat{Ltc}^{\poset{P}} &\simeq& \cat{Sup}^{\poset{P}} \\ \cat{Ltc}^{\poset{P}} &\simeq& \cat{Inf}^{\op{\poset{P}}}
    \end{align*}
\end{theorem}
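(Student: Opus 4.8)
The plan is to exhibit explicit, mutually inverse (up to natural isomorphism) functors between these functor categories, relying entirely on the Adjoint Functor Theorem (Theorem \ref{thm:adjoint-functor-theorem}) and its dual, which tell us that a lower adjoint is uniquely determined by its upper adjoint and vice versa. First I would treat the equivalence $\cat{Ltc}^{\poset{P}} \simeq \cat{Sup}^{\poset{P}}$. Given a functor $\bisheaf{F}: \poset{P} \to \cat{Ltc}$, define $\Phi(\bisheaf{F}): \poset{P} \to \cat{Sup}$ on objects by $\Phi(\bisheaf{F})(p) = \bisheaf{F}(p)$ (a complete lattice is a complete lattice) and on a morphism $p \preceq q$ by $\Phi(\bisheaf{F})(p \preceq q) = \ladj{\bisheaf{F}(p \preceq q)}$, the lower adjoint component. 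Functoriality follows from the composition law for $\cat{Ltc}$ given in Definition \ref{def:relevant-categories}, namely $(\ladj{g},\radj{g}) \circ (\ladj{f},\radj{f}) = (\ladj{g}\ladj{f}, \radj{f}\radj{g})$, so that the lower-adjoint components compose strictly; and each $\ladj{\bisheaf{F}(p \preceq q)}$ is join-preserving by Theorem \ref{thm:adjoint-functor-theorem}, so it is a morphism of $\cat{Sup}$.

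Conversely, given $\sheaf{G}: \poset{P} \to \cat{Sup}$, I would define $\Psi(\sheaf{G}): \poset{P} \to \cat{Ltc}$ by $\Psi(\sheaf{G})(p) = \sheaf{G}(p)$ and $\Psi(\sheaf{G})(p \preceq q) = (\sheaf{G}(p \preceq q), \sheaf{G}(p \preceq q)^{\ast})$, where $(-)^{\ast}$ is the upper adjoint produced by Theorem \ref{thm:adjoint-functor-theorem} via the formula $f^{\ast}(y) = \bigjoin f^{-1}(\downset{y})$. The two things requiring care are: (i) $\Psi(\sheaf{G})$ is functorial, i.e.\ $(\sheaf{G}(q \preceq r)\sheaf{G}(p \preceq q))^{\ast} = \sheaf{G}(p \preceq q)^{\ast}\sheaf{G}(q \preceq r)^{\ast}$; this follows from uniqueness of upper adjoints, since the right-hand side is a right adjoint to $\sheaf{G}(q\preceq r)\sheaf{G}(p\preceq q)$ (a composite of right adjoints is right adjoint to the composite of the left adjoints) and upper adjoints are unique when they exist; and (ii) the assignment respects identities, which is immediate since $\id^{\ast} = \id$. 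Then $\Phi$ and $\Psi$ are strictly inverse on objects and on the lower-adjoint data; the only ambiguity — the choice of upper adjoint — is pinned down by uniqueness, so $\Phi \Psi = \id$ and $\Psi \Phi = \id$ on the nose (or, if one prefers to be scrupulous about the non-uniqueness of the \emph{chosen} representative, naturally isomorphic to the identity, which suffices for an equivalence). I would also note that $\Phi$ and $\Psi$ extend to natural transformations in the obvious componentwise way, giving full faithfulness.

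For the second equivalence $\cat{Ltc}^{\poset{P}} \simeq \cat{Inf}^{\op{\poset{P}}}$, the plan is to send $\bisheaf{F}: \poset{P} \to \cat{Ltc}$ to the functor $\cosheaf{F}: \op{\poset{P}} \to \cat{Inf}$ defined on objects by $p \mapsto \bisheaf{F}(p)$ and on a morphism $q \preceq p$ in $\poset{P}$ (i.e.\ $p \to q$ in $\op{\poset{P}}$) by the upper adjoint $\radj{\bisheaf{F}(q \preceq p)}$, which is meet-preserving by Theorem \ref{thm:adjoint-functor-theorem}. The contravariant bookkeeping works out precisely because of the order-reversal in the composition law for $\cat{Ltc}$: $\radj{f}\radj{g}$ is the upper component of $(\ladj{g},\radj{g})\circ(\ladj{f},\radj{f})$, which matches the composition in $\op{\poset{P}}$. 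The inverse functor takes $\cosheaf{H}: \op{\poset{P}} \to \cat{Inf}$ and reconstructs, for each $q \preceq p$, the lower adjoint of $\cosheaf{H}(p \to q)$ using the dual of the Adjoint Functor Theorem (the Corollary immediately following Theorem \ref{thm:adjoint-functor-theorem}, $f_{\ast}(x) = \bigmeet\{y \mid f(y) \succeq x\}$), then bundles $(\text{that lower adjoint}, \cosheaf{H}(p\to q))$ into a morphism of $\cat{Ltc}$.

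The main obstacle — or rather the one point deserving genuine attention rather than routine verification — is the functoriality of the reconstructed adjoint data, i.e.\ showing that passing to adjoints is compatible with composition, with the right variance. Everything reduces to the slogan ``adjoints of composites are composites of adjoints, in reverse order,'' but one must be careful that this is applied to the correct composite (the composite of lower adjoints in $\poset{P}$-order, or of upper adjoints in $\op{\poset{P}}$-order) and that uniqueness of adjoints is what licenses the equality rather than a mere isomorphism — here it genuinely is equality because adjoints between posets, when they exist, are literally unique, not just unique up to iso. A secondary subtlety is simply being explicit that $\cat{Sup}$, $\cat{Inf}$, and $\cat{Ltc}$ have the \emph{same} objects (complete lattices), so the object-level part of all three functors is the identity; once that is said, there is no size or well-definedness issue, and the equivalences are in fact isomorphisms of categories (a fact worth remarking on, though the weaker statement in the theorem is all that is needed downstream).
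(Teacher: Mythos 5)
Your proposal is correct and rests on the same engine as the paper's proof, namely the Adjoint Functor Theorem's bijection between Galois connections and their lower (resp.\ upper) adjoint components together with uniqueness of adjoints between posets; the paper packages this as the statement that the projection $\Pi:\cat{Ltc}\to\cat{Sup}$ is full, faithful, and essentially surjective and then lifts the base equivalence $\cat{Ltc}\simeq\cat{Sup}$ (and dually $\cat{Ltc}\simeq\op{\cat{Inf}}$) to the functor categories, whereas you build the inverse functors explicitly at the functor-category level. Your version is a touch more informative in that it exhibits an isomorphism of categories rather than a mere equivalence, but the route is essentially the same.
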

\begin{proof}
It suffices to show $\cat{Sup} \simeq \cat{Ltc}$ because we obtain a natural isomorphism $\cat{Ltc}^{\poset{P}} \simeq \cat{Sup}^{\poset{P}}$ from a natural isomoprhism $\cat{Ltc} \simeq \cat{Sup}$.
 \begin{align*}
     \Pi: \cat{Ltc} \to \cat{Sup}
 \end{align*}
 is an equivalence. It is a standard fact in category theory that it suffices to show $\Pi$ is full, faithful, and essentially surjective \cite[Theorem 1.5.9]{riehl2017category}. Full and faithful amounts to demonstrating a bijection
 \begin{align*}
     \Hom_{\cat{Ltc}}(\lattice{K}, \lattice{L}) \cong \Hom_{\cat{Sup}}(\lattice{K}, \lattice{L}).
 \end{align*}
which follows directly from Theorem \ref{thm:adjoint-functor-theorem}. 
The argument that $\op{\cat{Inf}} \simeq \cat{Ltc}$, and hence, that $\cat{Ltc}^{\poset{P}} \simeq \cat{Inf}^{\op{\poset{P}}}$, from duality.
\end{proof}

\noindent The consequence of Theorem \ref{thm:duality} is that the data of $\bisheaf{F}$ is equivalent to the data $\sheaf{F}$ or the data $\cosheaf{F}$.

\begin{remark}[Bisheaves]
	 The notion of a ``bisheaf'' was introduced in the context of sheaves and cosheaves of abelian groups on simplicial complexes \cite{macpherson2021persistent,nanda2020canonical}. Their definition of a bisheaf consists of a sheaf and a cosheaf over the same base (e.g.~graph) together with a compatible homomorphism between the stalks of the sheaf to the stalks of the cosheaf. In some instances, our definition conincides with theirs. For example, if $\sheaf{F}$ is a monosheaf with associated epicosheaf $\cosheaf{F}$, the \emph{Tarski} bisheaf $(\sheaf{F}, \cosheaf{F})$ is a bisheaf in the sense of McPhersen-Patel \cite{macpherson2021persistent}.
 \end{remark}

\section{New Sheaves from Old}

In the following examples, sheaves of concrete objects (sets, vector spaces, hilbert spaces) will be denoted in plain font; induced Tarski sheaves will be denoted in calligraphic font.

\subsection{From sheaves of sets}

Suppose $S: \poset{P}_{\graph{G}} \to \cat{Set}$ is a set-valued network sheaf. Such a structure assigns the data of a set to every node and edge and (arbitrary) maps between these sets over edges. Sheaves of sets are perhaps the most basic examples of sheaves and have found their way to applications in electrical engineering, quantum theory, and topological data analysis \cite{goguen1992sheaf,abramsky2011sheaf, de2016categorified}. The key construction is the following. Suppose $f: X \to Y$ is a function. Then, there is a Galois connection
\[\begin{tikzcd}
	{\powerset{X}} && {\powerset{Y}}
	\arrow["{f(-)}", curve={height=-12pt}, from=1-1, to=1-3]
	\arrow["{f^{-1}(-)}", curve={height=-12pt}, from=1-3, to=1-1]
\end{tikzcd}\]
sending a subset of $X$ to its image and a subset of $Y$ to its preimage.
Consequently, a sheaf $F: \poset{P}_{\graph{G}} \to \cat{Set}$ factors through the powerset functor, introducing a class of Tarski sheaves
\[\begin{tikzcd}
	\poset{P}_{\graph{G}} \arrow[r,"S"] \arrow[rd, "\bisheaf{S}"] & \cat{Set} \arrow[d, "\powerset{-}"] \\
										   & \cat{Ltc}
\end{tikzcd}\]

\noindent The following extended example can be safely skipped.

\begin{example}[Reeb Graphs]

In topological data analysis, filtrations of a metric space by sublevel sets serve as a canonical example of a topological filtration. This example, roughly, subsumes \v{C}ech filtrations of point clouds \cite{bauer2017morse}. The path-connected components of space is a functor $\pi_0: \cat{Top} \to \cat{Set}$. $\pi_0$, in general, does not have a group structure. Given a cell decomposition of $\R$ with $0$-cells \[Z = \{t_0 < t_1 < \dots < t_{r}\}\] and a map $f: \Space{X} \to \R$, the \define{Reeb cosheaf} is a cosheaf valued in $\cat{Set}$ with stalks $\pi_0 f^{-1} I$ for open intervals $I \subseteq \R$ subordinate to the cell decomposition and with restriction maps induced by the functoriality of $\pi_0$ \cite{de2016categorified}. The global sections of the Reeb cosheaf correspond to path components of $\Space{X}$ that are born and die at particular critical values in $Z$. Sections of the induced Tarski sheaf consist of path components that copersist.
\end{example}

\subsection{From vector-valued sheaves}

Suppose $F: \poset{P}_{\graph{G}} \to \cat{Vec}$ is a vector-valued network sheaf. Such a structure assigns a vector space to every node and edge and linear transformations between these vector spaces over edges. Sheaves of vector spaces are key to the foundations of persistent homology \cite[Section 8.2]{curry2014sheaves}. Suppose $T: V \to W$ is a linear transformation. Then, there is a Galois connection
\[\begin{tikzcd}
	{\subspaces{V}} && {\subspaces{W}}
	\arrow["{T(-)}", curve={height=-12pt}, from=1-1, to=1-3]
	\arrow["{T^{-1}(-)}", curve={height=-12pt}, from=1-3, to=1-1]
\end{tikzcd}\]
sending a subspace of $V$ to its image and a subspace of $W$ to its preimage. Hence, a sheaf $F: \poset{P}_{\graph{G}}$ factors through the subspace functor\footnote{Grandis calls this functor the \define{transfer functor} \cite{grandis2013homological}, we previously called it the \define{Grassmanian} \cite{ghrist2022cellular}}, introducing a class of Tarski sheaves
\[\begin{tikzcd}
\poset{P}_{\graph{G}} \arrow[r,"F"] \arrow[rd, "\bisheaf{F}"] & \cat{Vec} \arrow[d, "\subspaces{-}"] \\
									   & \cat{Ltc}
\end{tikzcd}\]
The construction is unaltered in the (full sub)category $\cat{Hil}$ of Hilbert spaces and linear transformations.


\begin{example}[Constant Sheaves]
    Suppose $V$ is vector space and $\underline{V}$ is the constant sheaf over $\graph{G}$. Then, the induced Tarski sheaf is again the constant sheaf $\underline{\subspaces{V}}$.
\end{example}

\section{New Sup-Lattices from Old}
\label{sec:new-lat-old}

As a prerequisite for defining the sheaf operations for $\cat{Sup}$-sheaves, we require categorical operations on complete lattices.

\subsection{Products \& coproducts}

The product in $\cat{Sup}$ is inherited from the product in $\cat{Pos}$. Suppose $\poset{P}$ and $\poset{Q}$ are posets. Then, the \define{product} $\poset{P} \times \poset{Q}$ is the poset with the order $(x,y) \preceq (z,w)$ if and only if $x \preceq y$ and $y \preceq w$. The \define{coproduct} of $\poset{P}$ and $\poset{Q}$, denoted $\poset{P} \coprod \poset{Q}$ is the poset on the disjoint union of $\poset{P}$ and $\poset{Q}$ (as sets) with $x \preceq y$ if $x, y \in \poset{P}$, $x, y \in \poset{Q}$, and $x \preceq y$ in either $\poset{P}$ or $\poset{Q}$. Immediately we run into difficulty: completeness of $\poset{P}$ and $\poset{Q}$ does not imply completeness of  $\poset{P} \coprod \poset{Q}$.

Frequently, we take cartesian products of sets indexed over an arbitrary set $I$
\begin{align*}
	\prod_{i \in I} \lattice{L}_i.
\end{align*}
Elements of the product are denoted with bold letters $\mathbf{x}$. The with individual components denoted $(x_i)_{i \in I}$. The (categorical) \define{product} of a family of lattices $\{\lattice{L}\}_{i \in I}$
is the cartesian product equipped with projection maps $\pi_i: \prod_{i \in I} \lattice{L}_i \to \lattice{L}_i$. Projection maps have lower and upper adjoints
\begin{align*}
\pi_i^{\ast}, {\pi_i}_{\ast}: \lattice{L}_i \to \prod_{i \in I} \lattice{L}_i
\end{align*}
\[(\pi_i^\ast x)_j = \begin{cases}
    x_i & i = j \\
    0 & i \neq j
\end{cases} \quad ({\pi_i}_{\ast} x)_j = \begin{cases}
    x_i & i = j \\
    1 & i \neq j
\end{cases}\]
The (categorical) \define{coproduct} of $\{\lattice{L}_i \}_{i \in I}$ is the cartesian product with maps ${\pi_i}_{\ast}: \lattice{L}_i \to \coprod_{i \in I} \lattice{L}_i$ called inclusion maps.

Products and coproducts in $\cat{Sup}$ define the chains and cochains of a Tarski sheaf. Suppose $\bisheaf{F}$ is a bisheaf over $\graph{G}$. Then, the \define{0-cochains} of $\bisheaf{F}$ are the (complete) product lattice
\begin{align*}
	C^0(\graph{G}; \sheaf{F}) &=& \prod_{i \in \nodes{G}} \sheaf{F}(i)
\end{align*}
together with the projection maps
\begin{align*}
	\pi_i: C^0(\graph{G}; \sheaf{F}) \to \sheaf{F}(i).
\end{align*}
The \define{0-chains} of $\bisheaf{F}$ is the coproduct lattice
\begin{align*}
	C_0(\graph{G}; \cosheaf{F}) &=& \coprod_{i \in \nodes{G}} \cosheaf{F}(i)
\end{align*}
with the inclusion maps
\begin{align*}
{\pi_i}_{\ast}: \cosheaf{F}(i) \to C_0(\graph{G}; \cosheaf{F}).
\end{align*}

\noindent As the product and coproduct are isomorphic, then we idenity both (up to isomorphism) as the \define{biproduct} $\bigoplus_{i \in \nodes{G}} \bisheaf{F}(i)$ \cite[Appendix E.5]{riehl2017category}.

\subsection{Tensor product \& internal hom}

Recall, that the set of join-preserving morphisms between two complete lattice forms a complete lattice called the \define{internal hom} which we hereafter denote $\left[ \lattice{K}, \lattice{L}\right]$. Suppose $\{f_i\}_{i \in I} \subseteq \left[ \lattice{K}, \lattice{L} \right]$. Then,
\begin{align*}
    \left(\bigjoin f_i \right)(x) &=& \bigjoin_{i \in I} f_i(x).
\end{align*}
The \define{tensor product} $\lattice{K} \otimes \lattice{L}$ is defined\footnote{If you prefer a definition with a universal property, see \cite{joyal1984extension}.}
\begin{align*}
 	\lattice{K} \otimes \lattice{L} &\cong& \op{\left[ \mathbf{K}, \op{\mathbf{L}}\right]}.
 \end{align*}
The set $\lattice{K} \otimes \lattice{L}$ consists of order-reversing maps $f: \lattice{K} \to \lattice{L}$ such that $f( \bigjoin S) = \bigmeet f(S)$, or, in other words, contavariant Galois connections.
\begin{example}[Powersets]
If $X$ and $Y$ are sets, then
\begin{align*}
	\powerset{X} \otimes \powerset{Y} &\cong& \powerset{X \times Y}
\end{align*}
\end{example}




We have the following adjunction between the internal hom and tensor product.
\begin{proposition} \label{prop:hom-tensor}
    Suppose $\lattice{K}, \lattice{L}, \lattice{M}$ are complete lattices. Then, $\left[ \lattice{K} \otimes \lattice{L}, \lattice{M} \right] \cong \left[ \lattice{K}, \left[ \lattice{L}, \lattice{M} \right] \right]$.
\end{proposition}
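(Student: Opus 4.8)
The plan is to establish the isomorphism $\left[ \lattice{K} \otimes \lattice{L}, \lattice{M} \right] \cong \left[ \lattice{K}, \left[ \lattice{L}, \lattice{M} \right] \right]$ by unwinding both sides to a common description in terms of \emph{bimorphisms}, i.e.\ maps $\lattice{K} \times \lattice{L} \to \lattice{M}$ that preserve arbitrary joins separately in each variable. First I would recall (from the definition of the tensor product, $\lattice{K} \otimes \lattice{L} \cong \op{\left[ \mathbf{K}, \op{\mathbf{L}} \right]}$, and the universal property alluded to in the footnote) that a join-preserving map $\lattice{K} \otimes \lattice{L} \to \lattice{M}$ is the same thing as a bimorphism $b: \lattice{K} \times \lattice{L} \to \lattice{M}$; that is, there is a canonical bijection between $\left[ \lattice{K} \otimes \lattice{L}, \lattice{M} \right]$ and the set $\mathrm{Bihom}(\lattice{K}, \lattice{L}; \lattice{M})$ of such bimorphisms. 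This bijection is monotone in both directions (ordering bimorphisms pointwise), hence an isomorphism of complete lattices.

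Next I would show that $\left[ \lattice{K}, \left[ \lattice{L}, \lattice{M} \right] \right]$ is \emph{also} canonically isomorphic to $\mathrm{Bihom}(\lattice{K}, \lattice{L}; \lattice{M})$. Given a bimorphism $b$, define $\phi_b: \lattice{K} \to \left[ \lattice{L}, \lattice{M} \right]$ by $\phi_b(x) = b(x, -)$; this is well defined because $b$ preserves joins in the second variable, and it is join-preserving as a map into $\left[ \lattice{L}, \lattice{M} \right]$ precisely because $b$ preserves joins in the first variable and joins in $\left[ \lattice{L}, \lattice{M} \right]$ are computed pointwise (as recorded just before Proposition~\ref{prop:hom-tensor}). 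Conversely, given $g \in \left[ \lattice{K}, \left[ \lattice{L}, \lattice{M} \right] \right]$, set $b_g(x,y) = g(x)(y)$ and check it is a bimorphism. These assignments are mutually inverse and monotone in both directions, so they constitute an isomorphism of complete lattices. Composing the two isomorphisms gives the result; I would also remark that all the maps involved are natural, so the isomorphism is an isomorphism of functors, though only the object-level statement is needed here.

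The routine verifications are the join-preservation checks: that $\phi_b(x)$ lands in $\left[ \lattice{L}, \lattice{M} \right]$, that $x \mapsto \phi_b(x)$ preserves joins, and that $b_g$ is separately join-preserving. Each reduces to interchanging two joins and using that joins in a product or in an internal hom are pointwise; I would state these and leave the symbol-pushing to the reader, invoking Theorem~\ref{thm:adjoint-functor-theorem} and Lemma~\ref{lem:convenience} as needed to justify the interchanges.

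The main obstacle, and the step deserving the most care, is pinning down the identification $\left[ \lattice{K} \otimes \lattice{L}, \lattice{M} \right] \cong \mathrm{Bihom}(\lattice{K}, \lattice{L}; \lattice{M})$ from the particular definition $\lattice{K} \otimes \lattice{L} \cong \op{\left[ \mathbf{K}, \op{\mathbf{L}} \right]}$ given in the text rather than from an abstract universal property: one must exhibit the universal bimorphism $\lattice{K} \times \lattice{L} \to \lattice{K} \otimes \lattice{L}$ explicitly (sending $(x,y)$ to the order-reversing map $f \mapsto$ the evaluation encoding $x$ and $y$) and verify it is universal, i.e.\ that every bimorphism factors uniquely through it via a join-preserving map. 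Once that lemma is in hand the rest is formal. If I preferred to avoid reproving the universal property, the alternative is a direct computation: a join-preserving $h: \op{\left[ \mathbf{K}, \op{\mathbf{L}} \right]} \to \lattice{M}$ is a meet-preserving map $\left[ \mathbf{K}, \op{\mathbf{L}} \right] \to \op{\lattice{M}}$, and chasing this through the contravariant-Galois description of $\left[ \mathbf{K}, \op{\mathbf{L}} \right]$ recovers the bimorphism; I would present whichever of the two is shorter in context, likely citing \cite{joyal1984extension} for the universal property and then doing the bimorphism bookkeeping.
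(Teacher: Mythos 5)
The paper does not actually prove this proposition; its ``proof'' is a citation to \cite[Chapter 1]{joyal1984extension}. Your sketch is the standard argument from that source: identify both $\left[\lattice{K}\otimes\lattice{L},\lattice{M}\right]$ and $\left[\lattice{K},\left[\lattice{L},\lattice{M}\right]\right]$ with the complete lattice of bimorphisms $\lattice{K}\times\lattice{L}\to\lattice{M}$ (separately join-preserving in each variable, ordered pointwise), with the second identification being the currying bijection $b\mapsto\phi_b$, $\phi_b(x)=b(x,-)$. That half of your argument is correct and complete modulo routine join-interchange checks, and you have correctly located the crux in the other half: connecting the paper's concrete definition $\lattice{K}\otimes\lattice{L}\cong\op{\left[\lattice{K},\op{\lattice{L}}\right]}$ to the universal property for bimorphisms. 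You do not actually carry out that step --- you offer to cite Joyal--Tierney for it or to do an unspecified ``direct computation'' --- and since the currying half is essentially formal, that step is the entire mathematical content of the proposition as stated. So your write-up, as it stands, reduces to the same external citation the paper uses; to improve on the paper you would need to exhibit the universal bimorphism $(x,y)\mapsto x\otimes y$ into $\op{\left[\lattice{K},\op{\lattice{L}}\right]}$ explicitly and verify the unique factorization, which is a genuine (if standard) piece of bookkeeping rather than pure symbol-pushing.
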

\begin{proof}
	See \cite[Chapter 1]{joyal1984extension}.
\end{proof}

\subsection{Equalizers \& coequalizers}

``solving equations'' in the $\cat{Sup}$ amounts to boils down to computing equalizers and coequalizers.

\begin{definition}[Equalizer \& Coequalizer]
	An \define{equalizer} of a pair of parrellel join-preserving maps $f,g: \lattice{K} \to \lattice{L}$ is a suplattice $\lattice{E}$ and map $i: \lattice{E} \to \lattice{K}$ with $f \circ i = g \circ i$ such that for any map $j: \lattice{M} \to \lattice{K}$, there is a unique map $i': \lattice{M} \to \lattice{E}$ with $i \circ i' = j$ as in the following diagram
	\begin{equation}
	\begin{tikzcd}
	\lattice{E} \arrow[r, "i"] & \lattice{K} \arrow[r,"f",shift left] \arrow[r,"g", shift right, swap] & 
	\lattice{L} \\
	\lattice{M} \arrow[ur,"j", swap] \arrow[u,"i'", dashed]		& &
	\end{tikzcd} \label{eq:equalizer}.
	\end{equation}
	A \define{coequalizer} of the pair $f,g$ is a suplattice $\lattice{Q}$ and a map $p: \lattice{L} \to \lattice{Q}$ with $p \circ f = p \circ g$ such that for any map $q: \lattice{L} \to \lattice{M}$, there is a unique map $p': \lattice{Q} \to \lattice{M}$ with $p' \circ p = q$ as in the following diagram
	\begin{equation}
	\begin{tikzcd}
	 \lattice{K} \arrow[r,"f",shift left] \arrow[r,"g", shift right, swap] & \lattice{L} \arrow[r,"p"] \arrow[rd,"q", swap] 	 & \lattice{Q} \arrow[d,"p'", dashed]	\\
	& & \lattice{M}
	\end{tikzcd} \label{eq:coequalizer}.
	\end{equation}
\end{definition}

\noindent Universal properties, however, do not suffice for actual computations.

\begin{proposition}\label{prop:exist-equalizer}
$\cat{Sup}$ has equalizers and coequalizers.
\end{proposition}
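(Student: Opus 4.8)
The strategy is to construct the equalizer and coequalizer explicitly rather than rely on an abstract completeness-of-$\cat{Sup}$ argument, since Proposition~\ref{prop:exist-equalizer} is later used for actual computations (cf.\ the remark following it). For the equalizer of a parallel pair $f,g\colon\lattice{K}\to\lattice{L}$ of join-preserving maps, the natural candidate is the subset
\begin{align*}
	\lattice{E} &=& \{ x \in \lattice{K} ~\vert~ f(x) = g(x) \}
\end{align*}
equipped with the inclusion $i\colon\lattice{E}\hookrightarrow\lattice{K}$. First I would check that $\lattice{E}$ is a complete sublattice of $\lattice{K}$ that is closed under arbitrary joins: if $\{x_s\}_{s\in S}\subseteq\lattice{E}$, then since $f$ and $g$ each preserve joins, $f(\bigjoin_s x_s)=\bigjoin_s f(x_s)=\bigjoin_s g(x_s)=g(\bigjoin_s x_s)$, so $\bigjoin_s x_s\in\lattice{E}$; hence $\lattice{E}$ has all joins (including the empty join $0_{\lattice{K}}$, which lies in $\lattice{E}$ because $f$ and $g$ are join-preserving and so send $0$ to $0$), making it a complete lattice in which joins agree with those of $\lattice{K}$, so $i$ is join-preserving. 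Then I would verify the universal property: given $j\colon\lattice{M}\to\lattice{K}$ with $f\circ j=g\circ j$, the image of $j$ lands in $\lattice{E}$ by definition, so $j$ factors uniquely (as a set map, hence as the only candidate) through $i$ via $i'(m):=j(m)$; and $i'$ is join-preserving because $j$ is and joins in $\lattice{E}$ are computed as in $\lattice{K}$. Uniqueness of $i'$ is forced by injectivity of $i$.

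For the coequalizer, I would invoke Theorem~\ref{thm:adjoint-functor-theorem} (the Adjoint Functor Theorem) to pass to upper adjoints. A join-preserving map $f$ has an upper adjoint $\radj{f}=f^{\ast}$, which is meet-preserving; dualizing, meet-preserving maps between complete lattices have equalizers computed as above in $\cat{Inf}$ (or equivalently, equalizers in $\op{\cat{Inf}}$). Concretely, the plan is: form the equalizer $\lattice{E}'$ of the parallel pair of upper adjoints $f^{\ast},g^{\ast}\colon\lattice{L}\to\lattice{K}$ in $\cat{Inf}$ — that is, $\lattice{E}'=\{y\in\lattice{L}~\vert~f^{\ast}(y)=g^{\ast}(y)\}$, a complete lattice closed under arbitrary \emph{meets} — and then the coequalizer of $f,g$ in $\cat{Sup}$ is $\lattice{E}'$ together with the \emph{lower} adjoint of the inclusion $\lattice{E}'\hookrightarrow\lattice{L}$, which exists and is join-preserving by Theorem~\ref{thm:adjoint-functor-theorem} since the inclusion preserves meets. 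The universal property of the coequalizer in $\cat{Sup}$ then follows by dualizing the equalizer argument, using the equivalence $\cat{Sup}\simeq\op{\cat{Inf}}$ implicit in Theorem~\ref{thm:duality} and the fact that adjunction is a contravariant equivalence on hom-sets.

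Alternatively — and this might read more cleanly — I would prove the coequalizer statement directly by the dual of the equalizer construction: since $\cat{Sup}\simeq\cat{Ltc}\simeq\op{\cat{Inf}}$ by Theorem~\ref{thm:duality}, and equalizers in $\cat{Inf}$ are built by the meet-closed-subset construction dual to the one above, coequalizers in $\cat{Sup}$ exist. I would spell out the explicit formula for the coequalizer object as $\op{\lattice{E}'}$ with $\lattice{E}'$ the meet-equalizer of $(f^{\ast},g^{\ast})$, so that it is available for later computation.

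\textbf{Main obstacle.} The routine set-theoretic checks are easy; the one subtlety I would be careful about is the direction of adjoints and the handedness of the constructions — specifically, ensuring that it is genuinely the \emph{join}-preserving maps (restriction maps in the sheaf sense) that equalize via a join-closed subset, and that the coequalizer's structure map is the lower adjoint (not the inclusion itself, which is only meet-preserving and hence not a morphism in $\cat{Sup}$). Getting the variance bookkeeping right when transporting the equalizer computation across $\cat{Sup}\simeq\op{\cat{Inf}}$ is where an error would most likely creep in, so I would state the upper/lower adjoint roles explicitly at each step.
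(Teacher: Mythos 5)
Your proposal is correct and follows essentially the same route as the paper: the equalizer is the join-closed subset $\{x \in \lattice{K} \mid f(x)=g(x)\}$ with its inclusion, and the coequalizer is obtained by forming the analogous meet-closed equalizer of the upper adjoints $f^{\ast},g^{\ast}$ in $\lattice{L}$ and passing back along the adjunction. Your version is in fact more careful than the paper's about the universal properties and about which adjoint serves as the coequalizer's structure map, but the underlying construction is identical.
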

\begin{proof}
We claim $\lattice{E} = \{ x \in \lattice{K} ~\vert~ f(x) = g(x)\}$ with the inclusion $i$ into $\lattice{K}$ is the equalizer of $f,g: \lattice{K} \to \lattice{L}$. First, $f \circ i = g \circ i$ since $i$ is an inclusion. Second, $\lattice{E}$ is complete, as we now show. Suppose $\{x_i\}_{i \in I} \subseteq \lattice{E}$, then $\bigjoin_{i \in I} x_i \in \lattice{E}$ as
\begin{align*}
    f\left( \bigjoin_{i \in I} x_i \right) &=& \bigjoin_{i \in I} f(x_i) \\
    &=& \bigjoin_{i \in I} g(x_i) \\
    &=& g \left( \bigjoin_{i \in I} x_i \right).
\end{align*}
For coequalizers, let $\lattice{Q} = \{ y \in \lattice{L} ~\vert~ f^{\ast}(y) = g^{\ast}(y)\}$. Then, the coequalizer of $f,g: \lattice{L} \to \lattice{K}$ is the lattice $\op{\lattice{Q}}$ with the projection $i^{\ast}:\op{\lattice{Q}} \to \lattice{L}$.
\end{proof}


\section{Sections}

For a $\cat{Sup}$-sheaf, we define cohomology $H^0$ and $H^1$ as equalizers and coequalizers. We show $H^0$ is isomorphic to the complete lattice of global sections. Recall, \define{sections} are assignments $\mathbf{x} \in C^\bullet(\graph{G}; \sheaf{F})$ such that that satisfy
\begin{align}
		\sheaf{F}(i \fc ij)(x_i) &=& x_{ij} &=& \sheaf{F}(j \fc ij)(x_j) \quad i \fc ij \cofc j
  \label{eq:sections}
\end{align}
for every $i \fc ij \cofc j$.
Sections inherit the order on the product lattice $C^\bullet(\graph{G}; \bisheaf{F})$. Let $\sections{\graph{G}; \sheaf{F}} \subseteq C^\bullet(\graph{G}; \bisheaf{F})$ denote the poset of sections. In fact, $\sections{\graph{G}; \sheaf{F}}$ is a complete lattice calculated as follows.
\begin{definition}
Suppose $\bisheaf{F}$ be a Tarski sheaf over $\graph{G}$. Then, the let $H^0(\graph{G}; \sheaf{F})$ and $H^1(\graph{G}; \sheaf{F})$ to be the equalizer and coequalizer\footnote{See Joyal \cite[Chapter 1]{joyal1984extension} for a proof that $\cat{Sup}$ has all limits and colimits.} of the following diagram.
\begin{equation}
\begin{tikzcd}
H^0(\graph{G}; \sheaf{F}) \arrow[r, dashed] & C^0(\graph{G};\sheaf{F}) \arrow[r,"\coboundary_{-}",shift left] \arrow[r,"\coboundary_{+}", shift right, swap] & C^1(\graph{G};\sheaf{F}) \arrow[r, dashed] & H^1(\graph{G}; \sheaf{F})
\end{tikzcd} \label{eq:sup-section-equalizer}
\end{equation}
with
\begin{align*}
(\coboundary_{-})_{ij} &=& \sheaf{F}({ij}_{-} \fc ij), \\
(\coboundary_{+})_{ij} &=& \sheaf{F}({ij}_{+} \fc ij).
\end{align*} 
\end{definition}
\noindent By Proposition \ref{prop:exist-equalizer},
\begin{align*}
	H^0(\graph{G}; \sheaf{F}) &\cong& \{ \mathbf{x} \in C^0(\graph{G}; \sheaf{F})~\vert~ (\coboundary_{-} \mathbf{x})_{ij} = (\coboundary_{+} \mathbf{x})_{ij} \quad \forall ij \in \edges{G} \} \\
 	&=& \{ \mathbf{x} \in C^0(\graph{G}; \sheaf{F}) ~\vert~ \sheaf{F}(i \fc ij)(x_i) = \sheaf{F}(j \fc ij)(x_j) \quad \forall ij \in \edges{G} \}. \\
 	&\cong& \{ \mathbf{x} \in C^0(\graph{G}; \sheaf{F}) ~\vert~ \sheaf{F}(i \fc ij)(x_i) = x_{ij} = \sheaf{F}(j \fc ij)(x_j) \} \\
 	&=& \sections{\graph{G}; \sheaf{F}}.
\end{align*}

\begin{theorem}[Global Sections Theorem]\label{thm:sections-sublattice}
	Suppose $\bisheaf{F}$ is a Tarski sheaf over $\graph{G}$. Then, $H^0(\graph{G}; \sheaf{F})$ is a complete quasi-sublattice of $C^0(\graph{G}; \sheaf{F})$
\end{theorem}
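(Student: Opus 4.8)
The plan is to reduce the statement to Proposition~\ref{prop:exist-equalizer} together with the single structural fact that the restriction maps of a Tarski sheaf are lower adjoints, hence join-preserving (Theorem~\ref{thm:adjoint-functor-theorem}). First I would unwind the definition of $H^0(\graph{G};\sheaf{F})$ as an equalizer: by Proposition~\ref{prop:exist-equalizer} we may identify it with the subset
\[
\{\, \mathbf{x} \in C^0(\graph{G};\sheaf{F}) \mid \sheaf{F}(i \fc ij)(x_i) = \sheaf{F}(j \fc ij)(x_j) \ \text{ for all } ij \in \edges{G} \,\}
\]
of the product lattice $C^0(\graph{G};\sheaf{F}) = \prod_{i \in \nodes{G}} \sheaf{F}(i)$, carrying the order inherited from $C^0(\graph{G};\sheaf{F})$. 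With this identification the inclusion of $H^0(\graph{G};\sheaf{F})$ into $C^0(\graph{G};\sheaf{F})$ is an order embedding by construction, and $C^0(\graph{G};\sheaf{F})$ is itself complete, being a product of complete lattices. So the whole statement collapses to showing that $H^0(\graph{G};\sheaf{F})$ is a complete lattice.

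Next I would check that $H^0(\graph{G};\sheaf{F})$ is closed under arbitrary joins formed in $C^0(\graph{G};\sheaf{F})$. Let $\{\mathbf{x}_k\}_{k \in K} \subseteq H^0(\graph{G};\sheaf{F})$ and set $\mathbf{x} = \bigjoin_k \mathbf{x}_k$, computed componentwise so that $x_i = \bigjoin_k (x_k)_i$. For every edge $ij \in \edges{G}$, since $\sheaf{F}(i \fc ij)$ and $\sheaf{F}(j \fc ij)$ are lower adjoints and hence preserve joins,
\begin{align*}
\sheaf{F}(i \fc ij)(x_i)
 &= \bigjoin_k \sheaf{F}(i \fc ij)\bigl((x_k)_i\bigr) \\
 &= \bigjoin_k \sheaf{F}(j \fc ij)\bigl((x_k)_j\bigr) \\
 &= \sheaf{F}(j \fc ij)(x_j),
\end{align*}
where the middle equality uses $\mathbf{x}_k \in H^0(\graph{G};\sheaf{F})$. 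Thus $\mathbf{x} \in H^0(\graph{G};\sheaf{F})$, and it is visibly still the least upper bound of $\{\mathbf{x}_k\}$ inside $H^0(\graph{G};\sheaf{F})$. The case $K = \emptyset$ shows the bottom element of $C^0(\graph{G};\sheaf{F})$ lies in $H^0(\graph{G};\sheaf{F})$, so the latter is nonempty. (This recovers the join half of the argument already carried out in Proposition~\ref{prop:exist-equalizer}.)

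Finally I would invoke Proposition~\ref{thm:sup-inf}: a poset in which every subset has a join is automatically a complete lattice, its meets being given by $\bigmeet S = \bigjoin\{\, \mathbf{y} \in H^0(\graph{G};\sheaf{F}) \mid \mathbf{y} \preceq \mathbf{x}\ \text{ for all } \mathbf{x} \in S \,\}$. Applying this to $H^0(\graph{G};\sheaf{F})$ shows it is a complete lattice; combined with the first paragraph, the inclusion exhibits it as a complete quasi-sublattice of $C^0(\graph{G};\sheaf{F})$. I would close with a remark on why ``quasi-sublattice'' cannot be upgraded to ``sublattice'': the coboundary maps $\coboundary_{-},\coboundary_{+}$ are built from lower adjoints, which need not preserve meets, so the componentwise meet in $C^0(\graph{G};\sheaf{F})$ of two sections generally fails the agreement condition; the meet computed inside $H^0(\graph{G};\sheaf{F})$ is instead the (possibly strictly smaller) join of all common lower sections.

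The only genuine pitfall is the tempting but false attempt to prove closure under componentwise meets; the resolution is to remember that completeness of a lattice needs only one of the two infinitary bounds, and join-closure is exactly what the join-preserving (lower-adjoint) restriction maps supply, so nothing further is required.
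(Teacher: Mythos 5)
Your proposal is correct and follows essentially the route the paper intends: the theorem is an instance of Proposition~\ref{prop:exist-equalizer}, whose proof already establishes that the equalizer of two join-preserving maps is closed under arbitrary joins in the ambient product lattice, with completeness then following from the join-half of Proposition~\ref{thm:sup-inf}. Your closing remark on why the embedding cannot be upgraded to a sublattice inclusion matches the Warning the paper places immediately after the theorem.
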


\begin{warning}
    Beware, $H^0(\graph{G}; \sheaf{F})$ is not in general a sublattice. $H^0(\graph{G}; \sheaf{F})$ is an equalizer. Hence, the unique map $H^0(\graph{G}; \sheaf{F}) \to C^0(\graph{G}; \sheaf{F})$ is injective (see \cite[Exercise 3.1.vi]{riehl2017category}). However, joins and meets in $H^0(\graph{G}; \sheaf{F})$ do not (in general) coincide with joins and meets in $C^0(\graph{G}; \sheaf{F})$.
\end{warning}. 


\section{Sheaf Operations}

We construct four of the six classical ``sheaf operations:'' \emph{tensor product sheaves, sheaf hom, pullback and pushforward}. While these four sheaf operations do reappear, we include this exposition to inspire future work.

\subsection{Sheaf morphisms \& constant sheaves}

In order to define tensor product sheaves and sheaf hom, we first recall morphisms between sheaves. A \define{sheaf morphism}  $\sheaf{F} \xRightarrow{\phi} \sheaf{G}$ between $\sheaf{F}$ and $\sheaf{G}$ consists of the data of join-preserving maps $\phi_i: \bisheaf{F}(i) \to \bisheaf{G}(i)$ for every $i \in \nodes{G}$ and join-preserving maps $\phi_{ij}: \sheaf{F}(ij) \to \sheaf{G}(ij)$ for every $ij \in \edges{G}$ such that for every $i \in \nodes{G}$ and $j \in \nbhd{i}$
\begin{align}
	\sheaf{G}(i \fc ij) \circ \phi_i &=& \phi_{ij} \circ \sheaf{F}(i \fc ij). \label{eq:sheaf-morphism}
\end{align}
The set of all sheaf morphisms is denoted $\Hom\left(\sheaf{F}, \sheaf{G} \right)$.

\begin{proposition}
 	 $\Hom \left( \sheaf{F}, \sheaf{G} \right)$ is a complete lattice.
\end{proposition}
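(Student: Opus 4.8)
The statement to prove is that $\Hom(\sheaf{F},\sheaf{G})$, the set of sheaf morphisms between two Tarski sheaves over a fixed graph $\graph{G}$, is a complete lattice. The plan is to realize $\Hom(\sheaf{F},\sheaf{G})$ as a complete quasi-sublattice of a large product of internal homs, essentially by the same mechanism that made $H^0(\graph{G};\sheaf{F})$ a complete lattice in Theorem \ref{thm:sections-sublattice}: a sheaf morphism is exactly a tuple of join-preserving maps satisfying a collection of equalizer-type commutativity constraints \eqref{eq:sheaf-morphism}, and equalizers of join-preserving maps are again complete lattices (Proposition \ref{prop:exist-equalizer}).

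First I would set up the ambient lattice. For each $i \in \nodes{G}$ the internal hom $[\bisheaf{F}(i), \bisheaf{G}(i)]$ is a complete lattice (recalled in Section \ref{sec:new-lat-old}: joins of join-preserving maps are computed pointwise), and likewise $[\bisheaf{F}(ij),\bisheaf{G}(ij)]$ for each $ij \in \edges{G}$. Form the product
\begin{align*}
	\lattice{P} &=& \prod_{i \in \nodes{G}} \left[\bisheaf{F}(i), \bisheaf{G}(i)\right] \times \prod_{ij \in \edges{G}} \left[\bisheaf{F}(ij), \bisheaf{G}(ij)\right],
\end{align*}
which is a complete lattice as a product of complete lattices, with joins and meets computed coordinatewise. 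A sheaf morphism is precisely an element $(\phi_i, \phi_{ij})$ of $\lattice{P}$ satisfying, for each incidence $i \fc ij$, the equation $\sheaf{G}(i \fc ij) \circ \phi_i = \phi_{ij} \circ \sheaf{F}(i \fc ij)$. So $\Hom(\sheaf{F},\sheaf{G})$ is the subset of $\lattice{P}$ cut out by these constraints.

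Next I would show this subset is closed under arbitrary joins (and, dually, arbitrary meets) taken \emph{in $\lattice{P}$}. The key point is that post-composition and pre-composition with a fixed join-preserving map are themselves join-preserving operations on the relevant internal homs: for a family $\{\phi^{(k)}\}_k$ of sheaf morphisms, $\sheaf{G}(i \fc ij) \circ \bigjoin_k \phi_i^{(k)} = \bigjoin_k \left(\sheaf{G}(i \fc ij) \circ \phi_i^{(k)}\right)$ because $\sheaf{G}(i \fc ij)$ is a lower adjoint hence join-preserving (Theorem \ref{thm:adjoint-functor-theorem}) and joins in the internal hom are pointwise; and similarly $\left(\bigjoin_k \phi_{ij}^{(k)}\right) \circ \sheaf{F}(i \fc ij) = \bigjoin_k \left(\phi_{ij}^{(k)} \circ \sheaf{F}(i \fc ij)\right)$ since precomposition distributes over pointwise joins. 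Hence if each $\phi^{(k)}$ satisfies \eqref{eq:sheaf-morphism} then so does $\bigjoin_k \phi^{(k)}$, so the join in $\lattice{P}$ lies in $\Hom(\sheaf{F},\sheaf{G})$. The dual argument with meets uses that $\sheaf{F}(i \fc ij)$ as a lower adjoint does not a priori preserve meets — here instead one argues, as in the proof of the Global Sections Theorem, that since $\Hom(\sheaf{F},\sheaf{G})$ is join-closed it is automatically a complete lattice (a join-closed subset of a complete lattice containing a bottom element is complete, with meets given by the join of all lower bounds), so it suffices to exhibit the bottom element, namely the zero morphism $\phi_i = 0$, $\phi_{ij}=0$, which trivially satisfies \eqref{eq:sheaf-morphism}. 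Either route finishes the proof.

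The main obstacle, and the only genuinely delicate point, is the same subtlety flagged in the Warning after Theorem \ref{thm:sections-sublattice}: meets in $\Hom(\sheaf{F},\sheaf{G})$ need not agree with coordinatewise meets in $\lattice{P}$, because the precomposition map $(-)\circ \sheaf{F}(i\fc ij)$ need not preserve meets. I expect the clean way around this is to avoid claiming meet-closure directly and instead conclude completeness from join-closure plus the existence of a bottom element — exactly the quasi-sublattice phenomenon already in play for global sections. I would state the conclusion as: $\Hom(\sheaf{F},\sheaf{G})$ is a complete lattice, a complete quasi-sublattice (but not in general a sublattice) of $\lattice{P}$, with joins computed coordinatewise and meets possibly larger.
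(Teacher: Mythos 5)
Your proof is correct and follows essentially the same route as the paper's: define joins of sheaf morphisms pointwise and observe that the commutativity constraint \eqref{eq:sheaf-morphism} is preserved under such joins because the restriction maps are join-preserving (and precomposition distributes over pointwise joins automatically). You are in fact slightly more careful than the paper, which stops after establishing join-closure; your closing step — that a join-closed subset of a complete lattice (containing the empty join, i.e.\ the zero morphism) is itself a complete lattice with meets given by joins of lower bounds — makes explicit the quasi-sublattice subtlety that the paper leaves implicit.
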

\begin{proof}
	Suppose $I$ is an arbitrary indexing set and $\{ \phi^\alpha \}_{\alpha \in I}$ is a family of sheaf morphisms. For $i \in \nodes{G}$, we write
	\begin{align*}
		\left(\bigjoin_{\alpha \in I} \phi^\alpha_i \right)(x_i) &=& \bigjoin_{\alpha \in I} \phi^\alpha_i(x_i),
	\end{align*}
	and similarly for joins of the family $\phi_{ij}^\alpha$ for $ij \in \edges{G}$.
	It follows the diagram
	\[
\begin{tikzcd}
	\sheaf{F}(i) \arrow[r,"\sheaf{F}(i \fc ij)"] \arrow[d,"\bigjoin \phi_i^\alpha"] & \sheaf{F}(ij) \arrow[d, "\bigjoin \phi_{ij}^\alpha"] &	\sheaf{F}(j) \arrow[l, "\sheaf{F}(j \fc ij)", swap] \arrow[d, " \bigjoin \phi_j^\alpha"] \\
	\sheaf{G}(i) \arrow[r, "\sheaf{G}(i \fc ij)"]					& \sheaf{G}(ij) 					  &	\sheaf{G}(j) \arrow[l, "\sheaf{F}(j \fc ij)", swap]
\end{tikzcd}
\]
commutes for all $ij \in \edges{G}$ because the restriction maps are join-preserving.
\end{proof}

\noindent  Recall, $\bool = \{0 < 1\}$, the boolean lattice with two elements.  The \define{boolean constant sheaf} over $\graph{G}$ is the sheaf $\sheaf{B}$ with $\sheaf{B}(i) = \bool$ for all $i \in \nodes{G}$, $\sheaf{B}(ij) = \bool$ for all $ij \in \edges{G}$, and $\sheaf{B}(i \fc ij) = \id_{\lattice{M}}$.

\begin{proposition} \label{prop:sections-sheaf-hom}
Suppose $\sheaf{F}$ is a $\cat{Sup}$-sheaf over a graph $\graph{G}$. Then, 
	\begin{align*}
	\sections{\graph{G}; \sheaf{F}} &\cong& \Hom \left( \sheaf{B} , \sheaf{F}  \right).
	\end{align*}
\end{proposition}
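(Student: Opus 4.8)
The plan is to exhibit an explicit bijection between sheaf morphisms $\sheaf{B} \Rightarrow \sheaf{F}$ and sections of $\sheaf{F}$, and then check it respects the lattice structure. First I would unwind the definition of a sheaf morphism $\phi: \sheaf{B} \Rightarrow \sheaf{F}$. Since every stalk of $\sheaf{B}$ is $\bool = \{0 < 1\}$ and each $\phi_i: \bool \to \sheaf{F}(i)$ is join-preserving, $\phi_i$ is determined by the single element $x_i := \phi_i(1) \in \sheaf{F}(i)$ (and $\phi_i(0) = 0$, forced by join-preservation applied to the empty join). Likewise each $\phi_{ij}$ is determined by $x_{ij} := \phi_{ij}(1) \in \sheaf{F}(ij)$. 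Thus the data of $\phi$ is exactly an assignment $\mathbf{x} \in C^\bullet(\graph{G}; \sheaf{F})$.

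Next I would translate the commutativity condition \eqref{eq:sheaf-morphism}. Since $\sheaf{B}(i \fc ij) = \id_{\bool}$, the condition $\sheaf{F}(i \fc ij) \circ \phi_i = \phi_{ij} \circ \sheaf{B}(i \fc ij) = \phi_{ij}$, evaluated at $1 \in \bool$, reads $\sheaf{F}(i \fc ij)(x_i) = x_{ij}$, and similarly $\sheaf{F}(j \fc ij)(x_j) = x_{ij}$ for the other incidence. (Evaluating at $0$ gives $0 = 0$, automatically satisfied since restriction maps are join-preserving.) These are precisely the section equations \eqref{eq:sections}. Hence the correspondence $\phi \mapsto (\phi_\sigma(1))_\sigma$ is a bijection onto $\sections{\graph{G}; \sheaf{F}}$, with inverse sending a section $\mathbf{x}$ to the morphism defined componentwise by $0 \mapsto 0$, $1 \mapsto x_\sigma$ (each such component map is trivially join-preserving on the two-element lattice, and the commutativity holds by the section equations).

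Finally I would verify this bijection is an order-isomorphism, hence an isomorphism of complete lattices (both sides being complete lattices — $\Hom(\sheaf{B},\sheaf{F})$ by the preceding proposition, $\sections{\graph{G};\sheaf{F}}$ by Theorem \ref{thm:sections-sublattice}). The order on $\Hom(\sheaf{B},\sheaf{F})$ has $\phi \preceq \psi$ iff $\phi_\sigma(x) \preceq \psi_\sigma(x)$ for all $\sigma$ and all $x \in \bool$; since the only nontrivial value is at $x = 1$, this is equivalent to $\phi_\sigma(1) \preceq \psi_\sigma(1)$ for all $\sigma$, i.e.\ to the componentwise order on $C^\bullet(\graph{G};\sheaf{F})$ restricted to sections. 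An order-preserving bijection whose inverse is also order-preserving between complete lattices is a lattice isomorphism (by the remark after the definition of lattice homomorphism, an order isomorphism of lattices is an isomorphism), so we are done.

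I do not expect a serious obstacle here; the only point requiring a little care is the observation that join-preservation of $\phi_i$ forces $\phi_i(0) = 0$ (the bottom element is the empty join), so that a morphism out of $\sheaf{B}$ genuinely carries no more data than the image of $1$ — this is what makes $\sheaf{B}$ play the role of the "unit" corepresenting sections, analogous to the constant sheaf in the classical setting.
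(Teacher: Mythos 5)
Your proposal is correct and follows essentially the same route as the paper's proof: a sheaf morphism out of $\sheaf{B}$ ``picks out'' the element $\phi_\sigma(1)$ in each stalk, and the commutativity condition \eqref{eq:sheaf-morphism} with $\sheaf{B}(i \fc ij) = \id_\bool$ is exactly the section condition. You simply make explicit two points the paper leaves implicit — that join-preservation forces $\phi_\sigma(0)=0$, and that the resulting bijection is an order-isomorphism.
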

\begin{proof}
A sheaf morphism $\phi$ is the data of a join-preserving map $\phi_i: \bool \to \sheaf{F}(i)$ for every $i \in \nodes{G}$ and a join-preserving map $\phi_{ij}: \bool \to \sheaf{F}(ij)$ for every $ij \in \edges{G}$ satisfying \eqref{eq:sheaf-morphism}. Therefore,  $\phi_i$ ``picks out'' and $x_i \in \sheaf{F}(i)$ for every node and $\phi_{ij}$ ``picks out'' a $x_{ij} \in \sheaf{F}(ij)$ for every edge, such that $\sheaf{F}(i \fc ij)(x_i) = x_{ij}$. This is precisely the data of a global section.
\end{proof}
\noindent Thus, we have another proof that sections form a complete lattice. Another consequence is that the isomorphism in Proposition \ref{prop:sections-sheaf-hom} implies a sheaf morphism $\sheaf{F} \xRightarrow{\phi} \sheaf{G}$ induces a join-preserving map on the lattice of global sections
\begin{align*}
	\sections{\graph{G}; \sheaf{F}} &\xrightarrow{\Gamma(\graph{G}; \phi) }& \sections{\graph{G}; \sheaf{G}}.
\end{align*}
via precomposition.

\begin{theorem}
    Suppose $\poset{P}$ is a poset. Then, $\Gamma(\graph{G}; -)$ is a functor
    \[\Gamma(\graph{G}; -): \cat{Sup}^{\poset{P}} \to \cat{Sup}.\]
\end{theorem}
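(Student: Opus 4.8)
The plan is to verify the functoriality of $\Gamma(\graph{G}; -)$ directly from the definitions, using the section-as-sheaf-hom identification from Proposition \ref{prop:sections-sheaf-hom} as the conceptual backbone but also checking the elementary properties by hand. First I would recall what data we are given: a sheaf $\sheaf{F}: \poset{P}_{\graph{G}} \to \cat{Sup}$ is an object of $\cat{Sup}^{\poset{P}_{\graph{G}}}$, and a morphism of such sheaves is a natural transformation, which unwinds (as in \eqref{eq:sheaf-morphism}) to a family of join-preserving maps $\phi_i, \phi_{ij}$ commuting with the restriction maps. The assignment sends $\sheaf{F} \mapsto \sections{\graph{G}; \sheaf{F}}$, which by Theorem \ref{thm:sections-sublattice} is a complete lattice, hence an object of $\cat{Sup}$; and it sends $\phi: \sheaf{F} \Rightarrow \sheaf{G}$ to the map $\Gamma(\graph{G}; \phi)$ induced by postcomposition: $\mathbf{x} = (x_i)_{i \in \nodes{G}} \mapsto (\phi_i(x_i))_{i \in \nodes{G}}$.

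There are three things to check. First, that $\Gamma(\graph{G}; \phi)$ is well-defined, i.e.~it actually lands in $\sections{\graph{G}; \sheaf{G}}$: if $\mathbf{x}$ is a section then $\sheaf{G}(i \fc ij)(\phi_i(x_i)) = \phi_{ij}(\sheaf{F}(i \fc ij)(x_i)) = \phi_{ij}(x_{ij}) = \phi_{ij}(\sheaf{F}(j \fc ij)(x_j)) = \sheaf{G}(j \fc ij)(\phi_j(x_j))$, using \eqref{eq:sheaf-morphism} twice and the section condition \eqref{eq:sections} for $\mathbf{x}$; so $(\phi_i(x_i))_i$ is a section of $\sheaf{G}$. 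Second, that $\Gamma(\graph{G}; \phi)$ is join-preserving: for a family $\{\mathbf{x}^\alpha\}_{\alpha \in A}$ of sections, the join $\bigjoin_\alpha \mathbf{x}^\alpha$ in $\sections{\graph{G}; \sheaf{F}}$ is computed by first taking the componentwise join $\mathbf{y} = (\bigjoin_\alpha x_i^\alpha)_i$ in $C^0(\graph{G}; \sheaf{F})$ and then applying the closure operator associated to the equalizer — but here the cleanest route is the sheaf-hom picture: since $\cat{Sup}$ is a closed category, postcomposition with $\phi$ preserves arbitrary joins of morphisms, and $\Gamma(\graph{G}; \phi)$ under the identification of Proposition \ref{prop:sections-sheaf-hom} is exactly postcomposition $\Hom(\sheaf{B}, \sheaf{F}) \to \Hom(\sheaf{B}, \sheaf{G})$, $\psi \mapsto \phi \circ \psi$. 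Third, functoriality proper: $\Gamma(\graph{G}; \id_{\sheaf{F}}) = \id$ is immediate since $(\id_{\sheaf{F}})_i = \id_{\sheaf{F}(i)}$, and $\Gamma(\graph{G}; \psi \circ \phi) = \Gamma(\graph{G}; \psi) \circ \Gamma(\graph{G}; \phi)$ follows because $(\psi \circ \phi)_i = \psi_i \circ \phi_i$ componentwise, and postcomposition is associative.

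I would organize the write-up by first observing that $\cat{Sup}^{\poset{P}_{\graph{G}}}$ — functors into $\cat{Sup}$ and natural transformations — is a category, then noting that the restriction to graphs is covered by reading $\poset{P}$ as $\poset{P}_{\graph{G}}$, and the statement as phrased is really about an arbitrary $\poset{P}$; but since $\sections{\graph{G}; -}$ is only defined for sheaves over graphs $\graph{G}$, I would interpret (and state) the theorem as: for a graph $\graph{G}$, $\Gamma(\graph{G}; -): \cat{Sup}^{\poset{P}_{\graph{G}}} \to \cat{Sup}$ is a functor. Then I carry out the three checks above. The main obstacle — really the only subtlety — is the join-preservation claim, because joins in $\sections{\graph{G}; \sheaf{F}}$ are \emph{not} the ambient joins in $C^0(\graph{G}; \sheaf{F})$ (this is flagged in the Warning following Theorem \ref{thm:sections-sublattice}): $\sections{\graph{G}; \sheaf{F}}$ is only a complete quasi-sublattice. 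So a componentwise argument does not immediately work, and it is cleanest to route join-preservation entirely through Proposition \ref{prop:sections-sheaf-hom}, where joins of sheaf morphisms \emph{are} computed componentwise in the internal hom, making $\psi \mapsto \phi \circ \psi$ manifestly join-preserving. Everything else is a routine diagram chase.

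\begin{proof}
We interpret the statement for $\poset{P} = \poset{P}_{\graph{G}}$, the incidence poset of $\graph{G}$, so that $\cat{Sup}^{\poset{P}}$ is the category of $\cat{Sup}$-sheaves over $\graph{G}$ and sheaf morphisms. On objects, $\Gamma(\graph{G}; -)$ sends $\sheaf{F}$ to $\sections{\graph{G}; \sheaf{F}} \cong H^0(\graph{G}; \sheaf{F})$, a complete lattice by Theorem \ref{thm:sections-sublattice}, hence an object of $\cat{Sup}$. Given a sheaf morphism $\phi: \sheaf{F} \Rightarrow \sheaf{G}$ with components $\phi_i, \phi_{ij}$ satisfying \eqref{eq:sheaf-morphism}, define $\Gamma(\graph{G}; \phi)$ on a section $\mathbf{x} = (x_i)_{i \in \nodes{G}}$ by $\Gamma(\graph{G}; \phi)(\mathbf{x}) = (\phi_i(x_i))_{i \in \nodes{G}}$.

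First, $\Gamma(\graph{G}; \phi)(\mathbf{x})$ is a section of $\sheaf{G}$: for any $i \fc ij \cofc j$,
\begin{align*}
	\sheaf{G}(i \fc ij)(\phi_i(x_i)) &= \phi_{ij}(\sheaf{F}(i \fc ij)(x_i)) = \phi_{ij}(\sheaf{F}(j \fc ij)(x_j)) = \sheaf{G}(j \fc ij)(\phi_j(x_j)),
\end{align*}
using \eqref{eq:sheaf-morphism} and the section condition \eqref{eq:sections} for $\mathbf{x}$.

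Next, $\Gamma(\graph{G}; \phi)$ preserves joins. Under the isomorphism of Proposition \ref{prop:sections-sheaf-hom}, $\sections{\graph{G}; \sheaf{F}} \cong \Hom(\sheaf{B}, \sheaf{F})$ and $\sections{\graph{G}; \sheaf{G}} \cong \Hom(\sheaf{B}, \sheaf{G})$, and under these identifications $\Gamma(\graph{G}; \phi)$ is postcomposition $\psi \mapsto \phi \circ \psi$. Given a family $\{\psi^\alpha\}_{\alpha \in A} \subseteq \Hom(\sheaf{B}, \sheaf{F})$, the join $\bigjoin_\alpha \psi^\alpha$ is computed componentwise, $(\bigjoin_\alpha \psi^\alpha)_i(x) = \bigjoin_\alpha \psi^\alpha_i(x)$; since each $\phi_i$ is join-preserving,
\begin{align*}
	\bigl(\phi \circ \textstyle\bigjoin_\alpha \psi^\alpha\bigr)_i(x) &= \phi_i\bigl(\textstyle\bigjoin_\alpha \psi^\alpha_i(x)\bigr) = \textstyle\bigjoin_\alpha \phi_i(\psi^\alpha_i(x)) = \bigl(\textstyle\bigjoin_\alpha \phi \circ \psi^\alpha\bigr)_i(x),
\end{align*}
and likewise on edges, so $\phi \circ \bigjoin_\alpha \psi^\alpha = \bigjoin_\alpha (\phi \circ \psi^\alpha)$. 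Hence $\Gamma(\graph{G}; \phi)$ is a morphism in $\cat{Sup}$.

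Finally, functoriality. If $\phi = \id_{\sheaf{F}}$ then $\phi_i = \id_{\sheaf{F}(i)}$ for all $i$, so $\Gamma(\graph{G}; \id_{\sheaf{F}}) = \id_{\sections{\graph{G}; \sheaf{F}}}$. If $\phi: \sheaf{F} \Rightarrow \sheaf{G}$ and $\psi: \sheaf{G} \Rightarrow \sheaf{H}$, then $(\psi \circ \phi)_i = \psi_i \circ \phi_i$, whence for any section $\mathbf{x}$,
\begin{align*}
	\Gamma(\graph{G}; \psi \circ \phi)(\mathbf{x}) &= ((\psi_i \circ \phi_i)(x_i))_i = (\psi_i(\phi_i(x_i)))_i = \Gamma(\graph{G}; \psi)\bigl(\Gamma(\graph{G}; \phi)(\mathbf{x})\bigr).
\end{align*}
Thus $\Gamma(\graph{G}; -): \cat{Sup}^{\poset{P}} \to \cat{Sup}$ is a functor.
\end{proof}
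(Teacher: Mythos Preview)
Your proof is correct and follows essentially the same approach as the paper. The paper does not give a formal proof environment for this theorem; it is stated immediately after the remark that the isomorphism $\sections{\graph{G}; \sheaf{F}} \cong \Hom(\sheaf{B}, \sheaf{F})$ of Proposition~\ref{prop:sections-sheaf-hom} implies a sheaf morphism $\phi$ induces a join-preserving map $\Gamma(\graph{G}; \phi)$ on sections, leaving the remaining functoriality checks implicit. Your write-up spells out precisely the details the paper elides --- well-definedness, the componentwise-join argument through the sheaf-hom picture (which is exactly the paper's route for join-preservation), and the identity/composition axioms --- and you correctly flag the subtlety that joins in $\sections{\graph{G}; \sheaf{F}}$ are not ambient joins in $C^0$, justifying the detour through $\Hom(\sheaf{B}, -)$.
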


\subsection{Tensor product \& sheaf hom}

\begin{definition}[Tensor Product Sheaf]
	Suppose $\sheaf{F}$ and $\sheaf{G}$ are sheaves over a graph $\graph{G}$. The \define{tensor product sheaf} $\sheaf{F} \otimes \sheaf{G}$ is the functor with
	\begin{align*}
	\left(\sheaf{F} \otimes \sheaf{G}\right)(i) &=& \sheaf{F}(i) \otimes \sheaf{G}(j) \\
 \left(\sheaf{F} \otimes \sheaf{G}\right)(ij) &=& \sheaf{F}(ij) \otimes \sheaf{G}(ij)
	\end{align*}
	for all $i \in \nodes{G}$, $ij \in \edges{G}$ and restriction maps
    \[\left(\sheaf{F} \otimes \sheaf{G}\right)(i) \xrightarrow{\left(\sheaf{F} \otimes \sheaf{G}\right)(i \fc ij)} \left(\sheaf{F} \otimes \sheaf{G}\right)(ij) \]
 sending $\sheaf{F}(i) \xrightarrow{f} \op{\sheaf{G}(i)}$
 to the composition
\[\begin{tikzcd}
	{\cosheaf{F}(ij)} & {\cosheaf{F}(i) } & {\sheaf{F}(i)} & {\op{\sheaf{G}(i)}} & {\op{\sheaf{G}(ij)}}
	\arrow["{\cosheaf{F}(i \fc ij)}", from=1-1, to=1-2]
	\arrow[Rightarrow, no head, from=1-2, to=1-3]
	\arrow["f", from=1-3, to=1-4]
	\arrow["{\op{\sheaf{G}(i \fc ij)}}", from=1-4, to=1-5]
	\arrow["{(\sheaf{F} \otimes \sheaf{G})(i \fc ij)(f)}"', curve={height=24pt}, from=1-1, to=1-5]
\end{tikzcd}\]
\end{definition}

\subsection{Pullback \& pushforward}

Suppose $\graph{G} = (\nodes{G}, \edges{G}) $ and $\graph{H} = (\nodes{H}, \edges{H})$. A \define{graph homomorphism} $p: \graph{G} \to \graph{H}$ is an strict order-preserving map
\begin{align*}
 	p : \poset{P}_{\graph{G}} \to \poset{P}_{\graph{H}}.
 \end{align*} 
Suppose $p$ is a graph homomorphism $p: \graph{G} \to \graph{H}$ and $\sheaf{F}$ is a $\cat{Sup}$-sheaf over $\graph{H}$. The \define{pullback} of $\sheaf{F}$ over $p$ is the sheaf denoted $p^{\ast} \sheaf{F}$ with
\begin{align*}
	p^{\ast} \sheaf{F}(i) &=& \sheaf{F}\left( p(i) \right) \\ p^{\ast} \sheaf{F}(ij) &=& \sheaf{F}\left( p(ij) \right), \\
	p^{\ast}\sheaf{F}(i \fc ij)(x_i) &=& \sheaf{F}\left( p(i) \fc p(ij) \right).  
\end{align*}
The definition of the \define{pushforward} of $\sheaf{F}$, denoted $p_{\ast} \sheaf{F}$ is more involved. Stalks are
\begin{align*}
    p_{\ast} \sheaf{F}(i) &=& \lim_{k \fc p(ij)} \sheaf{F}(ij). 
\end{align*}

The cannonical example of a pushforward is the pushforward of the map $p: \graph{G} \to \ast$ which is isomorphic to $\sections{\graph{G}; \sheaf{F}}$. A canonical example of a pullback is the pullback of an embedding $\graph{H} \hookrightarrow \graph{G}$ called the restriction of $\sheaf{F}$ to $\graph{H}$.
\chapter{The Tarski Laplacian}\label{ch:tarski} %

The graph Laplacian is an operator that acts on scalar fields while the connection Laplacian acts on vector fields. Their common denominator is the sheaf Laplacian \cite{hansen2019toward}. In this chapter, we motivate a few candidate Laplacians for Tarski sheaves from several angles. First, from the vantage of lattice-valued consensus, then as an analogy to parallel transport, then finally as an homage to Hodge theory. Along the way, we present the Hodge-Tarski Fixed Point Theorem (Theorem \ref{thm:main}) and provide algorithms to compute global sections in various settings.



\section{Consensus}
\label{sec:consensus}

One practical motivation of the Tarski Laplacian is consensus.

\begin{definition}[Multi-Agent Lattice Consensus]
Let $\lattice{L}$ be a lattice. Suppose $\graph{G} = (\nodes{G}, \edges{G})$ is a graph. A \define{consensus} function is a map $\chi: \lattice{L}^{\nodes{G}} \to \lattice{L}$. A \define{consensus protocol} is a map $\chi: \lattice{L}^{\nodes{G}} \to \lattice{L}$ (consensus function) together with an algorithm whose input is an arbitrary signal $\mathbf{x}: \nodes{G} \to \lattice{L}$ and whose (desired) output is $\chi(\mathbf{x})$.
\end{definition}

\begin{example}[Meet-Consensus]
    Meet-consensus takes place in a sequence of rounds. Each node, every node its state $x_i \in \lattice{L}, i \in \nodes{G}$ to each of its neighbors $j \in \nbhd{i}$. At the conclusion of the round, a node $i$ computes the meet of its received messages as well as (possibly) $x_i$. In a simple example (Figure \ref{fig:consensus}), two nodes (left, right) share a communication link with a common node (center). At $t=0$ (light), each node computes meets with neighbors. At $t=2$ (dark), a consensus is reached.
\begin{figure}[h]
    \centering
    \includegraphics[width=0.8\linewidth]{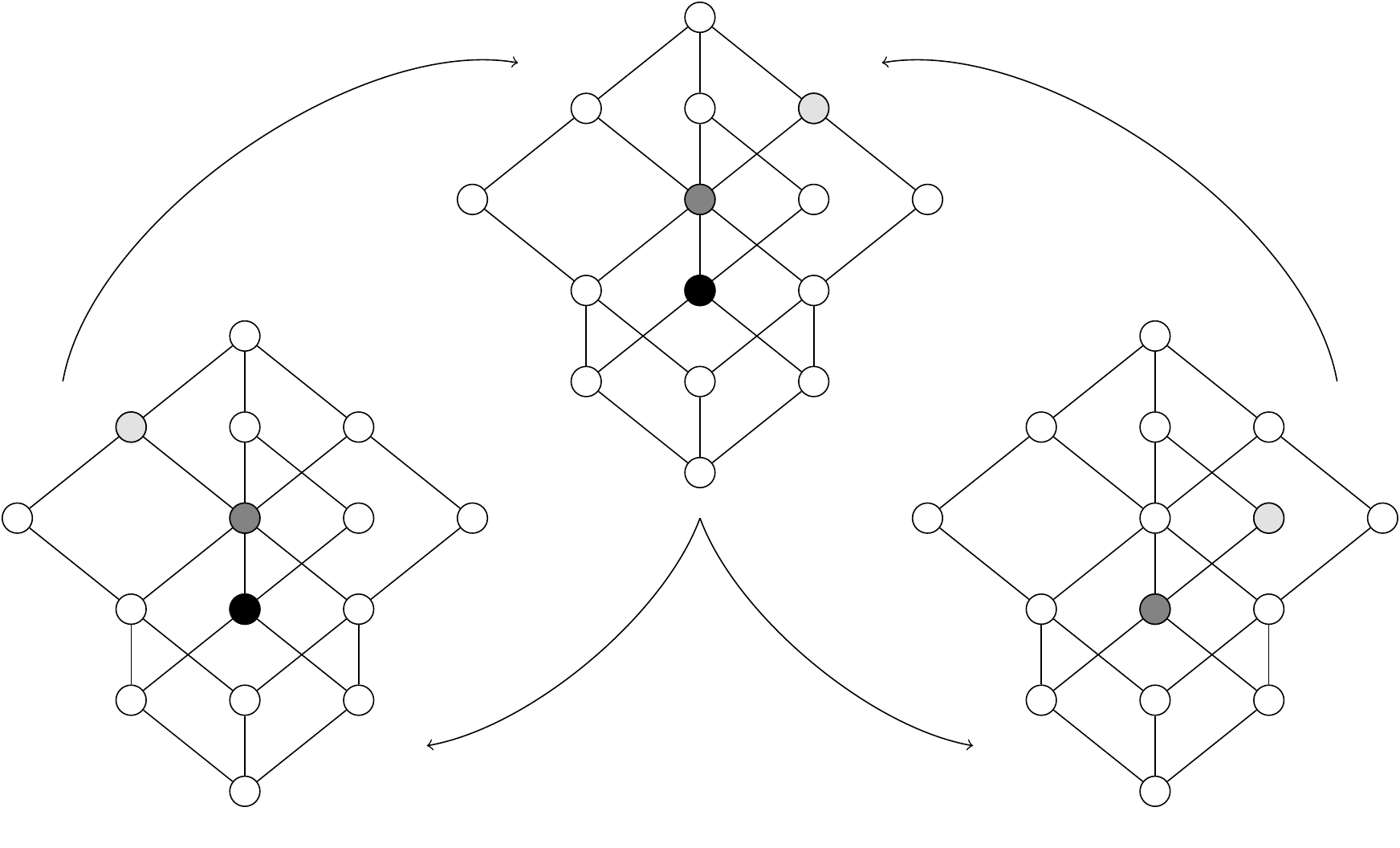}
    \caption{A meet-consensus algorithm.}
    \label{fig:consensus}
\end{figure}
The updates are defined in a recursion equation 
\begin{align*}
		x_i[t+1] &=& x_i[t] \meet \left( \bigmeet_{j \in \nbhd{i}} x_j[t] \right) \label{eq:meet-consensus}
\end{align*}
\begin{algorithm}{}
\DontPrintSemicolon
\SetKwFor{ForPar}{for}{do in parallel}{end}
\KwIn{Graph $\graph{G} = (\nodes{G}, \edges{G})$; $x_i \in \lattice{L}, i \in \nodes{G}$}
\KwOut{$x_i \in \lattice{L}, i \in \nodes{G}$}
\Repeat{$x_i = x_j~\forall~ij \in \edges{G}$}{
\ForPar{$i \in \nodes{G}$}{
\For{$j \in \nbhd{i}$}{
$x_i \leftarrow x_i \meet x_j$ \;
}}}
\caption{Meet-consensus; in join-consensus, replace $\meet$ with $\join$.} \label{alg:meet-consensus}
\end{algorithm}
\end{example}

\begin{proposition}
	Suppose $\graph{G}$ is connected and $\lattice{L}$ is a finite lattice. The meet-consensus algorithm \eqref{eq:meet-consensus} converges to  $\left( \bigmeet_{j \in \nodes{G}} x_j \right)_{i \in \nodes{G}}$ in finite time.
\end{proposition}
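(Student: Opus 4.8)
The plan is to show the iteration defined by $x_i[t+1] = x_i[t] \meet \bigl(\bigmeet_{j \in \nbhd{i}} x_j[t]\bigr)$ eventually stabilizes at the globally constant signal whose value is $m := \bigmeet_{j \in \nodes{G}} x_j[0]$. I would proceed in two parts: first establish monotonic decrease and a uniform lower bound so convergence (to \emph{some} fixed signal) is guaranteed, then identify the limit explicitly using connectivity.

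First I would observe that each update is deflationary: since $x_i[t+1]$ is a meet involving $x_i[t]$ as one of the terms, $x_i[t+1] \preceq x_i[t]$ for all $i$ and all $t$. So each coordinate traces a descending chain in $\lattice{L}$. Because $\lattice{L}$ is finite it satisfies DCC, so each coordinate stabilizes after finitely many steps; taking the maximum over the $|\nodes{G}|$ coordinates, the whole signal $\mathbf{x}[t]$ is eventually constant in $t$, say equal to $\mathbf{x}^{\ast}$ for all $t \geq T$. At a fixed point we have $x_i^{\ast} = x_i^{\ast} \meet \bigmeet_{j \in \nbhd{i}} x_j^{\ast}$, which by the combinatorial definition of meet (Lemma \ref{lem:convenience}) says exactly $x_i^{\ast} \preceq x_j^{\ast}$ for every $j \in \nbhd{i}$.

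Next I would use connectivity to turn these local inequalities into global equalities. Given any two nodes $i, j \in \nodes{G}$, pick a path $i = i_0, i_1, \dots, i_\ell = j$ in $\graph{G}$. Applying $x_{i_k}^{\ast} \preceq x_{i_{k+1}}^{\ast}$ along the path and transitivity gives $x_i^{\ast} \preceq x_j^{\ast}$; running the path the other direction gives $x_j^{\ast} \preceq x_i^{\ast}$, so by anti-symmetry $x_i^{\ast} = x_j^{\ast}$. Hence $\mathbf{x}^{\ast}$ is a constant signal with common value $c \in \lattice{L}$. It remains to check $c = m$. Since each coordinate only ever decreased from $x_i[0]$, we have $c \preceq x_i[0]$ for every $i$, so $c \preceq m$. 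For the reverse, a straightforward induction on $t$ shows $x_i[t] \succeq m$ for all $i, t$: it holds at $t = 0$ since $m$ is the meet of all initial values, and if $x_j[t] \succeq m$ for all $j$ then $x_i[t+1] = x_i[t] \meet \bigmeet_{j \in \nbhd{i}} x_j[t] \succeq m$ because $m$ is a lower bound of all the terms. Thus $c \succeq m$, giving $c = m$.

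For the finite-time claim, the DCC argument already gives termination after finitely many rounds, but I would also note a concrete bound: once $t$ exceeds the diameter $\mathrm{diam}(\graph{G})$, the value $x_i[t]$ has "seen" the minimum over a ball of radius $t$ around $i$, so after $\mathrm{diam}(\graph{G})$ rounds every node's value is $\preceq$ the meet over all of $\nodes{G}$, hence equals $m$. The main obstacle — really the only subtle point — is being careful that the deflationary/DCC argument licenses stabilization of the \emph{entire vector} simultaneously (it does, since finitely many coordinates each stabilize), and that the path argument genuinely needs \emph{both} reflexivity-closure-free anti-symmetry and connectivity; neither alone suffices. Everything else is routine.
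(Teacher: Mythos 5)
Your proof is correct, but it takes a genuinely different route from the paper's. The paper argues by unrolling the recursion: using idempotency, commutativity, and associativity of $\meet$ it establishes the closed form $x_i[k] = \bigmeet_{j \in \nbhd{i}^k \cup \{i\}} x_j[0]$, where $\nbhd{i}^k$ is the $k$-hop neighborhood, and then connectivity guarantees the $k$-hop ball exhausts $\nodes{G}$ once $k \geq \mathrm{diam}(\graph{G})$. That argument never actually uses finiteness of $\lattice{L}$ --- only that finite meets exist --- and it delivers an explicit convergence time for free. Your argument instead splits the problem into (i) stabilization, via deflationarity plus the descending chain condition on the finite lattice, and (ii) identification of the limit, via the local fixed-point inequalities $x_i^{\ast} \preceq x_j^{\ast}$ propagated along paths in both directions and the sandwich $m \preceq c \preceq m$. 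What your decomposition buys is robustness: step (i) is a generic termination argument that would apply to any deflationary order-preserving iteration on a DCC lattice (it is essentially the same mechanism the paper later invokes for the heat-flow algorithm under Assumption DCC), and step (ii) cleanly isolates where connectivity and anti-symmetry enter. What it costs is that the DCC argument alone gives no rate; you recover the $\mathrm{diam}(\graph{G})$ bound only in your closing remark, which is in effect a compressed restatement of the paper's whole proof. Both arguments are sound; yours is slightly longer but makes the hypotheses' roles more transparent, while the paper's is sharper on the quantitative bound and needs less of the hypothesis on $\lattice{L}$.
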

\begin{proof}
	After one round, $x_i$ is updated with the meet of $x_i$ with each of the $x_j$ for each neighbor $j \in \nbhd{i}$. Let $\nbhd{i}^k$ denote the $k$-hop neighbors of $i$. It follows that
	\begin{align*}
	x_i[k] &=& \bigmeet_{j \in \nbhd{i}^k \cup i} x_j[0]. 
	\end{align*}
	Because $\graph{G}$ is connected, for any $i \in \nodes{G}$ it is guaranteed that $\nbhd{i}^k \cup i \supseteq \nodes{G}$. Hence,
	\begin{align*}
	x_i[k] &=& \bigmeet_{j \in \nodes{G}} x_j[0]
	\end{align*}
	for $k \geq \mathrm{diam}(\graph{G})$.
\end{proof}

Consensus can easily be recast as a global section problem: consensus is the lattice of sections of the constant sheaf.

\begin{proposition}\label{prop:locally-constant}
Suppose $\underline{\lattice{L}}$ is the constant sheaf over $\graph{G}$. Then,
\begin{align*}
	\sections{\graph{G}; \underline{\lattice{L}}} &=& \{ (x_i)_{i \in \nodes{G}}~\vert~x_i = x_j~\forall~ij \in \edges{G}\}.
\end{align*}
\end{proposition}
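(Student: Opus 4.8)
The statement to prove is Proposition \ref{prop:locally-constant}: for the constant sheaf $\underline{\lattice{L}}$ over $\graph{G}$, the global sections are exactly the diagonal assignments $\{(x_i)_{i \in \nodes{G}} \mid x_i = x_j \ \forall ij \in \edges{G}\}$.

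\medskip

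The plan is to unwind the definition of a global section (Definition \ref{def:sections}, in the $\cat{Sup}$-valued form given by \eqref{eq:sections}) and specialize it to the constant sheaf, where every stalk is $\lattice{L}$ and every restriction map is the identity Galois connection $(\id_{\lattice{L}}, \id_{\lattice{L}})$. First I would recall that by Proposition \ref{prop:exist-equalizer} and the computation following the definition of $H^0$, an element $\mathbf{x} \in C^0(\graph{G}; \underline{\lattice{L}}) = \prod_{i \in \nodes{G}} \lattice{L}$ is a section precisely when $\underline{\lattice{L}}(i \fc ij)(x_i) = \underline{\lattice{L}}(j \fc ij)(x_j)$ for every edge $ij$. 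Substituting $\underline{\lattice{L}}(i \fc ij) = \id_{\lattice{L}}$ and likewise for $j$, this condition reads $x_i = x_j$ for every $ij \in \edges{G}$, which is exactly the right-hand side of the claimed equality. The bridging $1$-cochain value $x_{ij}$ is then forced to equal this common element, so no further constraint arises from the edge stalks.

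\medskip

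Concretely I would argue by mutual inclusion. For the forward inclusion, take $\mathbf{x} \in \sections{\graph{G}; \underline{\lattice{L}}}$; by \eqref{eq:sections} applied to the constant sheaf, $x_i = \id_{\lattice{L}}(x_i) = x_{ij} = \id_{\lattice{L}}(x_j) = x_j$ for each $ij \in \edges{G}$, so $\mathbf{x}$ lies in the diagonal set. For the reverse inclusion, given $\mathbf{x}$ with $x_i = x_j$ for all edges, set $x_{ij} := x_i \ (= x_j)$ for each edge $ij$; then the agreement condition \eqref{eq:sections} holds trivially since all three terms coincide, witnessing $\mathbf{x}$ as a section.

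\medskip

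I do not anticipate a genuine obstacle here: this is essentially a definitional unwinding, and the only thing to be careful about is bookkeeping — making sure we are reading the restriction maps of $\underline{\lattice{L}}$ as the lower adjoints $\sheaf{F}(i \fc ij) = \ladj{\bisheaf{F}(i \fc ij)} = \id_{\lattice{L}}$ (consistent with the $\cat{Sup}$-sheaf $\sheaf{F}$ extracted from the Tarski sheaf $\bisheaf{F}$), and that the equalizer description of $H^0(\graph{G}; \sheaf{F}) \cong \sections{\graph{G}; \sheaf{F}}$ from the preceding section is what licenses the purely stalkwise reformulation. If one wanted a remark rather than a separate argument, one could simply note that the constant sheaf has all restriction maps equal to the identity, so the ``agreement along restriction maps'' in the displayed formula for $\Gamma(\graph{G}; \sheaf{F})$ in the introduction degenerates to literal equality $x_i = x_j$ across edges — recovering ordinary consensus, which is the point of the footnote there.
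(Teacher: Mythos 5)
Your proof is correct and takes essentially the same route as the paper, which simply observes that the section condition for the constant sheaf reads $\id(x_i)=\id(x_j)$ for every edge $ij$; your version just spells out the mutual inclusions and the forced value $x_{ij}=x_i$ in more detail.
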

\begin{proof}
	The global section condition dictates that $\id(x_i) = \id(x_j)$ for all $ij \in \edges{G}$.
\end{proof}

\noindent The lattice structure of $\sections{\graph{G}; \underline{\lattice{L}}}$ is ``the same'' as $\lattice{L}$ (under an isomorphism) but not equal.


\begin{proposition} \label{prop:consensus-sections}
Suppose $\graph{G}$ is a graph, $\lattice{L}$ is a lattice, and
\begin{align}
        S &:& C^0(\graph{G}; \underline{\lattice{L}}) \to C^0(\graph{G}; \underline{\lattice{L}}) \\
	(S \mathbf{x})_i &=& \bigmeet_{j \in \nbhd{j}} x_j 
\end{align}
Then,
 \begin{align*}
 	\suffix(S) &=& \sections{\graph{G}; \underline{\lattice{L}}}.
 \end{align*}
\end{proposition}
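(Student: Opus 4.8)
The plan is to show a double inclusion between $\suffix(S)$, the set of inflationary points of $S$, and $\sections{\graph{G}; \underline{\lattice{L}}}$, which by Proposition \ref{prop:locally-constant} equals the set of locally constant assignments. Recall $\mathbf{x} \in \suffix(S)$ means $(S\mathbf{x})_i \succeq x_i$ for every node $i$, i.e. $\bigmeet_{j \in \nbhd{i}} x_j \succeq x_i$ for all $i \in \nodes{G}$. (I note in passing that the statement implicitly needs $\graph{G}$ to be connected and each node to have at least one neighbor, as in Proposition \ref{prop:locally-constant} and the surrounding discussion; I will assume $\graph{G}$ is connected.)

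First I would prove the easy inclusion $\sections{\graph{G}; \underline{\lattice{L}}} \subseteq \suffix(S)$. If $\mathbf{x}$ is a section of the constant sheaf, then $x_i = x_j$ for every edge $ij \in \edges{G}$, so $\bigmeet_{j \in \nbhd{i}} x_j = x_i$ (the meet of a nonempty constant family), whence $(S\mathbf{x})_i = x_i \succeq x_i$ and $\mathbf{x} \in \suffix(S)$.

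Next, the reverse inclusion $\suffix(S) \subseteq \sections{\graph{G}; \underline{\lattice{L}}}$. Suppose $\mathbf{x} \in \suffix(S)$. Using Lemma \ref{lem:convenience}, the inequality $x_i \preceq \bigmeet_{j \in \nbhd{i}} x_j$ is equivalent to $x_i \preceq x_j$ for every neighbor $j \in \nbhd{i}$. Thus for every edge $ij \in \edges{G}$ we have both $x_i \preceq x_j$ (since $j \in \nbhd{i}$) and $x_j \preceq x_i$ (since $i \in \nbhd{j}$), and antisymmetry gives $x_i = x_j$. Hence $\mathbf{x}$ is constant along every edge; invoking connectedness of $\graph{G}$ (a standard induction along a path joining any two nodes, exactly as in the proof of Proposition \ref{prop:locally-constant} or the meet-consensus convergence argument), all coordinates of $\mathbf{x}$ are equal, so $\mathbf{x} \in \sections{\graph{G}; \underline{\lattice{L}}}$. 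Combining the two inclusions finishes the proof.

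The only mild subtlety — the ``main obstacle'' such as it is — is the appeal to Lemma \ref{lem:convenience} to unpack the meet inequality into a family of pairwise inequalities, together with the observation that both orientations $x_i \preceq x_j$ and $x_j \preceq x_i$ arise from applying the inflationary condition at the two endpoints of an edge; after that, antisymmetry and connectivity do all the remaining work, so no genuine calculation is needed.
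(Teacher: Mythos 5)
Your proof is correct and follows essentially the same route as the paper's: the trivial inclusion via Proposition \ref{prop:locally-constant}, and for the converse, unpacking $x_i \preceq \bigmeet_{j \in \nbhd{i}} x_j$ into $x_i \preceq x_j$ for each neighbor, applying it at both endpoints of an edge, and concluding by antisymmetry. One correction: your appeal to connectedness is unnecessary and should be dropped — by Proposition \ref{prop:locally-constant} a section of $\underline{\lattice{L}}$ is only required to satisfy $x_i = x_j$ on every \emph{edge} (locally constant), not to be globally constant, so the argument terminates as soon as antisymmetry gives equality across each edge, and the proposition holds for arbitrary (possibly disconnected) graphs exactly as stated.
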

\begin{proof}
 One direction is trivial. By Proposition \ref{prop:locally-constant}, if $\mathbf{x} \in C^0(\graph{G}; \underline{\lattice{L}})$ is locally constant, then $(S \mathbf{x})_i = x_i$, hence, $S \mathbf{x} \succeq \mathbf{x}$. Now suppose $\mathbf{x} \in \suffix(S)$. Then, for all $i \in \nodes{G}$, $x_i = x_i \meet \bigmeet_{j \in \nbhd{i}} x_j$ which is equivalent to $x_i \preceq \bigmeet_{j \in \nbhd{i}} x_j$ for all $i \in \nodes{G}$.
Suppose $ij \in \edges{G}$ is a particular edge.
Then, both
\begin{align*}
	x_i &\preceq& \bigmeet_{j \in \nbhd{i}} x_j &\preceq& x_j, \\
	x_j &\preceq& \bigmeet_{i \in \nbhd{j}} x_i &\preceq& x_i.
\end{align*}
Hence, $x_i = x_j$ for all $ij \in \edges{G}$, by the anti-symmetry axiom of posets, and $\mathbf{x}$ is a global section of the constant sheaf.
 \end{proof}

Beyond meet-consensus (or, dually, join-consensus), are there other consensus algorithms? To answer this question, we first describe sheaves whose sections correspond to as approximations of the constant sheaf. The following definition is due to Hansen \cite[Chapter 7]{hansen2020laplacians}.

\begin{definition}[Approximation] \label{def:approx-constant}
Let $\graph{G}$ be a graph and $\sheaf{G}$ and $\sheaf{F}$ be Tarski sheaves over $\graph{G}$. Then, $\sheaf{F}$ is an approximation to $\sheaf{G}$ if there is there is a sheaf morphism $\sheaf{F} \xRightarrow{\phi} \sheaf{G}$ such that $\sheaf{F}(i) \cong \sheaf{F}(j)$ for all $i \in \nodes{G}$ and such that the induced join-preserving map
	\begin{align*}
		\sections{\graph{G}; \sheaf{F}} &\rightarrow& \sections{\graph{G}; \sheaf{G}}
	\end{align*}
	is an isomorphism.
\end{definition}

\noindent An approximation of the constant sheaf $\underline{\lattice{L}}$ consists of a sheaf whose vertex stalks are $\lattice{L}$ and whose lattice of sections is the same as $\sections{\graph{G}; \underline{\lattice{L}}}$, isomorphic to $\lattice{L}$ if $\graph{G}$ is connected. In other words, approximations to the constant sheaf also model consensus. An approximation of $\underline{\lattice{L}}$ is a sheaf $\sheaf{F}$ characterized by the following diagram
\[\begin{tikzcd}
	\sheaf{F}(i) \arrow[r, "\sheaf{F}(i \fc ij)"] & \sheaf{F}(ij) \arrow[d, "\phi_{ij}"] 		& \sheaf{F}(j) \arrow[l, swap, "\sheaf{F}(j \fc ij)"] \\
	\lattice{L}  \arrow[r, "\id"] \arrow[u, equals]			  & \lattice{L}			& \lattice{L} \arrow[l, swap, "\id"] \arrow[u, equals]
\end{tikzcd}\]. \label{eq:approximation-cd}


There are more questions about lattice consensus than answers. If the consensus operator $S$ is a Laplacian for the constant sheaf, what is the ``sheaf Laplacian'' of an approximation to the constant sheaf? or an arbitrary Tarski sheaf? Furthermore, if $\sheaf{F}$ approximates $\underline{\lattice{L}}$ and $\chi: C^0(\sheaf{F}; \graph{G}) \to \lattice{L}$ is a consensus function, does there exist ``sheaf dynamics'' converging to the $\chi(\mathbf{x}[0])$?
There is also a utility for consensus ``in-between'' meet- and join- consensus. In some applications, for instance, earliest rendezvous \cite{nejad2009max}, min-consensus is desired. On the other hand, consider the task of forming teams amongst a set of agents. Cast as a consensus problem on the lattice of subpartitions, it would not be sensible to achieve either the finest or the coarsest organization possible.

\begin{example}[Causal Inference]
Suppose a set of random variables is collected in a partial order $\poset{P}$. For instance, random variables could indicate (transitive, reflexive, anti-symmetric) causal relationships between features. From join-consensus on the lattice of downsets $\mathcal{D}(\poset{P})$, a subset of $\poset{P}$ can be recovered from the poset of join-irreducibles of $\mathcal{D}(\poset{P})$ (isomorphic to $\poset{P}$, Theorem \ref{thm:birkhoff}), thus defining a weak consensus function $\chi: \poset{P}^n \to \powerset{\poset{P}}$.
\end{example}

\begin{example}[Distributed Optimization]
A multitude of agent-based tasks are formulated or equivalent to distributed optimization \cite{djuric2018cooperative}. Suppose $\{1,2, \dots, n\}$ is a group of agents with objective function $f_i,~i \in \{1,2,\dots, n\}$. In the simplest setting, \define{distributed optimzation} is unconstrained optimization problem
\begin{align}
    \min_{\mathbf{x}} \sum_{i = 1}^n f_i(\mathbf{x}) \label{eq:dist-optim}
\end{align}
Consensus and distributed optimization are intertwined through the observation that \eqref{eq:dist-optim} is equivalent to
\begin{align}
    \min_{\mathbf{x}_1, \mathbf{x}_2, \dots, \mathbf{x}_n} \sum_{i = 1}^n f_i(\mathbf{x}_i) \nonumber \\ 
    \text{subject to} \label{eq:dist-optim} \\ 
    \mathbf{x}_i = \mathbf{x}_j \quad \forall i, j \in \{0,1,\dots, n\} \nonumber
\end{align}
Distributed optimization, thus, reduces to consensus and optimization of the local objective functions $f_i, i \in \{1,2,\dots, n\}$.

Recent work generalizes distributed optimization to network sheaves over $\graph{G}$ with euclidean stalks \cite{hansen2019distributed}:
\begin{align}
	\min_{\mathbf{x} \in C^0(\graph{G}; \sheaf{F})} \sum_{i \in \nodes{G}} f_i(\mathbf{x}_i) \nonumber \\
	\text{subject to} \label{eq:sheaf-dist-op} \\
	\mathbf{x} \in  H^0(\graph{G}; \sheaf{F}) \nonumber.
\end{align}
where each $f_i$ is a convex function $f_i: \sheaf{F}(i) \to \R$.

Optimization algorithms maximizing \cite{nakashima2019subspace} or minimizing \cite{topkis1978minimizing} real-valued functions on lattices are restricted to certain classes of functions (e.g.~submodular, supermodular, increasing difference). An algorithm solving \eqref{eq:sheaf-dist-op} for $\sheaf{F}$, a Tarski sheaf, is an open problem whose solution, we speculate, will involve clever modulations between the tasks of running lattice consensus and performing ``gradient'' updates of the local objective functions $f_i, i \in \nodes{V}$.



\end{example}

\section{Parallel Transport}
\label{sec:parallel}

Schreiber and Waldorf \cite{schreiber2009parallel} and others \cite{krishnan2020invertibility} have argued parallel transport on a manifold $\Space{M}$ valued in an arbitrary category $\cat{D}$ is modeled by a functor
\begin{align*}
	P: \mathrm{Path}(\Space{M}) \to \cat{D}.
\end{align*}
from the path groupoid of $\Space{M}$. $\Path(\Space{M})$ is a category whose objects are points from $\Space{M}$ and whose morphisms are, roughly, homotopy classes of paths between points \cite{hardie2000homotopy}. In the sequel, given a graph $\graph{G}$ and a Tarski sheaf $\bisheaf{F}$ on $\graph{G}$, we construct a functor
\begin{align*}
	P^{\bisheaf{F}}: \Free(\graph{G}) \to \cat{Pos}
\end{align*} 
sending a path $ \gamma: i \to j$ to a order-preserving map $\Parallel{F}{\gamma}: \bisheaf{F}(i) \to \bisheaf{F}(j)$ simulating parallel transport. In our view, parallel transport models message-passing of local sections of agents connected by a path in a communication network.

\begin{definition}[Parallel Transport]
Suppose $\graph{G}$ is a graph and $\bisheaf{F}$ is a Tarski sheaf over $\graph{G}$. Suppose $j \in \nbhd{i}$. Then, the \define{parallel transport} along the path $i\to j$ is the map
\begin{align*}
    \Parallel{F}{i \to j} &:& \bisheaf{F}(i) \to \bisheaf{F}(j) \\
    \Parallel{F}{i \to j}(x_i) &=& \cosheaf{F}(j \fc ij)\sheaf{F}(i \fc ij)(x_i).
\end{align*}
If $\gamma$ is an arbitrary path from $i$ to $j$, then $\Parallel{F}{\gamma}$ is defined as the (unique) composition of parallel transport maps along paths of length one. The parallel transport of the trivial path $\epsilon_i$ based at $i \in \nodes{G}$ is defined to be the identity map on $\bisheaf{F}(i)$.
\end{definition}

Parallel transport is functorial by construction (i.e.~$\Parallel{F}{\gamma_1 \cdot \gamma_2} = \Parallel{F}{\gamma_2}\circ \Parallel{F}{\gamma_1 }$). Furthermore, parallel transport is a functor into the catgory of posets and order-preserving maps because $\Parallel{F}{i \to j}$, as a composition of an join- and meet- preserving map, is not (in general) join- or meet- preserving. If $\gamma$ and $\gamma'$ are two distinct paths from $i$ to $j$, in is not the case (in general) that
\begin{align*}
	\Parallel{F}{\gamma}(x_i) = \Parallel{F}{\gamma}(x_j) \quad \forall i, j \in \nodes{G}.
\end{align*}
Whenever the above holds, $\bisheaf{F}$ is said to be flat.

\subsection{Holonomy}

Having defined parallel transport of Tarski sheaves, an abundance of potential geometric construction wait at the door including connections, curvature, characteristic classes, and more. Immediately, we are in a position to define holonomy, a construction which, roughly speaking, measures the degree to which parallel transport around a closed loop fails to preserve orientations of tangent vectors.

\begin{definition}[Holonomy]
Suppose $\graph{G}$ is a graph. Let $\Omega_i(\graph{G})$ denote the loops in $\graph{G}$ based at at particular $i \in \nodes{G}$. Then, the \define{holonomy} of $\bisheaf{F}$ is the ordered monoid
\begin{align*}
		\Hol_i(\graph{G};\sheaf{F}) &=& \{ \Parallel{F}{\gamma}~\vert~\gamma \in \Omega_i(\graph{G}) \}
\end{align*}
\end{definition}

Holonomy is a monoid because the composition of loops yields a composition of transport maps and the trivial path $\epsilon_i$ yields an identity. It is a poset because $\Hol_i(\graph{G}; \sheaf{F})$ inherits the order of the poset of order-preserving maps $\End_{\cat{Pos}}(\sheaf{F}(i))$.

\subsection{Global sections}

Parallel transport characterizes sections. We first show that sections facilitate the transport of ``tangent vectors'' (lattice elements) to compatible "tangent vectors" (lattice elements) along any (network) path.

\begin{lemma}[Section Path Lemma] \label{lem:section-path}
	Suppose $\bisheaf{F}$ is a Tarski sheaf over $\graph{G}$ and $\mathbf{x} \in \sections{\graph{G}; \sheaf{F}}$ is a section. Then, for every path $\gamma$ from $i$ to $j$,
	\begin{align*}
		\Parallel{F}{\gamma}(x_i) &\succeq& x_j.
	\end{align*}
\end{lemma}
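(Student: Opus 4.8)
The plan is to reduce the general claim about an arbitrary path $\gamma$ to the case of a path of length one, and then to verify the length-one case directly from the defining inequalities of a Galois connection together with the section condition. Since $\Parallel{F}{\gamma}$ is by definition the composition of the parallel transport maps along the edges making up $\gamma$, and since each of these maps is order-preserving (Parallel transport lands in $\cat{Pos}$), it suffices to handle one edge at a time and then chain the inequalities. More precisely, if $\gamma = (i = i_0, i_1, \dots, i_\ell = j)$ and I can show $\Parallel{F}{i_k \to i_{k+1}}(x_{i_k}) \succeq x_{i_{k+1}}$ for each $k$, then applying the order-preserving map $\Parallel{F}{i_{k+1} \to i_{k+2}} \circ \cdots$ to each such inequality and composing yields $\Parallel{F}{\gamma}(x_i) \succeq x_j$ by transitivity. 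This is the routine bookkeeping part.

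\textbf{The length-one case.} First I would unwind the definition: for $j \in \nbhd{i}$,
\[
	\Parallel{F}{i \to j}(x_i) = \cosheaf{F}(j \fc ij)\bigl(\sheaf{F}(i \fc ij)(x_i)\bigr).
\]
Because $\mathbf{x}$ is a section, $\sheaf{F}(i \fc ij)(x_i) = x_{ij} = \sheaf{F}(j \fc ij)(x_j)$, so it is enough to show
\[
	\cosheaf{F}(j \fc ij)\bigl(\sheaf{F}(j \fc ij)(x_j)\bigr) \succeq x_j.
\]
But $(\sheaf{F}(j \fc ij), \cosheaf{F}(j \fc ij))$ is exactly the Galois connection $(\ladj{\bisheaf{F}(j \fc ij)}, \radj{\bisheaf{F}(j \fc ij)})$ attached to the incidence $j \fc ij$, and Proposition \ref{prop:monad-comonad} tells us that for any Galois connection $(\ladj{f}, \radj{f})$ one has $\radj{f}\ladj{f} \succeq \id$. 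Applying this with $f = \bisheaf{F}(j \fc ij)$ and evaluating at $x_j$ gives precisely the desired inequality. Hence $\Parallel{F}{i \to j}(x_i) \succeq x_j$.

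\textbf{Assembling the argument and expected obstacle.} With the length-one case in hand, I would then do the induction on path length sketched above, taking care that the trivial path contributes the identity (so the base case $\ell = 0$ reads $x_i \succeq x_i$, which is reflexivity). I do not anticipate a serious obstacle here — the only subtlety is purely notational, namely making sure the composition convention for $\Parallel{F}{\gamma}$ (left-to-right concatenation of paths, as in the Free Category) matches the order in which the order-preserving maps are applied, so that ``apply an order-preserving map to both sides of an inequality'' can be invoked cleanly at each step. The mathematical content is entirely concentrated in the single application of the monad inequality $\radj{f}\ladj{f} \succeq \id$ from Proposition \ref{prop:monad-comonad}, combined with the section identity \eqref{eq:sections}; everything else is transitivity and monotonicity.
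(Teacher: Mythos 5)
Your proof is correct and follows essentially the same route as the paper: induction on path length, with the base case obtained from the section identity together with the Galois-connection inequality $\radj{f}\ladj{f}\succeq\id$ of Proposition \ref{prop:monad-comonad}, and the inductive step by monotonicity of the transport maps and transitivity. If anything, your chaining of the edge-by-edge inequalities is stated more cleanly than the paper's inductive step, which introduces a spurious meet over neighbors; the mathematical content is identical.
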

\begin{proof}
    We proceed by induction on path length:
    
    \textit{Base Case.} Suppose $\gamma$ is the path $i \to j$ (of length one. Then, as $\mathbf{x}$ is a section,
    \begin{align*}
        \sheaf{F}(i \fc ij)(x_i) &\succeq& \sheaf{F}(j \fc ij)(x_j)
    \end{align*}
    implies
    \begin{align*}
        \cosheaf{F}(j \fc ij)\sheaf{F}(i \fc ij)(x_i) &\succeq& x_j.
    \end{align*}

    \textit{Inductive Step.} Suppose $\gamma$ is a path of length $\ell$. Suppose, without loss of generality, that $\gamma$ is a path from $i$ to $j$, and $\gamma'$ is a path from the same $i$ to $j'$. By the inductive hypothesis, $\Parallel{F}{\gamma}(x_i) \succeq x_j$. Then,
    \begin{align*}
        \Parallel{F}{\gamma'}(x_i) &\succeq&  \bigmeet_{j' \in \nbhd{j}} \Parallel{F}{j \to j'}\Parallel{F}{\gamma}(x_i) \\
        &\succeq& \bigmeet_{j' \in \nbhd{j}} \Parallel{F}{j \to j'}(x_j) \\
        &\succeq& x_{j'}.
    \end{align*}     
\end{proof}

\noindent For path of length one, \define{neighborhood paths}, Lemma \ref{lem:section-path} has a partial converse.

\begin{lemma}[Neighborhood Path Lemma] \label{lem:neighborhood}
	Suppose $\bisheaf{F}$ is a Tarski sheaf over $\graph{G}$, and supoose 
\begin{align*}
\Parallel{F}{i \to j} (x_i) &\succeq& x_j, \quad i \in \nodes{G}, j \in \nbhd{i}
\end{align*}
for all $\mathbf{x} \in C^0(\graph{G}; \sheaf{F})$. Then, $\mathbf{x} \in \sections{\graph{G}; \sheaf{F}}$.
\end{lemma}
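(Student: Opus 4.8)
The plan is to prove the converse of the Section Path Lemma restricted to neighborhood paths by unwinding the definition of $\Parallel{F}{i \to j}$ and exploiting the adjunction between $\sheaf{F}(i \fc ij)$ and $\cosheaf{F}(i \fc ij)$. Recall that $\Parallel{F}{i \to j}(x_i) = \cosheaf{F}(j \fc ij)\sheaf{F}(i \fc ij)(x_i)$, so the hypothesis reads $\cosheaf{F}(j \fc ij)\sheaf{F}(i \fc ij)(x_i) \succeq x_j$ for every edge $ij$ and both of its orientations. I want to conclude the section condition $\sheaf{F}(i \fc ij)(x_i) = \sheaf{F}(j \fc ij)(x_j)$.

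First I would fix an edge $ij$ and apply the hypothesis in both directions: $\cosheaf{F}(j \fc ij)\sheaf{F}(i \fc ij)(x_i) \succeq x_j$ and $\cosheaf{F}(i \fc ij)\sheaf{F}(j \fc ij)(x_j) \succeq x_i$. The key move is the Galois adjunction $\sheaf{F}(k \fc ij)(z) \preceq w \Leftrightarrow z \preceq \cosheaf{F}(k \fc ij)(w)$ from Definition \ref{def:galois-connection}. Applying it to the first inequality (reading it as $x_j \preceq \cosheaf{F}(j \fc ij)\bigl(\sheaf{F}(i \fc ij)(x_i)\bigr)$) yields $\sheaf{F}(j \fc ij)(x_j) \preceq \sheaf{F}(i \fc ij)(x_i)$. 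Applying it symmetrically to the second inequality gives $\sheaf{F}(i \fc ij)(x_i) \preceq \sheaf{F}(j \fc ij)(x_j)$. Antisymmetry of the order on $\sheaf{F}(ij)$ then forces equality, which is exactly \eqref{eq:sections} for the edge $ij$; since $ij$ was arbitrary, $\mathbf{x} \in \sections{\graph{G}; \sheaf{F}}$.

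I don't anticipate a genuine obstacle here — the proof is essentially a two-line application of the defining bijection of a Galois connection, done once per orientation of the edge. The only thing to be careful about is bookkeeping: the hypothesis must be invoked for \emph{both} $\Parallel{F}{i \to j}$ and $\Parallel{F}{j \to i}$, and one must correctly parse the adjunction $(\sheaf{F}(i \fc ij), \cosheaf{F}(i \fc ij))$ with $\sheaf{F}(i \fc ij)$ as the lower adjoint going up to the edge stalk and $\cosheaf{F}(i \fc ij)$ as the upper adjoint coming back down. A cleaner phrasing, which I would adopt, is to note that $\cosheaf{F}(j \fc ij)\sheaf{F}(i \fc ij)(x_i) \succeq x_j$ is equivalent, by adjointness, to $\sheaf{F}(i \fc ij)(x_i) \succeq \sheaf{F}(j \fc ij)(x_j)$; running this equivalence in both orientations and combining gives the result immediately. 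It is worth remarking that this shows the neighborhood-path inequalities, taken over \emph{all} orientations, are not merely necessary (Lemma \ref{lem:section-path}) but also sufficient — the asymmetry in Lemma \ref{lem:section-path} (only $\succeq$, not $=$) disappears precisely because we quantify over both directions of each edge.
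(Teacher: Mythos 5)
Your proof is correct and is essentially the paper's own argument: invoke the hypothesis for both orientations of each edge, convert each inequality $x_j \preceq \cosheaf{F}(j \fc ij)\sheaf{F}(i \fc ij)(x_i)$ into $\sheaf{F}(j \fc ij)(x_j) \preceq \sheaf{F}(i \fc ij)(x_i)$ via the defining property of the Galois connection, and conclude equality by antisymmetry. No gaps.
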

\begin{proof}
	By assumption, both of the following hold for an $\mathbf{x} \in C^0(\graph{G}; \sheaf{F})$ over all $ij \in \edges{G}$
	\begin{align*}
		\cosheaf{F}(j \fc ij)\sheaf{F}(i \fc ij)(x_i)  &\succeq& x_j, \\
            \cosheaf{F}(i \fc ij)\sheaf{F}(j \fc ij)(x_j) &\succeq& x_i.
        \end{align*}
    Therefore, because by the property of Galois connection,
	\begin{align*}
		\sheaf{F}(j \fc ij)(x_j) &\succeq& \sheaf{F}(i \fc ij)(x_i), \\
		\sheaf{F}(i \fc ij)(x_i) &\succeq& \sheaf{F}(j \fc ij)(x_j),
	\end{align*}
 which implies equality.
\end{proof}
\begin{remark}[Proof of Lemma \ref{lem:neighborhood}]
    The argument in the proof of the Neighborhood Lemma breaks downs when you try to generalize it to arbitrary paths. For instance, if $\gamma = (i, j, k)$ is a path between distinct nodes, then
	\begin{align*}
		\Parallel{F}{\gamma} (x_i) &=& \cosheaf{F}(k \fc jk)\sheaf{F}(j \fc jk)\cosheaf{F}(j \fc ij)\sheaf{F}(i \fc ij)(x_i) &\succeq& x_k
	\end{align*}
	does not imply that 
	\begin{align*}
		\Parallel{F}{\gamma} (x_k) &=& \cosheaf{F}(i \fc ij)\sheaf{F}(j \fc ij)\cosheaf{F}(j \fc jk)\sheaf{F}(k \fc jk)(x_k) &\succeq& x_i,
	\end{align*}
	only that
	\begin{align*}
		\sheaf{F}(j \fc jk)\cosheaf{F}(j \fc ij)\sheaf{F}(i \fc ij)(x_i) &\succeq& \sheaf{F}(k \fc jk)(x_k).
	\end{align*}
\end{remark}

\section{The Hodge-Tarski Theorem}

Finally, in this section, we define a sheaf Laplacian for Tarski sheaves culminating in a Hodge-style fixed point theorem computing global sections.

\subsection{The Tarski Laplacian}


\begin{definition}
	Suppose $\bisheaf{F}$ is a Tarski sheaf over $\graph{G}$. The \define{Tarski Laplacian} is an operator defined
	\begin{align*}
		\Laplacian &:& C^0(\graph{G}; \sheaf{F}) \longrightarrow C^0(\graph{G}; \sheaf{F}) \\
		\left(\Laplacian \mathbf{x} \right)_i &=& \bigmeet_{j \in \nbhd{i}} \cosheaf{F}(i \fc ij)\sheaf{F}(j \fc ij)(x_j) .
	\end{align*}
\end{definition}

Recall, the \define{meet projection} of a lattice $\lattice{L}$ onto a subset $N \subseteq 
 \{1,2,\dots, n\}$ is the map
 \begin{align*}
     \chi &:& \lattice{L}^n \to \lattice{L} \\
     \chi(\mathbf{x}) &=& \bigmeet_{i \in N} x_j
 \end{align*}
Then, we note immediately, the Tarski Laplacian is a composition of two operations: (1) parallel transport, and (2) meet-projection. Therefore,
\begin{equation}
    \left(\Laplacian \mathbf{x} \right)_i = \bigmeet_{j \in \nbhd{i}} \Parallel{F}{j \to i}(x_j). \label{eq:tarski-lalacian}
\end{equation}

\noindent Also, recall, the conditions of the Tarski Fixed Point Theorem \ref{thm:tfpt}.

\begin{lemma}\label{lem:tarski-mono}
    The Tarski Laplacian is an order-preserving map on a complete lattice.
\end{lemma}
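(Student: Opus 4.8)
The plan is to verify directly that the map $\Laplacian\colon C^0(\graph{G};\sheaf{F})\to C^0(\graph{G};\sheaf{F})$ satisfies both hypotheses of the Tarski Fixed Point Theorem (Theorem \ref{thm:tfpt}): that its domain/codomain is a complete lattice, and that $\Laplacian$ is order-preserving. The completeness is essentially already established: $C^0(\graph{G};\sheaf{F}) = \prod_{i\in\nodes{G}}\sheaf{F}(i)$ is a product of complete lattices (each $\sheaf{F}(i)$ is complete since $\bisheaf{F}$ is a Tarski sheaf), and a product of complete lattices is complete with meets and joins computed componentwise. So the only real content is monotonicity.

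For monotonicity, I would use the expression $\left(\Laplacian\mathbf{x}\right)_i = \bigmeet_{j\in\nbhd{i}}\Parallel{F}{j\to i}(x_j)$ from \eqref{eq:tarski-lalacian}. Suppose $\mathbf{x}\preceq\mathbf{y}$ in $C^0(\graph{G};\sheaf{F})$, i.e.\ $x_j\preceq y_j$ for every $j\in\nodes{G}$. First, each parallel transport map $\Parallel{F}{j\to i} = \cosheaf{F}(i\fc ij)\,\sheaf{F}(j\fc ij)$ is order-preserving, being a composite of a lower adjoint $\sheaf{F}(j\fc ij)$ and an upper adjoint $\cosheaf{F}(i\fc ij)$, both of which are order-preserving by the definition of a Galois connection (Definition \ref{def:galois-connection}). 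Hence $\Parallel{F}{j\to i}(x_j)\preceq\Parallel{F}{j\to i}(y_j)$ for each $j\in\nbhd{i}$. Second, the meet operation is monotone in each argument: if $a_j\preceq b_j$ for all $j$ in a finite (or arbitrary) index set, then $\bigmeet_j a_j\preceq\bigmeet_j b_j$, since $\bigmeet_j a_j$ is a lower bound of $\{b_j\}$ (each $\bigmeet_j a_j\preceq a_j\preceq b_j$) and $\bigmeet_j b_j$ is the \emph{greatest} such lower bound — this is exactly the kind of argument used in Lemma \ref{lem:convenience}. Combining these two facts gives $\left(\Laplacian\mathbf{x}\right)_i\preceq\left(\Laplacian\mathbf{y}\right)_i$ for every $i\in\nodes{G}$, hence $\Laplacian\mathbf{x}\preceq\Laplacian\mathbf{y}$.

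I do not expect a serious obstacle here; the statement is a structural bookkeeping lemma assembled from pieces already in hand (Galois connections preserve order; composites of order-preserving maps are order-preserving; meets are monotone; products of complete lattices are complete). The one small point to be careful about is that $\Parallel{F}{j\to i}$ is generally \emph{not} join- or meet-preserving (as noted just before the Holonomy subsection), so one must argue monotonicity of the transport maps from the order-preserving property of the adjoints directly, rather than from any suplattice/inflattice morphism structure — but that is immediate from Definition \ref{def:galois-connection}. A secondary bookkeeping remark worth including is that the indexing set $\nbhd{i}$ is finite when $\graph{G}$ is locally finite, but the monotonicity of $\bigmeet$ holds for arbitrary index sets in a complete lattice anyway, so no finiteness hypothesis is needed.
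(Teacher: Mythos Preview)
Your proposal is correct and follows essentially the same approach as the paper: the paper's proof simply observes that $\Laplacian$ is a composition of order-preserving maps (invoking the standard fact that lattice polynomials are order-preserving) and that $C^0(\graph{G};\sheaf{F})$ is complete as a product of complete lattices. Your version unpacks exactly these two points in detail, explicitly identifying the order-preserving pieces (the adjoints $\sheaf{F}(j\fc ij)$, $\cosheaf{F}(i\fc ij)$, and the meet) rather than appealing to the general lattice-polynomial argument.
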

\begin{proof}
    By a similar argument that lattice polynomials are order-preserving \cite{birkhoff1940lattice}, $\Laplacian$ is a composition of order-preserving function. $C^0(\graph{G}; \sheaf{F})$ is the product of complete lattices, hence, complete.
\end{proof}

\subsection{Diffusion}

For the following proposition, let $\nbhd{i}^k$ denote the \define{$k$-hop neighborhood} of $i \in \nodes{G}$.\footnote{Let $\nbhd{i}^0 = \{i\}$.} Suppose $W \subseteq \nodes{G}$ is a subset of nodes. Then, $\pi_W$ is the  projection of $C^0(\graph{G}; \sheaf{F})$ onto  $W$.

\begin{proposition}[Locality] \label{prop:locality}
    Suppose $\bisheaf{F}$ is a Tarski sheaf over $\graph{G}$ with Tarski Laplacian $\Laplacian$. Suppose $i \in \nodes{G}$. Then, the Tarski Laplacin is restricted to the complement of the $k$-hope neighborhood
    \begin{align*}
   \Laplacian^k_{\nodes{G} \setminus \nbhd{i}^k} &:& \prod_{j \in \nodes{G} \setminus \nbhd{i}^k} \sheaf{F}(j) \to \prod_{j \in \nodes{G} \setminus \nbhd{i}^k} \sheaf{F}(j)   \end{align*}
    is the identity.
\end{proposition}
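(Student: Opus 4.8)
The plan is to pin down the meaning of the restricted operator $\Laplacian^k_{\nodes{G}\setminus\nbhd{i}^k}$ and then to show that iterating $\Laplacian$ exactly $k$ times transmits the datum $x_i$ no farther than the $k$-hop horizon $\nbhd{i}^k$, so that on the remaining coordinates the induced action in the $i$-slot is the identity. I would regard $\Laplacian^k\colon C^0(\graph{G};\sheaf{F}) \to C^0(\graph{G};\sheaf{F})$ and isolate, for each output node $j \in \nodes{G}\setminus\nbhd{i}^k$, how $(\Laplacian^k\mathbf{x})_j$ varies with the single input coordinate at $i$, every other coordinate of $\mathbf{x}$ held fixed. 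The assertion of the proposition is that this dependence is trivial: varying $x_i$ fixes each such $(\Laplacian^k\mathbf{x})_j$, so the transformation that $\Laplacian^k$ induces on $\prod_{j \in \nodes{G}\setminus\nbhd{i}^k}\sheaf{F}(j)$ along the $i$-axis is precisely $\id$. Thus the task reduces to a finite-propagation-speed statement about which input coordinates each output coordinate can read.

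The engine is the following read-set lemma, proved by induction on $k$: for every $j \in \nodes{G}$, the component $(\Laplacian^k\mathbf{x})_j$ is a function of the restriction $\mathbf{x}|_{\nbhd{j}^k}$ alone. For $k=0$ this is immediate, since $\Laplacian^0 = \id$ gives $(\Laplacian^0\mathbf{x})_j = x_j$, which reads only the coordinate at $j = \nbhd{j}^0$. For the inductive step I would use the recursion supplied by \eqref{eq:tarski-lalacian},
\[
(\Laplacian^k\mathbf{x})_j \;=\; \bigmeet_{m \in \nbhd{j}} \Parallel{F}{m \to j}\bigl((\Laplacian^{k-1}\mathbf{x})_m\bigr),
\]
together with the inductive hypothesis that each $(\Laplacian^{k-1}\mathbf{x})_m$ depends only on $\mathbf{x}|_{\nbhd{m}^{k-1}}$; since $\bigcup_{m \in \nbhd{j}} \nbhd{m}^{k-1} \subseteq \nbhd{j}^k$, the output at $j$ reads only coordinates in $\nbhd{j}^k$. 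To finish, I invoke symmetry of the hop distance on an undirected graph, namely $i \in \nbhd{j}^k \iff j \in \nbhd{i}^k$. Hence for $j \notin \nbhd{i}^k$ we have $i \notin \nbhd{j}^k$, so $(\Laplacian^k\mathbf{x})_j$ does not read $x_i$; perturbing $x_i$ leaves it unchanged, which is exactly the claim that $\Laplacian^k_{\nodes{G}\setminus\nbhd{i}^k}$ acts as the identity.

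The hard part will be that the transport maps cannot be simplified away. Each $\Parallel{F}{m \to j} = \cosheaf{F}(j \fc mj)\sheaf{F}(m \fc mj)$ is a lower adjoint followed by an upper adjoint, hence only order-preserving and in general neither join- nor meet-preserving (as noted in Section~\ref{sec:parallel}). Consequently one cannot collapse $\Laplacian^k$ into a single meet indexed by length-$k$ walks and read off the dependence algebraically: the meets do not commute past the transport maps, and only inequalities of the form $f(\bigmeet x_i) \preceq \bigmeet f(x_i)$ (Lemma~\ref{lem:inside-meets}) are available. The argument must therefore be carried out purely at the level of read-sets, tracking that each application of $\Laplacian$ enlarges the set of influencing input coordinates by exactly one graph hop, rather than through any closed-form evaluation of the iterated operator.
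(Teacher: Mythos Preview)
The paper does not supply a proof of this proposition; immediately after stating it, the author merely restates the conclusion informally as ``$(\Laplacian^k \mathbf{x})_i$ only depends on the assignment to the nodes $j \in \nbhd{i}^k$'' and moves on. Your induction on read-sets is the natural way to make that sentence rigorous, and it is correct: one step of $\Laplacian$ reads only immediate neighbors, so $k$ steps read only $k$-hop neighbors, and symmetry of the hop distance on an undirected graph gives the direction you need.

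Two remarks. First, the proposition's phrasing (``the restriction \ldots\ is the identity'') is genuinely awkward, and your effort to decode it is reasonable; the paper's own follow-up sentence confirms that the intended content is simply the finite-propagation statement you prove. Second, your third paragraph on the failure of $\Parallel{F}{m\to j}$ to preserve meets is more caution than the argument requires: the read-set induction never attempts to simplify $\Laplacian^k$ algebraically, so whether or not meets commute past transport maps is irrelevant. Dependence-tracking works as soon as each $(\Laplacian\mathbf{y})_j$ is a function of $\{y_m : m \in \nbhd{j}\}$ alone, which is immediate from the definition.
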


As an operator, the Tarski Laplacian is said to be \define{local} in the sense that $(\Laplacian^k \mathbf{x})_i$ only depends only on the assignment to the nodes $j \in \nbhd{i}^k$. Familiar examples of other local operators include the graph Laplacian, graph connection Laplacian, and the linear sheaf Laplacian.

\subsection{Similarities to the connection Laplacian}

Recall, in the connection Laplacian,
\begin{align*}
	(\graphLaplacian \mathbf{x})_i &=& \frac{1}{d_i} \sum_{j \in \nbhd{i}} w_{ij} O_{ij} x_j,
\end{align*}
tangent vectors of neighbors $\{\mathbf{x}_j \}_{j \in \nbhd{i}}$ are transformed by orthogonal maps $O_{ij}$ approximating parallel transport between $T_{x_i} \Space{M}$ and $T_{x_i} \Space{M}$. Next, node messages $\{ O_{ij} \mathbf{x}_j \}_{j \in \nbhd{i}}$ are received and aggregated with via a weighted sum.

In the Tarski Laplacian \eqref{eq:tarski-lalacian}, lattice elements of neighbors $\{ x_j \}_{j \in \nbhd{i}}$ are transformed by order-preserving maps $\Parallel{F}{j \to i}$ subordinate to $\bisheaf{F}$ that model parallel transport . Next, node messages $\{ \Parallel{F}{j \to i}(x_j) \}_{j \in \nbhd{i}}$ are received and aggregated with a  meet-projection onto $\nbhd{j}$.

\subsection{A fixed point theorem}

Recall, the Hodge Theorem (Theorem \ref{thm:hodge}) characterizes the fixed points of heat equation \eqref{eq:hodge-heat-flow}. We now state a culminating result characterising the suffix points of the Tarski Laplacian with the sections $\sections{\graph{G}; \sheaf{F}}$ of a Tarski sheaf $\sheaf{F}$ over $\graph{G}$.

\begin{theorem}[Tarski-Hodge Theorem] \label{thm:main}
	Suppose $\bisheaf{F}$ is a Tarski sheaf over a graph $\graph{G}$ with Tarski Laplacian $\Laplacian: C^0(\graph{G}; \sheaf{F}) \to C^0(\graph{G}; \sheaf{F})$. Then,
	\begin{align}
		\suffix(\Laplacian) &=& H^0(\graph{G}; \sheaf{F}).
	\end{align}
\end{theorem}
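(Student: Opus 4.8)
The plan is to show the two inclusions $\suffix(\Laplacian) \subseteq H^0(\graph{G}; \sheaf{F})$ and $H^0(\graph{G}; \sheaf{F}) \subseteq \suffix(\Laplacian)$ by unwinding the definitions and exploiting the adjunction property of Galois connections. Recall that $H^0(\graph{G};\sheaf{F}) \cong \sections{\graph{G};\sheaf{F}}$ by the computation following Proposition \ref{prop:exist-equalizer}, so it suffices to identify $\suffix(\Laplacian)$ with the sections. By definition, $\mathbf{x} \in \suffix(\Laplacian)$ means $(\Laplacian \mathbf{x})_i \succeq x_i$ for every $i \in \nodes{G}$, i.e.\ $\bigmeet_{j \in \nbhd{i}} \cosheaf{F}(i \fc ij)\sheaf{F}(j \fc ij)(x_j) \succeq x_i$. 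By Lemma \ref{lem:convenience}, this single inequality is equivalent to the family of inequalities $\cosheaf{F}(i \fc ij)\sheaf{F}(j \fc ij)(x_j) \succeq x_i$ ranging over all $j \in \nbhd{i}$ — that is, $\Parallel{F}{j \to i}(x_j) \succeq x_i$ for every edge, in either orientation. So membership in $\suffix(\Laplacian)$ is exactly the hypothesis of the Neighborhood Path Lemma (Lemma \ref{lem:neighborhood}).

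With that reduction in hand, both inclusions follow quickly. For $\suffix(\Laplacian) \subseteq H^0(\graph{G};\sheaf{F})$: if $\mathbf{x} \in \suffix(\Laplacian)$, then by the paragraph above $\Parallel{F}{i \to j}(x_i) \succeq x_j$ holds for all $i \in \nodes{G}$, $j \in \nbhd{i}$, so Lemma \ref{lem:neighborhood} gives $\mathbf{x} \in \sections{\graph{G};\sheaf{F}} \cong H^0(\graph{G};\sheaf{F})$. For the reverse inclusion $H^0(\graph{G};\sheaf{F}) \subseteq \suffix(\Laplacian)$: if $\mathbf{x}$ is a section, apply the Section Path Lemma (Lemma \ref{lem:section-path}) specialized to paths of length one, giving $\Parallel{F}{j \to i}(x_j) \succeq x_i$ for every edge $ij$ and neighbor $j \in \nbhd{i}$; taking the meet over $j \in \nbhd{i}$ and invoking Lemma \ref{lem:convenience} again yields $(\Laplacian \mathbf{x})_i = \bigmeet_{j \in \nbhd{i}} \Parallel{F}{j \to i}(x_j) \succeq x_i$, so $\mathbf{x} \in \suffix(\Laplacian)$.

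I would also remark, for context and to justify the "Hodge" nomenclature, that $\suffix(\Laplacian)$ is itself a complete lattice: by Lemma \ref{lem:tarski-mono} the Tarski Laplacian is an order-preserving self-map of the complete lattice $C^0(\graph{G};\sheaf{F})$, so Corollary \ref{cor:prefix-suffix} applies (equivalently, $\suffix(\Laplacian) = \fixed(\Laplacian \meet \id)$ and one invokes Theorem \ref{thm:tfpt}). This matches the Global Sections Theorem (Theorem \ref{thm:sections-sublattice}) asserting $H^0(\graph{G};\sheaf{F})$ is a complete quasi-sublattice of $C^0(\graph{G};\sheaf{F})$, and the present theorem refines that by pinning down the fixed-point description.

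Honestly, the theorem is essentially a corollary of the two path lemmas once one notices the translation via Lemma \ref{lem:convenience}, so there is no serious obstacle. The one point demanding a little care is the direction of the adjunction: I must be sure that the inequality produced by $\suffix$ is oriented so that the Galois connection property $\cosheaf{F}(i \fc ij)\sheaf{F}(j \fc ij) \succeq \id$ (Proposition \ref{prop:monad-comonad}, since $\radj{f}\ladj{f}$ is a closure operator) is being used correctly, and that the "either orientation" of each edge is genuinely available — which it is, because $\nbhd{i}$ and $\nbhd{j}$ both see the edge $ij$. A secondary subtlety worth flagging explicitly in the writeup: the theorem is stated for $\suffix(\Laplacian)$, the \emph{inflationary} points, not $\fixed(\Laplacian)$; sections need not be genuine fixed points of $\Laplacian$ (the meet-projection can strictly increase a section's value), which is why the correct object on the left is $\suffix$ and not $\fixed$.
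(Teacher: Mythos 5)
Your proof is correct and follows essentially the same route as the paper's: both reduce membership in $\suffix(\Laplacian)$ to the family of inequalities $\Parallel{F}{j \to i}(x_j) \succeq x_i$ via Lemma \ref{lem:convenience}, then invoke the Section Path Lemma (Lemma \ref{lem:section-path}) and the Neighborhood Path Lemma (Lemma \ref{lem:neighborhood}) for the two directions. Your closing remarks on completeness of $\suffix(\Laplacian)$ and on why the statement concerns $\suffix$ rather than $\fixed$ are accurate but supplementary to the argument itself.
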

\noindent The proof relies on the theory of parallel transport from Section \ref{sec:parallel}.
\begin{proof}[Proof of Tarski-Hodge Theorem]
By the Section Path Lemma (Lemma \ref{lem:section-path}) and the Neighborhood Lemma (Lemma \ref{lem:neighborhood}),
	\begin{align*}
		\mathbf{x} \in \sections{\graph{G}; \sheaf{F}}
	\end{align*}
	 if and only if
	 \begin{align*}
		\Parallel{F}{j \to i} (x_j) &\succeq& x_i
	\end{align*}
	holds for all $i \in \nodes{G}$, $j \in \nbhd{i}$. By the universal property of meets (Lemma \ref{lem:convenience}), 
	\begin{align*}
		\Parallel{F}{j \to i} (x_j) &\succeq& x_i
	\end{align*}
 for all $j \in \nbhd{i}$
 if and only if
	\begin{align*}
		\bigmeet_{j \in \nbhd{i}} \Parallel{F}{j \to i} (x_j) &\succeq& x_i.
	\end{align*}
 Therefore, $(\Laplacian \mathbf{x})_i \succeq x_i$ for every $i \in \nodes{V}$ precisely if $\mathbf{x} \in H^0(\graph{G}; \sheaf{F})$. 
\end{proof}
\noindent Equivalently, we may rewrite the above as the fixed points of the operator $\Laplacian \meet \id$ because $\Laplacian(\mathbf{x}) \succeq \mathbf{x}$ if and only if $\Laplacian (\mathbf{x}) \meet \mathbf{x}$.
\begin{corollary}
    $\fixed(\Laplacian \meet \id) = H^0(\graph{G}; \sheaf{F})$.
\end{corollary}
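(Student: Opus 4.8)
The plan is to derive this directly from the Tarski-Hodge Theorem (Theorem \ref{thm:main}) together with the elementary observation, already used in the proof of Corollary \ref{cor:prefix-suffix}, that on any lattice the fixed points of $f \meet \id$ coincide with the inflationary points of $f$. So the only real content is translating $\suffix(\Laplacian)$ into $\fixed(\Laplacian \meet \id)$; the identification with $H^0(\graph{G}; \sheaf{F})$ is then quoted wholesale.

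First I would unwind $\Laplacian \meet \id$ componentwise. Since $C^0(\graph{G}; \sheaf{F}) = \prod_{i \in \nodes{G}} \sheaf{F}(i)$ is a categorical product (Section \ref{sec:new-lat-old}), it carries the product order and componentwise meets, so $(\Laplacian \meet \id)(\mathbf{x})_i = (\Laplacian \mathbf{x})_i \meet x_i$ for every $i \in \nodes{G}$. Next I would invoke the absorption characterization of the order in a lattice (immediate from Definition \ref{def:lattice-comb}, via Proposition \ref{prop:alg-combin-equiv}): for $a,b$ in a lattice, $a \meet b = b$ if and only if $b \preceq a$. Applying this with $a = (\Laplacian \mathbf{x})_i$ and $b = x_i$ shows that $(\Laplacian \meet \id)(\mathbf{x}) = \mathbf{x}$ iff $(\Laplacian \mathbf{x})_i \succeq x_i$ for every $i$ iff $\Laplacian \mathbf{x} \succeq \mathbf{x}$ in the product order, i.e.\ $\mathbf{x} \in \suffix(\Laplacian)$. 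Hence $\fixed(\Laplacian \meet \id) = \suffix(\Laplacian)$, and Theorem \ref{thm:main} finishes the proof.

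There is essentially no obstacle here; it is a one-line corollary. The two points to be careful about are purely bookkeeping: that the meet in $C^0(\graph{G}; \sheaf{F})$ is genuinely computed componentwise (which is exactly why the product/coproduct discussion of Section \ref{sec:new-lat-old} is available), and that $\Laplacian \meet \id$ is itself order-preserving on the complete lattice $C^0(\graph{G}; \sheaf{F})$ — so that $\fixed(\Laplacian \meet \id)$ is a complete lattice by the Tarski Fixed Point Theorem \ref{thm:tfpt} — even though the statement as phrased asks only for the set-level equality.
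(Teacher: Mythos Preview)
Your proposal is correct and mirrors the paper's own argument exactly: the paper simply notes, in the sentence preceding the corollary, that $\Laplacian(\mathbf{x}) \succeq \mathbf{x}$ if and only if $\Laplacian(\mathbf{x}) \meet \mathbf{x} = \mathbf{x}$, which is precisely the absorption-based identification $\fixed(\Laplacian \meet \id) = \suffix(\Laplacian)$ you spell out, and then invokes Theorem~\ref{thm:main}.
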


\begin{remark}
Indeed, the Hodge-Tarski Theorem (Theorem \ref{thm:main}) is a ``Hodge-style fied point theorem.'' The Euler discretization of the heat equation \eqref{eq:hodge-heat-flow} (step-size $\alpha>0$) is the following
\begin{align}
    \mathbf{x}[t+1] - \mathbf{x}[t] &=& - \alpha\Delta_k \mathbf{x} \label{eq:fixed-hodge}
\end{align}
It follows
\begin{align*}
    \fixed(\id - \alpha \Delta) &\cong& H^k(C^\bullet)
\end{align*}
by the isomorphism between $\Ker(\Delta_K)$ and $H^0(C^\bullet)$ in the Hodge Theorem (Theorem \ref{thm:hodge}). Then, compare \eqref{eq:fixed-hodge} to
\begin{align}
    \fixed(\id \meet \Laplacian) &=& H^k(\graph{G}; \sheaf{F}).
\end{align}
\end{remark}

\noindent A second (also obvious) corollary is a consequence of the isomorphism between global sections and $H^0$ (Proposition \ref{thm:section-equalizer}).
\begin{corollary} \label{cor:fixed-point}
$\fixed(\Laplacian \meet \id) \cong \sections{\graph{G}; \sheaf{F}}$.
\end{corollary}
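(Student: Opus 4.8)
The plan is to obtain this as an immediate consequence of the two results that precede it. The heavy lifting has already been done: the Tarski--Hodge Theorem (Theorem \ref{thm:main}) identifies $\suffix(\Laplacian)$ with $H^0(\graph{G}; \sheaf{F})$, and the preceding corollary re-expresses this set as $\fixed(\Laplacian \meet \id)$. So the only remaining ingredient is the identification of $H^0(\graph{G}; \sheaf{F})$ with the lattice of sections.

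First I would recall why $\fixed(\Laplacian \meet \id) = \suffix(\Laplacian)$: for any $a, b$ in a lattice one has $a \meet b = b$ if and only if $b \preceq a$, and applying this componentwise in the product lattice $C^0(\graph{G}; \sheaf{F})$ gives $\Laplacian \mathbf{x} \meet \mathbf{x} = \mathbf{x}$ if and only if $\mathbf{x} \preceq \Laplacian \mathbf{x}$, i.e. $\mathbf{x} \in \suffix(\Laplacian)$. (This is exactly the content of the corollary stated just above, so in the write-up I would simply cite it.) Hence $\fixed(\Laplacian \meet \id) = H^0(\graph{G}; \sheaf{F})$ as subposets of $C^0(\graph{G}; \sheaf{F})$.

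Next I would invoke the equalizer computation carried out just before Theorem \ref{thm:sections-sublattice} — the lattice-valued counterpart of Proposition \ref{thm:section-equalizer} — which exhibits a complete-lattice isomorphism $H^0(\graph{G}; \sheaf{F}) \cong \sections{\graph{G}; \sheaf{F}}$. Composing this with the equality of the previous paragraph yields $\fixed(\Laplacian \meet \id) \cong \sections{\graph{G}; \sheaf{F}}$, which is the claim. One should note for bookkeeping that $\fixed(\Laplacian \meet \id)$ is genuinely a complete lattice: by Lemma \ref{lem:tarski-mono} $\Laplacian$ is order-preserving on the complete lattice $C^0(\graph{G}; \sheaf{F})$, meeting with the identity preserves order-preservation, and the Tarski Fixed Point Theorem (Theorem \ref{thm:tfpt}) then applies — so "$\cong$" is an isomorphism of complete lattices, consistent with Theorem \ref{thm:sections-sublattice}.

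There is no real obstacle here; the statement is a packaging corollary. The only point requiring the slightest care is that the relation to $\sections{\graph{G}; \sheaf{F}}$ is an \emph{isomorphism} rather than an equality (sections form only a complete quasi-sublattice of $C^0(\graph{G}; \sheaf{F})$, per the Warning following Theorem \ref{thm:sections-sublattice}), so the conclusion must be stated up to isomorphism exactly as it is.
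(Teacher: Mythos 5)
Your proposal is correct and follows the same route the paper intends: it combines the Tarski--Hodge Theorem with the elementary equivalence $\Laplacian\mathbf{x}\meet\mathbf{x}=\mathbf{x}\Leftrightarrow\mathbf{x}\preceq\Laplacian\mathbf{x}$ and the equalizer identification $H^0(\graph{G};\sheaf{F})\cong\sections{\graph{G};\sheaf{F}}$, which is exactly what the paper cites (Proposition~\ref{thm:section-equalizer}) in calling this an obvious corollary. Your added remarks on completeness via Theorem~\ref{thm:tfpt} and on the isomorphism-versus-equality distinction are consistent with the surrounding text and do not change the argument.
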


\noindent A final corollary offers a third an alternate proof to Theorem \ref{thm:sections-sublattice} that the global sections of a Tarski sheaf are a complete quasi-sublattice of $\sections{\graph{G}; \sheaf{F}}$.

\begin{corollary} \label{cor:h0sections}
	$H^0(\graph{G}; \sheaf{F})$ is a complete lattice. Furthermore, there is an order-embedding  of $\sections{\graph{G}; \sheaf{F}}$ into the complete lattice $C^0(\graph{G}; \sheaf{F})$ of $0$-cochains.
\end{corollary}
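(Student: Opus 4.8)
The plan is to derive this corollary as a direct consequence of the Tarski--Hodge Theorem (Theorem~\ref{thm:main}) together with Corollary~\ref{cor:prefix-suffix}. The key observation is that $\suffix(\Laplacian)$ is already known to be a complete lattice: by Lemma~\ref{lem:tarski-mono}, $\Laplacian$ is an order-preserving self-map of the complete lattice $C^0(\graph{G}; \sheaf{F})$, so the Tarski Fixed Point Theorem (Theorem~\ref{thm:tfpt}) applies, and Corollary~\ref{cor:prefix-suffix} (with $f = \Laplacian$) tells us precisely that $\suffix(\Laplacian)$ is a complete lattice. Since Theorem~\ref{thm:main} gives $\suffix(\Laplacian) = H^0(\graph{G}; \sheaf{F})$, we immediately conclude that $H^0(\graph{G}; \sheaf{F})$ is a complete lattice. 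This is the first sentence of the corollary.

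For the second sentence, I would invoke the isomorphism $\sections{\graph{G}; \sheaf{F}} \cong H^0(\graph{G}; \sheaf{F})$ established just before the Global Sections Theorem (the displayed chain of isomorphisms following the definition of $H^0$, together with Proposition~\ref{thm:section-equalizer}). Since $H^0(\graph{G}; \sheaf{F})$ is defined as the equalizer of $\coboundary_{-}, \coboundary_{+}: C^0(\graph{G};\sheaf{F}) \to C^1(\graph{G};\sheaf{F})$ in $\cat{Sup}$, the canonical map $H^0(\graph{G}; \sheaf{F}) \to C^0(\graph{G}; \sheaf{F})$ is a monomorphism in $\cat{Sup}$ (equalizers are monic, cf.\ \cite[Exercise 3.1.vi]{riehl2017category}), hence injective as a map of underlying sets. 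A join-preserving injective map between posets is automatically an order-embedding: if $\ladj{i}(x) \preceq \ladj{i}(y)$ then $\ladj{i}(x \join y) = \ladj{i}(x) \join \ladj{i}(y) = \ladj{i}(y)$, so by injectivity $x \join y = y$, i.e.\ $x \preceq y$; the converse is monotonicity. Composing this order-embedding with the isomorphism $\sections{\graph{G}; \sheaf{F}} \cong H^0(\graph{G}; \sheaf{F})$ yields the desired order-embedding $\sections{\graph{G}; \sheaf{F}} \hookrightarrow C^0(\graph{G}; \sheaf{F})$.

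Alternatively — and this is the route I would actually present, since it is more self-contained and matches the ``alternate proof'' framing — I would argue directly from Corollary~\ref{cor:fixed-point}: the set $\suffix(\Laplacian) = \fixed(\Laplacian \meet \id)$ sits inside $C^0(\graph{G}; \sheaf{F})$ as a subset, the inclusion is order-preserving and injective by construction, and it is the fixed-point set of the order-preserving self-map $\Laplacian \meet \id$ (monotone as a composite/meet of monotone maps), so Theorem~\ref{thm:tfpt} again makes it a complete lattice. The inclusion into $C^0(\graph{G}; \sheaf{F})$ is trivially an order-embedding since it is literally a subset with the induced order. I do not expect a genuine obstacle here; the only subtlety to flag carefully (and the \texttt{warning} environment already does) is that ``order-embedding'' and ``complete quasi-sublattice'' are the right statements — the embedding does \emph{not} preserve joins and meets, because the join in $\suffix(\Laplacian)$ is $\Laplacian$ applied to the ambient join (per Corollary~\ref{cor:prefix-suffix} and Theorem~\ref{thm:closure-fixed}, since $\suffix(\Laplacian) = \fixed(\Laplacian \join \id)$ exhibits $\prefix$-type structure), not the ambient join itself. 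So the only care needed is to claim exactly ``complete lattice'' and ``order-embedding,'' not ``sublattice.''
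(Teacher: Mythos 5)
Your proof is correct and follows essentially the same route as the paper: completeness of $H^0(\graph{G};\sheaf{F})$ via Lemma~\ref{lem:tarski-mono}, the Tarski Fixed Point Theorem, and the Hodge--Tarski identification $\suffix(\Laplacian)=H^0(\graph{G};\sheaf{F})$, with the order-embedding obtained as the isomorphism $\sections{\graph{G};\sheaf{F}}\cong H^0(\graph{G};\sheaf{F})$ followed by the subset inclusion into $C^0(\graph{G};\sheaf{F})$. One small slip in your closing aside, which does not affect the proof: $\suffix(\Laplacian)=\fixed(\Laplacian\meet\id)$, not $\fixed(\Laplacian\join\id)$ (Corollary~\ref{cor:prefix-suffix}), and computing joins in $\suffix(\Laplacian)$ as $\Laplacian$ applied to the ambient join would require $\Laplacian\meet\id$ to be a closure operator, which is exactly what the paper's Warning cautions one cannot assume.
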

\begin{proof}
The Tarski Laplacian is an order-preserving map on a complete lattice $C^0(\graph{G}; \sheaf{F})$ (Proposition \ref{lem:tarski-mono}). Therefore, by Corollary \ref{cor:h0sections} and the Tarski Fixed Point (Theorem \ref{thm:tfpt}), the isomorphism between $\sections{\graph{G}; \sheaf{F}}$ and $H^0(\graph{G}; \sheaf{F})$ followed by the inclusion of $H^0(\graph{G}; \sheaf{F})$ into $C^0(\graph{G}; \sheaf{F})$ is an order-embedding. 
\end{proof}

\section{Towards Operator Theory}

By Corollary \ref{cor:h0sections}, global sections of Tarski sheaves have arbitrary meets and joins, forming a complete lattice. In the context of multi-agent systems, this implies the set of consistent assignments of lattice-valued data over a network of agents forms a lattice and results in two merging operations (meet/join) between consistent assignments, sections. In this section, we introduce two new operators: the closure and the Helmholtzian. We discuss a strategy for computing the explicit lattice structure of $H^0(\graph{G}; \sheaf{F})$ of a Tarski sheaf.

\subsection{A closure operator}

We introduce another local operator on $C^0(\graph{G}; \sheaf{F})$ that is \define{not} a Laplacian.

\begin{definition}[Closure]
	The \define{closure} is the endomorphism
	\begin{align}
		\Closure &:& C^0(\graph{G}; \sheaf{F}) \rightarrow C^0(\graph{G}; \sheaf{F}) \\
		\left( \Closure  \mathbf{x}\right)_i &=& \bigmeet_{j \in \nbhd{i}} \cosheaf{F}(i \fc ij)\sheaf{F}(i \fc ij)(x_i).
	\end{align}
\end{definition}

\noindent Recall, if $(\ladj{f}, \radj{f})$ is a Galois connection, then $\radj{f}\ladj{f} \succeq \id$  (Proposition \ref{prop:monad-comonad}). The closure, then, ``intensifies'' local data. To be precise, $\Closure$ is a closure operator on the complete lattice $C^0(\graph{G}; \sheaf{F})$.

\begin{theorem}[Closure Theorem]\label{thm:exander-closure}
	The operator $\Closure$ is a closure operator on $C^0(\graph{G}; \sheaf{F})$.
\end{theorem}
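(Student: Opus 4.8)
The plan is to verify the three defining properties of a closure operator from the definition given earlier: order-preserving, inflationary, and idempotent. Recall the closure is defined by $\left( \Closure \mathbf{x} \right)_i = \bigmeet_{j \in \nbhd{i}} \cosheaf{F}(i \fc ij)\sheaf{F}(i \fc ij)(x_i)$, which only touches $x_i$ (not the neighbors' stalks), so the verification reduces, componentwise, to facts about the Galois connections $(\sheaf{F}(i \fc ij), \cosheaf{F}(i \fc ij))$ and about meets in a complete lattice.

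First I would show $\Closure$ is \emph{order-preserving}. Suppose $\mathbf{x} \preceq \mathbf{y}$ in $C^0(\graph{G}; \sheaf{F})$, i.e. $x_i \preceq y_i$ for all $i$. For each $i$ and each $j \in \nbhd{i}$, the composite $\cosheaf{F}(i \fc ij)\sheaf{F}(i \fc ij)$ is order-preserving (it is the closure operator $\radj{f}\ladj{f}$ of a Galois connection, cf. the Corollary following Proposition~\ref{prop:mono-epi}, or simply a composite of two monotone maps), hence $\cosheaf{F}(i \fc ij)\sheaf{F}(i \fc ij)(x_i) \preceq \cosheaf{F}(i \fc ij)\sheaf{F}(i \fc ij)(y_i)$. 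Taking meets over $j \in \nbhd{i}$ preserves the inequality (Lemma~\ref{lem:convenience}), so $(\Closure \mathbf{x})_i \preceq (\Closure \mathbf{y})_i$.

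Next, $\Closure$ is \emph{inflationary}: by Proposition~\ref{prop:monad-comonad}, $\cosheaf{F}(i \fc ij)\sheaf{F}(i \fc ij) \succeq \id$, so $\cosheaf{F}(i \fc ij)\sheaf{F}(i \fc ij)(x_i) \succeq x_i$ for every $j \in \nbhd{i}$; thus $x_i$ is a lower bound of $\{\cosheaf{F}(i \fc ij)\sheaf{F}(i \fc ij)(x_i)\}_{j \in \nbhd{i}}$, and since the meet is the greatest lower bound, $(\Closure \mathbf{x})_i = \bigmeet_{j \in \nbhd{i}} \cosheaf{F}(i \fc ij)\sheaf{F}(i \fc ij)(x_i) \succeq x_i$. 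Hence $\Closure \mathbf{x} \succeq \mathbf{x}$. (Here I would remark that $\Closure$ is in fact the Laplacian-like operator built from the parallel transport along \emph{trivial-like} edges $i \fc ij \cofc i$, but that is only a sanity remark.)

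The main work — and the step I expect to be the only mildly delicate one — is \emph{idempotence}, $\Closure^2 = \Closure$. Componentwise, $(\Closure^2 \mathbf{x})_i = \bigmeet_{j \in \nbhd{i}} \cosheaf{F}(i \fc ij)\sheaf{F}(i \fc ij)\bigl((\Closure \mathbf{x})_i\bigr)$, where $(\Closure \mathbf{x})_i = \bigmeet_{k \in \nbhd{i}} \cosheaf{F}(i \fc ik)\sheaf{F}(i \fc ik)(x_i)$. One inequality, $\Closure^2 \mathbf{x} \succeq \Closure \mathbf{x}$, is immediate from inflation. For the reverse, fix $i$ and abbreviate $e_j := \cosheaf{F}(i \fc ij)\sheaf{F}(i \fc ij)$, each a closure operator on $\sheaf{F}(i)$. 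Write $m = \bigmeet_{k \in \nbhd{i}} e_k(x_i)$. Since $e_k$ is idempotent and $m \preceq e_k(x_i)$, applying the monotone idempotent $e_k$ gives $e_k(m) \preceq e_k(e_k(x_i)) = e_k(x_i)$ for each $k$; taking the meet over $k \in \nbhd{i}$ (in particular this bounds $e_j(m)$ for each fixed $j$) we get $\bigmeet_{j \in \nbhd{i}} e_j(m) \preceq \bigmeet_{k \in \nbhd{i}} e_k(x_i) = m$. That is, $(\Closure^2 \mathbf{x})_i \preceq (\Closure \mathbf{x})_i$, and with the reverse inequality, equality holds. Combining the three properties with Lemma~\ref{lem:tarski-mono}'s observation that $C^0(\graph{G}; \sheaf{F})$ is a complete lattice, $\Closure$ is a closure operator, and Theorem~\ref{thm:closure-fixed} then describes $\fixed(\Closure)$ explicitly. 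The only subtlety to flag is that the meet in $e_j(m) \preceq \bigmeet_k e_k(x_i)$ uses both idempotence of each $e_k$ \emph{and} the universal property of meets (Lemma~\ref{lem:convenience}) simultaneously; everything else is bookkeeping.
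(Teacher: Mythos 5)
Your proof is correct and follows essentially the same route as the paper: verify monotonicity, use Proposition~\ref{prop:monad-comonad} with the universal property of meets for inflation, and establish idempotence via the two inequalities, the nontrivial one coming from monotonicity of each $\cosheaf{F}(i \fc ij)\sheaf{F}(i \fc ij)$ applied to the meet together with the idempotence of these composites (Corollary~\ref{cor:idempotence}). No gaps to report.
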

\begin{proof}
    It is trivial to check $\Closure$ is order-preserving. By Proposition \ref{prop:monad-comonad} and the universal property  of meets (Lemma \ref{lem:convenience}), $(\Closure\mathbf{x})_i \succeq x_i$ for all $i \in \nodes{G}$. Hence, $\Closure$ is inflationary. To show $\Closure$ is idempotent, we show $\Closure^2(\mathbf{x}) \preceq \mathbf{x}$ and $\Closure^2(\mathbf{x}) \succeq \mathbf{x}$. For the first inequality,
	\begin{align*}
		(\Closure^2 \mathbf{x})_i &=& \bigmeet_{j \in \nbhd{i}} \cosheaf{F}(i \fc ij)\sheaf{F}(i \fc ij) \bigl( \bigmeet_{j \in \nbhd{i}} \cosheaf{F}(i \fc ij)\sheaf{F}(i \fc ij) (x_i) \Bigl) \\
		&\preceq& \bigmeet_{j \in \nbhd{i}} \cosheaf{F}(i \fc ij)\sheaf{F}(i \fc ij) \cosheaf{F}(i \fc ij)\sheaf{F}(i \fc ij)(x_i) \\
		&=& \bigmeet_{j \in \nbhd{i}} \cosheaf{F}(i \fc ij)\sheaf{F}(i \fc ij) (x_i) \\
		&=& (\Closure \mathbf{x})_i,
	\end{align*}
	where the penultimate equality follows from Lemma \ref{prop:left-right-left}. For the second inequality, 
	\begin{align*}
		(\Closure^2 \mathbf{x})_i &=& \bigmeet_{j \in \nbhd{i}} \cosheaf{F}(i \fc ij)\sheaf{F}(i \fc ij) \bigmeet_{j \in \nbhd{i}} \cosheaf{F}(i \fc ij)\sheaf{F}(i \fc ij) (x_i) \\
		&\succeq& \bigmeet_{j \in \nbhd{i}} \cosheaf{F}(i \fc ij)\sheaf{F}(i \fc ij)(x_i)
	\end{align*}
	by $\Closure \mathbf{x} \succeq \mathbf{x}$ and monotonicity. Hence, $\Closure^2 = \Closure$.
\end{proof}

\noindent The fixed points of the closure operator $E$ is a quasi-sublattice of $C^0(\graph{G}; \sheaf{F})$.
\begin{corollary}
	$\fixed(\Closure)$ is a complete lattice.
\end{corollary}
\begin{proof}
    Apply Theorem \ref{thm:closure-fixed}.
\end{proof}
\noindent Moreover, we can explicitly compute meets and joins with
\begin{align*}
	\mathbf{x} \meet_{\fixed(E)} \mathbf{y} &=& \mathbf{x} \meet \mathbf{y}, \\
	\mathbf{x} \join_{\fixed(E)} \mathbf{y} &=& E(\mathbf{x} \join \mathbf{y}).
\end{align*}

\noindent The following establishes a connection between the closure and the Tarski Laplacian.
\begin{lemma}\label{lem:closure-laplacian}
    Suppose $\mathbf{x} \in H^0(\graph{G}; \sheaf{F})$. Then, $\Laplacian(\mathbf{x}) = \Closure(\mathbf{x})$.
\end{lemma}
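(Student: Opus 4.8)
The plan is to unpack both operators at a node $i$ and show they agree term-by-term when $\mathbf{x}$ is a global section. Recall
\begin{align*}
	\left(\Laplacian \mathbf{x} \right)_i &=& \bigmeet_{j \in \nbhd{i}} \cosheaf{F}(i \fc ij)\sheaf{F}(j \fc ij)(x_j), \\
	\left(\Closure \mathbf{x} \right)_i &=& \bigmeet_{j \in \nbhd{i}} \cosheaf{F}(i \fc ij)\sheaf{F}(i \fc ij)(x_i).
\end{align*}
Both are meets indexed over the same neighborhood $\nbhd{i}$, so it suffices to show that for each fixed $j \in \nbhd{i}$,
\begin{align*}
	\cosheaf{F}(i \fc ij)\sheaf{F}(j \fc ij)(x_j) &=& \cosheaf{F}(i \fc ij)\sheaf{F}(i \fc ij)(x_i).
\end{align*}

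First I would invoke the section condition \eqref{eq:sections}: since $\mathbf{x} \in H^0(\graph{G}; \sheaf{F}) \cong \sections{\graph{G}; \sheaf{F}}$, we have $\sheaf{F}(i \fc ij)(x_i) = \sheaf{F}(j \fc ij)(x_j)$ for every edge $ij \in \edges{G}$. Applying the meet-preserving map $\cosheaf{F}(i \fc ij)$ (the upper adjoint $\radj{\bisheaf{F}(i \fc ij)}$) to both sides of this equality immediately yields the desired term-wise identity. Taking the meet over $j \in \nbhd{i}$ then gives $(\Laplacian \mathbf{x})_i = (\Closure \mathbf{x})_i$ for every $i \in \nodes{G}$, hence $\Laplacian(\mathbf{x}) = \Closure(\mathbf{x})$.

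There is essentially no obstacle here; the lemma is a direct consequence of the defining equation of a section together with the fact that corestriction maps are well-defined functions (applying the same map to equal arguments gives equal values). The only thing to be careful about is the indexing bookkeeping—matching up which restriction map and which corestriction map appear in each summand of $\Laplacian$ versus $\Closure$—and noting that the parallel transport reformulation \eqref{eq:tarski-lalacian} makes the statement even more transparent: $\Parallel{F}{j \to i}(x_j) = \cosheaf{F}(i \fc ij)\sheaf{F}(j \fc ij)(x_j)$ collapses onto $\cosheaf{F}(i \fc ij)\sheaf{F}(i \fc ij)(x_i)$ exactly because $x_j$ and $x_i$ restrict to the same element $x_{ij}$ of the edge stalk $\sheaf{F}(ij)$. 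I would present the one-line computation and be done.
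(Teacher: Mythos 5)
Your proof is correct and is essentially identical to the paper's: both unpack the two operators at each node $i$ and substitute $\sheaf{F}(i \fc ij)(x_i)$ for $\sheaf{F}(j \fc ij)(x_j)$ inside the meet using the section condition. Nothing is missing.
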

\begin{proof}
    Let $\mathbf{x} \in H^0(\graph{G}; \sheaf{F})$. Then, for all $i \in \nodes{G}$,
    \begin{align*}
        (\Laplacian \mathbf{x})_i &=& \bigmeet_{j \in \nbhd{i}} \cosheaf{F}(i \fc ij)\sheaf{F}(j \fc ij)(x_j) \\ &=& 
        \bigmeet_{j \in \nbhd{i}} \cosheaf{F}(i \fc ij)\sheaf{F}(i \fc ij)(x_i)
    \end{align*}
    as $\sheaf{F}(j \fc ij)(x_j) = \sheaf{F}(i \fc ij)(x_i)$ for every $ij \in \edges{G}$.
\end{proof}

\subsection{Meets \& joins of sections} 

Lemma \ref{lem:closure-laplacian} suggests that one may attempt to restrict $\Laplacian$ to $H^0(\graph{G}; \sheaf{F})$ to obtain a closure operator. The following shows that $H^0(\graph{G}; \sheaf{F})$ is invariant under $\Laplacian$. A similar result was shown by Cunninghame-Green \& Butkovic \cite{cuninghame2003equation} in a highly-specialized case (Figure \ref{fig:cunninghame}) of max-plus restriction maps over the network with two nodes and one edge.
\begin{lemma}\label{lem:cunninghame}
    Suppose $\mathbf{x} \in H^0(\graph{G}; \sheaf{F})$. Then, $\Laplacian(\mathbf{x}) \in H^0(\graph{G}; \sheaf{F})$.
\end{lemma}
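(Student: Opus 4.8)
The plan is to show that if $\mathbf{x}$ is a section, then $\Laplacian(\mathbf{x})$ is again a section by verifying the section condition $\sheaf{F}(i \fc ij)(y_i) = \sheaf{F}(j \fc ij)(y_j)$ for $\mathbf{y} = \Laplacian(\mathbf{x})$ on every edge $ij \in \edges{G}$. By Lemma \ref{lem:closure-laplacian}, since $\mathbf{x} \in H^0(\graph{G}; \sheaf{F})$ we have $\Laplacian(\mathbf{x}) = \Closure(\mathbf{x})$, so it is equivalent (and more convenient) to show $\Closure(\mathbf{x}) \in H^0(\graph{G}; \sheaf{F})$, i.e.\ that $y_i = \bigmeet_{j \in \nbhd{i}} \cosheaf{F}(i \fc ij)\sheaf{F}(i \fc ij)(x_i)$ satisfies the agreement condition. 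Actually, the cleanest route is probably to go through the characterization from the Tarski--Hodge Theorem (Theorem \ref{thm:main}) together with the Neighborhood Path Lemma (Lemma \ref{lem:neighborhood}): it suffices to check that $\Parallel{F}{j \to i}(y_j) \succeq y_i$ \emph{and} $\Parallel{F}{i \to j}(y_i) \succeq y_j$ for every edge $ij$, since by Lemma \ref{lem:neighborhood} these inequalities (for all $\mathbf{y} \in C^0$) force $\mathbf{y}$ to be a section. Equivalently, one shows $\Laplacian(\mathbf{x}) \in \suffix(\Laplacian)$, i.e.\ $\Laplacian^2(\mathbf{x}) \succeq \Laplacian(\mathbf{x})$, because $\suffix(\Laplacian) = H^0(\graph{G}; \sheaf{F})$.

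First I would establish the inequality $\Laplacian(\mathbf{x}) \succeq \mathbf{x}$ (this is exactly $\mathbf{x} \in \suffix(\Laplacian) = H^0(\graph{G};\sheaf{F})$, which holds by hypothesis and Theorem \ref{thm:main}). Then by monotonicity of $\Laplacian$ (Lemma \ref{lem:tarski-mono}) we get $\Laplacian^2(\mathbf{x}) \succeq \Laplacian(\mathbf{x})$, which already says $\Laplacian(\mathbf{x}) \in \suffix(\Laplacian) = H^0(\graph{G};\sheaf{F})$. So in fact the statement is almost immediate: $\suffix(\Laplacian)$ is invariant under $\Laplacian$ precisely because $\Laplacian$ is order-preserving — this is the same first step as in the proof of the Tarski Fixed Point Theorem (Theorem \ref{thm:tfpt}), where one shows $\prefix(f)$ is $f$-invariant. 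Concretely: for $\mathbf{x} \in \suffix(\Laplacian)$, from $\Laplacian(\mathbf{x}) \succeq \mathbf{x}$ and monotonicity, $\Laplacian(\Laplacian(\mathbf{x})) \succeq \Laplacian(\mathbf{x})$, so $\Laplacian(\mathbf{x}) \in \suffix(\Laplacian)$; combined with Theorem \ref{thm:main} this gives $\Laplacian(\mathbf{x}) \in H^0(\graph{G};\sheaf{F})$.

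The only subtlety — and the main thing to get right — is not to conflate $\suffix$ with $\fixed$: the Tarski--Hodge theorem identifies $H^0$ with $\suffix(\Laplacian)$, not with $\fixed(\Laplacian)$, so the argument must be the ``one-sided invariance'' argument and not the full fixed-point machinery. I would also remark (for the reader's benefit, mirroring the Cunninghame-Green \& Butkovic special case cited before Lemma \ref{lem:cunninghame}) that one could alternatively verify invariance ``by hand'': writing $y_j = \bigmeet_{k \in \nbhd{j}} \cosheaf{F}(j \fc jk)\sheaf{F}(k \fc jk)(x_k)$ and using that $\mathbf{x}$ is a section to replace $\sheaf{F}(k \fc jk)(x_k)$ by $\sheaf{F}(j \fc jk)(x_j)$, then applying $\cosheaf{F}(j \fc jk)\sheaf{F}(j \fc jk) \succeq \id$ and functoriality; but the monotonicity argument is shorter and less error-prone, so that is the one I would write out in full.
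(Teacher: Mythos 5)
Your proof is correct and is essentially the paper's own argument: since $\mathbf{x} \in H^0(\graph{G};\sheaf{F}) = \suffix(\Laplacian)$ gives $\Laplacian(\mathbf{x}) \succeq \mathbf{x}$, monotonicity of $\Laplacian$ yields $\Laplacian^2(\mathbf{x}) \succeq \Laplacian(\mathbf{x})$, i.e.\ $\Laplacian(\mathbf{x}) \in \suffix(\Laplacian) = H^0(\graph{G};\sheaf{F})$, exactly the one-sided invariance step from the Tarski Fixed Point Theorem that the paper invokes. Your explicit warning not to conflate $\suffix$ with $\fixed$ is well taken and, if anything, spells out the paper's rather terse proof more carefully; the preliminary detours through $\Closure$ and the Neighborhood Path Lemma are unnecessary and could be cut.
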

\begin{proof}
    By the Hodge-Tarski Theorem (Theorem \ref{thm:main}), it suffices to show that $\Laplacian(\mathbf{x}) \succeq \mathbf{x}$ which follows by monotonicity of $\Laplacian$ and transitivity.
\end{proof}

\begin{warning}
	It would be tempting to conclude that $\Laplacian \vert_{H^0(\graph{G}; \sheaf{F})}: H^0(\graph{G}; \sheaf{F}) \to H^0(\graph{G}; \sheaf{F})$ is a closure operator, and, then, one may compute meets and joints in $H^0(\graph{G}; \sheaf{F})$ explicitly by Theorem \ref{thm:closure-fixed}. However, this argument is invalid because, in the definition of the Tarski Laplacian, the product lattice structure on $C^0(\graph{G}; \sheaf{F})$ does not, in general, coincide with the lattice structure on $H^0(\graph{G}; \sheaf{F})$. Hence, this argument is circular.
\end{warning}

\subsection{The Tarski Helmholtzian}

The Tarski Laplacian acts on $C^0(\graph{G}; \sheaf{F})$. We define another operator, analogous to the graph Helmholtzian \cite{lim2020hodge} which acts on $C^1(\graph{G}; \sheaf{F})$.

\begin{definition}[Helmholtzian]
	Suppose $\bisheaf{F}$ is a Tarski sheaf over $\graph{G}$. The \define{Tarski Helmholtzian} is the operator $\Helmholtz : C^1(\graph{G}; \cosheaf{F}) \to C^1(\graph{G}; \cosheaf{F})$ defined
	\begin{align*}
		(\Helmholtz \mathbf{x})_{ij} &=& \left( \bigjoin_{j' \in \nbhd{i} - j} \sheaf{F}(i \fc ij)\cosheaf{F}(i \fc ij')(x_{ij'}) \right) \join \left( \bigjoin_{i' \in \nbhd{j} - i} \sheaf{F}(j \fc ij)\cosheaf{F}(j \fc i'j)(x_{i'j})  \right)
	\end{align*}
\end{definition}

\noindent We conjecture that the lattice $H^1(\graph{G}; \sheaf{F})$ constructed as a coequalizer \eqref{eq:section-equalizer} coincides with the prefix points of the Tarski Helmholtzian.

\begin{conjecture}
    Suppose $\bisheaf{F}$ is a Tarski sheaf over $\graph{G}$ with Tarski-Helmholtz operator $\Helmholtz: C_1(\graph{G}; \sheaf{F}) \to C_1(\graph{G}; \sheaf{F})$. Then,
	\begin{align}
		\prefix(\Helmholtz) &=& H^1(\graph{G}; \sheaf{F}).
	\end{align}
\end{conjecture}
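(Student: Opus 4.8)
The plan is to mirror the proof of the Hodge-Tarski Theorem (Theorem \ref{thm:main}) but working one degree up, with the duality $\cat{Sup} \simeq \cat{Inf}^{\op{}}$ doing the heavy lifting to convert meets into joins and prefixes into suffixes. First I would set up the $1$-cochain analogue of parallel transport: given an edge $ij$ and an incident edge $ij' \in \nbhd{i}$ (sharing the node $i$), define an \emph{edge transport} map $\sheaf{F}(i \fc ij)\cosheaf{F}(i \fc ij') : \sheaf{F}(ij') \to \sheaf{F}(ij)$, so that the Tarski Helmholtzian is, componentwise, a join-projection of edge-transported neighbouring $1$-cochains, exactly as $\Laplacian$ is a meet-projection of node-transported neighbouring $0$-cochains. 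The claim $\prefix(\Helmholtz) = H^1(\graph{G}; \sheaf{F})$ should then follow from a pair of lemmas dual to the Section Path Lemma (Lemma \ref{lem:section-path}) and the Neighborhood Path Lemma (Lemma \ref{lem:neighborhood}), together with the universal property of joins (the $\op{}$-dual of Lemma \ref{lem:convenience}).

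The key steps, in order: (1) Identify $H^1(\graph{G}; \sheaf{F})$ concretely. By the construction of coequalizers in $\cat{Sup}$ (Proposition \ref{prop:exist-equalizer}), $H^1$ is $\op{\lattice{Q}}$ where $\lattice{Q} = \{ \mathbf{x} \in C^1(\graph{G}; \sheaf{F}) ~\vert~ \coboundary_-^{\ast}(\mathbf{x}) = \coboundary_+^{\ast}(\mathbf{x}) \}$; unwinding the adjoints, $\coboundary_\pm^\ast$ are assembled from the corestriction maps $\cosheaf{F}(ij \cofc i)$, so membership in $\lattice{Q}$ is a statement about agreement of edge-data when pushed down to nodes and back. (2) Prove the ``upper'' lemma: if $\mathbf{x} \in H^1(\graph{G}; \sheaf{F})$ then for every pair of coincident edges the edge-transport map sends $x_{ij'}$ to something $\preceq x_{ij}$ — this is the $\op{}$-image of the Section Path base case, using $\ladj{f}\radj{f} \preceq \id$ from Proposition \ref{prop:monad-comonad}. (3) Prove the partial converse for coincident-edge pairs, dual to Lemma \ref{lem:neighborhood}, using the Galois adjunction property in reverse. (4) Combine with the dual universal property of joins: $\bigjoin$ over $\nbhd{i} - j$ and $\nbhd{j} - i$ of the edge-transported neighbours is $\preceq x_{ij}$ for all $ij$ iff each term is, which is exactly the condition that $\mathbf{x}$ be a deflationary point of $\Helmholtz$, i.e.\ $\mathbf{x} \in \prefix(\Helmholtz)$. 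Matching this against the description of $\lattice{Q}$ from step (1), after accounting for the order reversal in $\op{\lattice{Q}}$, closes the argument.

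I expect the main obstacle to be step (1): getting the concrete description of $H^1(\graph{G}; \sheaf{F})$ to line up \emph{exactly} with the prefix set of $\Helmholtz$. The coequalizer in $\cat{Sup}$ is computed by dualizing, so one must be careful that $\prefix(\Helmholtz)$ (a subset of $C^1$ with its product order) corresponds to the \emph{underlying set} of $\op{\lattice{Q}}$, not its order — and then verify, as in the Warning after Theorem \ref{thm:sections-sublattice}, that this is only a quasi-sublattice, so no claim about meets/joins in $H^1$ is being smuggled in. A secondary subtlety is that the Helmholtzian mixes corestriction maps $\cosheaf{F}(i \fc ij')$ with restriction maps $\sheaf{F}(i \fc ij)$ at a \emph{shared} node but \emph{different} edges, whereas the $0$-dimensional story only ever composed the two maps of a single Galois connection; checking that Proposition \ref{prop:left-right-left} and Corollary \ref{cor:idempotence} still give the needed idempotency relations in this mixed setting — or finding that they don't, which is presumably why this is stated as a conjecture rather than a theorem — is where the real work lies. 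A natural first sanity check would be to verify the conjecture on the two-node one-edge max-plus sheaf of Figure \ref{fig:cunninghame}, where $C^1$ is a single stalk and $\Helmholtz$ degenerates, and on a $3$-clique, where the coincident-edge combinatorics are already nontrivial.
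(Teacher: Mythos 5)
First, note that the paper offers no proof of this statement: it is recorded as an open conjecture, so there is nothing to compare your route against. Judged on its own terms, your plan gets the easy half right and founders exactly where you suspect. Steps (2)--(4) are sound: unwinding $\Helmholtz(\mathbf{x}) \preceq \mathbf{x}$ edge-by-edge, using the universal property of joins and the adjunction $\sheaf{F}(i \fc ij) \dashv \cosheaf{F}(i \fc ij)$, shows that $\prefix(\Helmholtz)$ is precisely the set of $1$-cochains $\mathbf{x}$ such that at every node $i$ the corestricted values $\cosheaf{F}(i \fc ij)(x_{ij})$, $j \in \nbhd{i}$, are \emph{all equal} (the two one-sided inequalities you get from the pair of components $(\Helmholtz\mathbf{x})_{ij}$ and $(\Helmholtz\mathbf{x})_{ij'}$ combine by antisymmetry). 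Your worry about Proposition \ref{prop:left-right-left} in the ``mixed'' setting is a red herring: no idempotency is needed, only the adjunction inequality, applied one Galois connection at a time.

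The genuine gap is step (1), and it is fatal rather than merely delicate. Computing the coequalizer via Proposition \ref{prop:exist-equalizer} means passing to the upper adjoints $\coboundary_{\pm}^{\ast} : C^1 \to C^0$, whose $i$-th component is the \emph{meet}, over those edges for which $i$ is the negative (resp.\ positive) endpoint, of $\cosheaf{F}(i \fc ij)(y_{ij})$ --- with the empty meet equal to $1$. So membership in $\lattice{Q}$ asks that, at each node, two meets taken over an orientation-dependent partition of the incident edges agree; this is not the pairwise-agreement condition characterizing $\prefix(\Helmholtz)$, and neither condition implies the other (at a degree-$3$ node split $2$--$1$ by the orientation the coequalizer condition reads $c_1 = c_2 \meet c_3$, which does not force $c_2 = c_3$). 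Worse, your own first sanity check already separates the two sets: on the single-edge graph $\bullet$--$\bullet$ both joins in $\Helmholtz$ are empty, so $\Helmholtz \equiv 0$ and $\prefix(\Helmholtz) = C^1(\graph{G};\sheaf{F})$, whereas the coequalizer condition forces $\cosheaf{F}(i \fc ij)(y) = 1 = \cosheaf{F}(j \fc ij)(y)$, a proper subset in general. So the conjecture cannot be proved by lining up these two descriptions; either the intended $H^1$ is not the coequalizer of Proposition \ref{prop:exist-equalizer} (e.g.\ one wants the lattice of ``cosections'' your steps (2)--(4) actually compute), or the Helmholtzian must be modified, and any honest write-up should say which.
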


\section{Algorithms}

We end this chapter with a discussion of two different algorithms that are inspired by the Tarski Laplacian.

\subsection{Heat Flow}
The Tarski Laplacian inspires a discrete-time heat equation on $C^0(\graph{G}; \sheaf{F})$:

\begin{align}
	\mathbf{x}[t+1] &=& (\Laplacian \meet \id) \mathbf{x}[t] \label{eq:heat-equation}
\end{align}
 with initial condition $\mathbf{x}[0] \in C^0(\graph{G}; \sheaf{F})$.
Locally, these updates can be cast as a distributed algorithm (Algorithm \ref{alg:heat-flow}). In order to guarantee the algorithm terminates, we require an assumption.

\begin{assumption}(DCC) \label{as:dcc}
	The product lattice $C^0(\sheaf{F}; \graph{G})$ satisfies the descending chain condition (Definition \ref{def:dcc}).
\end{assumption}

\begin{algorithm}[h]
\caption{Heat Flow} \label{alg:heat-flow}
\DontPrintSemicolon
\SetKwFor{ForPar}{for}{do in parallel}{end}
\KwIn{Bisheaf $\bisheaf{F}$ over graph $\graph{G} = (\nodes{G}, \edges{G})$;  initial $x_i \in \sheaf{F}(i), i \in \nodes{G}$}
\KwOut{$x_i \in \sheaf{F}(i), i \in \nodes{G}$}
\Repeat{$\sheaf{F}(i \fc ij)(x_i) = \sheaf{F}(j \fc ij)(x_j)~\forall~ij \in \edges{G}$}{
\ForPar{$i \in \nodes{G}$}{
\For{$j \in \nbhd{i}$}{
$x_i \leftarrow x_i \meet \cosheaf{F}(i \fc ij)\sheaf{F}(j \fc ij)(x_j)$ \;
}}}
\end{algorithm}

\begin{proposition}[Completeness] \label{thm:alg-heat}
	Suppose $\bisheaf{F}$ is a Tarski sheaf over $\graph{G}$ satisfying Assumption \ref{as:dcc}. Then, Algorithm \ref{alg:heat-flow} converges in finitely many iterations.
\end{proposition}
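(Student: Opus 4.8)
The plan is to show that the sequence of iterates $\mathbf{x}[0] \succeq \mathbf{x}[1] \succeq \mathbf{x}[2] \succeq \cdots$ produced by Algorithm \ref{alg:heat-flow} is a descending chain in $C^0(\graph{G}; \sheaf{F})$ which must stabilize by Assumption \ref{as:dcc}, and then identify the stabilized value as a global section so the termination test passes. The algorithm's inner update $x_i \leftarrow x_i \meet \cosheaf{F}(i \fc ij)\sheaf{F}(j \fc ij)(x_j)$, performed over all $j \in \nbhd{i}$ and all $i \in \nodes{G}$ in a round, is exactly one application of $\Laplacian \meet \id$ (up to the order in which the neighbor-meets are accumulated, which does not matter by associativity and commutativity of $\meet$), so one round of the algorithm computes $\mathbf{x}[t+1] = (\Laplacian \meet \id)(\mathbf{x}[t])$ as in \eqref{eq:heat-equation}.

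First I would observe that $\Laplacian \meet \id$ is deflationary by construction: $(\Laplacian \meet \id)(\mathbf{x}) = \Laplacian(\mathbf{x}) \meet \mathbf{x} \preceq \mathbf{x}$. Hence $\mathbf{x}[t+1] \preceq \mathbf{x}[t]$ for every $t$, so $\{\mathbf{x}[t]\}_{t \geq 0}$ is a descending chain in $C^0(\graph{G}; \sheaf{F})$. By Assumption \ref{as:dcc}, this chain has a minimum element, which (since the chain is monotone decreasing) means there is some finite $T$ with $\mathbf{x}[t] = \mathbf{x}[T]$ for all $t \geq T$; write $\mathbf{x}^\star = \mathbf{x}[T]$.

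Next I would show the termination condition is met at step $T$. Since $\mathbf{x}^\star$ is a fixed point of the iteration, $(\Laplacian \meet \id)(\mathbf{x}^\star) = \mathbf{x}^\star$, i.e.\ $\Laplacian(\mathbf{x}^\star) \meet \mathbf{x}^\star = \mathbf{x}^\star$, which is equivalent to $\Laplacian(\mathbf{x}^\star) \succeq \mathbf{x}^\star$, i.e.\ $\mathbf{x}^\star \in \suffix(\Laplacian)$. By the Hodge-Tarski Theorem (Theorem \ref{thm:main}), $\suffix(\Laplacian) = H^0(\graph{G}; \sheaf{F})$, so $\mathbf{x}^\star$ is a global section; by \eqref{eq:sections} (equivalently the description of $H^0$ via Proposition \ref{prop:exist-equalizer}), $\sheaf{F}(i \fc ij)(x^\star_i) = \sheaf{F}(j \fc ij)(x^\star_j)$ for every $ij \in \edges{G}$, so the \textbf{repeat}-\textbf{until} loop exits. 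Therefore the algorithm halts after at most $T$ iterations.

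The main obstacle is the bookkeeping needed to justify that one full round of the nested \textbf{for} loops equals one application of $\Laplacian \meet \id$, rather than some partially-updated hybrid. Care is required because within a round node $i$ accumulates $x_i \leftarrow x_i \meet \cosheaf{F}(i \fc ij)\sheaf{F}(j \fc ij)(x_j)$ sequentially over $j \in \nbhd{i}$; one should argue either that the neighbors' values $x_j$ used are the values at the start of the round (the ``do in parallel'' semantics), so that the net effect is $x_i \leftarrow x_i \meet \bigmeet_{j \in \nbhd{i}} \cosheaf{F}(i \fc ij)\sheaf{F}(j \fc ij)(x_j) = (\Laplacian \mathbf{x} \meet \mathbf{x})_i$, or, if one allows asynchronous/Gauss-Seidel-style reads of already-updated neighbors, that monotonicity still forces a descending chain and the same fixed-point analysis applies with a possibly different but still finite $T$. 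I would present the synchronous reading as the intended one, matching the parallel-\textbf{for} annotation, and note the asynchronous variant in a remark if desired.
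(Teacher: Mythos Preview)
Your proposal is correct and follows essentially the same approach as the paper: observe that $\Laplacian \meet \id$ is deflationary, so the iterates form a descending chain, and invoke Assumption~\ref{as:dcc} to conclude stabilization in finite time. You go slightly further than the paper's proof of this Proposition by also verifying that the stabilized value satisfies the algorithm's termination test (via the Hodge--Tarski Theorem); the paper defers that observation to the immediately following Corollary, and it does not address the synchronous-versus-asynchronous bookkeeping you raise.
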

\begin{proof}
	$\Laplacian \meet \id \preceq \id$ (trivially) and the recursion $\mathbf{x}[t] = (\Laplacian \meet \id)^t\mathbf{x}[0]$ imply $\mathbf{x}[s] \preceq \mathbf{x}[t]$ whenever $s \leq t$. Hence, $\{ \mathbf{x}[t] \}_{t \geq 0}$ forms a descending chain in $C^0(\graph{G}; \sheaf{F})$.
\end{proof}

\begin{corollary}
	Suppose $\bisheaf{F}$ is a Tarski sheaf over $\graph{G}$ satisfying Assumption \ref{as:dcc}, then Algorithm \ref{alg:heat-flow} converges to some $\mathbf{x} \in H^0(\graph{G}; \sheaf{F})$.
\end{corollary}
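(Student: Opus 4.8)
The plan is to combine the previous proposition (convergence in finitely many iterations under Assumption \ref{as:dcc}) with the Hodge-Tarski Theorem (Theorem \ref{thm:main}). The structure of the argument is: first establish that the limit exists (already done in Proposition \ref{thm:alg-heat}), then show that the limit is necessarily a fixed point of $\Laplacian \meet \id$, and finally invoke Theorem \ref{thm:main} together with Corollary \ref{cor:fixed-point} to identify $\fixed(\Laplacian \meet \id)$ with $\suffix(\Laplacian) = H^0(\graph{G}; \sheaf{F})$.

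First I would recall that by Proposition \ref{thm:alg-heat}, the sequence $\{\mathbf{x}[t]\}_{t \geq 0}$ defined by the recursion $\mathbf{x}[t+1] = (\Laplacian \meet \id)\mathbf{x}[t]$ is a descending chain in $C^0(\graph{G}; \sheaf{F})$, and under Assumption \ref{as:dcc} this chain stabilizes: there is some $t^\ast$ such that $\mathbf{x}[t] = \mathbf{x}[t^\ast] =: \mathbf{x}^\ast$ for all $t \geq t^\ast$. (Indeed the termination condition of Algorithm \ref{alg:heat-flow}, namely $\sheaf{F}(i \fc ij)(x_i) = \sheaf{F}(j \fc ij)(x_j)$ for all edges, is exactly the condition that $\mathbf{x}^\ast$ be a section, by Definition \ref{eq:sections}; but to be safe I would argue directly from stabilization rather than the stopping rule.) Next, stabilization gives $(\Laplacian \meet \id)\mathbf{x}^\ast = \mathbf{x}[t^\ast + 1] = \mathbf{x}^\ast$, so $\mathbf{x}^\ast \in \fixed(\Laplacian \meet \id)$.

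It then remains only to cite the corollary to the Hodge-Tarski Theorem: $\fixed(\Laplacian \meet \id) = H^0(\graph{G}; \sheaf{F})$. Hence $\mathbf{x}^\ast \in H^0(\graph{G}; \sheaf{F})$, which is the claim. One could optionally remark that this $\mathbf{x}^\ast$ is in fact the \emph{greatest} section below $\mathbf{x}[0]$, since the iteration of the deflationary order-preserving operator $\Laplacian \meet \id$ starting from $\mathbf{x}[0]$ computes the greatest post-fixed point (here, fixed point) of $\Laplacian \meet \id$ in the principal downset $\downset{\mathbf{x}[0]}$ — but this is not needed for the stated corollary.

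There is essentially no obstacle here: the corollary is a one-line consequence of two results already proved, so the only care needed is bookkeeping — making sure that ``converges'' is interpreted as ``eventually constant'' (which is what DCC delivers for a monotone descending sequence) and that the eventual constant value is genuinely a fixed point of the update operator, not merely a cluster point. The mild subtlety worth flagging in the write-up is that convergence is in the order-theoretic/finite sense guaranteed by Proposition \ref{thm:alg-heat}, and that the identification of the limit with an element of $H^0(\graph{G}; \sheaf{F})$ — rather than with an element of the (generally non-sublattice) $\sections{\graph{G};\sheaf{F}}$ viewed abstractly — is exactly what the Hodge-Tarski fixed point characterization buys us.
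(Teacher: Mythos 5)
Your proposal is correct and follows essentially the same route as the paper: invoke Proposition \ref{thm:alg-heat} to get stabilization of the descending chain under DCC, observe that the stabilized value is a fixed point of $\Laplacian \meet \id$, and conclude via the Hodge--Tarski characterization that it lies in $H^0(\graph{G}; \sheaf{F})$. Your version is if anything slightly more explicit than the paper's about the fixed-point step and the appeal to $\fixed(\Laplacian \meet \id) = H^0(\graph{G}; \sheaf{F})$, which is a small improvement in bookkeeping rather than a different argument.
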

\begin{proof}
    By Proposition \ref{thm:alg-heat}, the Algorithm \ref{alg:heat-flow} converges. Hence, there is a $t_0$ such that $(\Laplacian \meet \id)^t \mathbf{x}[t] = (\Laplacian \meet \id)^t \mathbf{x}[t_0]$ for all $t \geq t_0$. Therefore, by the Tarski-Hodge Theorem,  $(\Laplacian \meet \id)^t \mathbf{x}[t_0] \in H^0(\graph{G}; \sheaf{F})$.
\end{proof}

\noindent If a poset is graded, then it satisfies the descending chain condition.

\begin{corollary}
	Suppose $C^0(\graph{G}; \sheaf{F})$ is graded. Then, Algorithm \ref{alg:heat-flow} also terminates in finite time and converges to a section.
\end{corollary}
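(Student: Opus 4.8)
The plan is to derive this as a short corollary of the fact, established just above, that a graded poset satisfies the descending chain condition (Proposition preceding Definition \ref{def:dcc}). Concretely, the grading map $r: C^0(\graph{G}; \sheaf{F}) \to \N$ is strictly monotone, so any strictly descending chain would force a strictly decreasing sequence of natural numbers bounded below by $0$, which is impossible; hence $C^0(\graph{G}; \sheaf{F})$ satisfies Assumption \ref{as:dcc} (DCC). The main content is therefore just the implication \emph{graded} $\Rightarrow$ DCC, which is already a proposition in the excerpt, so the corollary is essentially a pointer.

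First I would invoke the Proposition stating that a graded poset satisfies DCC, applied to $\poset{P} = C^0(\graph{G}; \sheaf{F})$, to conclude that Assumption \ref{as:dcc} holds. Then I would apply Proposition \ref{thm:alg-heat} to conclude that Algorithm \ref{alg:heat-flow} converges in finitely many iterations, and finally apply the preceding Corollary (convergence of the heat flow to an element of $H^0(\graph{G}; \sheaf{F})$, together with the isomorphism $H^0(\graph{G}; \sheaf{F}) \cong \sections{\graph{G}; \sheaf{F}}$ from Proposition \ref{thm:section-equalizer}) to identify the limit as a section. That is the entire argument; it is a three-line chain of citations.

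I do not anticipate any real obstacle here, since every ingredient is already in place. The only thing to be slightly careful about is that ``graded'' in the statement refers to the product lattice $C^0(\graph{G}; \sheaf{F})$ itself (not merely to each stalk $\sheaf{F}(i)$), which is exactly the hypothesis needed to feed into the graded $\Rightarrow$ DCC proposition; if instead one only assumed each stalk graded, one would additionally need the finiteness of $\nodes{G}$ and that a finite product of graded posets is graded (via the sum of the stalk gradings), a routine check I would mention in passing but not belabor. So the proof is: grading gives DCC, DCC gives termination by Proposition \ref{thm:alg-heat}, and termination at a fixed point of $\Laplacian \meet \id$ gives a section by the Tarski-Hodge Theorem (Theorem \ref{thm:main}) and Corollary \ref{cor:fixed-point}.
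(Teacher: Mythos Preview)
Your proposal is correct and matches the paper's approach exactly: the paper's entire ``proof'' is the one-line remark immediately preceding the corollary, ``If a poset is graded, then it satisfies the descending chain condition,'' which then feeds into Proposition \ref{thm:alg-heat} and the preceding corollary just as you outline. Your chain of citations (graded $\Rightarrow$ DCC $\Rightarrow$ termination $\Rightarrow$ limit is a section via the Tarski--Hodge Theorem) is precisely what the paper intends.
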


\noindent The heat flow alogithm (Algorithm \ref{alg:heat-flow}) is depicted visually (Figure \ref{fig:heat-flow}).

\begin{figure}[h]
    \centering
    \includegraphics[width=0.6\textwidth]{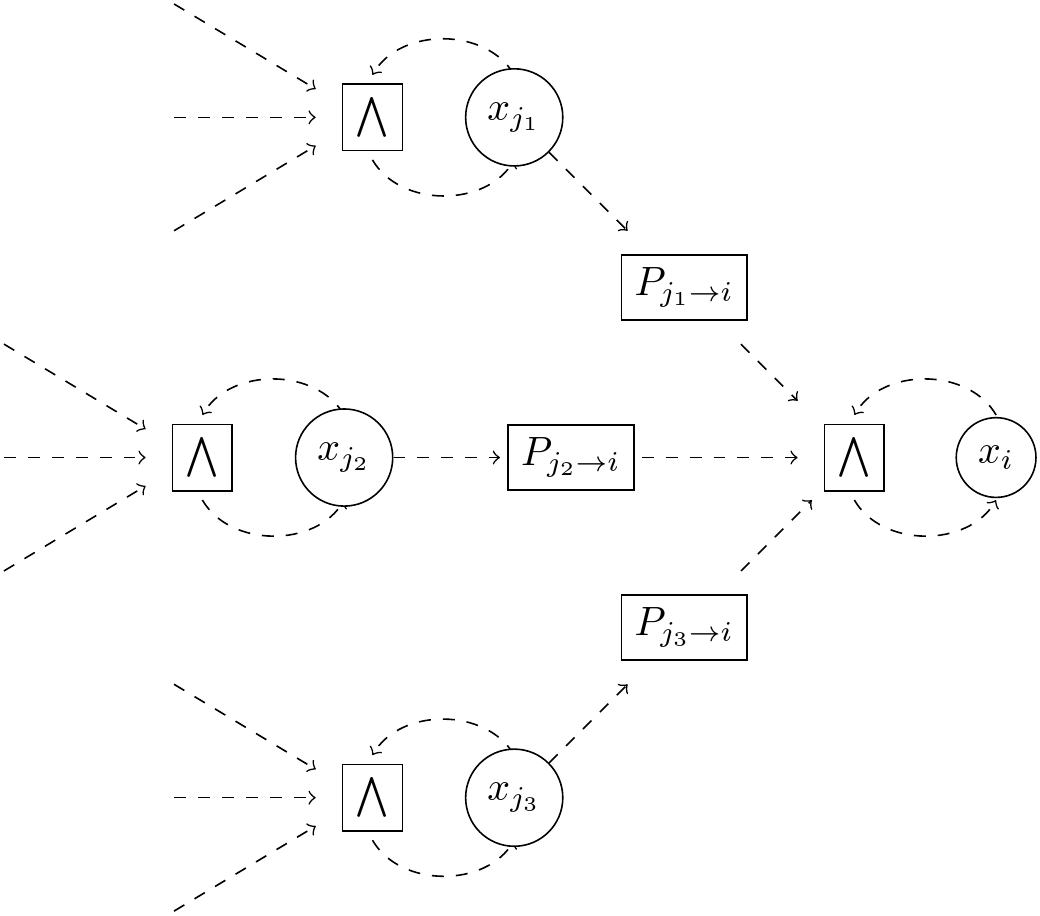}
    \caption{Heat flow on a Tarski Sheaf.}
    \label{fig:heat-flow}
\end{figure}

\subsection{Gossip}
Gosip algorithms are a general class of decentralized algorithms that are broadly characterized as having each exchange information with one or a few neighboring nodes each round. The advantages of gossip over Laplacian-based methods include fault tolerance and a relaxation of assumption that nodes update their states synchronously \cite{kempe2003gossip}. In order to describe an asynchronous version of the Heat Flow algorithm, we will need some notation. Let $\tau: \N \to \powerset{\nodes{G}}$ be a \define{broadcast sequence}. At time $t$, the nodes $\tau(t) \subseteq \nodes{G}$ broadcast to all neighbors $j \in \nbhd{i}$. This is quite general. For instance, if $\tau(t) = \nodes{G}$, then we recover Algorithm \ref{alg:heat-flow}. However, if $\tau(t)$ is a single node, then we have a truly asynchronous algorithm. Additionally, we require nodes to continually exchange information.

\begin{assumption}(Liveness) \label{as:liveness}
	For every $i \in \nodes{G}$ and $t_0 \in \N$, there exist $t \geq t_0$ such that $i \in \tau(t)$. 
\end{assumption}

\noindent The Heat Flow algorithm (Algorithm \ref{alg:heat-flow}) is modified (Algorithm \ref{alg:gossip}) by aggregating parallel transported lattice-valued information from only the nodes that are broadcasting at a given time.

\begin{algorithm}[h]
\SetKwFor{ForTo}{to}{do}{end}
\caption{Gossip}\label{alg:gossip}
\DontPrintSemicolon
\KwIn{Bisheaf $\bisheaf{F}$ over graph $\graph{G} = (\nodes{G}, \edges{G})$;  initial $x_i \in \bisheaf{F}(i), i \in \nodes{G}$; broadcast sequence $\tau: \N \to \powerset{\nodes{G}}$.}
\For{$t \in \N$}{\For{$i \in \tau(t)$}{\For{$j \in \nbhd{i}$}
{$\mathtt{sendTo}(j) \leftarrow \sheaf{F}(i \fc ij)(x_i)$ \;
$\mathtt{receiveFrom}(i) \leftarrow \mathtt{sendTo}(j)$ \;
$x_j \leftarrow x_j \meet \cosheaf{F}(j \fc ij)\left(\mathtt{receiveFrom}(i)\right)$
}}}
\end{algorithm}

\noindent Alternatively, we may rewrite the algorithm as a recursion equation
\begin{align}
	\mathbf{x}[t+1] &=& L_t \bigl( \mathbf{x}[t] \bigr) \meet \mathbf{x}[t] \label{eq:gossip}
\end{align}
called the \define{gossip equation} with the time-varying Tarski Laplacian $\Laplacian_t$ defined as follows.

\begin{definition}[Time-Varying Tarski Laplacian] \label{def:varrying-tarski}
   Suppose $\graph{G}$ is a graph and $\bisheaf{F}$ is a Tarski sheaf over $\graph{G}$ and $\tau: \N \to \powerset{\nodes{G}}$ satisfies Assumption \ref{as:liveness}. The \define{asynchronous Tarski Laplacian}:\footnote{If $\tau(t) \cap \nbhd{i} = \emptyset$, then $(\tilde{\Laplacian}_t)_i = 1$ because $\bigmeet \emptyset = 1$.} is the time-varying operator
\begin{align}
	\tilde{\Laplacian}_t &:& \N \times C^0(\graph{G}; \sheaf{F}) \to C^0(\graph{G}; \sheaf{F}) \\
	\left(\tilde{\Laplacian}_t \right)_i &=& \bigmeet_{j \in \tau(t) \cap \nbhd{i}} \cosheaf{F}(i \fc ij)\sheaf{F}(j \fc ij)(x_j). \nonumber
\end{align} 
\end{definition}

\noindent We now state a equilibrium result for global sections for gossip.

\begin{theorem}[Theorem 1 \cite{riess2022asynchronous}]
	Suppose $\graph{G}$ is a graph and $\bisheaf{F}$ is a Tarski sheaf over $\graph{G}$. Suppose $\tau: \N \to \powerset{\nodes{G}}$ is a firing sequence satisfying Assumption \ref{as:dcc}. Suppose $\tilde{\Laplacian}_t$ is the asynchronous Tarski Laplacian. Then, if $\tau$ is a broadcast sequence satisfying Assumption \ref{as:liveness}, the sections $H^0(\graph{G}; \sheaf{F})$ coincide with the time-independent solutions of the recursion equations
	\begin{align}
		\mathbf{x}[t+1] &=& (\Laplacian_t \meet \id) \mathbf{x}[t]
	\end{align}
  with initial condition $\mathbf{x}[0] \in C^0(\graph{G}; \sheaf{F})$.
\end{theorem}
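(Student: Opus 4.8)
The plan is to show that the time-independent solutions of the gossip recursion are exactly $H^0(\graph{G}; \sheaf{F})$ by a two-way inclusion, leaning on the synchronous Hodge-Tarski Theorem (Theorem \ref{thm:main}) as the backbone. A point $\mathbf{x}$ is time-independent precisely when $(\tilde{\Laplacian}_t \meet \id)\mathbf{x} = \mathbf{x}$ for \emph{every} $t$, i.e. when $(\tilde{\Laplacian}_t \mathbf{x})_i \succeq x_i$ for all $i \in \nodes{G}$ and all $t \in \N$. First I would observe the easy containment: if $\mathbf{x} \in H^0(\graph{G}; \sheaf{F})$, then by the Section Path Lemma (Lemma \ref{lem:section-path}), $\Parallel{F}{j \to i}(x_j) \succeq x_i$ for every $i$ and every $j \in \nbhd{i}$; since $\tau(t) \cap \nbhd{i}$ is just a subset of $\nbhd{i}$, taking a meet over this smaller index set (using the universal property of meets, Lemma \ref{lem:convenience}) still yields $(\tilde{\Laplacian}_t \mathbf{x})_i \succeq x_i$ — the meet of a sub-family of things each above $x_i$ is above $x_i$. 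Hence $\mathbf{x}$ is a fixed point of $\tilde{\Laplacian}_t \meet \id$ for every $t$.

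\textbf{The reverse inclusion.} Conversely, suppose $\mathbf{x}$ is time-independent for the gossip equation, so $(\tilde{\Laplacian}_t \mathbf{x})_i \succeq x_i$ for all $t$. The key point is the Liveness assumption (Assumption \ref{as:liveness}): for each node $j$ there are infinitely many $t$ with $j \in \tau(t)$. Fix an edge $ij \in \edges{G}$ and pick a time $t$ with $j \in \tau(t)$; then $j \in \tau(t) \cap \nbhd{i}$, so the meet defining $(\tilde{\Laplacian}_t \mathbf{x})_i$ includes the term $\cosheaf{F}(i \fc ij)\sheaf{F}(j \fc ij)(x_j) = \Parallel{F}{j \to i}(x_j)$, whence $x_i \preceq (\tilde{\Laplacian}_t \mathbf{x})_i \preceq \Parallel{F}{j \to i}(x_j)$. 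Running this over every edge gives $\Parallel{F}{j \to i}(x_j) \succeq x_i$ for all $i \in \nodes{G}$, $j \in \nbhd{i}$, which is exactly the hypothesis of the Neighborhood Path Lemma (Lemma \ref{lem:neighborhood}); therefore $\mathbf{x} \in \sections{\graph{G}; \sheaf{F}} \cong H^0(\graph{G}; \sheaf{F})$. Combining the two inclusions finishes the argument.

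\textbf{Main obstacle.} The routine calculations are all short; the one subtlety worth being careful about is the logical order of quantifiers in the definition of ``time-independent solution.'' A solution to the \emph{family} of recursions $\mathbf{x}[t+1] = (\tilde{\Laplacian}_t \meet \id)\mathbf{x}[t]$ that is stationary must satisfy the fixed-point condition simultaneously for all $t$, not merely for the particular $t$'s realized along one trajectory; it is this ``for all $t$'' reading, together with Liveness, that lets every edge eventually appear in some $\tilde{\Laplacian}_t$. (If one instead only demanded stationarity along a single orbit, one would still get the conclusion by the same Liveness argument, since each edge is activated infinitely often along any orbit, but the cleanest statement uses the universal reading.) I would also note in passing that Assumption \ref{as:dcc} plays no role in this equilibrium characterization — it is only needed for the convergence claim of Algorithm \ref{alg:gossip}, not for identifying the fixed points — so the theorem as stated could in fact drop that hypothesis, though I would keep the statement as given.
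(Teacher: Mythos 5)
Your proof is correct. The paper itself defers this result to an external reference and supplies no proof, but your argument is precisely the natural extension of the paper's own proof of the synchronous Hodge--Tarski Theorem (Theorem \ref{thm:main}): the same reduction to the edgewise condition $\Parallel{F}{j \to i}(x_j) \succeq x_i$ via the Section Path and Neighborhood Path Lemmas, with Liveness supplying, for each edge $ij$, some time $t$ at which the term $\Parallel{F}{j\to i}(x_j)$ actually occurs in the meet defining $(\tilde{\Laplacian}_t\mathbf{x})_i$, and the universal property of meets handling the converse inclusion over the sub-family $\tau(t)\cap\nbhd{i}$. Your side remarks are also accurate: the two readings of ``time-independent solution'' coincide for a constant trajectory, and Assumption \ref{as:dcc} plays no role in the equilibrium characterization --- it is only needed for termination of Algorithm \ref{alg:gossip}.
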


\begin{corollary}
	Suppose $\graph{G}$, $\bisheaf{F}$ and $\tau$ satisfy Assumptions \ref{as:dcc} and \ref{as:liveness}. Then, Algorithm \ref{alg:gossip} terminates in a finite number of iterations and converges to an $\mathbf{x} \in H^0(\graph{G}; \sheaf{F})$.
\end{corollary}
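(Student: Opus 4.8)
The plan is to recognize that Algorithm~\ref{alg:gossip} is exactly the iteration of the gossip equation \eqref{eq:gossip}, $\mathbf{x}[t+1] = (\tilde{\Laplacian}_t \meet \id)\mathbf{x}[t]$, and then to combine the descending chain condition (Assumption~\ref{as:dcc}) with the equilibrium result (Theorem~1 of \cite{riess2022asynchronous}) cited just above. So the corollary is a short deduction once the bookkeeping is in place.

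First I would verify the identification of the algorithm with the recursion. Line by line, the inner updates $x_j \leftarrow x_j \meet \cosheaf{F}(j \fc ij)\sheaf{F}(i \fc ij)(x_i)$ ranging over all $i \in \tau(t)$ and $j \in \nbhd{i}$ realize, after the full round indexed by $t$, the componentwise assignment $x_i \leftarrow x_i \meet (\tilde{\Laplacian}_t \mathbf{x})_i$ of Definition~\ref{def:varrying-tarski}; hence $\mathbf{x}[t+1] = (\tilde{\Laplacian}_t \meet \id)\mathbf{x}[t]$. One must check that the order in which the broadcasting nodes are processed within a round is irrelevant, which follows because $\meet$ is associative, commutative, and idempotent and each update only lowers a coordinate, so interleaved updates and the batched update agree.

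Next, since $\tilde{\Laplacian}_t \meet \id \preceq \id$ trivially, the iterates satisfy $\mathbf{x}[t+1] \preceq \mathbf{x}[t]$, so $\{\mathbf{x}[t]\}_{t \geq 0}$ is a descending chain in $C^0(\graph{G}; \sheaf{F})$. By Assumption~\ref{as:dcc}, this chain attains its minimum: there is a $t_0$ with $\mathbf{x}[t] = \mathbf{x}[t_0] =: \mathbf{x}^\star$ for all $t \geq t_0$. This is precisely the assertion that the algorithm terminates after finitely many iterations, and $\mathbf{x}^\star$ is then a time-independent solution of \eqref{eq:gossip}. Applying Theorem~1 of \cite{riess2022asynchronous}, under Assumptions~\ref{as:dcc} and~\ref{as:liveness} the time-independent solutions of the recursion are exactly $H^0(\graph{G}; \sheaf{F})$, whence $\mathbf{x}^\star \in H^0(\graph{G}; \sheaf{F})$.

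The main obstacle is not the descending-chain argument, which is immediate, but the reason a stabilized point must be a genuine global section rather than merely a fixed point of $\tilde{\Laplacian}_{t_0} \meet \id$ for one unlucky $t_0$. This is exactly what Assumption~\ref{as:liveness} supplies: for any $t_0$, each node $i$ and each neighbor $j \in \nbhd{i}$, there is a later round in which $j$ broadcasts, so over a sufficiently long window every inequality $\Parallel{F}{j \to i}(x_j) \succeq x_i$ is tested; since $\mathbf{x}^\star$ is fixed throughout that window, all of these hold simultaneously, and the Neighborhood Path Lemma (Lemma~\ref{lem:neighborhood}) together with the Hodge--Tarski Theorem (Theorem~\ref{thm:main}) forces $\mathbf{x}^\star \in H^0(\graph{G}; \sheaf{F})$. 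Since this reasoning is already packaged inside the cited theorem, for the corollary I would simply invoke it, and the proof is complete.
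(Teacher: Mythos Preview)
Your argument is correct and matches the intended line of reasoning: the paper does not supply an explicit proof for this corollary, treating it as immediate from the preceding theorem (the cited Theorem~1 of \cite{riess2022asynchronous}) together with the descending-chain argument already used in Proposition~\ref{thm:alg-heat}. Your write-up unpacks exactly those two ingredients, and even includes the liveness reasoning that underlies the cited theorem, so there is nothing to add.
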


\noindent In Appendix \ref{ch:appendix-2}, we review some experimental results on the convergence rate of the gossip algorithm (Algorithm \ref{alg:gossip}).

\cleardoublepage
\part{Applications}\label{pt:applications}
\chapter{Signals}\label{ch:signals} %

Signal processing is a broad area of engineering devoted to the study of various sorts of signals. Some examples of signals include audio, video, speech, image, communication, GPS, sonar, lidar, and radar, among others.  In our view, signal processing is synonymous with information processing.

At a high level of abstraction, a signal is a map $x: \domain \to \channels$ where $\domain$ is a domain and $\channels$ is an object in a data category. Classically, $\channels$ is a finite-dimensional vector space and each dimension of $\channels$ is called a \define{channel} or \define{feature}, however, the same terminology may be used if $\channels$ is an arbitrary object in a data category. The set of signals over $\domain$ valued in $\channels$ is denoted $\signals(\domain, \channels)$.
\begin{examples}
    Some traditional examples:
    \begin{itemize}
        \item[] \emph{Waveforms}. Suppose $T$ is the period of the waveform. Then, $\domain$ is a the interval $[0,T]$ with and $\channels = \R$ (electric potential, displacement etc.).
        \item[] \emph{Images}. $\domain$ is a grid $\Z \times \Z$ (pixels) and $\channels = \R^d$ (intensities) where usually $d = 1$ (grayscale) or $3$ (RBG).
        \item[] \emph{Networks}. $\domain$ is a graph $\graph{G} = (\nodes{G}, \edges{G})$ and $\channels = \R^d$; each channel is a feature. In practice, channels are often features generated from data and the graph is constructed from a similarity measure between nodes/agents (e.g. Pearson correlation).
    \end{itemize}
\end{examples}

In Chapter \ref{ch:lattice-valued} we introduced lattice-valued sheaves as a way to generalize sheaves valued in sets and vector spaces. If every stalk in a lattice-valued sheaf over a graph $\graph{G}$ is the same lattice, say $\lattice{L}$, then it is perfectly reasonable to identify $0$-cochains $C^0(\graph{G}; \sheaf{F})$ with signals $\signals(\nodes{G}, \lattice{L})$. We might, then, hope for some of the themes of signal processing to port to lattice-valued sheaves. 

\section{Algebraic Signal Processing}

Algebraic signal processing (ASP) is a unified signal processing framework based on the representation theory of associative algebras \cite{etingof2011introduction}. One limitation of ASP, however, is that $\signals(\domain, \channels)$ must be a vector space in this perspective. We first basics of representation theory.

\begin{definition}[Algebras]
    Let $\field$ be a field. An \define{associative algebra} $\field$, or $\field$-algebra for short, is a vector space $A$ equipped with an associative bilinear map\footnote{Equivalently, a linear map $A \otimes A \to A$}
\begin{align*}
    \cdot: A \times A \to A
\end{align*}
A $\field$-algebra is \define{unital} if there exists an element $1 \in A$ with $a \cdot 1 = 1 \cdot a = a$. An \define{algebra homomorphism} between algebras $(A, \cdot)$ and $(A', \cdot')$ is a linear transformation $\rho: A \to A'$ such that $\rho(a \cdot b) = \rho(a) \cdot' \rho(b)$.
\end{definition}

\noindent We provide several examples of algebras whose representationsa are of interest in engineering.

\begin{example}[Polynomials]
        Polynomials with coeficients in $\field$ form an algebra denoted $\field[t]$; multiplication is given by multiplication of polynomials. Multinomials, in the same way, form an algebra $\field[t_1,t_2, \dots, t_n]$. The unit is the constant polynomial $1 \in \field[t]$ or $\field[t_1,t_2, \dots, t_n]$.
\end{example}
\begin{example}[Endomorphism Algebra]
    Suppose $V$ is a vector space. Then, $\End{V}$ is a $\field$-algebra with composition. If $V$ is finite-dimensional, $\End{V}$ is isomorphic to the algebra $\mathrm{Mat}_{n \times n}(\field)$ of $n \times n$ matrices with coefficients in $\field$. The unit is the identity linear transformation.
\end{example}
\begin{example}[Monoid Algebra]
Let $(\monoid{M}, \star, e)$ be a monoid. The \define{monoid algebra} denoted $\field[\monoid{M}]$ (sometimes called the \define{monoid ring}) is the free vector space generated by formal linear combinations of elements of $\monoid{M}$
    \begin{align*}
        \sum_{m \in \monoid{M}} h_m [m], \quad |\{h_m \in \field ~\vert~ h_m \neq 0\}| < \infty.
    \end{align*}
Multiplication in $\field[\monoid{M}]$ is defined on basis elements
    \begin{align*}
        [m] [n] &=& [m \cdot n],
    \end{align*}
then extended by linearity. The algebra $\field[\monoid{M}]$ is unital with unit $[e]$. Equivalently, the monoid algebra consists of the vector space of functions $\field^{\monoid{M}}$ with multiplication
\begin{align*}
    (f \oast g)(m) &=& \sum_{p \cdot q = m} f(p)g(q) \quad f, g \in \field^{\monoid{M}}, \quad m, p, q \in \monoid{M}.
\end{align*}
The monoid algebra is integral in the study of representations of finite monoids \cite{steinberg2016representation}, a generalization of the theory of representations of finite groups \cite{serre1977linear}.
\end{example}

\begin{example}[Monoid Algebra]
Suppose $\monoid{M}$ is the semilattice $M = \{x,y,z,1\}$ with the multiplication table
\begin{center}
\begin{tabular}{l|llll}
$\meet$ & $x$ & $y$ & $z$ & $1$ \\
    \hline
$x$ & $x$ & $z$ & $z$ & $x$ \\
$y$ & $z$ & $y$ & $z$ & $y$ \\
$z$ & $z$ & $z$ & $z$ & $z$ \\
$1$ & $x$ & $y$ & $z$ & $1$
\end{tabular}
\end{center}
Then, suppose $h = [x] - [y]$ and $h' =  [1] - 2[y] - [z]$, then
\begin{align*}
    h \cdot h' &=& \\
            &=& ([x] - [y] ) \cdot ([1] - 2[y] - [z]) \\
            &=& [x \meet 1] - 2 [ x \meet y] - [x\meet z] - [y \meet 1] + 2[y \meet y] + [y \meet z] \\
            &=& [x] - 2 [z] - [z] - [y] + 2[y] + [z] \\
            &=& [x] - 2 [z] + [y]
\end{align*}
\end{example}

\begin{example}[Incidence Algebra]
    Suppose $\poset{P}$ is a locally finite\footnote{A poset is locally finite if every closed interval $[x,y]$ is finite poset. A finite poset is of course locally finite} poset. Then, the \define{incidence algebra} of $\poset{P}$, denoted $\field \poset{P}$ is a $\field$-algebra over the vector space of functions
    \[f: \Int{\poset{P}} \to \field\]
    with multiplication
    \begin{align*}
        (f \oast g)([x,y]) &=& \sum_{x \preceq y \preceq z} f([x,z])g([z,y]).
    \end{align*}
    Equivalently, $\field \poset{P}$ is a monoid algebra on $\Int{\poset{P}}$ with the appropriate monoid stucture.
\end{example}
\begin{example}[Path Algebra]
    A \define{quiver} is a tuple $Q = (\nodes{Q}, \edges{Q}, h,t)$ consisting of sets $\nodes{Q}, \edges{Q}$ (vertices, arrows) and maps $h,t: \edges{Q} \to \nodes{Q}$ (head, tail). A path in a quiver is a sequence $\gamma = a_0 a_1 \dots a_{\ell}$ of elements of $\edges{Q}$ such that $h(a_{i}) = t(a_{i+1})$. Trivial paths $e_i$ are indexed by $\nodes{Q}$. Let $\mathrm{Path}(Q)$ denote the set of paths in $Q$. Then, the \define{path algebra} $\field Q$ is the free vector space generated by $\mathrm{Path}(Q)$ with a product of basis elements
   \[ \gamma_1 \cdot \gamma_2 = \begin{cases}
       \gamma_1\gamma_2 & h(\gamma_1) = t(\gamma_2) \\
       0                & h(\gamma_1) \neq t(\gamma_2)
   \end{cases}\]
\end{example}

Suppose $A$ is a $\field$-algebra. A \define{representation of $A$} is a vector space $V$ together with an algebra homomorphism $\rho: A \to \End(V)$. Representations of $A$ form a category $\cat{Rep}_A$ with the following morphisms.

\begin{definition}[Intertwining Maps]
    Suppose $(V, \rho)$ and $(V, \rho')$ are representations of $A$. Then, an \define{intertwining map} is a linear transformation $\theta: V \to V'$ such that
    \begin{align*}
        \rho_a' \circ \theta &=& \theta \circ \rho_a
    \end{align*} 
    for all $a \in A$.
\end{definition}

\noindent In the above examples, signal spaces are naturally paired with each algebra with a representation, leading to the following definition.

\begin{definition}[Algebraic Signal Model]
    An  \define{algebraic signal model} is a triple $(A, \signals, \rho)$. $A$ is a $\field$-algebra whose elements are called \define{filters}; $\signals$ is vector space whose elements are called \define{signals}; $\rho$ is a representation
    \begin{align*}
        \rho &:& A \to \End(\mathcal{X}). 
    \end{align*}
    Given a filter $h \in A$, the linear tranformation\footnote{We often write $\rho_h: \signals \to \signals$ for the linear transformation that is the image of an element $h \in A$ under $\rho$.} $\rho_h: \signals \to \signals$ is an \define{instantiation} of $h$. Acting on a signal $x \in \signals$ by $\rho_h$ is called \define{convolution}.
\end{definition}

\begin{remark}
    The original formulation of algebraic signal processing regards an algebraic signal model as an algebra $A$ of filters together with an $A$-module. However, the data of an $A$ module is the same as the data of a representation $\rho: A \to \End(\signals)$ by standard facts of representation theory \cite{etingof2011introduction}. We prefer to think of representations as maps to emphasize relationships between models.
\end{remark}

\noindent We now show that familiar and unfamiliar signal processing frameworks fall under the banner of algebraic signal processing.

\begin{example}[Graph Signal Processing \cite{parada2021algebraic}]
    Let $\graph{G} = (\nodes{G}, \edges{G})$ be a weighted graph with (graph) shift operator $S: \R^\nodes{G} \to \R^\nodes{G}$. There are several common choices for $S$, including the (normalized) adjacency or Laplacian matrix. A \define{graph filter} is an element of the algebra $\field[t]$. The signal space $\signals(\nodes{G}, \R)$ consists is the vector space of functions $\R^{\nodes{G}}$. Convolution, then, is a representation
    \begin{align*}
        \rho: \R[t] \to \End\left(\R^{\nodes{G}}\right).
    \end{align*}
    Because $\R[t]$ is generated by the single mononomial $t$, a representation is uniquely determined by $\rho(t)$ which we chose to be the graph shift operator. Hence, a graph filter is given by
    \begin{align*}
        \mathbf{z} &=& \mathbf{h} \oast \mathbf{x} \\
        &=& \rho_{\mathbf{h}} \cdot \mathbf{x} \\
        &=& \left(\sum_{k=0}^{\infty} h_k S^k\right) \cdot \mathbf{x}
    \end{align*}
    Let's consider a specific example of a graph filter, the \define{heat kernel} \cite{ortega2018graph}. The heat kernal is a parameterized linear graph filter denoted $H_t$ which acts on graph signals as
    \begin{align*}
        H_t \cdot \mathbf{x} &=& \left(\sum_{k=0}^{\infty} \frac{t^k e^{-t}}{k!} S^k\right) \cdot \mathbf{x}, \quad t \geq 0.
    \end{align*}
    A graph filter, in practice, will typically have $h_k = 0$ for all $k \geq K$ where $K \ll n$. While there is a computational justification for this, there is also a theoretical one. Suppose $n = |\nodes{G}|$. Then, by the Cayley-Hamilton Theorem, we can always write $S^n$ as a linear combination of $\{I, S, S^2, \dots, S^{n-1}\}$. Hence, we have an alternate description of graph filters as a representation of a smaller algebra
    \begin{align*}
        \rho: \field[t]/(t^n-1) \to  \End(\field^{\nodes{G}})
    \end{align*}
    where $\field[t]/(t^n-1)$ is the algebra of polynomials of degree at most $n-1$.  
\end{example}

\begin{example}[Quiver Signal Processing \cite{parada2020quiver}]
    Suppose $Q = (\nodes{Q}, \edges{Q}, h,t)$ is a quiver. A \define{representation} $\boldsymbol{\pi}_{\bullet}$ of $Q$ is an assignment of a vector space $\boldsymbol{\pi}(v)$ for every $e \in \nodes{Q}$ and a linear transformation $\boldsymbol{\pi}(e): \boldsymbol{\pi}(t(e)) \to \boldsymbol{\pi}(h(e))$ for every $e \in \edges{Q}$. The \define{total space} of a representation is the direct sum
    \begin{align*}
        \mathrm{Tot}(\boldsymbol{\pi}) &=& \bigoplus_{i \in \nodes{Q}} \boldsymbol{\pi}(i).
    \end{align*}
     It is a fact that representations of $Q$ are equivalent to representations of $\field Q$ \cite{oudot2017persistence}. In one direction, suppose $\boldsymbol{\pi}$ is a representation of $Q$. Then, we define
    \begin{align*}
        \rho^{\boldsymbol{\pi}}: \field Q \to \End\left( \mathrm{Tot}(\boldsymbol{\pi}) \right)
     \end{align*}
     by sending paths $\gamma = e_1 e_2 \dots e_\ell$ to compositions of linear maps
    \begin{align*}
    \boldsymbol{\pi}(\gamma) &=& \boldsymbol{\pi}(e_{\ell})\cdots \boldsymbol{\pi}(e_2)\boldsymbol{\pi}(e_1),
    \end{align*}
    and extending $\rho^{\boldsymbol{\pi}}$ linearly to the entire domain $\field Q$. Acting on $\mathrm{Tot}(\boldsymbol{\pi})$,
    \[
        (\rho^{\boldsymbol{\pi}}([\gamma]) \cdot \mathbf{x})_j =
        \begin{cases}
            \boldsymbol{\pi}(\gamma) \cdot x_i & h(\gamma) = j,~t(\gamma) = i \\
            0 & \text{otherwise}
        \end{cases}\]
    Hence, repesentations of $Q$ are in coorespondence with algebraic signal models, $(A, \signals, \rho) = (\field Q, \mathrm{Tot}(\boldsymbol{\pi}), \rho^{\boldsymbol{\pi}})$.
\end{example}

\begin{example}[Monoid Algebra]
    Suppose $\monoid{M}$ is a monoid. Then, in monoid signal processing, $A = \field \left[ G \right]$ and $\signals = \field^\monoid{M}$. Monoid convolution is a representation
    \begin{align}
        \rho: \field \left[ M  \right] \to \End(\field^\monoid{M}).
    \end{align}
    Shift operators coorespond to generators of $\field \left[ \monoid{M} \right]$ which, in turn, coorespond to the generators of $\monoid{M}$ (if they exist).
\end{example}

\section{Lattice Signal Processing}

Narrowing our focus to lattices, we first study signals whose domain is a (semi)lattice, then signals whose channels are lattices, and, finally, signals whose domain and channels are lattices.

\subsection{Signals on semilattices}

\textpuschel and Wendell proposed \cite{puschel2019discrete} a signal processing framework on meet- or join- semilattices. In this case, $\domain$ is a semilattice and $\channels$ is the real or complex number. In our view, their key definition is convolution of semilattice signals, although, in a series of paper, they build up a complete theory including Fourier transforms \cite{puschel2021discrete}, filters for denoising \cite{seifert2021wiener}, and an algorithm for sampling \cite{wendler2019sampling}.

\begin{definition}
Suppose $(\poset{S}, \meet, 1)$ is a (meet-) semilattice.\footnote{Without loss of generality, since $\op{\poset{S}}$ is a bounded join-semilattice.} Suppose $\mathbf{f} \in {\field}^{\poset{S}}$ is a signal and $\mathbf{h} \in {\field}^{\poset{S}}$ is a filter. Then, \define{convolution} of $\mathbf{f}$ by $h$ is the following operation
\begin{align}
    (\mathbf{h} \conv \mathbf{f})_y &=& \sum_{x \in \poset{S}} h_x f_{x \meet y}. \label{eq:puschel-conv}
\end{align}
\end{definition}

\noindent If there is any ambiguity, convolutions are written $\mathbf{h} \conv_\meet \mathbf{f}$ or $\mathbf{h} \conv_\join \mathbf{f}$ depending on whether $\poset{S}$ is a meet- or join- semilattice.

Convolution can be written in more compact form by means of a \define{shift operator}
\begin{align*}
    T_x &:& \field^{\poset{S}} \to \field^{\poset{S}} \nonumber \\
    (T_x \mathbf{f})_y &=& f_{x \meet y}. \label{eq:shift-lattice}
\end{align*}
Thus,
\begin{align}
    \mathbf{h} \conv - &=& \sum_{x \in \poset{S}} h_x T_x.
\end{align}
is a map sending a filter $\mathbf{j}$ to an endomorphism on $\field^{\poset{S}}$. Let us consider some properties of the shift operators.

\begin{lemma} \label{lem:shift-properties}
    Suppose $x, y \in \lattice{L}$. Then,
    \leavevmode
    \begin{enumerate}
        \item $T_x$ is linear.
        \item $T_{x \meet y} = T_{x} T_{y}$.
        \item $T_x$ is idempotent.
        \item $T_x T_y = T_y T_x$.
    \end{enumerate}
\end{lemma}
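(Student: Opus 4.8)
\textbf{Proof plan for Lemma \ref{lem:shift-properties}.}

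The strategy is to verify each of the four properties directly from the definition of the shift operator $(T_x \mathbf{f})_y = f_{x \meet y}$, using only the semilattice axioms (associativity, commutativity, idempotence of $\meet$) and the fact that $\field^{\poset{S}}$ carries the pointwise vector space structure. All four statements are elementary; the point is simply to exhibit the one-line computation behind each.

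\emph{Linearity} of $T_x$: for scalars $a, b \in \field$ and signals $\mathbf{f}, \mathbf{g} \in \field^{\poset{S}}$, one computes $\bigl(T_x(a\mathbf{f} + b\mathbf{g})\bigr)_y = (a\mathbf{f} + b\mathbf{g})_{x \meet y} = a f_{x \meet y} + b g_{x \meet y} = a (T_x \mathbf{f})_y + b (T_x \mathbf{g})_y$, where the middle equality is just the pointwise definition of addition and scalar multiplication in $\field^{\poset{S}}$. \emph{Composition law} $T_{x \meet y} = T_x T_y$: evaluate both sides at an arbitrary $z \in \poset{S}$ and an arbitrary $\mathbf{f}$; the left side gives $f_{(x \meet y) \meet z}$, and the right side gives $\bigl(T_x (T_y \mathbf{f})\bigr)_z = (T_y \mathbf{f})_{x \meet z} = f_{y \meet (x \meet z)}$, and these agree by associativity and commutativity of $\meet$. \emph{Idempotence} $T_x^2 = T_x$: this is the special case $y = x$ of the composition law together with idempotence $x \meet x = x$, so $T_x T_x = T_{x \meet x} = T_x$. \emph{Commutativity} $T_x T_y = T_y T_x$: by the composition law both sides equal $T_{x \meet y} = T_{y \meet x}$, using commutativity of $\meet$.

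There is essentially no obstacle here: the lemma is a bookkeeping exercise translating the three semilattice identities into three operator identities, with linearity following from the pointwise module structure. The only mild subtlety worth a remark is that properties (2)--(4) are not independent — once the composition law $T_{x \meet y} = T_x T_y$ is established, idempotence and commutativity are immediate corollaries — so in the write-up I would prove (1) and (2) in full and then deduce (3) and (4) as one-liners from (2).
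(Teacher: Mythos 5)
Your proposal is correct and matches the paper's argument: the paper likewise dismisses linearity as trivial, establishes (2) by the same one-line evaluation at an arbitrary point using associativity and commutativity of $\meet$, and then derives (3) and (4) as consequences of (2) via idempotence and commutativity of the meet. No gaps.
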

\begin{proof}
    Linearity is trivial. For (2),
    \begin{align*}
        (T_x T_y \mathbf{f})_{z}    &=& (T_y \mathbf{f})_{z \meet x} \\
                                    &=& f_{ (z \meet x) \meet y} \\
                                    &=& f_{(x \meet y) \meet z} \\
                                    &=& (T_{x \meet y} \mathbf{f})_z.
    \end{align*}
    Because the meet operation in a semilattice is idempotent and commutative, (3) and (4) follow from (2).
\end{proof}
\noindent The following established associativity of filtering.
\begin{lemma} \label{lem:shift-commute}
    Suppose $\mathbf{h}, \mathbf{h}' \in \field^{\poset{S}}$ are filters and $f \in \field^{\poset{S}}$ is a signal. Then,
    \begin{align*}
        \mathbf{h} \conv (\mathbf{h}' \conv \mathbf{f}) &=& \mathbf{h}' \conv (\mathbf{h} \conv \mathbf{f}).
    \end{align*}
\end{lemma}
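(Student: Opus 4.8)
The plan is to reduce the associativity of filtering to a statement about shift operators, which was essentially the content of Lemma \ref{lem:shift-properties}. First I would expand $\mathbf{h} \conv (\mathbf{h}' \conv \mathbf{f})$ using the shift-operator form of convolution: since $\mathbf{h} \conv - = \sum_{x \in \poset{S}} h_x T_x$, we have
\begin{align*}
    \mathbf{h} \conv (\mathbf{h}' \conv \mathbf{f}) &= \left(\sum_{x \in \poset{S}} h_x T_x \right)\left( \sum_{x' \in \poset{S}} h'_{x'} T_{x'} \mathbf{f} \right) \\
    &= \sum_{x \in \poset{S}} \sum_{x' \in \poset{S}} h_x h'_{x'} T_x T_{x'} \mathbf{f},
\end{align*}
where the rearrangement of sums uses linearity of $T_x$ (Lemma \ref{lem:shift-properties}(1)). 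The analogous expansion of $\mathbf{h}' \conv (\mathbf{h} \conv \mathbf{f})$ gives $\sum_{x,x'} h'_{x'} h_x T_{x'} T_x \mathbf{f}$.

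Next I would invoke commutativity of shift operators, Lemma \ref{lem:shift-properties}(4), which states $T_x T_{x'} = T_{x'} T_x$; together with commutativity of multiplication of scalars $h_x h'_{x'} = h'_{x'} h_x$ in $\field$, the two double sums are term-by-term equal, so the identity follows. Alternatively, and perhaps more transparently, I would note that both sides equal $\sum_{x, x'} h_x h'_{x'} T_{x \meet x'} \mathbf{f}$ by Lemma \ref{lem:shift-properties}(2), and then the symmetry is manifest from commutativity of $\meet$. Either route is a short formal calculation.

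There is essentially no hard part here; the only point requiring a moment's care is the interchange of the (possibly infinite) sums over $\poset{S}$, which is justified exactly as in the analogous manipulations for graph filters and monoid convolution earlier in the chapter — in the intended finite-semilattice setting the sums are finite, and in general one restricts to filters with finite support, as is standard. I would state this briefly and not belabor it. The proof is therefore a two-line unwinding of definitions followed by an appeal to parts (1), (2), and (4) of Lemma \ref{lem:shift-properties}.
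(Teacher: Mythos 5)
Your proof is correct and follows essentially the same route as the paper's: expand both sides via the shift-operator form of convolution, use linearity of $T_x$ to interchange sums, and conclude by commutativity of the shift operators (Lemma \ref{lem:shift-properties}) together with commutativity of scalar multiplication. Your remark on finiteness of the sums is a reasonable extra precaution that the paper omits.
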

\begin{proof}
    Let $x, y \in \poset{S}$.
    \begin{align*}
        \mathbf{h} \conv \mathbf{f}  &=& \sum_{x \in \poset{{}S}} h_x T_x \mathbf{f} 
    \end{align*}
    so that
    \begin{align*}
        \mathbf{h} \conv (\mathbf{h}' \conv \mathbf{f}) &=& \sum_{y \in \poset{S}} h_y T_y \left( \sum_{x \in \poset{S}} h_x' T_x \mathbf{f} \right) \\
                                &=& \sum_{y \in \poset{S}} \sum_{x \in \poset{S}} h_y h_x' T_x T_y \mathbf{f}
    \end{align*}
    because $T_y$ is linear. Then, by Lemma \ref{lem:shift-commute},
    \begin{align*}
    \sum_{y \in \poset{S}} \sum_{x \in \poset{S}} h_y h_x' T_x T_y &=& \sum_{y \in \poset{S}} \sum_{x \in \poset{S}} h_x' h_y T_y T_x.
    \end{align*}
    Reversing the process implies $\mathbf{h} \conv (\mathbf{h}' \conv \mathbf{f}) = \mathbf{h}' \star (\mathbf{h} \conv \mathbf{f})$.
\end{proof}

It follows that semilattice convolution is an algebraic signal model with $A  = \field [\poset{S}]$, $\signals = \field^{\poset{S}}$, and
\begin{align}
    \rho &:& \field [ \poset{S} ] \to \End\left(\field^\poset{S}\right) \label{eq:puschel-conv-2} \\
    \rho\left(\sum_{x \in \poset{S}} h_x [x]\right) &=& \sum_{x \in \poset{S}} h_x T_{x} \nonumber.
\end{align}

\begin{proposition}
    Convolution acting on the left (or right by commutativity) $\mathbf{h} \conv - : \field^{\poset{S}} \to \field^{\poset{S}}$ \eqref{eq:puschel-conv} coincides with the algebra homomorphism $\rho_h$ \eqref{eq:puschel-conv-2} applied to an algebraic filter $h \in A$.
\end{proposition}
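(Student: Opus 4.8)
The statement to prove is essentially an unpacking of definitions: that the map $\mathbf{h}\conv -$ defined in \eqref{eq:puschel-conv} agrees, as an endomorphism of $\field^{\poset{S}}$, with $\rho_h$ where $\rho$ is the algebra homomorphism \eqref{eq:puschel-conv-2} and $h = \sum_{x \in \poset{S}} h_x[x] \in \field[\poset{S}]$. So the plan is to chase both sides and show they act identically on an arbitrary signal $\mathbf{f} \in \field^{\poset{S}}$ evaluated at an arbitrary point $y \in \poset{S}$.

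\textbf{Main steps.} First I would recall the definition of the shift operator: $(T_x\mathbf{f})_y = f_{x\meet y}$. Then, starting from the right-hand side, apply $\rho$ to $h$: by \eqref{eq:puschel-conv-2}, $\rho_h = \rho\bigl(\sum_{x} h_x[x]\bigr) = \sum_{x} h_x T_x$, using that $\rho$ is linear (an algebra homomorphism into $\End(\field^{\poset{S}})$, which is a vector space). Next, evaluate this operator on $\mathbf{f}$ at $y$:
\begin{align*}
    (\rho_h \mathbf{f})_y &= \left( \sum_{x \in \poset{S}} h_x T_x \mathbf{f} \right)_y = \sum_{x \in \poset{S}} h_x (T_x \mathbf{f})_y = \sum_{x \in \poset{S}} h_x f_{x \meet y},
\end{align*}
which is exactly $(\mathbf{h}\conv\mathbf{f})_y$ by \eqref{eq:puschel-conv}. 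Since $\mathbf{f}$ and $y$ were arbitrary, the two maps coincide. One should also remark that the sums are finite (a filter $\mathbf{h}$ has finitely many nonzero coefficients, matching the support condition in the monoid algebra $\field[\poset{S}]$), so all manipulations are legitimate; and that $\rho$ is well-defined as an algebra homomorphism precisely because the $T_x$ satisfy $T_{x\meet y} = T_x T_y$ (Lemma \ref{lem:shift-properties}(2)), i.e.\ they furnish a representation of the semilattice, hence of the monoid algebra $\field[\poset{S}]$.

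\textbf{Expected obstacle.} There is essentially no obstacle here — the content is bookkeeping. The only point requiring a moment of care is making sure the representation $\rho$ in \eqref{eq:puschel-conv-2} is genuinely an algebra homomorphism (not merely a linear map), which is where Lemma \ref{lem:shift-properties} and Lemma \ref{lem:shift-commute} (established just above) do the real work: multiplicativity $\rho([x][z]) = \rho([x\meet z]) = T_{x\meet z} = T_x T_z = \rho([x])\rho([z])$ and commutativity follow from parts (2)--(4) of Lemma \ref{lem:shift-properties}. Given that, the proposition is immediate, and the proof amounts to the single display computation above together with a remark on well-definedness and finiteness of the sums.
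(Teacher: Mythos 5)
Your proposal is correct and follows essentially the same route as the paper: both reduce the claim to the bookkeeping identity $(\rho_h\mathbf{f})_y=\sum_{x}h_x(T_x\mathbf{f})_y=\sum_x h_x f_{x\meet y}=(\mathbf{h}\conv\mathbf{f})_y$, with the substantive point being that $\rho$ is genuinely an algebra homomorphism via $T_{x\meet y}=T_xT_y$. The only difference is one of emphasis — the paper writes out the multiplicativity computation $\rho_{h\cdot h'}=\rho_h\rho_{h'}$ in full, while you delegate it to Lemma \ref{lem:shift-properties} and Lemma \ref{lem:shift-commute}, which is where the paper's computation comes from anyway.
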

\begin{proof}
    Identity a filter $\mathbf{h} \in \field^\poset{S}$ with the coefficients of multinomial
    \begin{align*}
    h &=& \sum_{x \in \poset{S}} h_x [x].
    \end{align*}
    Then, for $h \in \field[\poset{S}]$, $\mathbf{f} \in \field^{\poset{S}}$ define $\rho_h \cdot \mathbf{f}$ by $\mathbf{h} \conv \mathbf{f}$.
    By Lemma \ref{lem:shift-properties}, $\rho_h \in \End(\field^{\poset{S}})$. However, we still need to show that $\rho$ is an algebra homomorphism. Suppose $h, h' \in \field[\poset{S}]$. Then,
    \begin{align*}
        h \cdot h' &=& \left( \sum_{x \in \poset{S}} h_x [x] \right) \cdot \left( \sum_{y \in \poset{S}} h_y' [y] \right) \\
        &=& \sum_{z = x \meet y} h_x h_y' T_z
    \end{align*}
    Lemma \ref{lem:shift-commute} and applying $\rho$ yields
    \begin{align*}
        \rho_{h \cdot h'} &=& \sum_{z = x \meet y} h_x h_y' \rho([z]) \\
                    &=& \sum_{x, y \in \poset{S}} h_x h_y' T_{x \meet y} \\
                    &=& \sum_{x \in \poset{S}} \sum_{y \in \poset{S}} h_x h_y' T_x T_y \\
                    &=& \sum_{x \in \poset{S}} h_x T_x \left( \sum_{x \in \poset{S}}  h_y' T_y \right) \\
                    &=& \rho_{h} \rho_{h'}.
    \end{align*}
    Linearity in $h$ is trivial.
\end{proof}

\subsection{Signals on lattices}

We argue signal processing on lattices is not equivalent to signal processing on semilattices. Suppose $(\lattice{L}, \meet, \join, 0, 1)$ is a bounded lattice. We view $\lattice{L}$ as a bounded semilattice, and thus a monoid, in two seperate ways: a meet-semilattice denoted $(\monoid{L}_\meet, \meet, 1)$ and a join-semilattice $(\monoid{L}_\join, \join, 0)$. However, the absorption identities (Definition \ref{def:lattice-alg}) further restrict the structure of a lattice than merely a poset that is simulatneously a meet- and join- semilattice. For one, there are two families of shift operators, one for each element $x \in \lattice{L}$
\begin{align*}
    (T_x^\join \mathbf{f})_y &=& f_{x \join y} \\
    (T_x^\meet \mathbf{f})_y &=& f_{x \meet y}
\end{align*}
Each family instantiates filters
\begin{align*}
    \rho^\meet &:& \field[\monoid{L}_\meet] \to \End(\field^{\lattice{L}}) \\
    \rho^\join &:& \field[\monoid{L}_\join] \to \End(\field^{\lattice{L}})
\end{align*}
Given a bounded lattice $(\lattice{L}, \meet, \join, 0, 1)$, we investigate the relationship between the algebraic signal models
\begin{align*}
    \mathcal{M}_{\meet} &=& (\field[\monoid{L}_\meet], \field^\lattice{L}, \rho^\meet), \\
    \mathcal{M}_{\join} &=& (\field[\monoid{L}_\join], \field^\lattice{L}, \rho^\join).
\end{align*} 


\begin{proposition} \label{prop:shift-meet-join}
    Suppose $\lattice{L}$ is a lattice with shift operators $\{T^{\meet}_a$, $T^{\join}_a\}_{a \in \lattice{L}}$ defined
    \begin{align*}
        (T^{\meet}_a \mathbf{f})_x &=& f_{x \meet a}, \\  
        (T^{\join}_a \mathbf{f})_x &=& f_{x \join a}.
    \end{align*}
      Then,
    \begin{align*}
        T_x^{\join}T_x^{\meet} &=& T_x^{\meet}T_x^{\join}.
    \end{align*}
    Furthermore,
    \begin{align*}
        T_x^{\join}T_x^{\meet} \mathbf{f} &=& f_x \mathbf{1}
    \end{align*}
    where $\mathbf{1} \in \field^\lattice{L}$ is the constant unit signal.
\end{proposition}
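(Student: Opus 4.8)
The plan is to verify the identity pointwise by unwinding the definitions of the two shift operators, and then observe that the composite collapses to a constant signal. For the commutativity claim, I would fix a signal $\mathbf{f} \in \field^{\lattice{L}}$ and an element $x \in \lattice{L}$, and compute $(T_x^{\join} T_x^{\meet} \mathbf{f})_y$ directly: applying $T_x^{\meet}$ first sends the coordinate at $y$ to $f_{y \meet x}$, and then applying $T_x^{\join}$ evaluates this new signal at $y \join x$, giving $f_{(y \join x) \meet x}$. The key step is the absorption axiom (Definition \ref{def:lattice-alg}, item 4): $(y \join x) \meet x = x$. Symmetrically, $(T_x^{\meet} T_x^{\join} \mathbf{f})_y = f_{(y \meet x) \join x} = x$ by the other absorption identity. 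Both composites therefore send $\mathbf{f}$ to the signal that is constantly $f_x$, establishing both the commutativity $T_x^{\join} T_x^{\meet} = T_x^{\meet} T_x^{\join}$ and the formula $T_x^{\join} T_x^{\meet} \mathbf{f} = f_x \mathbf{1}$ in one stroke.

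Concretely, I would write the computation as a short chain of equalities, roughly:
\begin{align*}
	(T_x^{\join} T_x^{\meet} \mathbf{f})_y &=& (T_x^{\meet}\mathbf{f})_{y \join x} \\
		&=& f_{(y \join x) \meet x} \\
		&=& f_x,
\end{align*}
so that $T_x^{\join} T_x^{\meet} \mathbf{f}$ is the constant signal with every coordinate equal to $f_x$, i.e.\ $f_x \mathbf{1}$. Then the dual computation with the roles of $\meet$ and $\join$ swapped (and the dual absorption law $(y \meet x) \join x = x$) gives $(T_x^{\meet} T_x^{\join} \mathbf{f})_y = f_x$ as well, so the two composites agree.

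I do not anticipate a genuine obstacle here: the statement is essentially a one-line consequence of the absorption axioms, and the only thing to be careful about is bookkeeping the order in which the operators are applied (since $(T_a \mathbf{f})_y = f_{y \meet a}$ evaluates the \emph{input} signal at the shifted index, composition $T_x^{\join} T_x^{\meet}$ means "first $T_x^{\meet}$, then $T_x^{\join}$," and one must not accidentally transpose this). I would also remark that, unlike Lemma \ref{lem:shift-properties}(4) where shifts within a single semilattice commute because meet is commutative and associative, here the commutativity is of a different flavour: it is forced by absorption, and the reason both orders yield the \emph{same} constant is precisely the pair of absorption identities acting in tandem. It may be worth a one-sentence closing comment that this is exactly the phenomenon distinguishing the lattice signal model from the two separate semilattice models $\mathcal{M}_\meet$ and $\mathcal{M}_\join$, motivating the comparison carried out in the remainder of the section.
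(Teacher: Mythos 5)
Your proof is correct and follows the same route as the paper's: unwind both shift operators pointwise and apply the absorption identity $(y \join x) \meet x = x$ (and its dual) to collapse each composite to the constant signal $f_x \mathbf{1}$. The only blemish is a typo in your first paragraph where you wrote $f_{(y \meet x) \join x} = x$ rather than $= f_x$, which you correct later; otherwise the argument matches the paper's.
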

\begin{proof}
    Suppose $\mathbf{f} \in \field^{\lattice{L}}$. The absorption identities (Ch imply
    \begin{align*}
        (T_x^{\join} T_x^{\meet} \mathbf{f})_y &=& (T_x^{\meet} \mathbf{f})_{x \join y} \\
                                               &=& f_{(x \join y) \meet x} \\
                                               &=& f_x.
    \end{align*}
    Clearly, $(T_x^{\meet} T_x^{\join} \mathbf{f})_y = f_x$ as well.
\end{proof}
\begin{corollary}
    $T_x^{\meet} T_x^{\join}$ and $T_x^{\join} T_x^{\meet}$ are idempotent.
\end{corollary}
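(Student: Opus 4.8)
The plan is to derive the corollary directly from Proposition \ref{prop:shift-meet-join}, which we may assume. Recall that an endomorphism $g$ of $\field^{\lattice{L}}$ is idempotent if $g^2 = g$. So it suffices to show that $(T_x^{\meet}T_x^{\join})^2 = T_x^{\meet}T_x^{\join}$ and likewise for $T_x^{\join}T_x^{\meet}$.

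First I would fix $x \in \lattice{L}$ and write $g = T_x^{\join}T_x^{\meet}$. By Proposition \ref{prop:shift-meet-join}, $g\mathbf{f} = f_x \mathbf{1}$ for every $\mathbf{f} \in \field^{\lattice{L}}$, where $\mathbf{1}$ is the constant unit signal. Then I would simply evaluate $g$ on $g\mathbf{f} = f_x \mathbf{1}$: since $g$ applied to any signal $\mathbf{h}$ returns $h_x \mathbf{1}$, and the $x$-component of $f_x\mathbf{1}$ is just $f_x$, we get $g(g\mathbf{f}) = f_x \mathbf{1} = g\mathbf{f}$. Hence $g^2 = g$. The argument for $T_x^{\meet}T_x^{\join}$ is identical: Proposition \ref{prop:shift-meet-join} also gives $(T_x^{\meet}T_x^{\join}\mathbf{f})_y = f_x$ for all $y$, so the same one-line computation applies.

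There is no real obstacle here — the content is entirely in Proposition \ref{prop:shift-meet-join}, and the corollary is just the observation that any operator whose image consists of constant multiples of $\mathbf{1}$ and which reads off the $x$-coordinate is automatically idempotent. The only thing to be careful about is to phrase the computation in terms of the fact that $g\mathbf{f}$ is a scalar multiple of the constant signal $\mathbf{1}$ whose $x$-component equals the scalar, so that a second application of $g$ is the identity on the image; I would state this cleanly rather than re-deriving the absorption-identity calculation that already appears in the proof of the proposition.
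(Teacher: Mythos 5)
Your proof is correct and follows essentially the same route as the paper: both reduce the corollary to Proposition \ref{prop:shift-meet-join}'s identity $T_x^{\join}T_x^{\meet}\mathbf{f} = f_x\mathbf{1}$ and then check that a second application returns the same signal. The only cosmetic difference is that the paper evaluates the second application via $T_x^{\meet}\mathbf{1} = T_x^{\join}\mathbf{1} = \mathbf{1}$ and linearity, whereas you read off the $x$-component of $f_x\mathbf{1}$ directly; either way the computation is the same one line.
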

\begin{proof}
    Notice $T_x^{\meet} \mathbf{1} = \mathbf{1}$ and $T_x^{\join} \mathbf{1} = \mathbf{1}$. Then, by Proposition \ref{prop:shift-meet-join},
    \begin{align*}
         T_x^{\join} T_x^{\meet} T_x^{\join}T_x^{\meet} \mathbf{f} &=& T_x^{\join} T_x^{\meet} f_x \mathbf{1} \\
                                         &=& f_x \mathbf{1}  \\
                                         &=& T_x^{\join}T_x^{\meet} \mathbf{f}
    \end{align*} 
\end{proof}

We will now introduce some of Fourier analysis presented in the original paper \cite{puschel2021discrete}. Suppose $\lattice{L}$ is a bounded lattice. Let
\[1_{\{y \preceq x\}} = \begin{cases}
    1 & y \preceq x \\
    0 & y \not\preceq x
\end{cases}\]
be the indicator function on the boolean variable $\{y \preceq x\}$. Define a signals $\mathbf{f}^{\preceq y} = (1_{\{y \preceq x\}})_{x \in \lattice{L}}$ and $\mathbf{f}^{\succeq y} = (1_{\{y \succeq x \}})_{x \in \lattice{L}}$.

\begin{lemma}[Theorem 1 \cite{puschel2021discrete}] \label{lem:puschel-basis}
    Suppose $\lattice{L}$ is a lattice with shifts $\{T_x^{\meet}, T_x^{\join}\}_{x \in \lattice{L}}$. Then, $T_x^{\meet} \mathbf{f}^{\preceq y} = 1_{y \preceq x} \mathbf{f}^{\preceq y}$ and $T_x^{\join} \mathbf{f}^{\succeq y} = 1_{y \succeq x} \mathbf{f}^{\succeq y}$. Furthermore, $\{ \mathbf{f}^{\preceq y} \}_{y \in \lattice{L}}$ is an eigenbasis of $\field^{\lattice{L}}$ with respect to $T_x^{\meet}$. Dually, $\{ \mathbf{f}^{\succeq y} \}_{y \in \lattice{L}}$ is an eigenbasis of $\field^{\lattice{L}}$ with respect to $T_x^{\join}$.
\end{lemma}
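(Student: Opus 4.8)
The plan is to verify the eigenvector claims by direct computation using the absorption and idempotence laws, then deduce the eigenbasis statement from a dimension/spanning argument. First I would compute $(T_x^\meet \mathbf{f}^{\preceq y})_z$. By definition this equals $f^{\preceq y}_{z \meet x} = 1_{\{y \preceq z \meet x\}}$. The key observation is that $y \preceq z \meet x$ holds if and only if $y \preceq z$ \emph{and} $y \preceq x$ (by the universal property of meets, Lemma \ref{lem:convenience}, applied in the two-element case). Hence $1_{\{y \preceq z \meet x\}} = 1_{\{y \preceq x\}} \cdot 1_{\{y \preceq z\}}$, so $(T_x^\meet \mathbf{f}^{\preceq y})_z = 1_{\{y \preceq x\}} f^{\preceq y}_z$, i.e. $T_x^\meet \mathbf{f}^{\preceq y} = 1_{\{y \preceq x\}} \mathbf{f}^{\preceq y}$. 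This shows $\mathbf{f}^{\preceq y}$ is an eigenvector of $T_x^\meet$ with eigenvalue $1_{\{y\preceq x\}} \in \{0,1\}$. The dual statement for $T_x^\join$ and $\mathbf{f}^{\succeq y}$ follows verbatim by the Duality Principle (replacing $\meet$ with $\join$ and $\preceq$ with $\succeq$), using that $y \succeq z \join x$ iff $y \succeq z$ and $y \succeq x$.

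Next I would argue that $\{\mathbf{f}^{\preceq y}\}_{y \in \lattice{L}}$ is a basis of $\field^{\lattice{L}}$. Since $|\lattice{L}|$ is finite (implicitly, as we are doing linear algebra over $\field^{\lattice{L}}$) and there are exactly $|\lattice{L}|$ such signals, it suffices to show they are linearly independent, or equivalently that they span. The standard route is a triangularity/Möbius-inversion argument: order $\lattice{L}$ by any linear extension of $\preceq$, and observe that the matrix $M$ with $M_{z,y} = f^{\preceq y}_z = 1_{\{y \preceq z\}}$ is (after suitable ordering) triangular with $1$'s on the diagonal, hence invertible; its inverse is given by the Möbius function of $\lattice{L}$. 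Concretely, the delta signal $\boldsymbol{\delta}_z$ can be written as $\sum_{y} \mu(y,z)\, \mathbf{f}^{\preceq y}$ where $\mu$ is the Möbius function of the incidence algebra $\field\lattice{L}$, so the $\mathbf{f}^{\preceq y}$ span. This exhibits the $\mathbf{f}^{\preceq y}$ as an eigenbasis simultaneously diagonalizing all $T_x^\meet$, which is the content of the lemma. The dual assertion for $\mathbf{f}^{\succeq y}$ and $T_x^\join$ is again immediate by duality.

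The main obstacle is the spanning/independence step: the eigenvalue computation is a two-line application of absorption and the meet universal property, but establishing that $\{\mathbf{f}^{\preceq y}\}$ is genuinely a basis requires the triangularity argument and, implicitly, finiteness of $\lattice{L}$ (or at least local finiteness plus a suitable summability convention, as in the incidence algebra $\field\lattice{L}$ discussed earlier). I would state the finiteness hypothesis explicitly if it is not already in force, invoke the invertibility of the zeta matrix of a finite poset (equivalently, existence of the Möbius function in $\field\lattice{P}$), and cite this as the classical fact it is. One subtlety to flag: distinct $y$ may yield the same eigenvalue $1_{\{y \preceq x\}}$ for a fixed $x$, so no single $T_x^\meet$ separates the basis; it is the whole family $\{T_x^\meet\}_{x \in \lattice{L}}$ that is simultaneously diagonalized, and the basis property is what makes this a genuine spectral decomposition. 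I would make sure the statement and proof reflect that the eigenbasis is common to all shifts, not tied to one.
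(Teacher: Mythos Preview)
Your proof is correct. The paper does not actually prove this lemma: its proof reads ``See \cite{puschel2021discrete} for the $T_x^{\meet}$ case. The $T_x^{\join}$ case follows by the duality principle.'' Your argument supplies precisely what the paper defers to the reference---the eigenvector computation via the universal property of meets (Lemma~\ref{lem:convenience}) and the basis claim via triangularity of the zeta matrix---and your invocation of duality for the join case matches the paper's one-line justification. Your explicit flagging of the finiteness hypothesis is appropriate; the paper imposes it shortly after this lemma (``For the remainder of this section, suppose $\lattice{L}$ is finite'').
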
 
\begin{proof}
    See \cite{puschel2021discrete} for the $T_x^{\meet}$ case. The $T_x^{\join}$ cases follows by the duality principle.
\end{proof}

Every basis vector $\mathbf{f}^{\preceq y}$ (dually, $\mathbf{f}^{\succeq y}$) is an eigenvector of $T_{x}^{\meet}$ with eigenvalue $1_{\{x \preceq y\}}$. Similarly, for $T_x^{\join}$. Hence, the only eigenvalues of $T_{x}^{\meet}$ and $T_{x}^{\join}$ are $0$ and $1$. Define a change-of-basis transformation $\theta: \field^{\lattice{L}} \to \field^{\lattice{L}}$ by sending each $\mathbf{f}^{\preceq y} \in \{ \mathbf{f}^{\preceq y}_{y \in \lattice{L}}$ to $\mathbf{f}^{\succeq y} \in \{ \mathbf{f}^{\succeq y}_{y \in \lattice{L}}$. For the remainder of this section, suppose $\lattice{L}$ is finite. Then, we can place a total order on the elements $\lattice{L}$ (e.g. topological sort), labeling elements $\lattice{L} = \{y_1, y_2, \dots, y_m\}$. Now, in the finite case, the linear transformation $\theta$ is represented by an invertible matrix $\Theta = [\theta_{ij}]$ uniquely determined by writing each element of the basis $\{\mathbf{f}^{\succeq y}\}_{y \in \lattice{L}}$ as a linear combination of basis elements $\{\mathbf{f} \}^{\preceq y}_{y \in \lattice{L}}$. 
\begin{align*}
    \mathbf{f}^{\succeq y_i} &=& \sum_{i = 1}^{m} \theta_{i,j} \mathbf{f}^{\preceq y_j}.
\end{align*}

\begin{example}
Suppose $\lattice{L}$ is the lattice
\begin{center}
   \includegraphics[width=0.2\textwidth]{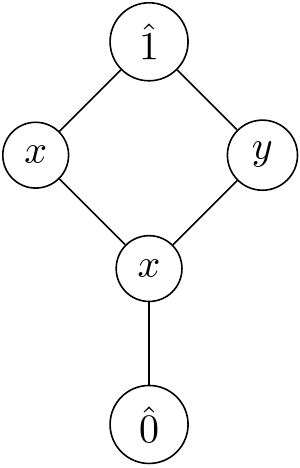} 
\end{center}
We order the elements of $\lattice{L} = \{ \hat{0}, z, x, y, \hat{1} \}$. We write $B_{\meet}$ for the matrix whose columns are the meet-basis vectors
\[B_{\meet} = \begin{bmatrix}
    \mathbf{f}^{\preceq \hat{0}} & \mathbf{f}^{\preceq z} & \mathbf{f}^{\preceq x} & \mathbf{f}^{\preceq y} & \mathbf{f}^{\preceq \hat{1}}
\end{bmatrix} = \begin{bmatrix}
    1 & 1 & 1 & 1 & 1 \\
    0 & 1 & 1 & 1 & 1 \\
    0 & 0 & 1 & 0 & 1 \\
    0 & 0 & 0 & 1 & 1 \\
    0 & 0 & 0 & 0 & 1
\end{bmatrix}, \]
Clearly, $B_{\join} = B_{\meet}^{\top}$ because $[B_\meet]_{i,j} = 1_{\{x_i \preceq x_j\}}$
Then, $\Theta B_{\meet} = B_{\join}$ implies $\Theta = B_{\join} B_{\meet}^{-1}$. In our case, $\theta$ sends the signal
\begin{align*}
    \mathbf{f} &=& \left(f_{\hat{0}}, f_z, f_x, f_y, f_{\hat{1}}\right)
\end{align*}
to the signal
\begin{align*}
    \theta(\mathbf{f}) &=& \left( f_{\hat{0}} - f_z, f_{\hat{0}} - f_x - f_y + f_{\hat{1}}, f_{\hat{0}} - f_y, f_{\hat{0}} - f_x, f_{\hat{0}} \right).
    \end{align*}
\end{example}

We now present an ``impossibility result'' demonstrating an isomorphism between the representations $\rho^{\meet}$ and $\rho^{\join}$ implies the underlying lattice is the trivial lattice $\lattice{L} = \{ 0 = 1\}$.

\begin{theorem}
    Suppose $\lattice{L}$ is a lattice. Then,
    \begin{align}
        \theta \left(T_x^{\meet} \mathbf{f} \right) &=& T_x^{\join} \theta(\mathbf{f}) \label{eq:shift-intertwine}
    \end{align}
    if and only if $\lattice{L} = \{0 = 1\}$.
\end{theorem}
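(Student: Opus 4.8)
The plan is to prove the forward direction by a direct computation showing that the intertwining relation \eqref{eq:shift-intertwine}, when it holds for every signal $\mathbf{f}$ and every $x \in \lattice{L}$, forces $0 = 1$ in $\lattice{L}$; the converse is trivial since on the one-element lattice every linear map on the one-dimensional space $\field^{\lattice{L}}$ intertwines everything. For the forward direction I would exploit the eigenbasis structure from Lemma \ref{lem:puschel-basis}: recall $\{\mathbf{f}^{\preceq y}\}_{y \in \lattice{L}}$ is an eigenbasis for every $T_x^{\meet}$ and $\{\mathbf{f}^{\succeq y}\}_{y \in \lattice{L}}$ is an eigenbasis for every $T_x^{\join}$, and $\theta$ is precisely the change-of-basis sending $\mathbf{f}^{\preceq y} \mapsto \mathbf{f}^{\succeq y}$ (I should double check the paper's convention here — it is defined as sending $\mathbf{f}^{\preceq y}$ to $\mathbf{f}^{\succeq y}$, so $\theta$ carries the $T^{\meet}$-eigenbasis to the $T^{\join}$-eigenbasis term by term).

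The key observation is that under \eqref{eq:shift-intertwine} holding for all $\mathbf{f}$, applying it to $\mathbf{f} = \mathbf{f}^{\preceq y}$ gives
\begin{align*}
\theta\left(T_x^{\meet} \mathbf{f}^{\preceq y}\right) &= 1_{\{y \preceq x\}}\, \theta(\mathbf{f}^{\preceq y}) = 1_{\{y \preceq x\}}\, \mathbf{f}^{\succeq y},
\end{align*}
while the right-hand side is
\begin{align*}
T_x^{\join}\theta(\mathbf{f}^{\preceq y}) &= T_x^{\join} \mathbf{f}^{\succeq y} = 1_{\{y \succeq x\}}\, \mathbf{f}^{\succeq y}.
\end{align*}
Since the $\mathbf{f}^{\succeq y}$ are linearly independent, equality of these two expressions for \emph{all} $x, y \in \lattice{L}$ forces $1_{\{y \preceq x\}} = 1_{\{y \succeq x\}}$ for all $x,y$, i.e. $y \preceq x \iff x \preceq y$ for all $x,y \in \lattice{L}$. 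In particular, taking $x = 0$ (the bottom) and $y$ arbitrary: $y \preceq 0 \iff 0 \preceq y$; the right side is always true, so $y \preceq 0$ for all $y$, which combined with $0 \preceq y$ and antisymmetry gives $y = 0$ for every $y \in \lattice{L}$. Hence $\lattice{L} = \{0 = 1\}$.

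The main obstacle I anticipate is bookkeeping around the definition of $\theta$ and the exact form of the eigenvalue relations in Lemma \ref{lem:puschel-basis} — in particular making sure the indicator $1_{\{y \preceq x\}}$ versus $1_{\{x \preceq y\}}$ is tracked correctly on both sides, since a sign/direction slip would make the argument collapse or prove the wrong thing. A secondary subtlety is that the theorem as stated does not restrict to finite $\lattice{L}$, whereas the matrix description of $\theta$ given just before was under a finiteness hypothesis; I would therefore phrase the argument purely in terms of the eigenbasis action (which Lemma \ref{lem:puschel-basis} gives without finiteness) rather than invoking the matrix $\Theta$, so the proof goes through for arbitrary bounded lattices. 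If one wants to avoid even relying on $\theta$ being defined on an infinite-dimensional space, one can instead argue contrapositively: if $\lattice{L} \neq \{0=1\}$ pick any $y \neq 0$ and test \eqref{eq:shift-intertwine} on $\mathbf{f}^{\preceq y}$ with $x = 0$ to get a contradiction directly, never needing $\theta$ globally.
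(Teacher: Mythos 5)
Your proposal is correct and follows essentially the same route as the paper: apply the intertwining identity to the eigenbasis $\{\mathbf{f}^{\preceq y}\}$ from Lemma \ref{lem:puschel-basis}, use linear independence of $\{\mathbf{f}^{\succeq y}\}$ to force $1_{\{y \preceq x\}} = 1_{\{y \succeq x\}}$ for all $x,y$, and conclude the lattice collapses. The only differences are cosmetic — you test on individual basis vectors where the paper expands a general $\mathbf{f}$, and you spell out the final collapse step (via $x=0$) that the paper merely asserts.
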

\begin{proof}
    One direction is trivial. Suppose \eqref{eq:shift-intertwine} holds. Then, by Lemma \ref{lem:puschel-basis}, we can write $\mathbf{f} = \sum_{y \in \lattice{L}} h_y \mathbf{f}^{\preceq y}$ with $h_y \in \field$. On one hand,
    \begin{align*}
        \theta \left(T_x^{\meet} \mathbf{f} \right) &=& \theta T_x^{\meet} \left( \sum_{y \in \lattice{L}} h_y \mathbf{f}^{\preceq y} \right)  \\
            &=& \sum_{y \in \lattice{L}} h_y \theta T_x^{\meet y} \mathbf{f}^{\preceq y}  \\
            &=& \sum_{y \in \lattice{L}} h_y 1_{y \preceq x} \theta(\mathbf{f}^{\preceq y}) \\
            &=& \sum_{y \in \lattice{L}} h_y 1_{y \preceq x} \mathbf{f}^{\succeq y}
    \end{align*}
    On another hand,
    \begin{align*}
        T_x^{\join} \theta(\mathbf{f}) &=& T_x^{\join} \theta \left( \sum_{y \in \lattice{L}} h_y \mathbf{f}^{\preceq y} \right) \\
        &=& \sum_{y \in \lattice{L}} h_y T_x^{\join} \theta ( \mathbf{f}^{\preceq y}) \\
        &=& \sum_{y \in \lattice{L}} h_y T_x^{\join} \mathbf{f}^{\succeq y} \\
        &=& \sum_{y \in \lattice{L}} h_y 1_{y \succeq x} \textbf{f}^{\succeq y}
    \end{align*}
    Therefore, it must hold that, given a fixed $x \in \lattice{L}$, $y \preceq x$ if and only if $y \succeq x$. This is only possible if $\lattice{L} = \{ \hat{0} = \hat{1}\}$ is a one-element lattice.
\end{proof}


\noindent We provide two examples of signals over lattices and their convolutions.

\begin{example}[Encodings]
    Given an element of a lattice $x \in \lattice{L}$ we can encode it as a signal $\mathbf{1}_x: \lattice{L} \to \R$ with
    \[\mathbf{1}_x (y) =
    \begin{cases}
        1 & x = y \\
        0 & x \neq y
    \end{cases}.\]
    In the signal-processing literature, this is called a \define{one-hot} encoding since exactly one component of the signal is ``hot,'' while the rest are ``cold.'' In practice, noise is sometimes added to one-hot encodings for numerical stability. The one-hot encoding is useful for applying continuous information processing techniques to lattices (algebraic structures). For instance, if $\lattice{L}$ is finite, $\lattice{L} = \{ \hat{0} = y_1, y_2, \dots, y_m = \hat{1} \}$ is an total ordering of the elements of $\lattice{L}$, and $\bisheaf{F}$ is an approximation of the constant sheaf $\underline{\lattice{L}}$ over $\graph{G}$, then the Tarski Laplacian can be identifies as a map $\lattice{L}: \R^{n \times m} \to \R^{n \times m}$. In general, the Tarski Laplacian of a lattice-valued sheaf whose stalks are finite lattices can be expressed as a map \[\Laplacian: \R^{\sum_{i \in \nodes{G}} |\bisheaf{F}(i)|} \to \R^{\sum_{i \in \nodes{G}} |\bisheaf{F}(i)|}\] 
    More efficient encodings of lattice elements is a subject of interest. For instance, if a lattice is finite, then every element is a join of join irreducible elements. Hence, you could represent an element of a finite lattice with a few-hot encoding, however, not uniquely.  
\end{example}
\begin{example}[Hypergraph Convolution]
    With mimicking graph signal processing \cite{zhang2019introducing} as one approach to filtering signals on hypergraphs, lattice convolution, as was suggested in the introductory paper \cite{puschel2021discrete}, is another approach. Suppose $\graph{H} = (\nodes{H}, \hyperedges{H})$ is a hypergraph. Recall, a hypergraph signal is a map $\mathbf{f}: \nodes{H} \to \R$. Suppose $\rel{I}$ is the incidence relation of $\graph{H}$. Then, the Galois lattice is the following
    \begin{align*}
        \Gal(\rel{I}) &=& \{ (\sigma, \tau) \subseteq \nodes{H} \times \hyperedges{H}~\vert~ \cap_{i \in \sigma} \galup{i} = \tau, \sigma = \cap_{e \in \tau} \galdown{e} \},
    \end{align*}
    which we label $\Gal(\graph{H})$ in this context. Recall, $\Gal(\graph{H})$ is isomorphic to the following fixed-point lattices
    \begin{align*}
        \Closed(\powerset{\nodes{H}}) &=& \{\sigma \subseteq \nodes{H}~\vert~\galdown{\galup{\sigma}} = \sigma \} &\subseteq& \powerset{\nodes{H}}, \\
        \Closed(\powerset{\hyperedges{H}}) &=& \{\tau \subseteq \hyperedges{H}~\vert~\galup{\galdown{\tau}} = \tau \} &\subseteq& \powerset{\hyperedges{H}}.
    \end{align*}
    Signals on $\graph{H} = (\nodes{H}, \hyperedges{H})$ are ported to signals on $\Gal(\graph{H})$ via aggregations. We hypothesize maximum is a good choice, because the hypergraph diffusion energy functional \cite{chan2018spectral} is of the form
    \begin{align*}
        Q(\mathbf{f}) &=& \sum_{e \in \hyperedges{H}} \max_{i,j \in e}(f_i - f_j)^2.
    \end{align*}

     \begin{figure}
        \centering
        \tikzset{every picture/.style={line width=0.75pt}} 

\begin{tikzpicture}[x=0.75pt,y=0.75pt,yscale=-1,xscale=1]

\draw   (99.5,155) .. controls (99.5,141.19) and (110.69,130) .. (124.5,130) .. controls (138.31,130) and (149.5,141.19) .. (149.5,155) .. controls (149.5,168.81) and (138.31,180) .. (124.5,180) .. controls (110.69,180) and (99.5,168.81) .. (99.5,155) -- cycle ;
\draw   (115,125.5) .. controls (115,111.69) and (126.19,100.5) .. (140,100.5) .. controls (153.81,100.5) and (165,111.69) .. (165,125.5) .. controls (165,139.31) and (153.81,150.5) .. (140,150.5) .. controls (126.19,150.5) and (115,139.31) .. (115,125.5) -- cycle ;
\draw   (131.5,154.86) .. controls (131.5,141.05) and (142.69,129.86) .. (156.5,129.86) .. controls (170.31,129.86) and (181.5,141.05) .. (181.5,154.86) .. controls (181.5,168.66) and (170.31,179.86) .. (156.5,179.86) .. controls (142.69,179.86) and (131.5,168.66) .. (131.5,154.86) -- cycle ;
\draw  [fill={rgb, 255:red, 0; green, 0; blue, 0 }  ,fill opacity=1 ] (137.17,160.31) .. controls (137.17,158.79) and (138.4,157.56) .. (139.92,157.56) .. controls (141.44,157.56) and (142.67,158.79) .. (142.67,160.31) .. controls (142.67,161.83) and (141.44,163.06) .. (139.92,163.06) .. controls (138.4,163.06) and (137.17,161.83) .. (137.17,160.31) -- cycle ;
\draw   (204.36,155.29) .. controls (204.36,141.48) and (215.55,130.29) .. (229.36,130.29) .. controls (243.16,130.29) and (254.36,141.48) .. (254.36,155.29) .. controls (254.36,169.09) and (243.16,180.29) .. (229.36,180.29) .. controls (215.55,180.29) and (204.36,169.09) .. (204.36,155.29) -- cycle ;
\draw   (219.86,125.79) .. controls (219.86,111.98) and (231.05,100.79) .. (244.86,100.79) .. controls (258.66,100.79) and (269.86,111.98) .. (269.86,125.79) .. controls (269.86,139.59) and (258.66,150.79) .. (244.86,150.79) .. controls (231.05,150.79) and (219.86,139.59) .. (219.86,125.79) -- cycle ;
\draw   (236.36,155.14) .. controls (236.36,141.34) and (247.55,130.14) .. (261.36,130.14) .. controls (275.16,130.14) and (286.36,141.34) .. (286.36,155.14) .. controls (286.36,168.95) and (275.16,180.14) .. (261.36,180.14) .. controls (247.55,180.14) and (236.36,168.95) .. (236.36,155.14) -- cycle ;
\draw  [fill={rgb, 255:red, 0; green, 0; blue, 0 }  ,fill opacity=1 ] (117.74,155.74) .. controls (117.74,154.22) and (118.98,152.99) .. (120.49,152.99) .. controls (122.01,152.99) and (123.24,154.22) .. (123.24,155.74) .. controls (123.24,157.26) and (122.01,158.49) .. (120.49,158.49) .. controls (118.98,158.49) and (117.74,157.26) .. (117.74,155.74) -- cycle ;
\draw  [fill={rgb, 255:red, 0; green, 0; blue, 0 }  ,fill opacity=1 ] (157.74,155.74) .. controls (157.74,154.22) and (158.98,152.99) .. (160.49,152.99) .. controls (162.01,152.99) and (163.24,154.22) .. (163.24,155.74) .. controls (163.24,157.26) and (162.01,158.49) .. (160.49,158.49) .. controls (158.98,158.49) and (157.74,157.26) .. (157.74,155.74) -- cycle ;
\draw  [fill={rgb, 255:red, 0; green, 0; blue, 0 }  ,fill opacity=1 ] (137.46,118.6) .. controls (137.46,117.08) and (138.69,115.85) .. (140.21,115.85) .. controls (141.73,115.85) and (142.96,117.08) .. (142.96,118.6) .. controls (142.96,120.12) and (141.73,121.35) .. (140.21,121.35) .. controls (138.69,121.35) and (137.46,120.12) .. (137.46,118.6) -- cycle ;
\draw  [fill={rgb, 255:red, 0; green, 0; blue, 0 }  ,fill opacity=1 ] (242.03,120.6) .. controls (242.03,119.08) and (243.26,117.85) .. (244.78,117.85) .. controls (246.3,117.85) and (247.53,119.08) .. (247.53,120.6) .. controls (247.53,122.12) and (246.3,123.35) .. (244.78,123.35) .. controls (243.26,123.35) and (242.03,122.12) .. (242.03,120.6) -- cycle ;
\draw  [fill={rgb, 255:red, 0; green, 0; blue, 0 }  ,fill opacity=1 ] (220.03,159.17) .. controls (220.03,157.65) and (221.26,156.42) .. (222.78,156.42) .. controls (224.3,156.42) and (225.53,157.65) .. (225.53,159.17) .. controls (225.53,160.69) and (224.3,161.92) .. (222.78,161.92) .. controls (221.26,161.92) and (220.03,160.69) .. (220.03,159.17) -- cycle ;
\draw  [fill={rgb, 255:red, 0; green, 0; blue, 0 }  ,fill opacity=1 ] (264.32,159.17) .. controls (264.32,157.65) and (265.55,156.42) .. (267.07,156.42) .. controls (268.59,156.42) and (269.82,157.65) .. (269.82,159.17) .. controls (269.82,160.69) and (268.59,161.92) .. (267.07,161.92) .. controls (265.55,161.92) and (264.32,160.69) .. (264.32,159.17) -- cycle ;
\draw  [fill={rgb, 255:red, 0; green, 0; blue, 0 }  ,fill opacity=1 ] (241.59,145.01) .. controls (241.59,143.49) and (242.82,142.26) .. (244.34,142.26) .. controls (245.86,142.26) and (247.09,143.49) .. (247.09,145.01) .. controls (247.09,146.52) and (245.86,147.76) .. (244.34,147.76) .. controls (242.82,147.76) and (241.59,146.52) .. (241.59,145.01) -- cycle ;
\draw  [fill={rgb, 255:red, 0; green, 0; blue, 0 }  ,fill opacity=1 ] (150.32,138.6) .. controls (150.32,137.08) and (151.55,135.85) .. (153.07,135.85) .. controls (154.59,135.85) and (155.82,137.08) .. (155.82,138.6) .. controls (155.82,140.12) and (154.59,141.35) .. (153.07,141.35) .. controls (151.55,141.35) and (150.32,140.12) .. (150.32,138.6) -- cycle ;
\draw  [fill={rgb, 255:red, 0; green, 0; blue, 0 }  ,fill opacity=1 ] (126.03,138.6) .. controls (126.03,137.08) and (127.26,135.85) .. (128.78,135.85) .. controls (130.3,135.85) and (131.53,137.08) .. (131.53,138.6) .. controls (131.53,140.12) and (130.3,141.35) .. (128.78,141.35) .. controls (127.26,141.35) and (126.03,140.12) .. (126.03,138.6) -- cycle ;

\end{tikzpicture} \\
\[\begin{tikzcd}
	& \bullet &&& \bullet \\
	\bullet & \bullet & \bullet & \bullet & \bullet & \bullet \\
	\bullet & \bullet & \bullet && \bullet \\
	& \bullet &&& \bullet
	\arrow[from=4-2, to=3-1, no head]
	\arrow[from=4-2, to=3-2, no head]
	\arrow[from=4-2, to=3-3, no head]
	\arrow[from=3-1, to=2-2, no head]
	\arrow[from=3-2, to=2-1, no head]
	\arrow[from=3-2, to=2-3, no head]
	\arrow[from=3-3, to=2-3, no head]
	\arrow[from=3-3, to=2-2, no head]
	\arrow[from=3-1, to=2-1, no head]
	\arrow[from=2-1, to=1-2, no head]
	\arrow[from=2-2, to=1-2, no head]
	\arrow[from=2-3, to=1-2, no head]
	\arrow[from=4-5, to=3-5, no head]
	\arrow[from=3-5, to=2-4, no head]
	\arrow[from=3-5, to=2-5, no head]
	\arrow[from=3-5, to=2-6, no head]
	\arrow[from=2-4, to=1-5, no head]
	\arrow[from=2-5, to=1-5, no head]
	\arrow[from=2-6, to=1-5, no head]
\end{tikzcd}\]
        \caption{Hypergraphs and their Galois lattices.} \label{fig:hypergraphs}
     \end{figure}
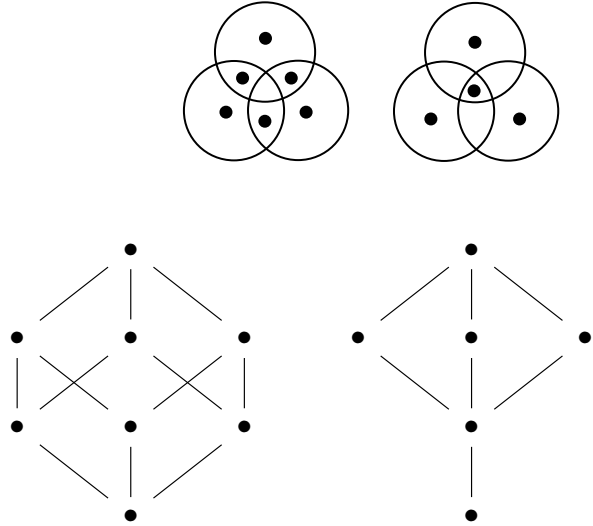
\end{example}

\subsection{Signals on posets}

If $\Omega$ is a finite poset $\poset{P}$ and $\channels$ is again the real or complex number, then one may port the signal processing framework on lattices to a new framework on finite posets under Birkhoff Duality (Theorem \ref{thm:birkhoff}). In one direction, given a finite poset $\poset{P}$ there is an isomorphism
\begin{align}
    \eta &:& \poset{P} \xrightarrow{\cong} \irred{\downsets{\poset{P}}} \label{eq:birkhoff-iso} \\
         \eta(p) &=& \downset{p} \nonumber
 \end{align}
which implies the join-irreducibles of the distributive lattice $\downsets{\poset{P}}$ are in one-to-one correspondence with the elements of $\poset{P}$. Hence, each $p \in \poset{P}$ corresponds to a unique join-irreducible $\downset p$. This implies the monoid algebra $\field[\downsets{\poset{P}}_{\join}]$ has a generating set $\mathcal{G} = \{[\downarrow p] \}_{p \in \poset{P}}$. Hence, the shift operators in the representation framework are given by the images of $g \in \mathcal{G}$ under the representation
\begin{align*}
    \rho_{\join} &:& \field\left[\mathcal{G}\right] \to \End\left(\field^{\downsets{\poset{P}}}\right).
\end{align*}

Recent work, takes a similar approach and generalizes (semi)lattice signal processing to signal processing on directed acyclic graphs (DAGs) \cite{seifert2022causal}. The authors of this work do not, however, construct convolutions via Birkhoff Duality as we suggest.

\subsection{Lattice-valued signals}

On the other hand, suppose $\domain$ is an abelian group (e.g. $\Z$ or $\Z/n\Z$) and $\channels$ is a lattice. A lattice-valued signal, then, is a map $\mathbf{f}: A \to \lattice{L}$. Lattice-valued signals are a natural taxum of signal to study. For instance, if $\channels$ is the lattice of partitions of the nodes of a dynamic graph, a signal $f \in \signals(\Z, \Part{\nodes{G}})$ would model the merging and splitting of connected components of a network as connections between nodes are established/disconnected \cite{kim2018formigrams}.

\begin{definition}[Lattice-Valued Convolution \cite{maragos2017dynamical}]
    Suppose $(\resid{L}, \meet, \join, 0, 1, \star, e, \to)$ is a residuated lattice and $(A, +, 0)$ an abelian group. Define the following \define{sup-convolution}, $f, h \in \signals(A,\lattice{L})$, 
    \begin{align}
        (h \conv f)(x) &=& \bigjoin_{s \in A} h(s) \star f(t - s).
    \end{align}
\end{definition}

\noindent Here is an example leading to nonlinear filters in the time domain.

\begin{example}[1-D Max-Plus Convolution ]
    Suppose $\domain$ is $\Z$ and $\channels$ is the lattice $\Rext$. If $h, f: \Z \to \Rext$ are signals, then 
    \begin{align*}
        (h \conv f)(t) &=& \max_{s \in \Z} h(s) + f(t-s).
    \end{align*}
    Of course, $\channels$ could have multiple channels (i.e.~$\channels = \Rext^d$) in which case each channel can be processed in parallel.
\end{example}

Let $\alpha \in \resid{L}$ be an element of a residuated lattice. If $f \in \signals(\Omega, \resid{L})$ is a residuated-lattice -valued signal over some domain $\domain$, we define a \define{scalar multiplication} of a signal $f$ by an $a \in \lattice{L}$
\begin{align*}
    (a \star f)(t) &=& a \star f(t) \quad \forall t \in \domain.
\end{align*}
If $f : \Z \to \channels$ is a signal, then its domain translation is $T_sf: \Z \to \channels$ given by $T_sf(t) = f(t-s)$.
We say an operator $D: \lattice{L}^\Z \to \lattice{L}^\Z$ is \define{dilation-translation invariat} if the following conditions are satisfied
\begin{align*}
    D \left(\bigjoin_{i \in I} f_i \right) &=& \bigjoin_{i \in I} D(f_i) \\
    D(a \star f) &=& a \star D(f) \\
    T_s D(f) &=& D(T_s f)
\end{align*}
for all $f \in \resid{L}^\Z$, $a \in \resid{L}$, $s \in \Z$. The first requirement states that $D$ is a join-preserving map with respect the lattice structure on $\resid{L}^\Z$. The second and third specify an equivariance property.

The following is the closest to an ``algebraic result'' for this flavor of signal processing.

\begin{theorem}[Theorem 3 \cite{maragos2009morphological}]
    An operator $D: \lattice{L}^\Z \to \lattice{L}^\Z$ is DTI dilation-translation invariant if and only if every signal $f \in \resid{L}^\Z$ is expressed as
    \begin{align*}
        f(t) &=& \bigjoin_{s \in \Z} f(s) \star D(q)(t-s)
    \end{align*}
    where $q: \Z \to \lattice{L}$ is the signal called the \define{impulse} defined
    \[ q(t) = \begin{cases}
        e & t = 0 \\
        0 & t \neq 0
    \end{cases}\]
\end{theorem}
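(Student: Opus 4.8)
The plan is to prove the two implications separately, treating the ``if'' direction as the substantive one. First I would unwind the definitions. An operator $D: \resid{L}^{\Z} \to \resid{L}^{\Z}$ is DTI precisely when it is join-preserving, commutes with scalar multiplication $a \star (-)$, and commutes with all domain translations $T_s$. The impulse $q$ has the property that $T_s q$ is the signal supported at $s$ with value $e$ there, so that \emph{any} signal $f$ decomposes as $f = \bigjoin_{s \in \Z} f(s) \star T_s q$: indeed, evaluating the right side at $t$ gives $\bigjoin_{s} f(s) \star (T_s q)(t) = f(t) \star e = f(t)$ since only the $s = t$ term is nonzero (using $a \star 0 = 0$, which holds by Lemma \ref{lem:resid}(1) with the empty join, and $a \star e = a$). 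This identity is the crucial ``atomic decomposition'' of an arbitrary signal into shifted, scaled impulses.

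For the ``if'' direction, assume $D$ is DTI. Starting from $f = \bigjoin_{s \in \Z} f(s) \star T_s q$, I apply $D$ and push it through the three invariance properties in turn:
\begin{align*}
    D(f) &= D\left( \bigjoin_{s \in \Z} f(s) \star T_s q \right) \\
         &= \bigjoin_{s \in \Z} D\left( f(s) \star T_s q \right) \\
         &= \bigjoin_{s \in \Z} f(s) \star D(T_s q) \\
         &= \bigjoin_{s \in \Z} f(s) \star T_s D(q),
\end{align*}
where the second line uses that $D$ is join-preserving, the third that $D$ commutes with scalar multiplication, and the fourth that $D$ commutes with translation. Evaluating at $t$ and writing $T_s D(q)(t) = D(q)(t-s)$ gives exactly the claimed formula $D(f)(t) = \bigjoin_{s \in \Z} f(s) \star D(q)(t-s)$.

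For the converse, suppose every signal satisfies $f(t) = \bigjoin_{s \in \Z} f(s) \star D(q)(t-s)$, i.e.\ $D$ acts by sup-convolution against the fixed kernel $g := D(q)$, so $D(f) = g \conv f$ in the notation of the lattice-valued convolution. Then join-preservation follows from Lemma \ref{lem:resid}(1) ($\star$ distributes over arbitrary joins in its left argument, hence $g \conv (\bigjoin_i f_i) = \bigjoin_i (g \conv f_i)$); commutation with scalar multiplication $a \star (-)$ follows from associativity and commutativity of the monoid operation $\star$; and translation-equivariance follows because sup-convolution against a fixed kernel is manifestly a convolution over the abelian group $(A, +, 0) = (\Z, +, 0)$, so $T_s(g \conv f) = g \conv (T_s f)$ by a change of summation variable. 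Hence $D$ is DTI.

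The main obstacle is purely bookkeeping: making sure the atomic decomposition $f = \bigjoin_s f(s) \star T_s q$ is justified carefully (in particular that $a \star 0 = 0$ and $a \star e = a$, and that infinite joins behave as claimed in the complete residuated lattice $\resid{L}$), and that the interchange of $D$ with the infinite join is exactly the join-preservation hypothesis rather than something stronger. No step is deep; the content is entirely in recognizing that DTI operators are exactly sup-convolution operators, with $D(q)$ playing the role of the kernel (``impulse response''), in direct analogy with the classical characterization of linear time-invariant systems.
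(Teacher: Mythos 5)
Your proposal is correct, and it is exactly the classical impulse-response argument: decompose $f$ as $\bigjoin_{s\in\Z} f(s)\star T_s q$, push $D$ through the join, the scalar action, and the translations in turn, and observe conversely that sup-convolution against any fixed kernel enjoys all three invariances. The paper itself offers no proof of this statement--it is attributed to Maragos and left as a citation--so there is no in-text argument to compare against; your write-up supplies precisely the proof the cited source gives, and it is complete modulo the bookkeeping you already flag (the atomic decomposition needs $a\star 0=0$ and $a\star e=a$, both of which you justify correctly via the empty join in Lemma \ref{lem:resid} and the monoid unit).

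Two small remarks. First, the theorem as printed has $f(t)$ on the left-hand side where $D(f)(t)$ is clearly intended; you silently make this correction, which is the right reading (otherwise the identity operator would be forced), but it is worth saying so explicitly. Second, in the converse direction you attribute the identity $g\conv(a\star f)=a\star(g\conv f)$ to associativity and commutativity of $\star$ alone; you also need $a\star(-)$ to distribute over the infinite join $\bigjoin_{s}$, which is again Lemma \ref{lem:resid}(1) (via commutativity), the same fact you invoke for join-preservation. Neither point affects the validity of the argument.
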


\noindent Our contribution here is the following definition.

\begin{definition}[\pusmar Convolution]
Suppose $\poset{P}$ is a finite distributed lattice and $\resid{L}$ is a residuated lattice. Then, the \pusmar convolution of $h,f: \poset{P} \to \resid{L}$ is the map
\begin{align*}
    h \conv f &:& \poset{P} \to \resid{L} \\
    (h \conv f)(x) &=& \bigjoin_{y \in \lattice{K}} h(y) \star f(x \meet y)
\end{align*}
\end{definition}

\begin{theorem}
    Suppose $f: \poset{P} \to \resid{L}$ is join-preserving. Then, $h \conv f: \poset{P} \to \resid{L}$ is join-preserving, also. Furthermore, the map
    \begin{align*}
        - \conv f &:& \resid{L}^\poset{P} \to \resid{L}^\poset{P}, \\
        (h \conv f)(x) &=& \bigjoin_{y \in \poset{P}} f(x \meet y) \ast h(y) 
    \end{align*}
    has a right adjoint
    \begin{align*}
        - \conv' f &:& \resid{L}^\poset{P} \to \resid{L}^\poset{P}, \\
        (h \conv' f)(y) &=& \bigmeet_{x \in \poset{P}} f(x \meet y) \to h(y).
    \end{align*}
\end{theorem}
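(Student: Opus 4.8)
The plan is to establish the two claims separately but by the same mechanism: both reduce to Lemma \ref{lem:resid} (the distributivity of $\star$ over joins and of the residual over meets) combined with the Adjoint Functor Theorem \ref{thm:adjoint-functor-theorem}. First I would show that $h \conv f$ is join-preserving whenever $f$ is. Fix a family $\{h_i\}_{i \in I} \subseteq \resid{L}^{\poset{P}}$ and compute
\begin{align*}
\left( \left(\bigjoin_i h_i\right) \conv f \right)(x) &=& \bigjoin_{y \in \poset{P}} f(x \meet y) \star \left( \bigjoin_i h_i(y) \right) \\
&=& \bigjoin_{y \in \poset{P}} \bigjoin_i \left( f(x \meet y) \star h_i(y) \right) \\
&=& \bigjoin_i (h_i \conv f)(x),
\end{align*}
using part (1) of Lemma \ref{lem:resid} for the middle equality and reindexing the double join for the last. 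This is the routine half. Actually, the role of ``$f$ join-preserving'' only enters when one wants $h \conv f$ itself (as a function of $x$) to be join-preserving in $x$: for $S \subseteq \poset{P}$,
\begin{align*}
(h \conv f)\left(\bigjoin S\right) &=& \bigjoin_{y \in \poset{P}} f\left(\left(\bigjoin S\right) \meet y\right) \star h(y).
\end{align*}
Here I would use that meet distributes over join in a (distributive, or at least as needed) lattice, $\left(\bigjoin S\right) \meet y = \bigjoin_{s \in S}(s \meet y)$, together with $f$ join-preserving to pull $f$ through, and then part (1) of Lemma \ref{lem:resid} once more to pull $\star h(y)$ through; interchanging the two joins gives $\bigjoin_{s \in S}(h \conv f)(s)$.

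For the adjoint, the strategy is exactly the proof template of Theorem \ref{thm:kernels}: having shown $-\conv f$ is join-preserving on the complete lattice $\resid{L}^{\poset{P}}$, Theorem \ref{thm:adjoint-functor-theorem} guarantees a right adjoint given by $(-\conv f)^{\ast}(g) = \bigjoin \{ h~\vert~ h \conv f \preceq g \}$, and the content is to identify this supremum with the claimed formula $(g \conv' f)(y) = \bigmeet_{x \in \poset{P}} \left[ f(x \meet y), g(y) \right]$ (writing the residual $[-,-]$ for ``$\to$''). Since the order on $\resid{L}^{\poset{P}}$ is pointwise, $h \conv f \preceq g$ means $\bigjoin_{y} f(x \meet y) \star h(y) \preceq g(x)$ for every $x$; by Lemma \ref{lem:convenience} this is equivalent to $f(x \meet y) \star h(y) \preceq g(x)$ for all $x,y$, which by the defining adjunction \eqref{eq:heyting} of the residual is equivalent to $h(y) \preceq [f(x \meet y), g(x)]$ for all $x,y$, i.e.\ $h(y) \preceq \bigmeet_{x} [f(x \meet y), g(x)]$. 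Taking the greatest such $h$ pointwise yields the stated $g \conv' f$.

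The main obstacle I anticipate is a mismatch in the indices of the residual that appears in the theorem statement: the displayed formula has $[f(x \meet y), h(y)]$ (a residual against $h(y)$, not against the argument $g(x)$ of the transform), and likewise the transform $-\conv f$ in the theorem is written with $h$ as the signal being acted on rather than $g$. Before grinding the calculation I would reconcile this by fixing the naming so that the variable being transformed (call it $h$) is consistent between the map and its adjoint, then the derivation above goes through verbatim with $g$ in place of the spurious $h(y)$. A secondary point worth a sentence: the use of $\left(\bigjoin S\right)\meet y = \bigjoin_{s}(s \meet y)$ presupposes the appropriate infinite distributive law; in a complete residuated lattice this holds automatically because $-\meet y = -\star' y$ for a suitable residuated structure, or one simply invokes that in $\resid{L}$ the map $-\meet y$ has an upper adjoint (Heyting implication) and hence preserves joins — I would cite Theorem \ref{thm:adjoint-functor-theorem} for this rather than re-proving it.
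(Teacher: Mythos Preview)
Your approach is correct and is essentially the paper's: the paper's entire proof is the single line ``Follows from Theorem~\ref{thm:kernels},'' i.e.\ the integral-transform Galois connection with kernel $H(x,y)=f(x\meet y)$, and you have simply unpacked that argument rather than cited it. Two remarks in your favor: (i) your explicit treatment of the first claim (join-preservation of $x\mapsto (h\conv f)(x)$, using distributivity of $\poset{P}$ and that $f$ preserves joins) is a genuine addition, since Theorem~\ref{thm:kernels} alone only yields join-preservation in the $h$-variable; and (ii) your suspicion about the index in the residual is well-founded --- the correct right adjoint should read $\bigmeet_{x}[f(x\meet y),\,g(x)]$, matching your derivation, and the displayed $h(y)$ in the statement (and the analogous $g(y)$ in Theorem~\ref{thm:kernels}) is a typo.
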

\begin{proof}
    Follows from Theorem \ref{thm:kernels}.
\end{proof}

\section{Convolutional Neural Networks}

Convolution neural networks (neural networks in general) consist of a series of layers. Each layer $\ell$  has an input signal $\textbf{f}^{(\ell-1)}$ and an output signal $\textbf{f}^{(\ell)}$ so that the output to layer $\ell$ is the input of layer $\ell+1$. The input layer is $\mathbf{x} = \mathbf{f}^{(0)}$; the output layer is denoted $\mathbf{y}$. In each layer, the following operations may be performed.
\begin{enumerate}
    \item Filter of the input $\textbf{h}^{\ell}$ signal with a bank of filters.
    \item Aggregate filtered signals.
    \item Apply a nonlinearity $\sigma_{\ell}$ (e.g.~agregation,ReLU or sigmoid).
    \item (Optional) Pooling.
\end{enumerate}
In our view, convolutional neural networks fall under the guise of signal processing.

\subsection{Lattice neural networks}

Convolutional layers in a lattice convolutional neural network (L-CNN) have two flavors. Meet- and join- convolutional layers. Fix a finite lattice $\lattice{L}$, and choose an (arbitrary) total order on the elements $\lattice{L} = \{\hat{0} = y_1, x_2, \dots, x_m = \hat{1}\}$. Hence, we can identify a lattice signal $f: \lattice{L} \to \R$ as a vector $\mathbf{f} \in \R^m$. For layer $\ell$, let $F_{\ell}$ denote the number of features in that layer. Each feature has an associated lattice signal, stored in a matrix \[\mathbf{F}^{(\ell)} = \{\mathbf{f}_i\}_{i = 1}^{F_{\ell-1}} \in \R^{m \times F_{\ell}}.\] Similarly, filters are stored in a tensor \[\mathbf{H}^{(
\ell)} = \{\mathbf{h}_{i,j}^{(\ell)}\}_{\substack{i=1,\dots, F_{\ell-1} \\ j = 1, \dots, F_{\ell}}} \in \R^{m \times F_{\ell-1} \times F_{\ell}}.\] Then, a meet-convolution layer is
\begin{align*}
    \mathbf{F}^{\ell}_j &=& \sigma_{\ell} \left( \sum_{i = 1}^{F_{\ell-1}} \mathbf{h}_{i,j}^{(\ell)} \conv_{\meet} \mathbf{f}_j  \right) \\
    j &\in& \{1,2,\dots,F_{\ell} \},
\end{align*}
while a join-convolutional layer is
\begin{align*}
    \mathbf{F}^{\ell}_j &=& \sigma_{\ell} \left( \sum_{i = 1}^{F_{\ell-1}} \mathbf{h}_{i,j}^{(\ell)} \conv_{\join} \mathbf{f}_j  \right) \\
    j &\in& \{1,2,\dots,F_{\ell} \}.
\end{align*}
Either choice consists of lattice convolution by filters $\mathbf{h}_i^j$, aggregation (sum), followed by nonlinearity $\sigma_{\ell}$. A note on implementation, since signals in every layer are supported on the same lattice, a Fourier basis $\{f^{\leq y_i}\}_{i = 1, \dots, m}$ suffices to write every convolution operation as a matrix multiplication.
In practice, you may implement a L-CNN with a few fully-connected (MLP \cite{goodfellow2016deep}) layers, for example, in a classifier. 

\begin{example}[Point Cloud Classification]
    A point cloud $\Space{Y}$ is a finite subset of $\R^d$. For instance, point clouds can be collected from LiDAR (Light Detection and Ranging) systems. There is a well-known pipeline to ``approximate'' $\Space{Y}$ as a simplicial complex. Given resolution $\epsilon > 0$, define the \define{Vietoris-Rips complex}
    \begin{align*}
        \mathrm{VR}_{\epsilon}(\Space{Y}) &=& \{ \sigma \in \powerset{\Space{Y}}~\vert~ B_{\epsilon}(x) \cap B_{\epsilon}(y) \neq \emptyset \quad \forall x,y \in \sigma \}.
    \end{align*}
    This leads to a filtration $\Space{X}_{\epsilon} = \mathrm{VR}_{\epsilon}$. It has been argued that, while this filtration is robust to some kinds of noise, it is not robust to outliers. It was originally suggest by Carlsson and Zomorodian \cite{carlsson2009theory} to filter $\Space{Y}$ also by density $\rho$ (e.g. estimated with k-nearest neighbors). The intuition is that certain topological features can be captured at different sampling resolutions. Let
    \begin{align*}
        \mathrm{VR}_{\epsilon}(\Space{Y}^{\geq \rho})
    \end{align*}
    denote the Vietoris-Rips complex of the sub- point cloud of point with density greater than or equal to $\rho$. Then, $\Space{X}_{(\epsilon, \rho)} = \mathrm{VR}_{\epsilon}(\Space{Y}^{\geq \rho})$ clearly forms a bifiltration of simplicial complexes. Bi-filtrations have several known topological invariants: the Hilbert function and bi-graded Betti numbers \cite{lesnick2015interactive}. Upon discretizing and rescaling the domain of $\epsilon$ and $\rho$, each invariant is seen as a signal on the grid lattice $\lattice{L} = [m] \times [n]$ where $[n] = \{0 < 1 < \cdots <n\}$. To this end, we trained a L-CNN \cite{riess2020multidimensional} in order to classify point clouds in the ModelNet10 dataset \cite{wu20153d} strictly using topological invariants produced with the RIVET software package \cite{lesnick2015interactive} as features. The results are summarized in Appendix \ref{ch:appendix-2}.
\end{example}

\subsection{Towards graph neural networks (GNNs)}

Graph Neural Networks (GNNs) are a powerful models offering more efficient and generalizable solutions to node classification, network classification, and representation learning tasks. It is widely argued that graph diffusion is the driving force behind graph neural networks, the impetus for our study of foundational models of graph neural networks based upon the Tarski Laplacian (as well as other linear sheaf Laplacians \cite{bodnar2022neural,barbero2022sheaf,hansen2020sheaf}).

We will review three types of GNNs, each of which brings to light different aspects of neural information processing on graphs. First we establish some common notation. Suppose $\graph{G} = (\nodes{G}, \edges{G}, \weight)$ is a weighted graph with $n = |\nodes{G}|$ nodes, (normalized) adjacency matrix $A \in \R^{n \times n}$, degree matrix $D \in \R^{n \times n}$, and (normalized) graph Laplacian $L\in \R^{n \times n}$. Suppose each layer $\ell$ of the GNN has $F_\ell$ features so that $\mathbf{X}^{(\ell)} \in \R^{n \times \ell}$ is a feature matrix of layer $\ell$. Each layer is associated another matrix $\mathbf{H}^{(\ell)} \in \R^{F_{\ell} \times F_{\ell+1}}$ called the \define{weight matrix} or \define{channel-mixing} matrix. The entries of each $\mathbf{H}^{(\ell)}$ are learnable parameters. Finally, each layer has a nonlinear function $\sigma_\ell$ that is either pointwise (e.g.~ReLU, sigmoid) or a localized nonlinear function such as local maximum or median \cite{ruiz2019invariance}.

\begin{enumerate}
\item Graph Convolutional Networks (GCNs) \cite{kipf2016semi} are deep neural networks that incorporate elements of graph diffusion and mutli-layer perceptrons. Layers are of the form
\begin{align}
    \mathbf{X}^{(\ell + 1)} &=& \sigma_{\ell} \left( \Laplacian \mathbf{X}^{(\ell)} \mathbf{H}^{(\ell)} \right),
\end{align}
or, locally,
\begin{align*}
    x_{i,g}^{(\ell+1)} &=& \sigma_{\ell} \left( \sum_{f = 1}^{F_{\ell}} h_{f,g}^{(\ell)} \left(\Laplacian \mathbf{x}^{(\ell)}_f\right)_i \right), \quad \forall i \in \nodes{G},~g \in \{1,2,\dots,F_{\ell+1}\}.
\end{align*}
Only one diffusion step takes place in each layer. Graph convolution takes place, but each graph filter only has one non-zero parameter.
\item Message-Passing Neural Networks (MPNNs) \cite{gilmer2017neural} take quite a general form.
\begin{enumerate}
    \item Each node $i \in \nodes{G}$ receives messages from neighboring nodes consisting via a function $\psi_{i,j}^{(\ell)}\left(\mathbf{f}_i^{(\ell)}, \mathbf{f}_j^{(\ell)}, a_{ij}\right)$ (learned parameters possible).
    \item Node $i \in \graph{G}$ aggregates the messages received by neighbors with a map \[\aggregate: \R^{\nbhd{i}} \to \R,\] for instance, sum, maximum, or mean.
    \item Then, a node updates its state with an update function $\phi^{(\ell)}$ (learned parameters possible) depending on the previous state $\mathbf{f}_i^{(\ell)}$ and the aggregated message $\mathbf{m}_i^{(\ell+1)}$.
\end{enumerate}
    In summary,
    \begin{align}
        \mathbf{m}^{(\ell+1)}_i &=& \aggregate_{w \in \nbhd{i}} \psi_{i,j}^{(\ell)} \left( \mathbf{x}_i^{(\ell)}, \mathbf{x}_j^{(\ell)}, a_{ij} \right), \\ \nonumber
        \mathbf{x}^{(\ell+1)}_i &=& \phi^{(\ell)}_i \left( \mathbf{x}_i^{(\ell)}, \mathbf{m}_i^{(\ell+1)} \right). \label{eq:pmnn}
    \end{align}
    \item Graph Convolutional Neural Networks (GCNNs) \cite{gama2018convolutional} epitomize a perspective that \emph{GNNs are build from filters} and emphasize the spectral domain. Each layer can have multiple diffusions take place, aggregating information in a $k$-hop neighorhood, as opposed to a $1$-hop neighborhood.
\begin{align*}
    x_{i,g}^{(\ell+1)} &=& \sigma_{\ell} \left( \sum_{f = 1}^{F_{\ell}} \sum_{k=0}^{K_{\ell}} h_{f,g,k}^{(\ell)} \left(\Laplacian^k \mathbf{x}^{(\ell)}_f\right)_i \right), \quad \forall i \in \nodes{G},~g \in \{1,2,\dots,F_{\ell+1}\}.
\end{align*}
\end{enumerate}

An iteration of heat flow \refeq{eq:heat-equation} can be viewed as a layer in a message-passing neural network as follows. Suppose $\sheaf{F}$ is a Tarski sheaf over $\graph{G}$. Then, with the following,  
\begin{align*}
    \psi_{i,j}^{(\ell)}(\mathbf{x}_i^{(\ell)}, \mathbf{x}_j^{(\ell)}, a_{ij}) &=& \Parallel{F}{j \to i}(\mathbf{x}_j^{(\ell)}) \\
    \aggregate_{j \in \nbhd{i}} \mathbf{x}_j &=& \bigmeet_{j \in \nbhd{i}} \mathbf{x}_j \\
    \phi^{(\ell)}_i(\mathbf{x}^{(\ell)}_i, \mathbf{m}^{(\ell+1)}_i) &=& \mathbf{x}^{(\ell)}_i \meet \mathbf{m}^{(\ell+1)}_i,
 \end{align*}
 heat flow is a message-passing neural network. 

An open question remains whether the Tarski Laplacian inspires convolutional neural networks. With respect to an underlying lattice-valued sheaf over $\graph{G}$, we hypothesize that signals on lattice $\bisheaf{F}(i)$ are correlated related to signals on neighboring lattices $\{ \bisheaf{F}(j) \}_{j \in \nbhd{i}}$. This leads to a possible notion of signal convolution relative to sheaf.

\begin{definition}[Convolution Relative to a Sheaf]
Suppose $\graph{G} = (\nodes{G}, \edges{G})$ is a graph and $\bisheaf{F}$ is a Tarski sheaf over $\graph{G}$. Suppose $\mathbf{F} = \{ \mathbf{f}_i \}_{i \in \nodes{G}}$ is a collection of signals over lattices indexed by $\nodes{G}$, suppose $\mathbf{H} = \{\mathbf{h}^{i,j} \}_{i, j \in \nodes{H}}$ in a filter bank, and suppose $\Laplacian: C^0(\graph{G}; \sheaf{F}) \to C^0(\graph{G}; \sheaf{F})$ is the Traski Laplacian. Then, \define{convolution of $\mathbf{F}$ by a filterbank $\mathbf{H}$ relative to $\bisheaf{F}$} is the operation
\begin{align}
    (\mathbf{H} \star_{\bisheaf{F}} \mathbf{F})_{i} &=& \sum_{j \in \nodes{G}} \sum_{\mathbf{y} \in C^0(\graph{G}; \sheaf{F})} h^{i,j}_{y_i} T_{(\Laplacian \mathbf{y})_i}^{\meet} \mathbf{f}_i
\end{align}
\end{definition}
\chapter{Semantics}\label{ch:semantics}

Our second application domain can be loosely described as \define{semantics}. In this chapter, we explore semantics in two different contexts: temporal logic and multi-agent logic.

In logic, syntax consists of strung-together symbols, while semantics is found ``under the hood.'' Semantics also has a linguistic connotation as something like ``subsurface layers of meaning,'' and, of course, meaning is highly dependent on context. Thus, semantics is synonymous with a formal context which we call a \define{model} and others call a \define{structure}. A model consists of a set of \define{states} and rules for verifying whether a given state satisfies a logic formula.
Model checking \cite{baier2008principles} is a subfield of formal verification in computer science dedicated to solving the following problems.
\begin{problem}[Model Checking]
	Given a model, a particular state $s$, and a formula $\phi$ (in an appropriate language), decide whether $s$ satisfies $\phi$, written $(\model, s) \models \phi$.
\end{problem}
\begin{problem}[Validity]
	Given a model $\model$ and a formula $\phi$, decide whether every state satisfies a given formula or, equivalently, check every $s$ satisfies $\neg \phi$. If $\phi$ is valid, it is common to write $\model \models \phi$.
\end{problem}
\begin{problem}[Planning]
	Given a $\model$ and a formula $\phi$, produce a state $s$ with $(\model,s) \models \phi$ (if such a state exists).
\end{problem}

For models whose rules are specified by satisfying a given relation, we observe that Galois connections induced by relations (Theorem \ref{thm:semantic-connection}) transform semantic content. Thus, we argue, as others have done \cite{landman2012structures},  that order theory is practical in semantics. To date, we see formal theories of semantics having been applied to single- and multi-agent autonomous systems as well as artificial intelligence. It is in the area of autonomy that we hope our work can make an impact.

\section{Temporal Logic}

Temporal logic is a modal logic with much success in interpreting abstract systems by specifying logical constraints a system must satisfy as well as the time or duration. Two critical properties that temporal logic model checking address are \define{saftey}, something ``bad'' will never happen, and \define{liveness}, something ``good'' will always happen, at least eventually.  More recently, temporal logic control systems have been utilized in path-planning \cite{smith2011optimal,fainekos2005hybrid} and multi-agent autonomous systems \cite{kantaros2016distributed}.

\subsection{Transition systems}

The logic of time, temporal logic, is powerful for modeling systems. We utilize a popular method of abstracting away a system. We begin with a set of \define{atomic propositions}, denoted $\Phi$. Elements of $\Phi$ are labeled $p_1, p_2$ etc. In the most elementary setting, these propositions have two possible truth values $\{\mathtt{false}, \mathtt{true}\}$ and typically designate facts about a system, for instance, whether an agent is located in particular region of a map, or whether there is a communication link between a pair of agents. In temporal logic, the truth values of atomic propositions are time-variant.

\begin{definition}
A (labeled nondeterministic) \define{transition system} is a tuple $\TS = (Q, Q_0, \Sigma, \delta, \AP, \eval)$ with
\begin{enumerate}
	\item $Q$, a set of states,
	\item $Q_0 \subseteq Q$, a set of initial states,
	\item $\Sigma$, a set of control inputs,
	\item $\delta$, a transition map sending a states $q \in Q$ and a control input $\sigma \in \Sigma$ to a subset of possible next states
	\begin{align}
		\delta: Q \times \Sigma \to \powerset{Q} \label{eq:state-transition},
	\end{align}
	\item $\AP$, a set of atomic propositions (true or false),
	\item $\eval$, a map sending a state $q \in Q$ to a set of atomic propositions true at $q$
	\begin{align*}
		\eval: Q \to \powerset{\AP}.
	\end{align*}	
\end{enumerate}
\end{definition}
\noindent Notice, we could have write the transition map as a labeled relation
\begin{align*}
	\to~\subseteq~Q \times \Sigma \times Q
\end{align*}
by currying $\delta$. Thus, transition systems are conveniently drawn as directed graphs with arrows labeled by subsets of $\Sigma$. In an \define{deterministic transition system}, \eqref{eq:state-transition} is replaced by
\begin{align*}
	\delta: Q \times \Sigma \to Q.
\end{align*}
and we assume $Q_0 = \{Q_0\}$. In an \define{unlabeled transition system}, we ignore $\Sigma$ and replace \eqref{eq:state-transition} by
\begin{align}
	\delta: Q \to \powerset{Q}, \label{eq:unlabeled-state-transition}
\end{align}
equivalently, a relation $\to~\subseteq Q \times Q$. If $Q$, $\Sigma$ and $\AP$ are finite, we say the transition system is \define{finite}.

\begin{remark}
	In a transition system, uncertainty is not (ordinarily) modeled with probability, but agnostic to the likelihood of a given state $q$ transitioning to any number or states $q' \in \delta(q,\sigma)$. We can form an unlabeled transition system from a labeled one via
	\begin{align*}
		\tilde{\delta}(q) &=& \bigcup_{\sigma \in \Sigma} \delta(q,\sigma).
	\end{align*}
	An unlabeled transition system is equivalent to a binary relation $\to ~\subseteq~ Q \times Q$.
\end{remark}

\begin{example}[Control Systems]
Consider the discrete time-invariant linear system
\begin{align}
	\mathbf{x}[t+1] &=& A \mathbf{x}[t] + B \mathbf{u}[t] \label{eq:linear-system} \\
	\mathbf{y}[t+1] &=& C \mathbf{x}[t+1]
\end{align}
with $A \in \R^{n \times n}$, $\mathbf{x} \in \R^{n}$, $B \in \R^{n \times p}$, $\mathbf{u} \in \R^p$, $C \in \R^{m \times n}$, $\mathbf{y} \in \R^m$. Linear systems are ubiquitous in control theory, for instance, model-predictive control \cite{camacho2013model}. The system \eqref{eq:linear-system} is rewritten as the following transition system:
\begin{align*}
	Q &=& \R^n \\
	\Sigma &=& \R^p \\
	\AP &=& \R^m \\
	\delta(\mathbf{x},\mathbf{u}) &=& \{A \mathbf{x} + B \mathbf{u}\} \\
	\eval(x) &=& \{C \mathbf{y}\}.
\end{align*}
\end{example}

Let $Q^\omega$ denote the set of infinite sequences of elements in $Q$
\begin{align*}
	Q^{\omega} = \{\alpha: \N \to Q\}
\end{align*}
which we call traces.
If $\tau \in Q^{\omega}$, the \define{suffix} of $\tau$ is the trace $\tau[t_0]$ starting at $t_0 \geq 0$.
Suppose $\sigma_{in} \in \Sigma^{\omega}$ is a sequence of control inputs, for instance, driving directions: \emph{turn left at the light, go straight at the stop sign}. Then,
\begin{align*}
	\tau(t) &=& \delta(\sigma(t),\tau(t-1)) \\
	\tau(0) &\in& Q_0
\end{align*}
defines a \define{trace} $\tau \in {\powerset{Q}}^\omega$. Finally, a trace specifies a sequence of subsets of $\AP$ via
\begin{align*}
	\sigma_{out}(t) &=& \eval\left( \tau(t) \right).
\end{align*}

Before we describe how to subject an output $\sigma_{out}$ to temporal logical constraints, consider the following critical example.

\begin{example}[Network Connectivity \cite{riess2021temporal}] \label{eg:network-connectivity}
	Suppose $\graph{G} = (\nodes{G}, \edges{G})$ is an undirected graph.\footnote{Without loss of generality, we can think of $\graph{G}$ as the complete graph on vertices $\nodes{G}$} A \define{dynamic graph} is map
	\begin{align}
		\graph{G}_t: \N \to \mathrm{Subgraph} \left(\graph{G}\right) \label{eq:dynamic-graph}
	\end{align}
	 is a dynamic graph with edges $\edges{G}_t \subseteq \edges{G}$ collecting edges of the subgraph $\graph{G}_t$ at time $t$. At each discrete time instant, nodes in the edge $ij \in \edges{G}$ toggle their status from on to off or from off to on or are on standby. These actions are represented by the following control inputs
	\begin{align*}
	 	\Sigma &=& \prod_{ij \in \edges{G}} \{\sigma_{switch}, \quad \sigma_{sleep}\}.
	 \end{align*}
	 States are represented
	 \begin{align*}
	 	\mathbf{Q} &=& \{-1, +1\}^{|\edges{G}|}.
	 \end{align*}
	 State transitions are governed by the function
	 \[\delta(\mathbf{s}, \boldsymbol\sigma)_{ij} =
	 	\begin{cases}
	 		-q_{ij} & \sigma^{ij} = \sigma_{switch}   \\
	 		q_{ij}  & \sigma^{ij} = \sigma_{sleep}
	 	\end{cases}\]
	 given $\mathbf{q} \in Q$, $\boldsymbol \sigma \in \Sigma$.
	 Let
	 \begin{align*}
	 	\AP &=& \{ p^{ij} \}_{ij \in \edges{G}}
	 \end{align*}
	 where $p^{ij}$ is the proposition
	 \begin{displayquote}
	\emph{There is a connection between node $i$ and node $j$.}
	 \end{displayquote}
	 Then, of course,
	 \begin{align*}
	 	\eval(\mathbf{s}) &=& \bigcup_{q_{ij} > 0} \pi^{ij}.
	 \end{align*}
	 Hence, a control input $\boldsymbol \sigma_{in} \in \Sigma^\omega$ determines an output $\boldsymbol \sigma_{out} \in {\powerset{\AP}}^\omega$ specifying a subset of edges $\edges{G}_t$ that are connected at time $t$. So we see the defined transitions system $\TS = (\mathbf{Q}, \mathbf{q}_0, \Sigma, \AP, \eval)$ is equivalent to a dynamic graph \eqref{eq:dynamic-graph}. Shortly, we will find it helpful to be able to reason about dynamic graphs with transition systems.
\end{example}

\subsection{Linear temporal logic}

Linear temporal logic (LTL) has the following syntax
\begin{align*}
	\mathtt{true} \quad \vert \quad p \quad \vert \quad \phi \meet \psi \quad \vert \quad \neg \phi \quad \vert \quad \nexttime \phi \quad \vert \quad \phi \until \psi \quad \vert \quad \eventually \phi \quad \vert \quad \always \phi.
\end{align*}
There needs explanation. If $p$ is an atomic proposition in $\AP$, then $p$ is a formula. If $\phi$ and $\psi$ are formulas, then all of the operations of propositional calculus also constitute formulas. The other symbols are interpreted: $\nexttime \phi$ \emph{next time}, $\phi \until \psi$ \emph{until}, $\eventually \phi$ \emph{eventually}, $\always \phi$ \emph{always}. Derived operators include $\always \eventually \phi$ \emph{infinitely often} and $\eventually \always \phi$ \emph{at some point forever}. The collection of all formulas formed with this syntax is $\lang^{LTL}(\Phi)$.

The semantics of LTL is the following. Let $\sigma \in \powerset{\AP}^{\omega}$ be an input word, $p \in \AP$, and $\phi, \psi \in \lang^{LTL}(\AP)$. Then,
\begin{align*}
	\sigma &\models& p  \quad &\Leftrightarrow& \quad p \in \sigma(0), \\
	\sigma & \models& \phi \meet \psi \quad &\Leftrightarrow& \quad \sigma \models \phi,~\sigma \models \psi, \\
	\sigma & \models& \neg\phi \quad &\Leftrightarrow& \quad \sigma \not\models \phi, \\
	\sigma &\models& \nexttime \phi &\Leftrightarrow& \quad \sigma[1] \models \phi, \\
	\sigma &\models& \phi \until \psi &\Leftrightarrow& \quad \exists~t_0 \geq 0 \quad \text{such that} \quad \sigma[t_0] \models \psi \quad \text{and} \quad \sigma[t'] \models \phi \quad \forall 0 \leq t' < t_0, \\
	\sigma &\models& \eventually \phi &\Leftrightarrow& \quad \exists t_0 \geq 0 \quad \text{such that} \quad \sigma[t_0] \models \phi, \\
	\sigma &\models& \always \phi &\Leftrightarrow& \quad \sigma[t'] \models \phi \quad \forall t' \geq 0.
\end{align*}
For a trace $\tau \in \powerset{Q}^{\omega}$, we write $\tau \models \phi$ if for the cooresponding output work $\sigma_{in} \models \phi$.

Going back to syntax, it is useful to have a notion of which words satisfy a given proposition. Given $\phi$, the \define{intension} of $\phi$ is
\begin{align*}
	\intension{\phi} &=& \{ \tau \in \powerset{Q}^{\omega}~\vert~\tau \models \phi \}
\end{align*}
and the \define{words} satisfying $\phi$ is collected in the set $\mathrm{Words}(\phi) = \eval \left( \intension{\phi} \right)$.

Given a transition system $\TS$ and a proposition $\phi$ there is a computation pipeline to verify that $\phi$ is valid in $\TS$ \cite{baier2008principles}. The process involves ``translating'' a formula $\phi$ into a  \textbuchi automaton. 

\begin{definition}
	A \define{nondeterministic \textbuchi automaton (NBA)} is a tuple $\Buchi = (S, S_0, \Sigma, \delta, F)$ with
	\leavevmode
	\begin{enumerate}
		\item $S$, a set of states,
		\item $S_0$, a set of initial states,
		\item $\Sigma$, a set of control inputs,
		\item $\delta$, a transition map
		\begin{align*}
			\delta: S \times \Sigma \to \powerset{S},
		\end{align*}
		\item $F \subseteq S$, a set of accepting states.
	\end{enumerate}
	An \define{input word} $\sigma \in \Sigma^{\omega}$ is \define{accepted} if there is at least one corresponding sequence $\mathbf{s} = s(0)s(1)s(2)\dots$ such $s(0) \in S_0$ and the set $\{t~\vert~s(t) \in F  \}$ is countably infinite. The \define{language} of $\Buchi$, denoted $\lang(\Buchi)$ consists of the set of input words $\sigma$ that are accepted by $\Buchi$.
\end{definition}

\begin{theorem}[Theorem 5.37 \cite{baier2008principles}]
	For any $LTL$ formula $\phi \in \lang^{LTL}(\AP)$, there is a (generalized\footnote{A generalized \textbuchi nondeterministic automaton (GNBA) is lightly more general than a plain NBA. However, GNBA can be transformed into an NBA \cite[Theorem 4.56]{baier2008principles}.}) nondeterministic \textbuchi automaton $\mathbf{G}_{\phi}$ with $\Sigma = \powerset{\AP}$ such that
	\begin{align*}
		\mathrm{Words}(\phi) &=& \lang(\mathbf{G}_\phi).
	\end{align*}
\end{theorem}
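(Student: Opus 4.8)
The statement to prove is the classical LTL-to-B\"uchi translation theorem (Theorem 5.37 of Baier-Katoen), and the plan is essentially to reconstruct the standard tableau construction while flagging the points where the proof is nontrivial. The overall strategy is to build a generalized nondeterministic B\"uchi automaton $\mathbf{G}_\phi$ whose states are the \emph{elementary sets} of formulas drawn from the closure $\mathrm{cl}(\phi)$ (the set of subformulas of $\phi$ and their negations, modulo double negation), then verify that the accepted $\omega$-words over $\powerset{\AP}$ are exactly $\mathrm{Words}(\phi)$.

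First I would introduce the closure $\mathrm{cl}(\phi)$ and define an \emph{elementary} set $B \subseteq \mathrm{cl}(\phi)$ to be one that is (i) logically consistent (never contains both $\psi$ and $\neg\psi$, contains $\psi \meet \psi'$ iff it contains both conjuncts, contains $\mathtt{true}$ if $\mathtt{true}\in\mathrm{cl}(\phi)$), (ii) locally consistent for the until operator ($\psi\until\psi' \in B$ if $\psi'\in B$, and if $\psi\until\psi'\in B$ and $\psi'\notin B$ then $\psi\in B$), and (iii) maximal among sets with these properties. The states of $\mathbf{G}_\phi$ are the elementary sets; the initial states are those $B$ with $\phi\in B$. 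The alphabet is $\Sigma = \powerset{\AP}$. The transition relation is the crux: $B \xrightarrow{A} B'$ is allowed precisely when $A = B\cap\AP$, the ``next-time'' obligations match ($\nexttime\psi\in B \Leftrightarrow \psi\in B'$), and the until-unfolding holds ($\psi\until\psi'\in B \Leftrightarrow$ either $\psi'\in B$, or $\psi\in B$ and $\psi\until\psi'\in B'$); derived operators $\eventually$ and $\always$ are handled as $\mathtt{true}\until\psi$ and $\neg\eventually\neg\psi$ respectively, or given their own analogous unfolding clauses. The generalized acceptance condition uses one accepting set $F_{\psi\until\psi'}$ per until-subformula, namely $F_{\psi\until\psi'} = \{B : \psi\until\psi'\notin B \text{ or } \psi'\in B\}$; this is exactly what forces eventualities to be discharged rather than postponed forever.

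The substantive part of the proof is the two-directional correctness argument. For soundness, given an accepting run $B_0 B_1 B_2 \cdots$ on a word $\sigma$, I would prove by structural induction on $\psi\in\mathrm{cl}(\phi)$ the invariant that for all $t$, $\sigma[t]\models\psi$ iff $\psi\in B_t$; the base case uses $B_t\cap\AP = \sigma(t)$, the boolean cases use consistency/maximality of elementary sets, the $\nexttime$ case uses the transition condition, and the $\psi\until\psi'$ case uses the unfolding condition \emph{together with} the B\"uchi acceptance on $F_{\psi\until\psi'}$ to rule out an infinite deferral. Since $\phi\in B_0$ this gives $\sigma\models\phi$. For completeness, given $\sigma\models\phi$, I would define $B_t := \{\psi\in\mathrm{cl}(\phi) : \sigma[t]\models\psi\}$, check each $B_t$ is elementary, check $B_t\xrightarrow{\sigma(t)}B_{t+1}$ is a legal transition, check $\phi\in B_0$, and check that the run visits each $F_{\psi\until\psi'}$ infinitely often (if $\psi\until\psi'\in B_t$ then by semantics there is a least $t'\ge t$ with $\sigma[t']\models\psi'$, so $B_{t'}\in F_{\psi\until\psi'}$; if $\psi\until\psi'\notin B_t$ then $B_t\in F_{\psi\until\psi'}$ trivially — either way acceptance holds). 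Finally I would invoke the standard fact (cited in the excerpt as \cite[Theorem 4.56]{baier2008principles}) that a GNBA can be converted to an equivalent plain NBA with a single accepting set, completing the statement as phrased.

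The main obstacle — and the place a careful writeup must slow down — is the until-operator bookkeeping: getting the local unfolding condition in the transition relation exactly right, and then showing that this local condition plus the generalized acceptance condition jointly capture the \emph{global} ``eventually'' semantics of $\until$. In particular, the soundness direction's inductive step for $\psi\until\psi'$ is not purely local: one shows $\psi\until\psi'\in B_t$ implies, unfolding along the run, a sequence $\psi\in B_t, \psi\in B_{t+1}, \ldots$ that must terminate with $\psi'$ because otherwise the run would avoid $F_{\psi\until\psi'}$ from time $t$ onward, contradicting acceptance. Everything else — the finiteness of $\mathbf{G}_\phi$ (it has at most $2^{|\mathrm{cl}(\phi)|}$ states, hence finite since $\phi$ is a finite formula over finite $\AP$), the boolean cases, the $\nexttime$ case — is routine once the state space and transition relation are pinned down, so I would state those briefly and concentrate the exposition on the $\until$ fixpoint-unfolding argument.
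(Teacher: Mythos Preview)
Your proposal is correct and is precisely the standard tableau construction (elementary sets of $\mathrm{cl}(\phi)$, transition relation via next-time and until-unfolding constraints, one generalized acceptance set per $\until$-subformula, then the two-direction correctness argument by structural induction). There is nothing to compare, however, because the paper does not give its own proof: the theorem is stated with an explicit attribution to Baier--Katoen (Theorem~5.37 of \cite{baier2008principles}) and is used as a black-box citation. The surrounding text immediately moves on to the product-automaton construction without any attempt to justify the LTL-to-B\"uchi translation. So you have supplied exactly the argument the cited reference contains, while the paper itself simply invokes that reference.
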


Once we have $\phi$ translated into $\Buchi_{{}\phi}$, we construct the \define{product \textbuchi automaton} (PBA) $\mathbf{P} = \TS \otimes \Buchi_{\phi}$ \cite[p.~200]{baier2008principles} which can be viewed as a directed graph via \eqref{eq:unlabeled-state-transition}. After taking a quotient of the digraph of $\mathbf{P}$ by strongly connected components, use Djistra's algorithm \cite{cormen2022introduction} to search for paths beginning at an initial state $(q_0,s_0) \in Q \times S$ that visit an accepting state in $Q \times F$ infinitely often. A projection of this path in the product automaton $\mathbf{P}$ onto the original transition system $\mathbf{TS}$ yields a trace $\tau \models \phi$. We say $\TS \models \phi$ if there does not exist a trace $\tau$ such that $\tau \models \neg \phi$.

\begin{proposition}[Complexity \cite{baier2008principles}]
	The complexity of the model checking problem $\TS \models \phi$ is
	\begin{align*}
		\mathcal{O}\left( |\TS| 2^{|\phi|} \right)
	\end{align*}
	where $|\TS|$ is the number of states and transitions in $\TS$ and $|\phi|$ is (roughly) the number of symbols in $\phi$.
\end{proposition}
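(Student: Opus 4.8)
The plan is to follow the three-stage automata-theoretic pipeline already sketched before the statement and to bound the cost of each stage, then compose. First I would reduce the validity question to an emptiness question: by definition $\TS \models \phi$ fails precisely when there is a trace $\tau$ of $\TS$ with $\tau \models \neg\phi$, so it suffices to decide whether the set of traces of $\TS$ satisfying $\neg\phi$ is empty, and — if nonempty — to exhibit one such trace as a counterexample.

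Second, I would invoke the cited LTL-to-B\"uchi translation (Theorem 5.37 from \cite{baier2008principles}) to construct a generalized nondeterministic \textbuchi automaton $\mathbf{G}_{\neg\phi}$ with $\Sigma = \powerset{\AP}$ and $\mathrm{Words}(\neg\phi) = \lang(\mathbf{G}_{\neg\phi})$, and then convert it to an ordinary NBA $\Buchi_{\neg\phi}$. The quantitative heart of the argument is the size bound: the states of $\mathbf{G}_{\neg\phi}$ are the \emph{elementary} (locally consistent, maximal) subsets of the closure of $\neg\phi$ under subformulas and single negations; since that closure has $\mathcal{O}(|\phi|)$ elements, there are at most $2^{\mathcal{O}(|\phi|)}$ elementary sets, and passing from a GNBA to an NBA multiplies the state count only by the number of acceptance sets, which is itself $\mathcal{O}(|\phi|)$. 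Hence $\Buchi_{\neg\phi}$ has $2^{\mathcal{O}(|\phi|)}$ states and transitions.

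Third, I would form the product \textbuchi automaton $\mathbf{P} = \TS \otimes \Buchi_{\neg\phi}$ over the common alphabet $\powerset{\AP}$, whose state space sits inside $Q \times S$ and whose transitions are the pairs of transitions agreeing on their label; thus $|\mathbf{P}| = \mathcal{O}(|\TS| \cdot 2^{|\phi|})$. Deciding emptiness of $\lang(\mathbf{P})$ amounts to detecting a reachable accepting state that lies on a cycle, which is carried out by a strongly-connected-components decomposition (Tarjan's algorithm, or a nested depth-first search) on the directed graph underlying $\mathbf{P}$ via \eqref{eq:unlabeled-state-transition}, running in time linear in $|\mathbf{P}|$. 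Composing the three stages yields total running time $\mathcal{O}(|\TS| \cdot 2^{|\phi|})$, and any counterexample is read off as the projection onto $\TS$ of a lasso-shaped accepting run in $\mathbf{P}$.

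The main obstacle — indeed the only genuinely nontrivial ingredient — is the size estimate $|\Buchi_{\neg\phi}| = 2^{\mathcal{O}(|\phi|)}$; everything else is bookkeeping about products and linear-time graph search. Since the LTL-to-B\"uchi construction and its size analysis are exactly the content of the cited Theorem 5.37 and the surrounding development in \cite{baier2008principles}, I would cite that result rather than reproduce the elementary-set construction, and devote the written proof to the product construction and the linear-time emptiness check.
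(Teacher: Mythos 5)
Your proposal is correct and follows essentially the same automata-theoretic pipeline the paper sketches in the preceding paragraphs (translate to a \textbuchi automaton of size $2^{\mathcal{O}(|\phi|)}$, form the product with $\TS$, and perform a linear-time emptiness check); the paper itself offers no proof, deferring entirely to the citation, and your reconstruction matches the standard argument in that reference. Your one refinement --- working with $\neg\phi$ so that validity reduces to emptiness of $\lang(\TS \otimes \Buchi_{\neg\phi})$ --- is consistent with the paper's closing remark that $\TS \models \phi$ iff no trace satisfies $\neg\phi$, and correctly isolates the exponential blow-up in the LTL-to-\textbuchi translation as the only nontrivial ingredient.
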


\subsection{Distributed model checking}

In the following, we consider a network $\graph{G} = (\nodes{G}, \edges{G})$ of agents and a number of atomic propositions indexed by the agents $i \in \nodes{G}$. In stead of model-checking a formula $\phi$ for the entire system, a strategy is to model-check \define{local formulas} $\phi_i$.

\begin{problem}[Distributed LTL Model-Checking] \label{prob:dist-model-checking}
	Suppose $\graph{G} = (\nodes{G}, \edges{G})$ is a graph and $\TS = (Q, Q_0, \Sigma, \delta, \AP)$ is a transition system such that $\AP$ is indexed by $\nodes{G}$ and a (possibly empty) set $\mathcal{A}$ not dependent on the graph $\graph{G}$. 
	\begin{align*}
		\AP &=& \{ p^i_A \}_{i \in \nodes{G}, A \in \mathcal{A}}. 
	\end{align*}
	Let $\AP_{i} = \{ p^j_A \}_{j \in \nbhd{i} \cup i, A \in \mathcal{A}}$ be the set of atomic propositions supported on the the neighborhood of $i \in \nodes{G}$. Let
	\begin{align*}
		\phi &=& \bigmeet_{i \in \nodes{G}} \phi_i
	\end{align*}
	where $\phi_i \in \lang(\Phi_i)$. Given $\sigma \in \Sigma^\omega$, decide whether $\sigma \models \phi$.
\end{problem}

\noindent We include some examples of distributive model-checking problems.

\begin{example}[Intermittent Connectivity]
	The following example is taken from \cite{kantaros2016distributed}. Suppose robots with limited communication (e.g.~submarine robots) each travel in between two rendezvous locations $\mathbf{v}_i$ and $\mathbf{v}_j$ in order to commuicate with other agents. Then, let $\graph{G} = (\nodes{G}, \edges{G})$ be the graph with nodes indexing rendevous locations $\mathbf{v}_{ij}$ and edges indexing robots $r_{ij}$. This intermittent connectivity problem is a distributed model-checking problem with
	\begin{align*}
		\phi_i &=& \always\eventually\bigmeet_{j \in \nbhd{i}} p^{ij}_{\mathbf{v}_i}
	\end{align*}
	where $p^{ij}_{\mathbf{v}_i}$ is the proposition:
	\begin{displayquote}
		\emph{Robot $r_{ij}$ is located in the region $\mathcal{R}_{i} = \{\mathbf{x}~\vert~ \| \mathbf{x}- \mathbf{x}_{ij} \| \leq \epsilon\}$.}
	\end{displayquote}
	The local formulas $\phi_i$ specify:
	\begin{displayquote}
		 \emph{All robot assigned to the rendevous point $\mathbf{v}_i$ are simultaneously located in the region $\mathcal{R}_{i}$ infinitely often.}
	\end{displayquote}
\end{example}
\begin{example}[Epidemics]
	Let $\mathcal{A} = \{S, I, R\}$ represent the status \define{susceptible}, \define{infected}, and \define{recovered}. Suppose $\AP = \{ p^i_{A \in \mathcal{A}} \}$. For instance, $p^i_I$ is the proposition \emph{agent $i$ is infected}. The popular susceptible-infected-recovered (SIR) model can be viewed as a distributed model-checking problem with
	\begin{align*}
	\phi_i &=& \always \left( \bigjoin_{j \in \nbhd{i}} p_S^i \Rightarrow \eventually p_I^j  \right) \meet \always \left( p_I^i \Rightarrow \eventually p_R^i \right).
	\end{align*}
	We interpret $\phi_i$:
	\begin{displayquote}
		\emph{If at least one neighbor of agent $i$ is infected, then agent $i$ will eventually be infected; if agent $i$ is infected, she will eventually recover.}
	\end{displayquote}
	Of course, there are similar models such as susceptible-infected-susceptible (SIS), susceptible-infected-recovered-diseased (SIRD). Note that epidemic models can be used for modeling rumors as well as physical diseases.
\end{example}
\begin{example}[Non-Interfering Connectivity \cite{riess2021temporal}]
	Consider an arbitrary undirected graph $\graph{G} = (\nodes{G}, \edges{G})$ and the transition system $\TS$ defined in Example \ref{eg:network-connectivity} modeling network connectivity. Recall, $p^{ij}, ij \in \edges{G}$ is the proposition
	\begin{displayquote}
		\emph{Communication link $ij$ is active.}
	\end{displayquote}
	Consider the local formula
	\begin{align*}
		\phi_i &=& \bigmeet_{j \in \nbhd{i}} \phi_{ij}
	\end{align*}
	where $\phi_{ij}$ is the formula
	\begin{align}
		\phi_{ij} &=& \always \left( p^{ij} \Rightarrow \bigjoin_{i'j' \in \nbhd{ij}} \neg p^{i'j'} \right) \meet \always\eventually p^{ij}. \label{eq:noninterence}
	\end{align}
	$\phi_{ij}$ can be interpreted:
	\begin{displayquote}
		\emph{If communication link $ij$ is active, then every adjacent communication link must be off. Moreover, every communication link must become active infinitely often.}
	\end{displayquote}
\end{example}

We proposed \cite{riess2021temporal} a partial solution to the distributed model-checking problem, which applies to the above examples.
\begin{enumerate}
\item Select a subset $\nodes{C} \subseteq \nodes{G}$ of nodes which we call \define{command nodes}.
\item After selecting a global proximity radius $R \in \N$ and a local proximity radius $r \leq \lfloor R/2 \rfloor$, form the \define{command graph} with nodes $\nodes{C}$ and $ij \in \edges{C}$ if and only if $i$ and $j$ are connected by a shortest path of length no greater than $r$ in $\graph{G}$. Construct a family of subgraphs $\graph{G_i}, i \in \nodes{C}$ induced by the node sets
\begin{align*}
	\nodes{G_i} &=& \{j \in \nodes{G}~\vert~ d(i,j) \leq r \};
\end{align*}
$\graph{G_i}$ is the smallest subgraph of $\graph{G}$ containing $\nodes{G_i}$.
\end{enumerate}

For $i \in \nodes{C}$, let $\TS_i$ be the subtransaction system of $\TS$ (Problem \ref{prob:dist-model-checking}) restricted to $\graph{G_i}$. Model checking each $\TS_i \models \bigmeet_{j \in \nodes{G_i}} \phi_j,~i \in \nodes{C}$ is now substantially less expensive than model-checking $\TS \models \phi$.

\section{Muli-Agent Logic}

For more details, we suggest the reader consult the standard reference \cite{fagin2004reasoning}. For the following, suppose a system has agents labeled $\nodes{G} = \{1, 2, \dots, n\}$. In multi-agent logic, we develop a mathematical language to reason about information held by individuals and groups of individuals in the system. Information, in our setting, consists of formulas and their truth values, contrasting the perspective of Chapter \ref{ch:signals} which views information as signals, boxes of numbers. In temporal logic, there are two modalities $\always$ and $\eventually$. In multi-agent logic, we have a modality for every agent $K_1, K_2, \dots, K_n$.

\begin{definition}[Kripke Model] 
	An \define{Kripke frame} consists of a set $S$ of \define{states} with binary relations
	\begin{align*}
		\rel{K}_1, \rel{K}_2, \dots, \rel{K}_n.
	\end{align*}
	A \define{Kripke model}, denoted $\model$, is a frame $(S, \rel{K}_1, \rel{K}_2, \dots, \rel{K}_n)$ together with an \define{evaluation}
	\begin{align*}
		\pi: S \to \powerset{\AP}.
	\end{align*}
\end{definition}
It is obvious, by currying, that evaluations are in bijection with relations $\models$ between $S$ and $\AP$ which we call \define{interpretations}; we liberally go back and forth between evaluations and interpretations
\begin{align*}
(\model, s) \models p \quad &\Leftrightarrow& \quad p \in \pi(s).
\end{align*}
Multi-agent logic distinguishes itself from ordinary propositional logic with the set of relations $\rel{K}_i$ that encode the semantics for syntactic modal operators written $K_i$. Together with the connectives of propositional logic, the entire syntax of \define{multi-agent logic} is given by the following
\begin{align*}
\mathtt{true} \quad \vert \quad p \quad \vert \quad \phi \meet \psi \quad \vert \quad \neg \phi \quad \vert \quad i \quad \vert \quad K_i \varphi.
\end{align*}
The set of all sentences formed via the above syntax over the atomic propositions $\Phi$ is called the language $\lang_n(\AP)$. The set of all $n$-agent Kripke models over $\AP$ is denoted $\Models_n(\AP)$. We can \define{interpret} a $\phi \in \lang_n(\AP)$ with the following inductive rules:
\begin{align}
	(\model, s) &\models& p  \quad &\Leftrightarrow& \quad p \in \eval(s) \\
	(\model, s) &\models& \phi \meet \psi \quad &\Leftrightarrow& \quad (\model, s) \models \phi, (\model, s) \models \psi \\
	(\model, s) &\models& \neg \phi \quad &\Leftrightarrow& \quad (\model, s) \not \models \phi \\
	(\model, s) &\models& K_i \phi \quad &\Leftrightarrow& \quad (\model,t) \models \phi \quad \forall t \in S \quad \text{such that} \quad s\rel{K}_it
\end{align}

It is the last rule that gives a Kripke model the nickname, a ``possible worlds'' model because $K_i \phi$ is satisfied at a state if every neighboring state $t$ with $s~\rel{K}_i t$ satisfies $\phi$. As suggested by notation, the recommended mental model for the formula $K_i \phi$ is \emph{agent $i$ knows $\phi$}. We will see shortly why this is just a ``mental model.'' The modal logic $\lang_n(\Phi)$ could then be coined a \define{epistemic logic}. We write $\model \models \phi$ if $(\model, s) \models \phi$ for all $s \in S$. In this case, we would say $\phi$ is \define{valid} in $\model$.

Derived from the basic syntax and semantics, are additional logical operators. Suppose $A \subseteq \nodes{G}$ is a subset of agents. Then,
\begin{align*}
    E_A\phi &=& \bigmeet_{i \in A} K_i \phi
\end{align*}
which in our mental model specifies that \emph{everyone in $A$ knows $\phi$.} Define
\begin{align*}
    C_A\phi &=& \bigmeet_{k \geq 1} E^k_A \phi
\end{align*}
which in our mental model specifies that $\phi$ is \define{common knowledge}, or, in other words, everyone knows that everyone knows that everyone knows $\phi$ \emph{ad infinitum}. Note that $E_G$ is not in the language $\lang_n(\AP)$. To include common knowledge, we will need to augment our language to include the symbols $C_G$. Finally, let $D_A$ be the operator with semantics
\begin{align*}
    (\model, s) &\models& D_A \phi \quad &\Leftrightarrow& (\model, t) \models \phi \quad (\forall t \in S) \quad (s,t) \in \bigcap_{i \in A}\rel{K}_i.
\end{align*}
In our mental mode, $D_A$ supplies a notion of \define{distibutive knowledge} with the intuition that the ``combined'' knowledge of all the agents in the group $A \subseteq \nodes{G}$ implies $\phi$. We write $\lang_n^C$, $\lang_n^D$, $\lang_n^{CD}$ for the augmented languages with common knowledge, distributive knowledge, and both common and distributive knowledge, respectively. 

For the time being, we will narrow our focus on $\lang_n$. Let us briefly describe this logic axiomatically. An axiom system consists of axioms and inference rules. An axiom is a valid formula while an inference rule is a reduction of a set of valid formulas in a context to another formula. The axiom system $\mathsf{K}_n(\Phi)$ consists of the following rules and axioms:\footnote{We label axioms and inference rules per Fagin et.~al.~for convenience \cite{fagin2004reasoning}.}
\begin{itemize}
	\item[(A1)] Tautologies of propositional calculus.
	\item[(A2)] $\left(K_i \phi \meet K_i(\phi \meet \psi)\right) \Rightarrow K_i \psi \quad \forall i \in \nodes{G}$ (Distribution Axiom)
	\item[(R1)] From $\phi$ and $\phi \Rightarrow \psi$ infer $\psi$ (Modus Potens)
	\item[(R2)] From $\phi$ infer $K_i \phi$ (Knowledge Generalization)
\end{itemize}
The following theorem guarantees, given $\AP$, every formula provable in $\mathsf{K}_n$ is valid in $\Models_n$ and conversely every formula valid in $\Models_n$ is provable in $\mathsf{K}_n$.

\begin{theorem}[Theorem 3.1.3 \cite{fagin2004reasoning}]
The axiom system $\mathsf{K}_n(\AP)$ is sound and complete axiomatization of the language $\lang_n(\AP)$ with respect to $\mathcal{M}_n(\Phi)$. 
\end{theorem}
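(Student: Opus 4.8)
The plan is to prove soundness and completeness of $\mathsf{K}_n(\AP)$ separately, following the standard modal-logic template (as in Fagin et al., Theorem 3.1.3), since the excerpt has set up exactly the machinery needed. For \emph{soundness}, I would argue by induction on the length of a proof that every provable formula is valid in $\Models_n(\AP)$. The base case checks that each axiom is valid: (A1) because propositional tautologies hold pointwise under any evaluation $\eval$; (A2), the Distribution Axiom, because if $(\model,s)\models K_i\phi$ and $(\model,s)\models K_i(\phi\Rightarrow\psi)$ then for every $t$ with $s\rel{K}_it$ we have $(\model,t)\models\phi$ and $(\model,t)\models\phi\Rightarrow\psi$, whence $(\model,t)\models\psi$ by Modus Ponens at $t$, so $(\model,s)\models K_i\psi$. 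The inductive step checks the inference rules preserve validity: (R1) Modus Ponens is immediate from the semantics of $\Rightarrow$; (R2) Knowledge Generalization holds because if $\phi$ is valid in $\model$ then $(\model,t)\models\phi$ for \emph{every} $t\in S$, in particular for every $\rel{K}_i$-successor of any $s$, so $(\model,s)\models K_i\phi$ for all $s$.

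For \emph{completeness} I would use the canonical model construction. Define the canonical Kripke model $\model^c$ whose states $S^c$ are the maximal $\mathsf{K}_n(\AP)$-consistent subsets of $\lang_n(\AP)$; set $s\rel{K}_i^c t$ iff $\{\phi : K_i\phi\in s\}\subseteq t$; and define $\eval^c(s)=s\cap\AP$. The heart of the argument is the Truth Lemma: for every $\phi\in\lang_n(\AP)$ and every $s\in S^c$, $(\model^c,s)\models\phi$ iff $\phi\in s$. This is proved by induction on the structure of $\phi$; the propositional cases use maximal consistency (each state contains exactly one of $\phi$, $\neg\phi$, and is closed under Modus Ponens), and the modal case $\phi=K_i\psi$ uses the Existence Lemma: if $K_i\psi\notin s$ then the set $\{\chi : K_i\chi\in s\}\cup\{\neg\psi\}$ is consistent (this is where the Distribution Axiom (A2) and Knowledge Generalization (R2) are used, via a standard deduction-theorem argument), hence extends by Lindenbaum's Lemma to a maximal consistent $t$ with $s\rel{K}_i^c t$ and $\psi\notin t$. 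Given the Truth Lemma, completeness follows: if $\not\vdash_{\mathsf{K}_n}\phi$ then $\{\neg\phi\}$ is consistent, extends to some $s\in S^c$, and by the Truth Lemma $(\model^c,s)\models\neg\phi$, so $\phi$ is not valid in $\Models_n(\AP)$ — contradicting the assumption that $\phi$ is valid.

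I expect the main obstacle to be the Existence Lemma, specifically verifying that $\{\chi : K_i\chi \in s\}\cup\{\neg\psi\}$ is consistent whenever $K_i\psi\notin s$. The argument requires showing that if $\chi_1,\dots,\chi_m$ are in $\{\chi:K_i\chi\in s\}$ and $\vdash (\chi_1\meet\cdots\meet\chi_m)\Rightarrow\psi$, then $\vdash K_i(\chi_1\meet\cdots\meet\chi_m)\Rightarrow K_i\psi$, which is obtained by applying (R2) to the implication and then repeatedly applying (A2) to distribute $K_i$ over the conjunction and the implication. Tracking this distribution carefully — and confirming that the single-agent fragment of (A2) in the excerpt's formulation (which is stated with $K_i\phi \meet K_i(\phi\meet\psi)\Rightarrow K_i\psi$) suffices, perhaps after noting it is interderivable with the more familiar $K_i(\phi\Rightarrow\psi)\Rightarrow(K_i\phi\Rightarrow K_i\psi)$ — is the one genuinely delicate bookkeeping step. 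Everything else (Lindenbaum's Lemma, the propositional cases of the Truth Lemma, and soundness) is routine, and I would simply cite \cite[Theorem 3.1.3]{fagin2004reasoning} for the full details while giving the structure above.
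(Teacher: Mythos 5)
Your proposal is correct and follows exactly the standard soundness-plus-canonical-model argument that the paper defers to: the thesis gives no proof of its own here, citing \cite[Theorem~3.1.3]{fagin2004reasoning}, and your soundness induction, Truth Lemma, and Existence Lemma are precisely the argument found there. Your observation that the Distribution Axiom as printed in the text, $\left(K_i \phi \meet K_i(\phi \meet \psi)\right) \Rightarrow K_i \psi$, must be read as the usual $\left(K_i \phi \meet K_i(\phi \Rightarrow \psi)\right) \Rightarrow K_i \psi$ for the Existence Lemma to go through is a correct catch of a typo rather than a gap in your argument.
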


Additional axioms distinguishing knowledge over belief, result in alternative axiom systems which correspond semantically to subclasses of $\Models_n$ \cite{fagin2004reasoning}.

\begin{itemize}
    \item[(A3)] $K_i \phi \Rightarrow \phi $ (Knowledge Axiom)
    \item[(A4)] $K_i \phi \Rightarrow K_i K_i \phi$ (Positive Introspection)
    \item[(A5)] $\neg K_i \phi \Rightarrow K_i \neg K_i \phi$ (Negative Introspection)
    \item[(A6)] $\neg K_i(\mathtt{false})$ (Consistency )
    \item[(A7)] $\phi \Rightarrow K_i \neg K_i \neg \phi$ (Monadicity)
\end{itemize}

We will not say much more about the resulting correspondences between classes of Kripke models and axiom systems except that axiom systems with more axioms than $\mathsf{K}_n$ restrict the class of allowable Kripke relation $\rel{K}_i$. For instance, the axiom system called $\mathsf{S5}_n$ ($\mathsf{K}_5$ plus A3, A4, A5) restricts $\Models_n$ to the class of models $\Models_n^{rst}$ whose Kripke relations $\rel{K}_i$ are equivalence relations. A case can be made, thus, that $\Models_n^{rst}$ models knowledge. Hence, we interpret the formula $K_i \phi$ as \emph{agent $i$ knows $\phi$}.

In another instance, the axiom system called $\mathsf{KD45}_n$ ($\mathsf{K}_5$ plus A4, A5, A6) restricts $\Models_n$ to the class of models $\Models_n^{elt}$ whose Kripke relations $\rel{K}_i$ are Euclidean, serial and transitive. $\mathsf{KD45}_n$ is reasonably interpreted as modeling belief with $K_i$ interpreted as \emph{agent $i$ believes $\phi$} and, of special note, $\neg K_i \neg \phi$ as \emph{agent $i$ doesn't disbelieve $\phi$}. With this interpretation in mind, notice the absence of A3 in $\mathsf{KD45}_n$; believing something does not make it true. 
	
\subsection{Kripke-Galois connections}

Recall from our discussion of temporal logic, intension allows you to pass from syntax and semantics. In Kripke semantics, the \define{intension} of a formula describes the semantic content of that formula, the subset of states for which a given formula is true. In general, we refer to a subset $e \subseteq \powerset{S}$ as an \define{event}.
\begin{definition}[Intension]
	Suppose $\model$ is a Kripke model and $\phi \in \lang_n(\Phi)$ a formula. Then,
	\begin{align*}
		\intension{\phi} &=& \{s \in S~\vert~(\model,s) \models \phi \}
	\end{align*}
	is the \define{intension} of $\phi$. We say $\phi$ has more \define{semantic content} than $\psi$ if $\intension{\phi} \supseteq \intension{\psi}$. We say two formulae $\phi, \psi \in \lang_n(\Phi)$ are \define{semantically equivalent} if $\intension{\phi} = \intension{\psi}$ and write $\phi \equiv \psi$.
\end{definition}
\begin{proposition}
	Given a model $\model$, semantic equivalence is an equivalence relation on $\lang_n(\AP)$.
\end{proposition}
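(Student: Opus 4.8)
The claim to prove is that semantic equivalence $\equiv$ on $\lang_n(\AP)$ (relative to a fixed Kripke model $\model$) is an equivalence relation. This is a routine verification, so the plan is essentially to unwind the definition of $\equiv$ and transport the equivalence-relation axioms for ordinary set equality across the intension map.

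\textbf{Plan.} First I would recall that $\phi \equiv \psi$ means precisely $\intension{\phi} = \intension{\psi}$, an equality of subsets of $S$. The three axioms then follow by transporting the corresponding properties of the equality relation on $\powerset{S}$:
\begin{itemize}
\item[] \emph{Reflexivity.} For any $\phi \in \lang_n(\AP)$ we have $\intension{\phi} = \intension{\phi}$ since set equality is reflexive; hence $\phi \equiv \phi$.
\item[] \emph{Symmetry.} If $\phi \equiv \psi$ then $\intension{\phi} = \intension{\psi}$, and since set equality is symmetric, $\intension{\psi} = \intension{\phi}$, i.e.\ $\psi \equiv \phi$.
\item[] \emph{Transitivity.} If $\phi \equiv \psi$ and $\psi \equiv \chi$, then $\intension{\phi} = \intension{\psi}$ and $\intension{\psi} = \intension{\chi}$; transitivity of set equality gives $\intension{\phi} = \intension{\chi}$, i.e.\ $\phi \equiv \chi$.
\end{itemize}
In other words, $\equiv$ is the pullback (kernel) of the function $\phi \mapsto \intension{\phi}$ from $\lang_n(\AP)$ to $\powerset{S}$, and the kernel of any function is an equivalence relation. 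I would likely state it in this slightly more conceptual form and then spell out the three checks for completeness.

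\textbf{Anticipated obstacle.} There is essentially no mathematical obstacle here — the only thing worth being careful about is that the model $\model$ is held fixed throughout (the relation $\equiv$ depends on $\model$, as the proposition's phrasing ``Given a model $\model$'' already signals), so that $\intension{\cdot}$ is a well-defined single-valued function on $\lang_n(\AP)$. Once that is noted, the argument is purely formal. I would keep the proof to a few lines.
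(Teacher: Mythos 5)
Your proof is correct and is exactly the argument the paper intends (the paper in fact omits the proof as immediate): since $\phi \equiv \psi$ is defined by $\intension{\phi} = \intension{\psi}$, the relation is the kernel of the map $\phi \mapsto \intension{\phi}$ and inherits reflexivity, symmetry, and transitivity from set equality. Your remark about holding $\model$ fixed so that $\intension{\cdot}$ is single-valued is the right point of care.
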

Let $\lang_n(\AP)/\simeq$ denote the set of equivalence classes of formulas where individual classes are denoted $[\phi]$.
Semantic equivalence is not to say that two formulas $\phi$ and $\psi$ are the same, but that they are the same in the context $\model$.
We now begin an investigation of how syntax and semantics interact in multi-agent logic. Let $(-)^c$ denote complement in the lattice $\powerset{S}$.

\begin{lemma}\label{lem:intension}
    Suppose $\phi, \psi \in \lang_n(\AP)$ and $\model = (S, K_1, \dots, K_n, \AP, \eval)$.  Then,
    \begin{align*}
        \intension{\phi \meet \psi} &=& \intension{\phi} \cap \intension{\psi} \\
        \intension{\neg\phi} &=& \intension{\phi}^{c} \\
        \intension{\mathtt{true}} &=& S 
    \end{align*}
\end{lemma}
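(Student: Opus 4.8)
The statement is Lemma~\ref{lem:intension}, which asserts that taking the intension of a formula commutes with conjunction (sending it to intersection), negation (sending it to set complement), and that $\intension{\mathtt{true}} = S$. My plan is a direct unfolding of the definition of intension together with the semantic clauses for $\models$ in multi-agent logic. Each of the three equalities is proved by showing mutual containment (or, more slickly, by a chain of ``if and only if'' statements at the level of an arbitrary state $s \in S$), invoking exactly the corresponding line of the inductive definition of $(\model, s) \models \phi$.

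For the first equality, I would fix $s \in S$ and argue: $s \in \intension{\phi \meet \psi}$ iff $(\model, s) \models \phi \meet \psi$ iff $(\model, s) \models \phi$ and $(\model, s) \models \psi$ (by the semantic clause for $\meet$) iff $s \in \intension{\phi}$ and $s \in \intension{\psi}$ iff $s \in \intension{\phi} \cap \intension{\psi}$. For the second, similarly: $s \in \intension{\neg \phi}$ iff $(\model, s) \models \neg\phi$ iff $(\model, s) \not\models \phi$ (semantic clause for $\neg$) iff $s \notin \intension{\phi}$ iff $s \in \intension{\phi}^c$, where $(-)^c$ is complement in $\powerset{S}$. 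For the third, every state satisfies $\mathtt{true}$ by the base clause of the semantics, so $\intension{\mathtt{true}} = S$ immediately.

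There is essentially no obstacle here: the lemma is a bookkeeping statement recording that intension $\intension{-}: \lang_n(\AP) \to \powerset{S}$ respects the Boolean structure, which is exactly what the semantic clauses were designed to encode. The only thing to be mildly careful about is to use the semantic clauses as stated in the excerpt (the displayed inductive rules just before the lemma) rather than assuming more than is given; in particular one does not need the clause for $K_i$ at all for this lemma. I would present the proof compactly as three short ``iff'' chains, one per line, and note that the first two display the functor-like behavior that will be exploited when relating syntactic operations to the order structure on $\powerset{S}$ in the subsequent Kripke--Galois discussion.

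\begin{proof}
Fix an arbitrary state $s \in S$. By definition of intension and the semantic clause for conjunction,
\begin{align*}
    s \in \intension{\phi \meet \psi} &\Leftrightarrow (\model, s) \models \phi \meet \psi \\
    &\Leftrightarrow (\model, s) \models \phi \text{ and } (\model, s) \models \psi \\
    &\Leftrightarrow s \in \intension{\phi} \text{ and } s \in \intension{\psi} \\
    &\Leftrightarrow s \in \intension{\phi} \cap \intension{\psi},
\end{align*}
which proves the first equality. Similarly, by the semantic clause for negation,
\begin{align*}
    s \in \intension{\neg \phi} &\Leftrightarrow (\model, s) \models \neg \phi \\
    &\Leftrightarrow (\model, s) \not\models \phi \\
    &\Leftrightarrow s \notin \intension{\phi} \\
    &\Leftrightarrow s \in \intension{\phi}^c,
\end{align*}
proving the second equality. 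Finally, $(\model, s) \models \mathtt{true}$ holds for every $s \in S$, so $\intension{\mathtt{true}} = S$.
\end{proof}
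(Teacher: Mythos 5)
Your proof is correct; the paper leaves this lemma as an exercise, and your three ``iff'' chains unfolding the definition of intension against the semantic clauses are exactly the intended (and essentially only) argument.
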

\begin{proof}
    Exercise.
\end{proof}

Of course, other identities relating the syntax of propositional logic and the structure of the lattice $\powerset{S}$ are easily shown. However, lattice-theoretic interpretations of Kripke modalities are much less trivial. In the following theorem, we establish a Galois connection on $\powerset{S}$ which ``respects'' semantic equivalence.

\begin{theorem}[Kripke-Galois Connection] \label{thm:semantic-connection}
Suppose $\model$ is a Kripke model $\model = S, \rel{K}_1, \dots, \rel{K}_n, \AP, \eval)$, Then, there is a family of Galois connections indexed by $i \in \{1,2,\dots,n\}$
\[\begin{tikzcd}
\powerset{S} \arrow[r,"\rel{K}_i^\exists", bend left] & \powerset{S} \arrow[l,"\rel{K}_i^\forall", bend left]
\end{tikzcd}\]{}
given by
\begin{align}
	\rel{K}_i^{\exists}(e) &=& \{t \in S~\vert~ (\exists s) \left( s \in e \meet s\rel{K}_it \right)\} \label{eq:semantic-galois}\\ 
	\rel{K}_i^{\forall}(e) &=& \{s \in S~\vert~(\forall t) \left( s \rel{K}_i t \Rightarrow t \in e \right) \} \nonumber
\end{align}
such that
\begin{align}
\rel{K}_i^{\forall}(\intension{\phi}) &=& \intension{K_i \phi} \label{eq:preserve-semantics} \\
\rel{K}_i^{\exists}(\intension{\phi}) &=& \intension{\neg K_i \neg \phi}. \nonumber
\end{align}
for all $\phi \in \lang_n(\AP)$.
\end{theorem}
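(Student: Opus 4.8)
The plan is to recognize this theorem as a direct application of Theorem~\ref{thm:cov-galois} (the covariant Galois connection induced by a relation), followed by an inductive verification of the ``semantics-preserving'' identities \eqref{eq:preserve-semantics}. First I would observe that $\rel{K}_i$ is precisely an endorelation on $S$, so setting $X = Y = S$ and $\rel{R} = \rel{K}_i$ in Theorem~\ref{thm:cov-galois} immediately yields that $(\rel{K}_i^{\exists}, \rel{K}_i^{\forall})$, defined by the formulas in \eqref{eq:semantic-galois}, is a Galois connection $\powerset{S} \to \powerset{S}$ (note the formulas in the statement match $\rel{R}_{\exists}$ and $\rel{R}_{\forall}$ verbatim). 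So the first half of the theorem requires essentially no new work beyond citing the earlier result.

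The substantive content is the pair of identities relating these set operators to the modal operators. For \eqref{eq:preserve-semantics}, I would unwind both sides directly from the semantic definition of $K_i$. We have $s \in \intension{K_i \phi}$ iff $(\model, s) \models K_i\phi$ iff for all $t$ with $s \rel{K}_i t$ we have $(\model, t) \models \phi$, i.e.\ $t \in \intension{\phi}$. Comparing with the definition $\rel{K}_i^{\forall}(\intension{\phi}) = \{s \mid (\forall t)(s\rel{K}_i t \Rightarrow t \in \intension{\phi})\}$, the two sets have the same membership condition, so they are equal. For the second identity, I would either (a) apply the first identity with $\phi$ replaced by $\neg\phi$ together with the complement identity $\intension{\neg\psi} = \intension{\psi}^c$ from Lemma~\ref{lem:intension}, and then use the fact that for a Galois connection arising from a relation the lower adjoint is the De Morgan dual of the upper adjoint (i.e.\ $\rel{K}_i^{\exists}(e) = \bigl(\rel{K}_i^{\forall}(e^c)\bigr)^c$, which follows by inspecting the quantifier structure: $\exists s(s \in e \wedge s\rel{K}_i t)$ is the negation of $\forall s(s\rel{K}_i t \Rightarrow s \notin e)$), or (b) just unwind $\intension{\neg K_i \neg \phi}$ directly: $t \in \intension{\neg K_i \neg \phi}$ iff $t \notin \intension{K_i \neg\phi}$ iff it is not the case that all $s$ with $t\rel{K}_i s$ satisfy $\neg\phi$, iff there exists $s$ with $t \rel{K}_i s$ and $(\model,s)\models\phi$, iff there exists $s \in \intension{\phi}$ with $s \rel{K}_i t$ read in the right direction --- matching $\rel{K}_i^{\exists}(\intension{\phi})$. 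I would present approach (b) for concreteness since it avoids the De Morgan bookkeeping, but note that no induction on formula complexity is actually needed here --- the identities are about a single application of the modality and follow straight from the semantic clause.

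The main (mild) obstacle is bookkeeping: keeping the direction of the relation $\rel{K}_i$ straight between the $\exists$ and $\forall$ versions, and matching the bound/free variable roles in \eqref{eq:semantic-galois} against the semantic clause for $K_i$. One subtlety worth a remark: the statement asserts these identities ``for all $\phi \in \lang_n(\AP)$,'' but the proof does not require structural induction --- it holds for \emph{any} subset $e = \intension{\phi}$, and indeed the operators $\rel{K}_i^{\forall}, \rel{K}_i^{\exists}$ are defined on all of $\powerset{S}$ regardless of whether a subset is the intension of a formula. Thus the ``respecting semantic equivalence'' phenomenon is automatic: if $\intension{\phi} = \intension{\psi}$ then trivially $\intension{K_i\phi} = \rel{K}_i^{\forall}(\intension{\phi}) = \rel{K}_i^{\forall}(\intension{\psi}) = \intension{K_i\psi}$. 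I would close by noting this gives an alternative proof that $K_i$ is well-defined on $\lang_n(\AP)/{\simeq}$ and that it is a meet-preserving (indeed upper-adjoint) operation on the lattice of events, setting up the connection to Tarski sheaves in the sequel.
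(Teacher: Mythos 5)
Your proposal follows the paper's proof essentially step for step: the Galois-connection half is delegated to Theorem~\ref{thm:cov-galois} exactly as in the paper, the identity for $\rel{K}_i^{\forall}$ is the same direct unwinding of the semantic clause for $K_i$, and your route (b) for the $\rel{K}_i^{\exists}$ identity is the same De Morgan computation the paper carries out (complement, push the negation through the quantifier, invoke Lemma~\ref{lem:intension}). Your closing remarks about the absence of structural induction and about well-definedness on $\lang_n(\AP)/{\simeq}$ are correct additions, though they are commentary rather than proof content.

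One point deserves more than the phrase ``read in the right direction.'' You correctly flag the relation's direction as the main bookkeeping hazard, but then elide it at exactly the step where it bites. Unwinding the semantics gives $t \in \intension{\neg K_i\neg\phi}$ iff there exists $s$ with $t \rel{K}_i s$ and $s \models \phi$ (quantification over \emph{successors} of $t$), whereas $\rel{K}_i^{\exists}(\intension{\phi})$ as defined in \eqref{eq:semantic-galois} asks for an $s \in \intension{\phi}$ with $s \rel{K}_i t$ (quantification over \emph{predecessors} of $t$). These two sets coincide only when $\rel{K}_i$ is symmetric; in general what you have computed is $(\rel{K}_i^{\dagger})_{\exists}(\intension{\phi})$, the lower adjoint attached to the transpose relation. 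The paper's own proof silently commutes the same two variable positions at the line identifying $\{t \mid (\forall s)\, (s\rel{K}_i t \Rightarrow s \models \neg\phi)\}$ with $\intension{K_i\neg\phi}$, so you are in good company, but a complete argument must either assume symmetry (e.g.\ work in $\Models_n^{rst}$), replace $\rel{K}_i$ by $\rel{K}_i^{\dagger}$ in one of the two displayed formulas, or restate the $\rel{K}_i^{\exists}$ identity in terms of the converse modality. Everything else in your proposal goes through.
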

\begin{proof}
    By Theorem \ref{thm:cov-galois}, the pair $\left(\rel{K}_i^{\exists}, \rel{K}_i^{\forall} \right)$ is a Galois connection. It remains to show \eqref{eq:preserve-semantics}.
    First,
    \begin{align*}
        \rel{K}_i^{\forall}( \intension{\phi} ) &=& \\
        &=& \{ s \in S~\vert~ (\forall t) \quad s \rel{K}_i t \Rightarrow t \models \phi \} \\
        &=& \intension{ K_i \phi}.
    \end{align*}
    Second,
    \begin{align*}
        \rel{K}_i^{\exists}( \intension{\phi}) &=& \\
        &=& \{t \in S~\vert~ (\exists s) \quad s \models \phi~\meet~s \rel{K}_i t \} \\
        &=& \{t \in S~\vert~\neg \left( (\forall s) \quad s \not\models \phi \join \neg(s \rel{K}_i t) \right)   \} \\
        &=& \{t \in S~\vert~\neg \left( (\forall s) \quad s \rel{K}_i t \Rightarrow s \not\models \phi \right)   \} \\
        &=& \{t \in S~\vert~\neg \left( (\forall s) \quad s \rel{K}_i t \Rightarrow s \in \neg\phi \right)   \} \\
        &=& S - \{t \in S~\vert~ (\forall s) \quad s \rel{K}_i t \Rightarrow s \in \neg\phi   \}  \\
        &=& S - \intension{K_i \neg \phi} \\
        &=& \intension{\neg K_i\neg\phi}
    \end{align*}
    where the last equality is by Lemma \ref{lem:intension}.
\end{proof}

\begin{corollary}
    The semantic content of $K_i \neg K_i \neg \phi$ is greater than the semantic content of $\phi$, whereas, the semantic content of $\neg K_i \neg K_i \phi$ is less than the semantic content of $\phi$.
\end{corollary}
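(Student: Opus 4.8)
The corollary is an immediate consequence of the Kripke--Galois connection (Theorem \ref{thm:semantic-connection}) together with the monad/comonad inequalities for Galois connections (Proposition \ref{prop:monad-comonad}). The plan is to unwind what $K_i \neg K_i \neg \phi$ and $\neg K_i \neg K_i \phi$ mean semantically, translate each into a composite of the adjoint maps $\rel{K}_i^\exists$ and $\rel{K}_i^\forall$, and then read off the desired inclusions directly from the unit/counit of the adjunction.

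First I would record, from \eqref{eq:preserve-semantics}, that $\intension{K_i \phi} = \rel{K}_i^\forall(\intension{\phi})$ and $\intension{\neg K_i \neg \phi} = \rel{K}_i^\exists(\intension{\phi})$. Applying these two identities in alternation gives
\begin{align*}
    \intension{K_i \neg K_i \neg \phi} &=& \rel{K}_i^\forall \bigl( \intension{\neg K_i \neg \phi} \bigr) &=& \rel{K}_i^\forall \rel{K}_i^\exists \bigl( \intension{\phi} \bigr), \\
    \intension{\neg K_i \neg K_i \phi} &=& \rel{K}_i^\exists \bigl( \intension{K_i \phi} \bigr) &=& \rel{K}_i^\exists \rel{K}_i^\forall \bigl( \intension{\phi} \bigr).
\end{align*}
Here I am using $\rel{K}_i^\forall$ as the upper adjoint and $\rel{K}_i^\exists$ as the lower adjoint of the Galois connection $(\rel{K}_i^\exists, \rel{K}_i^\forall): \powerset{S} \to \powerset{S}$. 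The one mild subtlety to be careful about is matching the direction of adjointness correctly: in Theorem \ref{thm:semantic-connection}, $\rel{K}_i^\exists$ is the lower adjoint (it is join-preserving, coming from $\rel{R}_\exists$ in Theorem \ref{thm:cov-galois}) and $\rel{K}_i^\forall$ is the upper adjoint, so $\rel{K}_i^\forall \rel{K}_i^\exists$ plays the role of $\radj{f}\ladj{f}$ and $\rel{K}_i^\exists \rel{K}_i^\forall$ plays the role of $\ladj{f}\radj{f}$.

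Then by Proposition \ref{prop:monad-comonad}, $\radj{f}\ladj{f} \succeq \id$ and $\ladj{f}\radj{f} \preceq \id$, which here read $\rel{K}_i^\forall \rel{K}_i^\exists(\intension{\phi}) \supseteq \intension{\phi}$ and $\rel{K}_i^\exists \rel{K}_i^\forall(\intension{\phi}) \subseteq \intension{\phi}$. Substituting the computations above yields $\intension{K_i \neg K_i \neg \phi} \supseteq \intension{\phi}$ and $\intension{\neg K_i \neg K_i \phi} \subseteq \intension{\phi}$, which is exactly the statement that $K_i \neg K_i \neg \phi$ has greater semantic content than $\phi$ while $\neg K_i \neg K_i \phi$ has lesser semantic content. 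There is essentially no obstacle here --- the content is entirely in setting up the correspondence in the previous paragraph; the ``hard part,'' such as it is, is just bookkeeping to make sure the two formulas are parsed correctly (e.g.\ $K_i \neg K_i \neg \phi$ being $K_i$ applied to $\neg K_i \neg \phi$, not $\neg K_i \neg (K_i \phi)$) and that the adjoint roles are not swapped. I would also remark that one could alternatively note $\rel{K}_i^\forall \rel{K}_i^\exists$ is a closure operator and $\rel{K}_i^\exists \rel{K}_i^\forall$ a coclosure operator (Corollary after Proposition \ref{prop:mono-epi}), but invoking Proposition \ref{prop:monad-comonad} directly is the cleanest route.
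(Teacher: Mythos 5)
Your proposal is correct and follows essentially the same route as the paper: translate the two formulas into the composites $\rel{K}_i^{\forall}\rel{K}_i^{\exists}$ and $\rel{K}_i^{\exists}\rel{K}_i^{\forall}$ applied to $\intension{\phi}$ via the identities of Theorem \ref{thm:semantic-connection}, then invoke the unit/counit inequalities of Proposition \ref{prop:monad-comonad}. Your citation of \eqref{eq:preserve-semantics} for the intension identities is, if anything, the more precise reference (the paper points to Lemma \ref{lem:intension}, which only covers $\meet$, $\neg$, and $\mathtt{true}$), and your identification of $\rel{K}_i^{\exists}$ as lower adjoint and $\rel{K}_i^{\forall}$ as upper adjoint is the right bookkeeping.
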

\begin{proof}
    By Proposition \ref{prop:monad-comonad},
    \begin{align*}
        \rel{K}_i^{\forall} \rel{K}_i^{\exists}(\intension{\phi}) &\supseteq& \intension{\phi} \\
        \rel{K}_i^{\exists} \rel{K}_i^{\forall}(\intension{\phi}) &\subseteq& \intension{\phi}.
    \end{align*}
    Furthermore,
    \begin{align*}
        \rel{K}_i^{\forall} \rel{K}_i^{\exists}(\intension{\phi}) &=& \intension{K_i \neg K_i \neg \phi} \\
        \rel{K}_i^{\exists} \rel{K}_i^{\forall}(\intension{\phi}) &=& \intension{\neg K_i \neg K_i \phi}
    \end{align*}
    by Lemma \ref{lem:intension}.
\end{proof}
\noindent Thus, applying $\rel{K}_i^{\forall} \rel{K}_i^{\exists}$ or $\rel{K}_i^{\exists} \rel{K}_i^{\forall}$ increases or decreases the information content, respectively. It may not be obvious, but the reader may check for herself than $\rel{K}^{\forall}_i \rel{K}^{\exists}_j$ in general neither increases or decreases semantic content.

\begin{example}[Alice, Bob \& Eve]\label{eg:alice-bob-eve}
Consider a network $\graph{G}$ with three agents (Alice, Bob and Eve) represented by nodes $i,j,k$ respectively. Suppose Alice and Bob can communicate $ij \in \edges{G}$, Bob and Eve can communicate $jk \in \edges{G}$, but not Alice and Eve $ik \not \in \edges{G}$. There are three states in our system $S = \{r,s,t\}$. The Kripke relations $\rel{K}_i, \rel{K}_j, \rel{K}_k$ for Alice, Bob and Eve are depicted by the following directed graphs.
\[\begin{tikzcd}
	& {\mathcal{K}_i} && {\Large\circlearrowleft^s} \\
	& {\Large\circlearrowright^r} &&&& {\Large\circlearrowleft^t} \\
	& {\mathcal{K}_j} && {\Large\circlearrowleft^s} \\
	{{}} & {\Large\circlearrowright^r} &&&& {\Large\circlearrowleft^t} \\
	& {\mathcal{K}_k} && {\Large\circlearrowright^s} \\
	& {\Large\circlearrowright^r} &&&& {\Large\circlearrowright^t}
	\arrow[shift right=1, from=2-2, to=1-4]
	\arrow[shift right=1, from=1-4, to=2-2]
	\arrow[shift left=1, from=1-4, to=2-6]
	\arrow[shift left=1, from=2-6, to=1-4]
	\arrow[shift left=1, from=4-2, to=4-6]
	\arrow[shift left=1, from=4-6, to=4-2]
	\arrow[from=6-2, to=6-6]
	\arrow[from=6-6, to=5-4]
	\arrow[from=5-4, to=6-2]
\end{tikzcd}\]
Each relation induces a Galois connection, summarized in the following table.

\begin{tabular}{l|ll|ll|ll}
$e$         & $\mathcal{K}^{\exists}_i(e)$ & $\mathcal{K}_i^{\forall}(e)$ & $\mathcal{K}_j^{\exists}(e)$ & $\mathcal{K}_j^{\forall}(e)$ & $\mathcal{K}_k^{\exists}(e)$ & $\mathcal{K}_k^{\forall}(e)$ \\
\hline
$\emptyset$ & $\emptyset$                  & $\emptyset$                  & $\emptyset$                  & $\emptyset$                  & $\emptyset$                  & $\emptyset$                  \\
$\{r\}$     & $\{r,s\}$                    & $\emptyset$                  & $\{r,t\}$                    & $\emptyset$                  & $\{r,t\}$                    & $\emptyset$                  \\
$\{s\}$     & $\{r,s,t\}$                  & $\emptyset$                  & $\{s\}$                      & $\{s\}$                      & $\{r,s\}$                    & $\emptyset$                  \\
$\{t\}$     & $\{s,t\}$                    & $\emptyset$                  & $\{r,t\}$                    & $\emptyset$                  & $\{s,t\}$                    & $\emptyset$                  \\
$\{r,s\}$   & $\{r,s,t\}$                  & $\{r\}$                      & $\{r,s,t\}$                  & $\{s\}$                  & $\{r,s,t\}$                  & $\{s\}$                      \\
$\{r,t\}$   & $\{r,s,t\}$                  & $\{r,s,t\}$                  & $\{r,t\}$                    & $\{r,t\}$                    & $\{r,s,t\}$                  & $\{r\}$                      \\
$\{s,t\}$   & $\{r,s,t\}$                  & $\{r,s,t\}$                      & $\{r,s,t\}$                  & $\{r,s,t\}$                      & $\{r,s,t\}$                  & $\{t\}$                      \\
$\{r,s,t\}$ & $\{r,s,t\}$                  & $\{r,s,t\}$                  & $\{r,s,t\}$                  & $\{r,s,t\}$                  & $\{r,s,t\}$                  & $\{r,s,t\}$                 
\end{tabular}
\end{example}

\subsection{Kripke sheaves}

We now introduce a class of lattice-valued network sheaves and cosheaves whose stalks are powersets representing the semantic content of logical propositions and whose restriction and corestriction maps are Kripke-Galois connections.

\begin{definition}[Kripke Sheaves] \label{def:kripke-sheaf}
    Suppose $\graph{G}$ is a graph with nodes labeled $\nodes{G} = \{1,2, \dots, n\}$. Suppose $M = (S, \rel{K}_1, \dots, \rel{K}_n, \Phi, \pi)$ is a Kripke model. A \define{Kripke bisheaf} is a Tarski sheaf $\bisheaf{F}_{\model}$ characterized by the following.
    \begin{enumerate}
        \item Each stalk $\bisheaf{F}_{\model}(i)$ is the powerset $\powerset{S}$.
        \item Each stalk $\bisheaf{F}_{\model}(ij)$ is the powerset $\powerset{S}$.
        \item $\sheaf{F}_{\model}(i \fc ij) = \rel{K}_i^{\exists}$ for all $i \in \nodes{G}, j \in \nbhd{i}$.
        \item $\cosheaf{F}_{\model}(i \fc ij) = \rel{K}_i^{\forall}$ for all $i \in \nodes{G}, j \in \nbhd{i}$.
    \end{enumerate}
\end{definition}
\noindent The $0$-cochains of $\sheaf{F}_{\model}$ are the product $C^0(\graph{G}; \sheaf{F}_{\model}) = \powerset{S}^n$. For cochains, thereby passing from syntax to semantics, let
\begin{align*}
    \intension{\boldsymbol{\phi}} &=& \left( \intension{\phi_i}\right)_{i \in \nodes{G}} \in \powerset{S}^n
\end{align*}
 for a tuple of formulas $\boldsymbol{\phi} \in \lang_{n}(\Phi)^n$.
\noindent Let's consider some special cases.
\begin{itemize}
    \item Suppose $\rel{K}_i$ is the identity relation on $S$; $s \rel{K}_i s$ for all $s \in S$, no other elements are in $\rel{K}_i$. Then, $(\rel{K}_i^{\exists}, \rel{K}_i^{\forall}) = (\id, \id)$ and $\sheaf{F}_{\model}$ is the constant sheaf $\underline{\powerset{S}}$. Note, in this case, $\intension{\phi} = \intension{K_i \phi}$ so that the constant sheaf reduces a network sheaf model of the semantics for ordinary propositional logic.
    \item Suppose $\rel{K}_i$ is a (possibly different) equivalence relation for each $i$, so that $\Models_n^{rst}$ is a class of models for $\mathsf{S5}_n$. The restriction map $\rel{K}^{\exists}_i$ sends an event $e$ to the union of equivalence classes of elements of $e$. In particular, $\rel{K_{\exists}}$ sends equivalence classes to equivalence classes. The corestriction map $\rel{K}_i^{\forall}$ sends an event $e$ to the set of elements of $S$ whose equivalence class is contained in $e$.
\end{itemize}

As our notation suggests, $0$-cochains model tuples of formulas through their intension. We investigate the sections of this class of sheaves.
\begin{proposition} \label{prop:sections-intension}
    Suppose $\bisheaf{F}_{\model}$ is a Kripke sheaf with the above data. Suppose $\boldsymbol{\phi} \in \lang_{n}(\Phi)^n$ is a tuple of formulas indexed by the nodes of $\graph{G}$. Then, $\intension{\boldsymbol{\phi}} \in H^0(\graph{G}; \sheaf{F}_{\model})$ if and only if
    \begin{align*}
        K_i \neg \phi_i \equiv K_j \neg \phi_j \quad \forall ij \in \edges{G.}
    \end{align*}
\end{proposition}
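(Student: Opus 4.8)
The plan is to unravel the definition of a global section of $\bisheaf{F}_{\model}$ and translate it into a statement about formulas via Theorem \ref{thm:semantic-connection}. Recall that by the Global Sections Theorem and the description of $H^0$ as an equalizer, $\intension{\boldsymbol{\phi}} \in H^0(\graph{G}; \sheaf{F}_{\model})$ precisely when
\begin{align*}
	\sheaf{F}_{\model}(i \fc ij)(\intension{\phi_i}) &=& \sheaf{F}_{\model}(j \fc ij)(\intension{\phi_j})
\end{align*}
for every edge $ij \in \edges{G}$. By Definition \ref{def:kripke-sheaf}, the restriction map $\sheaf{F}_{\model}(i \fc ij)$ is exactly the lower adjoint $\rel{K}_i^{\exists}$, so this condition reads $\rel{K}_i^{\exists}(\intension{\phi_i}) = \rel{K}_j^{\exists}(\intension{\phi_j})$ for all $ij \in \edges{G}$.

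Next I would invoke the semantic identity \eqref{eq:preserve-semantics} from Theorem \ref{thm:semantic-connection}, which states $\rel{K}_i^{\exists}(\intension{\phi}) = \intension{\neg K_i \neg \phi}$. Applying this on both sides gives that the section condition is equivalent to
\begin{align*}
	\intension{\neg K_i \neg \phi_i} &=& \intension{\neg K_j \neg \phi_j} \quad \forall ij \in \edges{G}.
\end{align*}
Taking complements in $\powerset{S}$ (using $\intension{\neg \psi} = \intension{\psi}^c$ from Lemma \ref{lem:intension}, which is a bijection on $\powerset{S}$) this holds if and only if $\intension{K_i \neg \phi_i} = \intension{K_j \neg \phi_j}$ for every edge, which is by definition exactly the semantic equivalence $K_i \neg \phi_i \equiv K_j \neg \phi_j$ for all $ij \in \edges{G}$. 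Chaining these equivalences establishes the proposition.

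I do not expect a serious obstacle here: the proof is essentially a bookkeeping exercise composing the definitions of Kripke sheaves, the equalizer characterization of $H^0$, and the two semantic identities already proved. The one point requiring a little care is making sure that the equalizer condition is stated in terms of the \emph{restriction} maps (lower adjoints $\rel{K}_i^{\exists}$) rather than the corestriction maps, and that the passage through complements is legitimate because complementation is an order-reversing bijection on the Boolean lattice $\powerset{S}$; both are immediate. An alternative route would phrase the section condition directly via the Tarski--Hodge Theorem (Theorem \ref{thm:main}) and the parallel transport characterization $\Parallel{F}{j \to i}(\intension{\phi_j}) \succeq \intension{\phi_i}$, but the equalizer formulation is cleaner and avoids introducing parallel transport unnecessarily.
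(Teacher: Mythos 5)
Your proof is correct and follows essentially the same route as the paper's: both reduce the section condition to $\rel{K}_i^{\exists}(\intension{\phi_i}) = \rel{K}_j^{\exists}(\intension{\phi_j})$, invoke the identity $\rel{K}_i^{\exists}(\intension{\phi}) = \intension{\neg K_i \neg \phi}$ from Theorem \ref{thm:semantic-connection}, and pass through complementation in $\powerset{S}$ to reach $\intension{K_i \neg \phi_i} = \intension{K_j \neg \phi_j}$. Your version is in fact slightly more direct, since the paper takes an extra detour rewriting everything in terms of $\rel{K}_i^{\forall}$ before complementing.
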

\begin{proof}
    Recall, by the global sections criterion, $\mathbf{e} \in \sections{\graph{G}; \sheaf{F}_{\model}}$ if and only 
    \begin{align*}
        \sheaf{F}_{\model}(i \fc ij)(e_i) &=& \sheaf{F}_{\model}(j \fc ij) \quad \forall ij \in \edges{G}.
    \end{align*}
    Here, with $e_i$ defined to equal $\intension{\phi_i}$ for some formula $\phi_i \in \lang_n(\AP)$, so that
    \begin{align*}
        \rel{K}_i^{\exists}(\intension{\phi_i}) &=& \rel{K}_j^{\exists}(\intension{\phi_j}) \quad \forall ij \in \edges{G}.
    \end{align*}
    By Lemma \ref{lem:intension} and Theorem \ref{thm:semantic-connection},
    \begin{align*}
        \rel{K}_i^{\exists}(\intension{\phi_i}) &=& \intension{\neg K_i \neg \phi_i} \\
        &=& \intension{K_i \neg \phi_i}^c \\
        &=& \rel{K}_i^{\forall}\left(\intension{\neg \phi_i}\right)^c \\
        &=& \rel{K}_i^{\forall}\left( \intension{\phi_i}^c \right)^c
    \end{align*}
    This implies,
    \begin{align*}
        \rel{K}_i^{\forall}\left( \intension{\phi_i}^c \right)^c &=& \rel{K}_j^{\forall}\left( \intension{\phi_j}^c \right)^c,
    \end{align*}
    which, taking complements of both sides and applying $\intension{\phi}^c = \neg \intension{\phi}$ again, yields
    \begin{align*}
        \rel{K}_i^{\forall}\left( \intension{\neg \phi_i} \right) &=& \rel{K}_j^{\forall}\left( \intension{\neg \phi_j} \right).
    \end{align*}
    Hence,
    \begin{align*}
        \intension{K_i \neg \phi_i} &=& \intension{K_j \neg \phi_j}.
    \end{align*}
\end{proof}
\noindent The proof of Proposition \ref{prop:sections-intension} suggest that under certain conditions, $K_i \phi_i$ is semantically equivalent to $K_j \phi_j$ as well.
\begin{proposition}\label{prop:sections-intension-2}
    Suppose, in addition to the above assumptions, that $\rel{K}_i$ is serial for all $i \in \nodes{G}$. Then, $\intension{\phi} \in H^0(\graph{G}; \sheaf{F}_{\model})$ if and only if
    \begin{align*}
        K_i \phi_i \equiv K_j \phi_j.
    \end{align*}
\end{proposition}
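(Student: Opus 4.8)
The plan is to build directly on Proposition \ref{prop:sections-intension}, whose proof already establishes that $\intension{\boldsymbol{\phi}} \in H^0(\graph{G}; \sheaf{F}_{\model})$ is equivalent to the family of semantic equivalences $K_i \neg \phi_i \equiv K_j \neg \phi_j$ for every edge $ij$. So the only thing left to do is to promote this ``negated'' statement to the statement about $K_i \phi_i \equiv K_j \phi_j$ under the extra hypothesis of seriality. The key observation is that for a serial Kripke relation $\rel{K}_i$ we have $\rel{K}_i^{\exists} = \neg \circ \rel{K}_i^{\forall} \circ \neg$ and also the dual identity $\rel{K}_i^{\forall} = \neg \circ \rel{K}_i^{\exists} \circ \neg$, so that the data of the connection $(\rel{K}_i^{\exists}, \rel{K}_i^{\forall})$ determines, and is determined by, the ``de Morgan dual'' modal operator. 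Seriality is exactly the condition that guarantees that knowing something is nontrivial, i.e. that $K_i$ behaves symmetrically enough with respect to negation for this to work at the level of intensions.

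First I would recall, from the chain of equalities in the proof of Proposition \ref{prop:sections-intension}, that the section condition $\rel{K}_i^{\exists}(\intension{\phi_i}) = \rel{K}_j^{\exists}(\intension{\phi_j})$ is what we are manipulating. The proof of \ref{prop:sections-intension} runs this through the identity $\rel{K}_i^{\exists}(\intension{\phi_i}) = \rel{K}_i^{\forall}(\intension{\phi_i}^c)^c$, which holds unconditionally, to land on $\intension{K_i \neg \phi_i} = \intension{K_j \neg \phi_j}$. To instead land on $\intension{K_i \phi_i} = \intension{K_j \phi_j}$, I would run the \emph{other} de Morgan identity: under seriality, $\rel{K}_i^{\forall}(\intension{\phi_i}) = \rel{K}_i^{\exists}(\intension{\phi_i}^c)^c = \rel{K}_i^{\exists}(\intension{\neg \phi_i})^c$. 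Applying this with $\phi_i$ replaced by $\phi_i$ directly, the section condition $\rel{K}_i^{\exists}(\intension{\neg\phi_i}) = \rel{K}_j^{\exists}(\intension{\neg\phi_j})$ — which is equivalent to the original section condition since $\boldsymbol{\phi}$ ranges over all tuples and we may substitute $\neg\phi$ for $\phi$, or more directly since both conditions are equivalent to membership in $H^0$ by \ref{prop:sections-intension} applied to the tuple $\neg \boldsymbol\phi$ — then transports to $\rel{K}_i^{\forall}(\intension{\phi_i}) = \rel{K}_j^{\forall}(\intension{\phi_j})$, i.e. $\intension{K_i \phi_i} = \intension{K_j \phi_j}$. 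The cleanest packaging is: by Proposition \ref{prop:sections-intension} applied to the tuple $(\neg\phi_1, \dots, \neg\phi_n)$, $\intension{\neg\boldsymbol\phi} \in H^0$ iff $K_i\neg\neg\phi_i \equiv K_j \neg\neg\phi_j$; then use that $\neg\neg\phi \equiv \phi$ and that $\intension{\boldsymbol\phi} \in H^0 \Leftrightarrow \intension{\neg\boldsymbol\phi}\in H^0$ (the latter needing seriality to make the restriction maps respect the complement-swap cleanly), to conclude.

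The main obstacle, and the step deserving genuine care, is justifying the equivalence ``$\intension{\boldsymbol\phi} \in H^0 \Leftrightarrow \intension{\neg\boldsymbol\phi} \in H^0$'' and more precisely the de Morgan identity $\rel{K}_i^{\forall}(e) = \rel{K}_i^{\exists}(e^c)^c$ for serial $\rel{K}_i$. One direction of $\subseteq$-style reasoning is unconditional, but equality genuinely uses seriality: if $s$ has no $\rel{K}_i$-successor then $s \in \rel{K}_i^{\forall}(e)$ vacuously for every $e$, breaking the identity, whereas seriality forbids exactly this. I would verify the identity by a direct element chase: $s \in \rel{K}_i^{\exists}(e^c)^c$ iff it is \emph{not} the case that some $\rel{K}_i$-predecessor... — actually I must be careful about the direction of the relation in \eqref{eq:semantic-galois}, where $\rel{K}_i^{\exists}$ uses $s\rel{K}_i t$ with $s \in e$, so the dualization is with respect to the \emph{converse} relation; seriality of $\rel{K}_i$ (``every state has a successor'') is the relevant hypothesis for $\rel{K}_i^{\forall}$, and I would spell out that $s \in \rel{K}_i^{\forall}(e)$, i.e. all $t$ with $s\rel{K}_i t$ lie in $e$, is equivalent, when such $t$ exist, to: not all $t$ with $s \rel{K}_i t$ lie in $e^c$, which is $s \notin \rel{K}_i^{\forall}(e^c)$ — wait, that conflates $\forall$ with its own dual. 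The honest version: $\rel{K}_i^{\forall}(e) = (\rel{K}_i^{\exists,\dagger}(e^c))^c$ where $\dagger$ is the converse; since \ref{thm:cov-galois}/\ref{thm:contra-galois} and the remark $\galdown{\rel{R}} = \galup{\rel{R}^\dagger}$ already record how converses interact with these constructions, I would cite that machinery rather than re-derive it, and isolate seriality as the one new input. Once that identity is in hand, the rest is a two-line substitution into the proof of Proposition \ref{prop:sections-intension}.
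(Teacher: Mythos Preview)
Your route diverges from the paper's and carries a genuine gap. The paper never mixes $\rel{K}_i^{\exists}$ and $\rel{K}_i^{\forall}$ via de Morgan; it stays entirely on the $\rel{K}_i^{\forall}$ side. First a one-line lemma shows that seriality of $\rel{K}_i$ forces $\rel{K}_i^{\forall}(\emptyset)=\emptyset$. Then, using only that $\rel{K}_i^{\forall}$ is a right adjoint and hence preserves binary intersections,
\[
\rel{K}_i^{\forall}\bigl(\intension{\phi_i}\bigr)\cap\rel{K}_i^{\forall}\bigl(\intension{\phi_i}^c\bigr)
=\rel{K}_i^{\forall}\bigl(\intension{\phi_i}\cap\intension{\phi_i}^c\bigr)
=\rel{K}_i^{\forall}(\emptyset)=\emptyset,
\]
from which the paper extracts the claim $\rel{K}_i^{\forall}(\intension{\phi_i}^c)=\rel{K}_i^{\forall}(\intension{\phi_i})^c$, i.e.\ $\intension{K_i\neg\phi_i}=\intension{K_i\phi_i}^c$. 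Taking complements on both sides of the conclusion of Proposition~\ref{prop:sections-intension} then yields $\intension{K_i\phi_i}=\intension{K_j\phi_j}$ directly.

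The step you are missing is precisely this complement identity, and neither of your packagings supplies it. The formula $\rel{K}_i^{\forall}(e)=\bigl(\rel{K}_i^{\exists}(e^c)\bigr)^c$ you reach for is \emph{false} in general: unwinding \eqref{eq:semantic-galois}, the right-hand side is $\{t:\text{every }\rel{K}_i\text{-predecessor of }t\text{ lies in }e\}$, while the left-hand side is $\{s:\text{every }\rel{K}_i\text{-successor of }s\text{ lies in }e\}$; these coincide only for symmetric $\rel{K}_i$, and seriality is irrelevant here. You correctly diagnose that the honest dual involves $\rel{K}_i^{\dagger}$, but the sheaf's restriction maps are $\rel{K}_i^{\exists}$, not $(\rel{K}_i^{\dagger})^{\exists}$, so that machinery does not connect back to $H^0$. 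Your alternative ``apply Proposition~\ref{prop:sections-intension} to $\neg\boldsymbol\phi$'' requires $\intension{\boldsymbol\phi}\in H^0\Leftrightarrow\intension{\neg\boldsymbol\phi}\in H^0$, which unwinds to $\rel{K}_i^{\exists}(e_i)=\rel{K}_j^{\exists}(e_j)\Leftrightarrow\rel{K}_i^{\exists}(e_i^c)=\rel{K}_j^{\exists}(e_j^c)$; this is a complement-commutation for $\rel{K}_i^{\exists}$, a predecessor condition again unrelated to seriality. The paper's point is that the complement argument must be run on the $\rel{K}_i^{\forall}$ side, where seriality (a successor condition) is exactly the hypothesis that makes $\rel{K}_i^{\forall}(\emptyset)=\emptyset$.
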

\noindent We will need a lemma.
\begin{lemma}\label{lem:forall-empty}
    Suppose $\rel{K}_i$ is serial. Then, $\rel{K}_i^{\forall}(\emptyset) = \emptyset$.
\end{lemma}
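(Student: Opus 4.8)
\textbf{Proof plan for Lemma \ref{lem:forall-empty}.}

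The plan is to unwind the definition of $\rel{K}_i^{\forall}$ applied to the empty event and use seriality to derive a contradiction from any purported element. Recall from \eqref{eq:semantic-galois} that
\begin{align*}
    \rel{K}_i^{\forall}(e) &=& \{ s \in S~\vert~(\forall t)\left( s \rel{K}_i t \Rightarrow t \in e \right) \}.
\end{align*}
First I would take $e = \emptyset$ and suppose, for contradiction, that some $s \in \rel{K}_i^{\forall}(\emptyset)$. Then for every $t \in S$ with $s \rel{K}_i t$ we would have $t \in \emptyset$, which is absurd unless there is no such $t$. But seriality of $\rel{K}_i$ (Definition of serial: for every $x \in S$ there exists $y \in S$ with $x \rel{K}_i y$) guarantees there exists at least one $t \in S$ with $s \rel{K}_i t$. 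This $t$ would then satisfy $t \in \emptyset$, a contradiction. Hence $\rel{K}_i^{\forall}(\emptyset)$ has no elements, i.e.~$\rel{K}_i^{\forall}(\emptyset) = \emptyset$.

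Equivalently, and perhaps more transparently in the logical language, one can observe that $\rel{K}_i^{\forall}(\intension{\mathtt{false}}) = \intension{K_i(\mathtt{false})}$ by \eqref{eq:preserve-semantics} together with Lemma \ref{lem:intension} (which gives $\intension{\mathtt{false}} = \emptyset$), and that seriality is precisely the semantic condition corresponding to the Consistency axiom (A6), $\neg K_i(\mathtt{false})$; so $\intension{K_i(\mathtt{false})} = \emptyset$. Either route is a one-line argument once the definitions are in place.

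There is essentially no obstacle here: the only subtlety is making sure the quantifier in the definition of $\rel{K}_i^{\forall}$ is read correctly — the condition on $s$ is \emph{vacuously false} (not vacuously true) exactly when $s$ has a $\rel{K}_i$-successor, and seriality forces every $s$ to have one. I would present the contradiction argument as the primary proof since it needs no appeal to the syntax–semantics dictionary. With the lemma in hand, the proof of Proposition \ref{prop:sections-intension-2} would then proceed by noting that $\rel{K}_i^{\forall}$ applied to a section equation forces the relevant intensions of $K_i \phi_i$ and $K_j \phi_j$ to agree (bootstrapping from Proposition \ref{prop:sections-intension}), using $\rel{K}_i^{\forall}(\emptyset) = \emptyset$ to rule out the degenerate case where an intension collapses.
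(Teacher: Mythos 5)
Your proof is correct, but it takes a different route from the paper's. You unwind the explicit formula $\rel{K}_i^{\forall}(e) = \{s \mid (\forall t)(s \rel{K}_i t \Rightarrow t \in e)\}$ and observe that membership of $s$ in $\rel{K}_i^{\forall}(\emptyset)$ is equivalent to $s$ having no $\rel{K}_i$-successor, which seriality forbids --- a direct, element-level argument. The paper instead invokes the adjoint formula from Theorem \ref{thm:adjoint-functor-theorem}, writing $\rel{K}_i^{\forall}(\emptyset) = \bigcup\{e \subseteq S \mid \rel{K}_i^{\exists}(e) = \emptyset\}$, and then uses seriality to show that $\rel{K}_i^{\exists}(e) = \emptyset$ only when $e = \emptyset$, so the union collapses to $\emptyset$. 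The two arguments are dual faces of the same fact: the paper's version stays within the Galois-connection machinery the chapter is built on (seriality appears as the statement that $\rel{K}_i^{\exists}$ reflects emptiness), while yours is shorter, needs no appeal to the adjunction, and makes the role of the quantifier explicit --- your parenthetical about the condition being vacuously satisfied exactly when $s$ lacks a successor is precisely the right point to emphasize. Your alternative remark via $\intension{K_i(\mathtt{false})}$ and axiom (A6) is also sound and connects nicely to the modal-logic reading, though the paper does not take that route.
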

\begin{proof}[Proof of Lemma \ref{lem:forall-empty}]
By Proposition \ref{thm:adjoint-functor-theorem} and Theorem \ref{thm:semantic-connection},
\begin{align*}
    \rel{K}_{\forall}(\emptyset) &=& \bigcup \{e \subseteq S ~\vert~ \rel{K}_i^{\exists}(e) = \emptyset \}.
\end{align*}
Because $\rel{K}_i$ is serial, for every $s \in S$, there is at least one $t \in S$ with $s \rel{K}_i t$. Hence, $\rel{K}_i^{\exists}(e) = \emptyset$ if and only if $e = \emptyset$. Thus, $\rel{K}_{\forall}(\emptyset) = \emptyset$.
\end{proof}
\begin{proof}[Proof of Proposition \ref{prop:sections-intension-2}]
By Proposition \ref{prop:sections-intension}, $\intension{\boldsymbol{\phi}}$ is a section if and only if $\intension{K_i \neg \phi_i} = \intension{K_j \neg \phi_j}$ for all $ij \in \edges{G}$, which holds if and only if $\intension{K_i \phi_i} = \intension{K_j \phi_j}$ for all $ij \in \edges{G}$ by the following.
\begin{claim}
    $\rel{K}_i^{\forall}(\intension{\phi_i}^c) = \rel{K}_i^{\forall}(\intension{\phi})^c$.
\end{claim}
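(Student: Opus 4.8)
The claim to prove is that $\rel{K}_i^{\forall}(\intension{\phi_i}^c) = \rel{K}_i^{\forall}(\intension{\phi_i})^c$ under the standing hypotheses, crucially that $\rel{K}_i$ is serial (and it will turn out that seriality is precisely what is needed here). Write $e = \intension{\phi_i}$, an arbitrary event. Then the claim is purely order-theoretic: for a serial relation, the ``box'' operator $\rel{K}_i^{\forall}$ commutes with complementation. The plan is to expand both sides directly from the definition $\rel{K}_i^{\forall}(e) = \{s \mid (\forall t)\ s \rel{K}_i t \Rightarrow t \in e\}$ given in Theorem \ref{thm:semantic-connection} and compare.

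First I would compute $\rel{K}_i^{\forall}(e^c) = \{s \mid (\forall t)\ s \rel{K}_i t \Rightarrow t \notin e\}$, which says: $s$ has \emph{no} $\rel{K}_i$-successor lying in $e$. Next, $\rel{K}_i^{\forall}(e)^c = \{s \mid \neg(\forall t)\ (s \rel{K}_i t \Rightarrow t \in e)\} = \{s \mid (\exists t)\ s \rel{K}_i t \wedge t \notin e\}$, which says: $s$ \emph{has} a $\rel{K}_i$-successor outside $e$. These two are equal exactly when, for every $s$, ``$s$ has no successor in $e$'' is equivalent to ``$s$ has a successor outside $e$.'' Seriality supplies this: since $s$ has at least one successor $t$, either $t \in e$ (so $s$ has a successor in $e$) or $t \notin e$ (so $s$ has a successor outside $e$); one checks both implications. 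Concretely, if $s \in \rel{K}_i^{\forall}(e^c)$ then every successor of $s$ is outside $e$, and by seriality there is at least one such successor, so $s \in \rel{K}_i^{\forall}(e)^c$; conversely if $s \in \rel{K}_i^{\forall}(e^c)^c$, i.e. $s$ has a successor $t \in e$, then that same $t$ witnesses $s \notin \rel{K}_i^{\forall}(e)^c$ — wait, this needs care, so I would instead argue the contrapositive cleanly: $s \notin \rel{K}_i^{\forall}(e^c)$ means $s$ has a successor $t \in e$; I want to conclude $s \notin \rel{K}_i^{\forall}(e)^c$, i.e. $s \in \rel{K}_i^{\forall}(e)$, i.e. every successor of $s$ is in $e$ — which is \emph{not} implied. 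So the naive direct comparison fails, and the right route is different.

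The correct approach: recall from Theorem \ref{thm:semantic-connection} that $\rel{K}_i^{\exists}(e) = \intension{\neg K_i \neg \phi}$-style formula $\{t \mid (\exists s)\ s \in e \wedge s \rel{K}_i t\}$, and observe the De Morgan duality $\rel{K}_i^{\forall}(e) = \rel{K}_i^{\exists}(e^c)^c$ holding for \emph{any} relation (pure set theory, no seriality). Then $\rel{K}_i^{\forall}(e^c) = \rel{K}_i^{\exists}(e)^c$, so the claim $\rel{K}_i^{\forall}(e^c) = \rel{K}_i^{\forall}(e)^c$ becomes $\rel{K}_i^{\exists}(e)^c = \rel{K}_i^{\exists}(e^c)^{c\,c} = \rel{K}_i^{\exists}(e^c)$, i.e. $\rel{K}_i^{\exists}(e)^c = \rel{K}_i^{\exists}(e^c)$, i.e. $\rel{K}_i^{\exists}$ commutes with complementation. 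But $\rel{K}_i^{\exists}(e) = \{t \mid t \text{ has a } \rel{K}_i\text{-predecessor in } e\}$; its complement is $\{t \mid \text{every } \rel{K}_i\text{-predecessor of } t \text{ lies in } e^c\}$, whereas $\rel{K}_i^{\exists}(e^c) = \{t \mid t \text{ has a predecessor in } e^c\}$. Hmm — these match only if the \emph{converse} relation is serial. So the hypothesis that makes this work in the sheaf proof must be that each $\rel{K}_i$ is an equivalence relation (the $\mathsf{S5}_n$/$\Models_n^{rst}$ setting), under which $\rel{K}_i$ is symmetric, so predecessors and successors coincide and seriality of $\rel{K}_i$ gives seriality of its converse. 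The main obstacle, then, is identifying the correct hypothesis: I expect the cleanest statement uses symmetry (or, in the ambient Proposition \ref{prop:sections-intension-2}, reflexivity which implies seriality and combines with symmetry). I would therefore (i) prove the general De Morgan identity $\rel{K}_i^{\forall}(e) = \rel{K}_i^{\exists}(e^c)^c$ from the formulas in Theorem \ref{thm:cov-galois}/\ref{thm:semantic-connection}, (ii) reduce the claim to $\rel{K}_i^{\exists}(e^c) = \rel{K}_i^{\exists}(e)^c$, and (iii) discharge (ii) using symmetry of $\rel{K}_i$ together with seriality (Lemma \ref{lem:forall-empty}'s hypothesis), via the pointwise argument that for each $t$, exactly one of ``$t$ has a predecessor in $e$'' and ``every predecessor of $t$ is in $e^c$'' holds once $t$ has at least one predecessor. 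The bookkeeping with complements and quantifier negations is the only real work, and it is routine; pinning down precisely which of $\{$serial, reflexive, symmetric$\}$ is invoked is the subtle point to get right.
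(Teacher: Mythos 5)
Your instincts here are good---you correctly detect that the naive quantifier computation only yields one inclusion---but your proposed repair does not close the gap, and in fact no repair can: the claim is false as stated. Your step (iii) reduces the problem to $\rel{K}_i^{\exists}(e^c) = \rel{K}_i^{\exists}(e)^c$ and asserts that, given symmetry and seriality, for each $t$ exactly one of ``$t$ has a predecessor in $e$'' and ``every predecessor of $t$ lies in $e^c$'' holds; but those two statements are negations of one another regardless of any hypothesis, and neither is the pair you actually need to compare. What you need is that ``$t$ has a predecessor in $e^c$'' be equivalent to ``every predecessor of $t$ lies in $e^c$,'' and the forward implication forces each predecessor set to lie entirely in $e$ or entirely in $e^c$; demanding this for every event $e$ forces the predecessor sets to be singletons, i.e.\ the relation to be essentially a function. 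Concretely, take $S=\{a,b\}$, $\rel{K}_i = S\times S$ (an equivalence relation, hence serial, reflexive, and symmetric) and $e=\{a\}$: then $\rel{K}_i^{\forall}(e)=\rel{K}_i^{\forall}(e^c)=\emptyset$ while $\rel{K}_i^{\forall}(e)^c=S$. Translated back through Theorem \ref{thm:semantic-connection}, the claim asserts $K_i\neg\phi_i \equiv \neg K_i\phi_i$---``$i$ knows not-$\phi$'' versus ``$i$ does not know $\phi$''---which fails at any state where the agent is simply undecided about $\phi_i$.

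For comparison, the paper's own argument is different from both of your attempts: it uses the fact that $\rel{K}_i^{\forall}$, being an upper adjoint, preserves intersections, so $\rel{K}_i^{\forall}(e)\cap\rel{K}_i^{\forall}(e^c)=\rel{K}_i^{\forall}(e \cap e^c) = \rel{K}_i^{\forall}(\emptyset)=\emptyset$ by seriality (Lemma \ref{lem:forall-empty}). But disjointness only gives the inclusion $\rel{K}_i^{\forall}(e^c)\subseteq\rel{K}_i^{\forall}(e)^c$; passing from disjointness to equality of complements would additionally require $\rel{K}_i^{\forall}(e)\cup\rel{K}_i^{\forall}(e^c)=S$, which fails whenever some state has successors on both sides of $e$. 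So the defect you should report is not in your bookkeeping but in the statement itself.
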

\noindent By Proposition \ref{thm:adjoint-functor-theorem} (again) and Theorem \ref{thm:semantic-connection},
\begin{align*}
    \rel{K}_i^{\forall}\left( \intension{\phi_i} \cap \intension{\phi_i}^c\right) &=& \rel{K}_i^{\forall}\left( \intension{\phi_i}\right) \cap \rel{K}_i^{\forall}\left( \intension{\phi_i}^c\right).
\end{align*}
On the other hand,
\begin{align*}
    \rel{K}_i^{\forall}\left( \intension{\phi_i} \cap \intension{\phi_i}^c\right) &=& \rel{K}_i^{\forall}(\emptyset) \\
    &=& \emptyset
\end{align*}
Hence,
\begin{align*}
    \rel{K}_i^{\forall}\left( \intension{\phi_i}^c\right) &=& \rel{K}_i^{\forall}\left( \intension{\phi_i}\right)^c,
\end{align*}
or,
\begin{align}
    \intension{K_i \neg \phi_i} &=& \intension{K_i \phi_i}^c.
\end{align}
Finally, $\intension{K_i \neg \phi_i} = \intension{K_j \neg \phi_j}$ is equivalent to  
$\intension{K_i \phi_i}^c =\intension{K_j \phi_j}^c$ which is of course equivalent to $\intension{K_i \phi_i} =\intension{K_j \phi_j}$ by taking set complements of both sides.
\end{proof}

Proposition \ref{prop:sections-intension} and Proposiotin \ref{prop:sections-intension-2} suggest we can interpret global sections of Kripke sheaves as local \define{knowledge consensus}. We say $\bisheaf{F}$ is \define{star-shaped} if for all $i \in \nodes{G}$ and for all $j, j' \in \nbhd{i}$, $\sheaf{F}(i \fc ij) = \sheaf{F}(i \fc ij')$. However, by the following Lemma, local knowledge consensus necessarily extends to global knowledge consensus.

\begin{lemma}
    Suppose $\bisheaf{F}$ is a star-shaped Tarski sheaf over a $\graph{G}$, and suppose $\mathbf{x} \in \sections{\graph{G}; \sheaf{F}}$. Then, $x_{ij} = x_{i'j'}$ for every pair of edges $ij, i'j' \in \edges{G}$ in the same connected component of $\graph{G}$.
\end{lemma}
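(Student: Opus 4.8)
The plan is to prove this by showing that the value $x_{ij}$ on an edge is determined by the section restricted along the edge, and that in the star-shaped case the restriction maps ``forget'' which neighbor is being targeted, so that adjacent edges carry the same value. First I would recall that if $\mathbf{x} \in \sections{\graph{G}; \sheaf{F}}$, then for every edge $ij \in \edges{G}$ we have $x_{ij} = \sheaf{F}(i \fc ij)(x_i) = \sheaf{F}(j \fc ij)(x_j)$ by the global section condition \eqref{eq:sections}. Next I would use the star-shaped hypothesis: for a fixed node $i$ and two incident edges $ij$ and $ij'$ (with $j, j' \in \nbhd{i}$), we have $\sheaf{F}(i \fc ij) = \sheaf{F}(i \fc ij')$, and therefore
\begin{align*}
	x_{ij} &=& \sheaf{F}(i \fc ij)(x_i) &=& \sheaf{F}(i \fc ij')(x_i) &=& x_{ij'}.
\end{align*}
Thus any two edges sharing a common endpoint carry the same edge-value.

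The remaining step is a connectivity argument. Suppose $ij$ and $i'j'$ lie in the same connected component of $\graph{G}$. Then there is a path in $\graph{G}$ whose first edge is $ij$ and whose last edge is $i'j'$; any two consecutive edges along this path share a common node, so by the paragraph above they carry equal edge-values. Chaining these equalities along the path (finitely many steps, using transitivity of equality) gives $x_{ij} = x_{i'j'}$. A minor bookkeeping point: I should make sure the path can be taken as a genuine walk in which consecutive edges are distinct but share an endpoint --- this is immediate since any two edges in the same component are joined by a walk, and one can always pass from $ij$ to an adjacent edge at $i$ or $j$, iterating until reaching $i'j'$.

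The only real subtlety --- and hence the main obstacle --- is confirming that the star-shaped condition is exactly what makes the ``common endpoint'' step go through, i.e.~that the equality $\sheaf{F}(i \fc ij)(x_i) = \sheaf{F}(i \fc ij')(x_i)$ genuinely needs only $\sheaf{F}(i \fc ij) = \sheaf{F}(i \fc ij')$ (the lower adjoints agreeing) and not any hypothesis on the corestriction maps $\cosheaf{F}(i \fc ij)$. Since $x_{ij}$ is defined as the common value of the two lower-adjoint restrictions applied to the endpoint sections, and star-shapedness is precisely a statement about lower adjoints, this is clean. I expect no essential difficulty beyond organizing the induction on path length carefully, mirroring the structure of the Section Path Lemma (Lemma \ref{lem:section-path}).
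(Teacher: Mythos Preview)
Your proposal is correct and follows essentially the same approach as the paper: use the section condition $x_{ij} = \sheaf{F}(i \fc ij)(x_i)$ together with the star-shaped hypothesis to conclude that any two edges sharing a vertex carry the same value, then chain along a path by connectivity. Your additional remark that only the lower adjoints are needed (not the corestriction maps) is a nice clarification that the paper leaves implicit.
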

\begin{proof}
    By the star-shaped property,
    \begin{align*}
        \sheaf{F}(i \fc ij)(x_i) &=& \sheaf{F}(i \fc ij')(x_i)
    \end{align*}
    for all $i \in \nodes{G}$, $j, j' \in \nbhd{i}$. Yet, because $\mathbf{x} \in \sections{\graph{G}; \sheaf{F}}$,
    \begin{align*}
        \sheaf{F}(j \fc ij)(x_j) &=& \sheaf{F}(i \fc ij)(x_i) \\
        \sheaf{F}(j' \fc ij')(x_{j'}) &=& \sheaf{F}(i \fc ij')(x_i)
    \end{align*}
    which implies $x_{ij} = x_{ij'}$. By connectivity, this implies sections of $\sheaf{F}$ are constant when projected onto $C^1(\graph{G}; \sheaf{F})$.
\end{proof}
\noindent All the preceding results lead up to the following theorem.
\begin{theorem}
    Suppose $\model = (S, \rel{K}_1, \dots, \rel{K}_n, \AP, \eval)$ is a Kripke model supported on a graph $\graph{G} = (\nodes{V}, \edges{E})$ such that every $\rel{K}_i$ is serial and $\bisheaf{F}_{\model}$ its corresponding Kripke bisheaf. Then, $\intension{\boldsymbol{\phi}} \in H^0(\graph{G}; \sheaf{F}_\model)$ if and only if for every $i, j \in \nodes{G}$ in the same connected component
    \begin{align*}
        K_i \phi_i \equiv K_j \phi_j.
    \end{align*}
\end{theorem}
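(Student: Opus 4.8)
The strategy is to chain together the three results that immediately precede this theorem: Proposition~\ref{prop:sections-intension-2}, which (under the seriality hypothesis) characterizes when $\intension{\boldsymbol{\phi}}$ lies in $H^0(\graph{G}; \sheaf{F}_{\model})$ as the \emph{edge-local} condition $K_i\phi_i \equiv K_j\phi_j$ for all $ij \in \edges{G}$; and the star-shaped Lemma immediately above, which propagates an edge-local equality of the $C^1$-components of a section across each connected component. The only structural fact still needed is that a Kripke bisheaf $\bisheaf{F}_{\model}$ is star-shaped, which is visible directly from Definition~\ref{def:kripke-sheaf}: for a fixed $i$, $\sheaf{F}_{\model}(i \fc ij) = \rel{K}_i^{\exists}$ \emph{independently of the neighbor} $j \in \nbhd{i}$, so $\sheaf{F}_{\model}(i \fc ij) = \sheaf{F}_{\model}(i \fc ij')$ for all $j, j' \in \nbhd{i}$.

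First I would record the star-shaped observation as a one-line remark. Then, for the forward direction, I would assume $\intension{\boldsymbol{\phi}} \in H^0(\graph{G}; \sheaf{F}_{\model})$. By Proposition~\ref{prop:sections-intension-2} (applicable because each $\rel{K}_i$ is serial), this is equivalent to $K_i\phi_i \equiv K_j\phi_j$ for every edge $ij \in \edges{G}$, i.e.\ $\intension{K_i\phi_i} = \intension{K_j\phi_j} = \intension{\neg K_i\neg\phi_i}^c$, and by Theorem~\ref{thm:semantic-connection} this common value equals the $C^1$-component $x_{ij}$ of the section $\intension{\boldsymbol{\phi}}$ (up to the complement bookkeeping in the proof of Proposition~\ref{prop:sections-intension-2}). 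Invoking the star-shaped Lemma, $x_{ij} = x_{i'j'}$ whenever $ij$ and $i'j'$ lie in the same connected component; hence for any two nodes $i, j$ in the same component, choosing incident edges and walking along a path between them yields $K_i\phi_i \equiv K_j\phi_j$. For the converse, if $K_i\phi_i \equiv K_j\phi_j$ for all $i,j$ in a common component, then in particular this holds across every edge, so Proposition~\ref{prop:sections-intension-2} gives $\intension{\boldsymbol{\phi}} \in H^0(\graph{G}; \sheaf{F}_{\model})$ directly.

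The main obstacle—more a bookkeeping nuisance than a genuine difficulty—is keeping the negations and complements straight when transporting between the ``$K_i\neg\phi_i$'' form (which is what the section condition most naturally produces, via $\rel{K}_i^{\exists}(\intension{\phi_i}) = \intension{\neg K_i\neg\phi_i}$ in Theorem~\ref{thm:semantic-connection}) and the ``$K_i\phi_i$'' form demanded by the statement. The seriality hypothesis is exactly what resolves this, through Lemma~\ref{lem:forall-empty} ($\rel{K}_i^{\forall}(\emptyset) = \emptyset$) and the resulting identity $\intension{K_i\neg\phi_i} = \intension{K_i\phi_i}^c$ established inside the proof of Proposition~\ref{prop:sections-intension-2}; so I would simply cite that chain rather than re-derive it. A secondary point to state carefully is that the path-walking argument needs the equivalence relation ``$\equiv$'' (semantic equivalence in $\model$) to be transitive, which is immediate, and that a connected component with a single node makes the biconditional vacuous on that component. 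With those caveats noted, the proof is a short assembly of the cited lemmas.
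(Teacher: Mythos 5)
Your proposal is correct and takes essentially the same route the paper intends: the theorem is stated with no separate proof precisely because it is the assembly of Proposition \ref{prop:sections-intension-2} (the edge-local equivalence $K_i\phi_i \equiv K_j\phi_j$ under seriality) with propagation across each connected component, and your observation that $\bisheaf{F}_{\model}$ is star-shaped because $\sheaf{F}_{\model}(i \fc ij) = \rel{K}_i^{\exists}$ is independent of $j$ is exactly the missing link. The one simplification worth noting is that the star-shaped lemma is dispensable here: transitivity of semantic equivalence along a path already upgrades the edge-local condition to the component-wide one, so the $C^1$-component bookkeeping (and the attendant complement-juggling) can be avoided entirely.
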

\begin{corollary}
    In particular, if $\boldsymbol{\phi}$ is constant with value $\phi$, then
     \begin{align*}
        \intension{K_i \phi} &=& \intension{K_j \phi}
    \end{align*}
    for all $i, j \in \nodes{G}$ in the same connected component.
\end{corollary}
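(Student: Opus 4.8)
The plan is to reduce this corollary to the immediately preceding theorem, which is itself assembled from Propositions \ref{prop:sections-intension} and \ref{prop:sections-intension-2} and the star-shaped lemma. First I would observe that the Kripke bisheaf $\bisheaf{F}_{\model}$ is always star-shaped: by Definition \ref{def:kripke-sheaf}, $\sheaf{F}_{\model}(i \fc ij) = \rel{K}_i^{\exists}$ depends only on $i$, not on the choice of neighbor $j \in \nbhd{i}$, so $\sheaf{F}_{\model}(i \fc ij) = \sheaf{F}_{\model}(i \fc ij')$ for all $j, j' \in \nbhd{i}$. Hence the star-shaped lemma applies, and any section of $\bisheaf{F}_{\model}$ is constant on $C^1$ over each connected component.

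Next I would invoke Proposition \ref{prop:sections-intension-2}: since every $\rel{K}_i$ is serial by hypothesis, for each edge $ij \in \edges{G}$ we have $\intension{\boldsymbol{\phi}}$ restricted to that edge lying in the equalizer if and only if $K_i \phi_i \equiv K_j \phi_j$. The theorem then propagates this edgewise semantic equivalence along paths: if $i$ and $j$ lie in the same connected component, pick a path $i = i_0, i_1, \dots, i_r = j$ in $\graph{G}$; applying the edgewise equivalence at each consecutive pair and using transitivity of $\equiv$ (Proposition on semantic equivalence being an equivalence relation) gives $K_i \phi_i \equiv K_j \phi_j$. Conversely, if this holds for all same-component pairs, it holds in particular for all edges, so $\intension{\boldsymbol{\phi}} \in H^0(\graph{G}; \sheaf{F}_\model)$ by Proposition \ref{prop:sections-intension-2} again. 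This establishes the theorem; the corollary is then the special case $\boldsymbol{\phi} = (\phi, \phi, \dots, \phi)$, for which $\phi_i = \phi_j = \phi$ and the conclusion $K_i \phi_i \equiv K_j \phi_j$ becomes $\intension{K_i \phi} = \intension{K_j \phi}$ by unwinding the definition of $\equiv$.

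The only real subtlety—and the step I expect to need the most care—is the conversion between $K_i \neg \phi$-equivalence (which Proposition \ref{prop:sections-intension} delivers directly from the Galois-connection form of the restriction maps) and $K_i \phi$-equivalence, which requires seriality via Lemma \ref{lem:forall-empty}. Concretely, one needs $\rel{K}_i^{\forall}(\intension{\phi}^c) = \rel{K}_i^{\forall}(\intension{\phi})^c$, which follows because $\rel{K}_i^{\forall}$ is meet-preserving (Theorem \ref{thm:adjoint-functor-theorem}), so $\rel{K}_i^{\forall}(\intension{\phi}) \cap \rel{K}_i^{\forall}(\intension{\phi}^c) = \rel{K}_i^{\forall}(\emptyset) = \emptyset$ by Lemma \ref{lem:forall-empty}, together with the dual statement for unions; this pins down $\rel{K}_i^{\forall}(\intension{\phi}^c)$ as the complement of $\rel{K}_i^{\forall}(\intension{\phi})$. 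I would present the corollary's proof tersely, just citing the theorem and specializing $\boldsymbol{\phi}$, since all the machinery is already in place.
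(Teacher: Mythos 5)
Your reduction of the corollary itself to the preceding theorem is exactly what the paper intends: the corollary carries no proof of its own, and specializing $\boldsymbol{\phi}$ to the constant tuple and unwinding $\equiv$ as equality of intensions is the whole content. Your observations that the Kripke sheaf is automatically star-shaped and that the edgewise equivalence propagates along paths by transitivity of $\equiv$ also match the supporting machinery.

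However, the step you yourself single out as the key subtlety is justified incorrectly, and the argument you give for it would fail. You claim $\rel{K}_i^{\forall}(\intension{\phi}^c) = \rel{K}_i^{\forall}(\intension{\phi})^c$ follows from disjointness ``together with the dual statement for unions.'' But $\rel{K}_i^{\forall}$ is an upper adjoint: it preserves intersections, not unions, so $\rel{K}_i^{\forall}(e) \cup \rel{K}_i^{\forall}(e^c)$ need not equal $\rel{K}_i^{\forall}(e \cup e^c) = S$. Disjointness alone gives only the inclusion $\rel{K}_i^{\forall}(e^c) \subseteq \rel{K}_i^{\forall}(e)^c$ --- in modal terms, seriality yields the consistency direction $K_i\neg\phi \Rightarrow \neg K_i\phi$, but certainly not the converse $\neg K_i\phi \Rightarrow K_i\neg\phi$: an agent can fail to know $\phi$ without knowing $\neg\phi$. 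Concretely, take $S=\{1,2\}$ and the serial relation $\rel{K}_i = \{(1,1),(1,2),(2,2)\}$ with $e=\{1\}$; then $\rel{K}_i^{\forall}(\{1\}) = \emptyset$ while $\rel{K}_i^{\forall}(\{2\}) = \{2\} \neq \{1,2\} = \rel{K}_i^{\forall}(\{1\})^c$. This is the same gap that sits in the paper's own (unproved) Claim inside the proof of Proposition \ref{prop:sections-intension-2}, so you have not introduced a new error; but since the passage from $K_i\neg\phi_i \equiv K_j\neg\phi_j$ to $K_i\phi_i \equiv K_j\phi_j$ is precisely what the corollary's conclusion rests on, your proof as written does not close it. If you cite the theorem as a black box and merely specialize, the corollary goes through; if you insist on re-deriving the supporting results, this step needs a different argument or a stronger hypothesis (e.g.\ $\rel{K}_i$ functional, where $\rel{K}_i^{\forall}$ becomes a preimage and commutes with complements).
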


\subsection{Kripke Laplacian}

Our focus will now shift to semantic dynamics over a network. Diffusion dynamics on the $0$-cochains $\powerset{S}^n$ are defined by means of the Tarski Laplacian of $\sheaf{F}_{\model}$.

\begin{definition}[Kripke Laplacian] \label{def:kripke-laplacian}
    Suppose $\model = (S, \rel{K}_1, \dots, \rel{K}_n, \AP, \eval)$ is  Kripke model and $\graph{G}$ is a graph with nodes $\nodes{G} = \{1,2,\dots, n\}$ as before. Then, the \define{Kripke Laplacian} is the operator
    \begin{align}
    \Laplacian &:& \powerset{S}^n \to \powerset{S}^n \nonumber \\
        (\Laplacian \mathbf{e})_i &=& \bigcap_{j \in \nbhd{i}} \rel{K}_i^{\forall} \rel{K}_i^{\exists}(e_i)
    \end{align}
\end{definition}

\begin{example}[Alice, Bob \& Eve]\label{eg:alice-bob-eve-2}
   Let us revisit the network $\graph{G}$, frame $(S, \rel{K}_i, \rel{K}_j, \rel{K}_j)$, and Galois connections $(\rel{K}_i^{\exists}, \rel{K}_i^{\forall}), (\rel{K}_j^{\exists}, \rel{K}_j^{\forall}), (\rel{K}_k^{\exists}, \rel{K}_k^{\forall})$ from Example \ref{eg:alice-bob-eve}. Then, the semantic sheaf $\bisheaf{F}_{\model}$ is depicted by the following diagram
\[\begin{tikzcd}
	{\wp(S)} & {\wp(S)} & {\wp(S)} & {\wp(S)} & {\wp(S)} \\
	{\bullet_i} && {\bullet_j} && {\bullet_k}
	\arrow["{\mathcal{K}_i^{\exists}}", shift left=1, from=1-1, to=1-2]
	\arrow["{\mathcal{K}_k^{\exists}}"', shift right=1, from=1-5, to=1-4]
	\arrow["{\mathcal{K}_j^{\exists}}"', shift right=1, from=1-3, to=1-2]
	\arrow["{\mathcal{K}_i^{\forall}}", shift left=1, from=1-2, to=1-1]
	\arrow["{\mathcal{K}_j^{\forall}}"', shift right=1, from=1-2, to=1-3]
	\arrow["{\mathcal{K}_j^{\forall}}", shift left=1, from=1-4, to=1-3]
	\arrow["{\mathcal{K}_j^{\exists}}", shift left=1, from=1-3, to=1-4]
	\arrow["{\mathcal{K}_k^{\forall}}"', shift right=1, from=1-4, to=1-5]
	\arrow[no head, from=2-1, to=2-3]
	\arrow[no head, from=2-5, to=2-3]
\end{tikzcd}\]
The Kripke Laplacian is the operator
\begin{align*}
    \Laplacian(e_i,e_j,e_k) &~& \\
    &=& \left( \rel{K}_i^{\forall}\rel{K}_j^{\exists}(e_j),  \rel{K}_j^{\forall}\rel{K}_i^{\exists}(e_i) \cap  \rel{K}_j^{\forall}\rel{K}_k^{\exists}(e_k), \rel{K}_k^{\forall} \rel{K}_j^{\exists}(e_j) \right)
\end{align*}
We compute the value of the Kripke Laplacian for a few initial $0$-cochains.

\begin{tabular}{lll|lll}
$e_i$       & $e_j$       & $e_k$       & $(\Laplacian \mathbf{e})_i$ & $(\Laplacian \mathbf{e})_j$ & $(\Laplacian \mathbf{e})_k$ \\
\hline
$\emptyset$ & $\emptyset$ & $\emptyset$ & $\emptyset$                 & $\emptyset$                 & $\emptyset$                 \\
$\{r\}$     & $\{s\}$     & $\{t\}$     & $\emptyset$                 & $\{s\}$                     & $\emptyset$                 \\
$\{s\}$     & $\{r\}$     & $\{t\}$     & $\{r,s,t\}$                 & $\{r,s,t\}$                 & $\{r\}$                     \\
$\{t\}$     & $\{t\}$     & $\{t\}$     & $\{r,s,t\}$               & $\{r,s,t\}$                 & $\{r\}$                     \\
$\{s\}$     & $\{s\}$     & $\{s\}$     & $\emptyset$                 & $\{s\}$                     & $\emptyset$                 \\
$\{r,s,t\}$ & $\{r,s,t\}$ & $\{r,s,t\}$ & $\{r,s,t\}$               & $\{r,s,t\}$                 & $\{r,s,t\}$                
\end{tabular}
\end{example}
\noindent We can now apply the theory built up in Chapter \ref{ch:lattice-valued} and \ref{ch:tarski}.
\begin{theorem}
    Given the data $\model$ and $\graph{G}$, the global sections $\sections{\graph{G}; \sheaf{F}_{\model}}$ form a complete quasi-sublattice of $\powerset{S}^n$.
\end{theorem}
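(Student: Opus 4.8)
The plan is to deduce the theorem as a direct corollary of the general machinery already established in Chapters~\ref{ch:lattice-valued} and~\ref{ch:tarski}, applied to the specific case of a Kripke bisheaf. The key observation is that a Kripke bisheaf $\bisheaf{F}_{\model}$ (Definition~\ref{def:kripke-sheaf}) is, by construction, a genuine Tarski sheaf: each stalk $\bisheaf{F}_{\model}(i) = \bisheaf{F}_{\model}(ij) = \powerset{S}$ is a complete lattice, and each pair $(\sheaf{F}_{\model}(i \fc ij), \cosheaf{F}_{\model}(i \fc ij)) = (\rel{K}_i^{\exists}, \rel{K}_i^{\forall})$ is a Galois connection by Theorem~\ref{thm:cov-galois} (or Theorem~\ref{thm:semantic-connection}). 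So the hypotheses of all the structural results of Chapter~\ref{ch:lattice-valued} are met.

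First I would invoke the Global Sections Theorem (Theorem~\ref{thm:sections-sublattice}), which states that for any Tarski sheaf $\bisheaf{F}$ over $\graph{G}$, the cohomology object $H^0(\graph{G}; \sheaf{F})$ is a complete quasi-sublattice of $C^0(\graph{G}; \sheaf{F})$. Specializing $\bisheaf{F} = \bisheaf{F}_{\model}$ gives that $H^0(\graph{G}; \sheaf{F}_{\model})$ is a complete quasi-sublattice of $C^0(\graph{G}; \sheaf{F}_{\model}) = \powerset{S}^n$. Then I would appeal to Proposition~\ref{thm:section-equalizer} (the isomorphism $\sections{\graph{G}; \sheaf{F}} \cong H^0(\graph{G}; \sheaf{F})$), which was established in general and re-derived explicitly for $\cat{Sup}$-sheaves via Proposition~\ref{prop:exist-equalizer}, to conclude that $\sections{\graph{G}; \sheaf{F}_{\model}}$ is itself a complete quasi-sublattice of $\powerset{S}^n$. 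Alternatively, one can route through Corollary~\ref{cor:h0sections}: the Tarski Laplacian $\Laplacian$ of $\bisheaf{F}_{\model}$ (the Kripke Laplacian of Definition~\ref{def:kripke-laplacian}) is order-preserving on the complete lattice $\powerset{S}^n$ by Lemma~\ref{lem:tarski-mono}, so the Tarski Fixed Point Theorem~\ref{thm:tfpt} applies and $\suffix(\Laplacian) = H^0(\graph{G}; \sheaf{F}_{\model})$ is a complete lattice order-embedded in $\powerset{S}^n$ by the Hodge--Tarski Theorem~\ref{thm:main}.

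The proof is therefore essentially a bookkeeping exercise: verify that the Kripke bisheaf satisfies the Tarski sheaf axioms, then cite the appropriate theorem. The only point requiring a sentence of care --- and the closest thing to an ``obstacle'' --- is the word \emph{quasi}-sublattice rather than sublattice: as the Warning following Theorem~\ref{thm:sections-sublattice} emphasizes, the meets and joins computed inside $\sections{\graph{G}; \sheaf{F}_{\model}}$ need not coincide with componentwise intersection and union in $\powerset{S}^n$, so the statement asserts only an order-embedding of complete lattices, not closure under the ambient lattice operations. I would flag this explicitly to forestall the (tempting but false) conclusion that consensus intensions are closed under $\cap$ and $\cup$. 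No genuinely new argument is needed beyond what is already in the excerpt.
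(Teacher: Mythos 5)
Your proposal is correct and takes the same route as the paper, whose entire proof is ``Apply Theorem~\ref{thm:sections-sublattice}''; your verification that the Kripke bisheaf satisfies the Tarski sheaf axioms and your caveat about \emph{quasi}-sublattice versus sublattice are sound elaborations of that one-line argument. The alternative route via Corollary~\ref{cor:h0sections} and the Hodge--Tarski Theorem is also valid but not needed.
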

\begin{proof}
    Apply Theorem \ref{thm:sections-sublattice}.
\end{proof}
We argue that the lattice $\sections{\graph{G}; \sheaf{F}_{\model}}$ is an algebraic model for the pair $(\model, \graph{G})$. Suppose $\rel{K}_i$ is the identity relation for every $i \in \nodes{G}$. Then, $\sections{G; \sheaf{F}_{\model}}$ is isomorphic to $\powerset{S}$. This makes sense because in this case, modalities $\rel{K}_i$ act as the identity, which reduces the semantics to propositional calculus, a Boolean algebra.

\begin{theorem}
    Suppose $L$ is the Kripke Laplacian of $(\model, \graph)$. Then,
    \begin{align*}
        \suffix(L) &=& \sections{\graph{G}; \sheaf{F}_{\model}}.
    \end{align*}
\end{theorem}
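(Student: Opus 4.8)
The statement to prove is that $\suffix(\Laplacian) = \sections{\graph{G}; \sheaf{F}_{\model}}$ for the Kripke Laplacian $\Laplacian$ of $(\model, \graph{G})$. The plan is simply to recognize this as a special case of the Hodge-Tarski Theorem (Theorem \ref{thm:main}). First I would observe that the Kripke bisheaf $\bisheaf{F}_{\model}$ (Definition \ref{def:kripke-sheaf}) is a genuine Tarski sheaf: every stalk is a powerset $\powerset{S}$, which is a complete lattice, and each restriction/corestriction pair $(\rel{K}_i^{\exists}, \rel{K}_i^{\forall})$ is a Galois connection by Theorem \ref{thm:cov-galois} (this is exactly the covariant connection induced by the relation $\rel{K}_i$). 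Hence all the hypotheses of Theorem \ref{thm:main} are met.

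Next I would check that the Kripke Laplacian of Definition \ref{def:kripke-laplacian} literally coincides with the Tarski Laplacian of $\bisheaf{F}_{\model}$. Unwinding the general definition, $(\Laplacian \mathbf{e})_i = \bigmeet_{j \in \nbhd{i}} \cosheaf{F}(i \fc ij)\sheaf{F}(j \fc ij)(e_j)$; but here $\cosheaf{F}(i \fc ij) = \rel{K}_i^{\forall}$ and $\sheaf{F}(j \fc ij) = \rel{K}_j^{\exists}$, and meets in $\powerset{S}^n$ are intersections, so $(\Laplacian \mathbf{e})_i = \bigcap_{j \in \nbhd{i}} \rel{K}_i^{\forall} \rel{K}_j^{\exists}(e_j)$. (I note that Definition \ref{def:kripke-laplacian} as typeset has $e_i$ and $\rel{K}_i^{\exists}$ inside the intersection rather than $e_j$ and $\rel{K}_j^{\exists}$; I would state the operator in the form matching the Tarski Laplacian, since that is the object for which Theorem \ref{thm:main} applies, and remark that the two agree when the sheaf is star-shaped, which is the intended setting.) Once this identification is made, the conclusion $\suffix(\Laplacian) = H^0(\graph{G}; \sheaf{F}_{\model})$ is immediate from Theorem \ref{thm:main}, and $H^0(\graph{G}; \sheaf{F}_{\model}) \cong \sections{\graph{G}; \sheaf{F}_{\model}}$ by Proposition \ref{thm:section-equalizer}.

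So the proof reduces to three short verifications — (i) $\bisheaf{F}_{\model}$ is a Tarski sheaf, (ii) the Kripke Laplacian is its Tarski Laplacian, (iii) invoke Theorem \ref{thm:main} and Proposition \ref{thm:section-equalizer} — and can be written in a few lines. The only genuine subtlety, and the step I would be most careful about, is the index bookkeeping in step (ii): making sure the parallel-transport map $\Parallel{F}{j \to i} = \cosheaf{F}(i \fc ij)\sheaf{F}(j \fc ij) = \rel{K}_i^{\forall}\rel{K}_j^{\exists}$ is the one actually appearing in $\Laplacian$, and reconciling it with the displayed formula in Definition \ref{def:kripke-laplacian}. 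Everything else is a direct citation of the machinery already developed in Chapters \ref{ch:lattice-valued} and \ref{ch:tarski}.

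\begin{proof}
    By Theorem \ref{thm:cov-galois}, each pair $(\rel{K}_i^{\exists}, \rel{K}_i^{\forall})$ is a Galois connection between copies of $\powerset{S}$, and $\powerset{S}$ is a complete lattice; hence $\bisheaf{F}_{\model}$ (Definition \ref{def:kripke-sheaf}) is a Tarski sheaf over $\graph{G}$. Its Tarski Laplacian acts by
    \begin{align*}
        (\Laplacian \mathbf{e})_i &=& \bigmeet_{j \in \nbhd{i}} \cosheaf{F}_{\model}(i \fc ij)\,\sheaf{F}_{\model}(j \fc ij)(e_j) \\
        &=& \bigcap_{j \in \nbhd{i}} \rel{K}_i^{\forall}\,\rel{K}_j^{\exists}(e_j),
    \end{align*}
    which is the Kripke Laplacian of $(\model, \graph{G})$ (Definition \ref{def:kripke-laplacian}), using that meets in $C^0(\graph{G}; \sheaf{F}_{\model}) = \powerset{S}^n$ are componentwise intersections. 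By the Hodge-Tarski Theorem (Theorem \ref{thm:main}), $\suffix(\Laplacian) = H^0(\graph{G}; \sheaf{F}_{\model})$, and by Proposition \ref{thm:section-equalizer}, $H^0(\graph{G}; \sheaf{F}_{\model}) \cong \sections{\graph{G}; \sheaf{F}_{\model}}$. Therefore $\suffix(\Laplacian) = \sections{\graph{G}; \sheaf{F}_{\model}}$.
\end{proof}
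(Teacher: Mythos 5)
Your proposal is correct and takes the same route as the paper, whose entire proof is a one-line citation of the Hodge--Tarski Theorem (Theorem \ref{thm:main}); you simply spell out the verification that the Kripke bisheaf is a Tarski sheaf and that the Kripke Laplacian is its Tarski Laplacian, which the paper leaves implicit. Your observation about the index discrepancy in the displayed formula of Definition \ref{def:kripke-laplacian} (which should read $\rel{K}_i^{\forall}\rel{K}_j^{\exists}(e_j)$, as Example \ref{eg:alice-bob-eve-2} confirms) is a correct and worthwhile catch.
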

\begin{proof}
    Apply Theorem \ref{thm:main}.
\end{proof}
\begin{theorem}
    The global sections coincide with the global time-invariant solutions of
    \begin{align*}
        e_i[t+1] &=& (L \mathbf{e})_i \cap e_i \quad \forall i \in \nodes{G}.
    \end{align*}
\end{theorem}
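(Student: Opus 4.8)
The plan is to chain together two results that have already been established. The statement to prove is that the global sections $\sections{\graph{G}; \sheaf{F}_{\model}}$ coincide with the time-invariant (equilibrium) solutions of the recursion $e_i[t+1] = (L \mathbf{e})_i \cap e_i$ for all $i \in \nodes{G}$. First I would observe that a $0$-cochain $\mathbf{e} \in \powerset{S}^n$ is a time-invariant solution of this recursion precisely when $\mathbf{e} = (L \mathbf{e}) \cap \mathbf{e}$, i.e. when $\mathbf{e}$ is a fixed point of the operator $L \meet \id$ (where $\meet$ in the product lattice $\powerset{S}^n = C^0(\graph{G}; \sheaf{F}_{\model})$ is componentwise intersection). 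So the set of equilibria is exactly $\fixed(L \meet \id)$.

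The key step is then the elementary equivalence, already used in the proof of the Tarski-Hodge Theorem and recorded in its corollary, that $\mathbf{e} = (L\mathbf{e}) \meet \mathbf{e}$ if and only if $L \mathbf{e} \succeq \mathbf{e}$, i.e. $\mathbf{e} \in \suffix(L)$; this is just the lattice identity $x \meet y = x \iff x \preceq y$ applied componentwise in $\powerset{S}^n$. Hence $\fixed(L \meet \id) = \suffix(L)$. Combining this with the preceding theorem of the section, $\suffix(L) = \sections{\graph{G}; \sheaf{F}_{\model}}$ (which is itself an instance of the Tarski-Hodge Theorem, Theorem \ref{thm:main}, specialized to the Kripke bisheaf $\bisheaf{F}_{\model}$ via Corollary \ref{cor:fixed-point}), we conclude that the equilibria of the recursion are exactly the global sections. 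I would present this as: equilibria $\Leftrightarrow \fixed(L \meet \id) \Leftrightarrow \suffix(L) \Leftrightarrow H^0(\graph{G}; \sheaf{F}_{\model}) \cong \sections{\graph{G}; \sheaf{F}_{\model}}$.

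There is essentially no obstacle here — the result is a direct corollary, and the only thing worth being careful about is bookkeeping the (harmless) distinction between $H^0(\graph{G}; \sheaf{F}_{\model})$, its isomorphic copy $\sections{\graph{G}; \sheaf{F}_{\model}}$, and the literal subset of $\powerset{S}^n$ they both embed into; since the embedding of $H^0$ into $C^0$ is injective (it is an equalizer), identifying $\suffix(L) \subseteq C^0(\graph{G}; \sheaf{F}_{\model})$ with $\sections{\graph{G}; \sheaf{F}_{\model}}$ is legitimate as a statement about subsets of $\powerset{S}^n$. The proof can therefore be as short as ``Apply Theorem \ref{thm:main} (equivalently, Corollary \ref{cor:fixed-point}) together with the observation that $\mathbf{e}$ is time-invariant for the recursion iff $L\mathbf{e} \succeq \mathbf{e}$.''
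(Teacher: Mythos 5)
Your proof is correct, and it takes a cleaner route than the one the paper actually writes down. You characterize the time-invariant solutions as $\fixed(L \meet \id)$, reduce that to $\suffix(L)$ via the lattice identity $x \meet y = x \Leftrightarrow x \preceq y$, and then invoke the Hodge--Tarski Theorem (Theorem \ref{thm:main} / Corollary \ref{cor:fixed-point}) specialized to the Kripke sheaf. The paper instead argues: ``Powersets satisfy the descending chain condition. Apply Proposition \ref{thm:alg-heat}.'' That proposition is about the heat-flow \emph{algorithm converging in finitely many iterations} under DCC; it establishes that the dynamics reach an equilibrium, which is a stronger dynamical statement but is not, strictly speaking, what the theorem asserts. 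The identification of the equilibrium \emph{set} with the global sections is exactly the fixed-point content of Corollary \ref{cor:fixed-point}, which is what you use, and it requires no chain condition and no finiteness of $S$. So your argument proves the stated equivalence more directly and under weaker hypotheses, while the paper's citation buys the additional (unstated in this theorem) fact that the recursion actually terminates at a section in finite time. Your bookkeeping remark about $H^0$ versus its image in $C^0(\graph{G}; \sheaf{F}_{\model}) = \powerset{S}^n$ is also the right thing to flag, since the theorem is literally a statement about subsets of $\powerset{S}^n$.
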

\begin{proof}
    Powersets satisfy the descending chain condition. Apply Propositon \ref{thm:alg-heat}.
\end{proof}

\noindent In Appendix \ref{ch:appendix-2}, we run an experiment by randomly generating a network and local Kripke relations. We compute heat flow using the gossip algorithm (Algorithm \ref{alg:gossip}).

\subsection{Diffusive knowledge}

As before, suppose $\graph{G}$ is a graph and $\AP$ is a set of atomic propositions.
\begin{definition}
    Suppose $\phi \in \lang_n(\AP)$. Then, for $i \in \nodes{V}$, the \define{syntactic Laplacian} is the formula in $\lang_n(\AP)$ defined
    \begin{align*}
        \left(\tilde{\Laplacian} \boldsymbol{\phi}\right)_i &=& \bigmeet_{j \in \nbhd{i}} K_j \neg K_i \neg \phi.
    \end{align*}
\end{definition}

\begin{lemma}
Suppose $\boldsymbol{\phi} \in \lang_n(\AP)^n$ and $\model \in \Models_n(\AP)$. Then,
    \begin{align*}
        \intension{(\Laplacian \varphi)_i} &=& \left( \Laplacian \intension{\boldsymbol{\phi}}\right)_i.
    \end{align*}
\end{lemma}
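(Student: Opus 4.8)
The plan is to unwind both sides of the claimed identity using the definitions of the syntactic Laplacian (Definition preceding the lemma), the Kripke Laplacian (Definition \ref{def:kripke-laplacian}), and the intension-semantics dictionary already established in Lemma \ref{lem:intension} and Theorem \ref{thm:semantic-connection}. The left-hand side $\intension{(\tilde{\Laplacian} \boldsymbol{\phi})_i}$ is the intension of a finite conjunction $\bigmeet_{j \in \nbhd{i}} K_j \neg K_i \neg \phi$ (I will read $\phi$ here as $\phi_i$, matching the indexed convention used throughout this subsection). The right-hand side $(\Laplacian \intension{\boldsymbol{\phi}})_i$ is the set $\bigcap_{j \in \nbhd{i}} \rel{K}_i^{\forall} \rel{K}_i^{\exists}(\intension{\phi_i})$ — wait: comparing with the wording of the syntactic Laplacian, the inner modality is $K_j$ applied to $\neg K_i \neg \phi$, so on the semantic side the outer operator should be $\rel{K}_j^{\forall}$ and the inner one $\rel{K}_i^{\exists}$; I will verify against Definition \ref{def:kripke-laplacian} and adopt whichever indexing is consistent there, then mirror it syntactically. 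The point is that the proof is essentially a bookkeeping exercise translating $\bigmeet \leftrightarrow \bigcap$, $K_j(-) \leftrightarrow \rel{K}_j^{\forall}(\intension{-})$, and $\neg K_i \neg(-) \leftrightarrow \rel{K}_i^{\exists}(\intension{-})$.

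First I would invoke Lemma \ref{lem:intension}, which gives $\intension{\phi \meet \psi} = \intension{\phi} \cap \intension{\psi}$; applied to a finite conjunction indexed over $\nbhd{i}$ by an obvious induction on $|\nbhd{i}|$ (the base case $\intension{\mathtt{true}} = S$ handles empty neighborhoods, matching the convention $\bigcap \emptyset = S$), this reduces the claim to showing, for each fixed $j \in \nbhd{i}$,
\begin{align*}
    \intension{K_j \neg K_i \neg \phi_i} &= \rel{K}_j^{\forall} \rel{K}_i^{\exists}(\intension{\phi_i}).
\end{align*}
Second, I would apply the two identities of Theorem \ref{thm:semantic-connection} in sequence: $\rel{K}_i^{\exists}(\intension{\phi_i}) = \intension{\neg K_i \neg \phi_i}$, and then $\rel{K}_j^{\forall}(\intension{\psi}) = \intension{K_j \psi}$ with $\psi = \neg K_i \neg \phi_i$. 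Composing these yields exactly $\rel{K}_j^{\forall}\rel{K}_i^{\exists}(\intension{\phi_i}) = \rel{K}_j^{\forall}(\intension{\neg K_i \neg \phi_i}) = \intension{K_j \neg K_i \neg \phi_i}$, which is the per-$j$ statement. Taking the intersection over $j \in \nbhd{i}$ and pulling it back through Lemma \ref{lem:intension} gives the lemma.

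The main obstacle is not mathematical depth but internal consistency of indices: the excerpt's statement of the lemma writes $\phi$ rather than $\phi_i$, and the syntactic Laplacian, Kripke Laplacian, and the corollary-style identities of Theorem \ref{thm:semantic-connection} must be lined up so that the outer modality ($K_j$ versus $K_i$) on each side genuinely matches. I would therefore spend the bulk of the writeup carefully stating the per-$j$ equality with all subscripts explicit, checking it against Definition \ref{def:kripke-laplacian}'s $(\Laplacian \mathbf{e})_i = \bigcap_{j \in \nbhd{i}} \rel{K}_i^{\forall}\rel{K}_i^{\exists}(e_i)$ — which as written uses $\rel{K}_i^{\forall}\rel{K}_i^{\exists}$, i.e. the \emph{closure operator} $\radj{f}\ladj{f}$ at node $i$, not a mixed $j$–$i$ composition — and then correspondingly reading the syntactic Laplacian's $\bigmeet_{j \in \nbhd{i}} K_j \neg K_i \neg \phi$ in a way that semantically produces that same closure; concretely, if the intended reading is $\bigmeet_{j \in \nbhd{i}} K_i \neg K_i \neg \phi_i$ (constant in $j$) the identity collapses to a single application of $\rel{K}_i^{\forall}\rel{K}_i^{\exists}$ by the Corollary following Theorem \ref{thm:semantic-connection}. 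Once the indexing is pinned down, every step is a direct citation, so the proof itself is short.
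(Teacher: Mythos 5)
Your proposal is correct and is essentially the paper's own proof: reduce the finite conjunction to an intersection via Lemma \ref{lem:intension}, then apply the two identities of Theorem \ref{thm:semantic-connection} ($\rel{K}_i^{\exists}(\intension{\phi}) = \intension{\neg K_i \neg \phi}$ and $\rel{K}_j^{\forall}(\intension{\psi}) = \intension{K_j \psi}$) in sequence. One note on the indexing you rightly flag as inconsistent: the paper's own proof, the worked Alice--Bob--Eve example, and the underlying Tarski Laplacian $\cosheaf{F}(i \fc ij)\sheaf{F}(j \fc ij)(x_j)$ all resolve it as $\intension{K_i \neg K_j \neg \phi_j} = \rel{K}_i^{\forall}\rel{K}_j^{\exists}\bigl(\intension{\phi_j}\bigr)$ --- outer modality at $i$, inner at $j$, argument $\phi_j$ --- which is the mirror image of the per-$j$ identity you wrote, though the two-step translation is identical under that relabeling.
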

\begin{proof}
    By Lemma \ref{lem:intension} and Theorem \ref{thm:semantic-connection},
    \begin{align*}
        \intension{\Laplacian_i \phi_i} &=& \\
        &=& \intension{\bigmeet_{j \in \nbhd{i}} K_i \neg K_j \neg \phi_j} \\
        &=& \bigmeet_{j \in \nbhd{i}} \intension{K_i \neg K_j \neg \phi_j} \\
        &=& \bigmeet_{j \in \nbhd{i}} \rel{K}_i^{\forall} \left( \intension{\neg K_j \neg \phi_j}  \right) \\
        &=& \bigmeet_{j \in \nbhd{i}} \rel{K}_i^{\forall} \rel{K}_j^{\forall} \intension{\phi_j}.
    \end{align*}
\end{proof}

Suppose $\boldsymbol{\phi}$ is constant with value $\phi$. Then, it makes sense to define $L_i \phi$
 to be the formula
 \begin{align*}
     \bigmeet_{j \in \nbhd{j}} \neg K_j \neg \phi.
 \end{align*}
In $\mathsf{KD45}_n$, modeling belief, $L_i \phi$ could be interpreted
    \begin{displayquote}
        \emph{If Bob is a neighbor of Alice, then Alice believes that Bob doesn't disbelieve $\phi$.}
    \end{displayquote}
Similarly, in $\mathsf{S5}_n$, modeling knowledge, $L_i \phi$ could be interpreted
    \begin{displayquote}
        \emph{If Bob is a neighbor of Alice, then Alice knows that Bob doesn't not know $\phi$.}
    \end{displayquote}
In both cases, and generally in our mental model, we could call $L_i \phi$ \define{diffusive knowledge} of $\phi$.

By mimiking the common knowledge operators $E_A \phi$ and $C_A \phi$, we can both extend the notion of diffusive knowledge to an arbitrary subset of nodes and integrate higher order information by iterating the $L_i$ operator. Suppose $A \subseteq \nodes{G}$ is a group of agents. Then, define
    \begin{align*}
        \Laplacian_A \phi &=& \bigmeet_{i \in A} \Laplacian_i \phi \\
        \Laplacian^{\omega}_A \phi &=& \bigmeet_{k \geq 1} \Laplacian_A^k \phi. 
    \end{align*}
Many questions remain. For instance, can you infer $E_{\nodes{G}} \phi$, $C_{\nodes{G}} \phi$ or $D_{\nodes{G}} \phi$ from $\Laplacian_{\nodes{G}}$ or $\Laplacian^{\omega} \phi$? What happens in $\mathsf{K}_n$ or other axiom systems? Does it depend on the structure of $\graph{G}$?



\appendix
\cleardoublepage
\part{Appendix}
\chapter{Sheaf Theory}
\label{ch:appendix-1}

In Chapter \ref{ch:sheaves} and \ref{ch:lattice-valued} we were introduced to network sheaves and, more specifically, lattice valued network sheaves. However, those with some degree of familiarity with sheaves, may not even recognize that a functor $\sheaf{F}: \Ind(\graph{G}) \to \cat{Sup}$ is properly a sheaf. We may also wonder what lattice-valued sheaf theory looks like in the general case.

Recall, a topological space is the data $\left( \Space{X}, \Open(\Space{X}) \right)$ with a poset of subsets called \define{open sets}\footnote{The notion of an open set is quite general, but it coincides with familiar intuition. For instance, if $(\Space{M}, d)$ is a metric space, then a subset $U \subseteq \Space{M}$ is open in the \define{metric space topoloy} if and only for every $x \in U$ there is an $\epsilon > 0$ such that the (open) ball $B_\epsilon(x) = \{y \in \Space{M}~\vert~d(x,y) < \epsilon \}$ is contained in $U$. The topology formed by arbitrary unions and finite intersections of balls in$\Space{M}$ is called the \define{metric space topology}.} satisfying the following conditions
\begin{enumerate}
    \item $\emptyset \in \Open(\Space{X})$.
    \item $\Space{X} \in \Open(\Space{X})$.
    \item If $\{U\}_{i \in I} \subseteq \Open(\Space{X})$ is an arbitrary collection of opens, then their union $\bigcup_{i \in I} U_i \in \Open(\Space{X})$ is also open. 
\end{enumerate}

\begin{definition}[Presheaf]
    Suppose $\Space{X}$ is a topologcal space and $\cat{D}$ is a category. Then, a \define{presheaf over $\Space{X}$ valued in $\cat{D}$} is a contravariant functor
        \[\sheaf{F}: \op{\Open(\Space{X})} \to \cat{D}\]
\end{definition}
Specifically, for every open set $U$, a presheaf is the data
\begin{align*}
    \sheaf{F}(U) &\in& \mathrm{Ob}(\cat{D}) \\
    \sheaf{F}(V \subseteq U) &\in& \Hom_{\cat{D}}(\sheaf{F}(V), \sheaf{F}(U))
\end{align*}
for $V \subseteq U$ open. The architypical example of a presheaf with $\cat{D} = \cat{Set}$ over a space $\Space{X}$ is the following. For $U \in \Open(\Space{X})$, assign the set of $\R$-valued continuous functions to $U$ and assign the restriction of a particular map on $U$ to the open subset $V \subseteq U$
\begin{align*}
    \sheaf{F}(U) &=& \{ f: U \to \R \} \\
    \sheaf{F}(V \subseteq U) &=& f_{\vert V}
\end{align*}

Suppose $\cat{A}$ and $\cat{B}$ are categories. A functor $\cat{A} \to \cat{B}$ is \define{faithful}
if for every pair of objects $x, y \in \mathrm{Ob}(\cat{A})$, the map $\hom_{\cat{A}}(x,y) \hookrightarrow \hom_{\cat{B}}\left(T(x), T(y)\right)$ is injective. Then, a \define{concrete category} $\cat{D}$ is a category that admits a faithful functor $ \cat{D} \to \cat{Set}$.

\begin{example}[Suplattices]
    The category $\cat{Sup}$ is concrete. The functor $\cat{Sup} \to \cat{Set}$ sends a complete lattice $\lattice{L}$ to its underlying set $L$ and a join-preserving map $f: \lattice{K} \to \lattice{L}$ to the ordinary function $f: K \to L$. Faithfulness is trivial because morphisms in $\cat{Sup}$ are maps between sets that preserve structure; if two functions are equal after forgetting the join-preserving structure, then, of course, the structure-preserving maps were equal. 
\end{example}

We define sheaves with arbitrary algebraic strucutre by first defining sheaves valued in sets. An \define{open cover} of $U \subseteq \Space{X}$ consists of a collection of open sets $\{U_i\}_{i \in I}$
such that $\bigcup_{i \in I} U_i \supseteq U$. Let $U_{ij}$ denote the open set $U_i \cap U_j$. The following definition is taken from \cite{maclane2012sheaves}, while not standard, is one of a handful of equivalent definitions of a sheaf.

\begin{definition}[Sheaf]
    Suppose $\sheaf{F}$ is a presheaf of sets over $\space{X}$. Then, $\sheaf{F}$ is a \define{sheaf} if for every open set $U \in \Open(\Space{X})$ and open cover $\mathcal{U} = \{ U_{i} \}_{i \in I}$ of $U$, the following is an equalizer
\[\begin{tikzcd}
    {\sheaf{F}(U)} & {\prod_{i \in I}\sheaf{F}(U_i)} && {\prod_{i, j \in I} \sheaf{F}(U_{ij})}
    \arrow["{\prod_{j \in I} \sheaf{F}(U_{ij} \subseteq U_i)}", shift left=3, from=1-2, to=1-4]
    \arrow["{\prod_{i \in I} \sheaf{F}(U_{ij} \subseteq U_j)}"', shift right=3, from=1-2, to=1-4]
    \arrow[dashed, from=1-1, to=1-2]
\end{tikzcd}\]
Then, a presheaf $\sheaf{F}$ valued in a concrete category $\cat{D}$ is \define{sheaf} if the composite presheaf in the following diagram
\[\begin{tikzcd}
    {\op{\Open(\Space{X})}} & {\cat{D}} \\
    & {\cat{Set}}
    \arrow["{\sheaf{F}}", from=1-1, to=1-2]
    \arrow[from=1-2, to=2-2]
    \arrow[dashed, from=1-1, to=2-2]
\end{tikzcd}\]
is a sheaf.
\end{definition}

Global sections of a sheaf over a space are simply the object $\sheaf{F}(\Space{X})$. However, this definition is not very useful because it is not constructable. In certain situations, $\Space{X}$ is presented with a cell decomposition or, in real-life examples, is approximated with as a graph or simplicial complex.

\begin{definition}
    Suppose $(\poset{P}, \fc)$ is a poset. Then, the \define{Alexandrov topology} on $\poset{P}$, denoted $\Alex(\poset{P})$, consists of up-sets: subsets $U \subseteq \poset{P}$ with $x \in U$, $y \cofc u$ implies $y \in U$.
\end{definition}

\noindent Recall, a category is \define{complete} or \define{cocomplete} if it has all (small) limits and colimits, respectively.\footnote{We will not review limits and colimits here. For a good introduction, see \cite[Chpater 3]{riehl2017category}.}

\begin{theorem}\label{thm:apendix-complete}
    The category $\cat{Sup}$ of complete lattices and join-preserving maps is complete and cocomplete.
\end{theorem}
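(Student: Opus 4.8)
The plan is to show $\cat{Sup}$ is complete and cocomplete by reducing both claims to a single observation: the forgetful functor $\cat{Sup} \to \cat{Set}$ creates (small) limits, and $\cat{Sup}$ is self-dual in a way that turns colimits into limits. Concretely, the first step is to verify that $\cat{Sup}$ has all (small) products; this was essentially done already in Section \ref{sec:new-lat-old}, where the product $\prod_{i \in I} \lattice{L}_i$ with componentwise order is shown to be a complete lattice with projections that are join-preserving (indeed they have both adjoints). The second step is to verify that $\cat{Sup}$ has all equalizers; this is precisely Proposition \ref{prop:exist-equalizer}, where the equalizer of a parallel pair $f, g : \lattice{K} \to \lattice{L}$ is realized as the subset $\lattice{E} = \{x \in \lattice{K} \mid f(x) = g(x)\}$, shown to be closed under arbitrary joins and hence a complete sublattice, with the inclusion map join-preserving. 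By the standard fact that a category with all small products and all equalizers has all small limits (limits are built as equalizers of a pair of maps between products, see \cite[Theorem 3.4.12]{riehl2017category}), this establishes completeness.

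For cocompleteness, the cleanest route is duality rather than a direct construction of coproducts and coequalizers from scratch. The key input is that $\cat{Sup}$ is equivalent to its own opposite: the assignment $\lattice{L} \mapsto \op{\lattice{L}}$ on objects, together with the bijection $\Hom_{\cat{Sup}}(\lattice{K}, \lattice{L}) \cong \Hom_{\cat{Sup}}(\op{\lattice{L}}, \op{\lattice{K}})$ sending a join-preserving $f$ to its upper adjoint $\rast{f}$ viewed under the opposite orders (this is exactly the content of the Adjoint Functor Theorem, Theorem \ref{thm:adjoint-functor-theorem}, together with Lemma \ref{lem:opposite-fixed}), gives a (contravariant) self-equivalence $\cat{Sup} \simeq \op{\cat{Sup}}$. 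Granting this, colimits in $\cat{Sup}$ correspond to limits in $\op{\cat{Sup}} \simeq \cat{Sup}$, so cocompleteness follows from the completeness established above. Alternatively, one notes directly that the product lattice also serves as the coproduct (as remarked in Section \ref{sec:new-lat-old}, the two coincide, giving the biproduct $\bigoplus_{i} \lattice{L}_i$) and that coequalizers exist by the second half of Proposition \ref{prop:exist-equalizer}, which realizes the coequalizer of $f, g : \lattice{L} \to \lattice{K}$ as $\op{\lattice{Q}}$ where $\lattice{Q} = \{y \in \lattice{K} \mid \rast{f}(y) = \rast{g}(y)\}$; then invoke the dual of \cite[Theorem 3.4.12]{riehl2017category}.

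I would write up the duality argument as the primary proof, since it is shortest and most conceptual, and mention the explicit product/equalizer constructions (already in hand from Section \ref{sec:new-lat-old} and Proposition \ref{prop:exist-equalizer}) as the concrete backbone. The main obstacle — really the only nontrivial point — is being careful that the constructions from Section \ref{sec:new-lat-old} genuinely land in $\cat{Sup}$: one must check that the equalizer sublattice $\lattice{E}$, although a \emph{sub}-join-lattice of $\lattice{K}$ closed under arbitrary joins, is itself complete (it is, since a join-closed subset of a complete lattice is complete, its meet of a subset $S$ being the join of all lower bounds of $S$ lying in $\lattice{E}$), and that the self-equivalence $\cat{Sup} \simeq \op{\cat{Sup}}$ is genuinely functorial, i.e.\ that $\rast{(g \circ f)} = \rast{f} \circ \rast{g}$ — which is immediate from uniqueness of adjoints (Proposition \ref{prop:left-right-left} and the composition law for $\cat{Ltc}$ in Definition \ref{def:relevant-categories}). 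Everything else is a citation to \cite{riehl2017category} or a routine appeal to results already proved in the excerpt; in fact, as the footnote to Definition in the Sections portion notes, a fully detailed treatment is in \cite[Chapter 1]{joyal1984extension}, to which I would also refer.
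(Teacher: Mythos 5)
Your proposal is correct and in fact subsumes the paper's proof, which consists only of a pointer to Joyal and the one-line observation that $\cat{Sup}$ has products, equalizers, coproducts, and coequalizers (the constructions of Section \ref{sec:new-lat-old} and Proposition \ref{prop:exist-equalizer}), so that completeness and cocompleteness follow from \cite[Theorem 3.4.12]{riehl2017category}; that is exactly your ``alternative'' route. Your primary route---deducing cocompleteness from completeness via the contravariant self-equivalence $\lattice{L} \mapsto \op{\lattice{L}}$, $f \mapsto \radj{f}$---is a genuine variation the paper does not write down, though its ingredients are already present: Theorem \ref{thm:adjoint-functor-theorem} supplies the hom-set bijection, uniqueness of adjoints gives functoriality, and the equivalences underlying Theorem \ref{thm:duality} (namely $\cat{Sup} \simeq \cat{Ltc} \simeq \op{\cat{Inf}}$, with $\cat{Inf}$ identified with $\op{\cat{Sup}}$ by passing to opposite lattices) assemble into $\cat{Sup} \simeq \op{\cat{Sup}}$. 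The duality argument buys you something concrete: you never have to verify the universal properties of the coproduct (the biproduct claim in Section \ref{sec:new-lat-old} is asserted rather than proved) or of the coequalizer (the sketchiest part of Proposition \ref{prop:exist-equalizer}), whereas the direct route needs both. Your attention to the two points that cannot simply be cited---that the join-closed subset $\lattice{E}$ is itself a complete lattice via Proposition \ref{thm:sup-inf}, and that $\radj{(g\circ f)} = \radj{f}\circ\radj{g}$---lands exactly where the real content lies, so either write-up would be a strict improvement on the proof as it stands in the text.
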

\begin{proof}
    See Joyal \cite[Chapter 1]{joyal1984extension} for a detailed explanation. Alternatively, $\cat{Sup}$ has products, equalizers, coproducts, and coequalizers, making $\cat{Sup}$ complete and cocomplete \cite[Theorem 3.4.12]{riehl2017category}.
\end{proof}

\noindent We now state a result by Curry  which adresses the common (invalid) critique that functors out of posets (e.g.~Tarski sheaves) are not really sheaves. The follwing theorem requires the notion of a Kan extension \cite{riehl2017category} and is stated in a form dual to the original result.

\begin{theorem}[Theorem 4.8 \cite{curry2019functors}] \label{thm:kan}
    Suppose $\poset{P}$ is a poset and $\cat{C}$ is a complete category. Suppose $\sheaf{F}: \poset{P} \to \cat{C}$ is a functor (e.g.~network sheaf), and suppose $\iota: \poset{P} \rightarrow \op{\Alex(\poset{P})}$ is the functor associating an element $x \in \poset{P}$ to the open set $\uparrow x$. Then, the right Kan extension $\mathrm{Ran}_\iota (\sheaf{F})$ of $\sheaf{F}$ along $\iota$ is a sheaf over the Alexandrov topology $\Alex(\poset{P})$.
\end{theorem}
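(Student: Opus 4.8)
The plan is to show that the right Kan extension $\mathrm{Ran}_\iota(\sheaf{F})$ is a sheaf on the Alexandrov space $\Alex(\poset{P})$ by exhibiting an explicit pointwise formula for it and checking the equalizer (gluing) condition directly. First I would recall that since $\cat{C}$ is complete, the right Kan extension exists and is computed pointwise by the usual limit formula: for an open set $U \in \Alex(\poset{P})$,
\[
\mathrm{Ran}_\iota(\sheaf{F})(U) \;=\; \lim_{(x,\, \upset{x} \supseteq U)} \sheaf{F}(x) \;=\; \lim_{x \in U} \sheaf{F}(x),
\]
where the last equality uses that $\upset{x} \supseteq U$ iff $x$ is below some element of the up-set $U$, and since $U$ is itself an up-set this comma category is cofinally the poset $U$ viewed as a subcategory of $\poset{P}$ (one checks the relevant functor between index categories is initial, so the limits agree). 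The restriction maps $\mathrm{Ran}_\iota(\sheaf{F})(U) \to \mathrm{Ran}_\iota(\sheaf{F})(V)$ for $V \subseteq U$ are the canonical maps between limits induced by the inclusion of index posets $V \hookrightarrow U$. This gives a well-defined presheaf on $\op{\Alex(\poset{P})}$.

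Next I would verify the sheaf condition. Let $U \in \Alex(\poset{P})$ and let $\{U_\alpha\}_{\alpha \in A}$ be an open cover, $U = \bigcup_\alpha U_\alpha$. I need the diagram
\[
\mathrm{Ran}_\iota(\sheaf{F})(U) \longrightarrow \prod_{\alpha} \mathrm{Ran}_\iota(\sheaf{F})(U_\alpha) \rightrightarrows \prod_{\alpha,\beta} \mathrm{Ran}_\iota(\sheaf{F})(U_\alpha \cap U_\beta)
\]
to be an equalizer. Unwinding the pointwise formula, an element of the middle product is a compatible family: for each $\alpha$, an element of $\lim_{x \in U_\alpha} \sheaf{F}(x)$, i.e.\ a thread $(s^\alpha_x)_{x \in U_\alpha}$ with $s^\alpha_x \mapsto s^\alpha_y$ under $\sheaf{F}(x \preceq y)$. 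The equalizer condition says these agree on overlaps $U_\alpha \cap U_\beta$. Since $\{U_\alpha\}$ covers $U$, every $x \in U$ lies in some $U_\alpha$, and the overlap-agreement forces $s^\alpha_x = s^\beta_x$ whenever $x \in U_\alpha \cap U_\beta$; hence the local threads patch into a single well-defined thread $(s_x)_{x \in U}$, which is exactly an element of $\mathrm{Ran}_\iota(\sheaf{F})(U)$, and uniqueness is immediate. Because $\cat{C}$ is concrete (we may assume it, as in the setup — or argue abstractly using that limits in $\cat{C}$ are computed as in $\cat{Set}$ on underlying sets and the forgetful functor is faithful, so equalizers are detected there), this set-level argument upgrades to the statement that the diagram is an equalizer in $\cat{C}$; in fact one can also argue purely diagrammatically that a limit over $U$ is the equalizer of limits over the $U_\alpha$ and over the pairwise intersections, since the index poset $U$ is the colimit of the $U_\alpha$ in $\cat{Pos}$.

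The main obstacle — and the step requiring the most care — is the cofinality/initiality claim that lets us replace the comma-category limit in the abstract Kan extension formula by the honest poset limit $\lim_{x \in U} \sheaf{F}(x)$, since the ``correct'' index is $(\iota \downarrow U)$ (objects $x$ with a map $\iota(x) \to U$ in $\op{\Alex(\poset{P})}$, i.e.\ $U \subseteq \upset{x}$), and one must check the inclusion $\{x : x \in U\} \hookrightarrow (\iota \downarrow U)$ is initial so the limits coincide. This is where the hypothesis that $U$ is an up-set is used essentially: $U \subseteq \upset{x}$ together with $U$ nonempty forces $x$ to be a lower bound of $U$, but one shows that restricting to the subposet of $x$ actually lying in $U$ is cofinal because any $x$ with $\upset{x} \supseteq U$ admits, for each $u \in U$, a factorization through $u \in U$. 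Once this reduction is in hand, the remainder is the bookkeeping above. I would also remark, closing the loop with the earlier discussion, that when $\poset{P}$ is the face/incidence poset of a graph and $\cat{C} = \cat{Sup}$ (complete and cocomplete by Theorem \ref{thm:apendix-complete}), this specializes to the claim that a Tarski sheaf genuinely is a sheaf on the associated Alexandrov space, with $\mathrm{Ran}_\iota(\bisheaf{F})(\Space{X}) \cong \sections{\graph{G}; \sheaf{F}} \cong H^0(\graph{G}; \sheaf{F})$ recovering the global sections of Definition \ref{def:sections}.
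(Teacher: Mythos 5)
The paper itself offers no proof of this statement---it is quoted directly from Curry---so there is no in-paper argument to compare against; your overall strategy (compute $\mathrm{Ran}_\iota(\sheaf{F})$ pointwise as a limit, then verify the equalizer condition for covers) is exactly the standard one, and the final formula $\mathrm{Ran}_\iota(\sheaf{F})(U)=\lim_{x\in U}\sheaf{F}(x)$ is correct. Granting that formula, your gluing argument is essentially sound, with one caveat noted below. But the step you yourself single out as ``requiring the most care''---the identification of the indexing category---is where the proposal goes wrong.

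For a right Kan extension along $\iota:\poset{P}\to\op{\Alex(\poset{P})}$, the pointwise formula is a limit over the comma category $(U\downarrow\iota)$, whose objects are elements $x\in\poset{P}$ equipped with a morphism $U\to\iota(x)=\upset{x}$ \emph{in} $\op{\Alex(\poset{P})}$. Unwinding the op, such a morphism is an inclusion $\upset{x}\subseteq U$, which (because $U$ is an up-set) holds precisely when $x\in U$. So the comma category \emph{is} the subposet $U$, and the formula $\lim_{x\in U}\sheaf{F}(x)$ drops out immediately---no cofinality argument is needed or available. You instead take the index to be $\{x:\upset{x}\supseteq U\}$, i.e.\ the set of lower bounds of $U$ (this is the comma category for the \emph{left} Kan extension, or for the right Kan extension without the op), and then claim that $U$ includes initially into it. That inclusion does not exist: $x\in U$ gives $\upset{x}\subseteq U$, not $U\subseteq\upset{x}$, and an element of $U$ is a lower bound of $U$ only if it is its minimum. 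The limit over the lower-bound poset is genuinely different (for $U$ an antichain with no common lower bound that index is empty and the limit is the terminal object). Your stated justification ``$\upset{x}\supseteq U$ iff $x$ is below some element of $U$'' is also false ($\forall$ versus $\exists$). Two smaller points. First, the theorem assumes only that $\cat{C}$ is complete, not concrete, so the element-chasing verification of the equalizer should be replaced by your parenthetical diagrammatic argument: the index poset $U$ is the colimit of the $U_\alpha$ along the $U_\alpha\cap U_\beta$, and $\lim$ converts this colimit of index categories into the desired equalizer. Second, in either version you should say explicitly that any relation $x\preceq y$ in $U$ is witnessed inside a single $U_\alpha$ (if $x\in U_\alpha$ then $y\in U_\alpha$ because $U_\alpha$ is an up-set); this is precisely what makes the patched family an honest thread, and what makes the colimit of index posets come out to be all of $U$ with its full order.
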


\begin{corollary}
    A Tarski sheaf $\bisheaf{F}: \poset{P}_{\graph{G}} \to \cat{Ltc}$ is (naturally isomorphic to) a sheaf over the space $\Alex(\poset{P}_{\graph{G}})$.
\end{corollary}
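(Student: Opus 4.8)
The plan is to deduce the Corollary directly from Theorem~\ref{thm:kan} together with Theorem~\ref{thm:apendix-complete}, treating $\bisheaf{F}$ as a functor out of the incidence poset $\poset{P}_{\graph{G}}$ valued in $\cat{Ltc}$. First I would invoke Theorem~\ref{thm:duality}: a Tarski sheaf $\bisheaf{F} \colon \poset{P}_{\graph{G}} \to \cat{Ltc}$ is equivalent data to a functor $\sheaf{F} \colon \poset{P}_{\graph{G}} \to \cat{Sup}$ (the one picking out lower adjoints), so it suffices to produce a sheaf over $\Alex(\poset{P}_{\graph{G}})$ from $\sheaf{F}$. This reduction matters because the target category of Theorem~\ref{thm:kan} must be complete, and $\cat{Sup}$ is complete and cocomplete by Theorem~\ref{thm:apendix-complete}, whereas $\cat{Ltc}$ has not been verified to be complete in the excerpt.

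Next I would set up the hypotheses of Theorem~\ref{thm:kan} with $\poset{P} = \poset{P}_{\graph{G}}$ and $\cat{C} = \cat{Sup}$. The functor $\iota \colon \poset{P}_{\graph{G}} \to \op{\Alex(\poset{P}_{\graph{G}})}$ sends an element $\sigma \in \nodes{G} \cup \edges{G}$ to the principal up-set $\upset{\sigma}$, which is open in the Alexandrov topology by definition of $\Alex$; this assignment is order-preserving into $\op{\Alex(\poset{P}_{\graph{G}})}$ because $\sigma \fc \tau$ implies $\upset{\tau} \subseteq \upset{\sigma}$. Theorem~\ref{thm:kan} then yields that the right Kan extension $\mathrm{Ran}_\iota(\sheaf{F})$ is a sheaf over $\Alex(\poset{P}_{\graph{G}})$ valued in $\cat{Sup}$. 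Composing with the faithful forgetful functor $\cat{Sup} \to \cat{Set}$ — $\cat{Sup}$ is concrete, as noted in the Suplattices example — shows $\mathrm{Ran}_\iota(\sheaf{F})$ satisfies the sheaf (equalizer) condition in the sense of the definition given in this appendix.

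The remaining point is that $\mathrm{Ran}_\iota(\sheaf{F})$ restricted back along $\iota$ recovers $\sheaf{F}$ up to natural isomorphism, i.e.\ $\mathrm{Ran}_\iota(\sheaf{F}) \circ \iota \cong \sheaf{F}$. Here I would use that $\iota$ is fully faithful (it is an order-embedding onto the sub-poset of principal up-sets of $\op{\Alex(\poset{P}_{\graph{G}})}$), and it is a standard fact that right Kan extension along a fully faithful functor is a genuine extension: the counit $\mathrm{Ran}_\iota(\sheaf{F}) \circ \iota \Rightarrow \sheaf{F}$ is an isomorphism. Concretely one can compute the Kan extension pointwise as a limit over a comma category and observe that $\upset{\sigma}$ is itself a terminal-type object in the relevant indexing category, collapsing the limit to $\sheaf{F}(\sigma)$. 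Finally, re-applying Theorem~\ref{thm:duality} in reverse converts the $\cat{Sup}$-valued sheaf over $\Alex(\poset{P}_{\graph{G}})$ back to a $\cat{Ltc}$-valued one, giving the stated natural isomorphism between $\bisheaf{F}$ and a sheaf over $\Alex(\poset{P}_{\graph{G}})$.

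\textbf{Main obstacle.} The delicate step is verifying that the right Kan extension genuinely restricts to $\sheaf{F}$ on principal up-sets — that the pointwise limit formula collapses correctly — and, relatedly, that the forgetful functor $\cat{Sup} \to \cat{Set}$ interacts well enough with limits for the $\cat{Set}$-level equalizer condition to be inherited; completeness of $\cat{Sup}$ (Theorem~\ref{thm:apendix-complete}) is exactly what makes the Kan extension exist, but checking the sheaf condition \emph{after} forgetting to $\cat{Set}$ requires knowing the forgetful functor preserves the relevant limits, which is where I would spend the most care.
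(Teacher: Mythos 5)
Your argument is exactly the paper's: the paper's proof is the one-line citation of Theorem~\ref{thm:duality}, Theorem~\ref{thm:apendix-complete}, and Theorem~\ref{thm:kan}, and you combine precisely these three in the intended way (pass to $\cat{Sup}$ via the duality, use completeness to apply the Kan-extension theorem, and note that right Kan extension along the fully faithful $\iota$ restricts back to $\sheaf{F}$ up to natural isomorphism). Your added care about the counit isomorphism and the forgetful functor to $\cat{Set}$ only fills in details the paper leaves implicit; there is no gap.
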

\begin{proof}
    Theorem \ref{thm:duality}, Theorem \ref{thm:apendix-complete} and Theorem \ref{thm:kan}.
\end{proof}
\chapter{Experiments}
\label{ch:appendix-2}

\section{Gossip}

In the following example, we demonstrate the validity of the gossip algorithm introduced in Chapter \ref{ch:tarski} for an example sheaf introduced in Chapter \ref{ch:semantics}. Suppose $\bisheaf{F}$ is a Tarski sheaf, such that for every $ij \in \edges{G}$, $\sheaf{F}(ij)$ is not only a complete lattice but also a metric space $(\sheaf{F}(ij), d_{ij})$. We measure ``how close'' an $\mathbf{x} \in C^0(\graph{G}; \sheaf{F})$ is to being a section with the following notion of \define{Dirichlet energy} or \define{total variation}.
\begin{align*}
    V(\mathbf{x}) &=& \sum_{ij \in \edges{G}} d_{ij} \left( \sheaf{F}(i \fc ij)(x_i), \sheaf{F}(j \fc ij)(x_j)\right).
\end{align*}
The following is evident
\begin{fact}
    Suppose $\bisheaf{F}$ is a Tarski sheaf over $\graph{G}$ and $\mathbf{x} \in C^0(\graph{G}; \sheaf{F})$. Then, $V(\mathbf{x}) = 0$ if and only if $\mathbf{x} \in H^0(\graph{G}; \sheaf{F})$.
\end{fact}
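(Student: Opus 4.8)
The statement to prove is: $V(\mathbf{x}) = 0$ if and only if $\mathbf{x} \in H^0(\graph{G}; \sheaf{F})$, where $V(\mathbf{x}) = \sum_{ij \in \edges{G}} d_{ij}\!\left( \sheaf{F}(i \fc ij)(x_i), \sheaf{F}(j \fc ij)(x_j)\right)$ and each $d_{ij}$ is a metric on $\sheaf{F}(ij)$. The plan is to unwind both sides into a statement about agreement of restriction maps edge by edge and invoke the Global Sections Theorem.

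First I would recall from the excerpt (Definition \ref{def:sections} and the computation following the definition of $H^0$ via the equalizer \eqref{eq:sup-section-equalizer}) that
\[
  H^0(\graph{G}; \sheaf{F}) \cong \sections{\graph{G}; \sheaf{F}} = \{ \mathbf{x} \in C^0(\graph{G}; \sheaf{F}) ~\vert~ \sheaf{F}(i \fc ij)(x_i) = \sheaf{F}(j \fc ij)(x_j) \quad \forall ij \in \edges{G} \}.
\]
So membership in $H^0$ is literally the conjunction over all edges of the equalities $\sheaf{F}(i \fc ij)(x_i) = \sheaf{F}(j \fc ij)(x_j)$ inside the lattice $\sheaf{F}(ij)$.

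Next I would use the defining property of a metric: for any metric space $(\sheaf{F}(ij), d_{ij})$ and any two points $a, b \in \sheaf{F}(ij)$, we have $d_{ij}(a,b) \geq 0$, with $d_{ij}(a,b) = 0$ if and only if $a = b$. For the forward direction, suppose $V(\mathbf{x}) = 0$. Since $V(\mathbf{x})$ is a finite sum (the graph is finite --- or at least $\edges{G}$ is, by the standing assumptions) of nonnegative terms $d_{ij}(\sheaf{F}(i \fc ij)(x_i), \sheaf{F}(j \fc ij)(x_j))$, each term must vanish; hence $\sheaf{F}(i \fc ij)(x_i) = \sheaf{F}(j \fc ij)(x_j)$ for every $ij \in \edges{G}$, so $\mathbf{x}$ satisfies the section condition and $\mathbf{x} \in H^0(\graph{G}; \sheaf{F})$. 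For the converse, if $\mathbf{x} \in H^0(\graph{G}; \sheaf{F})$, then $\sheaf{F}(i \fc ij)(x_i) = \sheaf{F}(j \fc ij)(x_j)$ for every edge, so each summand $d_{ij}(\dots, \dots) = d_{ij}(a,a) = 0$, and therefore $V(\mathbf{x}) = 0$.

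There is essentially no main obstacle here; the statement is labeled a \emph{Fact} precisely because it is immediate. The only points requiring a word of care are (i) the orientation chosen in defining $\coboundary_{\pm}$ versus the symmetric-looking expression in $V$ --- but since $d_{ij}$ is symmetric and the condition $\sheaf{F}(i \fc ij)(x_i) = \sheaf{F}(j \fc ij)(x_j)$ is itself symmetric in the two endpoints, the choice of orientation is immaterial; and (ii) the tacit assumption that each $d_{ij}$ is an honest metric (so that $d_{ij}(a,b)=0 \Rightarrow a=b$, not merely a pseudometric), which I would flag explicitly. With those remarks the proof is a two-line argument splitting on the nonnegativity and non-degeneracy of the metric, combined with the isomorphism $H^0 \cong \sections{\graph{G}; \sheaf{F}}$ already established in the text.
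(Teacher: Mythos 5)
Your proof is correct and is exactly the argument the paper has in mind — the paper states this as a ``Fact'' with no proof, calling it evident, and the intended reasoning is precisely your observation that a finite sum of nonnegative terms vanishes iff each term does, combined with the non-degeneracy of each metric $d_{ij}$ and the identification $H^0(\graph{G}; \sheaf{F}) \cong \sections{\graph{G}; \sheaf{F}}$. Your flags about finiteness of $\edges{G}$ and about $d_{ij}$ being a genuine metric rather than a pseudometric are appropriate and do not indicate any gap.
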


Recall the definition of the time-varying Tarki Laplacian (Definition \ref{def:varrying-tarski}). In the following experiment, we generate random geometric graphs $\graph{G} = (\nodes{G}, \edges{G})$ with parameters $|\nodes{G}| = 40$ and $r=0.08$. Specifically, locations $y{}_i \in \R^2$ of each $i \in \nodes{G}$ are sampled uniformly at random from $[0,1]^2$; $ij \in \edges{G}$ if and only if $\| x_i - x_j \|\leq r$.

We construct a Kripke sheaf $\bisheaf{F}_M$ (Definition \ref{def:kripke-sheaf}) by generating a Kripke model $M$ (really, just a frame) with randomized Kripke relations $\rel{K}_i,~i \in \nodes{G}$. In our experiment, we choose a fixed number of states $S = \{1,2,\dots, 10\}$ and wire each $(a,b) \in \rel{K}_i$ randomly (with probability $p=0.9$ if $a = b$, $p=0.1$ otherwise). The relations $\rel{K}_i$ are generated independently. We select a random firing sequence $\tau: \N \to \powerset{\nodes{G}}$ which serves the purpose of determining which nodes broadcast to their neighbors at time $t$. In our experiment, we randomly select a single node to broadcast each round. For a series of random initial assignments $\mathbf{x}[0]$, we run the heat flow dynamics using the asynchronous Kripke Laplacian (Definition \ref{def:kripke-laplacian}). The results are summarized below (Figure \ref{fig:gossip-experiment}).

\begin{figure}[h!]
    \centering
    \includegraphics[width=0.75\textwidth]{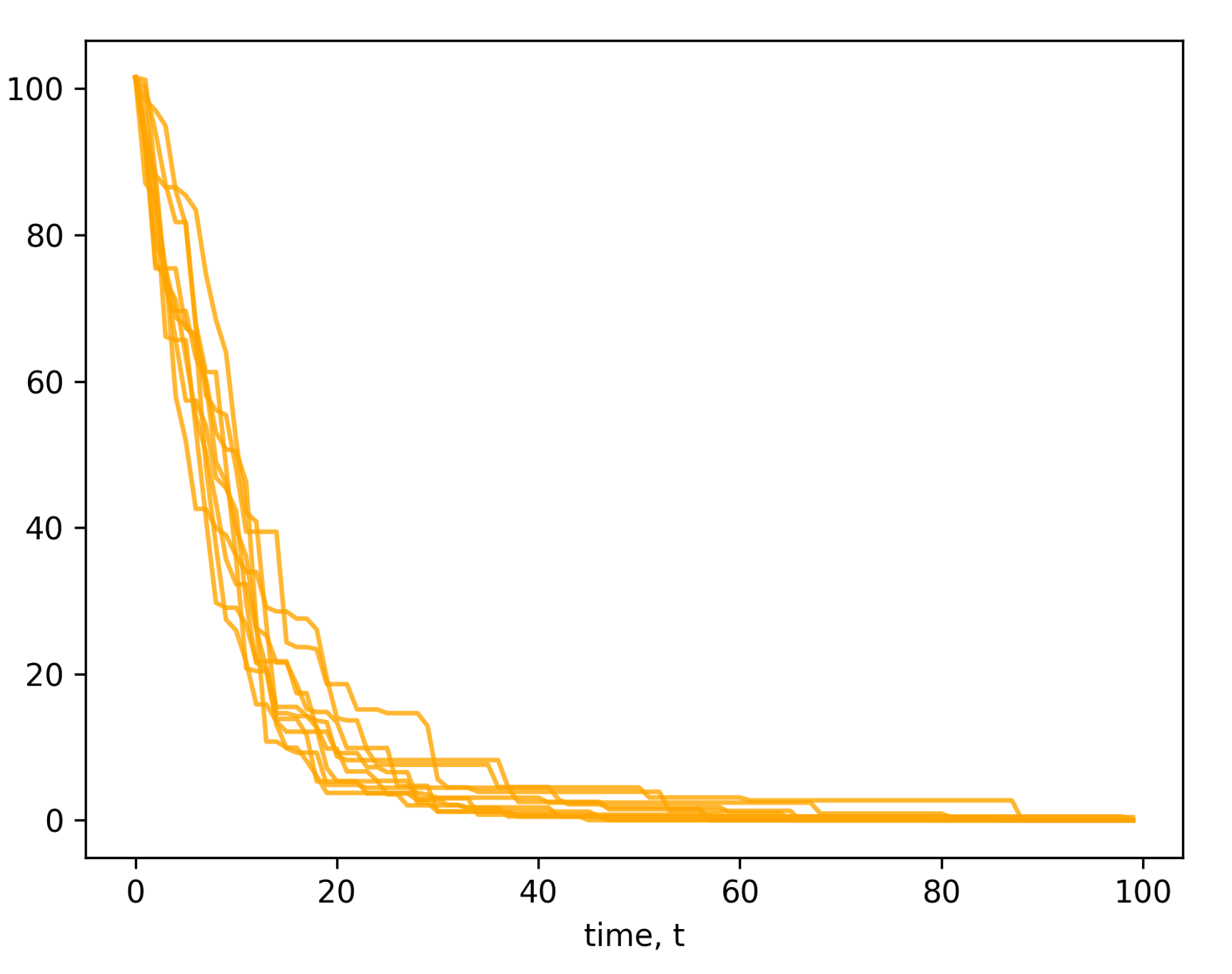}
    \caption{Dirichlet energy of heat flow for the gossip algorithm on a geometric random graph with a randomized Kripke sheaf for $10$ randomized initial conditions.} \label{fig:gossip-experiment}
\end{figure}

\section{A Lattice Convolutional Neural Network (L-CNN)}

We use a portion of the ModelNet10 dataset \cite{wu20153d} as a source of point clouds. This dataset consists of 100s of 3D CAD models which represent objects from 1 of 10 classes. We sample points from the 3D models to produce finite metric spaces embedded in $\R^3$, then compute the corresponding multidimensional persistence modules \cite{lesnick2015interactive}, from which we produce features used as an input to a convolutional neural network classifier.

The pipline thus begins with a 3D polyhedral model, of which $3000$ vertices are sampled to produce a point cloud in $\R^3$. This point cloud then produces a bifiltered simplicial complex, whose degree-$0$ persistent homology we calculate using RIVET~\cite{lesnick2015interactive}, sampled at a discrete grid of $40 \times 40$ points, producing lattice-indexed signals given by the Hilbert function and the multi-graded Betti numbers $\xi_0, \xi_1, \xi_2$; four features in total. These are then passed to the classifier, which produces a class prediction, in this case one of 10 possible objects. As the filter function on these data sets, we use a codensity function that is the inverse sum distance from a point $x$ to the $k$-nearest neighbors ($k=100$). The name codensity is appropriate because the points in the densest regions of $\mathcal M$ appear earlier in the filtration.

We compare the performance of two convolutional networks on this classification
task. One uses is a L-CNN and the other is a standard CNN. Each has three convolutional layers followed by two fully connected layers. The lattice-based convolution layers are of the form $\alpha \cdot \text{MeetConv}(x) +
(1-\alpha)\cdot\text{JoinConv}(x)$ for a hyperparameter $\alpha \in [0,1]$ ($\alpha=0.5$) Thus, \define{lattice layers} are a hybrid of join- and meet- convolutions. All convolution kernels have dimension $4 \times 4$, hidden convolution layers have 16 features, and the final convolution layer has 8 features. The first two convolution layers are followed by max-pooling layers with a $2 \times 2$ kernel. For the lattice convolutional layers, the support of the kernel lay in an evenly spaced $4 \times 4$ grid of points in $[m]\times [n]$, while the standard convolutional layers had a standard kernel.
The inner fully connected layer has 32 features. We use a cross-entropy loss function with a softmax in the final layer. The lattice-based convolutional architecture is summarized in Figure \ref{fig:lcnn}. The networks are trained with Adam gradient algorithm with learning rate $2 \times 10^{-4}$ for a total of 300 epochs. We hold out
10\% of the data for testing. Results are shown in Figure \ref{fig:lcnn}.

\begin{figure}[h!]
    \centering
    \includegraphics[width=0.75\textwidth]{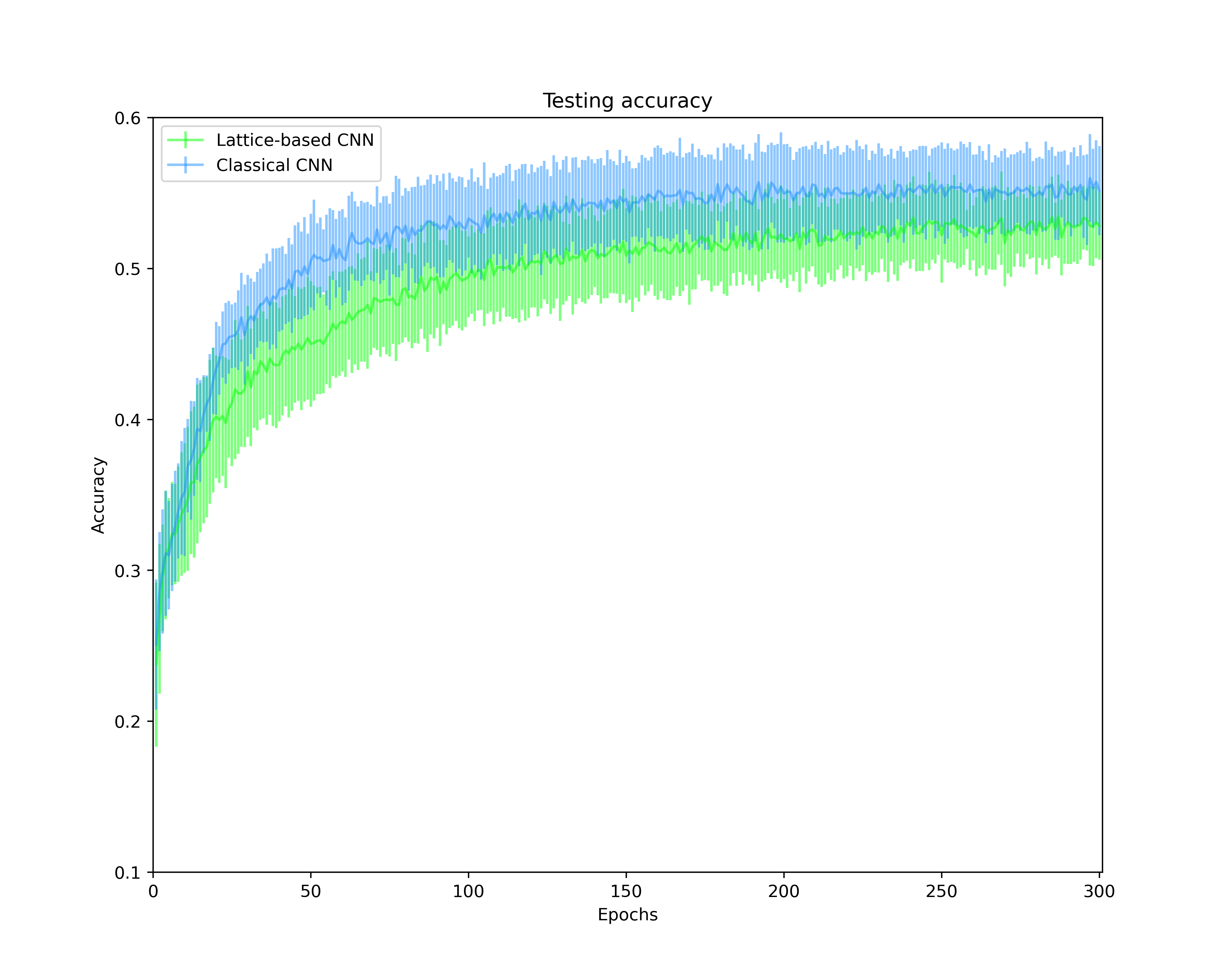}
    \caption{Testing/training accuracy of L-CC (Lattice Convolutional Neural Network) compared to a classical CNN.} \label{fig:lcnn}
\end{figure}
\cleardoublepage
\defbibheading{bibintoc}[\bibname]{%
  \phantomsection
  \manualmark
  \markboth{\spacedlowsmallcaps{#1}}{\spacedlowsmallcaps{#1}}%
  \addtocontents{toc}{\protect\vspace{\beforebibskip}}%
  \addcontentsline{toc}{chapter}{\tocEntry{#1}}%
  \chapter*{#1}%
}
\printbibliography[heading=bibintoc]


\end{document}